\newcommand{\ubar}[1]{\underaccent{\bar}{#1}}
\definecolor{colorred}{HTML}{B00000}
\definecolor{colorgreen}{HTML}{258300}
\definecolor{colorblue}{HTML}{2e32fa}
\newcommand\MyAutoefPhrasecolorGroup[1]{%
  \color@begingroup\color{MyCurrentcolor}#1\endgroup
}%
\def\HyRef@testreftype#1.#2\\{%
 \colorlet{MyCurrentcolor}{.}%
 \ltx@IfUndefined{#1autorefname}{%
   \ltx@IfUndefined{#1name}{%
     \HyRef@StripStar#1\\*\\\@nil{#1}%
     \ltx@IfUndefined{\HyRef@name autorefname}{%
       \ltx@IfUndefined{\HyRef@name name}{%
         \def\HyRef@currentHtag{}%
         \Hy@Warning{No autoref name for `#1'}%
       }{%
         \edef\HyRef@currentHtag{%
           \noexpand\MyAutoefPhrasecolorGroup{%
             \expandafter\noexpand\csname\HyRef@name name\endcsname
           }%
           \noexpand~%
         }%
       }%
     }{%
       \edef\HyRef@currentHtag{%
         \noexpand\MyAutoefPhrasecolorGroup{%
           \expandafter\noexpand
           \csname\HyRef@name autorefname\endcsname
         }%
         \noexpand~%
       }%
     }%
   }{%
     \edef\HyRef@currentHtag{%
       \noexpand\MyAutoefPhrasecolorGroup{%
         \expandafter\noexpand\csname#1name\endcsname
       }%
       \noexpand~%
     }%
   }%
 }{%
   \edef\HyRef@currentHtag{%
     \noexpand\MyAutoefPhrasecolorGroup{%
       \expandafter\noexpand\csname#1autorefname\endcsname
     }%
     \noexpand~%
   }%
 }%
}%
\numberwithin{equation}{section}
\newcommand{\nlb}{{\ensuremath{\textnormal{b}}}}
\newcommand{\nlc}{{\ensuremath{\textnormal{c}}}}
\newcommand{\nlC}{{\ensuremath{\textnormal{C}}}}
\newcommand{\rmd}{{\ensuremath{\mathrm{d}}}}
\newcommand{\rme}{{\ensuremath{\mathrm{e}}}}
\newcommand{\rmg}{{\ensuremath{\mathrm{g}}}}
\newcommand{\rmo}{{\ensuremath{\mathrm{o}}}}
\newcommand{\rmG}{{\ensuremath{\mathrm{G}}}}
\newcommand{\sfc}{{\ensuremath{\mathsf{c}}}}
\newcommand{\sfd}{{\ensuremath{\mathsf{d}}}}
\newcommand{\sfe}{{\ensuremath{\mathsf{e}}}}
\newcommand{\sfg}{{\ensuremath{\mathsf{g}}}}
\newcommand{\sfh}{{\ensuremath{\mathsf{h}}}}
\newcommand{\sfs}{{\ensuremath{\mathsf{s}}}}
\newcommand{\sft}{{\ensuremath{\mathsf{t}}}}
\newcommand{\sfv}{{\ensuremath{\mathsf{v}}}}
\newcommand{\sfB}{{\ensuremath{\mathsf{B}}}}
\newcommand{\sfQ}{{\ensuremath{\mathsf{Q}}}}
\newcommand{\sfS}{{\ensuremath{\mathsf{S}}}}
\newcommand{\scrA}{{\ensuremath{\mathscr{A}}}}
\newcommand{\scrC}{{\ensuremath{\mathscr{C}}}}
\newcommand{\scrD}{{\ensuremath{\mathscr{D}}}}
\newcommand{\scrH}{{\ensuremath{\mathscr{H}}}}
\newcommand{\scrJ}{{\ensuremath{\mathscr{J}}}}
\newcommand{\scrK}{{\ensuremath{\mathscr{K}}}}
\newcommand{\scrM}{{\ensuremath{\mathscr{M}}}}
\newcommand{\scrP}{{\ensuremath{\mathscr{P}}}}
\newcommand{\scrT}{{\ensuremath{\mathscr{T}}}}
\newcommand{\scrU}{{\ensuremath{\mathscr{U}}}}
\newcommand{\scrX}{{\ensuremath{\mathscr{X}}}}
\newcommand{\scrY}{{\ensuremath{\mathscr{Y}}}}
\newcommand{\scrZ}{{\ensuremath{\mathscr{Z}}}}
\newcommand{\bdpi}{{\ensuremath{\boldsymbol{\pi}}}}
\newcommand{\bdsigma}{{\ensuremath{\boldsymbol{\sigma}}}}
\newcommand{\N}{\boldsymbol{\mathrm{N}}}						
\newcommand{\Q}{\boldsymbol{\mathrm{Q}}}						
\newcommand{\R}{\boldsymbol{\mathrm{R}}}						
\renewcommand{\d}{\,\mathrm{d}}				
\let\limsup\undefined
\let\liminf\undefined
\DeclareMathOperator*{\limsup}{limsup}		
\DeclareMathOperator*{\liminf}{liminf}		
\DeclareMathOperator*{\esssup}{esssup}		
\DeclareMathOperator*{\essinf}{essinf}		
\DeclareMathOperator{\supp}{spt}			
\DeclareMathOperator{\tr}{tr}				
\theoremstyle{definition}
\newtheorem{bump}{Bump}[section]
\theoremstyle{plain}
\newtheorem{theorem}[bump]{Theorem}
\newtheorem{proposition}[bump]{Proposition}
\newtheorem{definition}[bump]{Definition}
\newtheorem{lemma}[bump]{Lemma}
\newtheorem{corollary}[bump]{Corollary}
\newtheorem{assumption}[bump]{Hypothesis}
\theoremstyle{remark}
\newtheorem{remark}[bump]{Remark}
\newtheorem{example}[bump]{Example}
\def\nonumberfootnote{\xdef\@thefnmark{}\@footnotetext}			
\newcommand{\mms}{\mathsf{M}}				
\newcommand{\met}{\sfd}						
\newcommand{\Meas}{\mathfrak{M}}
\newcommand{\Rmet}{g}						
\newcommand{\meas}{\mathfrak{m}}				
\newcommand{\mmeas}{\mathfrak{n}}
\newcommand{\Leb}{\mathscr{L}}				
\newcommand{\vol}{\mathrm{vol}}				
\newcommand{\Prob}{\mathscr{P}}		        
\newcommand{\OptTGeo}{\mathrm{OptTGeo}}
\newcommand{\Id}{\mathrm{Id}}				
\newcommand{\ac}{{\mathrm{ac}}}
\newcommand{\TCD}{\mathrm{TCD}}
\newcommand{\wTCD}{\mathrm{wTCD}}
\newcommand{\TMCP}{\mathrm{TMCP}}
\newcommand{\CD}{\mathrm{CD}}
\newcommand{\bounded}{\nlb}					
\newcommand{\comp}{\nlc}					
\newcommand{\loc}{\mathrm{loc}}				
\newcommand{\pr}{\mathrm{pr}}				
\newcommand{\Ric}{\mathrm{Ric}}				
\newcommand{\Cont}{\nlC}					
\newcommand{\Ell}{\mathit{L}}				
\newcommand{\End}{\mathrm{bp}}
\newcommand{\Dom}{\scrD}					
\DeclareMathOperator{\Ent}{Ent}				
\DeclareMathOperator{\Hess}{Hess}			
\DeclareMathOperator{\diam}{diam}			
\newcommand{\eval}{\sfe}					
\newcommand{\Restr}{\mathrm{restr}}			
\newcommand{\push}{\sharp}					
\newcommand{\cl}{\mathrm{cl}}				
\newcommand{\Len}{\mathrm{Len}}
\newcommand{\TGeo}{\mathrm{TGeo}}
\newcommand{\tsep}{\tau}
\newcommand{\One}{1}
\newcommand{\ray}{\sfg}
\newcommand{\Sec}{S}
\DeclareMathOperator{\graph}{graph}
\newcommand{\Quot}{\mathsf{Q}}
\newcommand{\Inter}{\upiota}
\newcommand{\Alex}{\upalpha}
\providecommand{\bysame}{\leavevmode\hbox to3em{\hrulefill}\thinspace}
\let\oldtocsection=\tocsection
\let\oldtocsubsection=\tocsubsection
\let\oldtocsubsubsection=\tocsubsubsection
\renewcommand{\tocsection}[2]{\hspace{0em}\oldtocsection{#1}{#2}}
\renewcommand{\tocsubsection}[2]{\hspace{1em}\oldtocsubsection{#1}{#2}}
\renewcommand{\tocsubsubsection}[2]{\hspace{2em}\oldtocsubsubsection{#1}{#2}}
\newcommand{\nocontentsline}[3]{}
\newcommand{\tocless}[2]{\bgroup\let\addcontentsline=\nocontentsline#1{#2}\egroup}
\newcommand{\mres}{\mathbin{\vrule height 1.6ex depth 0pt width
0.13ex\vrule height 0.13ex depth 0pt width 1.3ex}}
\DeclareFontFamily{U}{mathb}{\hyphenchar\font45}
\DeclareFontShape{U}{mathb}{m}{n}{
<-6> mathb5 <6-7> mathb6 <7-8> mathb7
<8-9> mathb8  <9-11> mathb9
<11-12> mathb10 <12-> mathb12
}{}
\DeclareSymbolFont{mathb}{U}{mathb}{m}{n}
\DeclareMathSymbol{\llcurly}{\mathrel}{mathb}{"CE}
\DeclareMathSymbol{\ggcurly}{\mathrel}{mathb}{"CF}
\newcommand{\scal}{\mathrm{scal}}
\newcommand{\cost}{\mathfrak{c}}
\newcommand{\SIN}{\sin}
\newcommand{\LSC}{\mathrm{LSC}}
\DeclareMathOperator{\diag}{diag}
\newcommand{\TCut}{\mathrm{TCut}}
\newcommand{\Tr}{\scrT}
\newcommand{\Top}{\uptau}
\newcommand{\q}{\mathfrak{q}}
\newcommand{\hh}{\mathfrak{h}}
\newcommand{\MCP}{\mathrm{MCP}}
\newcommand{\Reg}{\mathrm{Reach}}
\begin{document}

\title[Variable timelike Ricci curvature bounds]{Causal convergence conditions through variable timelike Ricci curvature bounds}
\author{Mathias Braun}
\author{Robert J. McCann}
\subjclass[2020]{Primary 51K10; Secondary 28A50, 49Q22, 51F99, 53Z99, 83C75.}
\address{Department of Mathematics, University of Toronto, 40 St. George Street Room 6290, Toronto, Ontario M5S 2E4, Canada}
\email{braun@math.toronto.edu}
\email{mccann@math.toronto.edu}
\thanks{MB's research was supported in part by the Fields Institute for Research in Mathematical Sciences.  RM's research is supported in part by the Canada Research Chairs program CRC-2020-00289, the Simons Foundation, Natural Sciences and Engineering Research Council of Canada Discovery Grant RGPIN-2020--04162. We thank Fabio Cavalletti and Andrea Mondino for communicating their constant $k$ versions of \autoref{Th:Localization TCD} and \autoref{Th:Mean zero} to us, and for pointing out their related work \cite{cm++}.}
\thanks{\copyright \today\ by the authors.}

\begin{abstract} We describe a nonsmooth notion of globally hyperbolic, regular length metric spacetimes $(\mms,l)$. It is based on ideas of Kunzinger--Sämann, but does not require Lipschitz continuity of causal curves. We study geodesics on $\mms$ and the space of probability measures over $\mms$ in detail.

Furthermore, for such a spacetime endowed with a reference measure $\mathfrak{m}$, a lower semicontinuous function $k\colon \mms \to \textbf{R}$, and  constants $0<p<1$ and $N\geq 1$, we introduce and study the entropic timelike curvature dimension condition $\smash{\mathrm{TCD}_p^e(k,N)}$ with variable Ricci curvature bound $k$. This provides a unified synthetic approach to general relativistic energy conditions, including
\begin{itemize}
\item the Hawking--Penrose strong energy condition $\mathrm{Ric}\geq 0$, or more generally $\Ric\geq K$ for constant $K\in\textbf{R}$, in all timelike directions,
\item the weak energy condition $\mathrm{Ric} \geq \mathrm{scal} - \Lambda$ in all timelike directions, and
\item the null energy condition $\smash{\mathrm{Ric} \geq 0}$ in all null directions.
\end{itemize}
Our approach also allows for the synthetic quantification of asymptotic conditions or integral controls on the timelike Ricci curvature. For example, we give a nonsmooth generalization of a timelike diameter estimate of Frankel--Galloway (and Schneider), and of a Hawking-type singularity theorem which requires only that the negative Ricci curvature have small enough integral relative to the maximal mean curvature of an achronal slice.

As further applications, we discuss the stability of our notion and provide timelike geometric inequalities. To obtain sharp constants  in the latter, we develop the localization paradigm in the variable $k$ framework.
\end{abstract}

\thispagestyle{empty}

\maketitle

\thispagestyle{empty}

\tableofcontents

\addtocontents{toc}{\protect\setcounter{tocdepth}{1}}

\section{Introduction}\label{Ch:Intro}

\subsection*{Outline} This work  contributes to the recent fascinating developments in mathematical general relativity from the angle of metric measure geometry \cite{gromov1999, villani2009}. Such a nonsmooth approach to  gravity seems highly desirable, as expressed e.g.~in the introductions of \cite{cavalletti2020, kunzinger2018} and  acknowledged in various special issues \cite{cavalletti2022, steinbauer}. We consider a lower semicontinuous function $k\colon \mms\to\R$ on an abstract  space $\mms$ to be specified and a ``dimensional'' parameter $N\geq 1$. By using tools from optimal transport in Lorentzian signature  \cite{cavalletti2020,eckstein2017,kell2018,mccann2020,
mondinosuhr2022, suhr2018}, we introduce and study the (\textit{entropic}) \textit{timelike curvature-dimension condition} $\smash{\TCD_p^e(k,N)}$ for $\mms$, where $0<p<1$. This  synthetically quantifies the ``smooth'' conditions
\begin{itemize}
\item $\Ric \geq k$ in all timelike directions, and
\item $\dim \mms \leq N$.
\end{itemize}
We also survey its dimensionless counterpart $\smash{\TCD_p(k,\infty)}$. To reduce the level of technicality, we leave the precise definition of these conditions as a black box for now, and refer to the ``Motivation of \autoref{Def:TCDe}''  paragraph below for more details. The reader unfamiliar with optimal transport or metric measure geo\-metry is invited to think of the special case of $\mms$  being a smooth, globally hyperbolic Lo\-rentzian spacetime, endowed with its volume form (or a weighted version  thereof); some of our results are new even in this situation.

Our main motivation for this work is the crucial role played by 
\begin{align}\label{Eq:RICmmmm}
\Ric \geq k\quad\textnormal{in all timelike directions}
\end{align}
in classical general relativity as an energy condition. We  provide a unified synthetic treatment of the following three special cases; cf.~\cite{carroll,hellis,wald1984} for background.
\begin{itemize}
\item \textbf{Timelike convergence condition.} This refers to \eqref{Eq:RICmmmm} for constant $k$; for vanishing $k$ and assuming general relativity, this property is also termed  \emph{strong energy condition}. Independently, McCann \cite{mccann2020} and Mondino--Suhr \cite{mondinosuhr2022} pioneered its optimal transport characterization. This lead to the timelike curvature-dimension condition of Cavalletti--Mondino \cite{cavalletti2020}. 
An alternative notion was later proposed by Braun \cite{braun2022}.
\item \textbf{Weak energy condition.} Assuming general relativity, this special property asserts lower boundedness of $\Ric$ by the function $\scal/2 - \Lambda$ in every timelike direction. Here, $\Lambda \in \R$ is the cosmological constant. It is believed to hold in most physically reasonable circumstances. 
\item \textbf{Null convergence condition.} A far less evident  scenario fitting into \eqref{Eq:RICmmmm} is the nonnegativity of the Ricci curvature in all \emph{null} directions. Its equivalence to locally uniform lower bounds on the tensor $\Ric$ in all \emph{timelike} directions --- thus to \eqref{Eq:RICmmmm} --- has recently been shown by McCann \cite{mccann+}, thereby addressing an open question of \cite{cavalletti2022}. An alternative approach  is suggested by work of Ketterer \cite{ketterer2023}. The null energy condition  describes the stress-energy experienced by light rays and is 
expected to hold throughout the entire  non-quantum 
universe. Besides this physical  relevance, it is the basis for Penrose's singularity theorem \cite{penrose1965}, Hawking's area monotonicity \cite{haw72}, Galloway's null splitting theorem \cite{gallo}, topological censorship \cite{eich}, etc.
\end{itemize}
Moreover, our abstract formulation of \eqref{Eq:RICmmmm}  accommodates two further important situations on a synthetic level.
\begin{itemize}
\item \textbf{Asymptotic conditions.} Several spacetimes  that   describe cosmological phenomena fall into the category of Friedmann--Lemaître--Robertson--Walker geometries  \cite{beem1996,minwarped,oneill1983, wald1984}, or related models such as those of \cite{treude}. Such warped products are the mathematically best understood models of the universe. In general, their time\-like Ricci curvature is not uniformly bounded from below, but instead behaves as a function of the time coordinate; asymptotic conditions on it yield  singularity theorems \cite{fg,galloway1982}. Moreover, \eqref{Eq:RICmmmm}  allows one to phrase averaged energy conditions \cite{ford, tipler} (in terms of integral assumptions on $k$ along geodesics, so-called ``worldline bounds''), which lead to smooth singularity theorems as well \cite{borde, roman}. 
\item \textbf{Integral bounds.} Integrated controls on the negative part of the Ricci tensor play a fundamental role in Riemannian geometry \cite{cc}. Analogously to semiclassical phase space estimates for Schr\"odinger operators, 
$\Ell^p$-bounds for $2\,p > \dim\mms$ are particularly frequent, e.g.~\cite{aubry2007,pw1,pw2}; on metric measure spaces, their stability has been explored in \cite{kstab}. In general relativity, these are called ``worldvolume bounds''. Several relevant models exhibit  curvature behavior which is tame in an integrated way despite being badly behaved pointwise \cite{brown, fewster,kontou}. Activities in this direction have also triggered interest in the parallel null case \cite{fliss, freivogel}.
\end{itemize}
We will justify the synthetic treatment of asymptotic  conditions and integral bounds by establishing one singularity theorem in each framework, see \autoref{Th:FGALL} and \autoref{Th:HAW} below for details.

\clearpage

\subsection*{Results} 

\subsubsection*{Metric measure spacetimes} We work in the abstract setting of a \emph{metric measure spacetime}. Roughly speaking, this is a triple $\scrM:= (\mms,l,\meas)$ consisting of
\begin{itemize}
\item a space $\mms$ endowed with a first-countable topology,
\item a \emph{signed time separation function} $l\colon \mms^2\to [0,\infty] \cup\{-\infty\}$ satisfying the reverse triangle inequality, and
\item a Radon measure $\meas$ on $\mms$ (assumed with full support in this introduction to simplify the presentation).
\end{itemize}
Its definition is tailored towards three further key  properties, namely 
\begin{itemize}
\item \textbf{global hyperbolicity}, i.e.~the causality relation induced by $l$ is a closed order and causal diamonds are compact, following the analogous definition for topological preordered spaces \cite{nachbin} suggested by Minguzzi \cite{minguzzi2023},
\item a \textbf{length} property, i.e.~for every $x,y\in\mms$ the quantity $l(x,y)$ is arbitrarily well approximated by the ($l$-)length of a causal curve from $x$ to $y$, and
\item \textbf{regularity}, i.e.~if $l(x,y) > 0$, every nowhere constant causal curve with length $l(x,y)$ is already timelike.
\end{itemize}
Such a nonsmooth approach to ``spacetime structures'' is not new. First proposals by Busemann \cite{busemann} and Kronheimer--Penrose \cite{kronheimer} already date back to the sixties. Causality theory on continuous spacetimes has been studied order-theoretically by Sorkin--Woolgar \cite{sorkin} (see also \cite{chrusciel2012,ghe}). A recent systematic theory capturing various  causality phenomena are Kunzinger--Sämann's \emph{Lorentzian length spaces} \cite{kunzinger2018}, motivated by the well-developed theory of length metric spaces in positive signature. Their article also introduces timelike sectional curvature bounds à la Alexandrov  \cite{abis,AnderssonHoward98,bgp}; moreover, the timelike curvature-dimension conditions of \cite{braun2022, cavalletti2020} have been set up in this framework. The theory of Kunzinger--Sämann has been streamlined in \cite{mccann+} in the  globally hyperbolic, regular case. Related to  \cite{BombelliNoldus04,kunzinger2018,Noldus04a}, and 
tailor-made for Lorentz--Gromov--Hausdorff convergence,  are Minguzzi--Suhr's bounded Lorentzian metric spaces \cite{minguzzi2022}. (For another approach to Lorentz--Gromov--Hausdorff convergence see \cite{Mueller22+}.)  In place of a  time separation function, Sormani--Vega encode the ``metric'' properties of a continuous spacetime by their null distance \cite{sormanivega}. This has been adapted to Lorentzian length spaces in \cite{kste}.

Our first contribution further relaxes the approach of \cite{mccann+}. In fact, many of our results here  follow known proofs distributed over various works, e.g.~\cite{ake2020,burtscher,mintop,minguzzi2019,minguzzi2023,
minguzzi2022}; we provide a self-contained and streamlined exposition on these. Our innovation lies in the generality of our setting. We completely waive rectifiability assumptions on causal curves and any other property which depends on the choice of a metrizable topology, cf.~\autoref{Re:Nonambig}. In particular, our  compactness results solely rely on causality arguments.  This is useful since in the strict sense of \cite{kunzinger2018}, causality of curves is \emph{not} preserved under affine or proper-time reparametrizations. Moreover, Lipschitz continuity of causal curves per se imposes an additional requirement to deal with in the construction of limits. 

Under quite weak assumptions, cf.~\autoref{Ass:GHLLS} (length property and compact diamonds), we establish key properties like global hyperbolicity (\autoref{Th:GHy}), finiteness of $l_+$ (\autoref{Cor:Finiteness l}), topological facts pertaining    to  Hausdorffness, local compactness, and Polishness  (\autoref{Cor:Hausdorff} and \autoref{Th:Compact Polish}), the existence of generalized time functions (\autoref{Th:Ex time function}), a limit curve theorem (\autoref{Th:Limit curve theorem}), and more. The analog of Avez--Seifert's theorem stated in \autoref{Th:Ex geo} holds as well: any two points  $x,y\in\mms$ with $l(x,y) > 0$ are connected by an $l$-geodesic. The latter means a map $\gamma\colon[0,1]\to \mms$ such that for every $0\leq s< t\leq 1$,
\begin{align*}
l(\gamma_s,\gamma_t) = (t-s)\,l(\gamma_0,\gamma_1) >0.
\end{align*}
The additional assumption of regularity, cf.~\autoref{Ass:REG}, will ensure every such curve is  continuous. This fact observed in \cite{mccann+}  is  recalled in \autoref{Le:Continuity geos}.

Second, we study the ``Lorentzian'' geometry of the space of Borel probability measures $\Prob(\mms)$.  Following \cite{eckstein2017}, a natural causality relation $\preceq$ on $\scrP(\mms)$ is defined by setting $\mu\preceq\nu$ if there is a coupling $\pi\in\Pi(\mu,\nu)$ with support in $\{l\geq 0\}$. 

\begin{theorem}[Global hyperbolicity, cf.~\autoref{Th:GHy prob meas}]\label{Th:1111} \autoref{Ass:GHLLS} implies the space $\Prob(\mms)$ is globally hyperbolic in the following sense.
\begin{enumerate}[label=\textnormal{(\roman*)}]
\item The relation $\preceq$ is a narrowly closed order.
\item If $\scrC\subset\Prob(\mms)$ is narrowly compact, so is its  causal emerald $\scrJ(\scrC,\scrC)$.
\end{enumerate}
\end{theorem}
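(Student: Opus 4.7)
The plan is to verify the order and narrow-closedness properties in (i) via standard coupling arguments (diagonal couplings, the gluing lemma, Prokhorov), and then combine narrow closedness with tightness to deduce (ii).

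For (i), reflexivity follows since the diagonal coupling $(\Id,\Id)_\push\mu \in \Pi(\mu,\mu)$ is supported on $\{(x,x)\}$, which lies in $\{l\geq 0\}$ by reflexivity of $\leq$ on $\mms$. Transitivity is the gluing lemma: given $\pi_{12}\in\Pi(\mu,\nu)$ and $\pi_{23}\in\Pi(\nu,\rho)$ supported on $\{l\geq 0\}$, glue them along the common $\nu$-marginal to $\pi\in\Prob(\mms^3)$ with $\pi$-a.e.\ $x\leq y\leq z$; transitivity of $\leq$ on $\mms$ then makes the $(x,z)$-marginal witness $\mu\preceq\rho$. For antisymmetry, suppose $\mu\preceq\nu\preceq\mu$, let $\tau$ be a time function from \autoref{Th:Ex time function} that is strictly monotone along $\leq$, and glue the witnessing couplings to $\pi\in\Prob(\mms^3)$ with marginals $(\mu,\nu,\mu)$ and $x\leq y\leq z$ $\pi$-a.s. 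The identity $\int\tau\,d\mu = \int\tau\,d\nu$ combined with $\tau(x)\leq\tau(y)\leq\tau(z)$ forces $\tau(x)=\tau(y)=\tau(z)$ $\pi$-a.s., whence strict monotonicity yields $x=y=z$ a.s.\ and hence $\mu=\nu$. Narrow closedness of $\preceq$: if $\mu_n\to\mu$, $\nu_n\to\nu$ narrowly with $\mu_n\preceq\nu_n$ via $\pi_n$, the family $\{\pi_n\}$ is tight by Prokhorov and admits a narrow subsequential limit $\pi\in\Pi(\mu,\nu)$, whose support remains in the closed set $\{l\geq 0\}\subset\mms^2$ (closedness of the causal order from global hyperbolicity), so $\mu\preceq\nu$.

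For (ii), we prove tightness and closedness separately. Fix $\varepsilon>0$; by Prokhorov and narrow compactness of $\scrC$, choose a compact $K_0\subset\mms$ with $\nu(K_0)\geq 1-\varepsilon/2$ for every $\nu\in\scrC$, and let $K$ be the causal emerald of $K_0$ in $\mms$, which is compact by global hyperbolicity. For $\mu\in\scrJ(\scrC,\scrC)$ with $\nu_1\preceq\mu\preceq\nu_2$ and $\nu_i\in\scrC$, glue the witnessing couplings to $\pi\in\Prob(\mms^3)$ with marginals $(\nu_1,\mu,\nu_2)$ and support in $\{x\leq y\leq z\}$. Then $\pi(\{x\in K_0\}\cap\{z\in K_0\})\geq 1-\varepsilon$, and every such $y$ lies in $K$; hence $\mu(K)\geq 1-\varepsilon$, giving uniform tightness of $\scrJ(\scrC,\scrC)$. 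Narrow closedness follows from (i): a narrow limit $\mu$ of $\mu_n\in\scrJ(\scrC,\scrC)$ is sandwiched by narrow subsequential limits $\nu_i\in\scrC$ of sandwiches $\nu_1^n\preceq\mu_n\preceq\nu_2^n$, and (i) delivers $\nu_1\preceq\mu\preceq\nu_2$. Prokhorov's theorem then upgrades tightness plus closedness to narrow compactness.

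The main obstacle I anticipate is the antisymmetry step, which rests on a time function with genuine strict monotonicity along $\leq$ (not merely along chronology). Should \autoref{Th:Ex time function} deliver only the weaker property, an alternative route is to iterate the gluing to build self-couplings of $\mu$ supported on arbitrarily long causal chains, then combine compactness of causal emeralds with antisymmetry of the closed order $\leq$ on $\mms$ to collapse each chain to the diagonal.
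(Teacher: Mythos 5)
Your argument is sound in outline and tracks the paper's own proof closely; the one place that needs care is exactly the one you flagged.

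For item (i), reflexivity (diagonal coupling), transitivity (gluing), and narrow closedness (tightness of couplings, Prokhorov, then Alexandrov to pass closedness of $\{l\geq 0\}$ to the limit) all coincide with what the paper does. For antisymmetry the paper factors the argument through an intermediate lemma --- the diagonal coupling is the \emph{only} element of $\Pi_\leq(\mu,\mu)$ --- cited from \cite[Thm.~12]{eckstein2017}, and then applies gluing to conclude $x\leq y\leq z=x$, hence $x=y$, $\omega$-a.e. Your version merges these two steps by bringing in the generalized time function directly on the glued measure; the mechanism (monotone function integrated against the glued coupling, forcing degeneracy) is the same one that drives the cited lemma. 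The genuine technical concern, which you anticipate, is that \autoref{Th:Ex time function} only furnishes a generalized time function on each compact subset, and item (i) is stated for arbitrary $\mu,\nu\in\scrP(\mms)$. Your sketch reads as if $\tau$ were globally defined and $\mu$-integrable; for compactly supported marginals this is fine (take $\tau$ on $J(\supp\mu\cup\supp\nu,\supp\mu\cup\supp\nu)$), but for non-compactly supported ones a localization argument is needed, and the paper offloads precisely this to \cite{eckstein2017}. Your fallback sketch (long causal chains plus compact emeralds) would also need some elaboration before it closes the gap, but the concern and its shape are correctly identified.

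For item (ii), the paper first isolates a monotonicity inequality: if $\mu\preceq\nu$ and $C$ is closed, then $\mu[J^+(C)]\leq\nu[J^+(C)]$ and $\nu[J^-(C)]\leq\mu[J^-(C)]$, derived by integrating the causal indicator $\One_{J^+(C)}$ against a causal coupling. It then applies this with $C$ a compact set carrying most of the mass of every member of $\scrC$ and uses compactness of the emerald $J(C,C)$. Your gluing argument derives exactly this conclusion inline: gluing $\nu_1\preceq\sigma\preceq\nu_2$ to a measure on $\mms^3$ concentrated on $\{x\leq y\leq z\}$ and observing that $x,z\in K_0$ forces $y\in J(K_0,K_0)$ is the ``unfolded'' form of the same estimate. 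This buys you a marginally shorter and more self-contained argument; the paper's route isolates a reusable monotonicity lemma (attributed to Eckstein--Miller). Your closedness step, using item (i) and subsequential limits of the sandwiching measures in the compact $\scrC$, matches the paper's appeal to closedness of $\preceq$. Both proofs then invoke Prokhorov.
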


This result is new even in the smooth setting of \cite{eckstein2017}. In other words, global hyperbolicity ``lifts'' from $\mms$ to $\Prob(\mms)$. This adds an entire class of new topological ordered spaces to nonsmooth general relativity. It motivates a systematic study of causality (and chronology) on $\Prob(\mms)$. An interesting question --- beyond the scope of our work --- is to which extent  other phenomena lift from $\mms$ to $\Prob(\mms)$ or ``project down'' in the reverse direction. This would parallel general known metric correspondences between Wasser\-stein spaces and their bases \cite{villani2009}. In addition, we thoroughly study the analog of $l$-geodesics for probability measures \cite{mccann2020}, set up in \autoref{Def:lp geo} in terms of the total transport cost $\smash{\ell_p}$  defined in  \eqref{lp causal}; here $0<p<1$.  Notably, we establish criteria for their narrow continuity using regularity (\autoref{Cor:Equiv notions lp geo}), a lifting theorem (\autoref{Le:Geodesics plan}), and a limit curve theorem (\autoref{Pr:Compactness lp geos}). This relates the narrow topology to properties merely defined in terms of  $\smash{\ell_p}$.

\subsubsection*{Timelike curvature-dimension conditions} In the second part of this work, we define the conditions $\smash{\TCD_p^e(k,N)}$ and $\smash{\TCD_p(k,\infty)}$ for $\scrM$ and derive several properties. Our approach is inspired by the theory of metric measure spaces with variable Ricci curvature lower bounds without \cite{sturm2015} and with  \cite{ketterer2015,ketterer2017} dimensional restrictions (see also \cite{braun2021,sturm2020}). In rough terms, these properties are phrased by convexity properties of the Boltzmann entropy with respect to $\meas$ along $\smash{\ell_p}$-geodesics interpolating any two appropriate  elements of $\Prob(\mms)$. We discuss some of the motivating examples indicated in the ``Outline'' paragraph in \autoref{Ch:Examples}. We also state a stability result in \autoref{Th:Stab}\footnote{As explained in the introduction of \cite{braun2022} (and already contained in the axioms of   \cite{cavalletti2020}), this necessitates the definition of weak versions of $\smash{\TCD_p^e(k,N)}$ and $\smash{\TCD_p(k,\infty)}$.}. We prove geometric inequalities à la 
\begin{itemize}
\item Brunn--Minkowski (\autoref{Th:Timelike BM}), 
\item Bonnet--Myers (\autoref{Th:BonnetMyers}), and 
\item Bishop--Gromov (\autoref{Th:Bishop Gromov}). 
\end{itemize}
Moreover, we extend a singularity theorem by Frankel--Galloway from the smooth \cite{fg,galloway1982} to the synthetic setting; see \autoref{Th:FGALL} below for its sharp form.

Finer results are established under the hypothesis of timelike $p$-essential nonbranching, cf.~\autoref{Def:TL nonbranching}. Roughly speaking, the latter means that $\smash{\ell_p}$-optimal transports never charge sets of forward or backward branching $l$-geodesics.  This notion introduced in \cite{braun2022} following \cite{rajala2014} relaxes timelike nonbranching \cite{cavalletti2020}. First, by local lower boundedness of $k$, all qualitative properties from  \cite{braun2022,braun2023,cavalletti2020} transfer to our setting: uniqueness of chronological $\smash{\ell_p}$-optimal couplings (\autoref{Th:Optimal maps}) and $\smash{\ell_p}$-geodesics (\autoref{Th:Uniqueness geos}), and the existence of good geodesics (\autoref{Th:Good}). Our contribution lies in proving the foregoing results by only assuming $\smash{\TCD_p^e(k,N)}$ locally, cf.~\autoref{Def:Local TCD}. This gives a more streamlined proof of the equivalence of $\smash{\TCD_{p,\loc}^e(k,N)}$ and $\smash{\TCD_p^e(k,N)}$ (\autoref{Th:Local to global}), pioneered in \cite{braun2022} for constant $k$. Lastly, via a pathwise characterization of $\smash{\TCD_p^e(k,N)}$ we identify the latter with its above mentioned weak version (\autoref{Th:Pathwise}).

From a technical point of view, one may ask if $\smash{\TCD_p^e(k,N)}$ is independent of the transport exponent $p$ under suitable nonbranching hypotheses. This is unknown even for constant $k$ \cite{cavalletti2022}, yet expected by analogous facts for $\CD$ metric measure spaces \cite{akdemir,braunc2} based on \cite{cavalletti2021}. 

\subsubsection*{Localization} In the third (rather technical)   part, we develop the Lorentz\-ian localization paradigm in the variable case. The broad term ``localization'' from convex geometry refers to the reduction of a high-dimensional problem to an often easier one-dimensional task.  It originates in Payne--Weinberger's treatment of the optimal Euclidean Poincaré inequality \cite{payne} and was further developed by Gromov--Milman~\cite{gmilman}, \smash{Lovász}--Simonovits \cite{losim}, and Kannan--\smash{Lovász}--Simonovits \cite{kannan}.  Related ideas were used to study optimal transport maps in 
$\R^n$ by Sudakov \cite{Sudakov76}, Ambrosio~\cite{Ambrosio03}, Caffarelli--Feldman--McCann \cite{caffarelli2001}, and Trudinger--Wang \cite{TrudingerWang01}, and in more general geometries by  
 Feldman--McCann \cite{feldman}, Bianchini--Cavalletti \cite{bianchini2013}, and Cavalletti~\cite{cavalletti2014}.   As a technique for deriving inequalities on smooth Riemannian manifolds, 
this method was pioneered by Klartag \cite{klartag2017}. This inspired Cavalletti--Mondino's localization paradigm for $\CD$ metric measure spaces \cite{cavalletti2017} and  their Lorentzian entropic TCD spaces \cite{cavalletti2020}. 

We consider a Borel function $u\colon E \to \R$ on an $l$-geodesically convex Borel set $E\subset\mms$. Assume every $x,y\in E$ satisfies
\begin{align}\label{Eq:uneedle}
u(y) - u(x) \geq l(x,y).
\end{align}
Then an appropriate subset $\smash{\Tr_u\subset E}$ admits a partition into $l$-geodesic ``rays''   $\mms_\alpha$ --- also termed  ``needles'' --- passing through $\alpha\in Q$, where $Q\subset E$ is an index set, with the following property. The restriction $\smash{\meas' := \meas\mres\Tr_u}$  admits a disintegration $\smash{\rmd \meas'(x) = \rmd\meas_\alpha(x)\d\q(\alpha)}$ into ``conditional measures'' $\meas_\alpha$ concentrated on $\mms_\alpha$, where $\q$ is a Borel probability measure on $Q$ (\autoref{Th:Needle}). In fact, for $\q$-a.e.~$\alpha\in Q$ we show $\meas_\alpha$ is absolutely continuous with respect to the one-dimensional Hausdorff measure on $\mms_\alpha$ (\autoref{Pr:Abs cont needles}) and the induced metric measure space obeys a variable curvature-dimension condition in the sense of Ketterer \cite{ketterer2015}  (\autoref{Th:Localization TCD}). In addition, equality holds in \eqref{Eq:uneedle} along $\q$-a.e.~needle. Viewing these  observations from the perspective of the null convergence  condition,  it is interesting to note that nontrivial curvature information can be extracted for $\q$-a.e.~ray from this ``codimension one'' hypothesis.

Besides the variability of $k$, our  localization complements \cite{cavalletti2020} as follows. First, by starting from a stronger assumption, instead of the measure contraction property we establish a stronger curvature property of $\q$-a.e.~ray. For constant $k$, such a result has been achieved by Cavalletti--Mondino in connection with their work on Lorentzian isoperimetry \cite{cm++}, as they first  
communicated to us in 2022. Second, we assume mere  timelike $p$-essential nonbranching. This does not exclude branching $l$-geodesics,  and some effort is needed to show branching points are $\meas$-negligible (\autoref{Th:All negligible}); our arguments are inspired by \cite{cavalletti2014}. Third, in \cite{cavalletti2020} $u$ was chosen as the signed distance function from an achronal, future timelike complete (FTC) Borel set. The advantage in allowing for more general $u$ comes from $\smash{\ell_1}$-optimal transport theory. In typical applications, $u$ will be a Kantorovich potential for an $\smash{\ell_1}$-optimal transport problem. The technical background, \autoref{Th:Mean zero}, seems new in Lorentzian signature, as  does the accompanying Kantorovich duality  formula  for $\smash{\ell_1}$ (\autoref{Th:Rubinstein}).  For  constant $k$, similar results appear in \cite{cm++}, which we did not receive until after the present work had been circulated.  Related developments are also pursued in \cite{Akdemir24+}.

\subsubsection*{Applications} To simplify the presentation, let $N>1$. Using \autoref{Th:Mean zero}, inspired by \cite{cmgeometric2017} we improve all geometric inequalities from \autoref{Sec:Geometric inequ} to their sharp versions; we refer to \cite[Rem.~5.11]{cavalletti2020} and \autoref{Ex:WP} for elaborations on sharpness. This notably pertains to the timelike Brunn--Minkowski inequality (\autoref{Th:Sharp BMink}), whose sharp form under $\smash{\TCD_p^e(k,N)}$ was unknown even for constant $k$. 

A particular application is the subsequent sharp singularity theorem, valid under  an \emph{asymptotic} condition on the timelike Ricci curvature. It extends a smooth result by Frankel--Galloway \cite{fg,galloway1982}.

\begin{theorem}[Synthetic Frankel--Galloway singularity theorem, cf.~\autoref{Cor:Sharp Schneider}]\label{Th:FGALL} Assume $\scrM$ is a timelike $p$-essentially nonbranching $\smash{\TCD_p^e(k,N)}$ space, and suppose that some $o\in \mms$, $\beta > 0$, and $R>0$ satisfy
\begin{align*}
k\geq (N-1)\,\Big[\frac{1}{4} + \beta^2\Big]\,l(o,\cdot)^{-2}\quad\textnormal{\textit{on} }\{l(o,\cdot)>R\}.
\end{align*}
Then the $l$-distance of any two points in $\smash{I^+(o)}$ does not exceed $\smash{R\,\rme^{\pi/\beta}}$.
\end{theorem}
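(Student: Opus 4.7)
My plan is to apply the localization paradigm developed in this work to the Kantorovich-type potential $u := l(o,\cdot)$ on the $l$-geodesically convex set $E := I^+(o)$. The reverse triangle inequality shows that for every $x \leq y$ in $E$,
\[
u(y) - u(x) = l(o,y) - l(o,x) \geq l(x,y),
\]
so the hypothesis \eqref{Eq:uneedle} holds. Combining \autoref{Th:Needle}, \autoref{Pr:Abs cont needles}, and \autoref{Th:Localization TCD} then disintegrates $\meas \mres \Tr_u$ into conditional measures $\meas_\alpha = h_\alpha\,\Haus^1\mres \mms_\alpha$ supported on $l$-geodesic rays $\mms_\alpha$, each inheriting a one-dimensional Ketterer curvature-dimension condition with curvature function $k \circ \gamma_\alpha$ and dimension $N$. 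I would parametrize $\gamma_\alpha$ by $l$-arc length from $o$, so that $u \circ \gamma_\alpha(t) = t$.

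Along such a needle, the Ketterer condition reads
\[
\bigl(h_\alpha^{1/(N-1)}\bigr)''(t) + \frac{k(\gamma_\alpha(t))}{N-1}\,h_\alpha^{1/(N-1)}(t) \leq 0
\]
distributionally on the interior of the ray. The hypothesis on $k$ gives $k(\gamma_\alpha(t)) \geq (N-1)\bigl(\tfrac{1}{4} + \beta^2\bigr)/t^2$ for $t > R$, so, writing $f_\alpha := h_\alpha^{1/(N-1)}$,
\[
f_\alpha''(t) + \frac{\tfrac14 + \beta^2}{t^2}\,f_\alpha(t) \leq 0 \qquad \text{for } t > R.
\]
The key one-dimensional step is the classical substitution $s = \log t$, $w(s) := \rme^{-s/2}\,f_\alpha(\rme^s)$, which turns this singular Euler-type inequality into the constant-coefficient oscillation bound
\[
w''(s) + \beta^2\,w(s) \leq 0.
\]
Since $f_\alpha > 0$ on the interior of the ray, so is $w$ on the corresponding open interval in the $s$-variable, and a Sturm--Wronskian comparison with $\sin(\beta(s - s_0))$ then forces every open interval of positivity of such a distributional supersolution to have length at most $\pi/\beta$. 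Transporting back to the $t$-variable bounds the $l$-length of each ray by $R\,\rme^{\pi/\beta}$.

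To promote this ``needle-wise'' conclusion to a pointwise statement, I would use the lower semicontinuity of $l(o,\cdot)$ and the full support of $\meas$: the set $U := \bigl\{y \in I^+(o) : l(o,y) > R\,\rme^{\pi/\beta}\bigr\}$ is open, so if $U$ were nonempty it would carry positive $\meas$-mass, hence positive $(\meas\mres \Tr_u)$-mass (the complement of $\Tr_u$ in $I^+(o)$ being $\meas$-negligible by the localization theorem under timelike $p$-essential nonbranching), forcing a needle of length exceeding $R\,\rme^{\pi/\beta}$ --- contradicting the previous step. Consequently $l(o,y) \leq R\,\rme^{\pi/\beta}$ for all $y \in I^+(o)$, and one further application of the reverse triangle inequality,
\[
l(x,y) \leq l(o,y) - l(o,x) \leq R\,\rme^{\pi/\beta}\qquad \text{for } x \leq y \text{ in } I^+(o),
\]
finishes the proof (pairs with $l(x,y) \leq 0$ being trivial).

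The main obstacle I anticipate is the low-regularity one-dimensional oscillation step: $f_\alpha$ is only a distributional supersolution of the singular Euler inequality, so the Sturm--Wronskian argument must be carried out at that regularity or rephrased through the concavity-type formulation of Ketterer's 1D $\CD$ densities. A secondary subtlety is checking that $\Tr_u$ is $\meas$-conull on $I^+(o)$ so that the openness-plus-full-support contradiction genuinely produces an overlong needle; this should follow from timelike $p$-essential nonbranching combined with the geodesic structure of $u = l(o,\cdot)$.
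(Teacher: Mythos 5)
Your proof takes a genuinely different route from the paper's. The paper deduces \autoref{Cor:Sharp Schneider} as a consequence of the sharp timelike Brunn--Minkowski inequality (\autoref{Th:Sharp BMink}), itself obtained through the Kantorovich-potential localization of \autoref{Th:Mean zero} for the $\smash{\ell_1}$-transport between mass distributions near $o$ and near a hypothetical far point; the contradiction is then driven by a blow-up of the distortion coefficient $\smash{\sigma_{k_\gamma^-/N}^{(1-t)}}$ along transport geodesics, which (after the explicit computation of the Euler-type $\SIN_\kappa$) encodes exactly the Sturm oscillation you identify. You instead localize directly against $u = l(o,\cdot)$ and argue needle-by-needle. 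Your $s=\log t$ substitution and the resulting constant-coefficient oscillation bound are correct, and indeed capture the key analytic mechanism behind the paper's $\SIN_\kappa$ formula.

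There is, however, a genuine gap in the step where you promote the needle-wise oscillation bound to a diameter bound. You implicitly assume that $\q$-a.e.~needle $\mms_\alpha$ in the $u=l(o,\cdot)$-decomposition extends back in $u$-value to at most $R$, so that the positivity interval of $w(s)$ starts at $\log R$. Under mere timelike $p$-essential nonbranching this need not hold. By the analogue of \autoref{Re:Triples}, the forward branching set $B$ is closed downward under $\smash{\preceq_u}$: if $x_0 \in B$, every $z\preceq_u x_0$ belongs to $B$. Thus if $x_0$ is a forward branching point with $l(o,x_0) = c > R$, the entire segment of the $l$-geodesic from $o$ to $x_0$ lies in $B$ and is excised from $\Tr_u$ (consistently with $\meas[B]=0$, since a geodesic segment is $\meas$-null). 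Every needle branching from $x_0$ then has $u$-range $(c,d_\alpha)$ with $c>R$, and Sturm oscillation on $(\log c,\log d_\alpha)$ gives only $d_\alpha \leq c\,\rme^{\pi/\beta}$, not $d_\alpha \leq R\,\rme^{\pi/\beta}$. In particular a $\q$-positive bundle of such branches, which can perfectly well fill a set of positive $\meas$-measure, escapes your contradiction argument; iterated branching could even degrade the bound further. (Your concluding openness-plus-full-support step correctly produces a $\q$-positive family of needles entering $\{l(o,\cdot)>R\rme^{\pi/\beta}\}$, but for this family you can only conclude $c_\alpha>R$, which is consistent.) The paper's choice of potential via \autoref{Th:Mean zero} avoids this precisely because those needles are forced, by the mean-zero property along $\q$-a.e.~needle, to traverse from $\supp\mu_0$ near $o$ all the way to $\supp\mu_1$ near the far point. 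To repair your argument one would either need to restrict to timelike nonbranching spaces (where $B=\emptyset$ and every needle does reach back to $u$-value $0$), or argue separately that $\q$-a.e.~needle has $c_\alpha\le R$, or replace the $l(o,\cdot)$-localization by the transport-potential localization of \autoref{Th:Mean zero}. A secondary point worth flagging: your localization against $u=l(o,\cdot)$ requires a weight function $w$ as in \autoref{Sub:Framework}, whose existence on the possibly noncompact $E=I^+(o)\cup\{o\}$ you do not address; \autoref{Th:Mean zero} supplies one automatically via compact supports.
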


An analogous past version is also available. The importance of \autoref{Th:FGALL} is that $k$ is not hypothesized to be everywhere nonnegative. It already predicts timelike geodesic incompleteness provided the timelike Ricci curvature does not decay too quickly in the far future of $o$. This result has predecessors in smooth \cite{aubry2007,galloway1982,schneider1972} and nonsmooth \cite{ketterer2017} Riemannian geometry. Since Schneider's work \cite{schneider1972} was the first to state the explicit asymptotic condition of \autoref{Th:FGALL}, in the sequel we call its conclusion the \emph{timelike Schneider inequality}.

Lastly, we show the following Hawking-type singularity theorem under an \emph{integral bound} on the negative part of the timelike Ricci curvature. Its smooth predecessor has recently been shown by Graf--Kontou--Ohanyan--Schinnerl \cite{kontou} under an additional uniform lower boundedness assumption on the timelike Ricci curvature. For simplicity, we state the synthetic counterpart to their result here; an extension is provided in  \autoref{Th:Hawking}. Two quantities require a quick introduction: $\hh_0$ is the ``surface measure'' of $V$, and $\smash{V^{0,T}}$ is the radial ``future development'' of $V$ up to temporal height $T$; see Subsections \ref{Sub:1} and \ref{Sub:2}.

\begin{theorem}[Synthetic Hawking-type singularity theorem, cf.~\autoref{Th:Hawking}]\label{Th:HAW} Let $\scrM$ be a timelike $p$-essentially nonbranching proper $\smash{\TCD_p^e(k,N)}$ space with $k \geq K$ for some constant $K<0$. Let $V\subset\mms$ be an achronal FTC Borel set with $\meas[V]=0$ and $\hh_0[V] < \infty$ whose forward mean curvature is bounded from above by $-\beta$ in the sense of \autoref{Def:Mean curv}, where $\beta > 0$. Assume  there exist $T>0$ and $\delta > 0$ with
\begin{align*}
\hh_0[V]^{-1}\int_{V^{0,T}} k_-\d\meas < \Big[\frac{\sinh(\sqrt{-K/(N-1)}\,\delta)}{\sinh(\sqrt{-K/(N-1)}\,(T+\delta))}\Big]^{N-1}\,\Big[\beta - \frac{N-1}{T}\Big].
\end{align*}
Then there is a radial $l$-geodesic starting in $V$ whose maximal domain of definition in $\smash{I^+(V)\cup V}$ is strictly contained in $[0,T+\delta)$.
\end{theorem}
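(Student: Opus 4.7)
The plan is to argue by contradiction using the Lorentzian localization paradigm from \autoref{Th:Needle} and \autoref{Th:Localization TCD}. Suppose every radial $l$-geodesic issuing from $V$ extends to have length at least $T+\delta$ inside $I^+(V)\cup V$. Define $u(x) := l(V,x)$ on $I^+(V)\cup V$ and $u(x):=0$ on $V$. Since $V$ is achronal and FTC, the reverse triangle inequality gives \eqref{Eq:uneedle} (with signs taken from the convention in \autoref{Th:Needle}) on the relevant set. One then feeds $u$ into the localization machinery, producing an index space $(Q,\q)$ and a disintegration $\meas\mres\Tr_u=\int_Q\meas_\alpha\,\rmd\q(\alpha)$ into conditional measures $\meas_\alpha$ concentrated on rays $\mms_\alpha$. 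By \autoref{Pr:Abs cont needles}, each $\meas_\alpha$ has a density $h_\alpha$ against the Hausdorff one-dimensional measure along $\mms_\alpha$, and by \autoref{Th:Localization TCD} this density satisfies a one-dimensional variable $\CD(k\circ\ray_\alpha,N)$ condition, where $\ray_\alpha\colon [0,L_\alpha]\to\mms$ is the unit-speed parametrization. Under the contradiction hypothesis, $L_\alpha\geq T+\delta$ for $\q$-a.e.~$\alpha$.

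Next, translate the forward mean curvature hypothesis into initial data for the rays. By the definition of forward mean curvature (\autoref{Def:Mean curv}) together with \autoref{Th:Mean zero}, the disintegration can be normalized so that $\rmd\q(\alpha)$ encodes the surface measure $\hh_0$ on $V$, with $h_\alpha(0^+)$ matching the local ``surface density'' of $V$ and $(\log h_\alpha)'(0^+)\leq -\beta$ for $\q$-a.e.~$\alpha$. Writing $\phi_\alpha:=h_\alpha^{1/(N-1)}$, the variable one-dimensional $\CD(k\circ\ray_\alpha,N)$ condition is equivalent to the distributional Jacobi inequality
\begin{equation*}
\phi_\alpha''(t)+\frac{k(\ray_\alpha(t))}{N-1}\,\phi_\alpha(t)\leq 0,
\end{equation*}
with initial conditions $\phi_\alpha(0)>0$ and $\phi_\alpha'(0)/\phi_\alpha(0)\leq -\beta/(N-1)$. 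Comparing with the model equation $\psi''+(K/(N-1))\,\psi=0$ whose solutions involve the factor $\sinh(\sqrt{-K/(N-1)}\,\cdot)$, and writing $k=K-(k_-\text{ portion})$, one derives an integral inequality bounding $\phi_\alpha$ from above on $[0,T+\delta]$ in terms of $\int_0^{T+\delta} k_-(\ray_\alpha(t))\,\rmd t$ and the quantities $T$, $\delta$, $K$, $\beta$.

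To close the argument, integrate the ray-by-ray inequality against $\q$. Fubini together with the disintegration identity $\int_Q\bigl(\int_0^{L_\alpha} k_-(\ray_\alpha(t))\,h_\alpha(t)\,\rmd t\bigr)\rmd\q(\alpha)=\int_{V^{0,T}}k_-\,\rmd\meas$ (up to the cut-off at temporal height $T$, which contributes the $T+\delta$ versus $T$ mismatch resolved by the $\sinh$ ratio in the hypothesis) converts the local Jacobi comparison into the global average integral. The positivity requirement $\phi_\alpha(t)>0$ throughout $[0,T+\delta]$ forces the averaged integral of $k_-$ over $V^{0,T}$ to exceed exactly the threshold
\begin{equation*}
\hh_0[V]\cdot\Big[\frac{\sinh(\sqrt{-K/(N-1)}\,\delta)}{\sinh(\sqrt{-K/(N-1)}\,(T+\delta))}\Big]^{N-1}\,\Big[\beta - \frac{N-1}{T}\Big],
\end{equation*}
contradicting the standing hypothesis.

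The main obstacle I anticipate is Step 2: rigorously identifying the boundary datum of the disintegration on $V$ with the forward mean curvature, so that the one-dimensional densities $h_\alpha$ pick up the correct initial slope $-\beta$ and the total mass of $\q$ is precisely $\hh_0[V]$. This requires a careful $\ell_1$-Kantorovich duality argument together with the variable-$k$ version of \autoref{Th:Mean zero}, and a precise handling of the ``forward development'' set $V^{0,T}$ to ensure the Fubini identity lines up with the $\sinh$-comparison model that incorporates the constant $K$ lower bound.
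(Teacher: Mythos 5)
Your high-level skeleton is in the right family: argue by contradiction, localize with the signed time function $u = \tau_V$ (so you land exactly in \autoref{Ex:CM}, not the $\ell_1$-duality route of \autoref{Th:Mean zero}/\autoref{Th:Rubinstein}, which is the wrong reference here), get $\CD(k_\alpha,N)$ densities $\bar h_\alpha$ via \autoref{Th:Localization TCD}, feed in the mean curvature bound, and integrate against $\q$. But there are two genuine gaps, one of which is the heart of the matter.

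The central gap is the conversion from worldvolume to worldline data. Your closing Fubini identity $\int_Q \bigl(\int_0^{L_\alpha} k_-(\ray_\alpha(t))\,h_\alpha(t)\,\rmd t\bigr)\rmd\q(\alpha) = \int_{V^{0,T}} k_-\,\rmd\meas$ is the coarea formula and is correct, but it carries the \emph{weighted} density $h_\alpha(t)$ inside the inner integral, whereas the Jacobi/Sturm comparison for $\phi_\alpha$ produces a constraint involving the \emph{unweighted} line integral $\int_0^{T+\delta} k_-(\ray_\alpha(t))\,\rmd t$. Those two quantities do not match, and the discrepancy is exactly the $\sinh$-ratio constant that appears in the hypothesis. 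You gesture at this (``up to the cut-off\ldots resolved by the $\sinh$ ratio'') but do not supply the mechanism. The mechanism in the paper is a new nonsmooth form of Cheeger--Colding's \emph{segment inequality} (\autoref{Pr:Segment}): it bounds the $\hh_0$-essential infimum of the unweighted line integral $F_{k_-,T}$ by the weighted volume integral $\hh_0[V]^{-1}\int_{V^{0,T}}k_-\,\rmd\meas$ divided by exactly the $\sinh$-ratio factor, and this bound in turn rests on the lower comparison bound $\bar h_\alpha(t)\geq c'\,\bar h_\alpha(0)$ for $t\leq T$ coming from \autoref{Cor:Comparison}. The paper then finds a Borel subset $A'\subset V$ of positive $\hh_0$-measure on which the line integrals $F_{k_-,T}$ are $\leq L+\varepsilon$, and performs the mean-curvature variation \emph{only over} $A'$. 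Without this segment-inequality step, your Fubini argument does not close.

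The second, smaller gap is the pointwise initial slope $(\log h_\alpha)'(0^+)\leq -\beta$ for $\q$-a.e.~$\alpha$, which is not what \autoref{Def:Mean curv} gives. The synthetic definition is a $\liminf$-of-volumes condition along normal variations $V_{t,\phi}$, which is strictly weaker in principle than a pointwise bound on derivatives of densities; the paper therefore keeps it in integral form, plugs in a test function $\phi$ supported on the good set $A'$, Taylor-expands the comparison $\SIN$-ratio around $t=0$, and only then applies the variational inequality. Cancelling the $\int_V\phi^2\,\rmd\hh_0$ factors at the end produces the contradiction. Promoting the variational bound to a pointwise ODE initial condition is an extra step that would need its own justification.
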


The smooth result of \cite{kontou} has been inspired by Cheeger--Colding \cite{cc} and Sprouse \cite{sprouse}. A different synthetic version of Hawking's singularity theorem under uniform timelike Ricci curvature bounds has been shown in \cite{cavalletti2020}. A counterpart in positive signature can be found in \cite{burt}. An interesting ingredient of independent interest in our proof is a new nonsmooth form of   Cheeger--Colding's \emph{segment inequality} (\autoref{Pr:Segment}). It was used in \cite{cc} in the context of the almost splitting theorem. Roughly speaking, it estimates the line integral of a function along a single needle by its integral over a normal tube.

We point out that \autoref{Th:FGALL} and \autoref{Th:HAW} are new even in space\-times of low regularity, notably for Lorentzian metrics of class $\smash{\Cont^{1,1}}$ (or $\smash{\Cont^1}$ plus timelike $p$-essential nonbranching) thanks to \autoref{Th:Low reg blub} and \cite{braunc1}.   Under the strong energy condition however, a more traditional analog of the Hawking theorem was obtained 
for $\smash{\Cont^1}$-metrics by Graf; see \cite{Graf20} and its references for related results.

\subsection*{Motivation of \autoref{Def:TCDe}} The setup of $\smash{\TCD_p^e(k,N)}$ for variable $k$ is a bit technical. Loosely speaking, it is based  on (Jacques Charles Fran\c{c}ois)  Sturm's comparison theorem for  certain ODEs   involving $k$. Below, we  illustrate its  definition by outlining the smooth proof of ``$\Ric\geq k$ in all timelike directions implies $\smash{\TCD_p^e(k,N)}$''; a similar line of thought motivates the definition of $\smash{\TCD_p(k,\infty)}$, cf.~e.g.~the introduction of \cite{sturm2006b}. We temporarily 
neglect regularity issues and refer to \autoref{Th:Strong energ} for details. 

Let $\mu_0,\mu_1\in\Prob(\mms)$ be  appropriate mass distributions with $\meas$-densities $\rho_0$ and $\rho_1$, respectively. The results of \cite{mccann2020} imply the unique $\smash{\ell_p}$-geodesic $(\mu_t)_{t\in [0,1]}$ from $\mu_0$ to $\mu_1$ is given by $\mu_t = (T_t)_\push\mu_0$. Here $(T_t(x))_{t\in[0,1]}$ is an $l$-geodesic for $\mu_0$-a.e.~$x\in\mms$. Furthermore, the $\meas$-density $\rho_t$ of $\mu_t$ solves the Monge--Ampère-type  equation
\begin{align*}
\rho_0 = (\rho_t\circ T_t)\,\vert\!\det\rmd T_t\vert\quad\mu_0\textnormal{-a.e.}
\end{align*}
The negative logarithmic Jacobian $-\log \jmath$, where $\jmath := \vert\!\det\rmd T_\cdot\vert$, solves a Ric\-cati-type equation. The hypothesis ``$\Ric\geq k$ in all timelike directions'' and several manipulations thus imply the following ODI for every $0<t<1$:
\begin{align*}
(\jmath_t^{1/N})'' \leq -\frac{(k\circ T_t)\,\vert T_t'\vert^2}{N}\,\jmath_t^{1/N}\quad\mu_0\textnormal{-a.e.}
\end{align*}
Sturm's comparison theorem relates $\smash{\jmath^{1/N}}$ $\mu_0$-a.e.~to the solution $v$ of the ODE
\begin{align*}
v''(t) + \frac{(k\circ T_t)\,\theta^2}{N}\,v(t) =0
\end{align*}
satisfying $\smash{\smash{v(0) = \jmath^{1/N}_0}}$ and $\smash{\smash{v(1) = \jmath^{1/N}_1}}$, where $\smash{\theta := \vert T_t'\vert = l(\cdot,T_1)}$, viz.
\begin{align}\label{Eq:Blubz}
\jmath_t^{1/N} \geq v(t)\quad\mu_0\textnormal{-a.e.}
\end{align}
On the other hand, $v$ is explicitly given by
\begin{align*}
v(t) = \sigma_{\kappa^-/N}^{(1-t)}(\theta)\,\jmath^{1/N}_0 + \sigma_{\kappa^+/N}^{(t)}(\theta)\,\jmath_1^{1/N}.
\end{align*}
Here, for $\mu_0$-a.e.~$x\in\mms$ the functions $\smash{\kappa^\pm\colon[0,\theta]\to \R}$ designate the forward and backward potentials along the given $l$-geodesic $(T_t(x))_{t\in[0,1]}$ defined by the relations $\smash{\kappa^+(t\,\theta) := k\circ T_t(x)}$ and $\smash{\kappa^-(t\,\theta) := k\circ T_{1-t}(x)}$. The inherent \emph{distortion coeffi\-cients} $\smash{\sigma_{\kappa^+/N}^{(t)}(\theta)}$, discussed in detail in  \autoref{Def:Dist coeff} and \autoref{Def:dist coeff general k},  arise from the unique solution $\smash{u(t) := \sigma_{\kappa^+/N}^{(t)}(\theta)}$ of the ODE
\begin{align*}
0 = u''(t) + \frac{\kappa^+(t\,\theta)\,\theta^2}{N}\,u(t) = u''(t) + \frac{(k\circ T_t)\,\theta^2}{N}\,u(t)
\end{align*} 
with boundary data $u(0)=0$ and $u(1)=1$; analogously for $\kappa^-$. An integrated, thus more robust version of \eqref{Eq:Blubz} is precisely the estimate we use to phrase $\smash{\TCD_p^e(k,N)}$ in \autoref{Def:TCDe}. Compare also  with the proof of \autoref{Th:Pathwise}.

In the previous discussion, we have implicitly regarded $k$ as continuous. The lower semicontinuous case will be handled by approximation, recalling that every such function is the locally monotone limit of continuous functions, cf.~\eqref{Eq:kappan}.

\subsection*{Organization} In \autoref{Ch:Metric}, we develop our notion of globally hyperbolic, regular length metric measure spacetimes. The proofs of some results therein 
 which require not only $l$-geodesics but their merely causal analogs are outsourced to \autoref{Ch:Loose}. \autoref{Ch:TCD} introduces our variable timelike curvature-dimension condition, which is  preceded by a thorough discussion of the relevant distortion coefficients and succeeded by first elementary properties. In \autoref{Ch:Examples}, we provide examples of variable TCD spaces  obeying  the timelike convergence condition, the weak energy condition, and the null convergence condition, respectively.  In \autoref{Sec:Geometric inequ}, we show geometric inequalities. \autoref{Ch:Localization} develops the localization paradigm with respect to a general $1$-steep potential. In \autoref{Ch:Applications}, these insights are applied to improve all geometric inequalities from  \autoref{Sec:Geometric inequ} to their sharp form, provided a timelike essential nonbranching condition holds; the  Kantorovich duality formula for $\smash{\ell_1}$ required for the inherent localization by optimal transport is shown in \autoref{Ch:Rubinstein}.  \autoref{Ch:Applications} also proves \autoref{Th:FGALL} and \autoref{Th:HAW}. Finally, \autoref{Ch:TMCP} outlines the variable timelike measure-contraction property and its consequences.

\addtocontents{toc}{\protect\setcounter{tocdepth}{2}}

\section{Metric spacetimes}\label{Ch:Metric} We now describe our setting of \emph{metric spacetimes}. It is a version of the approach to Lorentzian (pre-)length spaces by Kunzinger--S\"amann \cite{kunzinger2018}, but differs from it in two major aspects. 
\begin{itemize}
\item It simplifies the axiomatization by taking into account the recent contributions \cite{ake2020,mccann+,minguzzi2019,minguzzi2023,minguzzi2022}. (We refer   to \cite{mccann+} for a similar simplification, and to \cite{beran2023} for a slightly different forthcoming approach.) 
\item It is tailored towards  the requirements of  \emph{global hyperbolicity} and a \emph{length} condition, which we assume almost from the outset, deriving all other desired properties from these; the convenient assumption of \emph{regularity} will be added a bit later.
\end{itemize}

Although an attempt to impose as few assumptions as possible has been made, some machinery needs to be developed, and many results from \cite{ake2020,burtscher,mintop,minguzzi2019,minguzzi2023,minguzzi2022} have to be reproven. If interested in a shortcut, we refer the reader to \autoref{Th:Equiv LLS} for an equivalent definition of globally hyperbolic, regular length metric spacetimes as proposed by \autoref{Def:LLS} and \autoref{Th:GHy}.

\subsection{Basic notions} Let $\mms$ be a space endowed with a  first-countable topology $\Top$. Our discussion will be accompanied by several more sophisticated properties of $\Top$ implied by our standing \autoref{Ass:GHLLS} (length metric spacetime with compact diamonds). Most notably, in \autoref{Cor:Hausdorff} it is shown to be locally compact and Hausdorff; in particular, $\Top$ is Polish if and only if it is second-countable. This will be assumed from \autoref{Sub:LOT}, but our framework can be partly  generalized if we only work with  probability measures living on  compact subsets of $\mms$; cf. \autoref{Th:Compact Polish}.

\subsubsection{Signed time separation function} We adopt the conventions
\begin{align}\label{Eq:Convention infty}
\infty + (-\infty) := -\infty + \infty := -\infty + (-\infty) := -\infty.
\end{align}

\begin{definition}[Signed time separation]\label{Def:signed time sep} A function $\smash{l\colon \mms^2 \to [0,\infty]\cup\{-\infty\}}$ is a  \emph{signed time separation function} if it has the following properties.
\begin{enumerate}[label=\textnormal{\alph*.}]
\item \textnormal{\textbf{Causality \textnormal{\cite[Def.~2.35]{kunzinger2018}}.}} It is nonnegative on the diagonal of $\mms^2$, and whenever two points $x,y\in \mms$ satisfy
\begin{align*}
\min\{l(x,y), l(y,x)\} > -\infty,
\end{align*}
we already have $x=y$.
\item \textnormal{\textbf{Continuity.}} It is upper  semicontinuous, and its positive part $\smash{l_+}$ is lower  semicontinuous.
\item \textnormal{\textbf{Reverse triangle inequality.}} For every $x,y,z\in\mms$,
\begin{align}\label{Eq:Reverse tau}
l(x,z) \geq l(x,y) + l(y,z).
\end{align}
\end{enumerate}
\end{definition}

The second property of causality is naturally called \emph{antisymmetry}. By \eqref{Eq:Reverse tau} and causality, we have $l(x,x) \in\{0,\infty\}$ for every $x\in\mms$.

The continuity of $\smash{l_+}$ has to be understood with respect to the extended  topology of the half-line $[0,\infty]$; we do not exclude the case $l(x,x) = \infty$ for some $x\in \mms$ a priori. Until \autoref{Cor:Finiteness l} --- which is  only used around \autoref{Le:BLMS} --- upper semicontinuity of $\smash{l_+}$ will not play any role.

The following remark can be skipped at first reading.

\begin{remark}[Possible generalizations] In two aspects, it is interesting to study whether our assumptions made thus far  can be weakened. To make the exposition less cumbersome, though, we leave these possible generalizations to future work.
\begin{itemize}
\item \textbf{Sequential topology.} One could merely assume $\Top$ to be \emph{sequential}, i.e.~it is completely determined by convergence of sequences; see \cite{engelking} for details. Every (quotient of a) first-countable space is sequential, and unlike the latter property, first-countability is not stable under quotients in general. That way, one could always guarantee antisymmetry by taking a quotient \cite[Lem.~1]{mccann+}. On the other hand, sequential topologies come with various subtleties to be addressed with care. For instance, the product of two  sequential topologies is not necessarily sequential; neither does compactness imply sequential compactness in  this generality, nor vice versa.
\item \textbf{Upper semicontinuity near the diagonal.} 
It seems merely assuming upper semicontinuity of $l$ near the diagonal of $\smash{\mms^2}$ suffices. For smooth spacetimes, this  follows from strong causality \cite[Cor.~3.5]{beem1979}, in turn implied by compact diamonds in all physically relevant cases \cite[Thm.~2.8]{hounnonkpe2019}. In our approach, it appears possible to prove the affected results \autoref{Le:BLMS}, \autoref{Th:Limit curve theorem}, and \autoref{Le:Upper semiconti} first in small chronological diamonds, then get upper semicontinuity of $l$ as in \cite[Thm.~3.28]{kunzinger2018}, and finally redo the proofs of the previous results globally. \autoref{Cor:Finiteness l} and \autoref{Th:Compact Polish}  
only need continuity of $\smash{l_+}$ near the diagonal. 
\end{itemize}
\end{remark}

Let such a signed time separation function $l$ be fixed in the sequel. For brevity, every property of the tuple $(\mms,l)$ will be formulated for $\mms$ only.

\subsubsection{Causality and chronology} We define a relation $\leq$ on $\mms$ by
\begin{align*}
x\leq y \quad :\Longleftrightarrow\quad l(x,y)\geq 0
\end{align*}
and call it \emph{causality}. \autoref{Def:signed time sep} ensures the reflexivity of $\leq$ and its antisymmetry, i.e.~if $x,y\in\mms$ satisfy $x\leq y$ and $y\leq x$, then necessarily $x=y$. The reverse triangle inequality \eqref{Eq:Reverse tau} implies transitivity of $\leq$. These observations are summarized by saying the triple $(\mms,\Top,\leq)$ is a topological ordered space \cite{minguzzi2023,nachbin}.  If $x\leq y$ we term $x$ to lie in the \emph{causal past} of $y$, and $y$ to lie in the \emph{causal future} of $x$; a future  orientation is implicit in this and subsequent terminology. We define
\begin{itemize}
\item the \emph{causal future} of $x$ by $J^+(x) := \{l(x,\cdot)\geq 0\}$,
\item the \emph{causal past} of $y$ by $J^-(y) := \{l(\cdot,y)\geq 0\}$, and
\item the \emph{causal diamond} between $x$ and $y$ by $J(x,y) := J^+(x)\cap J^-(y)$. 
\end{itemize}
Given any set $X\subset \mms$, we define
\begin{align*}
J^+(X) := \bigcup_{x\in X} J^+(x),
\end{align*}
and analogously $J^-(Y)$ for a given set $Y\subset \mms$. We also define
\begin{align*}
J(X,Y) := J^+(X) \cap J^-(Y)
\end{align*}
and call this  a \emph{causal emerald} provided $X$ and $Y$ are compact.

We will write $x<y$ if $x\leq y$ yet $x\neq y$.

Furthermore, we define a relation $\ll$ on $\mms$ by
\begin{align*}
x\ll y\quad :\Longleftrightarrow\quad l(x,y)>0.
\end{align*}
The induced  relation $\ll$ is called \emph{chrono\-lo\-gy}. By \eqref{Eq:Reverse tau}, it is transitive. All notions from the preceding  paragraph are defined analogously for $\ll$ instead of $\leq$, where every terminological occurrence of ``causal'' is replaced by ``chronological''. Define
\begin{itemize}
\item the \emph{chronological future} of $x$ by $I^+(x) := \{l(x,\cdot)> 0\}$,
\item the \emph{chronological past} of $y$ by $I^-(y) := \{l(\cdot,y)> 0\}$, and
\item the \emph{chronological diamond} between $x$ and $y$ by $I(x,y) := I^+(x)\cap I^-(y)$. 
\end{itemize}
Given any sets $X$ and $Y$ as above, the open sets $I^+(X)$, $I^-(Y)$, and $I(X,Y)$ are defined analogously.

These properties turn the triple $(\mms,\ll,\leq)$ into a \emph{causal space} \cite[Def.~2.1]{kunzinger2018}, as originally studied in a slightly different form by Kronheimer--Penrose \cite{kronheimer}.

Let us collect the following elementary properties, cf.~\cite[Lems.~2.10, 2.12, Prop. 2.13]{kunzinger2018}. The push-up property follows from \eqref{Eq:Reverse tau}, while openness is implied by lower semicontinuity of $\smash{l_+}$.

\begin{lemma}[Push-up and openness of chronology]\label{Le:Pushup openness} The two relations $\leq$ and $\ll$ have the following properties for every $x,y,z\in\mms$ and every $X\subset\mms$.
\begin{enumerate}[label=\textnormal{\textcolor{black}{(}\roman*\textcolor{black}{)}}]
\item \textnormal{\textbf{Push-up.}} If $x\leq y\ll z$ or $x\ll y\leq z$, then $x\ll z$.
\item \textnormal{\textbf{Openness.}} The set $\smash{I^\pm(X)}$ is open. Moreover, $\ll$ is an open relation in the sense that $\smash{\{l>0\}}$ is an open subset of $\smash{\mms^2}$.
\end{enumerate}
\end{lemma}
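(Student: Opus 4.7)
The plan is to derive both assertions directly from the two defining properties of a signed time separation function listed in \autoref{Def:signed time sep}, namely the reverse triangle inequality \eqref{Eq:Reverse tau} together with the lower semicontinuity of the positive part $l_+$. No further structure on $(\mms,\Top)$ will be needed.

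For the push-up assertion (i), I would start from the case $x\leq y\ll z$. Here $l(x,y)\in[0,\infty]$ and $l(y,z)\in(0,\infty]$; in particular both are bounded below by $0$ and at least one is strictly positive. Since neither equals $-\infty$, the convention \eqref{Eq:Convention infty} does not cause any collapse, and the sum $l(x,y)+l(y,z)$ is an unambiguously defined element of $(0,\infty]$. Applying \eqref{Eq:Reverse tau} gives $l(x,z)\geq l(x,y)+l(y,z)>0$, which is precisely $x\ll z$. The symmetric case $x\ll y\leq z$ is handled identically. The only point to double-check is that the extended-arithmetic convention in \eqref{Eq:Convention infty} does not spoil additivity, which is why I would briefly flag that both summands are non-negative.

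For the openness assertion (ii), the core observation is that $\{l>0\}$ and $\{l_+>0\}$ coincide as subsets of $\mms^2$: on $\{l\geq 0\}$ the two functions agree, whereas on $\{l=-\infty\}$ both are zero respectively non-positive. Since $l_+\colon\mms^2\to[0,\infty]$ is lower semicontinuous (by \autoref{Def:signed time sep}b) with respect to the extended topology on $[0,\infty]$, the preimage $\{l_+>0\}$ of the open set $(0,\infty]$ is open in the product topology $\Top\times\Top$. This yields the second statement. For the first, I would fix any $x\in X$ and note that the slice map $y\mapsto l_+(x,y)$ is lower semicontinuous on $\mms$, being the composition of $l_+$ with the continuous embedding $y\mapsto(x,y)$; hence $I^+(x)=\{l_+(x,\cdot)>0\}$ is open, and so is the union $I^+(X)=\bigcup_{x\in X}I^+(x)$. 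The past version $I^-(X)$ follows by symmetry.

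I do not anticipate any genuine obstacle here; the only mild care concerns the extended-real arithmetic in (i) and the precise meaning of $l_+$ when $l$ takes the value $-\infty$ in (ii). Both are dispatched by the explicit conventions already fixed in \autoref{Def:signed time sep} and \eqref{Eq:Convention infty}.
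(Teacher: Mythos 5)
Your proof is correct and follows exactly the approach the paper indicates (push-up from the reverse triangle inequality, openness from lower semicontinuity of $l_+$); the paper simply states these implications without spelling out the details you supply.
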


Here, the set $X$ does not need to be open, or to have any other regularity.

\subsection{Length metric spacetimes} We now set up a condensed and more flexible notion of length metric spacetimes akin to Lorentzian length spaces \cite[Sec.~3.4]{kunzinger2018}. When coupled with compact diamonds,  as we explore below it implies almost all desired causal properties of $\mms$ except regularity, cf.~\autoref{Sec:Regularity}.

In the following two subsections, let $a<b$ and $c<d$ be real numbers.

\subsubsection{Causal curves and loops} 

\begin{definition}[Causal character]
\label{Def:Curves} A curve\footnote{We use the terms \emph{curve} or \emph{path} synonymously for a map whose domain is a convex subset of $\R$. In particular, we do not require its continuity a priori. If relevant, (possible dis)continuity is always clear from our terminology or stressed explicitly.} $\gamma\colon [a,b]\to \mms$ is called
\begin{enumerate}[label=\textnormal{\alph*.}]
\item \emph{causal} if it is continuous, and $\gamma_s\leq \gamma_t$ for every $a\leq s< t \leq b$,
\item \emph{strictly causal} if it is causal yet nowhere constant, i.e.~$\gamma_s < \gamma_t$ for every $a\leq s< t\leq b$, 
\item \emph{timelike} if it is continuous, and $\gamma_s\ll\gamma_t$ for every $a\leq s<t\leq b$, and
\item \emph{null} if it is causal yet $\gamma_s \not\ll\gamma_t$ for every $a\leq s<t\leq b$.
\end{enumerate}
\end{definition}

Any of these causality properties may be referred to as the \emph{causal character} of $\gamma$. If a curve $\gamma\colon [a,b]\to \mms$ has one of the mentioned causal characters yet is not required  to be continuous, we term it \emph{rough causal}, \emph{rough strictly causal}, \emph{rough time\-like}, or \emph{rough null}, respectively. Moreover, any 
nonconstant  path $\gamma\colon[a,b]\to\mms$ with $\gamma_a=\gamma_b$ is called  a \emph{loop}.

Since we have not excluded the possibility $l(x,x) > 0$ for some $x\in\mms$ yet, it is not clear at this point that every timelike curve is strictly causal.

Lastly, by reverse triangle inequality \eqref{Eq:Reverse tau} a curve $\gamma\colon[a,b]\to\mms$ is rough null if and only if it is rough causal and $\gamma_a\not\ll \gamma_b$.

\subsubsection{Length} 

\begin{definition}[Length functional] The \emph{length} of a curve $\gamma\colon [a,b]\to\mms$ is 
\begin{align*}
\Len_l(\gamma) := \inf \sum_{i=1}^m l(\gamma_{t_{i-1}}, \gamma_{t_i}),
\end{align*}
where the infimum is taken over all $m\in\N$ and all partitions $a = t_0 < t_1 < \dots < t_{m-1} < t_m = b$ of $[a,b]$ of cardinality $m+1$.
\end{definition}

These notions can be extended to intervals of the form $(a,b)$, $[a,b)$, and $(a,b]$, including unbounded ones, using limits. 

Note that $\smash{\Len_l(\gamma) \geq 0}$ if and only if $\gamma\colon[a,b]\to\mms$ is rough causal by \eqref{Eq:Convention infty}.

The following properties are proven in exactly the same way as \cite[Lems.~2.25, 2.27, 2.28]{kunzinger2018}. As usual, by a \emph{reparametrization} of a curve $\gamma\colon [a,b] \to \mms$ we mean a map $\sigma\colon [c,d]\to \mms$ such that $\sigma = \gamma\circ\phi$, where $\phi\colon [c,d] \to [a,b]$ is \emph{continuous} and strictly increasing. Reparametrization does not change the causal character of a rough curve \cite[Lem.~2.27]{kunzinger2018}.

\begin{lemma}[Concatenation and reparameterization invariance]
\label{Le:Additivity} The following properties hold.
\begin{enumerate}[label=\textnormal{(\roman*)}]
\item \textnormal{\textbf{Additivity.}} If $\gamma\colon[a,b]\to \R$ is a rough causal curve and $a<c<b$, then
\begin{align*}
\Len_l(\gamma) = \Len_l(\gamma\big\vert_{[a,c]}) + \Len_l(\gamma\big\vert_{[c,b]}).
\end{align*}
\item \textnormal{\textbf{Reparametrization invariance.}} The functional $\Len_l$ is invariant under reparametrizations of rough causal curves. 
\end{enumerate}
\end{lemma}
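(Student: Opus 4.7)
My plan is to establish (i) by the standard refinement-then-split argument. First I would note that rough causality of $\gamma$ ensures $\gamma_{t_{i-1}}\le\gamma_{t_i}$ in every partition of $[a,b]$, so each summand $l(\gamma_{t_{i-1}},\gamma_{t_i})$ lies in $[0,\infty]$ and the conventions \eqref{Eq:Convention infty} produce no $\infty+(-\infty)$ pathology. Next, inserting an additional node $s\in(t_{i-1},t_i)$ into any partition $a=t_0<\dots<t_m=b$ can only lower the corresponding partition sum: the reverse triangle inequality \eqref{Eq:Reverse tau} applied to $\gamma_{t_{i-1}}\le\gamma_s\le\gamma_{t_i}$ gives
\begin{align*}
l(\gamma_{t_{i-1}},\gamma_{t_i})\ge l(\gamma_{t_{i-1}},\gamma_s)+l(\gamma_s,\gamma_{t_i}).
\end{align*}
Hence the infimum defining $\Len_l(\gamma)$ is unchanged if we restrict attention to partitions that contain $c$; such partitions are in bijection with pairs consisting of a partition of $[a,c]$ and a partition of $[c,b]$, and the total sum decomposes as the sum of the two subpartition sums. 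Passing to the infimum on each side produces $\Len_l(\gamma)=\Len_l(\gamma\big\vert_{[a,c]})+\Len_l(\gamma\big\vert_{[c,b]})$.

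For (ii), I would write $\sigma=\gamma\circ\phi$ with $\phi\colon[c,d]\to[a,b]$ continuous and strictly increasing. For $\sigma$ to genuinely reparametrize all of $\gamma$ (rather than a sub-curve), I take $\phi(c)=a$ and $\phi(d)=b$; being a continuous strictly monotone surjection between compact intervals, $\phi$ is then a homeomorphism. The map $s_i\mapsto\phi(s_i)$ therefore induces a bijection between partitions $c=s_0<\dots<s_m=d$ of $[c,d]$ and partitions $a=t_0<\dots<t_m=b$ of $[a,b]$, and since $\sigma_{s_i}=\gamma_{\phi(s_i)}=\gamma_{t_i}$ the corresponding partition sums are literally equal. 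Taking infima on both sides yields $\Len_l(\sigma)=\Len_l(\gamma)$.

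I do not foresee any substantial obstacle: both claims are immediate consequences of the reverse triangle inequality together with elementary partition manipulations. The only small point of care is ensuring every partition sum is unambiguously defined in $[0,\infty]$, which the rough causality hypothesis handles so that the conventions \eqref{Eq:Convention infty} never intervene.
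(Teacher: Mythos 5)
Your proof is correct and follows the standard partition-refinement argument; the paper itself does not spell this out but simply cites \cite[Lems.~2.25, 2.27, 2.28]{kunzinger2018}, where the same elementary argument appears. Both points of care you flag — that rough causality keeps all summands in $[0,\infty]$ so the $-\infty$ conventions never intervene, and that a reparametrization must be onto $[a,b]$ for partition sums to transport unchanged — are exactly the details one needs, and you handle them correctly.
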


We will thus mostly restrict our attention to rough curves defined on $[0,1]$.

\subsubsection{Imposing a length property} Every rough causal curve $\gamma\colon[0,1]\to \mms$ satisfies 
\begin{align*}
\Len_l(\gamma) \leq l(\gamma_0,\gamma_1).
\end{align*}
In general, this inequality is strict, but one can at least ask these quantities to be arbitrarily close to each other once the endpoints are fixed. This leads to the Lo\-rentz\-ian analog of length metric spaces \cite{burago2003}, see also \cite{kunzinger2018, minguzzi2022}.

\begin{definition}[Length metric spacetime]\label{Def:LLS} 
A set $\mms$ equipped with a first countable topology $\Top$  and a signed time separation function $l$ as above is termed  \emph{length metric spacetime} if the following statements hold.
\begin{enumerate}[label=\textnormal{\alph*.}]
\item \textnormal{\textbf{Nontrivial chronology.}} Every point $x\in \mms$ is the midpoint of a timelike curve $\gamma\colon[0,1]\to\mms$, i.e.~$x = \gamma_{1/2}$.
\item \textnormal{\textbf{Length property.}} If $x,y\in\mms$ satisfy $x\ll y$, then
\begin{align*}
l(x,y) = \sup \Len_l(\gamma),
\end{align*}
where the supremum is taken over all strictly causal curves $\gamma\colon[0,1]\to \mms$ such that $\gamma_0=x$ and $\gamma_1=y$. 
\end{enumerate}
\end{definition}

The subtle properties of continuity and nowhere constancy of  admissible curves in the length property will be  crucial  for us. Inter alia, chronology will be implied, i.e.~the vanishing of $l$ on the diagonal of $\mms^2$; cf.~\autoref{Cor:Causality}. 

The nontrivial chronology condition implies $\smash{I^\pm(x) \neq \emptyset}$ for every $x\in\mms$. This is one requirement for localizability in \cite[Def.~3.16]{kunzinger2018}. Unlike \cite[Def.~3.1]{kunzinger2018}, we  merely assume  chronologically related points to be connected by strictly causal (as opposed to timelike) curves, and we do not assume any a priori connectedness between merely causally related points. These will be made up for by our assumption of regularity in \autoref{Sec:Regularity}, as well as \autoref{Th:Ex geo}, nontrivial chronology, and continuity of $\smash{l_+}$, respectively.

\begin{remark}[First countability of the chronological topology]\label{Def:Inter} 
In general the topology $\Top$ is not uniquely determined by the signed time separation $l$.
A choice of $\Top$ which is determined by $l$ is the \emph{chronological topology}  
\cite[Def.~2.4
]{kunzinger2018}, also called the {\em order topology} associated with $\ll$, 
which is the coarsest topology $\Inter$ on $\mms$ containing the sets $\smash{\{I^\pm(x) : x\in \mms\}}$.
The nontrivial chronology of \autoref{Def:LLS} ensures first countability of $\Inter$,
since for each $x \in X$ (say, the midpoint $\gamma_{1/2}$ of a timelike curve), a local base for the topology at $x$ is then given by
$\smash{\{I(\gamma_{1/2 - 1/2k},
\gamma_{1/2+1/2k}) :k \in\N\}}$.
\end{remark}

We will exclusively deal with length metric spacetimes satisfying the following.

\begin{assumption}[First standing assumption]\label{Ass:GHLLS} From now on, we assume  $\mms$ to be a length metric spacetime with compact diamonds.
\end{assumption}

Here and in the sequel, we say $\mms$ has
\begin{itemize}
\item \emph{compact diamonds} if $J(x,y)$ is compact for every $x,y\in\mms$, and
\item \emph{compact emeralds} if $J(X,Y)$ is compact for every compact $X,Y\subset\mms$.
\end{itemize}

As in the smooth setting \cite[Lem.~14.10]{oneill1983}, \autoref{Ass:GHLLS} is not well-behaved under restriction to subsets, especially causal diamonds. However, this will not be an issue for us since we can always locally use all relevant properties that hold globally, cf.~\autoref{Re:Restriction to subsets}.

\subsection{Causality theory for length metric spacetimes} Now we study numerous consequences of \autoref{Ass:GHLLS}. We address
\begin{itemize}
\item causality and chronology properties along with
\item global hyperbolicity and
\item refined properties of the topology $\Top$,
\item existence of generalized time functions,
\item the causal ladder, especially with regards to strong causality, and
\item a limit curve theorem.
\end{itemize}
The existence of maximizers in the length property is postponed to  \autoref{Sec:Geodesy}.

\subsubsection{Causality, chronology, and global hyperbolicity} We will use the following notations for  ``monotone'' convergence. For a sequence $(x_n)_{n\in\N}$ in $\mms$ and $x\in\mms$, we write $x_n \downarrow x$ as $n\to\infty$ or $x_n\uparrow x$ as $n\to\infty$, respectively, if  $(x_n)_{n\in\N}$ converges to $x$ and, respectively, $x_{n+1} \ll x_n$ or $x_n \ll x_{n+1}$ for every $n\in\N$.

A cornerstone for our discussion is the following result from \cite[Lem.~2.18]{ake2020},  called the ``sequence lemma'' therein.

\begin{lemma}[Timelike monotone approximation]
\label{Le:Sequ lemma} For every $x\in \mms$, there exist sequences $(a_i)_{i\in\N}$ and $\smash{(b_i)_{i\in\N}}$ such that $a_i\uparrow x$ as $i\to\infty$ and $b_i\downarrow x$ as $i\to\infty$.
\end{lemma}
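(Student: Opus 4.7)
The plan is to extract the desired sequences directly from the nontrivial chronology axiom of \autoref{Def:LLS}. That axiom furnishes, for the given $x$, a timelike curve $\gamma\colon[0,1]\to\mms$ with $\gamma_{1/2}=x$. By \autoref{Def:Curves}, $\gamma$ is continuous (with respect to $\Top$) and satisfies $\gamma_s\ll\gamma_t$ whenever $0\leq s<t\leq 1$. These are precisely the two ingredients one needs.

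For the past-approximation, fix any strictly increasing sequence $(s_i)_{i\in\N}$ in $[0,1/2)$ with $s_i\to 1/2$, for instance $s_i := 1/2 - 2^{-i-1}$, and set $a_i := \gamma_{s_i}$. Timelikeness applied to $s_i<s_{i+1}$ yields $a_i\ll a_{i+1}$, while continuity of $\gamma$ gives $a_i\to \gamma_{1/2}=x$ in $\Top$. By the convergence conventions recalled just before the statement, this means $a_i\uparrow x$ as $i\to\infty$. Dually, pick a strictly decreasing sequence $(t_i)_{i\in\N}$ in $(1/2,1]$ with $t_i\to 1/2$, and set $b_i := \gamma_{t_i}$; timelikeness applied to $1/2<t_{i+1}<t_i$ gives $b_{i+1}\ll b_i$ and continuity gives $b_i\to x$, so $b_i\downarrow x$.

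There is essentially no obstacle: the statement is an unpacking of nontrivial chronology together with the definition of a timelike curve. None of the other components of \autoref{Ass:GHLLS} (length property, compact diamonds, global hyperbolicity) are invoked, nor are upper or lower semicontinuity of $l$ or $l_+$; the fact that the limits identified are genuine $\Top$-limits is built into the requirement that the witnessing curve $\gamma$ be continuous in $\Top$. The only minor subtlety worth flagging is that the axiom provides a timelike curve with $\gamma_{1/2}=x$ but does not a priori rule out $\gamma$ being a loop at $x$; since we only need to evaluate $\gamma$ on $[0,1/2)$ and $(1/2,1]$ and argue by the strict ordering of parameters, this causes no issue whatsoever.
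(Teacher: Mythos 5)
Your proof is correct and follows the same approach as the paper: invoke nontrivial chronology to obtain a timelike curve $\gamma$ with $\gamma_{1/2}=x$, then sample $\gamma$ along parameter sequences approaching $1/2$ from below and above, using timelikeness of $\gamma$ for the chronological monotonicity and continuity of $\gamma$ for the $\Top$-convergence. The paper's only difference is the cosmetic choice of sampling points $1/2\mp 1/(2i)$, and it likewise flags (in a remark immediately after the proof) the possibility that the resulting sequences could be constant if $l(x,x)>0$, which is the same subtlety you note at the end.
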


\begin{proof} Owing to the nontrivial chronology from \autoref{Def:LLS}, let $\gamma\colon[0,1]\to\mms$ be a timelike curve with midpoint $x$. Then setting $\smash{a_i := \gamma_{1/2-1/2i}}$ and $\smash{b_i := \gamma_{1/2+1/2i}}$ for $i\in\N$ leads to the desired sequences by continuity.
\end{proof}

Note again that the sequences constructed above might be constant if $l(x,x) > 0$ for the given point $x$.

The following two consequences of \autoref{Le:Sequ lemma} are variants of \cite[Thm.~4.12]{minguzzi2019} and \cite[Prop.~2.3]{hounnonkpe2019}, whose proofs we adopt with no essential change. The following causal closedness is clear by upper semicontinuity of $l$, but we prefer to give an independent proof which does not rely on this property.

\begin{proposition}[Causal closedness {\cite[Def.~3.4]{kunzinger2018}}]\label{Pr:Closed} The relation $\leq$ is closed in the product topology $\smash{\Top^2}$ of $\smash{\mms^2}$.
\end{proposition}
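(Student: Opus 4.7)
The plan is to prove $x\leq y$ by trapping both coordinates of the limit pair inside a compact causal diamond supplied by the sequence lemma, and then applying compactness once more along a timelike sequence terminating at $y$. No use of upper semicontinuity of $l$ will be required.

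First, \autoref{Le:Sequ lemma} furnishes sequences $(a_j)_{j\in\N}$ and $(b_i)_{i\in\N}$ with $a_j\uparrow x$ and $b_i\downarrow y$, satisfying (by construction) $a_j\ll x$, $y\ll b_i$, and with the $b_i$ lying on a single timelike curve $\gamma\colon[0,1]\to\mms$ such that $y=\gamma_{1/2}$ and $b_i=\gamma_{1/2+1/(2i)}$. Next, by openness of chronology (\autoref{Le:Pushup openness}), $I^+(a_j)$ and $I^-(b_i)$ are open neighborhoods of $x$ and $y$ respectively, so for each fixed $i,j$ one has $a_j\ll x_n$ and $y_n\ll b_i$ for all sufficiently large $n$. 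Combined with the hypothesis $x_n\leq y_n$ and push-up, this yields the chain
\[
a_j\ll x_n\leq y_n\ll b_i
\]
for large $n$, trapping both $x_n$ and $y_n$ in the diamond $J(a_j,b_i)$, which is compact by \autoref{Ass:GHLLS}. Since $x_n\to x$ and $y_n\to y$, compactness forces $x,y\in J(a_j,b_i)$ for every $i,j$; in particular $x\leq b_i$ for all $i$.

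To conclude, I would specialize to the curve $\gamma$ underlying $(b_i)$: the timelikeness of $\gamma$ and the inequality $1/2+1/(2i)\leq 1$ yield $b_i\leq\gamma_1=b_1$, so each $b_i$ lies in the compact diamond $J(x,b_1)$. Since $b_i\to y$, compactness delivers $y\in J(x,b_1)$, i.e.\ $x\leq y$. The main subtlety I anticipate is purely topological: passing from $x_n\in J(a_j,b_i)$ and $x_n\to x$ to $x\in J(a_j,b_i)$ tacitly requires the diamond to be (sequentially) closed. This is immediate once Hausdorffness of $\Top$ is in hand (\autoref{Cor:Hausdorff}); prior to that, one extracts from the compact set $J(a_j,b_i)$ a convergent subsequence and identifies its limit with $x$ via first-countability. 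Beyond this bookkeeping, the argument rests entirely on compactness of diamonds, openness of chronology, and the monotone approximation provided by \autoref{Le:Sequ lemma}.
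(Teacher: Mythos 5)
Your argument uses the same toolkit as the paper's proof --- sequence lemma, openness of chronology, push-up, and compact diamonds --- and differs mainly in organization: the paper first establishes closedness of $J^\pm(x)$ pointwise and then deduces closedness of the relation $\leq$, whereas you trap $x_n$ and $y_n$ simultaneously in a single diamond $J(a_j,b_i)$. The overall route is otherwise essentially the same.

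The subtlety you flag at the end is a genuine gap, and neither of your two proposed patches closes it. Invoking \autoref{Cor:Hausdorff} is circular: in the paper Hausdorffness of $\Top$ is proven \emph{as a consequence} of the present proposition, via \autoref{Th:GHy} (which needs both \autoref{Pr:Closed} and \autoref{Cor:K-GH}) together with \cite[Thm.~3.3]{minguzzi2023}. The fallback via compactness plus first-countability also fails: you can extract a subsequence $x_{n_k}\to x'\in J(a_j,b_i)$, but without a separation property nothing forces $x'=x$ --- limits need not be unique in a merely first-countable space (e.g.~the line with doubled origin, in which a compact interval on one branch is not closed). The same issue reappears in your final step when you assert ``compactness delivers $y\in J(x,b_1)$.'' The difficulty is introduced by your opening promise that ``no use of upper semicontinuity of $l$ will be required'': upper semicontinuity of $l$, which is part of \autoref{Def:signed time sep}, makes $J^+(a_j)$ and $J^-(b_i)$ closed as superlevel sets of $l$, after which your sandwich argument closes immediately. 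The paper flags this option before the proof and opts for an ``independent'' argument --- but that argument itself rests on the assertion that compactness of $J(x,y')$ renders it closed, which in this separation-poor setting also requires justification. If you wish to keep the no-USC commitment you would need to supply that justification explicitly; otherwise simply invoke upper semicontinuity of $l$ and the proof is complete.
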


\begin{proof} We first claim closedness of $\smash{J^\pm(x)}$ for every $x\in\mms$. We focus on $\smash{J^+(x)}$, the proof for closedness of $\smash{J^-(x)}$ is analogous. Given any  $y$
in the closure $\smash{\bar{J}^+(x)}$ of $\smash{{J}^+(x)}$, there exists a directed net $\smash{(y_\alpha)_{\alpha\in A}}$ in $\smash{J^+(x)}$ converging to $y$. Let $y'\in I^+(y)$ be as provided by nontrivial chronology. By compactness of diamonds, $\smash{J(x,y')}$ closed. By the openness asserted by \autoref{Le:Pushup openness}, we have $\smash{y_\alpha\in J^+(x)\cap I^-(y')}$ for sufficiently large $\alpha\in A$. Since the latter is a subset of $J(x,y')$, we obtain $\smash{y\in J^+(x)}$.

To prove closedness of $\leq$, let $(x_\alpha,y_\alpha)_{\alpha\in A}$ be a directed net in $\smash{\mms^2}$ converging to $\smash{(x,y)\in\mms^2}$ such that $x_\alpha\leq y_\alpha$ for every $\alpha\in A$. Let $\smash{(b_i)_{i\in\N}}$ be a sequence with $b_i\downarrow y$ as $i\to\infty$ as given by \autoref{Le:Sequ lemma}. Given any $i\in\N$, we obtain $\smash{y_\alpha\in I^-(b_i)}$ for sufficiently large $\alpha\in A$ by \autoref{Le:Pushup openness}. The same result implies $x_\alpha \in I^-(b_i)$ for sufficiently large $\alpha\in A$, and therefore $\smash{x\in J^-(b_i)}$ by the preceding paragraph; in other words, $\smash{b_i\in J^+(x)}$. Letting $i\to\infty$ and using closedness of $\smash{J^+(x)}$ yields $\smash{y\in J^+(x)}$, or equivalently $x\leq y$.
\end{proof}


\begin{corollary}[Compactness of emeralds]\label{Cor:K-GH} 
The space $\mms$ has compact emeralds. 
\end{corollary}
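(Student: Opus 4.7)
My plan is to sandwich $J(X,Y)$ between a closed set and a compact set in $\mms$: I will first construct a compact superset $K$ built from finitely many causal diamonds, then argue that $J(X,Y)$ is closed, so as a closed subset of $K$ it must be compact (this implication does not require Hausdorffness of $\Top$, which has not been established at this point in the excerpt).

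For the containment, I would invoke the nontrivial chronology via \autoref{Le:Sequ lemma} to pick, for every $x \in X$, a point $a_x \ll x$, and for every $y \in Y$, a point $b_y \gg y$. By the openness assertion in \autoref{Le:Pushup openness}, the families $\{I^+(a_x)\}_{x \in X}$ and $\{I^-(b_y)\}_{y \in Y}$ are open covers of the compact sets $X$ and $Y$, respectively, so finite subcovers $X \subset \bigcup_{i=1}^m I^+(a_i)$ and $Y \subset \bigcup_{j=1}^n I^-(b_j)$ exist. Given an arbitrary $z \in J(X,Y)$, witnesses $x \in X$, $y \in Y$ with $x \leq z \leq y$, and indices $i,j$ with $x \in I^+(a_i)$ and $y \in I^-(b_j)$, the push-up property of \autoref{Le:Pushup openness} yields $a_i \ll z \ll b_j$, so $z \in J(a_i, b_j)$. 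Therefore
\begin{equation*}
J(X,Y) \subset K := \bigcup_{i \leq m,\ j \leq n} J(a_i, b_j),
\end{equation*}
which is a finite union of compact diamonds by \autoref{Ass:GHLLS}, hence compact.

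For closedness, it suffices to show $J^+(X)$ and $J^-(Y)$ are closed, since $J(X,Y) = J^+(X) \cap J^-(Y)$. I will argue for $J^+(X)$ using nets, exploiting closedness of $\leq$ from \autoref{Pr:Closed}. Let $(z_\alpha)_{\alpha \in A}$ be a net in $J^+(X)$ converging to some $z \in \mms$, and choose $x_\alpha \in X$ with $x_\alpha \leq z_\alpha$. Compactness of $X$ lets me pass to a subnet for which $x_\alpha \to x \in X$, so $(x_\alpha, z_\alpha) \to (x,z)$ in $\Top^2$. Since the graph of $\leq$ is closed, $x \leq z$, so $z \in J^+(X)$. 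The argument for $J^-(Y)$ is symmetric. Combining with the previous paragraph, $J(X,Y)$ is a closed subset of the compact set $K$ and is therefore compact.

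I do not anticipate a genuine obstacle: the argument is the standard causality-theoretic upgrade from compact diamonds to compact emeralds, the only subtlety being that nontrivial chronology (rather than strong causality or a metric structure) is what permits the cover step, and that closedness of $\leq$ in the product topology---together with compactness of the endpoint sets---is what drives the closedness step without appealing to Hausdorffness.
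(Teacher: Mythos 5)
Your proof is correct and follows the same strategy as the paper: cover $J(X,Y)$ by finitely many compact causal diamonds $J(a_i,b_j)$ via nontrivial chronology, openness of $I^\pm$, and push-up, then conclude by observing that $J(X,Y)$ is closed and a closed subset of a compact set is compact. The paper reduces first to the symmetric case $J(C,C)$ and outsources closedness to references on closed ordered spaces, whereas you work directly with $J(X,Y)$ and supply a self-contained net argument from \autoref{Pr:Closed} together with compactness of $X$; this is the same idea, just spelled out rather than cited.
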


\begin{proof} By \autoref{Pr:Closed} and \cite[Thm.~3.3]{minguzzi2023} it suffices to prove compactness of $J(C,C)$ if $C\subset\mms$ is compact. Closedness of $J(C,C)$ follows from \autoref{Pr:Closed} and the theory of closed ordered spaces \cite{nachbin}, cf.~\cite[Thm.~4.12]{minguzzi2019}. 

The rest follows as for  \cite[Prop.~2.3]{hounnonkpe2019}. Given any $x\in C$, by nontrivial chronology there exist $a,b\in \mms$ with $c\in I(a,b)$. By \autoref{Le:Pushup openness}, these diamonds are an open cover of $C$, hence $\smash{C\subset\bigcup_{i=1}^n I(a_n,b_n)}$ for some $n\in\N$ and some $a_1,b_1,\dots,a_n,b_n\in\mms$. Taking causal pasts and causal futures of both sides of the preceding inclusion yields $J(C,C) \subset \smash{\bigcup_{i=1}^n J(a_n,b_n)}$. Therefore, as a closed subset of a compact subset, the causal diamond $J(C,C)$ is itself compact.
\end{proof}

As a remarkable consequence of \autoref{Pr:Closed} and \autoref{Cor:K-GH}, all length metric spacetimes $\mms$ having compact diamonds are \emph{globally hyperbolic} (after the modification of the classical definition of this property suggested in \cite{mintop}). 

\begin{theorem}[Global hyperbolicity]\label{Th:GHy} \autoref{Ass:GHLLS} implies $\mms$ is globally hyperbolic, i.e.~it satisfies the following two properties. 
\begin{enumerate}[label=\textnormal{(\roman*)}]
\item The relation $\leq$ is a closed order.
\item The causally convex hull $J(C,C)$ of every compact set $C\subset\mms$ is compact.
\end{enumerate}
\end{theorem}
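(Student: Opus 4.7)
The plan is to observe that both assertions of the theorem are essentially repackagings of results already established in the excerpt, so the proof reduces to a brief assembly. I would not expect a serious obstacle here; the genuine work has already been absorbed by \autoref{Pr:Closed} and \autoref{Cor:K-GH}, which in turn relied on the compact diamonds hypothesis, nontrivial chronology via \autoref{Le:Sequ lemma}, and push-up/openness from \autoref{Le:Pushup openness}.

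For part (i), I would first recall that the relation $\leq$ defined by $x\leq y \Leftrightarrow l(x,y)\geq 0$ is a partial order: reflexivity follows from the nonnegativity of $l$ on the diagonal (part of the causality axiom in \autoref{Def:signed time sep}), antisymmetry follows from the second clause of the same causality axiom (if $l(x,y),l(y,x)\geq 0$ then in particular both are $>-\infty$, forcing $x=y$), and transitivity follows directly from the reverse triangle inequality \eqref{Eq:Reverse tau}. Closedness of $\leq$ as a subset of $\mms^2$ with its product topology is exactly the content of \autoref{Pr:Closed}, completing (i).

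For part (ii), this is a direct specialization of \autoref{Cor:K-GH}: applying that corollary with $X=Y=C$ yields the compactness of $J(C,C)$ whenever $C\subset\mms$ is compact. One may also remark that, strictly speaking, \autoref{Cor:K-GH} already shows the stronger statement that $J(X,Y)$ is compact for any two compact $X,Y\subset\mms$, from which the statement about $J(C,C)$ is a trivial instance.

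The only pedagogical point worth emphasizing in the write-up is that the terminology ``globally hyperbolic'' in our setting follows the Minguzzi reformulation \cite{mintop,minguzzi2023} rather than the classical chronology-based definition; our two-clause version (closed order plus compactness of $J(C,C)$ for compact $C$) is precisely what the preceding preparatory lemmas deliver. Hence the proof can be written in just a few lines by invoking \autoref{Pr:Closed} for (i) and \autoref{Cor:K-GH} for (ii), with no new estimates or constructions required.
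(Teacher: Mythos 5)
Your proof is correct and takes essentially the same approach as the paper: the paper itself states \autoref{Th:GHy} as a direct consequence of \autoref{Pr:Closed} (for closedness) and \autoref{Cor:K-GH} (for compactness of $J(C,C)$), with the order properties of $\leq$ having already been noted immediately after its definition. No further comparison is needed.
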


\begin{remark}[About various notions of global hyperbolicity]\label{Re:Nonambig} There are several slightly different definitions of global hyperbolicity in the literature. We shortly elaborate on the connections to the definition we use.

Traditionally \cite{beem1996,oneill1983}, for smooth spacetimes global hyperbolicity means nonexistence of almost closed strictly causal curves and compact diamonds. This property will follow from \autoref{Le:Str caus equiv} and \autoref{Cor:Strong causality} below. In particular, $\mms$ satisfies the Bernal--Sánchez definition of global hyperbolicity from the smooth case \cite{bernal}.  More generally, \autoref{Th:GHy} echoes the results of Hounnonkpe--Minguzzi \cite{hounnonkpe2019, minguzzi2023} which inspire it: for smooth noncompact spacetimes of dimension three or higher, endowed with a $\smash{\Cont^{1,1}}$-Lorentzian metric, \cite[Thm.~2.8]{hounnonkpe2019} asserts the causality requirements in the indicated traditional definition of global hyperbolicity are superfluous. Our result does not extend \cite{hounnonkpe2019}, though, since we \emph{assume} causality.

For various characterizations of global hyperbolicity for spacetimes with continuous metrics, we refer to \cite{samann2016}.

In the more recent approach \cite{kunzinger2018}, global hyperbolicity for Lorentzian pre-length spaces is defined by non-total imprisonment \cite[Def.~2.35]{kunzinger2018} and  compact diamonds. By \cite[Thm.~3.7]{minguzzi2023}, for Lorentzian length spaces \cite[Def.~3.22]{kunzinger2018} this  is equivalent to the definition of global hyperbolicity from \autoref{Th:GHy}. (\cite{burtscher} shows further equivalent characterizations of global hyperbolicity in this abstract setting.) On the other hand, non-total imprisonment is a metric property, hence becomes less comparable to our setting in which $\Top$ is not required to be metrizable a priori. Yet, in fact for Lorentzian length spaces global hyperbolicity depends on the reference metric only through the topology it induces \cite[Rem.~14]{mccann+}.

The assumption of compact emeralds for Lorentzian length spaces was termed ``$\scrK$-global hyperbolicity'' in \cite{braun2022,braun2023,cavalletti2020}.
\end{remark}

We go on with further properties of $\leq$ and $\ll$. Recall that a loop refers to any nonconstant path $\gamma\colon[0,1]\to\mms$ with $\gamma_0=\gamma_1$.

\begin{corollary}[Enhanced properties]\label{Cor:Causality} The following properties  hold.
\begin{enumerate}[label=\textnormal{\textcolor{black}{(}\roman*\textcolor{black}{)}}]
\item\label{La:Dwa} \textnormal{\textbf{No rough causal loops.}} If a rough causal curve $\gamma\colon [0,1]\to \mms$ satisfies $\gamma_s = \gamma_t$ for some $0\leq s<t\leq 1$, then $\smash{\gamma\big \vert_{[s,t]}}$ is constant.
\item\label{La:Tri} \textnormal{\textbf{Chronology} \cite[Def.~2.35]{kunzinger2018}\textbf{.}} The relation $\ll$ is irreflexive. In other words, every $x\in\mms$ obeys $x\not\ll x$, and in particular $l(x,x) = 0$.
\end{enumerate}
\end{corollary}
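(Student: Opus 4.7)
The plan is to deduce both statements essentially from antisymmetry of $\leq$, which in turn comes directly from the causality axiom of \autoref{Def:signed time sep}, combined with the structural requirement in \autoref{Def:LLS}(b) that the length property be witnessed by \emph{nowhere constant} causal curves.

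For part \ref{La:Dwa}, I would fix such a rough causal $\gamma\colon[0,1]\to\mms$ and indices $0\leq s<t\leq 1$ with $\gamma_s=\gamma_t$, and then argue pointwise on $[s,t]$. For any $u\in[s,t]$, rough causality yields $\gamma_s\leq\gamma_u$ (from $s\leq u$) and $\gamma_u\leq\gamma_t$ (from $u\leq t$); together with $\gamma_t=\gamma_s$ this gives $\gamma_s\leq\gamma_u\leq\gamma_s$. Antisymmetry of $\leq$ (i.e., that $l(x,y),l(y,x)\geq 0$ forces $x=y$, which is a direct instance of the causality axiom since $\min\{l(x,y),l(y,x)\}\geq 0>-\infty$) then forces $\gamma_u=\gamma_s$ for every $u\in[s,t]$, i.e., $\gamma\big\vert_{[s,t]}$ is constant.

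For part \ref{La:Tri}, I would first recall the observation already noted in the text: the reverse triangle inequality \eqref{Eq:Reverse tau} applied to $(x,x,x)$ gives $l(x,x)\geq 2\,l(x,x)$, while causality forces $l(x,x)\geq 0$, so $l(x,x)\in\{0,\infty\}$. It therefore suffices to rule out $l(x,x)=\infty$. Assume for contradiction that $x\ll x$, and apply the length property \autoref{Def:LLS}(b) to the pair $(x,x)$: this gives
\begin{align*}
\infty = l(x,x) = \sup\,\Len_l(\gamma),
\end{align*}
where the supremum runs over strictly causal $\gamma\colon[0,1]\to\mms$ with $\gamma_0=\gamma_1=x$. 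In particular, at least one such $\gamma$ must exist. But part \ref{La:Dwa} applied with $s=0$, $t=1$ forces $\gamma$ to be constant on $[0,1]$, which flatly contradicts the nowhere constancy built into the definition of strictly causal curves. Hence $l(x,x)=0$, as desired.

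Neither step involves a deep obstacle; the only thing requiring slight care is confirming that the length property is the \emph{right} tool in the second part, i.e., that insisting on nowhere constant witnesses in \autoref{Def:LLS}(b) was precisely what rules out the pathology $l(x,x)=\infty$. With that observation the proof is a two-line reduction to part \ref{La:Dwa} plus antisymmetry, without invoking upper semicontinuity of $l$ or compactness of diamonds.
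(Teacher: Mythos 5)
Your proof is correct and follows essentially the same route as the paper's: part (i) is a direct application of antisymmetry of $\leq$ to the double inequality $\gamma_s\leq\gamma_u\leq\gamma_t=\gamma_s$, and part (ii) deduces a contradiction from $x\ll x$ by noting that the length property would produce a nonconstant strictly causal loop, which (i) forbids. The intermediate observation $l(x,x)\in\{0,\infty\}$ is also recorded in the paper (right after \autoref{Def:signed time sep}), so the overall structure matches.
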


\begin{proof} 
The existence of rough causal loops would contradict  causality assumed in \autoref{Def:signed time sep}. Indeed, if $\gamma\colon[0,1]\to\mms$ is a rough causal loop (hence nonconstant), then we get $\gamma_0 \leq \gamma_t \leq \gamma_1$ yet $\gamma_t \neq \gamma_0 = \gamma_1$ for some $0<t<1$. This gives 
\ref{La:Dwa}.

To show \ref{La:Tri}, assume $x\ll x$ for some $x\in \mms$.  As  $l(x,x)>0$, the length property would imply the existence of a causal loop $\gamma\colon[0,1]\to\mms$ starting and ending at $x$, a contradiction to \ref{La:Dwa}. This shows $l(x,x)\leq 0$, while the inequality $l(x,x) \geq 0$ follows from our assumption on $l$ from \autoref{Def:signed time sep}.
\end{proof}

\begin{remark}[Restriction caveat]\label{Re:Restriction to subsets} \autoref{Cor:Causality} implies that causal diamonds $J(x,y)$, where $x,y\in\mms$, endowed with the restriction $\smash{l\big\vert_{J(x,y)^2}}$ will never obey \autoref{Ass:GHLLS} if $\mms$ does. Indeed, otherwise $x$ would be the midpoint of a timelike curve through it, and since $l(x,x)=0$ by \autoref{Cor:Causality} we would find $z\in J(x,y) \setminus \{x\}$ with $z\ll x$. On the other hand $x\leq z$, a contradiction to said corollary.

However, this will not affect our ``local'' discussions, since all relevant properties we use (compactness, limit curve theorems,  basic optimal transport theory, etc.) are inherited by the considered subsets from the ambient space.
\end{remark}

Lastly, we show everywhere finiteness of $\smash{l_+}$. Note that we only use continuity of $\smash{l_+}$ near the diagonal of $\smash{\mms^2}$.

\begin{corollary}[Finiteness of time separation]\label{Cor:Finiteness l} The function $\smash{l_+}$ is real-valued.
\end{corollary}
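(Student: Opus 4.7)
The plan is to show $l(x,y) < \infty$ for every $x,y\in\mms$; we may assume $x\ll y$, for otherwise $l_+(x,y) = 0$ trivially. The strategy exploits chronology, which by \autoref{Cor:Causality}\ref{La:Tri} forces $l(z,z)=0$, together with upper semicontinuity of $l$ at the diagonal. For each $z\in\mms$, \autoref{Le:Sequ lemma} provides sequences $a_i\uparrow z$ and $b_i\downarrow z$, so that $z\in I(a_i,b_i)$; upper semicontinuity of $l$ at $(z,z)$ then yields $l(a_i,b_i) < 1$ for $i$ large enough. Hence every $z\in\mms$ lies in a chronological diamond $W_z := I(a_z,b_z)$ with $l(a_z,b_z) < 1$; inside such a set the reverse triangle inequality forces $l(p,q) \le l(a_z,b_z)$ whenever $a_z\ll p\le q\ll b_z$.

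Now fix $x\ll y$. By \autoref{Ass:GHLLS} and \autoref{Th:GHy}, the causal diamond $J(x,y)$ is compact, so it is covered by finitely many such sets $W_{z_1},\dots,W_{z_n}$. I next bound $\Len_l(\gamma)$ uniformly over strictly causal curves $\gamma\colon[0,1]\to\mms$ from $x$ to $y$ (which exist by the length property of \autoref{Def:LLS}). The crux is that each preimage
\begin{align*}
\gamma^{-1}(W_{z_i}) = \{t\in[0,1] : a_{z_i}\ll\gamma_t\ll b_{z_i}\}
\end{align*}
is a \emph{connected} open subset of $[0,1]$: if $t_1 < t_2$ both lie in it, then for $t\in[t_1,t_2]$ causality of $\gamma$ gives $a_{z_i}\ll\gamma_{t_1}\le\gamma_t\le\gamma_{t_2}\ll b_{z_i}$, and push-up (\autoref{Le:Pushup openness}) yields $a_{z_i}\ll\gamma_t\ll b_{z_i}$. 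Hence $[0,1]$ is covered by at most $n$ open intervals.

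A standard chain argument --- starting at $0$ and repeatedly jumping to an interval containing the right endpoint of the current one --- extracts from this cover a partition $0 = t_0 < t_1 < \dots < t_m = 1$ with $m \le n$ such that each $\gamma([t_{k-1},t_k])$ is contained in some $W_{z_{j(k)}}$. The bound of the previous paragraph then gives
\begin{align*}
\Len_l(\gamma) \le \sum_{k=1}^m l(\gamma_{t_{k-1}},\gamma_{t_k}) < m \le n.
\end{align*}
Supremizing over $\gamma$ and invoking the length property once more yields $l(x,y) \le n < \infty$.

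The main conceptual hurdle is obtaining a bound depending only on the cover, not on $\gamma$ itself; merely using generic open neighborhoods of the diagonal would yield partition sizes $m = m(\gamma)$ growing with the ``wiggliness'' of the curve. What saves the day is insisting on \emph{chronological} diamonds: push-up then prevents a strictly causal curve from re-entering a $W_z$ it has once left, reducing the problem to a combinatorial interval-cover count. Once this observation is in place, the rest of the argument is routine bookkeeping.
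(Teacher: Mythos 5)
Your proof is correct, and it takes a genuinely different — and more robust — route than the paper's. The paper covers the \emph{parameter interval} $[0,1]$ of a single strictly causal curve $\gamma$ by small $\delta$-intervals on which $l_+$ is bounded, and then invokes a telescoping bound $l_+(x,y)=\sum_i\big[l_+(\gamma_{t_0},\gamma_{t_i})-l_+(\gamma_{t_0},\gamma_{t_{i-1}})\big]\leq\sum_i l_+(\gamma_{t_{i-1}},\gamma_{t_i})\leq m$. But the intermediate step requires $l_+(\gamma_{t_0},\gamma_{t_i})-l_+(\gamma_{t_0},\gamma_{t_{i-1}})\leq l_+(\gamma_{t_{i-1}},\gamma_{t_i})$, which is the \emph{opposite} of what the reverse triangle inequality \eqref{Eq:Reverse tau} gives (superadditivity along a causal chain yields $\geq$, not $\leq$); and even with the correct direction, a cover of the domain yields a partition count $m=m(\gamma)$ depending on the chosen curve, whereas $l(x,y)$ is a supremum over all strictly causal $\gamma$ from $x$ to $y$. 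Your argument replaces the per-curve domain cover by a \emph{fixed} finite cover of the compact range $J(x,y)$ by chronological diamonds $W_z=I(a_z,b_z)$ with $l(a_z,b_z)<1$, and supplies exactly the two ingredients the telescoping route lacks: (i) the diamond monotonicity $l(p,q)\leq l(a_z,b_z)$ whenever $a_z\ll p\leq q\ll b_z$, which is the reverse triangle inequality used in the right direction together with nonnegativity of the discarded terms; and (ii) push-up (\autoref{Le:Pushup openness}), which makes each preimage $\gamma^{-1}(W_{z_i})$ a connected open interval, so the cover of $J(x,y)$ yields a curve-\emph{independent} bound $m\leq n$ on the partition size. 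Combined with the infimum characterization of $\Len_l$ over partitions, this gives $\Len_l(\gamma)<n$ uniformly in $\gamma$, hence $l(x,y)\leq n$ after supremizing. Your closing remark correctly identifies the conceptual crux: the causal monotonicity of $\gamma$ plus push-up forbids re-entry into a diamond, reducing the problem to a combinatorial interval count that does not depend on the wiggliness of the curve.
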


\begin{proof} By \autoref{Cor:Causality}, it suffices to consider $x,y\in\mms$ with $l(x,y)>0$. Let $\gamma\colon[0,1]\to\mms$ be a strictly causal curve connecting $x$ and $y$, as provided through the length property. For every $0\leq t\leq 1$, by the continuity of $\smash{l_+}$ near the diagonal of $\smash{\mms^2}$ as well as \autoref{Cor:Causality} there exists $\smash{\delta > 0}$ with the property  $\smash{l_+(\gamma_s,\gamma_{s'})\leq 1}$ for every $\smash{s,s'\in  (t-\delta,t+\delta)\cap [0,1]}$. By  compactness, finitely many such  intervals $(t_i-\delta_i,t_i+\delta_i) \cap [0,1]$ cover $[0,1]$,
where $i = 0,\ldots, m$ and $m\in\N$. Without loss of generality, we assume $0 = t_0 < t_1 < \dots < t_{m-1} < t_m = 1$. Invoking \autoref{Cor:Causality}, a telescoping sum, and  the reverse triangle inequality \eqref{Eq:Reverse tau}, we obtain the desired conclusion:
\begin{align*}
l_+(x,y) &= \sum_{i=1}^m \big[l_+(\gamma_{t_0}, \gamma_{t_i}) - l_+(\gamma_{t_0},\gamma_{t_{i-1}})\big] \leq \sum_{i=1}^m l_+(\gamma_{t_{i-1}}, \gamma_{t_i}) \leq m.\qedhere
\end{align*}
\end{proof}

\subsubsection{Topological consequences}

\begin{corollary}[Topological properties and metrizability]\label{Cor:Hausdorff} The topology $\Top$ is Hausdorff,  locally compact yet noncompact, and does not have isolated points.

In particular, $\Top$ is Polish if and only if it is second-countable.
\end{corollary}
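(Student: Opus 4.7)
The plan is to derive each of the four topological assertions from the tools already at hand: \autoref{Pr:Closed} (closedness of $\leq$), \autoref{Le:Sequ lemma} (timelike monotone approximation), \autoref{Le:Pushup openness} (openness of chronological futures), \autoref{Cor:Causality} (chronology), and the compact-diamond half of \autoref{Ass:GHLLS}.

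For Hausdorffness, I would show the diagonal is closed in $\Top^2$. Antisymmetry of $\leq$ identifies
\begin{align*}
\Delta_{\mms} = \{(x,y) \in \mms^2 : x \leq y\} \cap \{(x,y) \in \mms^2 : y \leq x\},
\end{align*}
and both sets on the right are closed by \autoref{Pr:Closed} (the second via the flip $(x,y) \mapsto (y,x)$, which is a homeomorphism of $\Top^2$). For local compactness and absence of isolated points, I would use \autoref{Le:Sequ lemma} to find $a\ll x\ll b$ for any given $x\in\mms$; then $x \in I(a,b) \subset J(a,b)$, with the former open by \autoref{Le:Pushup openness} and the latter compact by \autoref{Ass:GHLLS}, so $x$ has a compact neighborhood inside an open neighborhood. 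If $\{x\}$ were open, a sequence $(b_n)_{n\in\N}$ with $b_n\downarrow x$ provided by the same lemma would be eventually constantly equal to $x$, yielding $x \ll x$ and contradicting \autoref{Cor:Causality}.

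Noncompactness is the step I expect to require the most care, since $\Top$ need not be metrizable a priori, which rules out arguments via non-total imprisonment. Assuming $\mms$ compact, I would cover it using nontrivial chronology by diamonds $\{I(a_y,b_y)\}_{y\in\mms}$ and extract a finite subcover with tips $b_1,\dots,b_n$. Starting from $b_1$, each $b_k$ lies in some $I(a_{i(k)}, b_{i(k)})$, so setting $b_{k+1} := b_{i(k)}$ produces an infinite sequence $b_1 \ll b_2 \ll \cdots$ valued in the finite set $\{b_1,\dots,b_n\}$. Pigeonhole yields indices $k < k'$ with $b_k = b_{k'}$, and transitivity of $\ll$ then forces $b_k \ll b_k$, again contradicting \autoref{Cor:Causality}.

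The Polish/second-countable equivalence is a classical fact for locally compact Hausdorff spaces, so it applies to $\Top$ once the preceding three items are established. The forward implication is immediate from the definition of a Polish space. For the reverse, a second-countable locally compact Hausdorff space is regular and hence metrizable by Urysohn's metrization theorem; it is also $\sigma$-compact and thus admits a compatible complete (e.g.\ proper) metric, making it Polish.
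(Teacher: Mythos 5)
Your proof is correct and follows essentially the same route as the paper: local compactness via $I(a,b)\subset J(a,b)$, no isolated points via chronology, and the standard Polish/second-countable equivalence. The only cosmetic difference is that you inline the two steps the paper delegates to citations — Hausdorffness via the closed-diagonal criterion (which is what \cite[Thm.~3.3]{minguzzi2023} supplies given \autoref{Th:GHy}) and noncompactness via the finite-cover/pigeonhole argument (which is the content of the cited O'Neill lemma) — making the argument more self-contained without changing its substance.
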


\begin{proof}  The global hyperbolicity of \autoref{Th:GHy} combines with \cite[Thm.~3.3]{minguzzi2023} to imply the Hausdorff property of $\Top$.

Local compactness follows from nontrivial chronology and  compact diamonds. Indeed, given any $x\in\mms$, taking appropriate points on a timelike segment passing through $x$ and considering the spanned chronological diamond gives an open neighborhood of $x$ with compact closure.

Compactness of $\Top$ is contradicted in the standard way, cf.~\cite[Lem.~14.10]{oneill1983}.

To show nonexistence of isolated points, assume to the contrary that $x\in\mms$ is such a point. By continuity, every timelike curve with midpoint $x$ as given by nontrivial chronology is  constant around $1/2$, which contradicts \autoref{Cor:Causality}.

The last statement follows straightforwardly from these considerations. Every Polish topology is second-countable. Conversely, every second-countable, locally compact Hausdorff topology is Polish.
\end{proof}

More refined topological properties can be deduced for all compact subsets of $\mms$. The proof relies on linking the theory of bounded Lorentzian metric spaces from \cite{minguzzi2022} to our setting. To streamline the presentation, we defer it to \autoref{Ch:Loose}, but  recall  that it only requires $\smash{l_+}$ to be continuous near the diagonal of $\smash{\mms^2}$.

\begin{theorem}[Polish compact sets]\label{Th:Compact Polish} \autoref{Ass:GHLLS} implies the relative topology of $\Top$ on every compact subset $C\subset\mms$ is Polish.
\end{theorem}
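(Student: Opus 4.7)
The plan is to reduce the Polishness of a compact $C\subset\mms$ to the theory of bounded Lorentzian metric spaces of Minguzzi--Suhr \cite{minguzzi2022}, and then to combine compactness, metrizability, and Hausdorffness to finish. As a first reduction, by \autoref{Th:GHy} the emerald $K := J(C,C)$ is compact, and since $C$ is closed in $K$ by the Hausdorffness of $\Top$ (\autoref{Cor:Hausdorff}), the fact that closed subsets of Polish spaces are Polish reduces the claim to showing $K$ itself is Polish in its subspace topology.

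Second, I would upgrade $l_+\colon \mms^2\to [0,\infty)$ to a jointly continuous function. Upper semicontinuity follows from writing $l_+ = \max\{l,0\}$ with $l$ upper semicontinuous, lower semicontinuity is part of \autoref{Def:signed time sep}, and finiteness is \autoref{Cor:Finiteness l}. As a continuous function on the compact square $K\times K$, $l_+$ is therefore uniformly continuous. Using nontrivial chronology (\autoref{Def:LLS}) and compactness of $C$, I would next cover $C$ by finitely many chronological diamonds $I(a_i,b_i)$, and enlarge $K$ so that each endpoint $a_i,b_i$ lies in $K$ while preserving compactness. Following \cite{minguzzi2022}, I would then introduce on $K$ the Noldus-type pseudo-metric
\begin{align*}
d(x,y) := \sup_{z\in K}\bigl(\vert l_+(x,z)-l_+(y,z)\vert + \vert l_+(z,x)-l_+(z,y)\vert\bigr).
\end{align*}
Uniform continuity of $l_+$ ensures $d$ is bounded and well-defined, and renders the identity map $(K,\Top\big\vert_K)\to (K,d)$ continuous. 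Once $d$ is known to separate points, this continuous bijection from a compact space to a Hausdorff one is automatically a homeomorphism, so $\Top\big\vert_K$ is metrized by $d$ and $K$ becomes compact metric; compact metric spaces are Polish.

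The main obstacle I anticipate is proving that $d$ actually separates points. Because $l$ is signed, $d(x,y)=0$ only yields $l_+(x,y)=l_+(y,x)=0$, which fails to directly rule out distinct spacelike-related or null-related pairs. The witness role of the endpoints $a_i,b_i$ incorporated into $K$ is essential: any $x\in C$ sits in some $I(a_i,b_i)$ with $a_i \ll x \ll b_i$, and by the openness and push-up properties of \autoref{Le:Pushup openness} together with the closedness of $\leq$ (\autoref{Pr:Closed}), one can probe $l_+(a_i,\cdot)$ and $l_+(\cdot,b_i)$ to distinguish $x$ from any $y\neq x$ in $K$ that would have $d(x,y)=0$. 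This separation mirrors the corresponding step for bounded Lorentzian metric spaces in \cite{minguzzi2022}; verifying that $(K,l\big\vert_{K^2})$ fits their axiomatics under our weaker setup, and then carrying out this separation carefully, is the technical content postponed to \autoref{Ch:Loose}.
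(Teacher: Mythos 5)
Your reduction to proving $K:=J(C,C)$ (or an enlargement thereof) is Polish is where the argument breaks, and the break is precisely the obstacle that Minguzzi--Suhr's quotient construction is designed to avoid. The Noldus-type pseudometric $d$ you define need not separate points of the \emph{closed} diamond $K$: two distinct points on the null boundary of $K$ can be $l_+$-indistinguishable from within $K$, because the global distinguishing witness provided by \autoref{Cor:Distinction} is allowed to lie in $\mms\setminus K$ and cannot be probed by the supremum over $z\in K$. This is exactly why the paper's \autoref{Le:BLMS} passes to the quotient $\smash{(\tilde X,\tilde\tsep)}$ and proves injectivity of the quotient map only on the \emph{open} diamond $I(a,b)$, not on $X=J(a,b)$ --- the relation $\sim$ may be nontrivial at the boundary. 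So the step ``once $d$ is known to separate points ... $K$ becomes compact metric'' does not close.

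Your final paragraph also overstates what the witnesses $a_i,b_i$ can do: probing only with $l_+(a_i,\cdot)$ and $l_+(\cdot,b_i)$ cannot distinguish two distinct ``spacelike'' points in the interior of $I(a_i,b_i)$. What actually works (cf.\ the proof of \autoref{Le:BLMS}) is to use nontrivial chronology to produce a timelike curve through $x$ whose nearby points $\gamma_t$ stay inside $K$ because $x$ sits in the \emph{open} set $I(a_i,b_i)\subset K$; the reverse triangle inequality and the global distinction/reflectivity properties of $\mms$ then force $I^\pm(x)=I^\pm(y)$, hence $x=y$. This argument requires the relevant point(s) to lie in an open subset of $K$ and so cannot be run at the boundary.

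There are two ways to repair your scheme. One is to follow the paper: metrize the open diamonds $I(a,b)$ via \autoref{Le:BLMS} and \cite[Thm.~1.9]{minguzzi2022}, then apply Smirnov's metrization theorem to the locally metrizable Hausdorff paracompact set $C$; this sidesteps any claim about the closed diamond. The other is to keep your pseudometric $d$ but aim the separation argument directly at pairs $x,y\in C$ (rather than at arbitrary pairs in $K$): since every point of $C$ lies in some $I(a_i,b_i)\subset K$, the Le:BLMS-style argument applies on both sides, $d|_C$ separates $C$, and the continuous bijection from compact $(C,\Top|_C)$ onto Hausdorff $(C,d|_C)$ is a homeomorphism, giving $C$ compact metric hence Polish. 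That variant would be a genuine alternative to the paper's Smirnov route, but it is not what you wrote, and the route you wrote (show $K$ Polish, then restrict) has a real gap at the boundary.
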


Next, we pass to the distinction property of $\mms$ shown in \autoref{Cor:Distinction}. To this aim, the closedness of causal pasts and causal futures implicitly proven for \autoref{Pr:Closed} leads to the identities in the following \autoref{Le:Closure}. They should be read as ``non-bubbling'' under \autoref{Ass:GHLLS} (in the terminology of Chru\'sciel--Grant \cite{chrusciel2012}). The proof is performed as for \cite[Prop. 2.19, Lem.~3.11]{ake2020}.

\begin{lemma}[No causal bubbles]\label{Le:Closure} For every $x\in\mms$, 
\begin{align*}
\bar{I}^\pm(x) &= \{y\in\mms : I^\pm(y) \subset I^\pm(x)\} = J^\pm(x),
\end{align*}
where $\smash{\bar{I}^{\pm}(x)}$ denotes the closure of $\smash{I^\pm(x)}$.
\end{lemma}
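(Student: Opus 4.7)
The plan is to prove the cyclic chain of inclusions
\[
\bar{I}^+(x) \subset \{y \in \mms : I^+(y) \subset I^+(x)\} \subset J^+(x) \subset \bar{I}^+(x),
\]
from which the triple equality in the future case follows. The past version is then handled identically, invoking the monotonically increasing sequence $a_i \uparrow y$ of \autoref{Le:Sequ lemma} in place of the decreasing one. The tools are all already assembled: first countability of $\Top$ (so that sequences detect closure), the sequence lemma \autoref{Le:Sequ lemma}, the openness and push-up clauses of \autoref{Le:Pushup openness}, and the causal closedness \autoref{Pr:Closed}.

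For the first inclusion, I fix $y \in \bar{I}^+(x)$ and $z \in I^+(y)$, and pick a sequence $y_n \in I^+(x)$ converging to $y$. Openness of $\ll$ applied to $y \ll z$ yields $y_n \ll z$ for all sufficiently large $n$, and then push-up applied to $x \ll y_n \ll z$ gives $x \ll z$, i.e.\ $z \in I^+(x)$.

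For the second inclusion, assume $I^+(y) \subset I^+(x)$ and invoke \autoref{Le:Sequ lemma} to produce $b_i \downarrow y$, so that $y \ll b_i$ for every $i$ (as follows directly from the construction in the proof of that lemma, where $b_i$ is obtained as an interior point of a timelike curve through $y$). Hence $b_i \in I^+(y) \subset I^+(x)$, and in particular $x \leq b_i$; passing to the limit and invoking \autoref{Pr:Closed} gives $x \leq y$, that is, $y \in J^+(x)$. For the third inclusion, suppose $y \in J^+(x)$ and again select $b_i \downarrow y$ as above. Push-up applied to $x \leq y \ll b_i$ yields $x \ll b_i$, so $b_i \in I^+(x)$ for every $i$ and therefore $y = \lim_i b_i$ lies in $\bar{I}^+(x)$. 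No single step poses a substantive obstacle; the only mild care needed is to use the correct sense of monotone approximation (decreasing for the future, increasing for the past) in the two symmetric cases.
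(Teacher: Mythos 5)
Your proof is correct, and it uses essentially the same toolkit as the paper --- the sequence lemma, openness of chronology, push-up, and causal closedness. The one substantive difference is in completeness rather than method: the paper's proof establishes only the two extremal inclusions $\bar{I}^\pm(x)\subset J^\pm(x)$ (by taking closures in $I^\pm(x)\subset J^\pm(x)$ and using \autoref{Pr:Closed}) and $J^\pm(x)\subset\bar{I}^\pm(x)$ (by the approximating sequence and push-up), while the characterization of $J^\pm(x)$ as $\{y : I^\pm(y)\subset I^\pm(x)\}$ is left implicit. Your cyclic chain $\bar{I}^+(x)\subset\{y : I^+(y)\subset I^+(x)\}\subset J^+(x)\subset\bar{I}^+(x)$ handles the middle set explicitly, which is the cleaner way to get all three equalities and is in fact what the reflectivity claim in \autoref{Th:Ladder} really needs. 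Your care in noting that the property $y\ll b_i$ is read off from the explicit construction in the proof of \autoref{Le:Sequ lemma} (rather than from the mere statement $b_i\downarrow y$, which only gives $y\leq b_i$) is exactly right and worth retaining.
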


\begin{proof} Since $I^\pm(x) \subset J^\pm(x)$, taking closures of both sides yields the inclusion ``$\subset$''   by \autoref{Pr:Closed}. 

On the other hand, \autoref{Le:Sequ lemma} (timelike monotone approximation) as well as \autoref{Le:Pushup openness} (push-up lemma) show each $y \in J^+(x)$ to be the decreasing limit of a sequence $(y_i)_{i\in\N}$ in $I^+(x)$ to establish the opposite inclusion ``$\supset$''. 

The inclusion $\smash{\bar I^-(x) \supset J^-(x)}$ is argued in an analogous way (or by time reversal as in \autoref{Ex:Causal reversal}).
\end{proof}

\begin{corollary}[Future and past distinction {\cite[Def.~3.3]{ake2020}}]\label{Cor:Distinction}  If two points $x,y\in\mms$ satisfy $I^+(x) = I^+(y)$ or $I^-(x) = I^-(y)$, then $x=y$. In particular, the space $\mms$ is distinguishing, i.e.~if $I^+(x) = I^+(y)$ \emph{and} $I^-(x) = I^-(y)$, then $x=y$.
\end{corollary}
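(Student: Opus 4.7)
The plan is to deduce the corollary directly from the second equality in \autoref{Le:Closure}, together with the antisymmetry of the causal relation $\leq$ inherited from the causality axiom in \autoref{Def:signed time sep}. The work of unpacking the relationship between chronology and causality has already been done in proving \autoref{Le:Closure}, so the present argument should amount to little more than a symmetric one-liner; the ``hard part'' is simply recognizing that \autoref{Le:Closure} already produces antisymmetric causal information out of a mere chronological coincidence.

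Concretely, suppose $x,y\in\mms$ satisfy $I^+(x) = I^+(y)$. Then the trivial inclusions $I^+(y)\subset I^+(x)$ and $I^+(x)\subset I^+(y)$, combined with the characterization
\begin{align*}
J^+(x) = \{z\in\mms : I^+(z)\subset I^+(x)\}
\end{align*}
furnished by \autoref{Le:Closure} (and its analog with the roles of $x$ and $y$ interchanged), give $y\in J^+(x)$ and $x\in J^+(y)$. Equivalently, $x\leq y$ and $y\leq x$. The antisymmetry assertion built into \autoref{Def:signed time sep} --- namely that $\min\{l(x,y),l(y,x)\}>-\infty$ forces $x=y$ --- then yields $x=y$. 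The case $I^-(x)=I^-(y)$ is handled identically using the past half of \autoref{Le:Closure}.

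The ``in particular'' statement is then immediate, since each of the two equalities $I^\pm(x)=I^\pm(y)$ already implies $x=y$ by the above, so assuming both does as well. Alternatively, one could bypass \autoref{Le:Closure} and argue directly through \autoref{Le:Sequ lemma}: pick a sequence $b_i\downarrow y$, observe $b_i\in I^+(y)=I^+(x)$ for every $i\in\N$, hence $x\leq b_i$ by \autoref{Le:Pushup openness}, and pass to the limit using \autoref{Pr:Closed} to obtain $x\leq y$; symmetry then gives $y\leq x$, and antisymmetry concludes. I would however prefer the cleaner route through \autoref{Le:Closure}, since the latter already packages the necessary approximation argument.
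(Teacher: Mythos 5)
Your proof is correct and follows essentially the same route as the paper: both pass through \autoref{Le:Closure} to deduce $x\leq y$ and $y\leq x$ from the chronological coincidence, then invoke antisymmetry (equivalently \autoref{Cor:Causality}). The only cosmetic difference is that you read off the middle characterization $J^+(x)=\{z:I^+(z)\subset I^+(x)\}$ directly, whereas the paper takes closures to obtain $J^+(x)=J^+(y)$ and then uses reflexivity; both are trivial consequences of the same lemma.
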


\begin{proof} In the first case, \autoref{Le:Closure} yields $J^+(x) = J^+(y)$ by taking closures on both sides. In turn, since $\leq$ is reflexive this implies $x\in J^+(y)$ and $y\in  J^+(x)$. The claim follows from \autoref{Cor:Causality}
 or the antisymmetry from \autoref{Def:signed time sep}.

The second case is argued analogously.
\end{proof}

\subsubsection{Generalized time functions}

Following \cite{burtscher}, this corollary will be used to establish the existence of \emph{generalized time functions} through averaged volumes.

\begin{theorem}[Existence of generalized time functions]\label{Th:Ex time function} Every compact subset $C\subset \mms$ admits a generalized time function, i.e.~there exists a function $\sft\colon C\to \R$ with the property $\sft(x) < \sft(y)$ whenever $x,y\in C$ satisfy $x<y$.
\end{theorem}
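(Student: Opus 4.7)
The plan is to construct $\sft$ via an averaged volume of chronological pasts, following the classical strategy pioneered by Geroch, Dieckmann, and Burtscher. Using the local compactness asserted by \autoref{Cor:Hausdorff}, cover $C$ by finitely many open sets $U_1,\dots,U_n$, each with compact closure, and set $K := \bigcup_{i=1}^n \overline{U_i}$. Then $K$ is compact and therefore Polish by \autoref{Th:Compact Polish}, hence admits a countable dense subset $\{p_j\}_{j\in\N}$. Define the finite Borel measure
\begin{align*}
\mu := \sum_{j\in\N} 2^{-j}\,\delta_{p_j}
\end{align*}
on $\mms$, whose topological support is $K$, and put
\begin{align*}
\sft(x) := \mu\big(I^-(x)\big), \qquad x\in C.
\end{align*}
This is well-defined since each $I^-(x)$ is open (hence Borel) and $\mu$ is finite.

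Monotonicity along $\leq$ is immediate from the push-up property of \autoref{Le:Pushup openness}: if $x\leq y$ then $I^-(x)\subset I^-(y)$, so $\sft(x) \leq \sft(y)$. The heart of the argument is to upgrade this to a strict inequality whenever $x<y$. Fix such $x,y\in C$; by nontrivial chronology in \autoref{Def:LLS}, pick a timelike curve $\gamma\colon[0,1]\to\mms$ with $\gamma_{1/2}=y$, and choose $i$ with $y\in U_i$. Continuity of $\gamma$ provides $\delta>0$ with $\gamma_s\in U_i$ for every $s\in(1/2-\delta,1/2+\delta)$.

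Next I claim there exists $\epsilon\in(0,\delta)$ such that $z:=\gamma_{1/2-\epsilon}$ lies in $U_i\setminus J^-(x)$. Otherwise, there would exist $\epsilon_n\downarrow 0$ with $\gamma_{1/2-\epsilon_n}\leq x$, and continuity of $\gamma$ combined with the closedness of $\leq$ recorded in \autoref{Pr:Closed} would force $y\leq x$, contradicting $x<y$ via antisymmetry. As $\gamma$ is timelike we also have $z\ll y$; hence $z$ belongs to
\begin{align*}
W := U_i\cap I^-(y)\cap\big(\mms\setminus J^-(x)\big),
\end{align*}
which is open since $U_i$ and $I^-(y)$ are open and $J^-(x)$ is closed (as established within the proof of \autoref{Pr:Closed}). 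Thus $W$ is a nonempty open subset of $K$, so density of $\{p_j\}$ in $K$ supplies some $p_{j_0}\in W\subset I^-(y)\setminus I^-(x)$, yielding $\sft(y)-\sft(x)\geq 2^{-j_0}>0$.

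The principal obstacle is precisely this last step: producing a point of the support of $\mu$ inside the difference $I^-(y)\setminus I^-(x)$, so that $\mu$ actually detects the strict inclusion of pasts. Anchoring the auxiliary point $z$ to a fixed neighborhood $U_i\subset K$ of $y$, rather than letting it drift arbitrarily along the timelike curve through $y$, is what renders the density of $\{p_j\}$ applicable and closes the argument.
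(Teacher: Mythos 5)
Your proof is correct and gives a genuinely different — and more elementary — argument than the paper's. The paper takes $\sft^-(x) := \int_0^1 \mu[\met(\cdot,I^-(x))<r]\,\rmd r$, the \emph{averaged} volume of a metric thickening of $I^-(x)$, with $\mu$ a fully supported probability measure on a compact superset $C'$ of $C$ obtained from Lemma 4.1 of \cite{burtscher}; strict monotonicity is then inherited from the proof of Proposition 4.8 in \cite{burtscher}, combined with the distinction property (\autoref{Cor:Distinction}). The averaging step is borrowed from that reference primarily because there it is used to manufacture \emph{continuous} time functions — extra regularity that the present statement, which only demands strict monotonicity, does not actually require. You strip all this away: you construct a concrete discrete measure $\mu=\sum_j 2^{-j}\delta_{p_j}$ with support in a compact neighborhood $K\supset C$, take the unaveraged $\sft(x)=\mu(I^-(x))$, and prove strict monotonicity directly. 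Your key step — finding a point $z$ lying in $I^-(y)\setminus J^-(x)$ inside a fixed precompact neighborhood $U_i\subset K$ of $y$, so that density of $\{p_j\}$ in $K$ applies — is a short self-contained argument using only nontrivial chronology, openness of $I^-(y)$, closedness of $\leq$ and $J^-(x)$ (\autoref{Pr:Closed}), and push-up (\autoref{Le:Pushup openness}). In effect you use closedness of the causal relation to show that $I^-(y)\setminus J^-(x)$ is a nonempty open set meeting $K$, whereas the paper routes the same fact through the reflectivity/distinction properties of \autoref{Cor:Distinction}. Both are sound; yours avoids the appeal to \cite{burtscher} and the averaged volume construction at the cost of no longer being a stepping stone toward continuous time functions.
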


\begin{proof} By covering $C$ with finitely many chronological diamonds and taking the closure, without loss of generality we may and will assume $C$ lies in the interior of a larger compact set $C'\subset\mms$. By \autoref{Th:Compact Polish}, $C'$ is Polish --- with topology induced by, say, the metric $\met$ --- and hence there is  a Borel probability measure $\mu\in\scrP(\mms)$ with $\supp \mu = C'$ \cite[Prop.~4.1]{burtscher} (see \autoref{Sub:Notation prob} for relevant notation). For $x\in\mms$, define the \emph{past averaged volume function} \cite[Def. 4.2]{burtscher} $\sft^-\colon \mms\to [0,1]$ by
\begin{align*}
\sft^-(x) := \int_0^1 \mu\big[\met(\cdot,I^-(x)) < r\big]\d r.
\end{align*}
By \autoref{Le:Pushup openness}, if $x,y\in\mms$ satisfy $x\leq y$ then $\sft^-(x) \leq \sft^-(y)$. 

Following the proof of \cite[Prop.~4.8]{burtscher} and applying \autoref{Cor:Distinction}, in fact we get $x<y$ implies $\sft^-(x) < \sft^-(y)$ for every $x,y\in C$. Here, we used the construction of $C'$, which ensures all sequences converging to points in $C$ to eventually lie in $C'$, and hence the applicability of the ``past-approximation property'' from the proof of \cite[Prop.~4.8]{burtscher} yielded by nontrivial chronology.
\end{proof}

\begin{remark}[Alternative construction] In the previous proof, one could have also taken the \emph{future averaged volume function} $\sft^+\colon \mms \to [-1,0]$ defined by
\begin{align*}
\sft^+(x) := -\int_0^1 \mu\big[\met(\cdot,I^+(x)) < r\big]\d r.
\end{align*}
\end{remark}

\begin{remark}[Global existence of generalized time functions] If $\Top$ is induced by a separable metric,  \autoref{Th:Ex time function} even holds for $C$ replaced by the noncompact space $\mms$. The proof is similar, in fact simpler in this case. 
\end{remark}

\subsubsection{Causal ladder} Now we ``descend'' the  causal ladder from \cite[Thm.~3.16]{ake2020} and outline how   global hyperbolicity implies further standard causality properties. The proof follows  \cite[Props.~3.12, 3.13]{ake2020} by using Lemmas \ref{Le:Sequ lemma} and \ref{Le:Closure}. We point out that by \autoref{Pr:Closed}, there is no need to  discuss \emph{stable causality} \cite[Def.~3.9]{ake2020}.

\begin{theorem}[Causal ladder]\label{Th:Ladder} The following  hold in decreasing strength if we assume \autoref{Ass:GHLLS}.
\begin{enumerate}[label=\textnormal{(\roman*)}]
\item \textnormal{\textbf{Causal simplicity} \cite[Def.~3.7]{ake2020}\textbf{.}} The space $\mms$ is causal in the sense of \autoref{Cor:Causality}, and $\smash{J^\pm(x)}$ is closed for every $x\in\mms$.
\item \textnormal{\textbf{Causal continuity} \cite[Def.~3.7]{ake2020}\textbf{.}} The space $\mms$ is distinguishing in the sense of \autoref{Cor:Distinction}, and \emph{reflective}, i.e.~every $x,y\in \mms$ satisfy $\smash{I^+(x) \subset I^+(y)}$ if and only if $\smash{I^-(y)\subset I^-(x)}$.
\end{enumerate}
\end{theorem}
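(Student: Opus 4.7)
My plan is to observe that almost every ingredient of both claims has already been assembled, so the proof is mostly a matter of citing previous results, with one genuinely new step for reflectivity.

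For part (i), causal simplicity, chronology (in the sense that $x \not\ll x$ for every $x$) was already recorded in \autoref{Cor:Causality}\ref{La:Tri}. The closedness of $J^+(x)$ and $J^-(x)$ for every $x \in \mms$ was established in the first paragraph of the proof of \autoref{Pr:Closed}: given $y \in \overline{J^+(x)}$ with a directed net $y_\alpha \to y$, one picks $y' \in I^+(y)$ by nontrivial chronology, observes that $y_\alpha \in J(x,y')$ eventually by openness (\autoref{Le:Pushup openness}), and concludes by compactness of the diamond. So part (i) requires nothing new.

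For part (ii), causal continuity, distinction is precisely \autoref{Cor:Distinction}. The only substantive content is reflectivity. By symmetry, it suffices to prove that $I^+(x) \subset I^+(y)$ implies $I^-(y) \subset I^-(x)$. I would first show this hypothesis forces $y \leq x$: invoking the timelike monotone approximation of \autoref{Le:Sequ lemma}, choose a sequence $b_i \downarrow x$, so each $b_i \in I^+(x) \subset I^+(y)$; hence $x \in \overline{I^+(y)}$, which by \autoref{Le:Closure} (no causal bubbles) coincides with $J^+(y)$, giving $y \leq x$. From here, for any $z \in I^-(y)$ we have $z \ll y \leq x$, and push-up (\autoref{Le:Pushup openness}) yields $z \ll x$, i.e.\ $z \in I^-(x)$. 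The reverse implication $I^-(y) \subset I^-(x) \Rightarrow I^+(x) \subset I^+(y)$ is analogous, using $a_i \uparrow y$ together with $\overline{I^-(x)} = J^-(x)$.

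I do not anticipate any real obstacle: the statement is essentially a bookkeeping corollary recording the position of $\mms$ on the causal ladder, and all the analytic content resides in Lemmas \ref{Le:Pushup openness}, \ref{Le:Sequ lemma}, and \ref{Le:Closure} together with \autoref{Pr:Closed} and \autoref{Cor:Distinction}. The phrase ``in decreasing strength'' is the standard causality-theoretic remark that causal simplicity implies causal continuity, so in principle only reflectivity would need an independent argument, but the proof above establishes both items directly from already-proven facts.
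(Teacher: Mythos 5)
Your proposal is correct, and it follows essentially the same route the paper intends: the paper merely cites \cite[Props.~3.12, 3.13]{ake2020} together with Lemmas~\ref{Le:Sequ lemma} and~\ref{Le:Closure}, and your argument reconstructs the missing details using exactly those tools.

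One small optimization worth noting: for reflectivity you can read it off directly from the \emph{middle} equality in \autoref{Le:Closure}. That lemma states $\{z : I^+(z)\subset I^+(y)\} = J^+(y)$ and $\{z : I^-(z)\subset I^-(x)\} = J^-(x)$, so $I^+(x)\subset I^+(y)$ holds iff $x\in J^+(y)$, i.e.\ $y\leq x$, and likewise $I^-(y)\subset I^-(x)$ holds iff $y\in J^-(x)$, i.e.\ $y\leq x$. Both conditions are thus equivalent to $y\leq x$, and reflectivity follows in one line without re-invoking the sequence lemma. Your version is still correct — it just rederives this equivalence by hand via monotone approximation and push-up — but the middle term of \autoref{Le:Closure} was recorded precisely to make this a tautology. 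Also, a minor point: from the definition of $b_i\downarrow x$ alone you get $b_{i+1}\ll b_i$ and $b_i\to x$, not directly $b_i\in I^+(x)$; that extra fact follows either from the explicit construction in the sequence lemma's proof, or from causal closedness plus push-up ($x\leq b_{i+1}\ll b_i$), and it would be worth a half-sentence making this explicit.
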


The property of strong causality \cite[Def.~2.35]{kunzinger2018}, which is another usual part of the causal ladders \cite[Thm.~3.16]{ake2020} and \cite[Thm.~3.26]{kunzinger2018} is investigated now.

\begin{definition}[Strong causality] Let $\Alex$ be the \emph{Alexandrov topology} on $\mms$, i.e. the topology on $\mms$ having $\{I(x,y) : x,y\in \mms\}$ as a subbase. 

We call $\mms$ \emph{strongly causal} if the  $\Alex$ coincides with the original topology $\Top$. 
\end{definition} 

\begin{remark}[Equivalence of topologies] Under strong causality, both $\Alex$ and $\Top$  agree with the {chronological topology} $\Inter$ of \autoref{Def:Inter}.
\end{remark}

We obtain the following characterization of strong causality by the traditional definition through nonexistence of almost closed strictly causal curves \cite{beem1996,oneill1983}. As always, topological notions such as neighborhoods are understood with respect to $\Top$.

\begin{lemma}[Characterizing strong causality]\label{Le:Str caus equiv} Strong causality is equivalent to the following. For every $x\in\mms$ and every neighborhood $U\subset\mms$ of $x$, there exists a neighborhood $V\subset U$ of $x$ such that every strictly causal curve $\gamma\colon[0,1]\to\mms$ with endpoints in $V$ does not leave $U$.
\end{lemma}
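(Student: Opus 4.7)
The plan is to prove the two implications separately, exploiting openness of $I^\pm(\cdot)$ (\autoref{Le:Pushup openness}) which gives $\Alex\subseteq\Top$ unconditionally, together with the push-up property, the length property, and the timelike monotone approximation of \autoref{Le:Sequ lemma}.

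For the ``only if'' direction, assume $\Top=\Alex$. Given $x\in\mms$ and a $\Top$-neighborhood $U$ of $x$, the set $U$ is then Alexandrov-open, so there exist finitely many $z_1,\dots,z_n,w_1,\dots,w_m\in\mms$ with
\[
x\in V:=\bigcap_{k=1}^{n}I^+(z_k)\cap\bigcap_{j=1}^{m}I^-(w_j)\subseteq U,
\]
the set $V$ being $\Top$-open by \autoref{Le:Pushup openness}. I would verify $V$ itself satisfies the stated property: for any strictly causal $\gamma\colon[0,1]\to\mms$ with $\gamma_0,\gamma_1\in V$ and any $t\in[0,1]$, the chain $z_k\ll\gamma_0\leq\gamma_t\leq\gamma_1\ll w_j$ combined with push-up (\autoref{Le:Pushup openness}) yields $z_k\ll\gamma_t\ll w_j$ for every $k,j$, hence $\gamma_t\in V\subseteq U$. (Strict causality is not actually used in this direction.)

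For the ``if'' direction, since $\Alex\subseteq\Top$ is automatic, I only need each $x\in U\in\Top$ to admit an Alexandrov-open neighborhood contained in $U$. Pick a $\Top$-neighborhood $V\subseteq U$ of $x$ satisfying the curve hypothesis. By \autoref{Le:Sequ lemma} there exist sequences $a_i\uparrow x$ and $b_i\downarrow x$ in $\Top$; for sufficiently large $i\in\N$ both $a_i,b_i\in V$, and by construction $a_i\ll x\ll b_i$, so $x\in I(a_i,b_i)$. The real task is to prove $I(a_i,b_i)\subseteq U$. Given $y\in I(a_i,b_i)$, the length property of \autoref{Def:LLS} applied to $a_i\ll y$ and $y\ll b_i$ produces strictly causal curves $\alpha,\beta\colon[0,1]\to\mms$ from $a_i$ to $y$ and from $y$ to $b_i$, respectively. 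Concatenating and reparameterizing on $[0,1]$ gives a continuous curve $\sigma\colon[0,1]\to\mms$ from $a_i$ to $b_i$ passing through $y$ at $t=1/2$. A short case analysis confirms strict causality of $\sigma$: on either half it is inherited from $\alpha$ or $\beta$, and across the splice one invokes antisymmetry of $\leq$ (\autoref{Cor:Causality}) to rule out $\sigma_s=\sigma_t$ for $s<1/2<t$, since equality would force both to coincide with $y$, contradicting the strict causality of $\alpha$ and $\beta$. Since $\sigma_0=a_i$ and $\sigma_1=b_i$ both lie in $V$, the hypothesis forces $\sigma([0,1])\subseteq U$, whence $y\in U$.

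The main obstacle I anticipate is the converse direction: one must justify why the existence of \emph{strictly causal} connecting curves--which the hypothesis is phrased in terms of--is available throughout $I(a_i,b_i)$. This is precisely where the length property (ensuring continuous, nowhere-constant connecting curves between chronologically related points) is essential, and where antisymmetry together with \autoref{Cor:Causality} is used to check that the concatenated curve $\sigma$ remains strictly causal even at the glue point.
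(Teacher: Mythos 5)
Your proposal is correct and takes essentially the same approach as the paper, though organized as a direct proof rather than by contradiction. For the forward implication the paper simply cites \cite[Lem.~2.38]{kunzinger2018}, and your push-up argument is the standard one (your observation that mere causality suffices here is accurate). For the converse the paper assumes strong causality fails, picks $V\subset U$ as in the hypothesis, takes a timelike curve $\gamma$ through $x$ with $\gamma_{1/2\pm\delta}\in V$, and concludes tersely that $x\in I(\gamma_{1/2-\delta},\gamma_{1/2+\delta})\subset U$, a contradiction. The containment $I(\gamma_{1/2-\delta},\gamma_{1/2+\delta})\subset U$ is left implicit there, and that is precisely the step you spell out: use the length property to produce strictly causal curves from $\gamma_{1/2-\delta}$ to $y$ and from $y$ to $\gamma_{1/2+\delta}$, concatenate, check strict causality at the splice via antisymmetry (\autoref{Cor:Causality}), and invoke the hypothesis to conclude $y\in U$. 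Your version thus supplies the detail the paper glosses over; it is not a different route, just a more complete rendering of the same idea, which is helpful.
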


\begin{proof} The implication from strong causality to the second part is argued as in \cite[Lem.~2.38]{kunzinger2018} by using \autoref{Le:Pushup openness}.

The converse is similar to \cite[Thm.~3.26]{kunzinger2018}; we give a proof since we do not use the notion of localizability. Assume to the contrary that $\mms$ is not strongly causal. This means the existence of $x\in \mms$ and a neighborhood $U\subset\mms$ of $x$ such that no chronological diamond is contained in $U$. By assumption, there exists an open set $\smash{V\subset U}$ containing $x$ such that all strictly causal curves with endpoints in $V$ do not leave $\smash{U}$. Nontrivial chronology implies existence of a timelike curve $\gamma\colon[0,1]\to\mms$ passing through $x$. For sufficiently small $\delta>0$, by continuity we have $\smash{\gamma_{1/2-\delta}, \gamma_{1/2+\delta}\in V}$. By \autoref{Cor:Causality}, $\gamma$ is strictly causal, and therefore $\smash{x\in I(\gamma_{1/2-\delta}, \gamma_{1/2+\delta}) \subset U}$. This  contradicts our assumption.
\end{proof}

\begin{proposition}[Strong causality]\label{Cor:Strong causality} The space $\mms$ is strongly causal.
\end{proposition}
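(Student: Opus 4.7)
My plan is to verify the equivalent characterization of strong causality from \autoref{Le:Str caus equiv} by a limit-point argument exploiting global hyperbolicity. Suppose for contradiction strong causality fails at some $x \in \mms$. Then \autoref{Le:Str caus equiv} produces an open neighborhood $U$ of $x$ such that no subneighborhood of $x$ contained in $U$ confines all strictly causal curves with endpoints in it. Using local compactness from \autoref{Cor:Hausdorff}, I would shrink $U$ so that its closure $\bar U$ is compact. First countability of $\Top$ then supplies a decreasing neighborhood base $(V_n)_{n\in\N}$ of $x$ with $V_n \subset U$, and for each $n$ a strictly causal curve $\gamma_n \colon [0,1] \to \mms$ whose endpoints lie in $V_n$ but whose image leaves $U$. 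By construction, $\gamma_n(0) \to x$ and $\gamma_n(1) \to x$ in $\Top$.

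Next I would set $t_n := \inf\{t \in [0,1] : \gamma_n(t) \notin U\}$. Since $U$ is open and $\gamma_n$ is continuous with $\gamma_n(0) \in V_n \subset U$, we have $t_n > 0$, and the openness of $U$ combined with the definition of the infimum places $y_n := \gamma_n(t_n)$ on $\partial U$. Compactness of $\bar U$ yields a subsequence along which $y_n \to y \in \partial U$, so in particular $y \neq x$ because $x$ lies in the open set $U$, hence outside $\partial U$. From the causal character of $\gamma_n$ I read off $\gamma_n(0) \leq y_n \leq \gamma_n(1)$. Taking the limit along the subsequence, while invoking the closedness of $\leq$ from \autoref{Pr:Closed}, delivers $x \leq y$ and $y \leq x$. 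Antisymmetry of the partial order from \autoref{Th:GHy} then forces $x = y$, contradicting $y \neq x$.

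All serious ingredients, namely local compactness of $\Top$, first countability, and closedness of the causality relation under $\Top$-convergence, have been secured beforehand, so I anticipate no real obstacle. The argument is essentially the classical limit-point proof that a closed topological ordered space with appropriately rich chronology is strongly causal, streamlined to the present axiomatic setting; the one subtle step worth double-checking is that the first exit time really produces a point on $\partial U$ rather than somewhere deeper in $\mms \setminus U$, which follows at once from the openness of $U$ and continuity of $\gamma_n$.
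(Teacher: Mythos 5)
Your argument is correct and reaches the same contradiction as the paper, but the compactness device you use is genuinely different. The paper's proof first invokes nontrivial chronology to pick $a\ll x\ll b$ in $U$, uses push-up and openness of $I(a,b)$ to trap the entire image of each $\gamma^n$ inside the compact diamond $J(a,b)$ once $W_n\subset I(a,b)$, then extracts from each $\gamma^n$ an arbitrary point $z_n\in J(a,b)\setminus U$ and relies on compactness of $J(a,b)\setminus U$. You instead shrink $U$ to have compact closure (local compactness from \autoref{Cor:Hausdorff}), take the first exit time $t_n$, and land a well-defined point $y_n=\gamma_n(t_n)$ on $\partial U\subset\bar U$ without having to confine the rest of the curve at all; this bypasses the push-up step and is arguably leaner. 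Two things worth keeping in mind. First, the local compactness you invoke is itself proven in the paper precisely via nontrivial chronology and compact diamonds, so the two arguments ultimately rest on the same hypotheses even though they package them differently. Second, when you say compactness of $\bar U$ yields a convergent subsequence of $(y_n)$, what compactness gives directly is a cluster point; passing to a subsequence converging to that point requires first countability of $\Top$ (a first-countable compact space is sequentially compact), which is in place here and is consistent with \autoref{Th:Compact Polish}, but it should be flagged rather than treated as automatic. With those remarks, the proof goes through cleanly: $x\leq y\leq x$ from \autoref{Pr:Closed}, antisymmetry forces $x=y$, and $y\in\partial U$ while $x\in U$ is the contradiction.
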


\begin{proof} Assume to the contrary that there exist $x\in \mms$ and a neighborhood $U\subset\mms$ of $x$ such that for every neighborhood $V\subset U$ of $x$, there exists a  strictly causal curve $\gamma\colon[0,1]\to\mms$ with endpoints in $V$, but which leaves $U$. 

We claim the existence of a nonincreasing sequence $(W_n)_{n\in\N}$ of neighborhoods of $x$ such that
\begin{itemize}
\item $\smash{\bigcap_{n\in\N} W_n = \{x\}}$, and
\item every neighborhood of $x$ contains $W_n$ for every sufficiently large $n\in\N$.
\end{itemize}
Let $(N_n)_{n\in\N}$ be a sequence of neighborhoods of $x$ provided by first-countability of $\Top$, and set $W_n := N_1 \cap \dots \cap N_n$, where $n\in\N$. Then $\smash{x\in \bigcap_{n\in\N} W_n}$ by construction. Now suppose that the latter contains another point $z\in \mms\setminus \{x\}$. Let $W\subset\mms$ be a neighborhood of $x$ which does not contain $z$, as provided by the Hausdorff property of $\Top$ shown in \autoref{Cor:Hausdorff}. First-countability of $\Top$, however, yields $W_n \subset W$ for some $n\in\N$, a contradiction to $z\in W_n$.

By nontrivial chronology as well as \autoref{Cor:Causality}, there exist $a,b\in U$ with $a\ll x\ll b$, and these points are mutually distinct. Since $I(a,b)$ is open, again by first-countability we have $W_n\subset I(a,b)$ for sufficiently large $n\in\N$. There is a strictly causal curve $\smash{\gamma^n\colon [0,1] \to \mms}$ with $\smash{\gamma_0^n,\gamma_1^n\in W_n}$ leaving $U$. Since $\smash{\bigcap_{n\in\N} W_n = \{x\}}$ and employing first-countability of $\Top$ once more, $\smash{(\gamma_0^n)_{n\in\N}}$ and $\smash{(\gamma_1^n)_{n\in\N}}$ converge to $x$. On the other hand, since $\smash{\gamma^n_{[0,1]}\subset J(a,b)}$ yet $\smash{\gamma^n}$ leaves $U$, we find a sequence $(z_n)_{n\in\N}$ in $J(a,b)\setminus U$ such that $\smash{\gamma_0^n\leq z_n\leq \gamma_1^n}$. By compactness of $J(a,b)\setminus U$ and \autoref{Pr:Closed}, there is a point $z\in J(a,b)\setminus U$ with $x\leq z\leq x$. Since $z\notin U$, necessarily $x\neq z$,  which contradicts \autoref{Cor:Causality}.
\end{proof}

\begin{remark}[Exhaustion  by emeralds]\label{Re:Compact Polish} The foregoing discussions yield that if $\Top$ is induced by a separable metric or is $\sigma$-compact, then it can be exhausted by compact emeralds. In the first situation, simply cover $\mms$ with countably many precompact chronological diamonds, and successively take their causally convex hulls.

In particular, in this case $\mms$ can be exhausted by Polish  spaces by  \autoref{Th:Compact Polish}.
\end{remark}

\subsubsection{Limit curve theorem} Since our limit curve theorems involve uniform convergence of curves, we need to clarify the meaning of this notion for the reference topology $\Top$ being merely first-countable. Also, we have to make sense of uniform convergence of curves with varying domains; cf.~\cite[Def.~5.10]{minguzzi2022}.

We employ the notation $\N_\infty := \N\cup\{\infty\}$.

\begin{definition}[Uniform convergence of curves]\label{Def:Unif cvg} Given any $n\in\N_\infty$, let $\alpha_n \leq \beta_n$ be given real numbers.  Set $\alpha:=\alpha_\infty$ and $\beta:=\beta_\infty$.  
Let $\smash{(\gamma^n)_{n\in \N_\infty}}$ be a family of curves $\gamma^n\colon[\alpha_n,\beta_n] \to \mms$. 
We define the extension $\smash{\hat{\gamma}^n\colon \R\to\mms}$ of $\smash{\gamma^n}$ by
\begin{align}\label{Eq:Extension}
\begin{split}
\hat{\gamma}_s^n := \begin{cases} \gamma^n_{\alpha_n} & \textnormal{\textit{if} } s < \alpha_n,\\
\gamma^n_s & \textnormal{\textit{if} } \alpha_n\leq s\leq \beta_n,\\
\gamma^n_{\beta_n}& \textnormal{\textit{otherwise}}.
\end{cases}
\end{split}
\end{align}
We term $(\gamma^n)_{n\in\N}$ to \emph{converge uniformly} to $\gamma^\infty$ if 
\begin{enumerate}[label=\textnormal{\alph*.}]
\item $(\alpha_n)_{n\in\N}$ converges to $\alpha$, 
\item $(\beta_n)_{n\in\N}$ converges to $\beta$, and 
\item there exists a metric $\met$ inducing the relative topology of $\Top$ on a neighborhood of $J(\gamma_\alpha^\infty,\gamma_\beta^\infty)$ such that
\begin{align*}
\lim_{n\to\infty} \sup_{s\in\R} \met(\hat\gamma_s^n,\hat\gamma^\infty_s) = 0.
\end{align*}
\end{enumerate}
\end{definition}

\begin{remark}[Disambiguation]\label{Cor:Polish} When dealing with uniform convergence, we always work on compact subsets of $\mms$. These are Polish by \autoref{Th:Compact Polish}, whence they always admit a uniform structure. In other words, a metric $\met$ as in \autoref{Def:Unif cvg} will always be given on compact subsets of $\mms$.



In fact, since our notion of uniform convergence works with compact metrizable spaces, the uniform structure will always be  \emph{unique} \cite[Thm.~36.19]{willard}. In particular, uniform convergence with respect to the metric from the abstract \autoref{Th:Compact Polish} will hold even if uniform convergence is already known with respect to some other metric (e.g.~the one from a Polish $\Top$).
\end{remark}




We now  turn to our limit curve theorem. For now, we have to allow for uniform convergence up to reparametrization. See, however, \autoref{Cor:Cpt TGeo} below. Anyway this is no issue for the next results since the length functional is invariant under reparametrizations by \autoref{Le:Additivity}.

\begin{theorem}[Limit curve theorem]\label{Th:Limit curve theorem} Assume \autoref{Ass:GHLLS}. Let $C_0,C_1\subset\mms$ be compact subsets, and let $\smash{(\gamma^n)_{n\in\N}}$ be a sequence of strictly causal curves on $[0,1]$ such that $\smash{\gamma_0^n \in C_0}$ and $\smash{\gamma_1^n\in C_1}$ for every $n\in\N$. Then $\smash{(\gamma^n)_{n\in\N}}$ has a subsequence which --- after reparametrization --- converges uniformly, in the sense of \autoref{Def:Unif cvg}, 
\begin{enumerate}[label=\textnormal{(\roman*)}]
\item to a strictly causal curve defined on a compact interval $[\alpha,\beta] \subset \R$ of positive length unless the endpoints of the extracted subsequence converge to the same limit, and 
\item to a constant causal curve defined on $[0,1]$ otherwise.
\end{enumerate}
\end{theorem}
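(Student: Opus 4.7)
The plan is to reduce first to the case of convergent endpoints and then bifurcate according to whether these limits coincide. By compactness of $C_0$ and $C_1$, pass to a subsequence (not relabeled) with $\gamma_0^n \to x \in C_0$ and $\gamma_1^n \to y \in C_1$. Causal closedness of $\leq$ forces $x \leq y$, and every $\gamma^n$ takes values in the causal emerald $J(C_0, C_1)$, which is compact by the compactness of emeralds under the standing hypothesis and Polish by the Polish compact sets theorem; fix a compatible metric $\met$ on it in the sense allowed by the definition of uniform convergence.

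If $x = y$, the aim is uniform convergence to the constant curve at $x$. Strong causality, via its characterization through strictly causal curves, provides, for every open neighborhood $U$ of $x$, a smaller neighborhood $V \subset U$ such that every strictly causal curve with endpoints in $V$ remains in $U$. Since $\gamma_0^n, \gamma_1^n \to x$, every $\met$-ball about $x$ eventually traps the full image of $\gamma^n$, yielding the required uniform convergence in the sense of the definition of uniform convergence, with $\alpha_n = 0$, $\beta_n = 1$, and limiting constant curve $t \mapsto x$ on $[0,1]$.

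For the principal case $x < y$, the aim is to extract a nontrivial limit curve on a compact interval $[\alpha, \beta]$ of positive length, which requires reparametrizing each $\gamma^n$ so that the resulting family admits a compactness-type extraction. Since we have renounced any rectifiability of causal curves, reparametrization by $\met$-arc length is unavailable. Instead, I would use a generalized time function $\sft$ on $J(C_0, C_1)$ furnished by the existence theorem for such functions, reparametrizing each $\gamma^n$ so that $\sft \circ \gamma^n$ is affine on the interval $[\sft(\gamma_0^n), \sft(\gamma_1^n)]$, whose endpoints converge to $\alpha := \sft(x) < \sft(y) =: \beta$. A diagonal extraction at a countable dense subset of $(\alpha,\beta)$, combined with compactness of $J(C_0, C_1)$ and with the monotonicity of each reparametrized curve against $\sft$, produces a subsequence converging pointwise on this dense set to an order-preserving map (by causal closedness). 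The map is then extended by one-sided continuity to all of $[\alpha, \beta]$, and strict causality of the resulting limit is salvaged by excising any plateaus on which the extension is constant, using that the total $\sft$-range $\beta - \alpha$ is strictly positive.

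The main obstacle will be promoting pointwise to uniform convergence and ensuring continuity of the limit in the absence of both Lipschitz control and a priori continuity of $\sft$. This requires combining the push-up lemma, strong causality, and the length property to preclude jumps of the pointwise limit: morally, strong causality furnishes a substitute for a modulus of continuity by forbidding reparametrized curves to escape small chronological diamonds without advancing against $\sft$. I expect the argument to parallel the limit-curve theorems for bounded Lorentzian metric spaces of Minguzzi--Suhr, but with added flexibility from working directly with the order structure rather than with a prescribed metric, and to culminate in uniform convergence in the sense of the definition of uniform convergence after a final reparametrization that matches the domains $[\alpha_n,\beta_n]$ to $[\alpha,\beta]$.
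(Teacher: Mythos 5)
Your handling of the case $x = y$ via strong causality matches the paper exactly and is correct. The genuine gap is in the main case $x < y$, and it turns on a detail you flag but do not resolve: the generalized time function $\sft$ furnished by \autoref{Th:Ex time function} is \emph{not} continuous — it satisfies $\sft(x) < \sft(y)$ for $x < y$ and nothing more. Consequently your proposed reparametrization is not well-defined: since $\gamma^n$ is continuous but $\sft$ need not be, $\sft \circ \gamma^n$ can have jumps, so its range need not equal $[\sft(\gamma_0^n), \sft(\gamma_1^n)]$ and there is no inverse with which to build the affine reparametrization. Worse, the very identification of the limiting domain $[\alpha,\beta]$ with $[\sft(x),\sft(y)]$ requires $\sft(\gamma_0^n)\to\sft(x)$ and $\sft(\gamma_1^n)\to\sft(y)$, which again presupposes continuity of $\sft$. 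The subsequent ``excise the plateaus'' step rests on the same unavailable regularity; with a discontinuous time function you cannot control where the pointwise limit stalls, nor conclude strict causality of the trimmed curve.

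The paper circumvents precisely this obstruction by routing through the Minguzzi--Suhr theory of bounded Lorentzian metric spaces: after enlarging to a diamond $J(a,b)$ compactly containing the eventual images and passing to the quotient $(\tilde{X},\tilde{\tau})$ of \autoref{Le:BLMS}, one has at hand a \emph{bounded, uniformly continuous} time function on $\tilde{X}$ (Minguzzi--Suhr, Thm.\ 5.7) together with their limit curve theorem (Thm.\ 5.12), which deliver the reparametrized uniform limit in one stroke; the result is then pulled back along the local homeomorphism $\pi$ on $I(a,b)$. Your closing sentence anticipates that the argument ``parallels'' Minguzzi--Suhr, but in fact the paper does not parallel it — it literally invokes it after transporting the problem into their setting, because the order structure alone does not produce the continuous time function your scheme needs. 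To salvage your proposal, the missing step is to establish a \emph{continuous} time function on the compact emerald (or on a quotient of it), which is exactly the content imported from \cite{minguzzi2022}.
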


As for \autoref{Th:Compact Polish}, the lengthy proof relies on constructions from \cite{minguzzi2022} and is deferred to \autoref{Ch:Loose}.

\subsection{Geodesy}\label{Sec:Geodesy} We now turn to \emph{maximizing curves}, especially \emph{geodesics}. See \cite[Def.~2.33]{kunzinger2018} or \cite[Def.~5.9]{minguzzi2022} for previous Lorentzian analogs. Global hyperbolicity somewhat replaces the local compactness assumption from metric geometry, which yields  existence of curves \emph{minimizing} length in this setting  \cite[Prop.~2.5.19]{burago2003}.

\begin{definition}[Maximizing curves]\label{Def:Max Curv} A curve $\gamma\colon[0,1]\to\mms$ is called 
\begin{enumerate}[label=\textnormal{\alph*.}]
\item \emph{maximizing} if it is causal --- in particular continuous --- and
\begin{align*}
l(\gamma_0,\gamma_1) = \Len_l(\gamma),
\end{align*}
\item \emph{strictly causal maximizing} if it is maximizing and  strictly causal,
\item \emph{timelike maximizing} if it is maximizing and  timelike, and
\item \emph{null maximizing} if it is maximizing and null.
\end{enumerate}
\end{definition}

Analogously, as after \autoref{Def:Curves} we define rough versions of these. Note our convention of rough maximizers $\gamma$ as above always satisfying $\Len_l(\gamma)\geq 0$.

The proof of the following is standard, cf.~\cite[Prop.~2.34]{kunzinger2018} and \cite[Prop.~6.2]{minguzzi2022}.

\begin{lemma}[Characterizing rough maximizers]\label{Le:ADD GEO} The following hold for every rough causal curve $\gamma\colon[0,1]\to\mms$.
\begin{enumerate}[label=\textnormal{(\roman*)}]
\item The curve $\gamma$ is a rough maximizer if and only if for every $0\leq r<s<t\leq 1$,
\begin{align*}
l(\gamma_r, \gamma_s) + l(\gamma_s,\gamma_t) = l(\gamma_r,\gamma_t).
\end{align*}
\item If $\gamma$ is rough null, it is a rough null maximizer.
\item If $\gamma$ is a rough causal maximizer, then any of its subsegments are rough causal maximizing. The same holds by replacing every occurrence of ``rough causal'' by ``rough strictly causal'', ``rough timelike'', or ``rough null''.
\end{enumerate}
\end{lemma}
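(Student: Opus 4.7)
The plan is to prove the three parts in order, with part (i) doing the main work and parts (ii) and (iii) following easily from it.

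First I observe a basic monotonicity. For any partition $P=\{0=t_0<t_1<\dots<t_m=1\}$ of $[0,1]$, set $\sigma(P):=\sum_{i=1}^{m}l(\gamma_{t_{i-1}},\gamma_{t_i})$. Iterating the reverse triangle inequality \eqref{Eq:Reverse tau} gives $\sigma(P')\leq\sigma(P)\leq l(\gamma_0,\gamma_1)$ whenever $P'$ refines $P$; in particular $\Len_l(\gamma)\leq l(\gamma_0,\gamma_1)$, with equality precisely when $\sigma(P)=l(\gamma_0,\gamma_1)$ for \emph{every} partition $P$ of $[0,1]$. All involved quantities are finite and nonnegative by \autoref{Cor:Finiteness l} and rough causality, so additive cancellations are legitimate below.

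For the forward direction of (i), given $0\leq r<s<t\leq 1$ I apply the above observation to the partition $P:=\{0,r,s,t,1\}$ (dropping coinciding endpoints if $r=0$ or $t=1$) to conclude
\[
l(\gamma_0,\gamma_r)+l(\gamma_r,\gamma_s)+l(\gamma_s,\gamma_t)+l(\gamma_t,\gamma_1)=l(\gamma_0,\gamma_1).
\]
The reverse triangle inequality applied to the coarser partition $\{0,r,t,1\}$ gives
\[
l(\gamma_0,\gamma_r)+l(\gamma_r,\gamma_t)+l(\gamma_t,\gamma_1)\leq l(\gamma_0,\gamma_1),
\]
while $l(\gamma_r,\gamma_t)\geq l(\gamma_r,\gamma_s)+l(\gamma_s,\gamma_t)$. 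Comparing the last two displays and cancelling the finite endpoint terms yields the claimed $l(\gamma_r,\gamma_s)+l(\gamma_s,\gamma_t)=l(\gamma_r,\gamma_t)$. Conversely, assuming the three-point additivity, an induction on $m$ (repeatedly inserting one breakpoint at a time and invoking the hypothesis at the rightmost triple) shows $\sigma(P)=l(\gamma_0,\gamma_1)$ for every partition $P$, hence $\Len_l(\gamma)=l(\gamma_0,\gamma_1)$ and $\gamma$ is a rough maximizer.

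Part (ii) is immediate: rough causality forces $l(\gamma_s,\gamma_t)\geq0$, while rough nullity gives $l(\gamma_s,\gamma_t)\leq0$ for every $0\leq s<t\leq 1$, whence $l(\gamma_s,\gamma_t)=0$ throughout and $\sigma(P)=0=l(\gamma_0,\gamma_1)$ for every partition. For part (iii), fix $0\leq a<b\leq 1$; the three-point additivity from (i) for $\gamma$ restricts to the same identity on every triple $a\leq r<s<t\leq b$, so the converse half of (i) applied to $\gamma|_{[a,b]}$ (after affine reparametrization, which preserves length by \autoref{Le:Additivity}) shows it is again a rough causal maximizer. The strict-causal, timelike, and null addenda are automatic since each of the conditions $\gamma_s<\gamma_t$, $\gamma_s\ll\gamma_t$, and $\gamma_s\not\ll\gamma_t$ is inherited by any subinterval. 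No step looks substantive here; the only care needed is ensuring finiteness before cancelling in the forward direction of (i), which is exactly what \autoref{Cor:Finiteness l} provides.
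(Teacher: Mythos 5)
Your proof is correct. The paper itself does not spell out a proof — it labels this result as standard and cites Kunzinger--Sämann Prop.~2.34 and Minguzzi--Suhr Prop.~6.2 — but your argument is exactly the expected one: reduce rough-maximizer status to the statement that $\sigma(P)=l(\gamma_0,\gamma_1)$ for \emph{every} partition (using $\Len_l(\gamma)=\inf_P\sigma(P)\le\sigma(P)\le l(\gamma_0,\gamma_1)$ and the refinement monotonicity from the reverse triangle inequality), extract the three-point identity by comparing a partition with its coarsening and cancelling the finite nonnegative endpoint terms (justified by \autoref{Cor:Finiteness l}), recover the converse by induction on partition size, and then derive (ii) from $l\equiv0$ on the curve and (iii) by inheritance of the three-point identity on $[a,b]$ together with reparametrization invariance (\autoref{Le:Additivity}).

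One cosmetic remark: for (iii) you do not actually need the reparametrization detour. Since \autoref{Le:Additivity} gives $\Len_l(\gamma)=\Len_l(\gamma|_{[0,a]})+\Len_l(\gamma|_{[a,b]})+\Len_l(\gamma|_{[b,1]})$ and each summand is bounded above by the corresponding $l$-value, the chain $l(\gamma_0,\gamma_1)=\Len_l(\gamma)\le l(\gamma_0,\gamma_a)+l(\gamma_a,\gamma_b)+l(\gamma_b,\gamma_1)\le l(\gamma_0,\gamma_1)$ forces $\Len_l(\gamma|_{[a,b]})=l(\gamma_a,\gamma_b)$ directly. Both routes are fine; yours reuses (i) cleanly, this one avoids invoking it.
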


To prove existence of maximizers, we need the following lemma, whose proof is analogous to \cite[Thm.~6.3]{minguzzi2022}.

\begin{lemma}[Upper semicontinuity of the length functional]\label{Le:Upper semiconti} Let $\smash{(\gamma^n)_{n\in\N}}$ be a sequence of rough causal curves $\smash{\gamma^n\colon[0,1]\to\mms}$ converging pointwise to a rough causal curve $\gamma\colon [0,1]\to\mms$. Then
\begin{align*}
\limsup_{n\to\infty} \Len_l(\gamma^n) \leq \Len_l(\gamma).
\end{align*}
\end{lemma}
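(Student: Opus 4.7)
The argument hinges on the fact that $\Len_l$ is defined as an \emph{infimum} over partitions together with upper semicontinuity of $l$.

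The plan is as follows. First, I would reduce to the nontrivial case $\Len_l(\gamma) < \infty$; otherwise the inequality is automatic. Since $\gamma$ is rough causal, $l(\gamma_s,\gamma_t) \geq 0$ for $s<t$, and by \autoref{Cor:Finiteness l} these values are in fact real-valued. Fix $\varepsilon>0$ and, using the definition of length as an infimum, choose a partition $0 = t_0 < t_1 < \dots < t_m = 1$ of $[0,1]$ with
\begin{align*}
\sum_{i=1}^m l(\gamma_{t_{i-1}}, \gamma_{t_i}) \leq \Len_l(\gamma) + \varepsilon.
\end{align*}

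Next, I would exploit pointwise convergence at the finitely many partition points. For each $i\in\{1,\dots,m\}$, we have $\gamma^n_{t_{i-1}} \to \gamma_{t_{i-1}}$ and $\gamma^n_{t_i}\to \gamma_{t_i}$, so upper semicontinuity of $l$ (imposed in \autoref{Def:signed time sep}) yields
\begin{align*}
\limsup_{n\to\infty} l(\gamma^n_{t_{i-1}}, \gamma^n_{t_i}) \leq l(\gamma_{t_{i-1}}, \gamma_{t_i}) < \infty.
\end{align*}
Since the $m$ terms are eventually finite and (for $n$ large) nonnegative by rough causality of $\gamma^n$, subadditivity of $\limsup$ over finite sums gives
\begin{align*}
\limsup_{n\to\infty} \sum_{i=1}^m l(\gamma^n_{t_{i-1}}, \gamma^n_{t_i}) \leq \sum_{i=1}^m l(\gamma_{t_{i-1}}, \gamma_{t_i}) \leq \Len_l(\gamma) + \varepsilon.
\end{align*}

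Finally, combining this with the fact that $\Len_l(\gamma^n) \leq \sum_{i=1}^m l(\gamma^n_{t_{i-1}}, \gamma^n_{t_i})$ for every $n$ (since $\Len_l(\gamma^n)$ is an infimum over partitions, and in particular is bounded above by the sum for our fixed partition) gives
\begin{align*}
\limsup_{n\to\infty} \Len_l(\gamma^n) \leq \Len_l(\gamma) + \varepsilon.
\end{align*}
Letting $\varepsilon \downarrow 0$ yields the desired inequality. There is no serious obstacle: the proof is an almost immediate consequence of the infimum definition of $\Len_l$, upper semicontinuity of $l$, and the finite-sum version of the subadditivity of $\limsup$; the only small point requiring care is ruling out $\pm\infty$-valued terms, which is handled by \autoref{Cor:Finiteness l} together with rough causality.
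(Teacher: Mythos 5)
Your proof is correct and follows essentially the same route as the paper's: fix a near-optimal partition for $\gamma$, invoke upper semicontinuity of $l$ at the finitely many partition points, and compare. The only cosmetic difference is that the paper decomposes $\Len_l(\gamma^n)$ via the additivity lemma before bounding each subsegment and uses a uniform $\varepsilon/(2m)$ estimate, whereas you bound $\Len_l(\gamma^n)$ directly by the fixed-partition sum using the infimum definition and finite subadditivity of $\limsup$; both are valid.
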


\begin{proof}  Recall by \autoref{Cor:Finiteness l}  that  $\smash{l_+}$ is finite, hence $\smash{\Len_l(\gamma)<\infty}$. By definition of $\smash{\Len_l(\gamma)}$, given $\varepsilon > 0$ there exist $m\in\N$ and a partition  $0 =t_0 < t_1 < \dots < t_{m-1} < t_m=1$ of $[0,1]$ such that
\begin{align*}
\sum_{i=1}^m l(\gamma_{t_{i-1}}, \gamma_{t_i}) \leq \Len_l(\gamma) + \frac{\varepsilon}{2}.
\end{align*}
By point\-wise convergence and causality we have
\begin{align*}
l(\gamma^n_{t_{i-1}}, \gamma^n_{t_i}) \leq l(\gamma_{t_{i-1}}, \gamma_{t_i}) + \frac{\varepsilon}{2m}
\end{align*}
for every $i = 1,\dots,m$ and every sufficiently large $n\in\N$. Combining these two inequalities with  \autoref{Le:Additivity} then leads to
\begin{align*}
\Len_l(\gamma^n) &= \sum_{i=1}^m \Len_l(\gamma^n\big\vert_{[t_{i-1}, t_i]}) \leq \sum_{i=1}^m l(\gamma^n_{t_{i-1}},\gamma^n_{t_i}) \leq \Len_l(\gamma) + \varepsilon.\qedhere
\end{align*}
\end{proof}

\begin{theorem}[Nonsmooth Avez--Seifert theorem]\label{Th:Ex geo}  Assume \autoref{Ass:GHLLS}. For every $x,y\in\mms$ satisfying $x<y$, there exists a strictly causal maximizing curve   connecting $x$ to $y$. In particular, the length property from \autoref{Def:LLS} holds for $x$ and $y$ merely obeying $x<y$. 

Moreover, every rough causal curve connecting $x$ to itself is a constant, and in particular a continuous maximizer.
\end{theorem}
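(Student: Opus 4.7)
The plan is to split on whether $l(x,y) > 0$ or $l(x,y) = 0$, producing a maximizer in each case by combining the length property (where available) with the compactness furnished by \autoref{Th:GHy} and the limit curve theorem \autoref{Th:Limit curve theorem}. First, in the chronological case $x \ll y$, I would invoke the length property of \autoref{Def:LLS} to pick strictly causal curves $\gamma^n \colon [0,1] \to \mms$ from $x$ to $y$ with $\Len_l(\gamma^n) \to l(x,y) > 0$. Each $\gamma^n$ lies in the compact diamond $J(x,y)$, so \autoref{Th:Limit curve theorem} applies with $C_0 = \{x\}$ and $C_1 = \{y\}$; since $x \neq y$, the degenerate branch of the dichotomy is excluded and I obtain, after reparametrization and extraction, a strictly causal limit $\gamma$ on a compact interval, which I reparametrize onto $[0,1]$ using \autoref{Le:Additivity} without altering length. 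Combining $\Len_l(\gamma) \leq l(\gamma_0,\gamma_1) = l(x,y)$ with the upper semicontinuity in \autoref{Le:Upper semiconti} forces $\Len_l(\gamma) = l(x,y)$, so $\gamma$ is strictly causal maximizing.

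For the remaining case $x < y$ with $l(x,y) = 0$, the length property is silent. Using \autoref{Le:Sequ lemma}, pick $b_i \downarrow y$, so that $y \ll b_i$ and $b_i \to y$; then $x \leq y \ll b_i$ together with push-up from \autoref{Le:Pushup openness} gives $x \ll b_i$ for every $i \in \N$. The previous case produces strictly causal maximizers $\gamma^i \colon [0,1] \to \mms$ from $x$ to $b_i$, all lying in the compact diamond $J(x, b_1)$. \autoref{Th:Limit curve theorem} applied with $C_0 = \{x\}$ and the compact set $C_1 = \{b_i : i \in \N\} \cup \{y\}$ again produces a strictly causal limit curve $\gamma$ from $x$ to $y$ after reparametrization, non-degenerate since $x \neq y$. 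The sandwich $0 \leq \Len_l(\gamma) \leq l(x,y) = 0$ makes $\gamma$ a maximizer. In both cases the supremum in the length property is realized at $l(x,y)$, proving the second sentence of the theorem.

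For the final claim, if $\gamma \colon [0,1] \to \mms$ is any rough causal curve with $\gamma_0 = \gamma_1$, then $\gamma_0 \leq \gamma_s \leq \gamma_1 = \gamma_0$ for every $s \in [0,1]$, and the antisymmetry built into \autoref{Def:signed time sep} forces $\gamma_s \equiv \gamma_0$; the resulting constant curve is trivially continuous, and $\Len_l(\gamma) = 0 = l(\gamma_0,\gamma_0)$ by \autoref{Cor:Causality}, making it a (causal) maximizer. The main obstacle is the second step: \autoref{Def:LLS} does not a priori connect null-related points, and resolving this hinges on the monotone timelike approximation of \autoref{Le:Sequ lemma} together with the non-degenerate branch of \autoref{Th:Limit curve theorem}, both of which become available precisely because $x \neq y$ and diamonds are compact.
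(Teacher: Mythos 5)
Your proof is correct and follows essentially the same route as the paper's: use the length property to extract an approximating sequence in the chronological case, pass to a limit via \autoref{Th:Limit curve theorem}, and close with \autoref{Le:Upper semiconti} and \autoref{Le:Additivity}; for $x<y$ with $l(x,y)=0$ approximate via \autoref{Le:Sequ lemma}; and for loops use antisymmetry (the content of \autoref{Cor:Causality}). The only difference is a small streamlining in the null-related case: the paper approximates both endpoints ($x_n\uparrow x$, $y_n\downarrow y$), then uses continuity of $l_+$ together with upper semicontinuity of $\Len_l$ to conclude, whereas you approximate only $y$ and instead exploit the sandwich $0\leq\Len_l(\gamma)\leq l(x,y)=0$ to obtain maximality directly — this avoids invoking continuity of $l_+$ there and is a legitimate shortcut.
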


\begin{proof} The last statement follows from \autoref{Cor:Causality}.

To show the remaining claim, first we assume $x\ll y$, and in particular $x\neq y$. Recall that $l(x,y)< \infty$ by \autoref{Cor:Finiteness l}. Thus, owing to the length property  let $(\gamma^n)_{n\in\N}$ be a sequence of strictly causal curves from $x$ to $y$ with $\smash{\Len_l(\gamma^n)\to l(x,y)}$ as $n\to\infty$. \autoref{Th:Limit curve theorem} thus yields a uniform limit of reparametrizations of $\smash{(\gamma^n)_{n\in\N}}$ which is a strictly causal curve connecting $x$ to $y$. Since $x\neq y$,  we easily see these curves can be rescaled to curves $\smash{\sigma^n\colon[0,1]\to\mms}$, where $n\in\N$, and $\sigma\colon[0,1]\to\mms$ defined on $[0,1]$, respectively, in such a way that $\smash{(\sigma^n)_{n\in\N}}$ converges pointwise to $\sigma$. Since the length functional is invariant under reparametrizations by \autoref{Le:Additivity}, from \autoref{Le:Upper semiconti} we obtain
\begin{align*}
l(x,y) \geq \Len_l(\sigma)\geq \limsup_{n\to\infty}\Len_l(\sigma^n) = l(x,y).
\end{align*} 
This forces equality to hold throughout.

If only $x<y$, we first use \autoref{Le:Sequ lemma} to find sequences $(x_n)_{n\in\N}$ and $(y_n)_{n\in\N}$ with $x_n\uparrow x$  as $n\to\infty$ and $y_n\downarrow y$ as $n\to\infty$, respectively. Then for every $n\in\N$, the points $x_n$ and $y_n$ are connected by a strictly causal maximizer $\smash{\gamma^n\colon[0,1]\to\mms}$ by the preceding part. As  above, by also taking continuity of $\smash{l_+}$ and the fact $x\neq y$ into account, this leads to the desired strictly causal maximizer from $x$ to $y$.
\end{proof}

At this point, we do not know in general whether chronologically related points $x$ and $y$ can be connected by a \emph{timelike} maximizer. It might happen that all such curves have null subsegments. This will be ruled out later by the assumption of regularity in \autoref{Sec:Regularity}, which will be very useful for our purposes.

A topic we first address is the question of reparametrization of curves or maximizers by $l$-arclength. Again the procedure is quite standard \cite{burago2003}, and we refer the reader to \cite[Prop.~3.34, Cor.~3.35]{kunzinger2018} for details; the inherent assumption $\smash{l_+(x,x)=0}$ for every $x\in\mms$ is guaranteed by \autoref{Cor:Causality}.

Let us call a rough causal curve $\gamma\colon[0,1]\to\mms$  \emph{$l$-rectifiable} if $\smash{\Len_l(\gamma\big\vert_{[s,t]}) > 0}$ for every $0 \leq s<t\leq 1$ \cite[Def.~2.29]{kunzinger2018}. By definition of the length functional, every $l$-rectifiable curve is timelike \cite[Lem.~2.30]{kunzinger2018}.

\begin{proposition}[Reparameterizations of maximizers]\label{Pr:Reparam}  Every given causal curve $\gamma\colon[0,1]\to\mms$ which is $l$-rectifiable has a reparametrization $\bar{\gamma}\colon[0,\Len_l(\gamma)]\to \mms$ by $l$-arclength, i.e.~for every $0 \leq t\leq \Len_l(\gamma)$,
\begin{align*}
\Len_l(\bar\gamma\big\vert_{[0,t]}) = t.
\end{align*}

In particular, every timelike maximizer $\gamma\colon[0,1]\to\mms$ has a proper-time repara\-metrization $\bar\gamma\colon[0,l(\gamma_0,\gamma_1)]\to\mms$, i.e.~for every $0 \leq s<t \leq l(\gamma_0,\gamma_1)$,
\begin{align*}
l(\bar\gamma_s,\bar\gamma_t) = t-s.
\end{align*}
\end{proposition}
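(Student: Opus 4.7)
The plan is to reparametrize by the accumulated length. Define the function $\phi\colon[0,1]\to [0,\Len_l(\gamma)]$ by
\begin{align*}
\phi(s) := \Len_l(\gamma\big\vert_{[0,s]}),
\end{align*}
with the convention $\phi(0) := 0$. By the additivity part of \autoref{Le:Additivity}, for every $0\leq s<t\leq 1$ one has $\phi(t) = \phi(s) + \Len_l(\gamma\big\vert_{[s,t]})$, so $\phi$ is nondecreasing; moreover $\phi(1) = \Len_l(\gamma)$. The $l$-rectifiability hypothesis ensures $\Len_l(\gamma\big\vert_{[s,t]}) > 0$ whenever $s<t$, hence $\phi$ is \emph{strictly} increasing.

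The  step that requires care is the continuity of $\phi$. By definition of the length functional as an infimum over partitions, one has the trivial bound
\begin{align*}
0 \leq \phi(t) - \phi(s) = \Len_l(\gamma\big\vert_{[s,t]}) \leq l(\gamma_s,\gamma_t)
\end{align*}
for every $0\leq s<t\leq 1$. Since $l(\gamma_t,\gamma_t) = 0$ by the chronology part of \autoref{Cor:Causality}, and since $l$ is upper semicontinuous in the ambient topology, the continuity of $\gamma$ yields $\limsup_{s\to t^-} l(\gamma_s,\gamma_t) \leq 0$, while causality gives $l(\gamma_s,\gamma_t)\geq 0$; an analogous argument applies as $s\to t^+$. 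Therefore $\phi$ is continuous at every $t\in[0,1]$. Being continuous, strictly increasing, and mapping $\{0,1\}$ onto $\{0,\Len_l(\gamma)\}$, the map $\phi$ is a homeomorphism onto $[0,\Len_l(\gamma)]$, and its inverse $\phi^{-1}$ is a continuous, strictly increasing map $[0,\Len_l(\gamma)]\to[0,1]$.

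Set $\bar\gamma := \gamma\circ\phi^{-1}$. By construction $\bar\gamma$ is a reparametrization of $\gamma$ in the sense preceding \autoref{Le:Additivity}, hence inherits causality and has the same length. For every $0\leq t\leq \Len_l(\gamma)$, reparametrization invariance from \autoref{Le:Additivity} and the definition of $\phi$ give
\begin{align*}
\Len_l(\bar\gamma\big\vert_{[0,t]}) = \Len_l(\gamma\big\vert_{[0,\phi^{-1}(t)]}) = \phi(\phi^{-1}(t)) = t,
\end{align*}
which is the claimed $l$-arclength parametrization.

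For the second statement, let $\gamma\colon[0,1]\to\mms$ be a timelike maximizer. Because $\gamma$ is timelike, $\gamma_s\ll\gamma_t$ for every $s<t$, so  \autoref{Le:ADD GEO}(iii) implies that $\smash{\gamma\big\vert_{[s,t]}}$ is itself a timelike maximizer, and \autoref{Le:ADD GEO}(i) gives $\Len_l(\gamma\big\vert_{[s,t]}) = l(\gamma_s,\gamma_t) > 0$. Hence $\gamma$ is $l$-rectifiable and $\Len_l(\gamma) = l(\gamma_0,\gamma_1)$, so the first part applies and yields a reparametrization $\bar\gamma\colon[0,l(\gamma_0,\gamma_1)]\to\mms$. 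Since $\bar\gamma$ remains a timelike maximizer, applying the same identity $l(\bar\gamma_s,\bar\gamma_t) = \Len_l(\bar\gamma\big\vert_{[s,t]})$ together with the already established arclength identity gives, for $0\leq s<t\leq l(\gamma_0,\gamma_1)$,
\begin{align*}
l(\bar\gamma_s,\bar\gamma_t) = \Len_l(\bar\gamma\big\vert_{[0,t]}) - \Len_l(\bar\gamma\big\vert_{[0,s]}) = t-s,
\end{align*}
which is the desired proper-time property. The only substantive step is the continuity of $\phi$; everything else is bookkeeping via \autoref{Le:Additivity} and \autoref{Le:ADD GEO}.
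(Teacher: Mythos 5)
Your proposal is correct and is essentially the standard argument, which the paper simply delegates to \cite[Prop.~3.34, Cor.~3.35]{kunzinger2018}: define the accumulated-length function $\phi$, show it is strictly increasing via $l$-rectifiability and continuous via the bound $\Len_l(\gamma|_{[s,t]}) \leq l(\gamma_s,\gamma_t)$ together with $l(\gamma_t,\gamma_t)=0$ (from \autoref{Cor:Causality}) and upper semicontinuity of $l$, then set $\bar\gamma := \gamma\circ\phi^{-1}$. Your reconstruction correctly identifies both key ingredients the paper flags in the remark preceding the proposition.
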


This naturally motivates the following definition. 

\begin{definition}[Geodesics] A curve $\gamma\colon[0,1]\to\mms$ will be called \emph{$l$-geodesic} if every $0\leq s<t\leq 1$ satisfies
\begin{align}\label{Eq:Aff}
l(\gamma_s,\gamma_t) = (t-s)\,l(\gamma_0,\gamma_1) > 0.
\end{align}
\end{definition}

\begin{remark}[About terminology] We point out the slight change of terminology of curves satisfying \eqref{Eq:Aff} to e.g.~\cite{braun2022, cavalletti2020} from ``timelike geodesics'' to ``$l$-geodesics''. This is in line with the  nomenclature for probability measures in \autoref{Def:lp geo}. Moreover, it is debatable to call a rough null curve obeying  the equality in \eqref{Eq:Aff} ``geodesic'' --- every reparametrization again has the same property, and it is a priori unclear which parametrization of null maximizers is the ``correct'' one;
 see \cite[Rem.~9]{mccann+} however. For this reason and since we mainly deal with curves satisfying \eqref{Eq:Aff}, we have decided  on this short-hand terminology.
\end{remark}

 For emphasis, we may term a curve satisfying \eqref{Eq:Aff} \emph{affinely parametrized}. It is clear from the definition of $\smash{\Len_l}$ that $l$-geodesics are maximizing. 

For later use, we define the (a priori possibly empty) class of $l$-geodesics by
\begin{align}\label{Eq:TGeo def}
\begin{split}
\TGeo(\mms) &:= \{\gamma \colon[0,1]\to\mms : l(\gamma_s,\gamma_t) = (t-s)\,l(\gamma_0,\gamma_1) > 0 \\
&\qquad\qquad\textnormal{for every }0\leq s<t\leq 1\}.
\end{split}
\end{align}

\subsection{Regularity}\label{Sec:Regularity} It is a priori unclear whether $l$-geodesics are continuous. Even worse, as  remarked even their general existence may a priori fail. Both issues will be ruled out by our assumption of regularity introduced in this section.

\subsubsection{Definition and a characterization} 

\begin{definition}[Regularity]\label{Def:Regularity} We term $\mms$  \emph{regular} if every strictly causal maximizer \textnormal{[sic]} $\gamma\colon[0,1]\to\mms$ with $\gamma_0\ll \gamma_1$ is already timelike.
\end{definition}

Note the implicit continuity hypothesis in \autoref{Def:Regularity}  which is never void by \autoref{Th:Ex geo}.

In other words, \autoref{Def:Regularity} asks every \emph{maximizer} to be either timelike or null. By \autoref{Le:Pushup openness}, the space $\mms$ is regular if and only if every maximizer $\gamma\colon[0,1]\to\mms$ with the property $\gamma_s\ll \gamma_t$ for some $0\leq s<t\leq 1$ is already timelike. This  notion is motivated by the main consequence of a similar concept from \cite{kunzinger2018} called \emph{regular localizability} \cite[Def.~3.16]{kunzinger2018}, namely that every maximizer in a regularly locali\-zable Lorentzian pre-length space à la \cite{kunzinger2018} is either timelike or null \cite[Thm.~3.18]{kunzinger2018}. 

\begin{assumption}[Second standing assumption]\label{Ass:REG} From now on, in addition to \autoref{Ass:GHLLS} \textnormal{(}length metric spacetime with compact diamonds\textnormal{)}, we assume $\mms$ to be regular according to  \autoref{Def:Regularity}.
\end{assumption}

As \autoref{Ass:REG} is standing, for terminological convenience we henceforth mostly drop the addenda ``globally hyperbolic'', ``regular'', and ``length'', and call the space under consideration simply a \emph{metric spacetime}.

\autoref{Pr:Reparam} provides the desired correspondence between geodesics and maximizers: every geodesic is clearly a maximizer, while every maximizer admits a reparametrization belonging to $\TGeo(\mms)$.

For the convenience of the reader, we now provide an equivalent definition of our setting obeying \autoref{Ass:REG} which collects the properties of metric spacetimes we most frequently use. 

\begin{theorem}[Characterization of reinforcement]\label{Th:Equiv LLS} Let $\smash{l\colon \mms^2\to [0,\infty]\cup\{-\infty\}}$ be a signed time separation function. The tuple $(\mms,l)$ is a globally hyperbolic, regular length metric spacetime if and only if the following conditions hold.
\begin{enumerate}[label=\textnormal{(\roman*)}]
\item \textnormal{\textbf{Nonempty chronology.}} For every $x\in\mms$, the sets $\smash{I^\pm(x)}$ are nonempty.
\item \textnormal{\textbf{Global hyperbolicity.}} The space $\mms$ has is globally hyperbolic according to the conclusion of \autoref{Th:GHy}.
\item \textnormal{\textbf{Regular timelike geodesy.}} Every points $x,y\in\mms$ with $x\ll y$ can be connected by a continuous element of $\TGeo(\mms)$\footnote{By \autoref{Le:Continuity geos} below, every rough strictly causal maximizer from $x$ to $y$ is timelike and continuous.}.
\end{enumerate}
\end{theorem}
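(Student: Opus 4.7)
Assume $(\mms,l)$ satisfies \autoref{Ass:GHLLS} and \autoref{Ass:REG}. Item (i) follows from the nontrivial chronology axiom of \autoref{Def:LLS}: taking a timelike curve through $x$ as midpoint, its endpoints lie in $I^-(x)$ and $I^+(x)$ respectively. Item (ii) is precisely \autoref{Th:GHy}. For (iii), given $x\ll y$, \autoref{Th:Ex geo} furnishes a strictly causal maximizer $\sigma\colon[0,1]\to\mms$ joining them; regularity (\autoref{Def:Regularity}) upgrades $\sigma$ to a timelike maximizer, and \autoref{Pr:Reparam} yields a proper-time reparametrization. Linearly rescaling to $[0,1]$ gives a continuous element of $\TGeo(\mms)$.

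\textbf{Backward direction.} Assume (i)--(iii). The axioms of a signed time separation function from \autoref{Def:signed time sep} are already in place, so only the two clauses of \autoref{Def:LLS} together with compact diamonds and regularity remain. For \emph{nontrivial chronology}, given $x\in\mms$ pick $a\ll x\ll b$ via (i) and use (iii) to obtain continuous $l$-geodesics $\sigma^-,\sigma^+\in\TGeo(\mms)$ from $a$ to $x$ and from $x$ to $b$. Concatenate them after reparametrizing so that $x$ is attained at time $1/2$; the resulting curve $\gamma$ is continuous, and for $s<1/2<t$ the chronological relations $\gamma_s\ll x\ll\gamma_t$ combine with push-up (\autoref{Le:Pushup openness}) to yield $\gamma_s\ll\gamma_t$, so $\gamma$ is timelike with $x$ as midpoint. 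For the \emph{length property}, given $x\ll y$ the continuous element of $\TGeo(\mms)$ produced by (iii) is strictly causal (in fact timelike) and satisfies $\Len_l(\gamma)=l(x,y)$ by the defining identity of $\TGeo(\mms)$; the reverse inequality $\Len_l(\sigma)\leq l(x,y)$ for any causal curve $\sigma$ joining $x$ to $y$ is a telescoping consequence of the reverse triangle inequality \eqref{Eq:Reverse tau}. Compact diamonds are subsumed by (ii).

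\textbf{Regularity and the main obstacle.} The delicate point is that (iii) only asserts the \emph{existence} of a continuous timelike $l$-geodesic between any two chronologically related points, while \autoref{Def:Regularity} demands that \emph{every} strictly causal maximizer between such points be timelike. To close this gap, I would appeal to \autoref{Le:Continuity geos} as flagged by the footnote to the statement: under (i)--(iii), any rough strictly causal maximizer between $x\ll y$ is automatically timelike and continuous, which is even stronger than what \autoref{Def:Regularity} requires. The main obstacle is thus isolating and proving this implication: given a strictly causal maximizer $\gamma$ with $\gamma_0\ll\gamma_1$, the function $f(t):=l(\gamma_0,\gamma_t)$ is continuous (both upper semicontinuous from \autoref{Def:signed time sep} and, since $f\geq 0$, lower semicontinuous via the semicontinuity of $l_+$) and nondecreasing with $f(0)=0<f(1)$; a putative plateau of $f$ on some $[s^*,t^*]$ would represent a null subsegment $\gamma_{s^*}\not\ll\gamma_{t^*}$. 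The expected resolution, carried out in \autoref{Le:Continuity geos}, is to exploit (iii) to bypass this subsegment by a genuinely timelike curve whose length strictly exceeds the null-subsegment's contribution (which is $0$), contradicting the maximality of $\gamma$ via \autoref{Le:ADD GEO}. With this implication at hand, regularity is established and the equivalence is complete.
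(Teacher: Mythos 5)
The forward direction and the portion of the backward direction establishing \autoref{Ass:GHLLS} are correct: nontrivial chronology follows by concatenating $l$-geodesics supplied by (i) and (iii), the length property because those $l$-geodesics are themselves strictly causal curves of length $l(x,y)$, and compact diamonds from (ii). The gap lies in deriving regularity (\autoref{Def:Regularity}) from (i)--(iii), and two things go wrong there. First, citing \autoref{Le:Continuity geos} is circular: its proof invokes \autoref{Ass:REG} (it reaches a contradiction precisely by producing a maximizer with a null subsegment, which regularity forbids), so it cannot be the source of regularity in the backward direction. Second, and fatally, the proposed "resolution" cannot exist: you want to "bypass this subsegment by a genuinely timelike curve whose length strictly exceeds the null-subsegment's contribution (which is $0$)," but every rough causal curve $\sigma$ from $\gamma_{s^*}$ to $\gamma_{t^*}$ satisfies $\Len_l(\sigma) \leq l(\gamma_{s^*},\gamma_{t^*}) = 0$ by the very definition of the length functional, so no strictly longer substitute can exist and no contradiction with maximality of $\gamma$ is produced. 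The substitution in \autoref{Le:Continuity geos} is of a different kind: it replaces the \emph{whole} curve $\gamma$ by a concatenation that is a strictly causal maximizer with a null subsegment between chronologically related endpoints, and then invokes regularity --- exactly the assumption not yet available here.

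What evidently makes the backward direction go through is reading condition (iii) together with its footnote, i.e.~as the stronger assertion that \emph{every} rough strictly causal maximizer between chronologically related points is timelike, not merely that one continuous timelike $l$-geodesic exists. Under that reading regularity is immediate, since a continuous strictly causal maximizer $\gamma$ with $\gamma_0\ll\gamma_1$ is in particular such a rough maximizer. You correctly flag this as the crux, but your sketch neither derives the strong form of (iii) from the bare existence form nor has a noncircular use of \autoref{Le:Continuity geos} available, so the regularity step remains unjustified as written.
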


We wrap up our discussion with examples obeying \autoref{Ass:REG}.

\begin{example}[Lipschitz spacetimes]\label{Ex:Lipschitz} Let $(\mms,\Rmet)$ be a Lipschitz spacetime, i.e.~$\mms$ is a connected smooth topological manifold endowed with a Lorentzian metric  tensor $\Rmet$ of Lipschitz regularity. Since $\mms$ is causally plain \cite[Def.~1.16, Cor.~1.17]{chrusciel2012}, it naturally induces a length metric spacetime \cite[Thm.~5.12]{kunzinger2018}. By \cite[Thm.~1.1]{graf2018}, we know regularity of such spaces (and here Lipschitz regularity plays a crucial role). Consequently, the metric spacetime induced by a globally hyperbolic Lipschitz spacetime satisfies \autoref{Ass:REG}.
\end{example}

\begin{example}[Convex neighborhoods] If $\mms$ admits convex neighborhoods in the sense of \cite[Def.~6.10]{minguzzi2022}, then $\mms$ is regular \cite[Prop.~6.11]{minguzzi2022}.
\end{example}

\begin{example}[Causal reversal]\label{Ex:Causal reversal} We define the function $\smash{l^\leftarrow\colon \mms^2\to [0,\infty]\cup\{-\infty\}}$ by $\smash{l^\leftarrow(x,y) := l(y,x)}$. Evidently, this is a signed time separation function as in \autoref{Def:signed time sep}, and the induced structure $\smash{(\mms,l^\leftarrow)}$ on $\mms$ is called \emph{causal reversal} of $(\mms,l)$. One readily verifies the former satisfies \autoref{Ass:REG} if and only if the latter does. In particular, a path $\gamma\colon[0,1]\to\mms$ belongs to $\TGeo(\mms)$ according to \eqref{Eq:TGeo def} if and only if its time reversal $\smash{\gamma^\leftarrow\colon[0,1]\to\mms}$ with $\smash{\gamma^\leftarrow_t := \gamma_{1-t}}$ belongs to 
\begin{align*}
\TGeo^\leftarrow(\mms) &:= \{\sigma\colon[0,1]\to\mms : l^\leftarrow(\sigma_s,\sigma_t) = (t-s)\,l^\leftarrow(\sigma_0,\sigma_1) > 0\\
&\qquad\qquad \textnormal{for every }0\leq s<t\leq 1\}.
\end{align*}
\end{example}

\subsubsection{Continuity of $l$-geodesics} The additional assumption of regularity will be useful to ``lift'' the geometry of the base space $\mms$ to the space of Borel probability measures over it, cf.~\autoref{Le:Geodesics plan} and  \cite{braun2022,braun2023, mccann+}. As a first step, we note the following fact  established in \cite{mccann+}, whose proof we repeat for convenience. 

Recall \eqref{Eq:TGeo def} for the definition of $\TGeo(\mms)$.

\begin{lemma}[Continuity of $l$-geodesics]\label{Le:Continuity geos} All elements of $\TGeo(\mms)$ are continuous.
\end{lemma}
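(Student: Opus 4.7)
The plan is to fix $\gamma\in\TGeo(\mms)$ and a point $t_0\in[0,1]$, take an arbitrary sequence $(t_n)_{n\in\N}$ in $[0,1]$ with $t_n\to t_0$, and show that every subsequential limit $x$ of $(\gamma_{t_n})_{n\in\N}$ equals $\gamma_{t_0}$. Since $\gamma([0,1])\subset J(\gamma_0,\gamma_1)$ by the reverse triangle inequality and the latter diamond is compact by \autoref{Ass:GHLLS}, this would yield $\gamma_{t_n}\to\gamma_{t_0}$ and hence continuity of $\gamma$ at $t_0$. After extracting a further monotone subsequence I may assume $t_n\uparrow t_0$; the case $t_n\downarrow t_0$ is mirror-symmetric and the cases $t_0\in\{0,1\}$ are one-sided variants of the generic case.

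The first step pins down the $l$-distances of $x$ from the endpoints. Upper semicontinuity of $l$ applied to $(\gamma_0,\gamma_{t_n})\to(\gamma_0,x)$ and to $(\gamma_{t_n},\gamma_1)\to(x,\gamma_1)$, combined with the reverse triangle inequality \eqref{Eq:Reverse tau} used in the form $l(\gamma_0,\gamma_1)\geq l(\gamma_0,x)+l(x,\gamma_1)$, forces
\begin{equation*}
l(\gamma_0,x)=t_0\,l(\gamma_0,\gamma_1),\qquad l(x,\gamma_1)=(1-t_0)\,l(\gamma_0,\gamma_1).
\end{equation*}
Applying the same upper semicontinuity argument to the pairs $(\gamma_{t_n},\gamma_{t_0})$ and invoking a reverse triangle inequality with split point $x$ between $\gamma_0$ and $\gamma_{t_0}$ yields both $x\leq\gamma_{t_0}$ and $l(x,\gamma_{t_0})=0$.

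The crux is a null-versus-timelike concatenation argument that excludes $x\neq\gamma_{t_0}$. Suppose to the contrary that $x<\gamma_{t_0}$. Then \autoref{Th:Ex geo} supplies a nonconstant strictly causal maximizer from $x$ to $\gamma_{t_0}$, necessarily of zero length. I concatenate it with continuous timelike maximizers from $\gamma_0$ to $x$ and from $\gamma_{t_0}$ to $\gamma_1$, which exist by \autoref{Th:Ex geo} and \autoref{Ass:REG} since $\gamma_0\ll x$ and $\gamma_{t_0}\ll\gamma_1$ (using $0<t_0<1$). The result is a continuous strictly causal maximizer from $\gamma_0$ to $\gamma_1$; as $\gamma_0\ll\gamma_1$, regularity (\autoref{Def:Regularity}) forces it to be timelike, and then \autoref{Le:ADD GEO} yields that every sub-segment is timelike, contradicting $l(x,\gamma_{t_0})=0$. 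The boundary cases are handled analogously, attaching a single continuous timelike maximizer to the hypothetical null segment from $\gamma_0$ to $x$ (or $x$ to $\gamma_1$) and again invoking regularity.

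The main obstacle is that $l$-geodesics between fixed endpoints need not be unique, so one cannot identify $x$ with $\gamma_{t_0}$ by a global uniqueness statement; the monotonicity of $(t_n)$ is essential to produce a direct causal relation $x\leq\gamma_{t_0}$ (or the reverse), and regularity then supplies just enough rigidity to rule out the only remaining obstruction, namely that $x$ and $\gamma_{t_0}$ might lie on a common null segment embedded in an otherwise maximizing strictly causal curve between $\gamma_0$ and $\gamma_1$.
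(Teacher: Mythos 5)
Your proof is correct and takes essentially the same route as the paper's. The core mechanism is identical: affine parametrization forces any subsequential limit $x$ of $\gamma_{t_n}$ to satisfy $x\leq\gamma_{t_0}$ (or $\gamma_{t_0}\leq x$) with $l(x,\gamma_{t_0})=0$, and if $x\neq\gamma_{t_0}$, concatenating maximizers through $\gamma_0,x,\gamma_{t_0},\gamma_1$ produces a strictly causal maximizer between the chronologically related endpoints $\gamma_0$ and $\gamma_1$ that contains a nontrivial null subsegment, contradicting regularity (\autoref{Def:Regularity}). The only cosmetic differences are that the paper phrases this as a contradiction to one-sided continuity directly (using causal closedness, \autoref{Pr:Closed}, in place of upper semicontinuity of $l$ to get the causal relation between $x$ and $\gamma_{t_0}$) and uses strictly causal maximizers for all three concatenated pieces, whereas you upgrade the outer two to timelike via regularity before concatenating; neither choice changes the argument. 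One small thing worth making explicit, which both your write-up and the paper leave implicit: the concatenation needs to be verified strictly causal before regularity can be invoked, which follows from \autoref{Cor:Causality}\ref{La:Dwa} since each piece is strictly causal and hence nonconstant on every nontrivial subinterval.
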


\begin{proof} We first show right-continuity of a given $\gamma\in\TGeo(\mms)$; left-continuity is argued analogously. 

First, let $0<s<1$, suppose to the contrary that $\gamma$ is not right-continuous at $s$. By compactness of $J(\gamma_0,\gamma_1)$, this means the existence of a sequence $(\varepsilon_n)_{n\in\N}$ of numbers $0<\varepsilon_n < 1-s$ such that $(\gamma_{s+\varepsilon_n})_{n\in\N}$ converges to some $\smash{\gamma_{s+}\in\mms}$ with $\smash{\gamma_{s+}\neq \gamma_s}$. \autoref{Pr:Closed} shows $\smash{\gamma_s < \gamma_{s+}}$. Since $\gamma_0 \ll \gamma_s$ and $\smash{\gamma_{s+} \ll \gamma_1}$ by the definition of $\TGeo(\mms)$ and continuity of $\smash{l_+}$, concatenating
\begin{itemize}
\item a strictly causal maximizer $\smash{\sigma^1 \colon[0,1/3]\to \mms}$ from $\smash{\gamma_0}$ to $\gamma_s$,
\item a strictly causal maximizer $\smash{\sigma^2\colon[1/3,2/3] \to \mms}$ from $\gamma_s$ to $\smash{\gamma_{s+}}$, and
\item a strictly causal maximizer $\smash{\sigma^3\colon[2/3,1]\to \mms}$ from $\gamma_{s+}$ to $\gamma_1$
\end{itemize}
yields a strictly causal curve $\sigma\colon[0,1]\to\mms$ from $\gamma_0$ to $\gamma_1$. Then
\begin{align*}
l(\gamma_0,\gamma_1) &\geq \Len_l(\sigma)\\
&= \Len_l(\sigma^1) + \Len_l(\sigma^2) + \Len_l(\sigma^3)\\
&= l(\gamma_0,\gamma_s) + l(\gamma_s,\gamma_{s+}) + l(\gamma_{s+},\gamma_t)\\
&= l(\gamma_0,\gamma_1);
\end{align*}
here, in the second line, we have used \autoref{Le:Additivity}, while the last line employs the definition of $\TGeo(\mms)$. In particular, $\sigma$ is a strictly causal maximizer from $\gamma_0$ to $\gamma_1$. On the other hand, affine parametrization implies $\smash{l(\gamma_s,\gamma_{s+})=0}$. Thus, $\sigma$  has a null subsegment, in contradiction with regularity.

Right-continuity at $s=0$ is argued similarly. Here we directly start with a null strictly causal maximizer from $\smash{\gamma_0}$ to $\smash{\gamma_{0+}}$ and continue with a strictly causal maximizer connecting the chronologically related points $\smash{\gamma_{0+}}$ and $\gamma_1$.
\end{proof}

Recall \autoref{Th:Compact Polish} and \autoref{Cor:Polish} for  how $\smash{\Cont([0,1];\mms)}$ appearing below can be  naturally and unambiguously endowed with a topology of uniform convergence on compact subsets of $\mms$, even if $\Top$ is not globally metrizable. 

The next result can be interpreted as a limit curve theorem for $l$-geodesics. By \autoref{Le:Continuity geos}, its (independent) proof is somewhat more straightforward than that of  the limit curve theorem (\autoref{Th:Limit curve theorem}) for general strictly causal curves.

\begin{corollary}[Compactness of uniformly $l$-geodesics]
\label{Cor:Cpt TGeo} Given any $r>0$ and any compact sets $C_0,C_1\in \mms$, the set
\begin{align*}
G_r := \{\gamma\in \TGeo(\mms) : \gamma_0\in C_0,\, \gamma_1\in C_1,\, l(\gamma_0,\gamma_1)\geq r\}
\end{align*}
is compact with respect to uniform convergence in $\Cont([0,1];\mms)$.
\end{corollary}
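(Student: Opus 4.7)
The plan is to apply the limit curve theorem (\autoref{Th:Limit curve theorem}), use length upper semicontinuity and regularity to identify the limit curve as a timelike maximizer, and then recover its affine parametrization via a Helly-type argument on the reparametrizations.

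Let $(\gamma^n)_{n \in \N} \subset G_r$. Since $l$ is upper semicontinuous, finite (\autoref{Cor:Finiteness l}), and thus bounded above on the compact set $C_0 \times C_1$, passing to a subsequence yields $\gamma^n_0 \to x_0 \in C_0$, $\gamma^n_1 \to x_1 \in C_1$, and $L_n := l(\gamma^n_0, \gamma^n_1) \to L^\infty < \infty$. Combining upper semicontinuity of $l$ with lower semicontinuity of $l_+$ (and $l_+ = l$ at $(x_0, x_1)$ since $L^\infty \geq r > 0$) forces $l(x_0, x_1) = L^\infty \geq r$, and in particular $x_0 \neq x_1$ by \autoref{Cor:Causality}. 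Applying \autoref{Th:Limit curve theorem} in case (i) furnishes a further subsequence along with reparametrizations $\phi_n \colon [\alpha_n, \beta_n] \to [0,1]$ (continuous, strictly increasing, onto) such that $\alpha_n \to \alpha$, $\beta_n \to \beta$ with $\alpha < \beta$, and $\gamma^n \circ \phi_n$ converges uniformly (in the sense of \autoref{Def:Unif cvg}) to a strictly causal curve $\sigma \colon [\alpha, \beta] \to \mms$ with $\sigma(\alpha) = x_0$ and $\sigma(\beta) = x_1$.

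I would then identify $\sigma$ as a timelike maximizer. Reparametrization invariance of length (\autoref{Le:Additivity}) together with \autoref{Le:Upper semiconti} yields $\Len_l(\sigma) \geq \limsup \Len_l(\gamma^n \circ \phi_n) = \lim L_n = L^\infty = l(x_0, x_1)$, while iterated reverse triangle inequality gives $\Len_l(\sigma) \leq l(x_0, x_1)$; hence $\sigma$ is a causal maximizer between the chronologically related endpoints $x_0 \ll x_1$, and therefore timelike by regularity (\autoref{Ass:REG}). To recover the affine structure, Helly's selection theorem extracts a further subsequence with $\phi_n \to \phi$ pointwise on $[\alpha, \beta]$ to a nondecreasing $\phi$ satisfying $\phi(\alpha) = 0$ and $\phi(\beta) = 1$. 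For $\alpha \leq s < t \leq \beta$ at continuity points of $\phi$, upper semicontinuity of $l$ applied to the $l$-geodesic identity $l((\gamma^n \circ \phi_n)_s, (\gamma^n \circ \phi_n)_t) = (\phi_n(t) - \phi_n(s))\, L_n$ gives $l(\sigma_s, \sigma_t) \geq (\phi(t) - \phi(s))\, L^\infty$, and the maximizer identity (\autoref{Le:ADD GEO}) applied to the partition $\{\alpha, s, t, \beta\}$ forces equality. Any jump or flat interval of $\phi$ would then produce a null subsegment of $\sigma$ via this identity (using continuity of $\sigma$), contradicting that $\sigma$ is timelike; hence $\phi$ is a homeomorphism $[\alpha, \beta] \to [0,1]$, and $\gamma^\infty_t := \sigma(\phi^{-1}(t))$ defines an element of $\TGeo(\mms)$ (continuous by \autoref{Le:Continuity geos}) with endpoints in $C_0, C_1$ and length $L^\infty \geq r$, so $\gamma^\infty \in G_r$.

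Uniform convergence $\gamma^n \to \gamma^\infty$ on $[0,1]$ then follows from the decomposition $\gamma^n = (\gamma^n \circ \phi_n) \circ \phi_n^{-1}$, uniform convergence of $\gamma^n \circ \phi_n$ to $\sigma$, uniform continuity of $\sigma$ on the compact interval $[\alpha, \beta]$, and uniform convergence $\phi_n^{-1} \to \phi^{-1}$ on $[0,1]$ (a Dini/P\'olya-type consequence of monotone pointwise convergence to a continuous limit on a compact interval). The main obstacle is reconciling the arbitrary reparametrization produced by the limit curve theorem with the rigid affine parametrization carried by $l$-geodesics; regularity (\autoref{Ass:REG}) is the crucial ingredient that upgrades $\sigma$ from strictly causal to timelike, forces the limit reparametrization $\phi$ to be a homeomorphism, and so permits the affine structure to pass cleanly from the approximating sequence to the limit.
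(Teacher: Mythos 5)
Your proof is correct in outline, but it takes a genuinely different route from the paper. The paper deliberately avoids \autoref{Th:Limit curve theorem} altogether: it builds the candidate limit directly by a diagonal argument extracting pointwise limits $\gamma_t^n \to \gamma_t$ at rational $t$ (using metrizability and compactness of $J(C_0,C_1)$ from \autoref{Th:Compact Polish} plus causal closedness), extends to irrational $t$ by monotone right-approximation and antisymmetry, observes the affine identity $l(\gamma_s,\gamma_t)=(t-s)\,l(\gamma_0,\gamma_1)\geq (t-s)r$ passes to the limit by continuity of $l_+$ so $\gamma\in\TGeo(\mms)$ (hence continuous by \autoref{Le:Continuity geos}), and then upgrades pointwise-on-rationals to uniform convergence by a strong-causality argument using \autoref{Cor:Strong causality}. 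That approach is self-contained and deliberately independent of the more technical limit curve theorem; the paper even remarks just before the corollary that this is why they proved it this way. Your route instead applies \autoref{Th:Limit curve theorem} and then has to undo the unspecified reparametrization: a Helly extraction on $\phi_n$, identification of $\sigma$ as a timelike maximizer via \autoref{Le:Upper semiconti} and regularity, and then the observation that regularity forces the Helly limit $\phi$ to be a homeomorphism. The conceptual payoff of your version is that it makes explicit that the general limit curve theorem already contains this result, and isolates regularity as exactly what upgrades the unparametrized limit to an affinely parametrized $l$-geodesic; the cost is that it leans on the heavy Appendix B machinery and needs several technical repairs (extending curves and $\phi_n$ to a common domain so that \autoref{Le:Upper semiconti} and Helly apply, and checking $\phi(\alpha)=0$, $\phi(\beta)=1$). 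One small inaccuracy worth correcting: a flat interval of $\phi$ does yield a null subsegment of the strictly causal curve $\sigma$ (contradicting timelikeness), but a \emph{jump} of $\phi$ at $r$ yields instead $l(\sigma_r,\sigma_r)>0$ by sending $s\nearrow r$, $t\searrow r$ through continuity points and using continuity of $\sigma$ with upper semicontinuity of $l$; the contradiction there is to chronology (\autoref{Cor:Causality}), not to timelikeness of $\sigma$.
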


\begin{proof} Since the set $G_r$ is closed by continuity of $\smash{l_+}$, it suffices to prove its precompactness. By \autoref{Le:Continuity geos}, we have $G_r\subset\Cont([0,1];\mms)$, so that the statement makes sense. Let $\smash{(\gamma^n)_{n\in\N}}$ be a sequence in $G_r$. 

Following an idea developed in \cite{beran2023}, we construct a candidate $\gamma\colon[0,1]\to\mms$ for a uniform limit of $(\gamma^n)_{n\in\N}$ along a nonrelabeled subsequence. By compactness, a diagonal procedure --- recalling that $J(C_0,C_1)$ is Polish by \autoref{Th:Compact Polish} --- and  \autoref{Pr:Closed} there exists a nonrelabeled subsequence of $\smash{(\gamma^n)_{n\in\N}}$ such that for every $t\in [0,1]\cap \Q$, there exists $\gamma_t\in J(C_0,C_1)$ such that $\smash{\gamma_t^n \to \gamma_t}$ as $n\to\infty$. The rest of the candidate $\gamma$ is constructed by approximation from the right. For $t\in [0,1]\setminus \Q$, let $\smash{(t_i)_{i\in \N}}$ and $\smash{(s_j)_{j\in\N}}$ be sequences in $(t,1]\cap \Q$ which converge to $t$. Up to passing to subsequences, we may and will assume the existence of the limits $\smash{x_t := \lim_{i\to \infty} \gamma_{t_i}}$ and $\smash{y_s := \lim_{j\to\infty} \gamma_{s_j}}$, which evidently lie in $J(C_0,C_1)$. We claim $x_t = y_s$. Indeed, given any $i\in\N$ we have $\smash{\gamma_{s_j} \leq \gamma_{t_i}}$ for every large enough $j\in\N$, and thus $y_s \leq x_t$ by first letting $j\to\infty$ and then $i\to\infty$. Analogously, we obtain $x_t\leq y_s$, and the claim follows from causality. In particular, this argument shows the limit $\gamma_t$ can be unambiguously defined as $\smash{\lim_{i\to\infty}\gamma_{t_i}}$, where $(t_i)_{i\in\N}$ is \emph{any} sequence as above. This finishes the construction of $\gamma$.

Since $\smash{l(\gamma_0^n,\gamma_1^n) \geq r}$ for every $n\in\N$, clearly $\gamma\in\TGeo(\mms)$ --- in particular, $\gamma$ is continuous by \autoref{Le:Continuity geos}.

Lastly, we claim uniform convergence of $\smash{(\gamma^n)_{n\in\N}}$ to $\gamma$. Let $\met$ metrize the relative topology on $J(C_0,C_1)$. Given any $\varepsilon > 0$, for every $0\leq t\leq 1$, by \autoref{Cor:Strong causality} (strong causality) there is an open neighborhood $V_t\subset\mms$ of $\gamma_t$ such that every strictly causal curve with endpoints in $V_t$ does not leave  the metric ball $\smash{\sfB^\met(\gamma_t,\varepsilon)}$. Up to shrinking $V_t$, we may and will assume it to be an open $\met$-ball of radius $0< \delta_t < \varepsilon$. We cover the compact set $\gamma_{[0,1]}$ by finitely many of these sets $V_{t_1},\dots,V_{t_m}$, where $m\in\N$, and $t_1,\dots,t_m$ may and will be chosen to be rational. By the compactness of $[0,1]$ and uniform continuity of $\gamma$, we may and will further assume that $\vert t_i - t_{i+1}\vert \leq \delta$ for every $i=1,\dots,m-1$, and that $0 = t_1 < t_2 < \dots < t_{m-1} < t_m = 1$ is an increasing $\delta$-net in $[0,1]$, where $\delta > 0$ is such that every $0\leq s,t\leq 1$ with $\vert s-t\vert \leq \delta$ satisfies $\met(\gamma_s,\gamma_t) \leq \varepsilon$.  By our choice of $t_1,\dots,t_m$, we have
\begin{align}\label{Eq:Comb1}
\sup_{t\in[t_i,t_{i+1}]} \met(\gamma_{t_i},\gamma_t) \leq 2\,\varepsilon.
\end{align}
Moreover, by pointwise convergence of $\smash{(\gamma^n)_{n\in\N}}$ to $\gamma$ on the rationals, there exists $n_i \in \N$ such that $\smash{\gamma_{t_i^n} \in V_{t_i}}$ and $\smash{\gamma_{t_{i+1}^n} \in V_{t_{i+1}}}$ for every $n\in\N$ with $n\geq n_i$. Since $t_1 < t_2 < \dots < t_{m-1} < t_m$, we have $\smash{V_{t_i} \cap V_{t_{i+1}}\neq \emptyset}$. Therefore, for every $n\geq n_i$ the restriction $\smash{\gamma^n\big\vert_{[t_i,t_{i+1}]}}$ is a strictly causal curve which starts in $\smash{V_{t_i}}$, passes through $\smash{V_{t_i} \cap V_{t_{i+1}}\neq \emptyset}$, and ends in $\smash{V_{t_{i+1}}}$; strong causality then easily implies
\begin{align}\label{Eq:Comb2}
\sup_{t\in[t_i,t_{i+1}]}\met(\gamma_t^n, \gamma_{t_i}) \leq 2\,\varepsilon.
\end{align}
Whenever $\smash{n\geq \max\{n_1, \dots,n_{m-1}\}}$, combining  \eqref{Eq:Comb1} and \eqref{Eq:Comb2} finally yields
\begin{align*}
\sup_{t\in[0,1]} \met(\gamma_t^n,\gamma_t) \leq 4\,\varepsilon.\tag*{\qedhere}
\end{align*}
\end{proof}



\subsubsection{Intermediate points} In order to study geodesics of probability measures later, given any $x,y\subset\mms$ and any $0\leq t\leq 1$, and following \cite[Ch.~2]{mccann2020}, we now investigate the regularity of the (possibly empty) \emph{$t$-intermediate point set}
\begin{align*}
Z_t(x,y) &:= \{z\in J(x,y) : l(x,z) = t\,l(x,y),\, l(z,y) = (1-t)\,l(x,y)\}.
\end{align*}
Moreover, for $C\subset \mms^2$ we define
\begin{align}\label{Eq:Zt(C)}
\begin{split}
Z_t(C) &:= \bigcup_{(x,y)\in C} Z_t(x,y),\\
Z(C) &:= \bigcup_{t\in[0,1]} Z_t(C).
\end{split}
\end{align}

\begin{remark}[Initial and final points] If $x,y\in\mms$ obey $x\ll y$ then $Z_0(x,y) = \{x\}$ \cite[Rem.~2.6]{mccann2020}. Indeed, if $x'\in Z_0(x,y)\setminus \{x\}$ then $l(x,x') = 0$, and together with $l(x',y) = l(x,y)>0$ this implies the concatenation of a null maximizer from $x$ to $x'$ and an  $l$-geodesic from $x'$ to $y$ to be a maximizer from $x$ to $y$ which intermediately changes its causal character from null to timelike. This contradicts regularity of $\mms$. Analogously, we show $Z_1(x,y) = \{y\}$ if $l(x,y)>0$.
\end{remark}

The following is a nonsmooth analog to \cite[Lem.~2.5]{mccann2020}.

\begin{lemma}[Intermediate point sets inherit compactness I]
\label{Le:Zt lemma} Let $0 \leq t\leq 1$. If $C\subset \mms^2$ is precompact, so are $Z_t(C)$ and $Z(C)$. Moreover, if $C$ is compact, so are $Z_t(C)$ and $Z(C)$.
\end{lemma}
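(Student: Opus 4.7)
The precompactness statement follows immediately from the compactness one: if $C$ is precompact then its closure $\bar C$ is compact with $Z_t(C) \subset Z_t(\bar C)$ and $Z(C) \subset Z(\bar C)$. So I concentrate on $C$ compact.

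First, I would establish a common compact container. Let $K_1 := \pi_1(C)$ and $K_2 := \pi_2(C)$ be the projections of $C$ onto the two factors; these are compact as continuous images. Any $z \in Z_t(x,y)$ with $(x,y) \in C$ satisfies $x \leq z \leq y$, so $z \in J(x,y) \subset J(K_1,K_2)$. Hence both $Z_t(C)$ and $Z(C)$ sit inside $J(K_1,K_2)$, which is compact by \autoref{Cor:K-GH} and Polish by \autoref{Th:Compact Polish}. Inside this metrizable compactum, compactness of $Z_t(C)$ and $Z(C)$ reduces to \emph{sequential} closedness.

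Second, I would verify sequential closedness of $Z_t(C)$ by passing to the limit in the two defining equations. Let $z_n \to z$ in $J(K_1,K_2)$ with $z_n \in Z_t(x_n,y_n)$ for some $(x_n,y_n) \in C$. By compactness of $C$, extract a nonrelabeled subsequence along which $(x_n,y_n) \to (x,y) \in C$. The closedness of $\leq$ from \autoref{Pr:Closed} forces $x \leq z \leq y$, so all three pairs $(x_n,z_n)$, $(z_n,y_n)$, $(x_n,y_n)$ eventually lie in the closed set $\{l \geq 0\}$, as does the limit. The key observation is that $l$ is \emph{continuous} on $\{l \geq 0\}$: its upper semicontinuity is built into \autoref{Def:signed time sep}, and on $\{l \geq 0\}$ one has $l = l_+$, which is lower semicontinuous and finite-valued by \autoref{Cor:Finiteness l}. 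Passing to the limit in the identities $l(x_n,z_n) = t\,l(x_n,y_n)$ and $l(z_n,y_n) = (1-t)\,l(x_n,y_n)$ then yields $z \in Z_t(x,y) \subset Z_t(C)$.

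Third, for $Z(C)$ I would run essentially the same argument but additionally extract a convergent subsequence $t_n \to t \in [0,1]$ from the compactness of $[0,1]$, where $z_n \in Z_{t_n}(x_n,y_n)$. Everything else carries over verbatim, yielding $z \in Z_t(x,y) \subset Z(C)$. I do not expect any significant obstacle here: the only mildly delicate point is making explicit that $l$ is jointly continuous on the closed causal relation $\{l \geq 0\}$, which combines the one-sided semicontinuity of $l$ and $l_+$ with the finiteness afforded by \autoref{Cor:Finiteness l}, and this is exactly what allows the limit passage to go through without invoking any further structure.
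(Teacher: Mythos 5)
Your proof is correct and takes essentially the same approach as the paper: both contain $Z_t(C)$ and $Z(C)$ in the compact emerald $J(\bar C_1,\bar C_2)$ via \autoref{Cor:K-GH}, and both invoke the closedness of $\leq$ (\autoref{Pr:Closed}) together with the continuity of $l_+$ (upper semicontinuity of $l$ plus lower semicontinuity and finiteness of $l_+$) to pass to the limit in the defining equalities of $Z_t$ when $C$ is compact. The only cosmetic difference is that the paper handles the precompact case directly by observing $Z(C)\subset J(\bar C_1,\bar C_2)$ without first establishing the compact case, whereas you reduce one to the other; both are fine.
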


\begin{proof} Let  $C_1, C_2\subset\mms$ be the projections of $C$ to the first and second coordinate, respectively. We first assume precompactness of $C$. By its definition,  $Z(C)$ is a subset of $J(\bar{C}_1,\bar{C}_2)$ which is compact by \autoref{Cor:K-GH}, and hence $Z(C)$ is precompact. In turn, this implies precompactness of $Z_t(C) \subset Z(C)$.

The same argument combined with the continuity of $\smash{l_+}$  and \autoref{Cor:Causality} gives the second statement provided $C$ is compact.
\end{proof}

\subsection{Lorentzian optimal transport}\label{Sub:LOT} Next, we review the optimal transport problem in our Lorentzian setting. We refer to \cite{cavalletti2020,eckstein2017,kell2018,mccann2020, mondinosuhr2022,suhr2018} for details. Their results carry over with no essential change to our slightly different setting.

From now on, we insist on $\Top$ being Polish. As outlined in \autoref{Re:Compact Polish}, under mild conditions this can be relaxed by working on a compact exhaustion, but to streamline the presentation we directly assume Polishness of $\Top$. Whenever needed, $\met$ will denote a complete and separable metric inducing $\Top$.

Fix a Radon measure $\meas$ on $\mms$. To avoid pathologies, we assume $\meas$ is nontrivial, but we do \emph{not} assume $\meas$ has full support in general. By \autoref{Re:Compact Polish}, $\meas$ is $\sigma$-finite (which has to be assumed if $\Top$ is merely first-countable). We often write
\begin{align}\label{Eq:Space}
\scrM := (\mms,l,\meas).
\end{align}

\begin{definition}[Metric measure spacetime]\label{Def:Metric measure spacetime} The  triple $\scrM$ in \eqref{Eq:Space}, i.e.~a regular  Polish length metric spacetime $(\mms,l)$ with compact diamonds according to our second standing \autoref{Ass:REG} endowed with a nontrivial Radon measure $\meas$, is called  a \emph{metric measure spacetime}.
\end{definition}

Also, we write
\begin{align*}
\scrM^\leftarrow := (\mms, l^\leftarrow,\meas)
\end{align*}
for the structure induced by the causal reversal from \autoref{Ex:Causal reversal}.

\subsubsection{Suslin measurability}\label{Sub:Suslin} In \autoref{Ch:Localization} we need to discuss Suslin measurability, basic notions of which are briefly summarized here. See \cite{srivastava1998} for details.

A subset $E$ of a Polish space $X$ is called \emph{Suslin measurable}, or briefly \emph{Suslin}, if it is the continuous image of a Borel subset of a Polish space (possibly different from $X$). In the last clause, one can  equivalently remove  ``Borel subset of a''. Another characterization of Suslin measurability is that $E$ is the projection of a Borel subset of a Polish product space, i.e.~there are a Polish space $Y$ and a Borel set $F\subset X\times Y$ such that $E=\pr_1(F)$, where $\pr_1\colon X\times Y\to X$ is defined as $\pr_1(x,y) := x$.

The crucial property we use later is \emph{universal measurability} of all Suslin sets. Here, a subset $E$ of a Polish space $X$  is universally measurable if, for every Borel probability measure $\mu$ on $X$, it is measurable with respect to the completion of the Borel $\sigma$-field of $X$ with respect to $\mu$.

\subsubsection{Notation from probability theory}\label{Sub:Notation prob} Let $\scrP(\mms)$ denote the space of all Borel probability measures on $\mms$, endowed with the narrow topology, as detailed below in \autoref{Sub:Weak convergence}. The Dirac measure at $x\in\mms$ is denoted $\delta_x$. Let $\scrP_\comp(\mms)$ and $\scrP^\ac(\mms,\meas)$ denote the spaces of  compactly supported and $\meas$-absolutely continuous measures in $\scrP(\mms)$, respectively. We set $\smash{\scrP_\comp^\ac(\mms,\meas) := \scrP_\comp(\mms)\cap\scrP^\ac(\mms,\meas)}$. When writing $\mu = \rho\,\meas + \mu_\perp$ for a given $\mu\in\scrP(\mms)$, we mean the Lebesgue decomposition of $\mu$ with respect to $\meas$, where $\rho$ is the Radon--Nikodým density of its $\meas$-absolutely continuous part, and $\smash{\mu_\perp}$ is its $\meas$-singular part.

The following terminology will be useful in the sequel. It has an evident analog for sequences of probability measures.

\begin{definition}[Uniformly compact support]\label{Def:Unif cpt supp} A family $\scrC\subset\scrP(\mms)$ is called \emph{uniformly compactly supported} if there exists a compact subset $C\subset\mms$ such that $\supp\mu \subset C$ for every $\mu\in \scrC$.
\end{definition}

The \emph{push-forward} of a measure $\mu\in\Prob(\mms)$ under a Borel map $F\colon \mms\to\mms'$, where $\mms'$ is any given topological space, is the probability measure $F_\sharp\mu \in\scrP(\mms')$ defined by setting, for every Borel set $B\subset\mms'$,
\begin{align*}
F_\sharp\mu[B] := \mu\big[F^{-1}(B)\big].
\end{align*}

A measure $\pi\in\scrP(\mms^2)$ is called \emph{coupling} of given $\mu,\nu\in\scrP(\mms)$ if 
\begin{align*}
(\pr_1)_\push\pi &= \mu,\\ 
(\pr_2)_\push\pi &=\nu. 
\end{align*}
Here and in the sequel, given any $i=1,2$, $\smash{\pr_i\colon \mms^2\to \mms}$ denotes the projection $\pr_i(x) := x_i$; the projection $\pr_i\colon \mms^n\to \mms$ is defined analogously for arbitrary $n\in \N$ and $i=1,\dots,n$. Let $\Pi(\mu,\nu)$ be the set of all couplings of $\mu$ and $\nu$. It is never empty as it contains at least the product measure $\smash{\mu\otimes\nu\in\scrP(\mms^2)}$. Moreover, by \emph{diagonal coupling} of $\mu$ we mean the probability measure $\pi := D_\push\mu$, where $D\colon \mms\to\mms^2$ is defined by $D(x) := (x,x)$.

We recall the following well-known result, cf.~e.g.~\cite[Lem.~2.1]{ambrosio2011}.

\begin{lemma}[Gluing]\label{Le:Gluing lemma} Given any $n\in\N$, let $\mu_0,\dots,\mu_n\in\scrP(\mms)$, and let $\pi_{i-1,i}\in\Pi(\mu_{i-1},\mu_i)$ for every $i=1,\dots,n$. Then there exists a measure $\omega\in \Prob(\mms^{n+1})$ such that every $i=1,\dots,n$ satisfies
\begin{align*}
(\pr_{i-1},\pr_i)_\push\omega = \pi_{i-1,i}.
\end{align*}
\end{lemma}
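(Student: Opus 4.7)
The plan is to proceed by induction on $n$, using the disintegration theorem to successively glue one coupling at a time. The induction base $n=1$ is immediate: simply take $\omega := \pi_{0,1}$, which lives on $\mms^2$ and trivially satisfies $(\pr_0,\pr_1)_\push \omega = \pi_{0,1}$.

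For the inductive step, suppose we have already constructed $\omega' \in \Prob(\mms^n)$, with coordinates indexed by $0,\dots,n-1$, such that $(\pr_{i-1},\pr_i)_\push \omega' = \pi_{i-1,i}$ for every $i = 1,\dots,n-1$. In particular, the $(n-1)$-st marginal $(\pr_{n-1})_\push \omega'$ coincides with $\mu_{n-1}$. Since $\mms$ is Polish (by our standing assumption after \autoref{Def:Metric measure spacetime}), so is $\mms^n$, and we may apply the disintegration theorem to $\pi_{n-1,n}$ with respect to the projection $\pr_1\colon \mms^2\to\mms$ onto its first marginal $\mu_{n-1}$. This yields a $\mu_{n-1}$-a.e. uniquely determined Borel family $\smash{\{\pi^{x_{n-1}}_{n-1,n}\}_{x_{n-1}\in\mms}}$ of elements of $\scrP(\mms)$ such that
\begin{align*}
\pi_{n-1,n} = \int_{\mms} \delta_{x_{n-1}} \otimes \pi^{x_{n-1}}_{n-1,n} \d\mu_{n-1}(x_{n-1}).
\end{align*}
I then define $\omega \in \Prob(\mms^{n+1})$ by the formula
\begin{align*}
\omega := \int_{\mms^n} \delta_{(x_0,\dots,x_{n-1})} \otimes \pi^{x_{n-1}}_{n-1,n} \d\omega'(x_0,\dots,x_{n-1}),
\end{align*}
which is well-defined as the Borel family is only used along the $\mu_{n-1}$-a.e. defined set, and the last marginal of $\omega'$ is $\mu_{n-1}$.

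It remains to check that $\omega$ has the prescribed two-dimensional marginals. For $i = 1,\dots,n-1$, the projection $(\pr_{i-1},\pr_i)$ factors through $(\pr_0,\dots,\pr_{n-1})$, and integrating out the last coordinate of $\omega$ recovers $\omega'$; thus $(\pr_{i-1},\pr_i)_\push \omega = (\pr_{i-1},\pr_i)_\push \omega' = \pi_{i-1,i}$ by the inductive hypothesis. For the new coupling, we compute on a Borel rectangle $A\times B \subset \mms^2$, using Fubini:
\begin{align*}
(\pr_{n-1},\pr_n)_\push\omega[A\times B]
&= \int_{\mms^n} \mathbbm{1}_A(x_{n-1})\,\pi^{x_{n-1}}_{n-1,n}[B] \d\omega'(x_0,\dots,x_{n-1}) \\
&= \int_A \pi^{x_{n-1}}_{n-1,n}[B] \d\mu_{n-1}(x_{n-1}) = \pi_{n-1,n}[A\times B],
\end{align*}
where the second equality uses that $(\pr_{n-1})_\push\omega' = \mu_{n-1}$. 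By a standard Dynkin/monotone class argument this extends from rectangles to all Borel sets, completing the induction.

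The proof is essentially routine once the disintegration is in place; there is no real obstacle, since the Polishness of $\mms^n$ guarantees the existence of regular conditional probabilities, and the inductive bookkeeping is entirely formal. The only point deserving a moment's care is that the marginals $(\pr_{i-1},\pr_i)_\push\omega$ for $i< n$ remain undisturbed by the additional coordinate, which follows because the new kernel $\smash{\pi^{x_{n-1}}_{n-1,n}}$ is appended in a way that is trivially disintegrated out under those projections.
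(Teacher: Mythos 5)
Your proof is correct. The paper itself does not prove this lemma but simply cites it as a well-known result (referring to Ambrosio--Gigli, Lem.~2.1), and your induction-plus-disintegration argument is precisely the standard proof appearing in such references, so you have essentially reconstructed the intended argument.
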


We call a set $\scrC\subset\scrP(\mms)$ \emph{tight} if for every $\varepsilon > 0$ there exists a compact set $C\subset\mms$ such that $\mu[C^\sfc]\leq \varepsilon$ for every $\mu\in\scrC$, where $C^\sfc:=\mms\setminus C$ denotes set complementation; we also agree on the evident meaning of a tight sequence. Tightness is a  useful property in view of Prokhorov's theorem  (\autoref{Th:Prokhorov} below). Recall that tightness transfers to couplings \cite[Lem.~4.4]{villani2009}: if $\scrC_0,\scrC_1\subset\scrP(\mms)$ are tight, so is the set $\Pi(\scrC_0\times\scrC_1)$ of all couplings of elements of $\scrC_0$ and $\scrC_1$. This argument does not require $\Top$ to be Polish. The converse implication holds as well by the continuity of the projection maps and Prokhorov's theorem.

Lastly, let $\Cont([0,1];\mms)$ be the space of continuous maps $\gamma\colon[0,1]\to\mms$, endowed with the topology of uniform convergence. For $0\leq t\leq 1$, define the continuous \emph{evaluation map} $\eval_t\colon \Cont([0,1];\mms)\to \mms$ by
\begin{align}\label{Eq:Evaluation map}
\eval_t(\gamma) := \gamma_t.
\end{align}
The continuous \emph{restriction map} $\smash{\Restr_s^t\colon \Cont([0,1];\mms) \to \Cont([0,1];\mms)}$, where $0\leq s<t\leq 1$, is the map which restricts $\smash{\gamma\in\Cont([0,1];\mms)}$ to the segment on $[s,t]$ and then ``stretches'' this segment 
 affinely to $[0,1]$, i.e.
\begin{align}\label{Eq:Restr}
\Restr_s^t(\gamma)_r := \gamma_{(1-r)s + rt}.
\end{align}

\subsubsection{Narrow convergence}\label{Sub:Weak convergence} A sequence $(\mu_n)_{n\in\N}$ in $\scrP(\mms)$ converges \emph{narrowly}\footnote{Some authors prefer to call this \emph{weak} convergence; note that for noncompact $\mms$, this usage of ``weak'' is inconsistent with its meaning in Banach space theory.} to $\mu\in\scrP(\mms)$ \cite[p.~13]{billingsley} if every continuous bounded test function $\varphi\in\Cont_\bounded(\mms)$ satisfies
\begin{align*}
\lim_{n\to\infty}\int_\mms \varphi\d\mu_n = \int_\mms\varphi\d\mu.
\end{align*}
When $\Top$ is Polish, narrow convergence is induced by a metric \cite[Rem.~5.1.1]{ambrosio2008}. In the more general case, narrow convergence of directed nets is defined in the evident way.

By continuity of the projection, it is clear that the property of being a coupling is narrowly closed with respect to narrow convergence of the marginals.

Given an interval $I\subset\R$, by narrow continuity of a collection $(\mu_t)_{t\in I}$ in $\scrP(\mms)$ we mean continuity of the map $t\mapsto \mu_t$ on $I$ with respect to narrow convergence.

We recall two results on convergence and compactness in the narrow topology we occasionally employ. The first is Alexandrov's theorem\footnote{This theorem seems to have been proven first in \cite{alexandrov1943}. A more common name for it is \emph{Portmanteau's theorem}, a terminology originating in \cite{billingsley}. The references of \cite{billingsley} quote the article ``Espoir pour l'ensemble vide?'', viz. ``Hope for the empty set?'', of Jean-Pierre Portmanteau in the \textit{Annales de l'Université de Felletin}. This appears to be a hoax, since we find no historical record of mathematician of that name, nor of the University of Felletin (let alone the mentioned paper). To acknowledge Alexandrov's contribution, we have decided to call \autoref{Th:Alexandrovs theorem} ``Alexandrov's theorem'', as suggested by Bogachev \cite[p.~453]{bogachevII}.} \cite[Thm.~2.1]{billingsley}, which relates narrow convergence to convergence of set evaluations. The second is Prokhorov's theorem \cite[Thms.~5.1, 5.2]{billingsley}, which is a useful characterization of narrow precompactness in $\scrP(\mms)$. 

\begin{theorem}[Alexandrov's theorem]\label{Th:Alexandrovs theorem} Narrow convergence of a sequence $(\mu_n)_{n\in\N}$ in $\scrP(\mms)$ to $\mu\in\scrP(\mms)$ is equivalent to any of the following.
\begin{enumerate}[label=\textnormal{(\roman*)}]
\item \textnormal{\textbf{Lower semicontinuity on open sets.}} For every open set $U\subset\mms$,
\begin{align*}
\mu[U] \leq \liminf_{n\to\infty} \mu_n[U].
\end{align*}
\item \textnormal{\textbf{Upper semicontinuity on closed sets.}} For every closed set $C\subset\mms$,
\begin{align*}
\mu[C]\geq \limsup_{n\to\infty}\mu_n[C].
\end{align*}
\item \textnormal{\textbf{Continuity sets.}} For every Borel set $E\subset\mms$ with $\mu[\partial E]=0$,
\begin{align*}
\mu[E]=\lim_{n\to\infty} \mu_n[E].
\end{align*}
\end{enumerate}
\end{theorem}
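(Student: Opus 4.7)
The plan is to run the cycle narrow convergence $\Rightarrow$ (ii) $\Leftrightarrow$ (i) $\Rightarrow$ (iii) $\Rightarrow$ narrow convergence. Since $\Top$ is Polish, I fix throughout a complete separable metric $\met$ inducing $\Top$, which I will use to construct continuous approximations of indicator functions.

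First I would prove narrow convergence $\Rightarrow$ (ii). Given a closed set $C\subset\mms$, I introduce the Urysohn-type functions $\varphi_k\in\Cont_\bounded(\mms)$ defined by $\varphi_k(x) := \max\{0,\,1-k\,\met(x,C)\}$. Each satisfies $\mathbf{1}_C \leq \varphi_k \leq 1$, and closedness of $C$ forces $\varphi_k \downarrow \mathbf{1}_C$ pointwise as $k\to\infty$. For each fixed $k\in\N$, narrow convergence applied to $\varphi_k$ gives
\begin{align*}
\limsup_{n\to\infty}\mu_n[C] \leq \lim_{n\to\infty}\int_\mms\varphi_k\d\mu_n = \int_\mms\varphi_k\d\mu,
\end{align*}
and letting $k\to\infty$ via dominated convergence yields $\limsup_{n}\mu_n[C]\leq \mu[C]$, which is (ii). The equivalence (i) $\Leftrightarrow$ (ii) then follows by complementation from the identity $\mu_n[U] = 1 - \mu_n[U^\sfc]$ applied to $U=C^\sfc$.

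Next, for (i) and (ii) together implying (iii), let $E\subset\mms$ be Borel with $\mu[\partial E]=0$, write $E^\circ$ for its interior and $\bar E$ for its closure. The hypothesis gives $\mu[E^\circ] = \mu[E] = \mu[\bar E]$, so applying (i) to the open set $E^\circ$ and (ii) to the closed set $\bar E$, combined with the monotonicity $\mu_n[E^\circ]\leq\mu_n[E]\leq\mu_n[\bar E]$, produces the sandwich
\begin{align*}
\mu[E] = \mu[E^\circ] \leq \liminf_{n\to\infty}\mu_n[E] \leq \limsup_{n\to\infty}\mu_n[E] \leq \mu[\bar E] = \mu[E],
\end{align*}
which forces convergence in (iii).

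The most delicate step is (iii) $\Rightarrow$ narrow convergence. Fix $\varphi\in\Cont_\bounded(\mms)$; after affine normalization assume $0\leq \varphi<1$. The super-level sets $E_t := \{\varphi>t\}$ are open, and by continuity of $\varphi$ satisfy $\partial E_t\subset\{\varphi=t\}$; these level sets being pairwise disjoint, at most countably many of them can carry positive $\mu$-mass. Hence (iii) applies to $E_t$ for Lebesgue-almost every $t\in[0,1]$. Combined with the layer-cake identity
\begin{align*}
\int_\mms\varphi\d\nu = \int_0^1 \nu[E_t]\d t, \qquad \nu\in\scrP(\mms),
\end{align*}
and a dominated convergence argument in the parameter $t\in[0,1]$, this yields $\int_\mms\varphi\d\mu_n\to\int_\mms\varphi\d\mu$. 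The main obstacle is precisely this last step, since it demands the simultaneous use of the layer-cake representation, the countability argument for the ``bad'' values of $t$, and a Fubini-type interchange of the $n$-limit with the $t$-integration; the other implications are mild sandwich arguments based on test functions.
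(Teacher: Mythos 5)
Your proof is correct and is the classical textbook argument. The paper itself gives no proof of this theorem, instead citing \cite[Thm.~2.1]{billingsley} (and, in the subsequent remark, \cite[Thm.~8.2.3, Cor.~8.2.4]{bogachevII}); your Urysohn-plus-layer-cake cycle narrow $\Rightarrow$ (ii) $\Leftrightarrow$ (i) $\Rightarrow$ (iii) $\Rightarrow$ narrow is precisely the argument found in those sources.
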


\begin{theorem}[Prokhorov's theorem]\label{Th:Prokhorov} A given set $\scrC\subset\scrP(\mms)$ is narrowly precompact if and only if it is tight.
\end{theorem}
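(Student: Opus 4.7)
The plan is to establish the two implications separately, with the Polishness of $\Top$ playing an essential role in one direction.

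For the direction that tightness implies narrow precompactness, given any sequence $(\mu_n)_{n\in\N}$ in a tight family $\scrC$, I would use tightness to produce compact sets $K_j\subset\mms$ with $\mu_n[K_j^\sfc]\leq 1/j$ uniformly in $n$. Since each $K_j$ is a compact metrizable subspace of $\mms$, the Banach space $\Cont(K_j)$ is separable, and the restrictions of $(\mu_n)_{n\in\N}$ to $K_j$ lie in the unit ball of $\Cont(K_j)^\ast$. A diagonal extraction through $j\in\N$ combined with Banach--Alaoglu yields a single subsequence whose restrictions converge weakly-$\ast$ on every $K_j$. The uniform tail bounds force the resulting limit to be a Borel probability measure on $\mms$, and \autoref{Th:Alexandrovs theorem} upgrades weak-$\ast$ convergence on each compact piece to narrow convergence on all of $\mms$.

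For the converse, suppose $\scrC$ is narrowly precompact; I will deduce tightness. Fix $\varepsilon>0$ and a countable dense sequence $(x_i)_{i\in\N}$ for a complete separable metric $\met$ inducing $\Top$ (available by \autoref{Cor:Hausdorff}). For each $n,k\in\N$, set
\begin{align*}
A_k^n := \bigcup_{i=1}^k \bar{\sfB}^\met(x_i,1/n).
\end{align*}
Each $A_k^n$ is closed and increases to $\mms$ as $k\to\infty$. The key claim is that for every $n\in\N$, there exists $k_n\in\N$ with $\mu[\mms\setminus A_{k_n}^n]\leq \varepsilon\,2^{-n}$ for every $\mu\in\scrC$. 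Granted this, $K := \bigcap_{n\in\N} A_{k_n}^n$ is closed and totally bounded in $(\mms,\met)$, hence compact by completeness, and countable subadditivity gives $\mu[\mms\setminus K]\leq\varepsilon$ uniformly in $\mu\in\scrC$, establishing tightness.

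The main obstacle is the claim, which I would prove by contradiction. If it fails for some $n\in\N$, one selects $(\mu_k)_{k\in\N}$ in $\scrC$ with $\mu_k[\mms\setminus A_k^n]> \varepsilon\,2^{-n}$, then passes to a narrowly convergent subsequence $\mu_{k_j}\to\mu^\ast$ via precompactness. Fixing $j_0\in\N$, the inclusion $\mms\setminus A_{k_j}^n\subset\mms\setminus A_{k_{j_0}}^n$ for $j\geq j_0$ and monotonicity yield
\begin{align*}
\mu_{k_j}\big[\mms\setminus A_{k_{j_0}}^n\big] \geq \mu_{k_j}\big[\mms\setminus A_{k_j}^n\big] > \varepsilon\,2^{-n}.
\end{align*}
Since $\mms\setminus A_{k_{j_0}}^n$ is closed, \autoref{Th:Alexandrovs theorem} gives $\mu^\ast[\mms\setminus A_{k_{j_0}}^n]\geq \varepsilon\,2^{-n}$. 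Letting $j_0\to\infty$ and using continuity of $\mu^\ast$ on the decreasing sequence $\mms\setminus A_{k_{j_0}}^n\searrow\emptyset$ yields $0\geq\varepsilon\,2^{-n}$, the desired contradiction. The delicate point is deploying Alexandrov's theorem in the correct direction, namely upper semicontinuity of mass on \emph{closed} sets rather than lower semicontinuity on open ones; once that orientation is fixed, the remainder is routine covering-and-subadditivity bookkeeping which uses the Polish structure only through separability and completeness of $\met$.
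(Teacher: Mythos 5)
The paper does not prove this statement; it cites \cite[Thms.~5.1, 5.2]{billingsley} (and, in the accompanying remark on extensions beyond Polish spaces, \cite[Thm.~8.6.7]{bogachevII}). So there is no ``paper's own proof'' to compare against, and what you have written is a blind reconstruction of the classical Polish-space argument. Your reconstruction is essentially correct. In the tightness~$\Rightarrow$~precompactness direction, the route through Banach--Alaoglu on the dual balls of $\Cont(K_j)$ plus a diagonal extraction is the standard one; the ``upgrade'' step at the end is more naturally run through a $\Cont_\bounded(\mms)$-test-function estimate (split $\int f\d\mu_{n_k}$ over $K_j$ and $K_j^\sfc$, use the uniform tail bound on the latter, and let $j\to\infty$) than through \autoref{Th:Alexandrovs theorem}, and the assembly of the weak-$\ast$ limits on nested compacta into a single Borel probability measure on $\mms$ involves a consistency check you compress into one sentence --- neither objection is a gap, just compression. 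The precompactness~$\Rightarrow$~tightness direction is the classical Ulam/Prokhorov contradiction argument; its structure (finite unions of $1/n$-balls, totally bounded closed intersection, contradiction via upper semicontinuity on closed sets and continuity from above) is correct, and you correctly identify where completeness of $\met$ and the closed-set half of Alexandrov's theorem enter. One cosmetic mismatch: the complete separable metric $\met$ is supplied by the standing hypothesis that $\Top$ is Polish declared at the start of \autoref{Sub:LOT}, not directly by \autoref{Cor:Hausdorff}, which only records the equivalence of Polishness and second-countability.
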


\begin{remark}[Extension beyond Polish spaces]  Even if $\Top$ is not Polish, \autoref{Th:Alexandrovs theorem} and \autoref{Th:Prokhorov} extend to merely first-countable $\Top$ in our setting, and at this point there is no need to restrict ourselves to uniformly compactly supported measures. The crucial property is \emph{complete regularity} of $\Top$, i.e.~for every closed set $C\subset \mms$ and every $\smash{x\in C^\sfc}$ in its complement $C^\sfc$, there exists a continuous function $f\colon \mms\to \R$ vanishing on $C$ and $f(x) =1$. Every locally compact Hausdorff topology is completely regular, hence $\Top$ is by \autoref{Cor:Hausdorff}.

\autoref{Th:Alexandrovs theorem} then follows from \cite[Thm.~8.2.3, Cor.~8.2.4]{bogachevII}. 

\autoref{Th:Prokhorov} is found in \cite[Thm.~8.6.7]{bogachevII}. Indeed, \autoref{Th:Compact Polish} implies every compact subset of $\mms$ is metrizable.
\end{remark}

\subsubsection{Causal and chronological couplings} Given any $\mu,\nu\in\scrP(\mms)$, the (possibly empty) set $\smash{\Pi_\leq(\mu,\nu)}$ of \emph{causal couplings} of $\mu$ and $\nu$ is the set of all $\pi\in\Pi(\mu,\nu)$ with $\pi[l\geq 0] =1$. Since $\{l\geq 0\}$ is closed, the latter condition holds if and only if $\supp\pi\subset\{l\geq 0\}$. By replacing ``$\leq$'' and ``$\geq$'' by ``$\ll$'' and ``$>$'', respectively, in an analogous way we define the set $\Pi_\ll(\mu,\nu)$ of \emph{chronological couplings} of $\mu$ and $\nu$. However, note that the property $\pi\in \Pi_\ll(\mu,\nu)$ does not imply $\supp\pi\subset\{l>0\}$ in general, while the converse is trivially true.

\subsubsection{The $\smash{\ell_p}$-optimal transport problem} Given an exponent $0<p\leq 1$, besides \eqref{Eq:Convention infty} we will adopt the conventions
\begin{align*}
(-\infty)^p := (-\infty)^{1/p} := -\infty.
\end{align*}

We define the function $\smash{\ell_p\colon \scrP(\mms)^2\to [0,\infty]\cup\{-\infty\}}$ by
\begin{align}
\label{lp causal}
\ell_p(\mu,\nu) &:= \sup_{\pi\in \Pi_\leq(\mu,\nu)} \big\Vert l_+\big\Vert_{\Ell^p(\mms^2,\pi)},
\end{align}
and call it total transport cost. We agree on the usual convention $\sup\emptyset := -\infty$, in which case we can equivalently use an unconditional supremum to infer
\begin{align}
\label{lp conventional}
\ell_p(\mu,\nu) = \sup_{\pi\in\Pi(\mu,\nu)} \big\Vert l\big\Vert_{\Ell^p(\mms^2,\pi)}.
\end{align}
The sets of maximizers of \eqref{lp causal} and \eqref{lp conventional} coincide \cite[Rem.~2.2]{cavalletti2020}. A coupling $\pi\in\Pi(\mu,\nu)$ is called \emph{$\ell_p$-optimal} if it is causal and it attains the supremum in the  definition of $\smash{\ell_p}$; if $\smash{\ell_p(\mu,\nu) = -\infty}$ no coupling of $\mu$ and $\nu$ is $\smash{\ell_p}$-optimal according to this terminology, even though $\smash{\ell_p(\mu,\nu) = \big\Vert l\big\Vert_{\Ell^p(\mms^2,\pi)}}$ holds for every $\pi\in\Pi(\mu,\nu)$. On the other hand, by our definition every $\smash{\pi\in\scrP(\mms^2)}$ with $\smash{\big\Vert l\big\Vert_{\Ell^p(\mms^2,\pi)} = \infty}$ is an $\smash{\ell_p}$-optimal coupling of its marginals.

Standard arguments of calculus of variations, combined with the upper boundedness of $\smash{\ell_+}$ on compact subsets of $\smash{\mms^2}$, ensure the  existence of $\smash{\ell_p}$-optimal couplings in all cases relevant for our purposes, cf.~\cite[Prop.~2.3]{cavalletti2020} and \cite[Thm.~4.1]{villani2009}.

\begin{lemma}[Existence of $\smash{\ell_p}$-optimal couplings]\label{Le:Existence} If two measures $\mu,\nu\in\scrP_\comp(\mms)$ obey $\smash{\Pi_\leq(\mu,\nu)\neq\emptyset}$, then $\mu$ and $\nu$ admit an $\smash{\ell_p}$-optimal coupling.
\end{lemma}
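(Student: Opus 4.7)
The plan is to apply the direct method of the calculus of variations in the narrow topology on $\Pi(\mu,\nu)$, in the spirit of \cite[Prop.~2.3]{cavalletti2020} and its classical predecessor \cite[Thm.~4.1]{villani2009}.

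First, I would pass to the common compact phase space $K := \supp\mu\times\supp\nu$, which contains $\supp\pi$ for every $\pi\in\Pi(\mu,\nu)$. On $K$, \autoref{Cor:Finiteness l} makes $l_+$ real-valued, and in fact I would observe that $l_+$ is \emph{continuous} on $K$: lower semicontinuity is built into \autoref{Def:signed time sep}, while upper semicontinuity of $l$ together with the range $l\in[0,\infty]\cup\{-\infty\}$ promotes $l_+ = l\vee 0$ to an upper semicontinuous function as well. Consequently $l_+^p$ is bounded and continuous on $K$, so the supremum defining $\ell_p(\mu,\nu)$ is finite and (thanks to $\Pi_\leq(\mu,\nu)\neq\emptyset$) nonnegative.

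Next I would pick a maximizing sequence $(\pi_n)_{n\in\N}\subset \Pi_\leq(\mu,\nu)$, whose tightness is immediate from the uniform inclusion $\supp\pi_n\subset K$. \autoref{Th:Prokhorov} then supplies a narrowly convergent subsequence (not relabeled) with limit $\pi\in\scrP(\mms^2)$. Narrow continuity of the projection maps preserves the marginal constraints, and the causal constraint passes to the limit because $\{l\geq 0\}$ is closed in $\mms^2$ (see \autoref{Pr:Closed}); indeed, \autoref{Th:Alexandrovs theorem} yields
\begin{align*}
\pi[\{l\geq 0\}] \;\geq\; \limsup_{n\to\infty}\pi_n[\{l\geq 0\}] \;=\; 1,
\end{align*}
so $\pi\in\Pi_\leq(\mu,\nu)$. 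Finally, boundedness and continuity of $l_+^p$ on $K$ give convergence of the cost, $\smash{\int l_+^p\d\pi_n \to \int l_+^p\d\pi}$, and hence $\smash{\big\Vert l_+\big\Vert_{\Ell^p(\mms^2,\pi)} = \ell_p(\mu,\nu)}$, i.e.\ $\pi$ is $\ell_p$-optimal.

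I do not anticipate a genuine obstacle; the argument is the standard compactness-plus-semicontinuity template. The only subtle point is the observation that the two semicontinuity hypotheses in \autoref{Def:signed time sep} combine to make $l_+$ continuous on compact sets, so that the cost functional is narrowly \emph{continuous} — not merely upper semicontinuous — along uniformly compactly supported sequences. Even without this sharpening, one could still conclude by applying \autoref{Th:Alexandrovs theorem} to the bounded nonnegative lsc function $M - l_+^p$ (where $M$ bounds $l_+^p$ on $K$), which already yields the upper semicontinuity $\smash{\limsup_n \int l_+^p\d\pi_n \leq \int l_+^p\d\pi}$ sufficient for optimality.
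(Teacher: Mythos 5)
Your argument is correct and is exactly the direct-method proof the paper has in mind when it invokes \cite[Prop.~2.3]{cavalletti2020} and \cite[Thm.~4.1]{villani2009}: tightness of the maximizing sequence from the fixed compact support, passage to a narrow limit via Prokhorov, preservation of the marginal and causal constraints (the latter by closedness of $\{l\geq 0\}$), and convergence of the cost by boundedness of $l_+$ on $J(\supp\mu,\supp\nu)$. Your side observation that $l_+=\max(l,0)$ inherits upper semicontinuity from $l$, hence is continuous, is indeed what the paper uses throughout (e.g.\ in \autoref{Le:Zt lemma}, \autoref{Cor:Cpt TGeo}), and as you note, the weaker upper semicontinuity of the cost along the sequence already suffices.
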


Another feature is the reverse triangle inequality for $\smash{\ell_p}$ akin to its counterpart \eqref{Eq:Reverse tau} for $l$ \cite[Prop.~2.5]{cavalletti2020}, which is a standard application of the gluing lemma: for every $\mu,\nu,\sigma\in\scrP(\mms)$, we have
\begin{align}\label{Eq:Reverse lp}
\ell_p(\mu,\sigma)\geq \ell_p(\mu,\nu) + \ell_p(\nu,\sigma).
\end{align}

These observations yield interesting facts about the causal structure of the space of probability measures on $\mms$, which generalize those from \cite{eckstein2017} to our nonsmooth setting. We defer this discussion to \autoref{Ch:Loose}.

\subsubsection{Timelike $p$-dualizability} Even if $\smash{\ell_p(\mu,\nu) > 0}$ for two given $\mu,\nu\in\scrP(\mms)$, in general $\smash{\ell_p}$-optimal couplings need not vanish outside set $\{l>0\}$, though they cannot vanish inside it  (e.g.~transport a Dirac mass halfway to a point in its null future and halfway to a point in its chronological future in Minkowski spacetime). Various notions for $\mu$ and $\nu$ to have ``entirely chronological'' $\smash{\ell_p}$-optimizers have been studied in the literature. Starting from McCann's \emph{$p$-separation} \cite[Def.~4.1]{mccann2020}, Cavalletti--Mondino \cite[Defs.~2.18, 2.27]{cavalletti2020} introduced the notions of  \emph{\textnormal{(}strong\textnormal{)} timelike $p$-dualizability} 
recapitulated in \autoref{Def:Str tl dual}
as nonsmooth analogs for the more relaxed criteria of \cite[Thms.~7.1, 7.4]{mccann2020}.
They allow for a good duality theory \cite[Props.~2.19, 2.21, Thm.~2.26]{cavalletti2020}, see also \cite[Thm.~4.3]{mccann2020}. 

As usual, given two functions $a,b\colon\mms\to\R$ we define $a\oplus b \colon \mms^2\to\R$ by 
\begin{align*}
(a\oplus b)(x,y) := a(x) + b(y).
\end{align*}

\begin{definition}[Timelike $p$-dualizability]\label{Def:Str tl dual} Let $\mu,\nu\in\scrP(\mms)$ be given.
\begin{enumerate}[label=\textnormal{\alph*\textcolor{black}{.}}]
\item We call the pair $(\mu,\nu)$ \emph{timelike $p$-dualizable} if $\smash{\ell_p(\mu,\nu) > 0}$, some $\smash{\ell_p}$-optimal coupling of $\mu$ and $\nu$ belongs to $\smash{\Pi_\ll(\mu,\nu)}$, and there are lower semi\-continuous functions $a,b\colon \mms\to [0,\infty)$ with $\smash{a\oplus b\in \Ell^1(\mms^2,\mu\otimes\nu)}$ and
\begin{align}\label{Eq:Moment condition}
l^p \leq a\oplus b\quad\textnormal{\textit{on} } \supp\mu\times\supp\nu.
\end{align}
If $\smash{\pi\in\Pi_\ll(\mu,\nu)}$ is such a coupling, we call $(\mu,\nu)$ \emph{timelike $p$-dualizable by $\pi$}, and the coupling $\pi$ \emph{timelike $p$-dualizing}.
\item\label{La:Be} Furthermore, we call the pair $(\mu,\nu)$ \emph{strongly timelike $p$-dualizable} if it is timelike $p$-dualizable, and there exists an $\smash{l^p}$-cyclically monotone Borel set $\Gamma\subset \{l>0\}\cap (\supp\mu\times\supp\nu)$ such that every  $\smash{\ell_p}$-optimal coupling $\pi\in\Pi_\leq(\mu,\nu)$ is concentrated on $\Gamma$.
\end{enumerate}
\end{definition}

Here, we call a set $\smash{\Gamma\subset \{l\geq 0\}}$ \emph{$l^p$-cyclically monotone} \cite[Def.~2.6]{cavalletti2020} if for every $n\in\N$ and every $(x_1,y_1),\dots,(x_n,y_n)\in\Gamma$, we have
\begin{align*}
\sum_{i=1}^n l^p(x_i,y_i) \geq \sum_{i=1}^n l^p(x_{i+1},y_i),
\end{align*}
where $x_{n+1} := x_1$. Accordingly, an element of $\smash{\scrP(\mms^2)}$ is \emph{$l^p$-cyclically monotone} if it is concentrated on an $l^p$-cyclically monotone subset of $\{l\geq 0\}$.

Note that \eqref{Eq:Moment condition} forces $\ell_p(\mu,\nu)<\infty$ if $(\mu,\nu)$ is timelike $p$-dualizable.

Clearly, strong timelike $p$-dualizability implies timelike $p$-dualizability. Moreover, any set $\Gamma$ as in the preceding item \ref{La:Be} actually characterizes $\smash{\ell_p}$-optimality. Indeed, if $\smash{\pi\in\Pi(\mu,\nu)}$ satisfies $\pi[\Gamma]=1$, it is chronological and $\smash{l^p}$-cyclically monotone, hence $\smash{\ell_p}$-optimal by the hypothesized moment conditions \eqref{Eq:Moment condition} \cite[Prop.~2.8]{cavalletti2020}.

We will mainly need these notions for compactly supported measures, in which case the definitions become easier to handle.

\begin{remark}[Compact support simplifies timelike $p$-dualizability] Assume $\mu,\nu\in\scrP_\comp(\mms)$. Then  $(\mu,\nu)$ is timelike $p$-dualizable if and only if they admit a timelike $p$-dualizing coupling $\pi\in\Pi_\ll(\mu,\nu)$, i.e.~a \emph{chronological} $\smash{\ell_p}$-optimal coupling. 
\end{remark}

\begin{example}[Chronological product measures]\label{Ex:Str tl dual} If $\mu,\nu\in\scrP_\comp(\mms)$ satisfy 
\begin{align}\label{Eq:Support cond}
\supp\mu\times\supp\nu\subset\{l>0\},
\end{align}
the pair $(\mu,\nu)$ is strongly timelike $p$-dualizable for every $0<p\leq 1$ \cite[Cor.~2.29]{cavalletti2020}. However, unlike strong timelike $p$-dualizability, cf.~\autoref{Le:Propagates} below, the condition \eqref{Eq:Support cond} does not propagate to the interior of $\smash{\ell_p}$-geodesics.
\end{example}

\begin{remark}[Timelike $p$-dualizabilty in the smooth framework] The $p$-separation of $(\mu,\nu)\in\scrP_\comp(\mms)^2$ \cite[Def.~4.1]{mccann2020} implies strong timelike $p$-dualizability of $(\mu,\nu)$. Indeed, by definition the set $S$ therein is contained in $\{l>0\}\cap (\supp\mu\times\supp\nu)$, and it is evidently $l^p$-cyclically monotone. Moreover, standard Kantorovich duality \cite[Thm.~5.10]{villani2009} ensures every $\smash{\ell_p}$-optimal coupling of $\mu$ and $\nu$ to be concentrated on $S$. In the smooth setting,  (strong) timelike $p$-dualizability is encoded in the hypotheses of \cite[Thm.~7.1]{mccann2020}, which does not assume $p$-separation.
\end{remark}

\subsection{Geodesics of probability measures}\label{Sub:Geo prob meas}  
Following \cite[Def.~1.1]{mccann2020} we now introduce various notions of $\smash{\ell_p}$-geodesics, and study their properties in detail. Key parts of our analysis are sufficient conditions for
\begin{itemize}
\item their narrow continuity, and
\item their admittance of a ``lifting'' to the space of Borel probability measures concentrated on affinely parametrized geodesics on $\mms$.
\end{itemize}
We refer to \cite{beran2023,braun2022, braun2023, cavalletti2020,mccann+,mondinosuhr2022} for further treatments of related notions; in a certain sense, our following discussion continues the study of the Lorentzian structure of $\scrP(\mms)$ initiated in \cite{eckstein2017} for smooth spacetimes. We point out that the notion of $\smash{\ell_p}$-geodesics in \cite{braun2022, braun2023, cavalletti2020} is  slightly more restrictive than that of \cite{mccann2020}. Roughly speaking, in \cite{braun2022,braun2023, cavalletti2020} $\smash{\ell_p}$-geodesics are represented by  probability measures concentrated on appropriate curves, and hence given by liftings \emph{by definition}. As we do not have Lipschitz causal curves at our disposal, some desired properties need to be reworked; especially compactness arguments become trickier. A byproduct of our results is that the restrictions from \cite{braun2022, braun2023, cavalletti2020} are not really necessary, and a much weaker notion is sufficient to work with under \autoref{Ass:REG}. Also, \autoref{Def:lp geo} is more flexible with respect to convergence questions, cf.~e.g.~\autoref{Pr:Compactness lp geos}.

\subsubsection{Various notions of geodesics}\label{Subsub:Various} Recall the definition of the evaluation map $\eval_t$ from \eqref{Eq:Evaluation map}, where $0\leq t\leq 1$. We say a collection $(\mu_t)_{t\in[0,1]}$ in $\scrP(\mms)$ is \emph{represented} by $\smash{\bdpi\in\Prob(\Cont([0,1];\mms))}$ if every $0\leq t\leq 1$ satisfies 
\begin{align*}
\mu_t = (\eval_t)_\push\bdpi.
\end{align*}

We endow the path space $\Prob(\Cont([0,1];\mms)$ with its natural narrow topology.

Given any $\mu_0,\mu_1\in\scrP(\mms)$, we define the (possibly empty) class
\begin{align}\label{OptTGeop}
\begin{split}
\OptTGeo_p(\mu_0,\mu_1) &:= \{\bdpi\in \Prob(\Cont([0,1];\mms)) : (\eval_0,\eval_1)_\push\bdpi \in \Pi(\mu_0,\mu_1)\\
&\qquad\qquad \textnormal{is }\ell_p\textnormal{-optimal},\, \bdpi[\TGeo(\mms)]=1\}.
\end{split}
\end{align}
Its elements will be called \emph{timelike $\ell_p$-optimal dynamical plans}, or simply \emph{plans};
a plan can also be thought of as a traffic plan or a routing.  This definition makes sense since $\TGeo(\mms)$ is a Borel subset of $\Cont([0,1];\mms)$: it is a $\sigma$-compact subset of $\Cont([0,1];\mms)$ by 
\autoref{Cor:Cpt TGeo}.



\begin{definition}[Geodesics of probability measures]
\label{Def:lp geo} We call a curve $(\mu_t)_{t\in[0,1]}$ of elements in $\scrP(\mms)$
\begin{enumerate}[label=\textnormal{\alph*.}]
\item \emph{rough $\smash{\ell_p}$-geodesic} if for every $0\leq s<t\leq 1$,
\begin{align}\label{Eq:lp geo path}
0< \ell_p(\mu_s,\mu_t) = (t-s)\,\ell_p(\mu_0,\mu_1) <\infty,
\end{align}
\item \emph{$\smash{\ell_p}$-geodesic} if it is a narrowly continuous rough $\smash{\ell_p}$-geodesic, and
\item \emph{displacement $\smash{\ell_p}$-geodesic} if it is represented by a timelike $\smash{\ell_p}$-optimal dynamical plan $\smash{\bdpi \in \OptTGeo_p(\mu_0,\mu_1)}$.
\end{enumerate}
\end{definition}

\begin{remark}[Displacement $\smash{\ell_p}$-geodesics are $\smash{\ell_p}$-geodesics]\label{Re:TL Geo to geo} Every displacement $\smash{\ell_p}$-geodesic is  an $\smash{\ell_p}$-geodesic. Indeed, let a curve $(\mu_t)_{t\in[0,1]}$ be represented by a plan $\smash{\bdpi\in\OptTGeo_p(\mu_0,\mu_1)}$. By Lebesgue's theorem and \autoref{Le:Continuity geos}, it is clear that $(\mu_t)_{t\in[0,1]}$ is narrowly continuous. Moreover, $(\eval_0,\eval_1)_\push\bdpi$ is an $\smash{\ell_p}$-optimal coupling of $\mu_0$ and $\mu_1$. To show \eqref{Eq:lp geo path}, let $0 \leq s<t\leq 1$. Note that $(\eval_s,\eval_t)_\push\bdpi$ is a causal coupling of $\mu_s$ and $\mu_t$, and thus
\begin{align}\label{Eq:lp inequ s t}
\begin{split}
\ell_p(\mu_s,\mu_t) &\geq \big\Vert l\circ(\eval_s,\eval_t)\big\Vert_{\Ell^p(\TGeo(\mms),\bdpi)}\\
&= (t-s)\,\big\Vert l\circ(\eval_0,\eval_1)\big\Vert_{\Ell^p(\TGeo(\mms),\bdpi)}\\
&= (t-s)\,\ell_p(\mu_0,\mu_1).
\end{split}
\end{align}
This together with  the reverse triangle inequality  \eqref{Eq:Reverse tau} implies
\begin{align*}
\ell_p(\mu_0,\mu_1) &\geq \ell_p(\mu_0,\mu_s) + \ell_p(\mu_s,\mu_t) + \ell_p(\mu_s,\mu_t)\\
&\geq \big[s + (t-s) + (1-t)\big]\,\ell_p(\mu_0,\mu_1)\\
&= \ell_p(\mu_0,\mu_1).
\end{align*}
Consequently, equality holds throughout \eqref{Eq:lp inequ s t}, and $(\eval_s,\eval_t)_\push\bdpi$ couples $\mu_s$ and $\mu_t$ optimally with respect to $\smash{\ell_p}$.
\end{remark}

\subsubsection{Narrow continuity and equivalence} The previous \autoref{Re:TL Geo to geo} establishes a hierarchy between the various types of $\ell_p$-geodesic (\autoref{Def:lp geo}). Under natural hypotheses, these are in fact equivalent, which is the content of \autoref{Cor:Equiv notions lp geo}. The latter is partly a measure version of \autoref{Le:Continuity geos}.

This result requires  the technical \autoref{Le:Intermediate pts geodesics}. Like \cite[Cor.~2.10]{mccann2020}, it follows from the characterization of equality cases in the reverse triangle inequality \eqref{Eq:Reverse lp} from \cite[Prop.~2.9]{mccann2020} as well as Lemmas  \ref{Le:Existence} and \ref{Le:Zt lemma}.

\begin{lemma}[Interpolants inherit compact support]\label{Le:Intermediate pts geodesics} Let $(\mu_t)_{t\in[0,1]}$ be a rough $\smash{\ell_p}$-geodesic, and let $0<t<1$. If $\mu_0$ and $\mu_1$ are compactly supported, so is $\mu_t$. More precisely,  its support $\supp\mu_t \subset Z_t(\supp\mu_0\times\supp\mu_1)$ is compact by \autoref{Le:Zt lemma}.
\end{lemma}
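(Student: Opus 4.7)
The plan is to prove the stronger conclusion $\supp\mu_t\subset Z_t(\supp\mu_0\times\supp\mu_1)$; since the right-hand side is compact (hence closed) by \autoref{Le:Zt lemma}, it suffices to show $\mu_t$ is concentrated on it. I would produce causal couplings $\pi_{0t}\in\Pi_\leq(\mu_0,\mu_t)$ and $\pi_{t1}\in\Pi_\leq(\mu_t,\mu_1)$ that are $\ell_p$-optimal (or sufficiently close, see below), combine them through the gluing lemma (\autoref{Le:Gluing lemma}) into a single $\omega\in\scrP(\mms^3)$ with $(\pr_1,\pr_2)_\push\omega=\pi_{0t}$ and $(\pr_2,\pr_3)_\push\omega=\pi_{t1}$, and then exploit the equality case in the reverse triangle/reverse Minkowski chain along $\omega$.

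The core computation is the forced equality in
\begin{align*}
\ell_p(\mu_0,\mu_1)&\geq \big\Vert l\circ(\pr_1,\pr_3)\big\Vert_{\Ell^p(\omega)}\geq \big\Vert l\circ(\pr_1,\pr_2)+l\circ(\pr_2,\pr_3)\big\Vert_{\Ell^p(\omega)}\\
&\geq \big\Vert l\circ(\pr_1,\pr_2)\big\Vert_{\Ell^p(\omega)}+\big\Vert l\circ(\pr_2,\pr_3)\big\Vert_{\Ell^p(\omega)}=\ell_p(\mu_0,\mu_1).
\end{align*}
The first inequality uses $\ell_p$-optimality applied to the causal coupling $(\pr_1,\pr_3)_\push\omega\in\Pi_\leq(\mu_0,\mu_1)$; the second uses $l(x,y)\geq l(x,z)+l(z,y)$ from \eqref{Eq:Reverse tau} $\omega$-a.e.; the third invokes the reverse Minkowski inequality for nonnegative functions in $\Ell^p$ with $0<p<1$; and the final identity uses the geodesic condition \eqref{Eq:lp geo path}. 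The saturation of all three steps then yields, $\omega$-a.e., both $l(x,y)=l(x,z)+l(z,y)$ and a pointwise proportionality $l(x,z)=c\,l(z,y)$ (this is precisely the equality case of reverse Minkowski for $p\in(0,1)$). Matching against the marginal norms fixes $c=t/(1-t)$, so $l(x,z)=t\,l(x,y)$ and $l(z,y)=(1-t)\,l(x,y)$ $\omega$-a.e., i.e.~$z\in Z_t(x,y)$ for $\omega$-a.e.~$(x,z,y)$. Projecting via $(\pr_2)_\push\omega=\mu_t$ and using $\omega$-a.e.~$(x,y)\in\supp\mu_0\times\supp\mu_1$ concludes the concentration.

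The main obstacle is to secure the $\ell_p$-optimizers $\pi_{0t}$ and $\pi_{t1}$ when $\mu_t$ is not yet known to be compactly supported, since \autoref{Le:Existence} is stated for compactly supported marginals. The workaround is to take approximately optimal sequences $\pi_{0t}^n$, $\pi_{t1}^n$ whose $\Ell^p$-norms converge to $\ell_p(\mu_0,\mu_t)$ and $\ell_p(\mu_t,\mu_1)$ respectively; each marginal is an individual Borel probability on a Polish space, hence tight, so the couplings are tight and admit a narrow subsequential limit by Prokhorov's theorem. The corresponding glued measures $\omega^n$ are also tight, and by upper semicontinuity of $l$ (needed to pass to the limit in the $\Ell^p$-bound via Alexandrov's theorem \autoref{Th:Alexandrovs theorem}) the limiting $\omega$ inherits the saturated inequalities verbatim. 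The remainder of the proof --- extracting $z\in Z_t(x,y)$ from the equality cases --- is then routine, and mirrors the argument of \cite[Cor.~2.10]{mccann2020}, now transferred to our nonsmooth setting by means of \autoref{Le:Gluing lemma}, \eqref{Eq:Reverse tau}, and \autoref{Le:Zt lemma}.
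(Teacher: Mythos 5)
Your central argument --- gluing $\ell_p$-optimal couplings $\pi_{0t}$ and $\pi_{t1}$ via \autoref{Le:Gluing lemma} and extracting $z\in Z_t(x,y)$ from the equality cases of the reverse triangle and reverse Minkowski inequalities --- is exactly the route the paper takes (following \cite[Cor.~2.10]{mccann2020}). The equality-case analysis is correct: strict concavity of $r\mapsto r^p$ for $0<p<1$ forces $l\circ(\pr_1,\pr_2)$ and $l\circ(\pr_2,\pr_3)$ to be proportional $\omega$-a.e., matching marginal $\Ell^p$-norms fixes the constant at $t/(1-t)$, and combined with additivity this gives $z\in Z_t(x,y)$.

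However, the workaround you propose for securing the optimal couplings has a genuine gap, and it is also unnecessary. The gap: to transfer the saturated chain from the approximately optimal $\pi_{0t}^n$ to a narrow limit $\pi_{0t}$, you need $\big\Vert l\big\Vert_{\Ell^p(\pi_{0t}^n)}\to\big\Vert l\big\Vert_{\Ell^p(\pi_{0t})}$. Lower semicontinuity of $l_+$ only yields $\big\Vert l\big\Vert_{\Ell^p(\pi_{0t})}\leq\liminf_n\big\Vert l\big\Vert_{\Ell^p(\pi_{0t}^n)}$ --- the trivial inequality --- and the reverse bound via upper semicontinuity requires $l_+^p$ to be bounded above on $\supp\mu_0\times\mms$, which fails precisely because you do not yet control $\supp\mu_t$. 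The far simpler fix: the geodesy condition \eqref{Eq:lp geo path} gives $\ell_p(\mu_0,\mu_t)>0$ and $\ell_p(\mu_t,\mu_1)>0$, hence $\Pi_\leq(\mu_0,\mu_t)$ and $\Pi_\leq(\mu_t,\mu_1)$ are nonempty; pushing forward any such causal couplings shows that $\mu_t$ gives full measure to the closed set $J^+(\supp\mu_0)\cap J^-(\supp\mu_1)=J(\supp\mu_0,\supp\mu_1)$, which is compact by \autoref{Cor:K-GH}. Thus $\supp\mu_t$ is compact from the outset, \autoref{Le:Existence} applies directly, and your main argument then runs with genuine $\ell_p$-optimizers, no approximation required.
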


\begin{corollary}[Chronology makes rough $\ell_p$-geodesics narrowly continuous]\label{Cor:Equiv notions lp geo} Let $(\mu_t)_{t\in[0,1]}$ be a rough $\smash{\ell_p}$-geodesic  with compactly supported endpoints $\mu_0$ and $\mu_1$, where $0<p<1$. Suppose every $\smash{\ell_p}$-optimal coupling of $\mu_0$ and $\mu_1$ to be chronological. Then $(\mu_t)_{t\in[0,1]}$ is narrowly continuous. In particular, if
\begin{align*}
\supp\mu_0 \times\supp\mu_1 \subset \{l>0\},
\end{align*}
then $(\mu_t)_{t\in[0,1]}$ is a displacement $\smash{\ell_p}$-geodesic by \autoref{Le:Geodesics plan} below.
\end{corollary}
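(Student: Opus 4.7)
The plan is to combine narrow precompactness (from uniform compact support) with a rigidity argument leveraging reverse Minkowski for $0<p<1$, the chronology hypothesis, and regularity of $\mms$.

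First, \autoref{Le:Intermediate pts geodesics} and \autoref{Le:Zt lemma} show $(\mu_t)_{t \in [0,1]}$ is uniformly compactly supported in the compact set $Z(\supp\mu_0 \times \supp\mu_1)$, hence narrowly precompact by \autoref{Th:Prokhorov}. Since $l_+$ is both upper semicontinuous (as the pointwise maximum of the upper semicontinuous $l$ and $0$) and lower semicontinuous by \autoref{Def:signed time sep}, it is in fact continuous, and bounded on compact sets by \autoref{Cor:Finiteness l}. It thus suffices to show that every narrow subsequential limit $\nu$ of $(\mu_{s_n})_{n \in \N}$, for $s_n \to t$ in $[0,1]$, equals $\mu_t$. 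Passing to a subsequence I may assume $s_n$ is monotone; by the time reversal of \autoref{Ex:Causal reversal} it is enough to treat the case $s_n \leq t$.

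I would then pick $\ell_p$-optimal couplings $\pi_n \in \Pi_\leq(\mu_{s_n}, \mu_t)$ via \autoref{Le:Existence}, available since $\ell_p(\mu_{s_n},\mu_t) > 0$ from the rough $\ell_p$-geodesic property. Tightness and causal closedness (\autoref{Pr:Closed}) give a narrow subsequential limit $\pi \in \Pi_\leq(\nu, \mu_t)$, and continuity plus boundedness of $l_+^p$ on the joint support yield
\begin{align*}
\int l_+^p \d\pi = \lim_{n\to\infty}\int l_+^p \d\pi_n = \lim_{n\to\infty}(t-s_n)^p\,\ell_p(\mu_0,\mu_1)^p = 0,
\end{align*}
so $l = 0$ holds $\pi$-almost everywhere. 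An analogous limit argument using \eqref{Eq:Reverse lp} establishes $\ell_p(\mu_0,\nu) = t\,\ell_p(\mu_0,\mu_1)$ and $\ell_p(\nu,\mu_1) = (1-t)\,\ell_p(\mu_0,\mu_1)$. For interior $t \in (0,1)$, any $\ell_p$-optimal $\tau \in \Pi_\leq(\mu_0,\nu)$ is then chronological: gluing $\tau$ with any $\ell_p$-optimal $\tau' \in \Pi_\leq(\nu,\mu_1)$ via \autoref{Le:Gluing lemma} yields $\omega \in \Prob(\mms^3)$; combining \eqref{Eq:Reverse tau} with the reverse Minkowski inequality for $0<p<1$ shows $(\pr_1,\pr_3)_\push\omega$ is an $\ell_p$-optimal coupling of $\mu_0,\mu_1$, hence chronological by hypothesis, while the equality case in reverse Minkowski forces $l(x,y)$ and $l(y,z)$ to be $\omega$-a.e.\ proportional, making both $\tau$ and $\tau'$ chronological.

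Finally, gluing this chronological $\tau$ with $\pi$ produces $\rho \in \Prob(\mms^3)$ whose $(1,3)$-marginal is $\ell_p$-optimal between $\mu_0,\mu_t$, with the pointwise equality $l(x,z) = l(x,y) + l(y,z) = l(x,y) > 0$ holding $\rho$-a.e. If $\rho[\{y \neq z\}] > 0$, then on this set \autoref{Th:Ex geo} provides a null strictly causal maximizer from $y$ to $z$, which concatenated with a timelike maximizer from $x$ to $y$ (using $x \ll y$) yields a strictly causal maximizer from $x$ to $z$ with a null subsegment; since $x \ll z$, this contradicts \autoref{Def:Regularity}. Hence $y = z$ $\rho$-a.e., so $\pi$ is concentrated on the diagonal and $\nu = \mu_t$. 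The boundary cases $t \in \{0,1\}$ are analogous, composing $\pi$ with a chronological $\ell_p$-optimal coupling between $\nu$ and the remaining endpoint. The main obstacle is this last step: no a priori uniqueness of intermediate measures is available in the abstract Lorentzian setting, so the interplay between endpoint chronology and regularity of $\mms$ is essential.
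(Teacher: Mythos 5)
Your proof is correct and relies on the same core mechanism as the paper's: glue couplings, force equalities through the reverse Minkowski inequality for $0<p<1$, invoke the chronology hypothesis to conclude the composite coupling is chronological, and derive a contradiction from a maximizer carrying a null subsegment between chronologically related points (via regularity). The packaging differs slightly. The paper performs a single gluing into $\Prob(\mms^4)$ around the chain $\mu_0\preceq\mu_t\preceq\nu\preceq\mu_1$, reads the null segment $l(x_2,x_3)=0$ out of the rigidity of the $p$-norm chain, and then contradicts \autoref{Le:Continuity geos}; you instead perform two $3$-fold gluings (one to show $\tau\in\Pi_\leq(\mu_0,\nu)$ is chronological, one to combine $\tau$ with $\pi$), extract $l=0$ $\pi$-a.e.\ directly from the limit $\ell_p(\mu_{s_n},\mu_t)\to 0$, and invoke \autoref{Def:Regularity} outright. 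These are cosmetic variants, since \autoref{Le:Continuity geos} is itself proved by exactly the concatenation-with-null-segment argument you spell out. One minor streamlining: for your final contradiction it suffices to take any strictly causal maximizer from $x$ to $y$ — you do not need it to be timelike — since the concatenated curve is already a strictly causal maximizer from $x$ to $z$ with a null piece and $x\ll z$, which is all that \autoref{Def:Regularity} requires.
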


\begin{proof} Our defininition of  geodesy (or the  compactness of $\supp\mu_0\times\supp\mu_1$) ensures $\smash{\ell_p(\mu_0,\mu_1)<\infty}$. By \autoref{Le:Intermediate pts geodesics} and compactness of $\supp\mu_0$ and $\supp\mu_1$, the measure $\mu_t$ has support in $Z(\supp\mu_0\times \supp\mu_1)$ for every $0\leq t\leq 1$. The latter is compact by \autoref{Le:Zt lemma}, and notably independent of $t$.  \autoref{Le:Existence}  establishes existence of $\smash{\ell_p}$-optimal couplings of causally related, compactly supported marginals. The fact $\smash{\ell_p(\mu_0,\mu_1)>0}$ implied by our last hypothesis and our above observations will thus yield existence of $\smash{\ell_p}$-optimal couplings of all measures considered below.

We first claim narrow right-continuity of $(\mu_t)_{t\in(0,1)}$.  Let $(t_n)_{n\in\N}$ be any sequence decreasing to a given $0< t < 1$. By compactness of $Z(\supp\mu_0\times\supp\mu_1)$, $\smash{(\mu_{t_n})_{n\in\N}}$ has a narrow  limit $\nu\in\Prob_\comp(\mms)$ along a nonrelabeled  subsequence. We claim $\nu = \mu_t$. Suppose to the contrary $\nu\neq \mu_t$. By  \autoref{Th:GHy prob meas}, we have $\smash{\ell_p(\mu_t,\nu)\geq 0}$, 
and by \autoref{Le:Existence} an $\smash{\ell_p}$-optimal coupling $\pi_2$ of $\mu_t$ and $\nu$ exists. Now we construct a coupling $\pi$ of $\mu_0$ and $\mu_1$ as follows. Let $\pi_1$ as well as $\pi_3$ be $\smash{\ell_p}$-optimal couplings of $\mu_0$ and $\mu_t$ as well as $\nu$ and $\mu_1$, respectively. We glue these couplings $\pi_1$, $\pi_2$, and $\pi_3$ together, cf.~e.g.~\cite[Prop.~2.5]{cavalletti2020} or \cite[Prop.~2.9]{mccann2020} for this standard construction. We thus obtain a measure $\omega\in \Prob(\mms^4)$ with the properties
\begin{align*}
(\pr_1,\pr_2)_\push\omega &= \pi_1,\\
(\pr_2,\pr_3)_\push\omega &= \pi_2,\\
(\pr_3,\pr_4)_\push\omega &= \pi_3.
\end{align*}
Finally, we define $\pi := (\pr_1,\pr_4)_\push\omega$. By construction, $\pi$ is causal and obeys
\begin{align}\label{Eq:Long comp}
\begin{split}
\ell_p(\mu_0,\mu_1) &\geq \big\Vert l\big\Vert_{\Ell^p(\mms^2,\pi)}\\
&\geq \big\Vert l\circ (\pr_1,\pr_2) + l\circ(\pr_2,\pr_3) + l\circ(\pr_3,\pr_4)\big\Vert_{\Ell^p(\mms^4,\omega)}\\
&\geq \big\Vert l\big\Vert_{\Ell^p(\mms^2,\pi_1)} + \big\Vert l\big\Vert_{\Ell^p(\mms^2,\pi_2)} + \big\Vert l\big\Vert_{\Ell^p(\mms^2,\pi_3)}\\
&= \ell_p(\mu_0,\mu_t) + \ell_p(\mu_t,\nu) + \ell_p(\nu,\mu_1)\phantom{\big\Vert_{L^p}}\\
&\geq t\,\ell_p(\mu_0,\mu_1) + (1-t)\,\ell_p(\mu_0,\mu_1)\phantom{\big\Vert_{L^p}}\\
&= \ell_p(\mu_0,\mu_1);
\end{split}
\end{align}
here we used \eqref{Eq:Reverse tau} in the second, the reverse Hölder inequality in the third, and \eqref{Eq:lp geo path} (plus the inequality $\smash{\ell_p(\mu_t,\nu)\geq 0}$)
in the second last line. Therefore, all the inequalities in \eqref{Eq:Long comp} must be equalities. In particular, $\pi$ is an $\smash{\ell_p}$-optimal coupling of $\mu_0$ and $\mu_1$ and thus chronological by assumption. For $\omega$-a.e.~$(x_1,x_2,x_3,x_4)\in\mms^4$ we thus have $x_1 \ll x_4$; moreover, since equality holds throughout \eqref{Eq:Long comp}, by strict convexity (implied by $p<1$) again we obtain 
\begin{align}\label{Eq:Blabla}
\begin{split}
l(x_1,x_2) &= t\,l(x_1,x_4),\\
l(x_2,x_3) &= 0,\\
l(x_3,x_4) &= (1-t)\,l(x_1,x_4),
\end{split}
\end{align}
and therefore $x_1\ll x_2$ and $x_3 \ll x_4$. On the other hand, $\pi_2$ necessarily places mass outside the diagonal of $\smash{\mms^2}$ since $\nu\neq \mu_t$. In particular $\omega[E_\varepsilon] >0$ for some $\varepsilon > 0$ by monotone convergence, where $E_\varepsilon$ is the set of points $x\in \mms^4$ with $\met(x_2,x_3)\geq \varepsilon$; cf.~\autoref{Th:Compact Polish}. For every $x\in E_\varepsilon$, by \eqref{Eq:Blabla} following an $l$--geodesic from $x_1$ to $x_2$, jumping from $x_2$ to $x_3$, and going from there to $x_4$ by a further $l$-geodesic gives a discontinuous element of $\TGeo(\mms)$ from $x_1$ to $x_4$, a contradiction to \autoref{Le:Continuity geos}.
 
The case $t=0$ is dealt with similarly. Here the $l$-geodesics constructed above directly start with a jump. Thus $(\mu_t)_{t\in[0,1)}$ is right-continuous.

Analogously, we show narrow left-continuity of $(\mu_t)_{t\in(0,1]}$.

The last claim about displacement $\smash{\ell_p}$-geodesy  follows from \autoref{Le:Geodesics plan} once we have narrow  continuity at our disposal.
\end{proof}

\begin{remark}[Propagation  of moments]\label{Re:Do not have} If $\mu_0$ or $\mu_1$ do not have compact support, it is less clear how the moment condition  \eqref{Eq:Moment condition}, ensuring existence of $\smash{\ell_p}$-optimal couplings \cite[Prop.~2.3]{cavalletti2020}, propagates throughout rough $\smash{\ell_p}$-geodesics; such a property  might follow from the construction of a nonsmooth Hopf-Lax semigroup. 
\end{remark}

\subsubsection{Lifting the geometry on $\mms$ to $\scrP(\mms)$} We turn to the lifting procedure for $\smash{\ell_p}$-geodesics stated in \autoref{Le:Geodesics plan}. Some of the statements hold in higher generality, but we will not need these extensions. On the other hand, unlike \cite{braun2022}  where \ref{La:05} below was available for strongly timelike $p$-dualizable pairs $(\mu_0,\mu_1)$ --- based on uniform Lipschitz continuity of causal curves and reparametrization  --- our general compactness result from  \autoref{Cor:Cpt TGeo} only guarantees the necessary tightness properties away from the lightcone. 

Under curvature and nonbranching assumptions, this lifting result holds for more general marginal measures compared to \eqref{Eq:Supp}, cf.~\autoref{Cor:Lifting nonbr}.

Recall the definition \eqref{Eq:Restr} of the restriction map $\smash{\Restr_s^t}$ for $0\leq s<t\leq 1$.
 
\begin{proposition}
[
Criteria for an $\ell_p$-geodesic to be represented by a plan]
\label{Le:Geodesics plan} Let $\smash{\mu_0,\mu_1\in\scrP(\mms)}$. Then the following hold.
\begin{enumerate}[label=\textnormal{\textcolor{black}{(}\roman*\textcolor{black}{)}}]
\item\label{La:01} For every chronological $\ell_p$-optimal coupling $\smash{\pi\in\Pi_\ll(\mu_0,\mu_1)}$, there exists a timelike $\smash{\ell_p}$-optimal dynamical plan $\smash{\bdpi\in\OptTGeo_p(\mu_0,\mu_1)}$ with
\begin{align*}
\pi = (\eval_0,\eval_1)_\push\bdpi.
\end{align*}
\item\label{La:02} If $(\mu_0,\mu_1)$ is timelike $p$-dualizable, there exists at least one displacement $\smash{\ell_p}$-geo\-desic from $\mu_0$ to $\mu_1$.
\item\label{La:03} Given any $0\leq s<t\leq 1$, the inclusion $\smash{\bdpi\in\OptTGeo_p(\mu_0,\mu_1)}$ implies the restriction $\smash{(\Restr_s^t)_\push\bdpi}$ to belong to $\smash{\OptTGeo_p(\mu_s,\mu_t)}$, where
\begin{align*}
\mu_s &:= (\eval_s)_\push\bdpi,\\
\mu_t &:= (\eval_t)_\push\bdpi.
\end{align*}
\item\label{La:04} Let $\smash{\bdpi\in\OptTGeo_p(\mu_0,\mu_1)}$, and let $\bdsigma\in\scrP(\Cont([0,1];\mms))$ be a nonzero measure with $\bdsigma\leq \bdpi$. Define the probability measures
\begin{align*}
\sigma_0 &:= \bdsigma\big[\Cont([0,1];\mms)\big]^{-1}\,(\eval_0)_\push\bdsigma,\\
\sigma_1 &:= \bdsigma\big[\Cont([0,1];\mms)\big]^{-1}\,(\eval_1)_\push\bdsigma.
\end{align*}
Then $\smash{\bdsigma\big[\Cont([0,1];\mms)\big]^{-1}\,\bdsigma\in\OptTGeo_p(\sigma_0,\sigma_1)}$.
\end{enumerate}

Now we additionally assume compactness of $\supp\mu_0$ and $\supp\mu_1$ as well as
\begin{align}\label{Eq:Supp}
\supp\mu_0\times\supp\mu_1 \subset \{l>0\},
\end{align}
and that $0<p<1$. Then the following hold.
\begin{enumerate}[label=\textnormal{\textcolor{black}{(}\roman*\textcolor{black}{)}}]\setcounter{enumi}{4}
\item\label{La:05} Every $\smash{\ell_p}$-geodesic from $\mu_0$ to $\mu_1$ is in fact a displacement $\smash{\ell_p}$-geodesic. 
\item\label{La:06} Let $\smash{(\mu_0^n)_{n\in\N}}$ and $\smash{(\mu_1^n)_{n\in\N}}$ be sequences in $\scrP_\comp(\mms)$   converging narrowly to $\mu_0$ and $\mu_1$, respectively. Then every sequence $\smash{(\bdpi^n)_{n\in\N}}$ of timelike $\smash{\ell_p}$-optimal dy\-namical plans $\smash{\bdpi^n\in\OptTGeo_p(\mu_0^n,\mu_1^n)}$ has a narrow accumulation point. 
\item\label{La:07} If, in addition,  there are compact sets $C_0,C_1\subset\mms$ such that $\supp\mu_0^n\subset C_0$ and $\supp\mu_1^n\subset C_1$ for every $n\in\N$ --- in other words, $\smash{(\mu_0^n)_{n\in\N}}$ and $\smash{(\mu_1^n)_{n\in\N}}$ are uniformly compactly supported --- and
\begin{align*}
\ell_p(\mu_0,\mu_1) \leq \liminf_{n\to\infty} \ell_p(\mu_0^n,\mu_1^n),
\end{align*}
then $\smash{\ell_p(\mu_0^n,\mu_1^n)\to\ell_p(\mu_0,\mu_1)}$ as $n\to \infty$, and any accumulation point from \ref{La:06} belongs to $\smash{\OptTGeo_p(\mu_0,\mu_1)}$. This holds in particular if the endpoint marginal sequences are constant, in which case the previous statement implies narrow compactness of $\smash{\OptTGeo_p(\mu_0,\mu_1)}$.
\end{enumerate}
\end{proposition}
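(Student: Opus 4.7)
The plan is to prove (i)--(vii) in order, with measurable selection and narrow compactness via \autoref{Cor:Cpt TGeo} as the central tools.

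For (i), the continuous endpoint map $(\eval_0,\eval_1)\colon \TGeo(\mms)\to \{l>0\}$ is surjective by \autoref{Th:Ex geo}, regularity, and the arclength reparametrization of \autoref{Pr:Reparam}, with compact fibres by \autoref{Cor:Cpt TGeo}. A universally measurable right-inverse $S$ then exists by the Jankov--von Neumann selection theorem, and $\bdpi:=S_\sharp\pi$ is a dynamical plan with $(\eval_0,\eval_1)_\sharp \bdpi=\pi$ and $\bdpi[\TGeo(\mms)]=1$. Part (ii) is immediate: timelike $p$-dualizability supplies a chronological $\ell_p$-optimal coupling to which (i) applies, and narrow continuity of the represented curve follows from \autoref{Re:TL Geo to geo}. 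Parts (iii) and (iv) are structural: $\Restr_s^t$ preserves $\TGeo(\mms)$ (affine subsegments of $l$-geodesics are again $l$-geodesics), and $\ell_p$-optimality of $(\eval_s,\eval_t)_\sharp\bdpi\in\Pi(\mu_s,\mu_t)$ is the reverse triangle inequality computation already carried out in \autoref{Re:TL Geo to geo}; for (iv), a strictly better causal coupling of $(\sigma_0,\sigma_1)$ could be glued with the complement $\bdpi-\bdsigma$ via \autoref{Le:Gluing lemma} to improve on $\bdpi$'s $\ell_p$-cost, contradicting its optimality.

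Part (v) combines (i) with a dyadic gluing argument: under the support hypothesis every $\ell_p$-optimal coupling of pairs $(\mu_s,\mu_t)$ with $s<t$ is chronological, so (i) produces representing segment plans; glue these segment plans via \autoref{Le:Gluing lemma} over a dyadic partition of $[0,1]$, using the reverse triangle equality together with regularity to confirm that the concatenated curves lie in $\TGeo(\mms)$, then pass to a limit (using the tightness argument developed in (vi) below) and invoke narrow continuity of $(\mu_t)_{t\in[0,1]}$ to identify the time marginals of the limiting plan with $\mu_t$.

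The hard parts are (vi) and (vii), where the main obstacle is tightness of $(\bdpi^n)_{n\in\N}$ on $\Cont([0,1];\mms)$. \autoref{Th:Prokhorov} applied to the tight endpoint marginal sequences furnishes, for each $\varepsilon>0$, compact sets $C_0^\varepsilon,C_1^\varepsilon\subset\mms$ carrying $\mu_i^n$-mass $\geq 1-\varepsilon$ uniformly in $n$. The delicate ingredient is a uniform lower bound $l(\gamma_0,\gamma_1)\geq r>0$ for $\bdpi^n$-most $\gamma$, which follows from $\supp\mu_0\times\supp\mu_1\subset\{l>0\}$ together with compactness and the lower semicontinuity of $l_+$, at the cost of discarding arbitrarily little mass. \autoref{Cor:Cpt TGeo} then provides a compact subset of $\TGeo(\mms)$ carrying $\bdpi^n$-mass $\geq 1-O(\varepsilon)$ uniformly, so \autoref{Th:Prokhorov} yields a narrow cluster point $\bdpi$, concentrated on $\TGeo(\mms)$ by closedness of the latter under uniform convergence. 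For (vii), the uniform compact support assumption streamlines this localization, and upper semicontinuity of $l^p$ on $\{l\geq 0\}$ along the narrowly convergent endpoint couplings gives
\begin{align*}
\big\Vert l\big\Vert_{\Ell^p(\mms^2,(\eval_0,\eval_1)_\sharp\bdpi)}\geq \limsup_{n\to\infty}\ell_p(\mu_0^n,\mu_1^n);
\end{align*}
combined with the hypothesized $\liminf$ bound, this forces convergence of the total costs and $\ell_p$-optimality of the limit endpoint coupling, so $\bdpi\in\OptTGeo_p(\mu_0,\mu_1)$. Narrow compactness of $\OptTGeo_p(\mu_0,\mu_1)$ for constant marginals is then immediate.
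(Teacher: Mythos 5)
Your proof is correct and follows essentially the same strategy as the paper's: measurable selection for (i), restriction/gluing for (ii)--(iv), dyadic gluing plus a limit for (v), and tightness via \autoref{Cor:Cpt TGeo} for (vi)--(vii), with the same passage from the chronological support hypothesis on the \emph{limit} marginals to a uniform lower bound $l(\gamma_0,\gamma_1)\geq r$ at the cost of $\varepsilon$ mass. The only cosmetic difference is (iv), where you argue by gluing a hypothetical better sub-coupling with the complement to contradict optimality, whereas the paper simply invokes the restriction property of $\ell_p$-optimal couplings from \cite[Lem.~2.10]{cavalletti2020}; both arguments are valid and morally identical.
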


\begin{proof} We first establish \ref{La:01}. Using causal closedness (\autoref{Pr:Closed}), we deduce the multivalued map sending a chronologically related pair $\smash{(x,y)\in\mms^2}$ to the set of all $\gamma\in\TGeo(\mms)$ connecting $x$ to $y$ to have closed graph. The statement then follows from standard measurable selection arguments, cf.~e.g.~\cite{ambrosio2011, villani2009}.

To see \ref{La:02}, apply \ref{La:01} to a timelike $p$-dualizing coupling of $\mu_0$ and $\mu_1$.

Item \ref{La:03} easily follows from \autoref{Re:TL Geo to geo}.

We turn to \ref{La:04}. The inequality $\bdsigma\leq \bdpi$ is stable under push-forwards, which shows the sub-probability measure $(\eval_0,\eval_1)_\push\bdsigma$ to be chronological. By the restriction property of $\smash{\ell_p}$-optimal couplings \cite[Lem.~2.10]{cavalletti2020}, $(\eval_0,\eval_1)_\push\bdsigma$ is thus an $\smash{\ell_p}$-optimal coupling of its marginals $\sigma_0$ and $\sigma_1$. Since $\bdsigma\leq \bdpi$, the measure $\bdsigma$ is concentrated on $\TGeo(\mms)$, and so is its normalized version.

We pass over to \ref{La:06}. We claim tightness of $\smash{(\bdpi^n)_{n\in\N}}$. Given any $\varepsilon > 0$, by narrow convergence of $(\mu_0^n)_{n\in\N}$ and $(\mu_1^n)_{n\in\N}$ there exist compact sets $\smash{C_0,C_1\subset\mms}$ with
\begin{align*}
\sup_{n\in\N} \mu_0^n\big[C_0^\sfc\big] &\leq \varepsilon,\\
\sup_{n\in\N} \mu_1^n\big[C_1^\sfc\big] &\leq \varepsilon.
\end{align*}
Recall the set $G:=\TGeo(\mms) \cap \eval_0^{-1}(C_0)\cap \eval_1^{-1}(C_1)$ is precompact thanks to  \autoref{Cor:Cpt TGeo} and \eqref{Eq:Supp}. Observe that
\begin{align*}
\sup_{n\in\N} \bdpi^n\big[\Cont([0,1];\mms)\setminus G\big] &\leq \sup_{n\in\N} \bdpi^n\big[\eval_0^{-1}(C_0^\sfc)\big] + \sup_{n\in\N} \bdpi^n\big[\eval_1^{-1}(C_1^\sfc)\big]\\
&= \sup_{n\in\N} \mu_0\big[C_0^\sfc\big] + \sup_{n\in\N} \mu_1\big[C_1^\sfc\big]\\
&\leq 2\,\varepsilon.
\end{align*}
This consideration shows the desired tightness by \autoref{Cor:Cpt TGeo}. 
By Prokhorov's theorem, $\smash{(\bdpi^n)_{n\in\N}}$ has an accumulation point $\smash{\bdpi\in\Prob(\Cont([0,1];\mms)}$. 

Next, we prove $\smash{\bdpi\in \OptTGeo_p(\mu_0,\mu_1)}$ under the assumptions of \ref{La:07}. It suffices to show $\smash{\ell_p}$-optimality of $(\eval_0,\eval_1)_\push\bdpi$; indeed, $\bdpi$ is concentrated on the uniform closure of $\smash{\TGeo(\mms)}$ by Alexandrov's theorem, and $\smash{\ell_p}$-optimality of $(\eval_0,\eval_1)_\push\bdpi$ would imply chronology of the latter by assumption, and thus $\smash{\bdpi[\TGeo(\mms)]=1}$. To this aim, uniform boundedness of $\smash{l_+}$ on $J(C_0,C_1)$ combined with \cite[Lem.~4.3]{villani2009} yield
\begin{align*}
\ell_p(\mu_0,\mu_1) &\geq \big\Vert l \big\Vert_{\Ell^p(\mms^2,(\eval_0,\eval_1)_\push\bdpi)}\\
&= \big\Vert l_+ \big\Vert_{\Ell^p(\mms^2,(\eval_0,\eval_1)_\push\bdpi)}\\
&\geq \limsup_{n\to\infty}\big\Vert l_+ \big\Vert_{\Ell^p(\mms^2,(\eval_0,\eval_1)_\push\bdpi^n)}
\\
&= \limsup_{n\to\infty} \ell_p(\mu_0^n,\mu_1^n)\\
&\geq \liminf_{n\to\infty} \ell_p(\mu_0^n,\mu_1^n)\\
&\geq \ell_p(\mu_0,\mu_1).
\end{align*}
This confirms the limit asserted and proves the desired $\smash{\ell_p}$-optimality of $(\eval_0,\eval_1)_\push\bdpi$.

Lastly, item \ref{La:05} requires most work, but is performed in analogy with the usual dyadic argument, cf.~e.g.~\cite[Thm.~2.10]{ambrosio2011} or \cite[Thm.~7.21, Cor.~7.22]{villani2009}. We build a sequence $\smash{(\bdpi^n)_{n\in\N}}$ in $\smash{\OptTGeo_p(\mu_0,\mu_1)}$ as follows. Given any $n\in\N$ and any $k=1,\dots,2^n$, we first observe that $\mu_{(k-1)\,2^{-n}}$ and $\mu_{k\,2^{-n}}$ admit a causal coupling since they lie on an $\smash{\ell_p}$-geodesic. These measures  also have support in a compact set which is independent of $k$ or $n$ by \autoref{Le:Intermediate pts geodesics}. Thus, by \cite[Prop.~2.3]{cavalletti2020} these admit an $\smash{\ell_p}$-optimal coupling $\smash{\pi_k^n\in\Pi_{\leq}(\mu_{(k-1)\,2^{-n}}, \mu_{k\,2^{-n}})}$.

Using \autoref{Le:Gluing lemma} we find $\omega^n\in \Prob(\mms^{2^n+1})$ such that, for every $k=1,\dots,2^n$,
\begin{align}\label{Eq:Pikn}
(\pr_{k-1},\pr_k)_\push \omega^n = \pi_k^n.
\end{align}
The coupling $\smash{\pi^n := (\pr_0,\pr_{2^n})_\push\omega^n}$ of $\mu_0$ and $\mu_1$ is causal and obeys 
\begin{align}\label{Eq:Coupling causal}
\begin{split}
\ell_p(\mu_0,\mu_1) &\geq \big\Vert l\big\Vert_{\Ell^p(\mms^2,\pi^n)}\\
&\geq \Big\Vert\! \sum_{k=1}^{2^n} l\circ (\pr_{k-1},\pr_k)\Big\Vert_{\Ell^p(\mms^{2^n+1},\omega^n)} \\
&\geq \sum_{k=1}^{2^n} \big\Vert l\big\Vert_{\Ell^p(\mms^2,\pi_k^n)}\\
&= \sum_{k=1}^{2^n} \ell_p(\mu_{(k-1)\,2^{-n}}, \mu_{k\,2^{-n}})\\
&= \ell_p(\mu_0,\mu_1);
\end{split}
\end{align}
here we used \eqref{Eq:Reverse tau} in the second line, the reverse Hölder inequality in the third line, and $\smash{\ell_p}$-geodesy in the last line. Therefore, equalities must hold throughout the above chain of inequalities. This implies $\smash{\pi^n}$ is an $\smash{\ell_p}$-optimal coupling of $\mu_0$ and $\mu_1$, hence chronological. Moreover, by \cite[Prop.~2.9]{mccann2020} $\omega^n$-a.e.~$(x_0,x_1,\dots,x_{2^n})\in\mms^{2^n+1}$ satisfies the following identity for every $k=1,\dots,2^n$:
\begin{align}\label{Eq:2^n inequ}
l(x_{k-1}, x_k) = 2^{-n}\,l(x_0,x_{2^n}).
\end{align}
In turn, this yields chronology of $\smash{\pi_k^n}$ for every such $k$; by item \ref{La:01}, we consequently find a plan $\smash{\bdpi_k^n\in\OptTGeo_p(\mu_{(k-1)\,2^{-n}}, \mu_{k\,2^{-n}})}$ with
\begin{align*}
\pi_k^n = (\eval_0,\eval_1)_\push\bdpi_k^n.
\end{align*}
Owing to \eqref{Eq:Pikn} and \eqref{Eq:2^n inequ} we thus construct $\smash{\bdpi^n\in\Prob(\Cont([0,1];\mms))}$ such that 
\begin{itemize}
\item for every $k=1,\dots,2^n$, we have $\smash{(\Restr_{(k-1)\,2^{-n}}^{k\,2^{-n}})_\push\bdpi^n = \bdpi_k^n}$,
\item $\smash{(\eval_0,\eval_1)_\push\bdpi}$ is a chronological $\smash{\ell_p}$-optimal coupling of $\mu_0$ and $\mu_1$, and
\item $\bdpi^n$ is concentrated on $\TGeo(\mms)$.
\end{itemize}
In other words, we have $\smash{\bdpi^n\in\OptTGeo_p(\mu_0,\mu_1)}$.

For every $n\in\N$, by construction the timelike $\smash{\ell_p}$-optimal dynamical plan $\smash{\bdpi^n}$ is concentrated on the set of all $\smash{\gamma\in \TGeo(\mms)}$ with $\gamma_0\in\supp\mu_0$ and $\gamma_1\in\supp\mu_1$. An argument as for \ref{La:06}  above (note that singletons of probability measures are automatically tight) shows tightness of the sequence $\smash{(\bdpi^n)_{n\in\N}}$, and thus a nonrelabeled subsequence of it converges narrowly to some  $\smash{\bdpi\in\Prob(\Cont([0,1];\mms))}$. In particular, the endpoint couplings $((\eval_0,\eval_1)_\push\bdpi^n)_{n\in\N}$ converge narrowly to $(\eval_0,\eval_1)_\push\bdpi$, which is a coupling of $\mu_0$ and $\mu_1$. Alexandrov's theorem yields $\bdpi$ to be concentrated on causal curves. Thus, stability of $\smash{\ell_p}$-optimal couplings in the form \cite[Thm.~2.14]{cavalletti2020} in fact  saturates $\smash{\ell_p}$-optimality and hence chronology of $(\eval_0,\eval_1)_\push\bdpi$.

We claim that $\bdpi$ represents $(\mu_t)_{t\in[0,1]}$. By construction, we have $(\eval_0)_\push\bdpi = \mu_0$ and $(\eval_1)_\push\bdpi = \mu_1$. Given any dyadic number $0<t<1$, passing to the limit in the identity $\smash{\mu_t = (\eval_t)_\push\bdpi^n}$, valid for every sufficiently large $n\in\N$, yields $\mu_t= (\eval_t)_\push\bdpi$ for every such $t$. Since $(\mu_t)_{t\in[0,1]}$ is narrowly continuous, and since $\bdpi$ is concentrated on continuous curves, both sides of the previous identity depend continuously on $t$ with respect to narrow convergence. This yields $\mu_t = (\eval_t)_\push\bdpi$ for every $0< t< 1$.

Since $\smash{\bdpi[\TGeo(\mms)]=1}$ by \ref{La:07}, this terminates the proof.
\end{proof}

By using the first part of \autoref{Le:Geodesics plan}, the proof of the following result is analogous to \cite[Lem.~3.1]{braun2023}, see also \cite[Prop.~5.5]{mccann2020}. The difference to point out here compared to \cite{braun2023} is that with our notion of $\smash{\ell_p}$-geodesics, propagation of bare timelike $p$-dualizability is unclear, but not expected to hold in general.

\begin{lemma}[Criteria for propagation of (strong) timelike $p$-dualizabilty]
\label{Le:Propagates} Let $(\mu_t)_{t\in[0,1]}$ be an $\smash{\ell_p}$-geodesic. 
\begin{enumerate}[label=\textnormal{(\roman*)}]
\item If the pair $(\mu_0,\mu_1)$ is strongly timelike $p$-dualizable, so is $(\mu_s,\mu_t)$ for every $0\leq s < t\leq 1$. 
\item If $(\mu_t)_{t\in[0,1]}$ constitutes a displacement $\smash{\ell_p}$-geodesic, then timelike $p$-dualizabi\-li\-ty of $(\mu_0,\mu_1)$ implies that  of $(\mu_s,\mu_t)$ for every $0\leq s<t\leq 1$.
\end{enumerate}
\end{lemma}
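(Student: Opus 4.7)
Both parts aim to transfer structural information from the endpoint pair $(\mu_0,\mu_1)$ to the intermediate pair $(\mu_s,\mu_t)$ of the $\ell_p$-geodesic. The common engine is the gluing lemma (\autoref{Le:Gluing lemma}) combined with the $\ell_p$-geodesic identity $\ell_p(\mu_0,\mu_s)+\ell_p(\mu_s,\mu_t)+\ell_p(\mu_t,\mu_1)=\ell_p(\mu_0,\mu_1)$ and the reverse H\"older inequality, together with the fact that equality in the latter is rigid because $p<1$ (as used in \autoref{Cor:Equiv notions lp geo}).

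For \textbf{part (ii)}, let $\bdpi\in\OptTGeo_p(\mu_0,\mu_1)$ represent $(\mu_t)_{t\in[0,1]}$. By \autoref{Le:Geodesics plan}\ref{La:03}, $(\Restr_s^t)_\push\bdpi$ belongs to $\OptTGeo_p(\mu_s,\mu_t)$, and its endpoint projection $\pi_{st}:=(\eval_s,\eval_t)_\push\bdpi$ is a chronological $\ell_p$-optimal coupling of $\mu_s,\mu_t$, establishing $0<\ell_p(\mu_s,\mu_t)=(t-s)\,\ell_p(\mu_0,\mu_1)<\infty$. For the moment condition, take lower semicontinuous $a,b\geq 0$ with $l^p\leq a\oplus b$ on $\supp\mu_0\times\supp\mu_1$ and $a\oplus b\in \Ell^1(\mu_0\otimes\mu_1)$. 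Along each $\gamma\in\supp\bdpi$ the monotonicity $l(\gamma_s,\gamma_t)=(t-s)\,l(\gamma_0,\gamma_1)\leq l(\gamma_0,\gamma_1)$ furnishes integrable dominators at times $s$ and $t$; the required lsc extensions $a_s,b_t$ on $\supp\mu_s\times\supp\mu_t$ are then obtained by infimum convolutions of $a,b$ with $l^p$-distances to past- or future-points of $\supp\mu_0,\supp\mu_1$, in the spirit of \cite[Lem.~3.1]{braun2023}. Under compact support \autoref{Le:Intermediate pts geodesics} and \autoref{Cor:Finiteness l} reduce everything to constant dominators.

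For \textbf{part (i)}, let $\Gamma$ be the $l^p$-cyclically monotone set provided by strong timelike $p$-dualizability. For any $\ell_p$-optimal $\pi_{st}\in\Pi_\leq(\mu_s,\mu_t)$, glue it with $\ell_p$-optimal $\pi_{0s}$ and $\pi_{t1}$ via \autoref{Le:Gluing lemma} to obtain $\omega\in\Prob(\mms^4)$ whose projection $\pi^\ast:=(\pr_1,\pr_4)_\push\omega$ lies in $\Pi_\leq(\mu_0,\mu_1)$. Running the chain from \eqref{Eq:Long comp} and closing it with $\ell_p$-geodesy, every inequality collapses to equality. Optimality of $\pi^\ast$ forces $\omega$ to concentrate on $\Gamma$ in its outer coordinates, while equality in the reverse H\"older step (rigid for $p<1$) imposes $l(x_0,x_s):l(x_s,x_t):l(x_t,x_1)=s:(t-s):(1-t)$ pointwise. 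Since $(x_0,x_1)\in\Gamma\subset\{l>0\}$ each segment has positive length, and concatenating strictly causal maximizers with these lengths yields a maximizer from $x_0$ to $x_1$ which by regularity (\autoref{Def:Regularity}) is timelike; \autoref{Pr:Reparam} then reparametrizes the quadruple into an element of $\TGeo(\mms)$. Consequently $\pi_{st}$ is chronological and concentrated on
\begin{align*}
\Gamma_{st}:=\{(\gamma_s,\gamma_t):\gamma\in\TGeo(\mms),\,(\gamma_0,\gamma_1)\in\Gamma\}\subset \{l>0\}\cap(\supp\mu_s\times\supp\mu_t),
\end{align*}
whose Borel measurability follows from \autoref{Cor:Cpt TGeo} via a standard projection argument. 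The moment condition for $(\mu_s,\mu_t)$ transfers as in part (ii).

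The \textbf{main obstacle} is verifying $l^p$-cyclical monotonicity of $\Gamma_{st}$. The direct reverse-triangle bound $l(\gamma_s^{i+1},\gamma_t^i)\leq l(\gamma_0^{i+1},\gamma_1^i)-s\,l(\gamma_0^{i+1},\gamma_1^{i+1})-(1-t)\,l(\gamma_0^i,\gamma_1^i)$ valid when $\gamma_s^{i+1}\leq\gamma_t^i$ (the opposite case rendering the right-hand side of the monotonicity inequality trivially $-\infty$) fails to rescale cleanly by a factor $(t-s)$, because for $p<1$ the subadditivity $(a+b+c)^p\leq a^p+b^p+c^p$ points in the wrong direction. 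The intended workaround is to invoke Kantorovich duality implicit in strong timelike $p$-dualizability: under the moment condition, $l^p$-cyclical monotonicity of $\Gamma$ is certified by a pair of potentials $(\phi_0,\phi_1)$ with $\phi_0\oplus\phi_1\leq l^p$ and equality on $\Gamma$. One then transports these along $l$-geodesics through $\Gamma$ to build $\phi_s,\phi_t$ satisfying $\phi_s\oplus\phi_t\leq l^p$ globally with equality on $\Gamma_{st}$; Rockafellar's theorem, in the Lorentzian form appearing in \cite[Lem.~3.1]{braun2023}, then delivers the desired $l^p$-cyclical monotonicity. Combined with the steps above this completes strong timelike $p$-dualizability of $(\mu_s,\mu_t)$.
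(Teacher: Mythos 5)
Your overall strategy—gluing an $\ell_p$-optimal $\pi_{st}$ with $\ell_p$-optimal endpoint couplings, letting \eqref{Eq:Long comp} collapse to equalities, and exploiting rigidity of the reverse H\"older inequality ($p<1$) to deduce chronology, affine length-splitting, and concentration on $\Gamma_{st}$—is the right skeleton, and your infimum-convolution idea for the moment condition is also essentially the correct mechanism. Part (ii) is in reasonable shape, though you should verify that the infimum-convolved dominators you produce are \emph{lower} semicontinuous: a pointwise infimum is naturally only upper semicontinuous, so either a compactness argument (for $\ell_p$-geodesics with compactly supported endpoints, \autoref{Le:Intermediate pts geodesics} reduces this to constants) or a different construction is needed in the general case.

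The genuine gap is in part (i), and you have correctly located it: $l^p$-cyclical monotonicity of $\Gamma_{st}$. Your proposed workaround is not viable as stated, and it also reverses the logical direction of Rockafellar. First, Rockafellar's theorem goes from cyclical monotonicity \emph{to} the existence of saturating potentials, not the other way; if you already had $\phi_s\oplus\phi_t\leq l^p$ with equality on $\Gamma_{st}$, then $\Gamma_{st}$ would be cyclically monotone immediately, with no appeal to Rockafellar. Second, the "transport the potentials along $l$-geodesics" step is exactly where the $p$-th power destroys the clean interpolation: a Hopf--Lax-type shift $\phi_s(z) := \sup_{x_0\preceq z}[\phi_0(x_0) - (\text{$l^p$-increment})]$ does not produce a global bound $\phi_s\oplus\phi_t\leq l^p$ because for $p<1$ the relation $l^p(x_0,x_s)+l^p(x_s,x_t)+l^p(x_t,x_1)\leq l^p(x_0,x_1)$ fails to split as needed. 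You already observed this same concavity problem kills the pointwise estimate in the cyclical-monotonicity sum; the potential-transport route inherits the same obstruction. The cleaner and intended route is to \emph{not} try to manufacture $\Gamma_{st}$ as the image of $\Gamma$ under geodesic restriction and prove its monotonicity from scratch. Instead: your gluing/rigidity argument already yields that \emph{every} $\ell_p$-optimal coupling of $(\mu_s,\mu_t)$ is chronological and that the moment condition holds, so $(\mu_s,\mu_t)$ is timelike $p$-dualizable. One then invokes Kantorovich duality \emph{for the pair $(\mu_s,\mu_t)$ itself} (\cite[Props.~2.19, 2.21]{cavalletti2020}) to obtain Borel potentials $\phi_s,\psi_t$ with $\phi_s\oplus\psi_t\leq l^p$ and equality on the support of every optimal coupling, and takes $\Gamma_{st}:=\{\phi_s\oplus\psi_t=l^p\}\cap\{l>0\}\cap(\supp\mu_s\times\supp\mu_t)$. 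This set is $l^p$-cyclically monotone automatically as a saturation set, and the concentration of all optimal couplings on $\{l>0\}$—which is what forces them inside $\Gamma_{st}$—is precisely what your gluing argument established. No potential-transport is needed.
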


\subsubsection{Compactness}

\begin{theorem}[Accumulation of rough $\ell_p$-geodesics at rational times]\label{Pr:Compactness lp geos} Assume \autoref{Ass:REG}. For every $n\in\N$, let $\smash{(\mu_t^n)_{t\in \N}}$ be a given rough $\smash{\ell_p}$-geodesic, such that the sequences $\smash{(\mu_0^n)_{n\in\N}}$ and $\smash{(\mu_1^n)_{n\in\N}}$ are uniformly compactly supported, say in compact sets $C_0,C_1\subset\mms$, respectively. Assume  accumulation points $\mu_0,\mu_1\in\scrP_\comp(\mms)$ of the two sequences satisfy 
\begin{align}\label{Eq:lsc crit}
0 <\ell_p(\mu_0,\mu_1) &\leq \liminf_{n\to\infty}\ell_p(\mu_0^n,\mu_1^n).
\end{align}
Then there is a rough $\smash{\ell_p}$-geodesic $(\mu_t)_{t\in[0,1]}$ from $\mu_0$ to $\mu_1$ such that up to a nonrelabeled subsequence, $\smash{(\mu_t^n)_{n\in\N}}$ converges narrowly to $\mu_t$ for every rational $0\leq t\leq 1$.
\end{theorem}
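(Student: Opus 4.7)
The plan is to compactify the family $\{\mu_t^n\}$, extract a diagonal subsequence converging at all rational times, then close everything by combining upper semicontinuity of $\ell_p$ (valid here because all measures live in a fixed compact set) with the reverse triangle inequality \eqref{Eq:Reverse lp}. The statement only asks for a \emph{rough} $\ell_p$-geodesic (a pointwise multiplicative identity, no narrow continuity required), so irrational times can be filled in by arbitrary cluster points.

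First, tightness and diagonal extraction. By \autoref{Le:Intermediate pts geodesics}, $\supp\mu_t^n \subset Z_t(\supp\mu_0^n\times\supp\mu_1^n) \subset Z(C_0\times C_1)$, and \autoref{Le:Zt lemma} ensures $K := Z(C_0\times C_1)$ is compact. Thus $\{\mu_t^n : n\in\N,\,t\in[0,1]\}$ is uniformly compactly supported, hence tight. After passing to a subsequence along which $\mu_0^n\to\mu_0$ and $\mu_1^n\to\mu_1$ narrowly (by hypothesis), enumerate the rationals in $[0,1]$ and apply \autoref{Th:Prokhorov} with a diagonal argument to extract a further subsequence (not relabeled) along which $\mu_r^n\to\mu_r$ narrowly for every rational $r\in[0,1]$, with each $\mu_r\in\scrP_\comp(\mms)$ supported in $K$.

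Second, geodesy at rational times. The key analytic input is upper semicontinuity of $\ell_p$ under narrow convergence of uniformly compactly supported marginals: given $\ell_p$-optimal couplings $\pi_n\in\Pi_\leq(\mu_s^n,\mu_t^n)$ (which exist by \autoref{Le:Existence} since $\ell_p(\mu_s^n,\mu_t^n)>0$), tightness of $(\pi_n)_{n\in\N}$ inside $K\times K$, closedness of $\{l\geq 0\}$, and continuity of the bounded function $l_+$ on $K\times K$ combine via Alexandrov's theorem to yield $\ell_p(\mu_s,\mu_t) \geq \limsup_n \ell_p(\mu_s^n,\mu_t^n) = (t-s)\,\limsup_n \ell_p(\mu_0^n,\mu_1^n)$ for every rational $0\leq s<t\leq 1$. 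Applying this inequality to the three consecutive pairs $(\mu_0,\mu_s)$, $(\mu_s,\mu_t)$, $(\mu_t,\mu_1)$ and summing, \eqref{Eq:Reverse lp} gives
\begin{align*}
\ell_p(\mu_0,\mu_1) \geq \ell_p(\mu_0,\mu_s)+\ell_p(\mu_s,\mu_t)+\ell_p(\mu_t,\mu_1) \geq \limsup_{n\to\infty}\ell_p(\mu_0^n,\mu_1^n).
\end{align*}
Combined with the hypothesized bound \eqref{Eq:lsc crit}, this forces equality throughout, whence $\ell_p(\mu_0^n,\mu_1^n) \to \ell_p(\mu_0,\mu_1)$ and $\ell_p(\mu_s,\mu_t) = (t-s)\,\ell_p(\mu_0,\mu_1) > 0$ for all rational $s<t$.

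Third, extension to irrational times. For each irrational $t\in(0,1)$ pick any rational sequence $r_k\to t$; by tightness $(\mu_{r_k})_{k\in\N}$ admits a narrow cluster point $\mu_t\in\scrP_\comp(\mms)$, which we declare. The same upper-semicontinuity argument now applied to the already-constructed rational-indexed family shows $\ell_p(\mu_s,\mu_t) \geq (t-s)\,\ell_p(\mu_0,\mu_1)$ for every (not necessarily rational) $0\leq s<t\leq 1$; another application of \eqref{Eq:Reverse lp} across the triple $(\mu_0,\mu_s,\mu_t,\mu_1)$ upgrades this to the desired equality, exactly as in the previous paragraph. The hard part here is really Step~2: proving upper semicontinuity of $\ell_p$ in our generality --- without the uniform Lipschitz reparametrizations of \cite{braun2022, braun2023, cavalletti2020} --- which is precisely where the restriction to uniformly compactly supported marginals (granted by \autoref{Le:Intermediate pts geodesics} and \autoref{Le:Zt lemma}) becomes crucial. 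Once this is in hand, the lack of narrow continuity at irrational times is harmless because \autoref{Def:lp geo} imposes only the pointwise multiplicative condition on a rough $\ell_p$-geodesic.
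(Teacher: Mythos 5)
Your proof is correct and follows essentially the same route as the paper: diagonal extraction at rational times via Prokhorov and \autoref{Le:Zt lemma}, the Prokhorov--Alexandrov upper-semicontinuity argument for $\smash{\ell_p}$ (exactly the display the paper writes for \eqref{Eq:Shows}), and the reverse triangle inequality to force equality. The only cosmetic differences are that for irrational $t$ the paper fixes a monotone dyadic sequence and uses antisymmetry of $\preceq$ (\autoref{Pr:Order}) to make $\mu_t$ canonical, then invokes ``causally monotone continuity'' \cite[Thm.~2.14]{cavalletti2020} for the extension to all $s<t$, whereas you take arbitrary cluster points and rerun the USC argument --- which works, provided you note that the approximating rational subsequences for $\mu_s$ and $\mu_t$ must be chosen along a common index set (a routine further extraction you left implicit).
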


As in \autoref{Le:Geodesics plan}, the lower semicontinuity  \eqref{Eq:lsc crit} in \autoref{Pr:Compactness lp geos} implies the limit exists and equality holds \cite[Lem.~2.11, Thm.~2.14]{cavalletti2020}.

\begin{proof}[Proof of \autoref{Pr:Compactness lp geos}] The construction is similar to the first part of the proof of \autoref{Cor:Cpt TGeo}, thus we only outline it. By Prokhorov's theorem and \autoref{Le:Zt lemma}, we may and will assume the existence of a family $(\mu_t)_{t\in [0,1]\cap \Q}$ in $\scrP(\mms)$ with support in the common compact subset $Z(C_0\times C_1)\subset\mms$ such that up to a nonrelabeled subsequence, the sequence $(\mu_t^n)_{n\in\N}$ converges narrowly to $\mu_t$ for every rational $0\leq t\leq 1$. Given any irrational $0\leq t\leq 1$, let $(t_i)_{i\in\N}$ be defined by $\smash{t_i := \lceil t\,2^i\rceil\,2^{-i}}$. As above, we can thus define $\mu_t\in\scrP_\comp(\mms)$ as a subsequential narrow limit of $\smash{(\mu_{t_i})_{i\in\N}}$ --- in fact, the definition of $\mu_t$ does not depend on the choice of monotone approximating sequence by antisymmetry, cf.~\autoref{Pr:Order}; compare with the proof of \autoref{Cor:Cpt TGeo}. 

We claim that the collection $(\mu_t)_{t\in[0,1]}$ is a rough $\smash{\ell_p}$-geodesic from $\mu_0$ to $\mu_1$. To this aim, we first show every fixed rational $0\leq s <t\leq 1$ to satisfy
\begin{align}\label{Eq:Shows}
\ell_p(\mu_s,\mu_t) \geq (t-s)\,\ell_p(\mu_0,\mu_1).
\end{align}
The construction of $\mu_s$ and $\mu_t$, tightness of couplings, Alexandrov's theorem, and closedness of $\{l\geq 0\}$ ensure   $\smash{\Pi_\leq(\mu_s,\mu_t)\neq \emptyset}$. Since $\smash{l_+}$ is bounded on $Z(C_0\times C_1)$, again by Alexandrov's theorem and tightness of couplings some $\pi_{s,t}\in\Pi_\leq(\mu_s,\mu_t)$ is the narrow limit of a sequence $(\pi_{s,t}^n)_{n\in\N}$ of $\smash{\ell_p}$-optimal couplings $\smash{\pi_{s,t}^n\in \Pi_\leq(\mu_s^n,\mu_t^n)}$, up to a nonrelabeled subsequence. This already yields \eqref{Eq:Shows} because
\begin{align*}
\ell_p(\mu_s,\mu_t) &\geq \big\Vert l\big\Vert_{\Ell^p(\mms^2,\pi_{s,t})}\\
&= \lim_{n\to\infty} \big\Vert l\big\Vert_{\Ell^p(\mms^2,\pi_{s,t}^n)} \\
&\geq (t-s)\,\liminf_{n\to\infty}\ell_p(\mu_0^n,\mu_1^n)\\
&\geq (t-s)\,\ell_p(\mu_0,\mu_1).
\end{align*}

Finally, we show \eqref{Eq:Shows} to hold for every $0\leq s < t\leq 1$ --- as in \autoref{Re:TL Geo to geo}, this will imply $\smash{\ell_p}$-geodesy of $(\mu_t)_{t\in[0,1]}$. But by ``causally monotone continuity of $\smash{\ell_p}$'' \cite[Thm.~2.14]{cavalletti2020}, it is clear that \eqref{Eq:Shows} extends to every $0\leq s < t\leq 1$. 
\end{proof}

\begin{remark}[Narrow continuity] If every $\smash{\ell_p}$-optimal coupling of the endpoint marginals $\mu_0$ and $\mu_1$ in \autoref{Pr:Compactness lp geos} is chronological, by \autoref{Cor:Equiv notions lp geo} the limit rough $\smash{\ell_p}$-geodesic is an $\smash{\ell_p}$-geodesic, and it is uniquely determined by its values on rational $0\leq t\leq 1$.

More generally, if a limit curve $(\mu_t)_{t\in[0,1]}$ of rough $\smash{\ell_p}$-geodesics $(\mu_t^n)_{t\in[0,1]}$, where $n\in\N$, according to \autoref{Pr:Compactness lp geos} above is narrowly continuous, then  $\smash{(\mu_t^n)_{n\in\N}}$ converges narrowly to $\mu_t$ for \emph{every} $0\leq t\leq 1$. This follows from antisymmetry of the causal relation $\preceq$ on $\Prob(\mms)$ discussed in \autoref{Sub:Causality pm}; compare with the proof of \autoref{Cor:Cpt TGeo}. Indeed, every subsequence of $(\mu_t^n)_{n\in\N}$ has a further nonrelabeled subsequence converging to some $\nu\in \Prob_\comp(\mms)$ by Prokhorov's theorem. For $s<t<r$ with $s,r\in\Q$, passing to the limit as $n\to\infty$ in the chain $\smash{\mu_s^n\preceq\mu_t^n\preceq\mu_r^n}$ yields $\smash{\mu_s \preceq\nu\preceq\mu_r}$. Narrow continuity of $(\mu_t)_{t\in[0,1]}$ gives $\smash{\mu_t \preceq\nu\preceq\mu_t}$, and thus $\mu_t=\nu$ by antisymmetry of $\preceq$ (\autoref{Pr:Order}).
\end{remark}

\subsection{Taking the reference measure $\meas$ into account}

\subsubsection{Timelike $p$-essential nonbranching} 
The following condition, 
which relaxes timelike nonbranching \cite[Def.~1.10]{cavalletti2020}  
to an ``a.e.~requirement'', has been introduced in \cite[Def.~2.21, Rem.~2.22]{braun2022} following \cite{rajala2014}. It involves the reference measure $\meas$ from the metric measure spacetime $\scrM$ discussed at \eqref{Eq:Space}.

A subset $G\subset\TGeo(\mms)$ is \emph{timelike nonbranching} if for every $0\leq s < t \leq 1$, the map $\smash{\Restr_s^t\big\vert_G}$ is injective, where $\smash{\Restr_s^t}$ is from \eqref{Eq:Restr}.

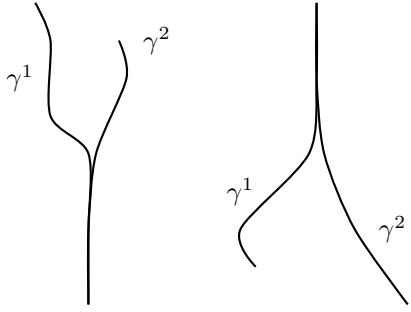
\begin{figure}
\centering
\begin{tikzpicture}
\draw[thick] plot [smooth] coordinates {(0,0) (0,1)  (0,2) (-0.5,2.5) (-0.5,3.5) (-0.7,4)};
\draw[thick] plot [smooth] coordinates {(0,0) (0,1)  (0.1,2) (0.5,3) (0.4,3.5)};
\draw[thick] plot [smooth] coordinates {(3,4) (3,3) (2.9,2) (2,1) (2.2,0.5)};
\draw[thick] plot [smooth] coordinates {(3,4) (3,3) (3.1,2) (3.5,1) (4.2,0)};
\node at (-0.9,3) {$\gamma^1$};
\node at (0.9,3.5) {$\gamma^2$};
\node at (2,1.5) {$\gamma^1$};
\node at (4,1) {$\gamma^2$};
\end{tikzpicture}
\caption{Timelike (forward and backward) branching.}\label{Fig:TLBranching}
\end{figure}

\begin{definition}[Timelike essential nonbranching]\label{Def:TL nonbranching} We term the space  $\scrM$ \emph{timelike $p$-essentially nonbranching} if for every  $\bdpi\in\smash{\OptTGeo_p(\scrP_\comp^\ac(\mms,\meas)^2)}$ there exists a timelike nonbranching Borel set $G\subset\TGeo(\mms)$ with $\bdpi[G]=1$.
\end{definition}

Note that timelike $p$-essential nonbranching is a property of 
$\meas$ rather than $\mms$
(and that $\mms = \supp\meas$ is not assumed).

Despite our more general approaches to geodesics on $\mms$ and $\smash{\scrP_\comp(\mms)}$, respectively, the proof of  \cite[Lem.~2.23]{braun2023} carries over with no essential change.

\begin{lemma}[Mutually singular interpolants]
\label{Le:Mutually singular} Assume $\scrM$ is timelike $p$-essentially nonbranching. Decompose $\smash{\bdpi\in\OptTGeo_p(\scrP_\comp^\ac(\mms,\meas)^2)}$ as $\smash{\bdpi = \lambda_1\,\bdpi^1 + \dots + \lambda_n\,\bdpi^n}$ for certain $0<\lambda_1,\dots,\lambda_n<1$ and $n\in\N$. Lastly, assume $\smash{(\eval_t)_\push\bdpi^1,\dots,(\eval_t)_\push\bdpi^n}$ to be $\meas$-absolutely continuous for some $0<t<1$, and mutual singularity of $\smash{\bdpi^1,\dots,\bdpi^n}$. Then the measures $(\eval_t)_\push\bdpi^1,\dots,(\eval_t)_\push\bdpi^n$ are mutually singular as well.
\end{lemma}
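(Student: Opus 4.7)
The plan is to argue by contradiction, and WLOG assume that indices $i=1, j=2$ witness the failure of mutual singularity of $(\eval_t)_\push\bdpi^1$ and $(\eval_t)_\push\bdpi^2$. Since both pushforwards are $\meas$-absolutely continuous with respective densities $f^1, f^2$, the overlap measure $\eta := (f^1 \wedge f^2)\,\meas$ is nontrivial. By hypothesis $\mu_0, \mu_1 \in \scrP_\comp^{\ac}(\mms,\meas)$, so \autoref{Def:TL nonbranching} applies to $\bdpi$ and produces a timelike nonbranching Borel set $G \subset \TGeo(\mms)$ with $\bdpi[G] = 1$; the mutual singularity of the components $\bdpi^i$ together with $\bdpi = \sum_i \lambda_i \bdpi^i$ then forces $\bdpi^i[G] = 1$ for each $i$. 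Disintegrating $\bdpi^1, \bdpi^2$ along $\eval_t$ and restricting the conditional measures using $\eta \leq (\eval_t)_\push\bdpi^i$ yields nonzero sub-plans $\bdsigma^i \leq \bdpi^i$ with $(\eval_t)_\push\bdsigma^i = \eta$, both concentrated on $G$; by \autoref{Le:Geodesics plan}\ref{La:04}, their normalizations belong to $\smash{\OptTGeo_p}$ of their endpoint marginals.

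Next I would glue and swap at time $t$. Using the disintegrations above, define a probability measure $\bdomega := \int \bdsigma^1_x \otimes \bdsigma^2_x \, \rmd\eta(x)$ on $\TGeo(\mms)^2$, whose marginals are $\bdsigma^1,\bdsigma^2$ and which concentrates on pairs $(\gamma,\sigma)$ with $\gamma_t = \sigma_t$. The push-up of \autoref{Le:Pushup openness} then gives $\gamma_0 \ll \sigma_1$ and $\sigma_0 \ll \gamma_1$ $\bdomega$-a.e., and writing $\alpha := l(\gamma_0,\gamma_1)$ and $\beta := l(\sigma_0,\sigma_1)$, the reverse triangle inequality \eqref{Eq:Reverse tau} implies
\begin{align*}
l(\gamma_0,\sigma_1) \geq t\,\alpha + (1-t)\,\beta,\qquad l(\sigma_0,\gamma_1) \geq t\,\beta + (1-t)\,\alpha.
\end{align*}
Lift the swapped chronological endpoint couplings via \autoref{Le:Geodesics plan}\ref{La:01} to plans $\bdtau^{12}, \bdtau^{21}$, and assemble the competitor $\bdpi' := \bdpi - \bdsigma^1 - \bdsigma^2 + \bdtau^{12} + \bdtau^{21}$; a mass check using mutual singularity of the $\bdpi^i$ confirms $\bdpi' \geq 0$, and a direct marginal computation verifies that $(\eval_0,\eval_1)_\push\bdpi'$ is a causal coupling of $\mu_0$ and $\mu_1$. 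Strict concavity of $r \mapsto r^p$ on $[0,\infty)$ (here $0 < p < 1$) then yields
\begin{align*}
l(\gamma_0,\sigma_1)^p + l(\sigma_0,\gamma_1)^p \geq \alpha^p + \beta^p,
\end{align*}
strictly unless $\alpha = \beta$ and both triangle inequalities above are equalities. Were strict inequality to hold on a set of positive $\bdomega$-mass, $\bdpi'$ would strictly exceed $\bdpi$ in $\smash{\ell_p}$-cost, contradicting the $\smash{\ell_p}$-optimality of $\bdpi$.

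Hence $\alpha = \beta$ $\bdomega$-a.e., and a short check against the equality cases of \eqref{Eq:Reverse tau} shows the crossed curve $\tau$ defined by $\tau|_{[0,t]} := \gamma|_{[0,t]}$ and $\tau|_{[t,1]} := \sigma|_{[t,1]}$ belongs to $\TGeo(\mms)$. The averaged plan $\bdpi'' := \tfrac12(\bdpi + \bdpi')$ then remains $\smash{\ell_p}$-optimal with the same $\meas$-absolutely continuous, compactly supported marginals $\mu_0, \mu_1$, so \autoref{Def:TL nonbranching} applied to $\bdpi''$ produces a timelike nonbranching Borel set $H$ with $\bdpi''[H] = 1$; this forces $\bdpi[H] = \bdpi'[H] = 1$, whence $\bdomega$-a.e.\ all three curves $\gamma, \sigma, \tau$ lie in $H$. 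Since $\gamma|_{[0,t]} = \tau|_{[0,t]}$, injectivity of $\Restr_0^t$ on $H$ gives $\gamma = \tau$; then $\sigma|_{[t,1]} = \tau|_{[t,1]} = \gamma|_{[t,1]}$ combined with injectivity of $\Restr_t^1$ on $H$ yields $\gamma = \sigma$ $\bdomega$-a.e. Consequently $\bdomega$ is concentrated on the diagonal, forcing $\bdsigma^1 = \bdsigma^2$; but both are nonzero with $\bdsigma^1 \perp \bdsigma^2$ (inherited from $\bdpi^1 \perp \bdpi^2$), a contradiction. The principal obstacle is to make the swap rigorous --- verifying positivity and causality of the competitor $\bdpi'$ and tracking how the averaged plan $\bdpi''$ allows the nonbranching hypothesis to be applied simultaneously to the three curves $\gamma$, $\sigma$, $\tau$ in one stroke.
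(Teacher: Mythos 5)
Your swap-and-compare strategy is the Rajala--Sturm argument that the paper's citation (\cite[Lem.~2.23]{braun2023}) also uses, and the large-scale structure — glue at time $t$, cost comparison via strict concavity of $r\mapsto r^p$, deduce equality of speeds, then invoke timelike nonbranching — is correct. However, there is a genuine gap precisely at the point you flag, and it does not close by the argument as written.

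The problem is the claim that ``$\bdomega$-a.e.\ all three curves $\gamma,\sigma,\tau$ lie in $H$.'' You obtain $\bdpi''[H]=1$ and hence $\bdpi'[H]=1$, from which the curves charged by $\bdtau^{12}$ and $\bdtau^{21}$ lie in $H$. But $\bdtau^{12}$ was produced by lifting the swapped \emph{endpoint coupling} $(\gamma_0,\sigma_1)_\sharp\bdomega$ via the measurable selection in \autoref{Le:Geodesics plan}\ref{La:01}; the selected geodesic from $\gamma_0$ to $\sigma_1$ has no reason to be the crossed curve $\tau:=\gamma\big\vert_{[0,t]}*\sigma\big\vert_{[t,1]}$, and absent any uniqueness statement there may be many geodesics between those endpoints. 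So $\bdtau^{12}[H]=1$ gives no information about $\tau$ itself, and the injectivity of $\Restr_0^t$ and $\Restr_t^1$ on $H$ cannot be applied as you do. (A second, purely technical issue: \ref{La:01} is stated for $\ell_p$-\emph{optimal} chronological couplings, and the swapped coupling is not a priori optimal between its marginals; you only need the underlying measurable-selection argument, which does not require optimality.)

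The fix is to reorder: run the cost comparison with abstract lifts only to establish, $\bdomega$-a.e., that $\alpha=\beta$ and the reverse triangle inequalities hold with equality. These equalities exactly make $\tau$ an $l$-geodesic — one checks $l(\tau_s,\tau_{s'})=(s'-s)\,\alpha$ for $s<t<s'$ by splitting $l(\gamma_0,\sigma_1)$ at $\gamma_s$ and $\sigma_{s'}$ and using that the full chain is already saturated. Only then \emph{replace} $\bdtau^{12},\bdtau^{21}$ by the explicit pushforwards of $\bdomega$ under the two swap maps $(\gamma,\sigma)\mapsto\tau$ and $(\gamma,\sigma)\mapsto\gamma\big\vert_{[t,1]}*\sigma\big\vert_{[0,t]}$; these have the same endpoint couplings and cost, so the resulting $\bdpi'$ is still optimal, and now $\bdtau^{12}$-a.e.\ curve \emph{is} $\tau$. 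Applying \autoref{Def:TL nonbranching} to $\bdpi''=\tfrac12(\bdpi+\bdpi')$ then legitimately places $\gamma,\sigma,\tau$ in $H$ simultaneously, and the injectivity argument closes the loop. Separately, for nonnegativity of $\bdpi'$ you need $\bdsigma^i\leq\lambda_i\,\bdpi^i$ (not $\bdsigma^i\leq\bdpi^i$), since it is $\lambda_1\bdpi^1+\lambda_2\bdpi^2\leq\bdpi$ that mutual singularity supplies; rescale the common overlap $\eta$ by $\min(\lambda_1,\lambda_2)$ before constructing the $\bdsigma^i$.
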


\subsubsection{Boltzmann entropy}  We define the \emph{Boltzmann} {(or \emph{Boltzmann--Shannon}) entropy} $\Ent_\meas\colon \scrP(\mms)\to [-\infty,\infty]$ by decomposing $\mu =\rho\,\meas + \mu_\perp$ into its absolutely continuous and singular parts with respect to $\meas$ and setting 
\begin{align*}
\Ent_\meas(\mu) &:= \begin{cases}
\displaystyle\int_\mms \rho\log\rho\d\meas & \textnormal{if 
$(\rho\log\rho)_-\in\Ell^1(\mms,\meas)$ and $\mu_\perp(\mms)=0$},
\\ +\infty & \textnormal{if 
$(\rho\log\rho)_-\in\Ell^1(\mms,\meas)$ and $\mu_\perp(\mms)>0$},
\\-\infty & \textnormal{if } (\rho\log\rho)_-\not\in\Ell^1(\mms,\meas).
\end{cases}
\end{align*}
We denote by $\Dom(\Ent_\meas)$ the finiteness domain of $\Ent_\meas$. We will mostly consider $\Ent_\meas$ on compactly supported measures. In particular, on $\scrP_\comp(\mms)$ the Boltzmann entropy does not attain the value $-\infty$, since
\begin{align*}
\Ent_\meas(\mu) \geq -\log\meas\big[\!\supp\mu\big] > -\infty
\end{align*}
for every $\smash{\mu\in\scrP_\comp(\mms)}$ by Jensen's inequality and the Radon property of $\meas$.

We occasionally use the following narrow lower semicontinuity property of $\Ent_\meas$, cf.~e.g.~\cite[Lem.~9.4.3]{ambrosio2008} or \cite[Lem.~4.1]{sturm2006a}. If $(\mu_n)_{n\in\N}$ is a given uniformly compactly supported sequence in $\scrP(\mms)$ which converges narrowly to $\mu\in\scrP(\mms)$,  then
\begin{align*}
\Ent_\meas(\mu) \leq \liminf_{n\to\infty}\Ent_\meas(\mu_n).
\end{align*}

Given any $N>0$, we also define the functional $\scrU_N \colon \scrP(\mms) \to [0,\infty]$ by
\begin{align}\label{Eq:UN DEF}
\scrU_N := \rme^{-\Ent_\meas(\mu)/N}.
\end{align}
We employ the conventions $\rme^{-\infty}:=0$ and $\rme^\infty := \infty$. Note that $\scrU_N$ is narrowly upper semicontinuous in the same sense as $\Ent_\meas$, and for every $\mu\in\Prob_\comp(\mms)$,
\begin{align}\label{Eq:UN Jensen}
\scrU_N(\mu) \leq \meas\big[\!\supp\mu\big]^{1/N}.
\end{align}

\section{Timelike curvature-dimension condition}\label{Ch:TCD}

From now on, unless explicitly stated otherwise we fix an exponent $0<p<1$, a lower semi\-continuous  function $k\colon\mms\to\R$, and a real number $N >1$\footnote{The entropic condition we will introduce actually only requires $N>0$. For other timelike curvature-dimension conditions such as in \cite{braun2022} the situation $N=1$ is technically slightly different from $N>1$, but can be understood as the limiting case of the latter; cf.~e.g.~\autoref{Pr:Consistency}.}. The lower boundedness of $k$ is \emph{not} assumed.

\subsection{Distortion coefficients along $l$-geodesics} Now we describe how to generalize the distortion coefficients employed to extend the timelike-curvature dimension condition \cite{braun2022, cavalletti2020} to the variable case. We refer to Ketterer \cite{ketterer2015,ketterer2017} for more  details.

\subsubsection{Continuous potentials on real intervals}\label{Sub:Cts potentials} Given any $L>0$, let $\kappa\colon[0,L]\to\R$ be a  continuous function. The \emph{generalized sine function} $\SIN_\kappa\colon[0,L]\to \R$ is the unique solution to the ODE $f'' + \kappa\,f = 0$ satisfying $f(0) = 0$ and $f'(0) = 1$. This solution depends uniformly continuously on $\kappa$. 

\begin{definition}[Distortion coefficients on the half-line $[0,\infty)$]
\label{Def:Dist coeff} For $0\leq t\leq 1$, the \emph{generalized distortion coefficient} $\smash{\sigma_\kappa^{(t)}}\colon[0,L]\to [0,\infty]$ is defined by
\begin{align}
\sigma_{\kappa}^{(t)}(\theta) := \begin{cases} \displaystyle\frac{\SIN_\kappa(t\,\theta)}{\SIN_\kappa(\theta)} & \textnormal{if } \SIN_\kappa(r) > 0 \textnormal{ for every } 0<r\leq \theta,\\
\infty & \textnormal{otherwise}.
\end{cases}
\label{distortion on R}
\end{align}
\end{definition}


\begin{example}[Constant potentials]\label{Ex:Constant pot} If $\kappa$ is constant, the distortion coefficients from \autoref{Def:Dist coeff} take the following customary form:
\begin{align*}
\sigma_{\kappa}^{(t)}(\theta) &= \begin{cases} \displaystyle\frac{\sin(\sqrt{\kappa}\,t\,\theta)}{\sin(\sqrt{\kappa}\,\theta)} & \textnormal{if } 0 < \kappa\,\theta^2 <\pi^2,\\
\infty & \textnormal{if } \kappa\,\theta^2 \geq \pi^2,\\
t &  \textnormal{if }\kappa\,\theta^2 = 0,\\
\displaystyle\frac{\sinh(\sqrt{-\kappa}\,t\,\theta)}{\sinh(\sqrt{-\kappa}\,\theta)} & \textnormal{if }\kappa < 0.
\end{cases}
\end{align*}
\end{example}

\begin{remark}[Scaling]\label{Re:Scaling prop} Given any $0 \leq \theta\leq L$, we have  either $\smash{\sigma_\kappa^{(t)}(\theta)= \infty}$ for every $0\leq t\leq 1$, or  $\smash{\sigma_\kappa^{(t)}(\theta)< \infty}$ for every $0\leq t\leq 1$, cf.~\autoref{Le:Properties} below. 
 
 In the latter case, the function $u\colon [0,1]\to [0,\infty)$ given by $\smash{u(t) := \sigma_\kappa^{(t)}(\theta)}$  obeys $u(0) = 0$, $u(1) = 1$, and the following ODE for every $0<t<1$:
\begin{align}\label{Eq:ODE u}
u''(t) + \kappa(t\,\theta)\,\theta^2\,u(t) = 0.
\end{align}
Uniqueness implies the useful scaling property  
\begin{align}\label{Eq:Scaling prop sigma}
\sigma_{\kappa}^{(t)}(\theta) = \sigma_{\kappa \theta^2}^{(t)}(1),
\end{align}
where  $\kappa\,\theta^2$ is understood as the function $\tilde{\kappa}\colon[0,1]\to \R$ with
\begin{align*}
\tilde{\kappa}(t) := \kappa(t\,\theta)\,\theta^2.
\end{align*}
\end{remark}

\subsubsection{Continuous potentials along $l$-geodesics} Now we adapt these notions to a continuous function $k\colon \mms\to\R$. The natural approach is to consider $k$ along an appropriate timelike curve. 

Let $\gamma\in\TGeo(\mms)$ be fixed, and define
\begin{align}\label{Eq:speed}
\vert\dot\gamma\vert := l(\gamma_0,\gamma_1).
\end{align}
We use this notation only for formal reasons. However, it should be compared to the \emph{metric speed} of absolutely continuous curves in metric spaces, cf.~e.g.~\cite[Thm. 1.1.2]{ambrosio2008}. In fact,   for causal curves $\gamma$ which need not be geodesic, we define a notion $|\dot \gamma|:[0,1]\to\R$ of 
\emph{causal speed} in \cite{beran2023} which reduces to \eqref{Eq:speed} if $\gamma\in\TGeo(\mms)$; in this more general setting, $\vert\dot\gamma\vert$ need not be constant; only its average over $[0,1]$ is dominated by $l(\gamma_0,\gamma_1)$, and we can then consider the $l$-arclength  reparametrization $\smash{\bar\gamma\colon[0,\vert\dot\gamma\vert]\to\R}$ of a rectifiable curve $\gamma$ given by \autoref{Pr:Reparam}. For $\gamma\in\TGeo(\mms)$ --- the only relevant case here --- it has the simple form
\begin{align*}
\bar\gamma_r := \gamma_{r/\vert\dot\gamma\vert}.
\end{align*}

In view of \autoref{Re:Scaling prop},  define the continuous  forward and backward functions  $\smash{k_\gamma^\pm\colon[0,\vert\dot\gamma\vert]\to\R}$ by 
\begin{align}\label{Eq:kgamma defn}
\begin{split}
k_\gamma^-(t\,\vert\dot\gamma\vert) &:= k(\gamma_{1-t}),\\
k_\gamma^+(t\,\vert\dot\gamma\vert) &:= k(\gamma_t),
\end{split}
\end{align}
which depend continuously, with respect to uniform convergence, on $\gamma \in\TGeo(\mms)$ for fixed $0 \leq t\leq 1$. Equivalently, we can express \eqref{Eq:kgamma defn} as
\begin{align*}
k_\gamma^-(r) &:= k(\bar\gamma_{\vert\dot\gamma\vert-r}),\\
k_\gamma^+(r) &:= k(\bar\gamma_r).
\end{align*}

\subsubsection{Lower semicontinuous potentials along $l$-geodesics}\label{Sub:LSC pot tl geo} Now we assume $k\colon M\to \R$ to be only lower semicontinuous (and we still fix $\gamma\in\TGeo(\mms)$). In this case, we use the previous constructions and argue by approximation. The scheme can be found in \cite[Sec.~5.1]{ambrosio2008}. It has proven useful in the general  study of metric measure spaces with variable Ricci curvature bounds \cite{braun2021, ketterer2015, ketterer2017, sturm2015, sturm2020}. 

Let $E\subset \mms$ be any set containing $\gamma_{[0,1]}$ such that $\smash{\inf k(E) > -\infty}$. Typically, we only consider $E := \gamma_{[0,1]}$ when $\gamma$ is fixed, but when $\gamma$ varies it will be convenient to choose $E$ larger; the precise choice of $E$ does not matter, cf.~\autoref{Re:Indep E}. We approximate $k$ pointwise by a nondecreasing sequence of functions as follows.  For $n\in\N$, define $k_n \colon E \to \R$ by
\begin{align}\label{Eq:kappan}
k_n(x) := \inf\{\min\{k(y),n\} + n\,\met(x,y) : y\in E\}.
\end{align}
It is not difficult to check that
\begin{itemize}
\item $k_n$ is a bounded Lipschitz function that satisfies $\inf k(E) \leq k_n \leq k$ on $E$ for every $n\in\N$, and 
\item $(k_n)_{n\in\N}$ is nondecreasing and converges pointwise to $k$ on $E$.
\end{itemize}

For the next definition, we consider the function $\smash{k_{n,\gamma}^\pm\colon [0,\vert\dot\gamma\vert] \to \R}$ defined by \eqref{Eq:kgamma defn} with $k$ replaced by $k_n$ from \eqref{Eq:kappan}.

\begin{definition}[Distortion coefficients along $l$-geodesics]\label{Def:dist coeff general k} Given any $0\leq \theta\leq\vert\dot\gamma\vert$, we define
\begin{enumerate}[label=\textnormal{\alph*.}]
\item the \emph{generalized sine functions} $\smash{\SIN_{k_\gamma^\pm}}$ associated with $\smash{k_\gamma^\pm}$ by
\begin{align*}
\SIN_{k_\gamma^\pm}(\theta) &:= \lim_{n\to\infty} \SIN_{k_{n,\gamma}^\pm}(\theta),
\end{align*}
\item the \emph{distortion coefficients} $\smash{\sigma_{k_\gamma^\pm}^{(t)}}$, where $0\leq t\leq 1$, associated with $\smash{k_\gamma^\pm}$ by
\begin{align*}
\sigma_{k_\gamma^\pm}^{(t)}(\theta) &:= \lim_{n\to\infty} \sigma_{k_{n,\gamma}^\pm}^{(t)}(\theta).
\end{align*}
\end{enumerate}
\end{definition}

\begin{remark}[Well-definedness]\label{Re:Indep E} \autoref{Def:dist coeff general k} is  reasonable  since the involved sequences are monotone by the nonde\-creasingness of $(k_n)_{n\in\N}$ and (Jacques Charles Fran\c{c}ois) 
Sturm's comparison theorem \cite[Thm.~3.1]{ketterer2017}, and since $\smash{(k_{n,\gamma}^\pm)_{n\in\N}}$ converges pointwise to $\smash{k_\gamma^\pm}$. By a similar argument, these definitions are independent of the choice of the approximating sequence \cite[Rem.~3.6]{ketterer2015}, and in particular of $E$. 

The ODE \eqref{Eq:ODE u} still holds for $\smash{\kappa := k_\gamma^\pm}$ in the distributional sense.
\end{remark}

If $k$ is constant(ly equal to $\kappa\in\R$) along $\gamma$ in \autoref{Def:dist coeff general k}, then
\begin{align*}
\sigma_{k_\gamma^\pm}^{(t)}(\theta) = \sigma_{\kappa}^{(t)}(\theta),
\end{align*}
where $\smash{\sigma_{\kappa}^{(t)}(\theta)}$ is as in \eqref{Eq:ODE u} with constant potential $\kappa$.

We refer to \cite[Lem.~3.9]{ketterer2015} and \cite[Props.~3.4, 3.15, Lems.~3.10, 3.14]{ketterer2017} for proofs of the subsequent facts. Let $\LSC(\mms)$ denote the convex space of real-valued lower semicontinuous functions on $\mms$.

\begin{lemma}[Some properties of distortion coefficients]
\label{Le:Properties} Let $E\subset\mms$ be as above. The generalized distortion coefficients from \autoref{Def:dist coeff general k} satisfy the following for every $\smash{k_\gamma\in \{k_\gamma^-,k_\gamma^+\}}$ and every $0\leq \theta\leq\vert\dot\gamma\vert$.
\begin{enumerate}[label=\textnormal{(\roman*)}]
\item \textnormal{\textbf{Nondegeneracy.}} Either $\smash{\sigma_{k_\gamma}^{(t)}(\theta) = \infty}$ for every $0<t<1$ or $\smash{\sigma_{k_\gamma}^{(t)}(\theta) < \infty}$ for every $0<t<1$. In the latter case, 
\begin{align*}
\sigma_{k_\gamma}^{(t)}(\theta) = \frac{\SIN_{k_\gamma}(t\,\theta)}{\SIN_{k_\gamma}(\theta)}.
\end{align*}
\item \textnormal{\textbf{Monotonicity.}} If a lower semicontinuous function $k'\colon E \to \R$ satisfies $k'\leq k$ on $E$, then $\smash{\sigma_{k'_\gamma}^{(t)}(\theta) \leq \sigma_{k_\gamma}^{(t)}(\theta)}$ for every $0 \leq t\leq 1$.
\item \textnormal{\textbf{Lower semicontinuity.}} Let $(k_i)_{i\in\N}$ be a sequence of lower semicontinuous functions $k_i \colon E\to\R$ such that for every $x\in E$,
\begin{align*}
k(x) \leq \liminf_{i\to\infty}k_i(x).
\end{align*}
Define $\smash{k_{i,\gamma}}$ by \eqref{Eq:kgamma defn} with $k$ replaced by $k_i$. Then for every $0\leq t\leq 1$,
\begin{align*}
\sigma_{k_\gamma}^{(t)}(\theta)\leq \liminf_{i\to\infty} \sigma_{k_{i,\gamma}}^{(t)}(\theta).
\end{align*}
In particular, $\smash{\sigma_{k_\gamma}^{(t)}(\vert\dot\gamma\vert)}$ is lower semicontinuous in $\gamma\in\TGeo(\mms)$ with respect to uniform convergence.
\item \textnormal{\textbf{Convexity.}} For every $0\leq t\leq 1$, $\smash{\sigma_{k_\gamma}^{(t)}(1)}$ is logarithmically con\-vex, hence convex in $k\in\LSC(\mms)$. Moreover, the function $\rmG_t\colon \R^2 \times \LSC(\mms)\to \R\cup\{\infty\}$ defined by
\begin{align*}
\rmG_t(r,s,k) := \log\!\big[\sigma_{k_\gamma^-}^{(1-t)}(1)\,\rme^r + \sigma_{k_\gamma^+}^{(t)}(1)\,\rme^s\big]
\end{align*}
is jointly convex.
\end{enumerate}
\end{lemma}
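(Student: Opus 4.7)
The proof would proceed item by item, with the key input being Sturm's classical comparison theorem for the ODE $u'' + \kappa\, u = 0$ with continuous potential $\kappa$, combined with a monotone approximation. The whole point of \autoref{Def:dist coeff general k} is that $\SIN_{k_{n,\gamma}^\pm}$ converges monotonically as $n\to\infty$ to a generalized sine function $\SIN_{k_\gamma^\pm}$ which, as in \cite{ketterer2017}, is a distributional solution to $f'' + k_\gamma^\pm f = 0$ with $f(0)=0$ and $f'(0)=1$. I would begin by recording this fact, and then establish everything by transferring standard properties of smooth comparison to this limit object.

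For \textbf{nondegeneracy (i)}, the dichotomy  reduces to the question of whether $\SIN_{k_\gamma^\pm}$ has a zero on $(0,\theta]$. Since $k_n\nearrow k$, Sturm's comparison yields that $\SIN_{k_{n,\gamma}^\pm}$ is nonincreasing in $n$, and its first positive zero $T_n$ is nondecreasing; letting $T := \lim_n T_n \in (0,\infty]$, one checks that $\SIN_{k_\gamma^\pm}>0$ on $(0,T)$ and $\SIN_{k_\gamma^\pm}(T)=0$ if $T<\infty$. Passing to the limit in $\smash{\sigma_{k_{n,\gamma}^\pm}^{(t)}(\theta) = \SIN_{k_{n,\gamma}^\pm}(t\theta)/\SIN_{k_{n,\gamma}^\pm}(\theta)}$ separates the two cases $\theta<T$ (finite, with the explicit sine ratio) and $\theta\geq T$ (infinite for every $0<t<1$, since either some $\sigma_{k_{n,\gamma}^\pm}^{(t)}(\theta)$ is already infinite or $\SIN_{k_{n,\gamma}^\pm}(\theta)\downarrow 0$ while $\SIN_{k_{n,\gamma}^\pm}(t\theta)$ stays bounded below). \textbf{Monotonicity (ii)} follows by applying the same inf-convolution construction \eqref{Eq:kappan} to both $k'$ and $k$: the resulting Lipschitz approximants satisfy $k'_n \leq k_n$ pointwise on $E$, Sturm's theorem in the continuous regime yields $\smash{\sigma_{k'_{n,\gamma}}^{(t)}(\theta)\leq\sigma_{k_{n,\gamma}}^{(t)}(\theta)}$, and monotone passage to the limit preserves the inequality.

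For \textbf{lower semicontinuity (iii)}, I would first reduce to the monotone case via $\tilde{k}_i:=\inf_{j\geq i}k_j$, which is lower semicontinuous and nondecreasing in $i$ with $\tilde{k}_i\nearrow \liminf_j k_j\geq k$. By (ii), $\smash{\sigma_{\tilde{k}_{i,\gamma}}^{(t)}(\theta)\leq \sigma_{k_{i,\gamma}}^{(t)}(\theta)}$, so it suffices to show $\smash{\sigma_{k_\gamma}^{(t)}(\theta)\leq\lim_{i}\sigma_{\tilde{k}_{i,\gamma}}^{(t)}(\theta)}$. By (ii) again applied to $k\leq \tilde{k}_i$ the inequality is immediate; and the asserted continuity under nondecreasing monotone lower semicontinuous limits is compatible with \autoref{Re:Indep E}. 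For lower semicontinuity in $\gamma$, I would use the scaling identity \eqref{Eq:Scaling prop sigma} to write $\smash{\sigma_{k_\gamma^\pm}^{(t)}(\vert\dot\gamma\vert) = \sigma_{k_\gamma^\pm \vert\dot\gamma\vert^2}^{(t)}(1)}$, where the potentials live on the fixed interval $[0,1]$: if $\gamma_n\to\gamma$ uniformly, then $\vert\dot\gamma_n\vert\to\vert\dot\gamma\vert$ by continuity of $l_+$, and $k_\gamma^\pm(s)\leq \liminf_n k_{\gamma_n}^\pm(s)$ pointwise by lower semicontinuity of $k$ and continuity of $\gamma_n$, so the previous paragraph applies.

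Finally, \textbf{convexity (iv)} would be proved first in the continuous-potential case as in \cite[Prop.~3.4]{ketterer2017}: the logarithmic convexity $k\mapsto \log \sigma_{k}^{(t)}(1)$ for continuous $k$ follows from a Jacobi-type ODE argument comparing solutions of $u''+\kappa\,u=0$ under convex combinations of $\kappa$, then extended to $\LSC(\mms)$ by monotone approximation and noting that pointwise limits of convex functions are convex. The claim on $\rmG_t$ is then a formal consequence: $(r,s)\mapsto \log(\rme^r+\rme^s)$ is convex and increasing in each argument, so composing with the affine-in-$(r,s)$ and convex-in-$k$ inner expression (using that $\sigma_{k_\gamma^\pm}^{(\cdot)}(1)$ is convex in $k$) preserves convexity. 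The main subtlety I anticipate is step (iv), specifically verifying that logarithmic convexity survives the inf-convolution limit without loss; this is where invoking \cite{ketterer2015,ketterer2017} most directly pays off, and where a careful application of Sturm comparison under convex combinations of two continuous approximating sequences is indispensable.
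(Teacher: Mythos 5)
Two of your steps contain concrete errors. In (i), you assert that the first positive zero $T_n$ of $\SIN_{k_{n,\gamma}^\pm}$ is \emph{nondecreasing} in $n$. It is nonincreasing: the approximants $k_n$ increase, so the potential grows with $n$, solutions oscillate faster, and first zeros move to the left; indeed, your own observation $\SIN_{k_{n+1,\gamma}^\pm}\leq\SIN_{k_{n,\gamma}^\pm}$ forces $\SIN_{k_{n+1,\gamma}^\pm}(T_n)\leq 0$, hence $T_{n+1}\leq T_n$. The limit $T$ is the same and the dichotomy survives, but a reader following your claimed direction of monotonicity cannot verify that $\SIN_{k_\gamma}>0$ on $(0,T)$.

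The more serious gap is in (iii). The auxiliary sequence $\tilde{k}_i:=\inf_{j\geq i}k_j$ is not lower semicontinuous in general --- an infimum of countably many lower semicontinuous functions need not be lower semicontinuous --- so $\sigma_{\tilde k_{i,\gamma}}$ is not even defined within \autoref{Def:dist coeff general k}. Worse, the invocation ``by (ii) applied to $k\leq\tilde k_i$'' is false for any fixed $i$: the hypothesis only guarantees $k\leq\lim_i\tilde k_i=\liminf_j k_j$. What is actually needed is a Dini-type compatibility of the distortion coefficient with monotone limits of potentials along the compact curve $\gamma_{[0,1]}$, comparing the inf-convolutions $k_n$ of $k$ against those of each $k_i$ via the monotone-limit argument alluded to in \autoref{Re:Indep E}; this is precisely the nontrivial content of \cite[Lem.~3.9]{ketterer2015}, and your two-line reduction skips exactly that step. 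Parts (ii) and (iv) are sound. Bear in mind that the paper offers no proof of this lemma and defers entirely to \cite{ketterer2015,ketterer2017}, so there is no in-paper argument to compare against; your proposal reconstructs the cited material along the right lines, but the details of (i) and (iii) do not hold up as written.
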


\subsubsection{Timelike $(k,N)$-convexity}\label{Sub:Timelike convexity} We continue to assume lower semicontinuity of $k\colon\mms\to\R$. Based on the distortion coefficients from \autoref{Def:dist coeff general k}, we introduce \emph{timelike $(k,N)$-convexity} for functionals on metric spacetimes  (without a reference measure); see \cite[Def.~6.5]{mccann2020} for a related definition for functionals defined on $\scrP(\mms)$ in the smooth case.  We follow the approach initiated for metric measure spaces with constant $k$ in \cite{erbar2015}, and   further developed in \cite{ketterer2015} with variable $k$. 

Recall our conventions $\smash{\rme^{-\infty}:=0}$ and $\rme^\infty := \infty$. Let $\smash{\rmg\colon [0,1]^2\to\R}$ denote the usual Green's function 
\begin{align}\label{Eq:1d Green's function}
\rmg(s,t) := \min\{s\,(1-t),t\,(1-s)\}
\end{align}
on $[0,1]$ with Dirichlet boundary conditions. 
As usual, the finiteness domain of a functional $S\colon\mms\to \R\cup\{\infty\}$ is denoted by $\Dom(S)$.

\begin{definition}[Timelike $k$-convexity] We say that $S$ is
\begin{enumerate}[label=\textnormal{\alph*.}]
\item \emph{weakly timelike $k$-convex} if for every $x_0,x_1\in\Dom(S)$ there is $\gamma\in \TGeo(\mms)$ from $x_0$ to $x_1$ passing through $\Dom(S)$ such that for every $0\leq t\leq 1$,
\begin{align}\label{Eq:k convexity def inequ}
S(\gamma_t) \leq (1-t)\,S(x_0) + t\,S(x_1) - \int_0^1 \rmg(s,t)\,k(\gamma_s)\,\vert\dot\gamma\vert^2\d s,
\end{align}
\item \emph{strongly timelike $k$-convex} if it is weakly timelike $k$-convex, and for every $x_0,x_1\in \Dom(S)$, \emph{every} $l$-geodesic $\gamma\in\TGeo(\mms)$ from $x_0$ to $x_1$ passing through $\Dom(S)$ satisfies \eqref{Eq:k convexity def inequ} for every $0\leq t\leq 1$.
\end{enumerate}
\end{definition}

The following modifies the previous definition by taking a  further ``dimensional parameter'' $N$ into account. Indeed, it is a stronger property than mere timelike $k$-convexity, cf.~\autoref{Cor:N<infty to N=infty}. Given a functional $S$ as above, analogously to \eqref{Eq:UN DEF} we define $U_N \colon \mms \to [0,\infty)$ by
\begin{align*}
U_N(x) := \rme^{-S(x)/N}.
\end{align*}

\begin{definition}[Timelike $(k,N)$-convexity]
\label{Def:(k,N) conv} We say that $S$ is
\begin{enumerate}[label=\textnormal{\alph*.}]
\item \emph{weakly timelike $(k,N)$-convex} if for every $x_0,x_1\in \Dom(S)$, there exists an $l$-geodesic $\gamma\in\TGeo(\mms)$ from $x_0$ to $x_1$ passing through $\Dom(S)$ such that for every $0\leq t\leq 1$,
\begin{align}\label{Eq:(k,N) convexity def inequ}
U_N(\gamma_t) \geq \sigma_{k_\gamma^-}^{(1-t)}(\vert\dot\gamma\vert)\,U_N(x_0) + \sigma_{k_\gamma^+}^{(t)}(\vert\dot\gamma\vert)\,U_N(x_1),
\end{align}
\item \emph{strongly timelike $(k,N)$-convex} if it is weakly timelike $(k,N)$-convex, and for every $x_0,x_1\in\Dom(S)$,  \emph{every}  $\gamma\in\TGeo(\mms)$ from $x_0$ to $x_1$ passing through $\Dom(S)$ satisfies \eqref{Eq:(k,N) convexity def inequ} for every $0\leq t\leq 1$.
\end{enumerate}
\end{definition}

The following properties are elementary. The proofs from the metric case carry over with no noteworthy changes with the help of \autoref{Le:Properties}, cf.~\cite[Lems.~2.9, 2.10,  2.12]{erbar2015} and \cite[Lem.~3.15, Cor.~3.16]{ketterer2015}. Moreover, to get the idea for \ref{La:Due} and \ref{La:Tre}, the reader is invited to consult the proofs of \autoref{Pr:Potential} and \autoref{Cor:N<infty to N=infty}  below, respectively.

\begin{lemma}[Dependence on parameters]
\label{L:dependence on parameters}
Suppose $k'\colon\mms\to \R$ is lower semicontinuous, $N'>0$, and let $S,S'\colon \mms\to \R \cup\{\infty\}$ be  functionals. Then the following properties hold.
\begin{enumerate}[label=\textnormal{\textcolor{black}{(}\roman*\textcolor{black}{)}}]
\item \textnormal{\textbf{Scaling.}} If $S$ is weakly timelike $(k,N)$-convex, then $\lambda\,S$ is weakly timelike $(\lambda\,k,\lambda\,N)$-convex for every $\lambda>0$.
\item\label{La:Due} \textnormal{\textbf{Addition.}} If $S$ is weakly timelike $(k,N)$-convex and $S'$ is strongly timelike $(k',N')$-convex, then $\smash{S + S'}$ is weakly timelike $(k+k',N+N')$-convex relative to the base space $\Dom(S) \cap \Dom(S')$.
\item\label{La:Tre} \textnormal{\textbf{Monotonicity.}} If $S$ is weakly timelike $(k,N)$-convex, $k'\leq k$ on $\mms$, and $N'\geq N$, then $S$ is weakly timelike $(k',N')$-convex. In particular, $S$ is weakly timelike $k$-convex.
\end{enumerate}

These claims also hold by replacing every occurrence of  ``weakly'' by ``strongly''.
\end{lemma}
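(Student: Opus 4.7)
I will prove each of the three assertions by adapting the Erbar--Kuwada--Sturm/Ketterer template from metric measure spaces to this timelike setting. In every case I fix $x_0,x_1$ in the relevant finiteness domain, take an $l$-geodesic $\gamma\in\TGeo(\mms)$ as furnished by the (weak or strong) hypothesis, set $\theta:=\vert\dot\gamma\vert$, and work with the defining inequality \eqref{Eq:(k,N) convexity def inequ} using the properties of the distortion coefficients collected in \autoref{Le:Properties}. Item (i) is the quickest: $U_{\lambda N}^{\lambda S}=\exp(-\lambda S/(\lambda N))=U_N^S$, while only the ratio $k/N$ effectively enters the underlying Sturm ODE (cf.\ \eqref{Eq:Scaling prop sigma}), so the distortion coefficients are also invariant under $(k,N)\mapsto(\lambda k,\lambda N)$. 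The same $\gamma$ therefore witnesses weak $(\lambda k,\lambda N)$-convexity of $\lambda S$.

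For (ii), I will take $\gamma$ from weak $(k,N)$-convexity of $S$; by strong $(k',N')$-convexity of $S'$, the matching inequality for $S'$ also holds along this same $\gamma$. Writing $\alpha:=N/(N+N')$ and $\beta:=N'/(N+N')$, the pointwise factorization
\[
U_{N+N'}^{S+S'}=\bigl(U_N^S\bigr)^\alpha\bigl(U_{N'}^{S'}\bigr)^\beta
\]
will let me substitute the two convexity estimates at $\gamma_t$ and then invoke the reverse Minkowski/Hölder bound
\[
(a_1+a_2)^\alpha(b_1+b_2)^\beta\geq a_1^\alpha b_1^\beta+a_2^\alpha b_2^\beta\qquad(\alpha+\beta=1),
\]
which factors out $U_{N+N'}^{S+S'}(x_i)=(U_N^S(x_i))^\alpha(U_{N'}^{S'}(x_i))^\beta$ at both endpoints $i\in\{0,1\}$. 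The residual prefactors are $(\sigma_{k_\gamma^\pm/N}^{(\cdot)}(\theta))^\alpha(\sigma_{k'_\gamma{}^\pm/N'}^{(\cdot)}(\theta))^\beta$; by joint log-convexity of the distortion in its subscript (\autoref{Le:Properties}(iv)), these are bounded below by $\sigma_{\alpha k_\gamma^\pm/N+\beta k'_\gamma{}^\pm/N'}^{(\cdot)}(\theta)=\sigma_{(k_\gamma^\pm+k'_\gamma{}^\pm)/(N+N')}^{(\cdot)}(\theta)$, which closes the weak $(k+k',N+N')$-convexity of $S+S'$ over $\Dom(S)\cap\Dom(S')$.

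For (iii), I will split the monotonicity into two steps and combine (ii) with Sturm comparison. To raise $N$ to $N'\geq N$ at fixed $k$, I observe that the zero functional on $\mms$ is strongly $(0,N'-N)$-convex, since $U_{N'-N}(0)\equiv 1$ and $\sigma_0^{(1-t)}(\theta)+\sigma_0^{(t)}(\theta)=(1-t)+t=1$ along every $l$-geodesic; adding it to $S$ via (ii) upgrades weak $(k,N)$-convexity of $S=S+0$ to weak $(k,N')$-convexity. To lower $k$ to $k'\leq k$ at fixed $N'$, the pointwise bound $k'/N'\leq k/N'$ and the monotonicity part of \autoref{Le:Properties}(ii) give $\sigma_{(k')_\gamma^\pm/N'}^{(t)}(\theta)\leq\sigma_{k_\gamma^\pm/N'}^{(t)}(\theta)$, weakening the right-hand side of the $(k,N')$-inequality into the $(k',N')$-inequality along the same $\gamma$. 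The final ``in particular'' claim, that weak timelike $(k,N)$-convexity implies weak timelike $k$-convexity, is the content of \autoref{Cor:N<infty to N=infty} and follows by expanding \eqref{Eq:(k,N) convexity def inequ} to first order in $1/N$ as $N\to\infty$ so as to recover the Green-function form \eqref{Eq:k convexity def inequ}. The strong versions of (i)--(iii) go through verbatim, since each argument only accesses $\gamma$ through the convexity inequality it satisfies. The main obstacle I anticipate is the reverse Minkowski/log-convexity juggle in (ii): for merely lower semicontinuous $k,k'$ it requires passing through the monotone Lipschitz approximation \eqref{Eq:kappan} from \autoref{Sub:LSC pot tl geo} and invoking the stability part of \autoref{Le:Properties}(iii) to commute the limit with the exponentiated Hölder manipulation.
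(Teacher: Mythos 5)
Your proposal is correct and follows essentially the same route the paper indicates: the paper gives no written proof, instead citing Erbar--Kuwada--Sturm and Ketterer and pointing the reader to the proofs of \autoref{Pr:Potential} and \autoref{Cor:N<infty to N=infty} for the ideas behind (ii) and (iii). Your write-up supplies what the paper leaves implicit and matches those template arguments. The only real presentational difference is in (ii): the paper's proof of \autoref{Pr:Potential} applies Jensen's inequality to the jointly convex function $\rmG_t$ from \autoref{Le:Properties}, whereas you unpack that joint convexity into its two constituents --- the (reverse) H\"older inequality $(a_1+a_2)^\alpha(b_1+b_2)^\beta\geq a_1^\alpha b_1^\beta+a_2^\alpha b_2^\beta$ for the $(r,s)$-variables and the log-convexity of $\sigma$ in the potential variable --- which is algebraically equivalent but slightly more explicit. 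Your device for raising $N$ in (iii) (adding the zero functional, which is strongly $(0,N'-N)$-convex, via (ii)) is the standard move and is also how the metric analog in [erbar2015, Lem. 2.12] is argued; combined with the $k$-monotonicity from \autoref{Le:Properties} and the $N'\to\infty$ expansion along the fixed geodesic $\gamma$ provided by (ii), the whole of (iii) follows. Two small caveats, neither fatal and both shared with the paper's terse treatment: first, the clause ``relative to the base space $\Dom(S)\cap\Dom(S')$'' in (ii) tacitly requires the witness geodesic from $S$'s weak convexity to pass through $\Dom(S)\cap\Dom(S')$, which you should note is automatic in the intended applications (e.g.\ $\Dom(S')=\supp\meas$ in \autoref{Pr:Potential}); second, your caution about lower semicontinuous $k,k'$ and the monotone Lipschitz approximation \eqref{Eq:kappan} is well placed, but is already built into \autoref{Def:dist coeff general k} and \autoref{Le:Properties}, so no extra work is needed there beyond the citation.
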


\subsection{Distortion coefficients along plans} In order to formulate \autoref{Def:TCDe} below, we need to extend \autoref{Def:dist coeff general k} from functions along $l$-geodesics to functions along the dynamical plans $\smash{\OptTGeo_p(\scrP_\comp(\mms)^2)}$ of \eqref{OptTGeop} (for more general plans concentrated on causal curves; adaptations of the definitions given below are straightforward). To this aim, we require some further notation, reflecting the fact that unless $k$ is nonnegative,  distorted convexity of the entropy along $\ell_q$-geodesics needs to be measured in an $L^2$-way rather than by $\ell_q$-arclength, analogous to \cite{cavalletti2020}.

\subsubsection{$L^2$-cost}  The $\Ell^2$-cost of a dynamical plan $\bdpi$ is defined by
\begin{align*}
\cost_\bdpi := \big\Vert l\circ(\eval_0,\eval_1)\big\Vert_{\Ell^2(\TGeo(\mms),\bdpi)}.
\end{align*}
Denoting
the endpoint marginals of $\bdpi$ by $\mu_0 := (\eval_0)_\push\bdpi$ and $\mu_1 := (\eval_1)_\push\bdpi$, Jensen's inequality and $\smash{\ell_p}$-optimality of $(\eval_0,\eval_1)_\push\bdpi\in \Pi(\mu_0,\mu_1)$ imply
\begin{align}\label{Eq:By Jensen}
\cost_\bdpi \geq \ell_p(\mu_0,\mu_1).
\end{align}

\subsubsection{Lower semicontinuous potentials along  plans} Let $(k_n)_{n\in\N}$ be the monotone sequence from \eqref{Eq:kappan} approximating $k$ on $E := J(\supp\mu_0,\supp\mu_1)$. For $n\in\N$, define the continuous forward and backward ``superpositions''  $\smash{k_{n,\bdpi}^\pm\colon[0,\cost_\bdpi]\to \R}$ by 
\begin{align}\label{Eq:Def kpi+-}
\begin{split}
k_{n,\bdpi}^-(t\,\cost_\bdpi)\,\cost_\bdpi^2 &:= \int k_{n,\gamma}^-(t\,\vert\dot\gamma\vert)\,\vert\dot\gamma\vert^2\d\bdpi(\gamma)\\
&\phantom{:}= \int k_n(\gamma_{1-t})\,\vert\dot\gamma\vert^2\d\bdpi(\gamma),\\
k_{n,\bdpi}^+(t\,\cost_\bdpi)\,\cost_\bdpi^2 &:= \int k_{n,\gamma}^+(t\,\vert\dot\gamma\vert)\,\vert\dot\gamma\vert^2\d\bdpi(\gamma)\\ &\phantom{:}= \int k_n(\gamma_t)\,\vert\dot\gamma\vert^2\d\bdpi(\gamma);
\end{split}
\end{align}
recall \eqref{Eq:kgamma defn} and compare with \autoref{Re:Scaling prop}. In other words,
\begin{align*}
k_{n,\bdpi}^-(r) &= \frac{1}{\cost_\bdpi^2}\int k_n(\bar{\gamma}_{\cost_\bdpi-r})\,\vert\dot\gamma\vert^2\d\bdpi(\gamma),\\
k_{n,\bdpi}^+(r) &= \frac{1}{\cost_\bdpi^2}\int k_n(\bar{\gamma}_r)\,\vert\dot\gamma\vert^2\d\bdpi(\gamma),
\end{align*}
where $\smash{\bar\gamma\colon [0,\cost_\bdpi]\to \mms}$ designates the proper-time reparametrization $\smash{\bar\gamma_r := \gamma_{r/\cost_\bdpi}}$ with respect to $\cost_\bdpi$. Since the involved integrands are all bounded, \eqref{Eq:Def kpi+-} is well-defined. The sequence $\smash{(k_{n,\bdpi}^\pm)_{n\in\N}}$ converges pointwise to a lower semicontinuous function $\smash{\smash{k_\bdpi^\pm}}\colon [0,\cost_\bdpi] \to \R\cup\{\infty\}$. By Levi's monotone convergence theorem, the latter is related to the pointwise limit $\smash{k_\gamma^\pm}$ of $\smash{(k_{n,\gamma}^\pm)_{n\in\N}}$ by the obvious analog of \eqref{Eq:Def kpi+-}. 

Similar arguments as those leading to \autoref{Def:dist coeff general k} yield the following notions and their well-definedness.

\begin{definition}[Distortion coefficients along  plans]\label{Def:dist coeff bdpi} Given $0 \leq \theta\leq \cost_\bdpi$ for
$\bdpi \in \smash{\OptTGeo_p(\scrP_\comp(\mms)^2)}$, as in \eqref{distortion on R} we define
\begin{enumerate}[label=\textnormal{\alph*.}]
\item the \emph{generalized sine functions} $\smash{\SIN_{k_\bdpi^\pm}}$ associated with $\smash{k_\bdpi^\pm}$ by
\begin{align*}
\SIN_{k_\bdpi^\pm}(\theta) &:= \lim_{n\to\infty} \SIN_{k_{n,\bdpi}^\pm}(\theta),
\end{align*}
\item the \emph{distortion coefficients} $\smash{\sigma_{k_\bdpi^\pm}^{(t)}}$, where $0\leq t\leq 1$, associated with $\smash{k_\bdpi^\pm}$ by
\begin{align*}
\sigma_{k_\bdpi^\pm}^{(t)}(\theta) &:= \lim_{n\to\infty}\sigma_{k_{n,\bdpi}^\pm}^{(t)}(\theta).
\end{align*}
\end{enumerate}
\end{definition}

It is not difficult to generalize the following lemma beyond $\smash{\OptTGeo_p(\scrP_\comp(\mms)^2)}$. We will only need it in this form. Furthermore, the implicit lower boundedness assumption on $k$ is technical and will hold in all relevant cases since we mostly work with compactly supported measures.

\begin{lemma}[Narrow lower semicontinuous dependence on plans]\label{Le:LSC sigma} Let $(\bdpi^i)_{i\in\N}$ be a sequence in $\smash{\OptTGeo_p(\scrP_\comp(\mms)^2)}$ converging narrowly to $\smash{\bdpi\in \scrP(\Cont([0,1];\mms))}$. Furthermore, assume the endpoint marginal sequences $\smash{((\eval_0)_\push\bdpi^i)_{i\in\N}}$ and $\smash{((\eval_1)_\push\bdpi^i)_{i\in\N}}$ are uniformly compactly supported, say on the compact sets  $C_0,C_1\subset\mms$, respectively. Then for every $0\leq t\leq 1$,
\begin{align*}
\sigma_{k_\bdpi^\pm}^{(t)}(\cost_\bdpi) \leq \liminf_{i\to \infty} \sigma_{k_{\bdpi^i}^\pm}^{(t)}(\cost_{\bdpi^i}).
\end{align*}
\end{lemma}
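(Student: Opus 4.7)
My plan is to reduce the statement to a one-dimensional lower semicontinuity problem by (i) exploiting the monotone approximation built into \autoref{Def:dist coeff bdpi}, (ii) using the scaling property \eqref{Eq:Scaling prop sigma} to move to the unit interval, and (iii) applying the abstract lower semicontinuity already encoded in \autoref{Le:Properties}. I will treat the ``$+$'' case; time reversal handles ``$-$''.

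First, let $(k_n)_{n\in\N}$ be the monotone Lipschitz approximation from \eqref{Eq:kappan} built on $E := J(C_0,C_1)$. By \autoref{Def:dist coeff bdpi} I have $\sigma_{k_\bdpi^+}^{(t)}(\cost_\bdpi)=\sup_n \sigma_{k_{n,\bdpi}^+}^{(t)}(\cost_\bdpi)$, while the monotonicity clause of \autoref{Le:Properties} gives $\sigma_{k_{n,\bdpi^i}^+}^{(t)}(\cost_{\bdpi^i}) \leq \sigma_{k_{\bdpi^i}^+}^{(t)}(\cost_{\bdpi^i})$ for every $n,i$. Hence it suffices to establish, for each fixed $n\in\N$, the inequality
\begin{equation*}
\sigma_{k_{n,\bdpi}^+}^{(t)}(\cost_\bdpi) \;\leq\; \liminf_{i\to\infty}\sigma_{k_{n,\bdpi^i}^+}^{(t)}(\cost_{\bdpi^i}).
\end{equation*}

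Next, I invoke the scaling identity \eqref{Eq:Scaling prop sigma} to normalize both sides to the interval $[0,1]$: setting $\tilde k_{n}^{\bdpi}(s):=\int k_n(\gamma_s)\,l(\gamma_0,\gamma_1)^2\d\bdpi(\gamma)$ (and analogously $\tilde k_n^{\bdpi^i}$), and writing $\sigma_{k_{n,\bdpi}^+}^{(t)}(\cost_\bdpi)=\sigma_{\tilde k_n^{\bdpi}}^{(t)}(1)$, I need pointwise convergence $\tilde k_n^{\bdpi^i}(s)\to \tilde k_n^{\bdpi}(s)$ for each $s\in[0,1]$. The integrand $\gamma\mapsto k_n(\gamma_s)\,l(\gamma_0,\gamma_1)^2$ is bounded on our uniformly compactly supported plans (thanks to \autoref{Cor:Finiteness l} applied on $C_0\times C_1$), and it is continuous on the uniform closure of $\TGeo(\mms)\cap\eval_0^{-1}(C_0)\cap\eval_1^{-1}(C_1)$, since this closure lies in the set of causal paths where $l=l_+$ and hence $l$ is jointly continuous (combining upper semicontinuity of $l$ with lower semicontinuity of $l_+$). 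Narrow convergence of $\bdpi^i$ to $\bdpi$, together with Alexandrov's theorem (\autoref{Th:Alexandrovs theorem}) applied separately to $k_n$ shifted into the nonnegative and nonpositive cones, then delivers the desired pointwise convergence.

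Finally, once $\tilde k_n^{\bdpi}(s)\leq \liminf_i \tilde k_n^{\bdpi^i}(s)$ is in hand, the lower semicontinuity of the distortion coefficient in the potential---proved in \autoref{Le:Properties} along $l$-geodesics but whose argument is purely one-dimensional (a consequence of Sturm comparison applied to the defining ODE \eqref{Eq:ODE u} combined with the monotone definition in \autoref{Def:dist coeff general k})---yields $\sigma_{\tilde k_n^{\bdpi}}^{(t)}(1)\leq \liminf_i \sigma_{\tilde k_n^{\bdpi^i}}^{(t)}(1)$, which is exactly the required estimate. Sending $n\to\infty$ and combining with the first reduction gives the lemma.

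The main obstacle I anticipate is the justification of the narrow convergence of the integrals defining $\tilde k_n^{\bdpi^i}$. The difficulty is that $l$ is \emph{only} upper semicontinuous globally, so $\gamma\mapsto l(\gamma_0,\gamma_1)^2$ is not continuous as a test function on all of $\Cont([0,1];\mms)$. The resolution---and the main technical point of the proof---lies in observing that all plans under consideration are concentrated on the closed subset of causal paths, on which the lower semicontinuity of $l_+$ and the upper semicontinuity of $l$ combine to make $l$ continuous, so that Alexandrov's theorem can indeed be applied after splitting $k_n$ by sign.
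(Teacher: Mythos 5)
Your proposal is correct and tracks the paper's own argument closely: both reduce to fixed $n$ via the monotone approximation \eqref{Eq:kappan}, both use the scaling \eqref{Eq:Scaling prop sigma} to normalize to potentials on $[0,1]$, and both finish via the lower semicontinuity of the distortion coefficient in its potential (a one-dimensional Sturm-comparison fact). The one point to fix is the mechanism you invoke in the second paragraph for controlling the integrals $\tilde k_n^{\bdpi^i}(s)$: Alexandrov's theorem (\autoref{Th:Alexandrovs theorem}) concerns evaluations $\mu_n[E]$ on sets, not integrals of functions, and ``shifting $k_n$ into the nonnegative and nonpositive cones'' does not turn it into the tool you need. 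What actually makes the step work -- and which you correctly identify at the end -- is that $\bdpi$ is concentrated on causal curves in $J(C_0,C_1)$ (use \autoref{Pr:Closed} together with \autoref{Th:Alexandrovs theorem} applied to the \emph{set} of causal curves), on which $l=l_+$ is continuous, so that $\gamma\mapsto k_n(\gamma_s)\,\vert\dot\gamma\vert^2$ is a lower semicontinuous, lower-bounded function on the compact set $G$; the required inequality $\tilde k_n^{\bdpi}(s)\leq\liminf_i\tilde k_n^{\bdpi^i}(s)$ is then the standard portmanteau-type estimate for integrals of lower semicontinuous integrands bounded below (Villani, Lem.~4.3), which is what the paper cites. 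Also note you announce pointwise \emph{convergence} of $\tilde k_n^{\bdpi^i}(s)$ but only use -- and only need -- the $\liminf$ inequality in the final paragraph; this is a harmless overstatement, but worth being consistent about.
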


\begin{proof} By Alexandrov's theorem and causal closedness (\autoref{Pr:Closed}), $\bdpi$ is concentrated on causal curves contained in the compact set $J(C_0,C_1)$. Let $(k_n)_{n\in\N}$ be the sequence from  \eqref{Eq:kappan} approximating the function $k$ on $E := J(C_0,C_1)$. Given any $n\in \N$ and any $0\leq t\leq 1$, continuity of $k_n$ and $\smash{l_+}$, respectively, imply the lower semicontinuous dependence of the  quantity $k_n(\gamma_t)\,\vert\dot\gamma\vert^2$ on $\gamma\in G$. Here, $G$ is the uniform closure of the set of $\gamma\in\TGeo(\mms)$ having $\gamma_0\in C_0$ and $\gamma_1\in C_1$. Hence, \cite[Lem.~4.3]{villani2009} --- here we use $G$ being Polish, and uniform lower boundedness --- and the definition \eqref{Eq:Def kpi+-} entail
\begin{align}\label{Eq:Well-def}
k_{n,\bdpi}^+(t\,\cost_{\bdpi})\,\cost_{\bdpi}^2 \leq \liminf_{i\to \infty} k_{n,\bdpi^i}^+(t\,\cost_{\bdpi^i})\,\cost_{\bdpi^i}^2. 
\end{align}
An analogous argument yields the same inequality for $\smash{k_{n,\bdpi}^+}$ replaced by $\smash{k_{n,\bdpi}^-}$.

Therefore, in view of \autoref{Re:Scaling prop} we obtain
\begin{align*}
\sigma_{k_\bdpi^\pm\cost_\bdpi^2}^{(t)}(1) &= \sup_{n\in\N}\sigma_{k_{n,\bdpi}^\pm\cost_\bdpi^2}^{(t)}(1)\\ 
&\leq \sup_{n\in\N} \liminf_{i\to\infty} \sigma_{k_{n,\bdpi^i}^\pm\cost_{\bdpi^i}^2}^{(t)}(1) \\ &\leq \liminf_{i\to\infty}\sup_{n\in\N} \sigma_{k_{n,\bdpi^i}^\pm\cost_{\bdpi^i}^2}^{(t)}(1)\\ &\leq \liminf_{i\to\infty} \sigma_{k_{\bdpi^i}^\pm\cost_{\bdpi^i}^2}^{(t)}(1).
\end{align*}
Here we have used analogs of \autoref{Le:Properties} for general potentials, namely lower semicontinuity in the second inequality and monotonicity in the last one, as stated in \cite[Lem.~3.9]{ketterer2015} and \cite[Props.~3.4, 3.5]{ketterer2017}. The claim follows from \eqref{Eq:Scaling prop sigma}.
\end{proof}

The other parts of 
\autoref{Le:Properties} carry over with appropriate modifications, taking into account that, unlike \autoref{Sub:LSC pot tl geo}, here the potential can be infinite.

\subsection{Definition and basic properties} We start with the  infinite-dimensional analog of the variable curvature-dimension condition for metric measure spaces \cite{sturm2015} à la Lott--Sturm--Villani \cite{lott2009,sturm2006a}. For constant $k$, this has been adapted to the Lorentzian context in  \cite{braun2023}. Let $\smash{\rmg\colon[0,1]^2\to \R}$ be the Green's function with Dirichlet boundary conditions, cf.~\eqref{Eq:1d Green's function} of \autoref{Sub:Timelike convexity}.

\begin{definition}[Dimensionless variable timelike curvature-dimension condition]
\label{Def:TCD infty} We term $\scrM$ to satisfy the \emph{timelike curvature-dimension condition} $\smash{\TCD_p(k,\infty)}$ if for every timelike $p$-dualizable pair  $(\mu_0,\mu_1) \in\smash{\Prob_\comp^\ac(\mms,\meas)^2}$, there exist
\begin{itemize}
\item an $\smash{\ell_p}$-geodesic $(\mu_t)_{t\in[0,1]}$ connecting $\mu_0$ to $\mu_1$, and
\item a plan $\bdpi\in\OptTGeo_p(\mu_0,\mu_1)$
\end{itemize}
such that for every $0\leq t\leq 1$, 
\begin{align*}
\Ent_\meas(\mu_t) \leq (1-t)\Ent_\meas(\mu_0) + t\Ent_\meas(\mu_1) -\int_0^1\!\!\int \rmg(s,t)\,k(\gamma_s)\,\vert\dot\gamma\vert^2\d\bdpi(\gamma)\d s.
\end{align*}

If the previous statement holds for merely every \emph{strongly} timelike $p$-dualizable pair $(\mu_0,\mu_1)$ as above, the space $\scrM$  is said to satisfy the \emph{weak timelike curvature-dimension condition} $\wTCD_p(k,\infty)$.
\end{definition}

Our next definition, taking into account an ``upper dimension bound'' $N$, is based on an adaptation of the abstract convexity from of the previous sections to the functionals $\scrU_N$ and $\Ent_\meas$ in place of $U_N$ and $S$, respectively.

\begin{definition}[Variable entropic timelike curvature-dimension condition] 
\label{Def:TCDe} We term $\scrM$ to satisfy the \emph{entropic timelike curvature-dimension condition} $\smash{\TCD_p^e(k,N)}$ if for every timelike $p$-dualizable pair $(\mu_0,\mu_1)\in \smash{\scrP_\comp^\ac(\mms,\meas)^2}$, there exist 
\begin{itemize}
\item an $\smash{\ell_p}$-geodesic $(\mu_t)_{t\in[0,1]}$\footnote{One could alternately start with \emph{rough} $\smash{\ell_p}$-geodesics here. For $\smash{\wTCD_p^e(k,N)}$, this makes no difference by \autoref{Cor:Equiv notions lp geo}. For $\smash{\TCD_p^e(k,N)}$, by \autoref{Th:Pathwise}, a posteriori one can always \emph{choose} a narrowly continuous collection $\smash{(\mu_t)_{t\in[0,1]}}$ certifying \eqref{Eq:Entropic displacement conv inequ} if $\scrM$ is timelike $p$-essentially nonbranching.} connecting $\mu_0$ to $\mu_1$, and
\item a plan $\smash{\bdpi\in\OptTGeo_p(\mu_0,\mu_1)}$
\end{itemize}
such that for every $0\leq t\leq 1$, the functional $\scrU_N$ from 
\eqref{Eq:UN DEF} satisfies
\begin{align}\label{Eq:Entropic displacement conv inequ}
\scrU_N(\mu_t) &\geq \sigma_{k_\bdpi^-/N}^{(1-t)}(\cost_\bdpi)\,\scrU_N(\mu_0) + \sigma_{k_\bdpi^+/N}^{(t)}(\cost_\bdpi)\,\scrU_N(\mu_1).
\end{align}

If the previous statement holds for merely every \emph{strongly} timelike $p$-dualizable pair $(\mu_0,\mu_1)$ as above, the space $\scrM$ is said to satisfy the \emph{weak entropic timelike  curvature-dimension condition} $\wTCD_p^e(k,N)$.
\end{definition}

\begin{remark}[Nonnegative potentials] If $k$ is nonnegative, by \eqref{Eq:Scaling prop sigma}, \eqref{Eq:By Jensen}, and the monotonicity property stated in  \autoref{Le:Properties}, the inequality   \eqref{Eq:Entropic displacement conv inequ} yields
\begin{align*}
\scrU_N(\mu_t) &\geq \sigma_{k_\bdpi^-/N}^{(1-t)}(\ell_p(\mu_0,\mu_1))\,\scrU_N(\mu_0) + \sigma_{k_\bdpi^+/N}^{(t)}(\ell_p(\mu_0,\mu_1))\,\scrU_N(\mu_1).
\end{align*}
In other words, $\Ent_\meas$ is \emph{weakly $(k,N,p)$-convex} on $\scrP_\comp(\mms)$ with respect to the obvious adaptation of \cite[Def.~6.5]{mccann2020} for variable curvature bounds.
\end{remark}

\begin{remark}[Time reversal invariance]\label{Re:Causal reversal} By time reversal in \eqref{Eq:Entropic displacement conv inequ}, $\scrM$ satisfies $\smash{\TCD_p^e(k,N)}$ if and only if its causal reversal does; analogously for $\smash{\wTCD_p^e(k,N)}$. 
\end{remark}

\begin{remark}[Only the support matters] The support of $\meas$ might fail to be a length metric spacetime in the strict sense of \autoref{Ass:GHLLS} by the possible lack of nontrivial chronology. In fact, as shown in \autoref{Th:Timelike BM} below,  if $\meas$ has an atom at $x\in \mms$ then the   chronological future and the chronological past of $x$ relative to $\supp\meas$ are empty. In addition, regularity of $\supp\meas$ does not imply regularity of $\mms$.

On the other hand, $\supp\meas$ is $l$-geodesic under a $\smash{\wTCD_p^e(k,N)}$ condition, in the following sense. For every $x,y\in \supp\meas$ with $x\ll y$ there is $\gamma\in \TGeo(\mms)$ with $\gamma_0 = x$, $\gamma_1 = y$, and $\gamma_t\in \supp\meas$ for every $0\leq t\leq 1$; compare also with \cite[Rem.~3.10]{cavalletti2020}. Indeed, choose sufficiently small neighborhoods $U_x,U_y\subset\mms$ around $x$ and $y$, respectively, such that $U_x \times U_y \Subset \{l>0\}$, apply \autoref{Le:Geodesics plan} to the $\smash{\ell_p}$-geodesic between the uniform distributions of $U_x$ and $U_y$ certifying \eqref{Eq:Entropic displacement conv inequ}, and then argue as in \cite[Rem.~6.11]{cavalletti2021} for the obtained lifting.

Moreover, it still makes sense to say the (weak) entropic timelike curvature-dimension condition is satisfied by $\mms$ if and only if it holds for $\supp \meas$ --- indeed, every $(\mu_t)_{t\in[0,1]}$ as in \autoref{Def:TCDe} obeys $\supp\mu_t \subset \supp\meas$ for every $0\leq t\leq 1$. In this sense, $\smash{\TCD_p^e(k,N)}$ and $\smash{\wTCD_p^e(k,N)}$ are properties of $\supp\meas$ rather than $M$.
\end{remark}

We now show further elementary properties of the notion from  \autoref{Def:TCDe}. They are only stated for the weak timelike entropic curvature-dimension condition, but are valid analogously if every occurrence of ``$\wTCD^e$'' is replaced by ``$\TCD^e$''.

We start by stating the following natural Lorentzian adaptation of isomorphisms of metric measure spacetimes, cf.~e.g.~\cite{giglimondino2015,greven2009,gromov1999,sturm2006a} for the metric measure case. This will be studied in detail in \cite{braun+}; in particular, in some cases the inherent ---  quite strong --- topological hypotheses can be dropped.

\begin{definition}[Metric measure spacetime isomorphy]\label{Def:Isomorphy} Let $\scrM$ and $\scrM'$ be two metric measure spacetimes. A map $\iota\colon\supp\meas \to \supp\meas'$ is called \emph{isomorphism} if it is a homeomorphism, it is measure-preserving --- i.e.~$\iota_\push\meas = \meas'$ ---, and for every $x,y\in\supp\meas$ we have
\begin{align*}
l_+'(\iota(x),\iota(y)) = l_+(x,y).
\end{align*}

If such an isomorphism exists, we call $\scrM$ and $\scrM'$ \emph{isomorphic}.
\end{definition}

\begin{proposition}[Basic properties]
\label{Pr:Consistency} Let $k'\colon \mms\to\R$ be lower semicontinuous, and let $N'> 0$ as well as $\alpha,\beta > 0$. Then the following properties hold.
\begin{enumerate}[label=\textnormal{\textcolor{black}{(}\roman*\textcolor{black}{)}}]
\item\label{La:Eins} \textnormal{\textbf{Isomorphism.}}  Assume $\scrM$ and $\scrM'$ to be isomorphic through $\iota$ according to \autoref{Def:Isomorphy}. If $\scrM$ obeys $\smash{\wTCD_p^e(k,N)}$, then $\scrM'$ obeys $\smash{\wTCD_p^e(\iota_\push k,N)}$, where $\iota_\push k := k\circ\iota$.
\item\label{La:Zwei} \textnormal{\textbf{Consistency.}} If $k'\geq k$ on $\mms$ and $N' \leq N$, and $\scrM$ obeys $\smash{\wTCD_p^e(k',N')}$, then it also satisfies $\smash{\wTCD_p^e(k,N)}$.
\item\label{La:Drei} \textnormal{\textbf{Scaling.}} If $\scrM$ obeys $\smash{\wTCD_p^e(k,N)}$, then its rescaling $\scrM_{\alpha,\beta}:=(\mms, \alpha\,l,\beta\,\meas)$ satisfies $\smash{\wTCD_p^e(k/\alpha^2,N)}$.
\end{enumerate}

These statements hold analogously by setting $N=\infty$.
\end{proposition}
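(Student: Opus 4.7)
The plan is, in each case, to exhibit a single $\ell_p$-geodesic and plan that certifies both the hypothesis and the desired conclusion, and to compare the two sides of \eqref{Eq:Entropic displacement conv inequ} at the level of entropies and distortion coefficients. All three items  have an obvious $N=\infty$ analog in which $\scrU_N$ is replaced by $-\Ent_\meas$ and the Sturm coefficients by the Green-function correction of \autoref{Def:TCD infty}; I focus on the dimensional case.

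For (i), given a strongly timelike $p$-dualizable pair $(\mu_0',\mu_1')\in\scrP_\comp^\ac(\mms',\meas')^2$, I would set $\mu_i:=(\iota^{-1})_\push\mu_i'$ and invoke $\wTCD_p^e(k,N)$ on $\scrM$ to obtain $(\mu_t)_{t\in[0,1]}$ and $\bdpi\in\OptTGeo_p(\mu_0,\mu_1)$. Lifted pointwise to paths, $\iota$ transports $l$-geodesics and timelike $\ell_p$-optimal dynamical plans bijectively: preservation of the time separation carries $\TGeo(\mms)$ to $\TGeo(\mms')$ and makes $\ell_p'=\iota\text{-}\ell_p$, while preservation of $\meas$ yields $\Ent_{\meas'}(\iota_\push\mu)=\Ent_\meas(\mu)$ and hence $\scrU_N^{\meas'}\circ\iota_\push=\scrU_N^\meas$. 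Because the potentials $k_\bdpi^\pm$ and $(\iota_\push k)_{\bdpi'}^\pm$ coincide under the natural push-forward of $k$, the distortion coefficients are unchanged, and \eqref{Eq:Entropic displacement conv inequ} transfers verbatim to $\mu_t':=\iota_\push\mu_t$ with $\bdpi':=(\iota\circ\cdot)_\push\bdpi$.

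For (ii), I would reuse the \emph{same} witness $((\mu_t)_{t\in[0,1]},\bdpi)$ provided by $\wTCD_p^e(k',N')$. Since $k\leq k'$ and averaging preserves pointwise inequalities, $k_\bdpi^\pm\leq(k')_\bdpi^\pm$, and the analog of the monotonicity in \autoref{Le:Properties} for potentials along plans gives a comparison in $k$ at fixed dimensional parameter. The passage from $N'$ to $N\geq N'$ is then carried out through $\scrU_N=\scrU_{N'}^{N'/N}$ and the log-convexity of $\sigma_{\,\cdot\,}^{(t)}$ in its potential (\autoref{Le:Properties}); this is the classical Jensen-type argument from the metric variable-curvature setting \cite{ketterer2015,ketterer2017}, and it is here that the main technical difficulty of the proof lies, since when $k$ changes sign the joint pointwise monotonicity of $\sigma_{k/N}^{(t)}$ in $(k,N)$ can fail and one is forced to argue through the exponential change of variables.

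For (iii), the rescaled space $\scrM_{\alpha,\beta}$ shares with $\scrM$ its topology, causal structure, $l$-geodesics and timelike $\ell_p$-optimal dynamical plans; only $\ell_p$, $\vert\dot\gamma\vert$, and $\cost_\bdpi$ scale by $\alpha$, while $\Ent_{\beta\meas}=\Ent_\meas-\log\beta$ gives $\scrU_N^{\beta\meas}=\beta^{1/N}\,\scrU_N^\meas$. A direct computation from \eqref{Eq:Def kpi+-} shows
\begin{align*}
(k/\alpha^2)_\bdpi^{\pm,\mathrm{new}}\big(t\,\alpha\,\cost_\bdpi^{\mathrm{old}}\big)=\alpha^{-2}\,k_\bdpi^{\pm,\mathrm{old}}\big(t\,\cost_\bdpi^{\mathrm{old}}\big),
\end{align*}
and the scaling identity \eqref{Eq:Scaling prop sigma} then collapses this to the invariance
\begin{align*}
\sigma_{(k/\alpha^2)_\bdpi^{\pm,\mathrm{new}}/N}^{(t)}\big(\alpha\,\cost_\bdpi^{\mathrm{old}}\big)=\sigma_{k_\bdpi^{\pm,\mathrm{old}}/N}^{(t)}\big(\cost_\bdpi^{\mathrm{old}}\big).
\end{align*}
Reusing the same plan and intermediate measures and multiplying the original inequality through by $\beta^{1/N}$ then gives \eqref{Eq:Entropic displacement conv inequ} for $(\scrM_{\alpha,\beta},k/\alpha^2,N)$.
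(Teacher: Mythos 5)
Your proposal is correct and follows essentially the same route as the paper's proof: item (i) by transporting all relevant structures through $\iota$ and its inverse and observing that entropies, $\ell_p$-costs, distortion coefficients, plans, and (strong) timelike $p$-dualizability are all preserved verbatim; item (iii) via $\Ent_{\beta\meas}=\Ent_\meas-\log\beta$, the $1$-homogeneity of the $\Ell^2$-cost, and the scaling identity \eqref{Eq:Scaling prop sigma}. For item (ii), the paper's one-line justification --- ``follows from the monotonicity properties of the distortion coefficients asserted in \autoref{Le:Properties}'' --- is somewhat terse, since monotonicity in the potential alone only handles the passage $k'\rightsquigarrow k$ at fixed $N$; you correctly identify that the passage $N'\rightsquigarrow N$ with sign-changing $k$ requires the joint convexity of $\rmG_t$ from \autoref{Le:Properties}, the normalization $\rmG_t(0,0,0)=0$, and the change of variables $\scrU_N=\scrU_{N'}^{N'/N}$. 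Concretely, with $\theta:=N'/N\in(0,1]$ and $s_i:=-\Ent_\meas(\mu_i)/N'$, convexity gives $\theta\,\rmG_t(s_0,s_1,\kappa)\geq\rmG_t(\theta s_0,\theta s_1,\theta\kappa)$, and then $\theta\,k'/N'=k'/N\geq k/N$ lets one conclude by monotonicity in the potential alone. This is exactly the Jensen/Erbar--Kuwada--Sturm argument the paper packages into \autoref{L:dependence on parameters}\ref{La:Tre} via citation; your write-up just makes that dependence explicit rather than implicit.
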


\begin{proof} The claim \ref{La:Eins} follows from the subsequent properties of isomorphy whose straightforward proofs are left to the reader. It is clear from \autoref{Def:Isomorphy} that $\iota_\push k$ is still lower semicontinuous on $\smash{\supp\meas'}$.
\begin{itemize}
\item The inverse $\iota^{-1}$ pushes forward every $\meas'$-absolutely continuous measures $\smash{\mu_0',\mu_1'\in\scrP_\comp^\ac(\mms',\meas')}$ to $\meas$-absolutely continuous measures $\mu_0$ and $\mu_1$ on $J$, and it preserves compactness of  supports.
\item If $\smash{\pi'}$ is a chronological $\ell_p'$-optimal coupling of $\mu_0'$ and $\mu_1'$, then $\smash{(\iota^{-1},\iota^{-1})_\push\pi}$ is a chronological $l^p$-cyclically monotone coupling of $\mu_0$ and $\mu_1$, and thus $\smash{\ell_p}$-optimal \cite[Prop.~2.8]{cavalletti2020}.
\item If the pair $(\mu_0',\mu_1')$ is strongly timelike $p$-dualizable relative to an $(l')^p$-cyclically monotone set $\smash{\Gamma'\subset \mms'}$ as in \autoref{Def:Str tl dual}, then $(\mu_0,\mu_1)$ is stongly timelike $p$-dualizable relative to the $l^p$-cyclically monotone set $\Gamma := (\iota^{-1},\iota^{-1})(\Gamma')$.
\item If $\smash{\bdpi\in\OptTGeo_p(\mu_0,\mu_1)}$ then $\smash{\bdpi' := (\iota_\push)_\push\bdpi\in \OptTGeo_p'(\mu_0',\mu_1')}$, where we set $\iota_\push(\gamma) := \iota\circ\gamma$. The respective $\Ell^2$-costs coincide,
\begin{align*}
\big\Vert l\circ(\eval_0,\eval_1)\big\Vert_{\Ell^2(\TGeo(\mms),\bdpi)} = \big\Vert l'\circ(\eval_0,\eval_1)\big\Vert_{\Ell^2(\TGeo(\mms'),\bdpi')}.
\end{align*}
In particular, recalling that $\iota_*k := k\circ\iota$, for every $0\leq t\leq 1$ we have
\begin{align*}
\sigma_{k_\bdpi^\pm/N}^{(t)}(\cost_\bdpi) = \sigma_{(\iota_*k)_{\bdpi'}^\pm/N}^{(t)}(\cost'_{\bdpi'}).
\end{align*}
\item If $(\mu_t)_{t\in[0,1]}$ is an $\smash{\ell_p}$-geodesic consisting of $\meas$-absolutely continuous measures, so is $(\mu_t')_{t\in[0,1]}$ relative to $\smash{\ell_p'}$,  where $\mu_t' := \iota_\push\mu_t$, and
\begin{align*}
\Ent_\meas(\mu_t) = \Ent_{\meas'}(\mu_t').
\end{align*}
\end{itemize}

Item \ref{La:Zwei} follows  from the monotonicity properties of the distortion coefficients asserted in  \autoref{Le:Properties}.

Finally, \ref{La:Drei} is a consequence of the identities $\Ent_{\beta\meas}(\mu) = \Ent_\meas(\mu) - \log \beta$ for every $\mu\in\Prob_\comp(\mms)$, $1$-homogeneity of the $\Ell^2$-norm, and \eqref{Eq:Scaling prop sigma}.
\end{proof}

\begin{corollary}[Dimensional consistency]
\label{Cor:N<infty to N=infty} The condition $\smash{\wTCD_p^e(k,N)}$ implies $\smash{\wTCD_p(k,\infty)}$.
\end{corollary}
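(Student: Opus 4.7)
The plan is to derive the $\wTCD_p(k,\infty)$ inequality by passing to the infinite-dimensional limit $N'\to\infty$ in the entropic convexity \eqref{Eq:Entropic displacement conv inequ} with $N$ replaced by $N'\geq N$, recovering the Green's function term from the asymptotic expansion of the distortion coefficients. Concretely, given a timelike $p$-dualizable pair $(\mu_0,\mu_1)\in\scrP_\comp^\ac(\mms,\meas)^2$, I would extract from $\wTCD_p^e(k,N)$ an $\ell_p$-geodesic $(\mu_t)_{t\in[0,1]}$ together with a plan $\bdpi\in\OptTGeo_p(\mu_0,\mu_1)$ certifying \eqref{Eq:Entropic displacement conv inequ}, and invoke \autoref{Pr:Consistency}\ref{La:Zwei} to infer that the same witnesses $(\mu_t)_{t\in[0,1]}$ and $\bdpi$ certify $\wTCD_p^e(k,N')$ for every $N'\geq N$.

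The heart of the argument is the asymptotic expansion of the distortion coefficients. Variation of constants applied to the defining Sturm ODE $u''(t) + k_\bdpi^\pm(t\cost_\bdpi)\,\cost_\bdpi^2\,u(t)/N' = 0$ with $u(0)=0$ and $u(1)=1$ yields, uniformly in $t\in[0,1]$,
\[
\sigma^{(t)}_{k_\bdpi^\pm/N'}(\cost_\bdpi) = t + \frac{1}{N'}\int_0^1 \rmg(s,t)\,k_\bdpi^\pm(s\cost_\bdpi)\,\cost_\bdpi^2\,s\,\d s + o(1/N').
\]
Unfolding \eqref{Eq:Def kpi+-}, applying the substitution $s\mapsto 1-s$ to the backward term together with the symmetry $\rmg(1-s,1-t)=\rmg(s,t)$, and then using $s+(1-s)=1$, the two coefficients collapse into
\[
\sigma^{(1-t)}_{k_\bdpi^-/N'}(\cost_\bdpi) + \sigma^{(t)}_{k_\bdpi^+/N'}(\cost_\bdpi) = 1 + \frac{1}{N'}\int_0^1\!\!\int \rmg(s,t)\,k(\gamma_s)\,\vert\dot\gamma\vert^2\,\d\bdpi(\gamma)\,\d s + o(1/N'),
\]
which displays precisely the Green's function term appearing in \autoref{Def:TCD infty}.

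I would finish by inserting $e^{-\Ent_\meas(\mu)/N'} = 1 - \Ent_\meas(\mu)/N' + O(1/N'^2)$, legitimate since the three entropies in \eqref{Eq:Entropic displacement conv inequ} are all finite on the compactly supported $\meas$-absolutely continuous marginals and on every interpolant by the entropic inequality itself. Matching the $1/N'$ contributions and applying $-N'\log$ followed by Fubini delivers
\[
\Ent_\meas(\mu_t) \leq (1-t)\Ent_\meas(\mu_0) + t\,\Ent_\meas(\mu_1) - \int_0^1\!\!\int \rmg(s,t)\,k(\gamma_s)\,\vert\dot\gamma\vert^2\,\d\bdpi(\gamma)\,\d s + o_{N'\to\infty}(1),
\]
and sending $N'\to\infty$ yields the $\wTCD_p(k,\infty)$ inequality on the same witnesses.

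The main technical difficulty I expect is justifying the uniform-in-$t$ expansion of the Sturm coefficients when $k_\bdpi^\pm$ is only lower semicontinuous and possibly unbounded above; I would handle this by first establishing the expansion for the continuous, bounded approximants $(k_n)_{n\in\N}$ from \eqref{Eq:kappan} through a standard ODE perturbation estimate, and then passing to the limit via Levi's monotone convergence theorem in combination with the lower-semicontinuous dependence of the $\sigma_{k_\bdpi^\pm/N'}^{(t)}$ on the potential, as recorded in \autoref{Le:LSC sigma}.
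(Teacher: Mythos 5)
Your proposal matches the paper's proof in both strategy and substance: reduce to a single pair of witnesses $(\mu_t)_{t\in[0,1]}$ and $\bdpi$ that certifies the entropic inequality for all $N'\geq N$ (via \autoref{Pr:Consistency}), then send $N'\to\infty$ by combining the Taylor expansion of $\scrU_{N'}=\rme^{-\Ent_\meas/N'}$ with the $O(1/N')$ expansion of the distortion coefficients, whose $1/N'$ term is exactly the Green's-function integral; the paper phrases this last step via the exact Green's function representation of $u(t)=N'[\sigma^{(1-t)}_{k_\bdpi^-/N'}+\sigma^{(t)}_{k_\bdpi^+/N'}-1]$ together with the uniform convergence $\sup_s|\sigma^{(1-s)}+\sigma^{(s)}-1|\to 0$, which is the same content as your first-order variation-of-constants expansion. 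Your change of variable $s\mapsto 1-s$ with $\rmg(1-s,1-t)=\rmg(s,t)$, and the collapse via $s+(1-s)=1$, is exactly how the two coefficients recombine into the single Green's-function term in \autoref{Def:TCD infty}.

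Two small corrections. First, the claim that the entropies are automatically finite on $\Prob_\comp^\ac(\mms,\meas)$ is not true — $\Ent_\meas$ can still take the value $+\infty$ there — so you should first restrict to $\mu_0,\mu_1\in\Dom(\Ent_\meas)$ (as the paper does), the remaining case being trivial since the right-hand side of the $\wTCD_p(k,\infty)$ inequality is then $+\infty$. Finiteness of $\Ent_\meas(\mu_t)$ uniformly in $t$ then follows from Jensen's inequality together with the entropic bound, boundedness of $k$ on the relevant emerald, and local finiteness of $\meas$. Second, for passing from bounded continuous $k_n$ to lower semicontinuous $k$, the right tool is the monotone convergence $\sigma_{k_{n,\bdpi}^\pm/N'}\uparrow\sigma_{k_\bdpi^\pm/N'}$ built into \autoref{Def:dist coeff bdpi} together with Levi's theorem on the Green's-function integral — \autoref{Le:LSC sigma} concerns lower semicontinuity of the distortion coefficients under narrow convergence of plans, not under approximation of the potential, so that citation is not quite what you want here.
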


\begin{proof} By \autoref{Pr:Consistency},  \autoref{Def:dist coeff bdpi}, as well as Levi's monotone convergence theorem we may and will assume that $k$ is continuous and bounded.

Let $\smash{\mu_0,\mu_1\in\scrP_\comp^\ac(\mms,\meas)}$ be as in \autoref{Def:TCD infty}. Assume  $\smash{\mu_0,\mu_1\in\Dom(\Ent_\meas)}$;  otherwise the claim is clear. Again by    \autoref{Pr:Consistency}, let $(\mu_t)_{t\in[0,1]}$ and $\smash{\pi\in\Pi_\ll(\mu_0,\mu_1)}$ certify the displacement semiconvexity of $\smash{\scrU_{N'}}$ according to  \autoref{Def:TCDe} for every $N'\geq N$ (noting \autoref{L:dependence on parameters}). Jensen's inequality and \autoref{Le:Intermediate pts geodesics} imply $\Ent_\meas(\mu_t) > -\infty$ uniformly in $0\leq t\leq 1$,  while $\smash{\wTCD_p^e(k,N)}$, finiteness of $\meas$ on compact sets, boundedness of $k$, and \autoref{Le:Properties} yield  $\Ent_\meas(\mu_t) <\infty$ uniformly in $0\leq t\leq 1$. Therefore,
\begin{align*}
&(1-t)\Ent_\meas(\mu_0) + t\Ent_\meas(\mu_1) - \Ent_\meas(\mu_t)\\
&\qquad\qquad = (1-t) \lim_{N\to\infty}N\,\big[1-\scrU_N(\mu_0)\big] + t\lim_{N\to\infty} N\,\big[1-\scrU_N(\mu_1)]\\
&\qquad\qquad\qquad\qquad + \lim_{N\to\infty} N\,\big[\scrU_N(\mu_t) - 1 \big]\\
&\qquad\qquad \geq \liminf_{N\to\infty} N\,\big[\sigma_{k_\bdpi^-/N}^{(1-t)}(\cost_\bdpi) + \sigma_{k_\bdpi^+/N}^{(t)}(\cost_\bdpi) - 1\big].
\end{align*}
In the last line, the entropy terms have been dropped since they converge to $1$.

The function $u\colon[0,1]\to \R$ defined by 
\begin{align*}
u(t) := N\,\big[\sigma_{k_\bdpi^-\cost_\bdpi^2/N}^{(1-t)} + \sigma_{k_\bdpi^+\cost_\bdpi^2/N}^{(t)}-  1\big]
\end{align*}
satisfies Dirichlet boundary conditions $u(0) = u(1) = 0$. Moreover, by \autoref{Re:Scaling prop}, the hypotheses on $k$, and \eqref{Eq:Def kpi+-}, for every $t\in (0,1)$ it satisfies
\begin{align*}
u''(t) &= -k_\bdpi^+(t\,\cost_\bdpi) \,\cost_\bdpi^2\,\big[\sigma_{k_\bdpi^-\cost_\bdpi^2/N}^{(1-t)} + \sigma_{k_\bdpi^+\cost_\bdpi^2/N}^{(t)}\big]\\ 
&= -\int k(\gamma_t)\,\vert\dot\gamma\vert^2\,\big[\sigma_{k_\bdpi^-\cost_\bdpi^2/N}^{(1-t)} + \sigma_{k_\bdpi^+\cost_\bdpi^2/N}^{(t)}\big]\d\bdpi(\gamma),
\end{align*}
and therefore, e.g.~by \cite[Prop.~3.8]{ketterer2015},
\begin{align*}
u(t) =\int_0^1\!\!\int \rmg(s,t)\, k(\gamma_s)\,\vert\dot\gamma\vert^2\,\big[\sigma_{k_\bdpi^-\cost_\bdpi^2/N}^{(1-s)} + \sigma_{k_\bdpi^+\cost_\bdpi^2/N}^{(s)}\big]\d\bdpi(\gamma)\d s.
\end{align*}
By the discussion from \autoref{Sub:Cts potentials} we finally have
\begin{align*}
\lim_{N\to\infty} \sup_{s\in[0,1]} \big\vert \sigma_{k_\bdpi^-\cost_\bdpi^2/N}^{(1-s)} + \sigma_{k_\bdpi^+\cost_\bdpi^2/N}^{(s)} - 1 \big\vert =0,
\end{align*}
from which the claim follows.
\end{proof}

Lastly, we study measure perturbations. Unfortunately, since the arguments of the distortion coefficients in \autoref{Def:TCDe} depend on the entire plan $\bdpi$ and not only on its endpoint marginals, the more restrictive additional  assumption of timelike $p$-essential nonbranching seems needed; see also \autoref{Cor:Lifting nonbr}. 

\begin{proposition}[Variable timelike curvature dimension conditions with densities]
\label{Pr:Potential} Assume $\smash{\wTCD_p^e(k,N)}$, and assume $\scrM$ is timelike $p$-essentially nonbranching. Moreover, let $V\colon\supp\meas\to \R$ be a Borel function which is locally bounded and $(k',N')$-convex in the sense of \autoref{Def:(k,N) conv} for a lower semicontinuous function $k'\colon\mms\to \R$ and $N'>0$. Lastly,  define the measure $\meas_V := \rme^{-V}\,\meas$. Then the weighted globally hyperbolic, regular length metric measure space\-time $\scrM_V := (\mms,l,\meas_V)$ satisfies $\smash{\wTCD_p^e(k+k',N+N')}$.
\end{proposition}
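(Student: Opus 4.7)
My plan is to adapt the measure-perturbation argument of Sturm--Ketterer for CD metric measure spaces (cf.~\cite{ketterer2015,sturm2006a}) to this Lorentzian setting, exploiting the pathwise characterization of $\smash{\wTCD_p^e(k,N)}$ afforded by timelike $p$-essential nonbranching (\autoref{Th:Pathwise}). First I observe that $\scrM_V$ is indeed a metric measure spacetime: the underlying length metric spacetime $(\mms,l)$ is unchanged, and local boundedness of $V$ on $\supp\meas$ ensures $\meas_V$ is a nontrivial Radon measure with the same compact null sets as $\meas$, so that $\scrP_\comp^\ac(\mms,\meas_V)=\scrP_\comp^\ac(\mms,\meas)$ and timelike $p$-essential nonbranching transfers from $\scrM$ to $\scrM_V$. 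Moreover, strong timelike $p$-dualizability depends only on $(\mu_0,\mu_1)$ and $l$, not on the background measure.

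Given a strongly timelike $p$-dualizable pair $(\mu_0,\mu_1)\in\scrP_\comp^\ac(\mms,\meas_V)^2$, I apply $\smash{\wTCD_p^e(k,N)}$ to obtain an $\smash{\ell_p}$-geodesic $(\mu_t)_{t\in[0,1]}$ and a plan $\bdpi\in\OptTGeo_p(\mu_0,\mu_1)$ satisfying \eqref{Eq:Entropic displacement conv inequ} for $(\meas,k,N)$. Invoking \autoref{Th:Pathwise}, this translates into the pathwise Jacobi-type inequality
\begin{align*}
\rho_t(\gamma_t)^{-1/N} \geq \sigma_{k_\gamma^-/N}^{(1-t)}(\vert\dot\gamma\vert)\,\rho_0(\gamma_0)^{-1/N} + \sigma_{k_\gamma^+/N}^{(t)}(\vert\dot\gamma\vert)\,\rho_1(\gamma_1)^{-1/N}
\end{align*}
for $\bdpi$-a.e.~$\gamma\in\TGeo(\mms)$, where $\rho_t$ denotes the $\meas$-density of $\mu_t$. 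Uniqueness of $l$-geodesics between chronologically related endpoints under timelike $p$-essential nonbranching and $\smash{\wTCD_p^e(k,N)}$ (cf.~\autoref{Th:Uniqueness geos}) guarantees that the $(k',N')$-convexity of $V$ holds along these very same $\bdpi$-a.e.~$\gamma$, yielding
\begin{align*}
\rme^{-V(\gamma_t)/N'} \geq \sigma_{(k')_\gamma^-/N'}^{(1-t)}(\vert\dot\gamma\vert)\,\rme^{-V(\gamma_0)/N'} + \sigma_{(k')_\gamma^+/N'}^{(t)}(\vert\dot\gamma\vert)\,\rme^{-V(\gamma_1)/N'}.
\end{align*}

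Setting $\widetilde{N}:=N+N'$ and noting that the $\meas_V$-density of $\mu_t$ is $\rho_t^V=\rho_t\,\rme^V$, whence
\begin{align*}
(\rho_t^V(\gamma_t))^{-1/\widetilde{N}} = \big[\rho_t(\gamma_t)^{-1/N}\big]^{N/\widetilde{N}}\,\big[\rme^{-V(\gamma_t)/N'}\big]^{N'/\widetilde{N}},
\end{align*}
I combine the two pathwise inequalities via the weighted Minkowski-type estimate $(A_1+B_1)^{p_1}(A_2+B_2)^{p_2}\geq A_1^{p_1}A_2^{p_2}+B_1^{p_1}B_2^{p_2}$ (valid for $p_1+p_2=1$ and nonnegative entries, by AM--GM applied termwise) followed by the concavity-of-distortion-coefficients relation
\begin{align*}
\sigma_{\kappa_1/N_1}^{(t)}(\theta)^{N_1/(N_1+N_2)}\,\sigma_{\kappa_2/N_2}^{(t)}(\theta)^{N_2/(N_1+N_2)} \geq \sigma_{(\kappa_1+\kappa_2)/(N_1+N_2)}^{(t)}(\theta),
\end{align*}
which follows from Sturm--Jacobi comparison together with the scaling identity \eqref{Eq:Scaling prop sigma}, and extends to the lower semicontinuous setting by the monotone approximation of \autoref{Sub:LSC pot tl geo}. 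This yields the desired pathwise $(k+k',\widetilde{N})$-Jacobi inequality for $\rho_t^V(\gamma_t)^{-1/\widetilde{N}}$ along $\bdpi$-a.e.~$\gamma$.

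Finally, I invoke the converse direction of \autoref{Th:Pathwise}, now applied to the metric measure spacetime $\scrM_V$, to reassemble this pathwise inequality into the global entropy inequality \eqref{Eq:Entropic displacement conv inequ} witnessing $\smash{\wTCD_p^e(k+k',\widetilde{N})}$, with the combined superposed potentials $(k+k')_\bdpi^\pm$ from \eqref{Eq:Def kpi+-}. The principal technical obstacle is the concavity estimate for distortion coefficients when the potentials $\kappa_i$ may be negative and only lower semicontinuous; I would handle this by approximating each $\kappa_i$ by its continuous truncations \eqref{Eq:kappan} and comparing the defining second-order linear ODEs of $\sigma_{\kappa_1/N_1}^{(t)}$, $\sigma_{\kappa_2/N_2}^{(t)}$, and $\sigma_{(\kappa_1+\kappa_2)/(N_1+N_2)}^{(t)}$, passing to the limit via monotonicity.
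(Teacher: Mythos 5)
Your proof takes a genuinely different route from the paper's, and it is essentially sound. The paper works at the level of entropies: it introduces the functional $V^*(\mu) = \int V\,\rmd\mu$, applies the timelike $(k',N')$-convexity of $V$ together with Jensen's inequality to obtain a $\rmG_t$-type inequality for $V^*$, and then adds this to the entropy inequality coming from $\smash{\wTCD_p^e(k,N)}$ via the \emph{joint convexity} of the function $\rmG_t$ recorded in \autoref{Le:Properties}, the linearity \eqref{Eq:SUM k k'}, and the coincidence $\bdpi = \bdpi'$ from \autoref{Th:Uniqueness geos}. You instead descend all the way to the pathwise level via \autoref{Th:Pathwise}, combine the two pointwise Jacobi-type inequalities directly via a termwise H\"older/Minkowski step plus the distortion-coefficient concavity
\begin{align*}
\sigma_{\kappa_1/N_1}^{(t)}(\theta)^{N_1/(N_1+N_2)}\,\sigma_{\kappa_2/N_2}^{(t)}(\theta)^{N_2/(N_1+N_2)} \geq \sigma_{(\kappa_1+\kappa_2)/(N_1+N_2)}^{(t)}(\theta),
\end{align*}
and then reascend via the converse direction of \autoref{Th:Pathwise} applied to $\scrM_V$ (which you correctly observe inherits timelike $p$-essential nonbranching, since local boundedness of $V$ makes $\meas_V$ and $\meas$ mutually absolutely continuous on compacta). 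Your distortion-coefficient inequality is indeed correct --- the function $w = u_1^{p_1}u_2^{p_2}$ with $p_i = N_i/(N_1+N_2)$ satisfies $w'' + \theta^2 (\kappa_1+\kappa_2)/(N_1+N_2)\,w \leq 0$ by the Cauchy--Schwarz estimate on $(\log w)'$, and Sturm comparison then gives the claim --- and it is the pointwise avatar of the $\rmG_t$ joint convexity the paper uses. The two arguments are thus morally the same combination lemma at different levels; your version is more explicit but requires the round trip through \autoref{Th:Pathwise}, whereas the paper stays at the entropy level throughout.

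One slip worth flagging: you justify applying the $(k',N')$-convexity of $V$ along $\bdpi$-a.e.~$\gamma$ by invoking "uniqueness of $l$-geodesics between chronologically related endpoints" and citing \autoref{Th:Uniqueness geos}. But that theorem gives uniqueness of the $\ell_p$-optimal dynamical plan for $\meas$-absolutely continuous endpoint marginals, not uniqueness of $l$-geodesics between individual chronologically related points --- the latter can fail under timelike $p$-essential nonbranching. The correct way to secure this step is to read the hypothesis on $V$ as \emph{strong} timelike $(k',N')$-convexity in the sense of \autoref{Def:(k,N) conv}, which is in any case what the Addition item of \autoref{L:dependence on parameters} requires and what the paper's own proof implicitly uses when passing from $V^*(\mu_t) = \int V(\gamma_t)\,\rmd\bdpi(\gamma)$ to the $\rmG_t$-bound; with strong convexity the inequality holds along \emph{every} $l$-geodesic, and no uniqueness argument is needed.
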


\begin{proof} We first prove a $(k,N)$-convexity-type inequality similar to \autoref{Def:(k,N) conv} and  \autoref{Def:TCDe} for the functional $V^*\colon \smash{\scrP_\comp^\ac(\mms,\meas)\to \R}$ given by
\begin{align*}
V^*(\mu) := \int_\mms V\d\mu.
\end{align*}
We note that $\smash{\Dom(V^*) = \scrP_\comp^\ac(\mms,\meas)}$. Given any $0\leq t\leq 1$, recall the convex function $\rmG_t$ from \autoref{Le:Properties}. Let $(\mu_t)_{t\in[0,1]}$ be any   $\smash{\ell_p}$-geodesic interpolating two given measures $\smash{\mu_0,\mu_1\in\Prob_\comp^\ac(\mms,\meas)}$, passing through $\Dom(V^*)$, and represented by a plan $\smash{\bdpi\in\OptTGeo_p(\mu_0,\mu_1)}$, cf.~\autoref{Cor:Lifting nonbr} below. Thanks to our hypothesis on $V$ and Jensen's inequality, 
\begin{align*}
-\frac{1}{N'}\,V^*(\mu_t) &= -\frac{1}{N'}\int V(\gamma_t)\d\bdpi(\gamma)\\
&\geq \int \rmG_t\Big[\!-\!\frac{1}{N'}\,V(\gamma_0),-\frac{1}{N'}\,V(\gamma_1), \frac{1}{N'}\, k'_\gamma\,\vert\dot\gamma\vert^2\Big]\d\bdpi(\gamma)\\
&\geq \rmG_t\Big[\!-\!\frac{1}{N'}\,V^*(\mu_0),-\frac{1}{N'}\,V^*(\mu_1), \frac{1}{N'}\,k'_\bdpi\,\cost_\bdpi^2\Big].
\end{align*}

Now we turn to the actual statement. Assume $\mu_0$ and $\mu_1$ to be strongly timelike $p$-dualizable in addition. Let $(\mu_t)_{t\in [0,1]}$ be the $\smash{\ell_p}$-geodesic connecting $\mu_0$ to $\mu_1$ represented by $\smash{\bdpi'\in \OptTGeo_p(\mu_0,\mu_1)}$, shown to exist and be unique in  \autoref{Cor:Lifting nonbr} below, and also let $\smash{\bdpi\in\OptTGeo_p(\mu_0,\mu_1)}$ certify the displacement semiconvexity of $\scrU_N$ defining $\smash{\wTCD_p^e(k,N)}$.  Recall that 
\begin{align}\label{Eq:Ent mv formula}
\Ent_{\meas_V}(\mu) = \Ent_\meas(\mu) + V^*(\mu)
\end{align}
for every $\mu\in\scrP_\comp(\mms)$, and define $\smash{\scrU^V_{N+N'} \colon \Prob_\comp(\mms) \to [0,\infty]}$ by
\begin{align*}
\scrU^V_{N+N'}(\mu) := \rme^{-\Ent_{\meas_V}(\mu)/(N+N')}.
\end{align*}
Given any $0\leq t\leq 1$, \eqref{Eq:Ent mv formula} and the previous considerations yield
\begin{align*}
\log\scrU^V_{N+N'}(\mu_t) &= - \frac{N}{N+N'}\,\frac{1}{N}\Ent_\meas(\mu_t) - \frac{N'}{N+N'}\,\frac{1}{N'}\,V^*(\mu_t)\\
&\geq \frac{N}{N+N'}\,\rmG_t\Big[\!-\!\frac{1}{N}\Ent_\meas(\mu_0), -\frac{1}{N}\Ent_\meas(\mu_1), \frac{1}{N}\,k_\bdpi\,\cost_\bdpi^2\Big]\\
&\qquad\qquad + \frac{N'}{N+N'}\,\rmG_t\Big[\!-\!\frac{1}{N'}\,V^*(\mu_0),-\frac{1}{N'}\,V^*(\mu_1), \frac{1}{N'}\,k_{\bdpi'}\,\cost_{\bdpi'}^2\Big]\\
&\geq \rmG_t\Big[\!-\!\frac{1}{N+N'}\Ent_{\meas_V}(\mu_0), -\frac{1}{N+N'}\Ent_{\meas_V}(\mu_1),\\
&\qquad\qquad \frac{1}{N+N'}\,(k+k')_\bdpi\,\cost_\bdpi^2\Big].
\end{align*}
In the last estimate, we have used Jensen's inequality, the linearity \eqref{Eq:Def kpi+-} of
\begin{align}\label{Eq:SUM k k'}
k_\bdpi^\pm\,\cost_\bdpi^2 + k'^\pm_{\bdpi'}\,\cost_{\bdpi'}^2 = (k+k')^\pm_\bdpi\,\cost_\bdpi^2
\end{align}
in $k$,
and the fact that $\bdpi = \bdpi'$ ensured by \autoref{Th:Uniqueness geos} below.
\end{proof}

\subsection{Qualitative properties from local(ly uniform) curvature bounds} Unlike balls in metric geometry,  causal diamonds are always (causally) convex in a metric space\-time. Hence, by local lower boundedness of $k$ and \autoref{Pr:Consistency}, every $\smash{\wTCD_p^e(k,N)}$ space is locally a $\smash{\wTCD_p^e(K,N)}$ space for some $K\in\R$;
 the same is true with both w's removed.

\begin{lemma}[From variable to constant bounds]
\label{Le:var to const} Assume $\smash{\wTCD_p^e(k,N)}$. Let $C_0,C_1\subset\mms$ be two compact sets. Then $K := \inf k(J(C_0,C_1))\in\R$, and the metric measure spacetime structure on $J(C_0,C_1)$ induced by $\scrM$ obeys $\smash{\wTCD_p^e(K,N)}$.
\end{lemma}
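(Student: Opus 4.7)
The first claim is immediate: $J(C_0,C_1)$ is compact by \autoref{Cor:K-GH} (compactness of emeralds), and since $k$ is lower semicontinuous it attains its infimum there, so $K\in\R$.

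For the $\wTCD_p^e(K,N)$ property of the induced structure, I would let $(\mu_0,\mu_1)$ be a strongly timelike $p$-dualizable pair of compactly supported measures absolutely continuous with respect to $\meas\mres J(C_0,C_1)$, and view them as elements of $\scrP_\comp^\ac(\mms,\meas)$ via the inclusion. Strong timelike $p$-dualizability involves only $l$ and the measures themselves and is unaffected by this embedding, since all causal couplings of $\mu_0$ and $\mu_1$ are automatically concentrated on $J(C_0,C_1)^2$ by the support constraint on the marginals. Apply $\wTCD_p^e(k,N)$ in $\scrM$ to obtain an $\smash{\ell_p}$-geodesic $(\mu_t)_{t\in[0,1]}$ from $\mu_0$ to $\mu_1$ together with a plan $\bdpi\in\OptTGeo_p(\mu_0,\mu_1)$ certifying \eqref{Eq:Entropic displacement conv inequ}.

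The next step is to confine everything to $J(C_0,C_1)$. Any $\gamma\in\TGeo(\mms)$ with $\gamma_0\in\supp\mu_0\subset J(C_0,C_1)$ and $\gamma_1\in\supp\mu_1\subset J(C_0,C_1)$ has $\gamma_0\leq\gamma_t\leq\gamma_1$, hence $\gamma_t\in J^+(C_0)\cap J^-(C_1)=J(C_0,C_1)$ for every $t\in[0,1]$; consequently $\bdpi$ is concentrated on curves entirely contained in $J(C_0,C_1)$, and by \autoref{Le:Intermediate pts geodesics} each $\supp\mu_t\subset J(C_0,C_1)$. Since the inequality $k\geq K$ holds on this set, from the definition \eqref{Eq:Def kpi+-} we deduce $k_\bdpi^\pm\geq K$ pointwise on $[0,\cost_\bdpi]$. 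The monotonicity assertion of \autoref{Le:Properties} then gives
\begin{align*}
\sigma_{k_\bdpi^\pm/N}^{(t)}(\cost_\bdpi)\geq \sigma_{K/N}^{(t)}(\cost_\bdpi)
\end{align*}
for every $0\leq t\leq 1$, where the right-hand side is the usual constant-potential distortion coefficient from \autoref{Ex:Constant pot}.

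Combining this with \eqref{Eq:Entropic displacement conv inequ} and the nonnegativity of $\scrU_N$ yields
\begin{align*}
\scrU_N(\mu_t)\geq \sigma_{K/N}^{(1-t)}(\cost_\bdpi)\,\scrU_N(\mu_0)+\sigma_{K/N}^{(t)}(\cost_\bdpi)\,\scrU_N(\mu_1),
\end{align*}
which is precisely the $\wTCD_p^e(K,N)$ inequality as formulated in \autoref{Def:TCDe} for the constant potential $K$. The Boltzmann entropies with respect to $\meas$ and $\meas\mres J(C_0,C_1)$ coincide on measures supported in $J(C_0,C_1)$, so $\scrU_N$ is the same in both structures and no adjustment is needed. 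The step to watch is step 6 above: one must verify that $K$ (viewed as a constant function) is dominated by $k$ only on $J(C_0,C_1)$, and that the relevant $l$-geodesics never exit this set so the monotonicity of \autoref{Le:Properties} legitimately applies; both are handled by the causal confinement argument in the previous paragraph.
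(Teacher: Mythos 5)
Your proof is correct and takes essentially the same route as the paper: causal convexity of $J(C_0,C_1)$ confines the relevant geodesics and intermediate measures to that emerald, after which the bound $k\geq K$ there is converted into the $\wTCD_p^e(K,N)$ inequality via monotonicity of the distortion coefficients. The paper's own treatment is the short paragraph preceding the lemma, which cites \autoref{Pr:Consistency}\,\ref{La:Zwei} abstractly rather than unwinding the monotonicity of $\smash{\sigma_{k_\bdpi^\pm/N}^{(t)}}$ by hand as you do, and leans on \autoref{Re:Restriction to subsets} to justify stating a $\wTCD^e$ condition on the restriction $J(C_0,C_1)$ even though that set is not itself a length metric spacetime — a caveat worth mentioning explicitly, though your argument handles it implicitly by working in the ambient space and only noting at the end that all objects happen to live inside the emerald.
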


An analogous claim holds clearly for the nonweak entropic timelike curvature-dimension condition, using \autoref{Re:Restriction to subsets} to make sense of both weak and nonweak entropic timelike curvature-dimension condition for subsets of $\mms$,
in spite of the fact that
\autoref{Def:TCDe} is stated only for length metric   spacetimes.

This result allows us to transfer qualitative properties from the variable to the constant case. The relevant ones for our purposes treated now are 
\begin{itemize}
\item the uniqueness of chronological $\smash{\ell_p}$-optimal couplings,
\item the uniqueness of $\smash{\ell_p}$-geodesics, and 
\item the existence of $\smash{\ell_p}$-geodesics with uniformly bounded densities.
\end{itemize}
Although $k$ might not be uniformly bounded from below on $\mms$, the proofs of all results below from \cite{braun2022,braun2023,cavalletti2020} are local in nature and can thus be proven with a local uniform lower bound on $k$, as in \autoref{Le:var to const}. We will extend the above mentioned results in two ways.
\begin{itemize}
\item First, recall  our $\smash{\ell_p}$-geodesics are defined differently than in \cite{braun2022,braun2023}. Since \cite{braun2022, cavalletti2020} essentially establish uniqueness of \emph{displacement} $\smash{\ell_p}$-geodesics, and in our case not every $\smash{\ell_p}$-geo\-desic is lifted by a plan, this requires additional care (although the proofs will be similar to \cite{braun2022,cavalletti2020}).
\item Second, slightly generalizing \cite{braun2023,cavalletti2020} we show the above mentioned  results to hold under a mere local entropic timelike curvature-dimension condition, as stated in \autoref{Def:Local TCD} below.
\end{itemize}


\subsubsection{Uniqueness of chronological $\smash{\ell_p}$-optimal couplings from local curvature bounds}

\begin{definition}[Local timelike curvature-dimension bounds]
\label{Def:Local TCD} We say $\scrM$ obeys the \emph{local entropic timelike curvature-dimension condition} $\smash{\TCD_{p,\loc}^e(k,N)}$ if every point $o\in \supp\meas$ has a neighborhood $U\subset\mms $ with the sub\-sequent property. For every $(\mu_0,\mu_1) \in \scrP_\comp^\ac(\mms,\meas)^2$ with $\supp\mu_0,\supp\mu_1\subset U$ and
\begin{align}\label{Eq:Subset condition}
\supp\mu_0 \times\supp\mu_1\subset \{l>0\},
\end{align}
there exist
\begin{itemize}
\item an $\smash{\ell_p}$-geodesic $(\mu_t)_{t\in[0,1]}$ connecting $\mu_0$ to $\mu_1$, and 
\item a plan $\smash{\bdpi\in\OptTGeo_p(\mu_0,\mu_1)}$
\end{itemize}
such that for every $0\leq t\leq 1$, the functional $\scrU_N$ from \eqref{Eq:UN DEF} satisfies
\begin{align*}
\scrU_N(\mu_t) \geq \sigma_{k_\bdpi^-/N}^{(1-t)}(\cost_\bdpi)\,\scrU_N(\mu_0) + \sigma_{k_\bdpi^+/N}^{(t)}(\cost_\bdpi)\,\scrU_N(\mu_1).
\end{align*}
\end{definition}

Note that the above $\smash{\ell_p}$-geodesic is defined with respect to the ambient structure from $\scrM$ and thus, in particular, allowed to leave $U$.

By \autoref{Ex:Str tl dual}, $\smash{\TCD_{p,\loc}^e(k,N)}$ is clearly implied by $\smash{\wTCD_p^e(k,N)}$.
 Moreover, time reversal (\autoref{Re:Causal reversal}) implies the
the $\meas$-absolute continuity of $\mu_1$ can be substituted for that of $\mu_0$ in the hypotheses of the following theorem.


\begin{theorem}[Maps and uniqueness for chronological $\smash{\ell_p}$-optimal couplings]\label{Th:Optimal maps} We assume $\scrM$ is a timelike $p$-essentially nonbranching $\smash{\TCD_{p,\loc}^e(k,N)}$ space, and let $(\mu_0,\mu_1)\in\smash{\Prob(\mms)^2}$ be timelike $p$-dualizable by $\smash{\pi\in\Pi_\ll(\mu_0,\mu_1)}$. Lastly, assume $\mu_0$ 
to be $\meas$-absolutely continuous. Then there is a $\mu_0$-measurable map $T\colon \mms\to\mms$ with
\begin{align*}
\pi = (\Id,T)_\push\mu_0.
\end{align*} 

In particular, $\pi$ is the unique \emph{chrono\-logical} $\smash{\ell_p}$-optimal coupling of $\mu_0$ and $\mu_1$.
\end{theorem}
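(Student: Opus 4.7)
\textit{Proof plan.}

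\textbf{Paragraph 1 (Lifting and reduction).} The first move is to lift the chronological $\ell_p$-optimal coupling $\pi$ to a dynamical plan $\bdpi \in \OptTGeo_p(\mu_0,\mu_1)$ with $(\eval_0,\eval_1)_\push\bdpi = \pi$, which is possible by \autoref{Le:Geodesics plan}\ref{La:01}. The existence of the desired map $T$ is then equivalent to $\bdpi$ being concentrated on a Borel graph over $\eval_0$: a Borel selection $\Phi\colon \mms \to \TGeo(\mms)$ with $\Phi(x)_0 = x$ for $\mu_0$-a.e.~$x$ gives $T := \eval_1\circ\Phi$ and hence $\pi = (\Id,T)_\push\mu_0$. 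The uniqueness assertion follows from the map property via averaging: given a second chronological $\ell_p$-optimal coupling $\pi'$, the average $\tfrac12(\pi+\pi')$ remains chronological and $\ell_p$-optimal (chronology being preserved under convex combinations, $\ell_p$-optimality by convexity of the set of optimal couplings), hence is itself induced by a map, which forces $\pi = \pi'$ $\mu_0$-a.e.

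\textbf{Paragraph 2 (Absolute continuity of interpolants).} By \autoref{Le:Intermediate pts geodesics} the support of $\bdpi$ consists of $l$-geodesics with image in a common compact causal diamond, so \autoref{Le:var to const} yields a constant $K \in \R$ with $k \geq K$ on that diamond. Partitioning $\bdpi$ via \autoref{Le:Geodesics plan}\ref{La:04} into small sub-plans whose endpoint marginals fit into the chronological neighborhoods from \autoref{Def:Local TCD}, the $\TCD_{p,\loc}^e(K,N)$ inequality combined with $\scrU_N(\mu_0)>0$ (implied by $\mu_0 \ll \meas$ and the compactness of $\supp\mu_0$) produces $\scrU_N(\mu_t) > 0$ for every $0 \le t \le 1$, so the time-$t$ marginals $\mu_t := (\eval_t)_\push\bdpi$ have finite Boltzmann entropy and are $\meas$-absolutely continuous.

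\textbf{Paragraph 3 (Contradiction via timelike essential nonbranching).} Assume toward a contradiction that $\bdpi$ is not concentrated on a graph over $\eval_0$. By measurable selection in the Polish space $\TGeo(\mms)$ (\autoref{Th:Compact Polish}, \autoref{Cor:Cpt TGeo}), one can decompose $\bdpi = \lambda_1 \bdpi^1 + \lambda_2 \bdpi^2$ into mutually singular probability measures $\bdpi^i$ supported on disjoint Borel subsets of $\TGeo(\mms)$, with common $\meas$-absolutely continuous time-$0$ marginal $\sigma_0$ but distinct (mutually singular) time-$1$ marginals $\sigma_1^1, \sigma_1^2$. By \autoref{Le:Geodesics plan}\ref{La:04}, each $\bdpi^i \in \OptTGeo_p(\sigma_0,\sigma_1^i)$, and by Paragraph~2 their time-$t$ marginals are $\meas$-absolutely continuous for every $t$. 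Now apply \autoref{Le:Mutually singular} to the decomposition $\tfrac{1}{2}(\bdpi^1+\bdpi^2)$: the time-$t$ marginals $(\eval_t)_\push\bdpi^1$ and $(\eval_t)_\push\bdpi^2$ are mutually singular for every $t \in (0,1)$. The contradiction is then extracted from the $\TCD_{p,\loc}^e(K,N)$ inequality applied to the combined plan $\tfrac{1}{2}(\bdpi^1+\bdpi^2)$: the entropic lower bound on $\scrU_N\big(\tfrac{1}{2}(\mu^1_{1/2}+\mu^2_{1/2})\big)$ is incompatible with the strict gain of the Boltzmann entropy arising from mutual singularity of $\mu^1_{1/2}$ and $\mu^2_{1/2}$, once the two plans share the same $\meas$-absolutely continuous starting marginal $\sigma_0$.

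\textbf{Main obstacle.} The definition of timelike $p$-essential nonbranching (\autoref{Def:TL nonbranching}) and \autoref{Le:Mutually singular} presuppose plans with \emph{both} endpoint marginals $\meas$-absolutely continuous, whereas our hypothesis only guarantees $\mu_0 \ll \meas$. Overcoming this asymmetry — either by restricting to subplans whose terminal marginal has been made absolutely continuous through the entropic bound of Paragraph~2 (noting that the restrictions $(\Restr_0^t)_\push \bdpi^i$ inherit absolute continuity on both ends), or by approximating $\sigma_1^i$ by $\meas$-absolutely continuous measures and passing to a narrow limit with \autoref{Le:Geodesics plan}\ref{La:06}--\ref{La:07} — constitutes the most delicate step, together with verifying that the mutual singularity set up at time $1$ survives such restriction/limit manipulations so that the nonbranching conclusion remains applicable.
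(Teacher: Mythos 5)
Your high-level scheme is the right one (lift $\pi$, decompose into branches, use nonbranching plus interior absolute continuity, extract an entropy contradiction), and Paragraph~1's reduction of uniqueness to the map property via averaging is essentially how the paper handles the ``in particular'' part. However, the proposal has two genuine gaps, both of which you partially sense but do not resolve.

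\textbf{The circularity in Paragraph 2.} \autoref{Def:Local TCD} is stated for pairs $(\mu_0,\mu_1)\in\scrP_\comp^\ac(\mms,\meas)^2$, i.e.\ it takes \emph{both} endpoint marginals $\meas$-absolutely continuous as input. You want to deduce $\scrU_N(\mu_t)>0$ --- hence $\mu_t \ll \meas$ --- from $\scrU_N(\mu_0)>0$ by applying the local $\TCD$ inequality to $(\mu_0,\mu_1)$, but with $\mu_1$ not a priori absolutely continuous this is not a legal move. The absolute continuity of the intermediate marginals is not an automatic consequence of only one absolutely continuous endpoint; it is precisely what must be established, and this cannot be bootstrapped out of the $\TCD^e_{p,\loc}$ inequality alone. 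This is why the paper first passes through \autoref{Pr:TCD to TMCP} to obtain a local timelike measure-contraction property --- an asymmetric statement with a point mass at one end, which gives density bounds for interpolants starting from an absolutely continuous source without any hypothesis on $\mu_1$. Your option (a) in the ``Main obstacle'' paragraph is circular for the same reason, and option (b) (narrow approximation of $\sigma_1^i$ by absolutely continuous measures) would require carrying the nonbranching/mutual-singularity conclusions through a narrow limit, for which you have no mechanism.

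\textbf{The missing local-to-global step.} The hypothesis is $\TCD^e_{p,\loc}(k,N)$, a genuinely local condition: it provides an open neighborhood $U$ around each point in which the entropy inequality holds, not a global estimate. Your Paragraphs 2--3 proceed as if you could invoke the curvature inequality directly along the full transport $\bdpi$ from $\mu_0$ to $\mu_1$, which is not available. The paper handles this by covering the compact emerald $J(\supp\mu_0,\supp\mu_1)$ with finitely many $U_i$ from the local condition, invoking Lebesgue's number lemma together with Arzel\`a--Ascoli (via \autoref{Cor:Cpt TGeo}) to find a time-step $\delta>0$ such that every geodesic in the plan stays inside a single $U_i$ on any interval of length $\delta/3$, and then \emph{composing} finitely many step maps $T_{\delta/3},T_{2\delta/3},\dots$; the absolute continuity of $\mu_{i\delta/3}$ is bootstrapped forward inductively along the composition. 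This temporal composition (following the pattern of \cite[Cor.~5.4]{cavmon}) is the technical core of the proof and is absent from your proposal --- your ``partitioning $\bdpi$ via \ref{La:04}'' is a mass decomposition, not a time decomposition, and does not shrink the support of the endpoint pair into a single~$U_i$. Finally, a small note on Paragraph 3: the entropy contradiction in the Rajala--Sturm/Cavalletti--Mondino scheme does not occur at a fixed $t=1/2$; it arises by letting $t\to 0$ after observing that mutual singularity of the $t$-slices inside a bounded-density plan depresses $\Ent_\meas(\mu_t)$ by $\log 2$ uniformly in $t$, which is inconsistent with narrow lower semicontinuity and $\mu_t\to\sigma_0$.
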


\begin{proof} We assume $\meas$-absolute continuity of $\mu_0$ with density $\rho_0$. By the local compactness of $\Top$, cf.~\autoref{Cor:Hausdorff}, and the usual contradiction argument, cf.~the proofs of \cite[Thm.~4.15]{braun2023} and \cite[Thm.~3.20]{cavalletti2020}, it is not restrictive to assume the compactness of $C_0 := \supp\mu_0$ and $C_1 := \supp\mu_1$. Furthermore, by modifying $\pi$ and  \cite[Lem.~2.10]{cavalletti2020}, we may and will assume without restriction $\smash{\Vert\rho_0\Vert_{\Ell^\infty(\mms,\meas)}<\infty}$, so that $\mu_0\in\Dom(\Ent_\meas)$, as well as the existence of $r>0$ with
\begin{align*}
\supp\mu_0\times\supp\mu_1 \subset \{l\geq r\}.
\end{align*}
In turn, up to replacing $\mms$ with $J(\supp\mu_0,\supp\mu_1)$, for notational convenience we assume compactness of $\mms$ (which does not conflict with \autoref{Re:Restriction to subsets}).

By $\smash{\TCD_{p,\loc}^e(k,N)}$, there exist $n\in\N$ and relatively open sets $U_1,\dots,U_n\subset \mms$ covering $\mms$  such that for every $i=1,\dots,n$, the statement of the theorem is true for every pair of timelike $p$-dualizable measures with support in $U_i$. This can be shown by solely relying on \emph{displacement} $\smash{\ell_p}$-geodesics. Indeed, first note that every $\smash{\ell_p}$-geodesic as in \autoref{Def:Local TCD} is  a  displacement $\smash{\ell_p}$-geodesic by \autoref{Le:Geodesics plan}. \autoref{Le:var to const}, a local form of \autoref{Pr:TCD to TMCP}, and again \autoref{Le:Geodesics plan} allow us to show a local (entropic, i.e.~in terms of $\scrU_N$) version of \cite[Thm.~4.12]{braun2023} within $U_i$ for every $i=1,\dots,n$, where ``timelike proper-time parametrized $\smash{\ell_p}$-geodesic''  is replaced by ``displacement $\smash{\ell_p}$-geodesic'' in our case; cf.~\cite[Thm.~4.11]{braun2022}. Following the arguments for \cite[Lems.~4.13, 4.14, Thm.~4.15]{braun2023} we obtain the claim.

Consequently, by following the proof of \cite[Thm.~4.16]{braun2023}, for every $i=1,\dots,n$ we also get uniqueness of \emph{displacement} $\smash{\ell_p}$-geodesics connecting timelike $p$-dualizable pairs both supported in $U_i$ with $\meas$-absolutely continuous initial distribution.


We now follow the argument for \cite[Cor.~5.4]{cavmon} to get  prove the claim for $\mu_0$ and $\mu_1$ as above. By Lebesgue's number lemma, there exists $\varepsilon > 0$ such that if any subset $E\subset \mms$ has diameter less than $\varepsilon$, then $E\subset U_i$ for some $i=1,\dots,n$. Let $G_r$ denote the compact  set of all $\gamma\in\TGeo(\mms)$ with $\gamma_0\in C_0$ and $\gamma_1\in C_1$, cf.~\autoref{Cor:Cpt TGeo}. By Arzelà--Ascoli's theorem, there exists $0 < \delta < \varepsilon/5$ such that if $0\leq s,t\leq 1$ satisfies $\vert s-t \vert <\delta$, then every $\gamma\in G_r$ obeys $\met(\gamma_s,\gamma_t) < \varepsilon/5$.

Let $\smash{\bdpi\in\OptTGeo_p(\mu_0,\mu_1)}$ satisfy $\pi = (\eval_0,\eval_1)_\push\bdpi$, whose existence is granted by \autoref{Le:Geodesics plan}. Set $\mu_t := (\eval_t)_\push\bdpi$, where $0\leq t\leq 1$.  Given any $0\leq t<1$, let $\smash{\bdpi^t\colon \mms\to \scrP(\Cont([0,1];\mms))}$ denote  the disintegration of $\bdpi$ with respect to $\eval_t$, i.e.
\begin{align*}
\rmd\bdpi(\gamma) = \rmd\bdpi_x^t(\gamma)\d \mu_t(x).
\end{align*}
By the triangle inequality, given any $z\in\mms$ we obtain
\begin{align}\label{Eq:DIAM EST}
\diam\!\Big[\!\supp \mu_t\mres \sfB^\met(z,\delta) \cup \supp (\eval_{t+\delta/2})_\push\Big[\!\int_{\sfB^\met(z,\delta)} \bdpi_x^t \d\mu_t(x)\Big]\Big] < \varepsilon.
\end{align}

Now we argue successively to get the desired map $T\colon \mms\to\mms$ which satisfies $\pi = (\Id,T)_\push\mu_0$. We start the process from $t=0$. Covering $\supp\mu_0$ with finitely many $\delta$-balls with respect to $\met$, by \eqref{Eq:DIAM EST}, the mentioned Lebesgue number lemma, and pasting optimal maps together we construct a $\smash{\mu_0}$-measurable map $T_{\delta/3}\colon \mms \to \mms$ such that $(\eval_0,\eval_{\delta/3})_\push\bdpi = (\Id, T_{\delta/3})_\push\mu_0$.  We repeat this process finitely many times --- say $n\in\N$, chosen to be the smallest integer with $(n+1)\,\delta/3 \geq 1$ --- by starting from $t=i\,\delta/3$ for every $i = 1,\dots,n$, thereby obtaining a $\smash{\mu_{i\delta/3}}$-measurable map $T_{(i+1)\delta/3}\colon \mms \to \mms$ such that $(\eval_{i\delta/3}, \eval_{(i+1)\delta/3})_\push\bdpi = (\Id, T_{(i+1)\delta/3})_\push\mu_{i\delta/3}$. Here, \eqref{Eq:DIAM EST} and   local versions of \autoref{Pr:TCD to TMCP} as well as the qualitative result \cite[Lem.~4.14]{braun2023} --- extended to  timelike $p$-dualizable endpoints as in Step 4 in the proof of \cite[Prop.~3.19]{cavalletti2020} using \autoref{Le:Mutually singular} --- inductively yield that for every $0=1,\dots,n$, there exists at least one \emph{displacement} $\smash{\ell_p}$-geodesic $(\nu_t)_{t\in[0,1]}$ from $\mu_{i\delta/3}$ to $\mu_{(i+3/2)\delta/3}$ with the following properties. 
\begin{itemize}
\item The measure $\nu_t$ is $\meas$-absolutely continuous for every $0\leq t < 1$.
\item For every $\eta >0$ the respective $\meas$-density of $\nu_t$ belongs to $\Ell^\infty(\mms,\meas)$, uniformly in  $0\leq t\leq 1-\eta$. 
\end{itemize}
By the above  uniqueness results, it is necessarily represented by $\smash{(\Restr_{i\delta/3}^{(i+3/2)\delta/3})_\push\bdpi}$. Thus, $\smash{\mu_t}$ is $\meas$-absolutely continuous for every $i \in 0,\dots,n$ and every $i\delta/3 \leq t \leq (i+1)\delta/3$ with $t<1$, and the respective $\meas$-densities are uniformly in $\Ell^\infty(\mms,\meas)$. The maps obtained that way readily yield the claimed $T$ by composition. 
\end{proof}

\subsubsection{Maps and uniqueness of $\smash{\ell_p}$-geodesics from local curvature bounds} With a similar reasoning based upon \cite[Prop.~3.19, Thm.~3.21]{cavalletti2020} and again \cite[Lem.~4.14, Thms.~4.16, 4.19]{braun2022} with \autoref{Le:Intermediate pts geodesics}, the subsequent \autoref{Th:Uniqueness geos} is readily  verified. 

\begin{theorem}[Maps, uniqueness, and regularity of displacement $\smash{\ell_p}$-geodesics]\label{Th:Uniqueness geos} Retain the assumptions and notation from \autoref{Th:Optimal maps}. Then $\smash{\OptTGeo_p(\mu_0,\mu_1)}=\{\bdpi\}$ and there is a $\mu_0$-measurable map $\mathfrak{T}\colon \mms\to \TGeo(\mms)$ with 
\begin{align*}
\bdpi = \mathfrak{T}_\push\mu_0.
\end{align*}

In particular, $\mu_0$ and $\mu_1$ are connected by exactly one \emph{displacement} $\ell_p$-geodesic; the latter is represented by the plan $\bdpi$.

Lastly, if $\mu_0\in\Prob_\comp(\mms)\cap \Dom(\Ent_\meas)$ and $\mu_1\in\scrP_\comp(\mms)$ above, then we have $(\eval_t)_\push\bdpi\in \scrP_\comp(\mms)\cap \Dom(\Ent_\meas)$ for every $0\leq t < 1$.
\end{theorem}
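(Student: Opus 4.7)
The plan is to apply \autoref{Th:Optimal maps} at every dyadic intermediate time $s\in[0,1]\cap\Q$, using timelike $p$-essential nonbranching and the local curvature-dimension hypothesis along the way, and then to exploit continuity of $l$-geodesics (\autoref{Le:Continuity geos}) to pin down any plan in $\OptTGeo_p(\mu_0,\mu_1)$ at every such $s$, hence on all of $[0,1]$.

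By \autoref{Th:Optimal maps}, the unique chronological $\ell_p$-optimal coupling $\pi$ of $\mu_0$ and $\mu_1$ has the form $(\Id,T)_\push\mu_0$. Part \ref{La:01} of \autoref{Le:Geodesics plan} produces a $\bdpi\in\OptTGeo_p(\mu_0,\mu_1)$ lifting $\pi$, and part \ref{La:03} of the same lemma combined with \autoref{Le:Propagates} shows that for every $s\in(0,1)$ the pair $(\mu_0,\mu_s)$, where $\mu_s:=(\eval_s)_\push\bdpi$, is itself timelike $p$-dualizable (by the chronological and $\ell_p$-optimal coupling $(\eval_0,\eval_s)_\push\bdpi$). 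Applying \autoref{Th:Optimal maps} to this pair produces a $\mu_0$-measurable map $T_s\colon\mms\to\mms$ with $(\eval_0,\eval_s)_\push\bdpi=(\Id,T_s)_\push\mu_0$; moreover, $T_s$ is uniquely determined $\mu_0$-a.e.~independently of the particular $\bdpi$ chosen.

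Disintegrating $\bdpi=\int\bdpi_x\d\mu_0(x)$ against $\eval_0$, the identity above reads $(\eval_s)_\push\bdpi_x=\delta_{T_s(x)}$ for $\mu_0$-a.e.~$x$. Intersecting the countable family of exceptional null sets as $s$ ranges over $[0,1]\cap\Q$, every curve $\gamma$ charged by $\bdpi_x$ satisfies $\gamma_s=T_s(x)$ for all dyadic $s$ and hence, by continuity of $l$-geodesics, for all $s\in[0,1]$. Thus $\bdpi_x=\delta_{\mathfrak{T}(x)}$, where $\mathfrak{T}\colon \mms\to\TGeo(\mms)$ is Borel (e.g.~by Jankov--von Neumann applied to the closed-graph multifunction of \autoref{Le:Geodesics plan}\ref{La:01}, restricted to the graph of $T$), and therefore $\bdpi=\mathfrak{T}_\push\mu_0$. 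Since the collection $(T_s)_{s\in[0,1]\cap\Q}$ (and hence $\mathfrak{T}$) is independent of the chosen $\bdpi$, this forces $\OptTGeo_p(\mu_0,\mu_1)=\{\bdpi\}$ and in turn the uniqueness of the displacement $\ell_p$-geodesic.

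For the final assertion, cover $J(\supp\mu_0,\supp\mu_1)$ by finitely many of the $\TCD_{p,\loc}^e(k,N)$ neighborhoods supplied by \autoref{Def:Local TCD}, with a uniform lower bound $K\in\R$ for $k$ provided by \autoref{Le:var to const}. A dyadic cut-and-glue scheme in the style of the proof of \autoref{Th:Optimal maps}, combined with the local inequality
\begin{align*}
\scrU_N(\mu_t)\geq\sigma_{k_\bdpi^-/N}^{(1-t)}(\cost_\bdpi)\,\scrU_N(\mu_0)+\sigma_{k_\bdpi^+/N}^{(t)}(\cost_\bdpi)\,\scrU_N(\mu_1)
\end{align*}
applied on each piece, yields $\scrU_N(\mu_t)>0$ for every $t\in[0,1)$ because $\scrU_N(\mu_0)>0$ by the hypothesis $\mu_0\in\Dom(\Ent_\meas)$. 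This is equivalent to $\Ent_\meas(\mu_t)<\infty$, while compactness of $\supp\mu_t$ is immediate from \autoref{Le:Intermediate pts geodesics}. The main obstacle will be this last regularity-propagation step: the partition of $[0,1]$ must be fine enough, in view of the uniform equicontinuity of the $l$-geodesics inside $J(\supp\mu_0,\supp\mu_1)$ guaranteed by \autoref{Cor:Cpt TGeo}, to keep each piece of the plan inside a single local-TCD neighborhood, and the induction across dyadic scales must simultaneously preserve $\meas$-absolute continuity and the local convexity inequality so that \autoref{Def:Local TCD} can be invoked repeatedly without losing the initial entropy information.
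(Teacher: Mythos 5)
Your mapping argument---disintegrating a fixed $\bdpi$ and pinning down its geodesics at countably many intermediate times via \autoref{Th:Optimal maps} and \autoref{Le:Continuity geos}---correctly shows that \emph{each} $\bdpi\in\OptTGeo_p(\mu_0,\mu_1)$ equals $\mathfrak{T}_\push\mu_0$ for some Borel map, and the regularity propagation for $\Dom(\Ent_\meas)$ is sketched along reasonable lines. However, the claim that ``$T_s$ is uniquely determined $\mu_0$-a.e.~independently of the particular $\bdpi$ chosen'' is asserted rather than proved, and this is precisely the content of the uniqueness statement. \autoref{Th:Optimal maps} determines $T_s$ from the \emph{pair} $(\mu_0,\mu_s)$, but $\mu_s=(\eval_s)_\push\bdpi$ depends a priori on $\bdpi$; a different plan $\bdpi'$ could have a different interpolant $\mu_s'$ and hence a different $T_s'$, so the conclusion $\OptTGeo_p(\mu_0,\mu_1)=\{\bdpi\}$ does not follow as written.

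The missing step is an averaging argument. Given $\bdpi,\bdpi'\in\OptTGeo_p(\mu_0,\mu_1)$, both project under $(\eval_0,\eval_1)$ to the \emph{same} chronological $\ell_p$-optimal coupling $\pi=(\Id,T)_\push\mu_0$ (by the uniqueness clause of \autoref{Th:Optimal maps}), so the convex combination $\bdpi'':=\tfrac12(\bdpi+\bdpi')$ also lies in $\OptTGeo_p(\mu_0,\mu_1)$. By \autoref{Le:Propagates} the pair $(\mu_0,(\eval_s)_\push\bdpi'')$ is timelike $p$-dualizable, so \autoref{Th:Optimal maps} yields $(\eval_0,\eval_s)_\push\bdpi''=(\Id,T_s'')_\push\mu_0$, a measure concentrated on $\graph T_s''$. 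Since $(\eval_0,\eval_s)_\push\bdpi\leq 2\,(\eval_0,\eval_s)_\push\bdpi''$, the left-hand side is also concentrated on $\graph T_s''$, and having first marginal $\mu_0$ it must equal $(\Id,T_s'')_\push\mu_0$; likewise for $\bdpi'$. Hence $(\eval_s)_\push\bdpi=(\eval_s)_\push\bdpi'$ for every rational $s$, and your continuity plus disintegration argument then forces $\bdpi=\bdpi'$. Without this convexity--domination step the uniqueness part of the proof is incomplete; this device is exactly what underlies the proofs in \cite[Thm.~3.21]{cavalletti2020} and \cite[Thms.~4.16, 4.19]{braun2022} to which the paper defers.
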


\begin{corollary}[Strongly timelike $p$-dualizable, (compactly supported) densities have unique (rough) $\ell_p$-geodesics]
\label{Cor:Lifting nonbr} Retain the assumptions of \autoref{Th:Optimal maps} for $\scrM$. Moreover, let the pair $\smash{(\mu_0,\mu_1)\in\scrP(\mms)^2}$ be timelike $p$-dualizable, and assume all $\smash{\ell_p}$-optimal couplings of $\mu_0$ and $\mu_1$ to be chronological. Moreover, suppose $\meas$-absolute continuity of $\mu_0$ or $\mu_1$. Then there exists exactly one $\smash{\ell_p}$-geodesic $(\mu_t)_{t\in[0,1]}$ from $\mu_0$ to $\mu_1$; it is in fact a displacement $\smash{\ell_p}$-geodesic.

If in addition $\mu_0$ and $\mu_1$ have compact support, then there exists exactly one \emph{rough} $\smash{\ell_p}$-geodesic joining them.
\end{corollary}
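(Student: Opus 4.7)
My plan is to construct a canonical displacement $\ell_p$-geodesic from the unique chronological $\ell_p$-optimal coupling, and then show that any $\ell_p$-geodesic coincides with it via a gluing-and-nonbranching argument. First, by causal reversal (\autoref{Re:Causal reversal}) I may assume $\mu_0\in\scrP^\ac(\mms,\meas)$. Timelike $p$-dualizability combined with the hypothesis that every $\ell_p$-optimal coupling of $(\mu_0,\mu_1)$ is chronological, together with \autoref{Th:Optimal maps}, produces a unique $\ell_p$-optimal coupling $\pi=(\Id,T)_\push\mu_0$. Part~(i) of \autoref{Le:Geodesics plan} then lifts $\pi$ to a plan $\bdpi\in\OptTGeo_p(\mu_0,\mu_1)$, and \autoref{Re:TL Geo to geo} makes $\mu_t:=(\eval_t)_\push\bdpi$ into the desired displacement $\ell_p$-geodesic.

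For uniqueness, let $(\nu_t)_{t\in[0,1]}$ be any $\ell_p$-geodesic from $\mu_0$ to $\mu_1$ and fix $t\in(0,1)$. First I would secure $\ell_p$-optimal causal couplings $\pi_1\in\Pi_\leq(\mu_0,\nu_t)$ and $\pi_2\in\Pi_\leq(\nu_t,\mu_1)$ (immediate from \autoref{Le:Existence} if supports are compact, and by restricting $\mu_0$ along a compact exhaustion in general, using that $\ell_p(\mu_0,\nu_t)=t\,\ell_p(\mu_0,\mu_1)$ is finite by definition of an $\ell_p$-geodesic). Glue these via \autoref{Le:Gluing lemma} into $\omega\in\scrP(\mms^3)$. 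The reverse triangle inequality and the reverse Minkowski inequality (strict for $0<p<1$), combined with $\ell_p$-geodesy of $(\nu_t)$, force $(\pr_1,\pr_3)_\push\omega$ to be an $\ell_p$-optimal coupling of $(\mu_0,\mu_1)$---hence equal to $\pi$ by uniqueness---and simultaneously pin down $l(x_1,x_2)=t\,l(x_1,x_3)$, $l(x_2,x_3)=(1-t)\,l(x_1,x_3)$ for $\omega$-a.e.~$(x_1,x_2,x_3)$. Since $\pi$ is chronological, the push-up \autoref{Le:Pushup openness} propagates this to $x_1\ll x_2\ll x_3$ $\omega$-a.e., so both $\pi_1$ and $\pi_2$ are chronological.

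Applying \autoref{Th:Optimal maps} to $(\mu_0,\nu_t)$ now yields $\pi_1=(\Id,S_t)_\push\mu_0$ for a $\mu_0$-measurable map $S_t$, so that $\nu_t=(S_t)_\push\mu_0$ and $S_t(x)\in Z_t(x,T(x))$ for $\mu_0$-a.e.~$x$. Using causal closedness (\autoref{Pr:Closed}) and a measurable selection argument mirroring part~(i) of \autoref{Le:Geodesics plan}, I would then select $l$-geodesics from $x$ to $S_t(x)$ and from $S_t(x)$ to $T(x)$, concatenating them (affinely on $[0,t]$ and $[t,1]$) into a single $\gamma_x\in\TGeo(\mms)$; the plan $\bdpi':=(x\mapsto\gamma_x)_\push\mu_0$ then lies in $\OptTGeo_p(\mu_0,\mu_1)$ because its endpoint coupling is again $\pi$. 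The uniqueness assertion $\OptTGeo_p(\mu_0,\mu_1)=\{\bdpi\}$ from \autoref{Th:Uniqueness geos}, which itself relies on timelike essential nonbranching, forces $\bdpi'=\bdpi$; pushing both sides forward by $\eval_t$ yields $\nu_t=(S_t)_\push\mu_0=(\eval_t)_\push\bdpi=\mu_t$.

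The second assertion then falls out directly: if $\mu_0,\mu_1$ have compact support, the chronology hypothesis on every $\ell_p$-optimal coupling is exactly what \autoref{Cor:Equiv notions lp geo} needs to upgrade any rough $\ell_p$-geodesic to a narrowly continuous one---hence to a bona fide $\ell_p$-geodesic---and the first part supplies uniqueness. The main obstacle I anticipate lies in step two of the non-compact case, namely securing the intermediate $\ell_p$-optimal couplings $\pi_1,\pi_2$ and controlling cost semicontinuity through a narrow limit; the moment condition $l^p\le a\oplus b$ from timelike $p$-dualizability of $(\mu_0,\mu_1)$ does not obviously propagate to $(\mu_0,\nu_t)$ or $(\nu_t,\mu_1)$ along a merely $\ell_p$-geodesic (cf.~\autoref{Re:Do not have}), so a careful localization by restricting $\mu_0$ to a compact exhaustion of $\mms$ and passing to the limit appears unavoidable.
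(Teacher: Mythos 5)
Your construction of the displacement $\ell_p$-geodesic and your handling of the compact-support assertion via \autoref{Cor:Equiv notions lp geo} match the paper. For uniqueness, however, the paper takes a structurally different route: it revisits the dyadic interpolation from the proof of part~(v) of \autoref{Le:Geodesics plan}. Given an $\ell_p$-geodesic $(\nu_t)_{t\in[0,1]}$, it produces for each $n$ a plan $\bdpi^n\in\OptTGeo_p(\mu_0,\mu_1)$ representing $(\nu_t)$ at dyadic times; since every $\ell_p$-optimal coupling of $\mu_0$ and $\mu_1$ is chronological, \autoref{Th:Uniqueness geos} makes $\OptTGeo_p(\mu_0,\mu_1)$ a singleton, so all $\bdpi^n$ coincide, and the narrow continuity built into \autoref{Def:lp geo} promotes the identification $\nu_t=(\eval_t)_\push\bdpi$ from dyadic to all $t$. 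The structural advantage is that the optimal-maps and uniqueness machinery is only ever applied to the pair $(\mu_0,\mu_1)$, for which timelike $p$-dualizability is hypothesized; the intermediate dyadic pairs only need to admit \emph{some} causal $\ell_p$-optimal coupling.

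Your argument instead applies \autoref{Th:Optimal maps} to the pair $(\mu_0,\nu_t)$. That is the genuine gap you yourself flag: \autoref{Th:Optimal maps} requires $(\mu_0,\nu_t)$ to be timelike $p$-dualizable by $\pi_1$, hence requires the moment condition \eqref{Eq:Moment condition} for $(\mu_0,\nu_t)$, and \autoref{Re:Do not have} warns that this condition does not a priori propagate along a mere $\ell_p$-geodesic. But the step is avoidable: once you have $\omega\in\scrP(\mms^3)$ with $(\pr_1,\pr_3)_\push\omega=\pi$ and the affine identities $l(x_1,x_2)=t\,l(x_1,x_3)$, $l(x_2,x_3)=(1-t)\,l(x_1,x_3)$ holding $\omega$-a.e., you may run the measurable selection of \autoref{Le:Geodesics plan}~(i) fibrewise over $\omega$ (rather than over $\pi_1$) to obtain $\bdpi'\in\OptTGeo_p(\mu_0,\mu_1)$ with $(\eval_t)_\push\bdpi'=(\pr_2)_\push\omega=\nu_t$; then \autoref{Th:Uniqueness geos} applied only to $(\mu_0,\mu_1)$ gives $\bdpi'=\bdpi$ and hence $\nu_t=\mu_t$, without ever needing $\pi_1$ to be induced by a map. (The push-up appeal is also superfluous: since $t\in(0,1)$ and $l(x_1,x_3)>0$, the affine identities already give $l(x_1,x_2)>0$ and $l(x_2,x_3)>0$ directly.) After that repair your single-time gluing becomes a legitimate and slightly more economical alternative to the dyadic scheme; both of course still rely on the existence of $\ell_p$-optimal causal couplings between the interpolants, which for non-compactly supported marginals is exactly the moment-propagation subtlety you anticipate having to address by localization.
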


\begin{proof} By the first part of \autoref{Le:Geodesics plan}, there exists at least one displacement $\smash{\ell_p}$-geodesic from $\mu_0$ to $\mu_1$. By \autoref{Re:TL Geo to geo}, this is also an $\smash{\ell_p}$-geodesic.

Let $(\mu_t)_{t\in[0,1]}$ be an $\smash{\ell_p}$-geodesic from $\mu_0$ to $\mu_1$. To show its uniqueness, we revisit the proof of \ref{La:05} in \autoref{Le:Geodesics plan}. Retaining its notation, we note that since every $\smash{\ell_p}$-optimal coupling of $\mu_0$ and $\mu_1$ is chronological and by \autoref{Th:Uniqueness geos}, the constructed sequence $\smash{(\bdpi^n)_{n\in\N}}$ interpolating  $(\mu_t)_{t\in[0,1]}$ at dyadic points in $[0,1]$ is constant, say equal to $\smash{\bdpi\in\OptTGeo_p(\mu_0,\mu_1)}$. By narrow continuity of $\smash{(\mu_t)_{t\in[0,1]}}$, the latter is represented by $\bdpi$. The claim follows from \autoref{Th:Uniqueness geos}.

The claim about uniqueness of rough $\smash{\ell_p}$-geodesics is clear, since \autoref{Cor:Equiv notions lp geo} asserts every rough $\smash{\ell_p}$-geodesic from $\mu_0$ to $\mu_1$ to be an $\smash{\ell_p}$-geodesic.
\end{proof}

We do not know whether uniqueness of \emph{rough} $\smash{\ell_p}$-geodesics holds under the first set of assumptions, i.e.~without compact supports; recall  \autoref{Re:Do not have}.

\subsubsection{Good geodesics} The following adaptation of \cite[Thm.~1.2]{braun2022} is not used in our work, but is a crucial technical ingredient to develop a rich calculus in our setting as initiated in \cite{beran2023}.

\begin{theorem}[Existence of good geodesics]\label{Th:Good} Assume $\smash{\wTCD_p^e(k,N)}$. Then every pair $(\mu_0,\mu_1) = (\rho_0\,\meas,\rho_1\,\meas)\in \smash{\scrP_\comp^\ac(\mms,\meas)^2}$ of strongly timelike $p$-dualizable measures with $\rho_0,\rho_1\in\Ell^\infty(\mms,\meas)$ are joined by an $\smash{\ell_p}$-geodesic $(\mu_t)_{t\in[0,1]}$ such that
\begin{enumerate}[label=\textnormal{(\roman*)}]
\item $\mu_t$ is $\meas$-absolutely continuous with density $\rho_t$ for every $0\leq t\leq 1$, with
\begin{align*}
\sup_{t\in[0,1]} \Vert\rho_t\Vert_{\Ell^\infty(\mms,\meas)} < \infty,
\end{align*}
\item for every $0\leq t\leq 1$, there exists $\smash{\bdpi\in\OptTGeo_p(\mu_0,\mu_1)}$ such that
\begin{align*}
\scrU_N(\mu_t) \geq \sigma_{K/N}^{(1-t)}(\cost_\bdpi)\,\scrU_N(\mu_0) + \sigma_{K/N}^{(t)}(\cost_\bdpi)\,\scrU_N(\mu_1),
\end{align*}
where $K := \inf k(J(\supp\mu_0,\supp\mu_1))$.
\end{enumerate}
\end{theorem}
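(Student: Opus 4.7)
The strategy is to reduce the variable-$k$ setting to a constant lower bound along the compact causal emerald in which every interpolating measure lives, then invoke the constant-$K$ good geodesics theorem of Braun \cite[Thm.~1.2]{braun2022}. Set $E := J(\supp\mu_0,\supp\mu_1)$, compact by \autoref{Cor:K-GH}; by \autoref{Le:Intermediate pts geodesics}, any $\ell_p$-geodesic from $\mu_0$ to $\mu_1$ (and its representing plan, by \autoref{Le:Geodesics plan}) is supported in $E$. Since $k$ is lower semicontinuous and $E$ is compact, $K := \inf k(E)\in\mathbf{R}$ is finite. By the monotonicity of distortion coefficients (\autoref{Le:Properties}), applied to $k_\bdpi^\pm \geq K$ for any plan $\bdpi$ concentrated on $l$-geodesics in $E$ (in view of \eqref{Eq:Def kpi+-}), one has
\begin{align*}
\sigma_{k_\bdpi^\pm/N}^{(t)}(\cost_\bdpi) \geq \sigma_{K/N}^{(t)}(\cost_\bdpi) \qquad(0\le t\le 1).
\end{align*}
Hence the $\wTCD_p^e(k,N)$ inequality for $(\mu_0,\mu_1)$, available by strong timelike $p$-dualizability, already implies the constant-$K$ version claimed in (ii).

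To produce the geodesic itself with the density bound (i), I would follow the dyadic midpoint construction from the proof of \cite[Thm.~1.2]{braun2022}, which in our framework rests on the ingredients already established in the present paper: existence and uniqueness of chronological $\ell_p$-optimal couplings and of displacement $\ell_p$-geodesics with $\meas$-absolutely continuous endpoints (\autoref{Th:Optimal maps}, \autoref{Th:Uniqueness geos}, \autoref{Cor:Lifting nonbr}), narrow compactness of $\OptTGeo_p$ under uniformly compactly supported endpoints (\autoref{Le:Geodesics plan}\,\ref{La:06}--\ref{La:07}), and the pointwise density contraction afforded by a local $\TMCP$-type estimate implied by the constant-$K$ version of $\wTCD_p^e$ (\autoref{Pr:TCD to TMCP}, to be established in \autoref{Ch:TMCP}). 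Concretely, one iteratively builds the geodesic at dyadic times $k/2^n$, applying at each midpoint the ``Brunn--Minkowski''/$\TMCP$ pointwise estimate of the form
\begin{align*}
\rho_t(T_t(x))^{-1/N} \geq \sigma_{K/N}^{(1-t)}(l(x,T_1x))\,\rho_0(x)^{-1/N} + \sigma_{K/N}^{(t)}(l(x,T_1x))\,\rho_1(T_1x)^{-1/N},
\end{align*}
which bounds $\|\rho_t\|_{L^\infty}$ in terms of $\|\rho_0\|_{L^\infty}$, $\|\rho_1\|_{L^\infty}$, and a uniform upper bound on $\cost_\bdpi$ (finite by compactness of $E$ and \autoref{Cor:Finiteness l}). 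Since $\sigma_{K/N}^{(\cdot)}$ is bounded away from zero on the relevant range of $\cost_\bdpi$, this produces a uniform $L^\infty$-bound on the dyadic midpoints. Narrow compactness of plans and uniqueness then yield, in the limit, a single plan $\bdpi\in\OptTGeo_p(\mu_0,\mu_1)$ whose interpolated curve has $\rho_t\in L^\infty(\mms,\meas)$ with the uniform bound asserted in (i).

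The main obstacle is verifying that the dyadic density propagation, proved in \cite{braun2022} for Lorentzian length spaces with \emph{constant} $k$, carries over to our slightly more general globally hyperbolic regular length metric spacetime framework. The essential points—optimal transport maps, uniqueness and lifting of $\ell_p$-geodesics, narrow stability, and the pointwise $\TMCP$-type estimate—have all been reestablished in our setting, so the carryover is essentially mechanical. The reduction to constant $K$ on the emerald, however, must be invoked at each midpoint step: one applies the variable-$k$ hypothesis at every dyadic pair, noting that all intermediate measures remain supported in $E$, whence the constant-$K$ distortion inequality is always available and the proof of braun2022 applies almost verbatim.
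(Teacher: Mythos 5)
Your reduction to a constant curvature bound $K := \inf k(J(\supp\mu_0,\supp\mu_1))$ via compactness of the emerald and monotonicity of the distortion coefficients is correct and is indeed how item (ii) reads off from the $\wTCD_p^e(k,N)$ inequality (this is essentially \autoref{Le:var to const}). However, the heart of your argument for item (i) invokes \autoref{Th:Optimal maps}, \autoref{Th:Uniqueness geos}, and \autoref{Cor:Lifting nonbr}, and writes a pointwise density contraction along a transport map $T_t$. All of these require $\scrM$ to be \emph{timelike $p$-essentially nonbranching} — a hypothesis which \autoref{Th:Good} deliberately does \emph{not} assume, and without which there is no uniqueness of chronological $\smash{\ell_p}$-optimal couplings, no optimal map $T_t$, and no lifting theorem for general strongly timelike $p$-dualizable pairs. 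This is not a cosmetic issue: it is precisely the point the paper's proof singles out.

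The paper's actual route circumvents essential nonbranching entirely. Rather than performing the dyadic minimization over families of \emph{plans} in $\OptTGeo_p(\mu_0,\mu_1)$ (which presupposes a lifting result), the minimization is carried out over the sets $\smash{\scrZ_t^p(\mu,\nu)}$ of $t$-intermediate \emph{measures} defined in \autoref{Sub:Causality pm}, following Rajala's original scheme. Compactness of these intermediate point sets (\autoref{Le:Zt lemma II}) replaces narrow compactness of plans, propagation of strong timelike $p$-dualizability (\autoref{Le:Propagates}) replaces restriction of plans, and the limit curve is then assembled from countably many intermediate measures using antisymmetry of the causal relation on $\scrP(\mms)$ (\autoref{Th:GHy prob meas}), much as in the proof of \autoref{Cor:Cpt TGeo}. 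To repair your argument you would either have to add timelike $p$-essential nonbranching as a hypothesis — giving a weaker statement than claimed — or switch to the measure-level minimization scheme, abandoning the map-based pointwise density formula you wrote down.
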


\begin{proof} We only outline the proof; the strategy of \cite{braun2023} carries over almost verbatim. However, our framework a priori lacks a lifting theorem for general strongly timelike $p$-dualizable measures (unless $\scrM$ is timelike $p$-essentially nonbranching, cf.~\autoref{Cor:Lifting nonbr}) --- the latter has been  relied on in \cite{braun2023} quite extensively. The necessary modification is to consider certain sets of $t$-intermediate points introduced in \autoref{Sub:Causality pm} below. Following \cite{rajala} (which has inspired \cite{braun2023}), the minimization procedure from \cite{braun2023} is then performed on this set instead of sets of plans. Strong timelike $p$-dualizability propagates thanks to an analog of \autoref{Le:Propagates}, and the limit geodesic is constructed from countably many intermediate measures as in the proof of \autoref{Cor:Cpt TGeo} by using antisymmetry  (\autoref{Th:GHy prob meas}).
\end{proof}

\subsection{Pathwise characterization} Our goal now to show that $\smash{\TCD_p^e(k,N)}$ admits a pathwise characterization along $\smash{\ell_p}$-geodesics under timelike $p$-essential nonbranching. In this setting, we also show the equivalence of $\smash{\TCD_p^e(k,N)}$ and $\smash{\wTCD_p^e(k,N)}$. For constant $k$, this is due to \cite[Thm.~3.35]{braun2022}, see also \cite[Thm.~3.12]{erbar2015} and \cite[Thm.~7.1]{mccann2020}. In the smooth framework, the corresponding estimate \eqref{Eq:PW EST} is precisely the one  obtained from Raychaudhuri's equation for the transport Jacobian \cite{braunohta,mccann2020}, see also \autoref{Sub:Smooth Lorentzian}.

\begin{theorem}[Distorted displacement concavity of mean free path]\label{Th:Pathwise} Assume $\scrM$ to be timelike $p$-essentially nonbranching. Then the following are equivalent.
\begin{enumerate}[label=\textnormal{\textcolor{black}{(}\roman*\textcolor{black}{)}}]
\item\label{La:Statement 1} The condition $\smash{\TCD_p^e(k,N)}$ holds.
\item\label{La:Statement 1.5} The condition $\smash{\wTCD_p^e(k,N)}$ holds.
\item\label{La:Statement 2} For every timelike $p$-dualizable pair $\smash{(\mu_0,\mu_1) = (\rho_0\,\meas,\rho_1\,\meas)\in\scrP^\ac(\mms,\meas)^2}$, there is a displacement $\smash{\ell_p}$-geodesic from $\mu_0$ to $\mu_1$ which is represented by $\smash{\bdpi\in\OptTGeo_p(\mu_0,\mu_1)}$ such that for every $0\leq t\leq 1$, we have $(\eval_t)_\push\bdpi = \rho_t\,\meas\in \scrP^\ac(\mms,\meas)$, and $\bdpi$-a.e.~$\gamma\in\TGeo(\mms)$ satisfies
\begin{align}\label{Eq:PW EST}
\rho_t(\gamma_t)^{-1/N} \geq \sigma_{k_\gamma^-/N}^{(1-t)}(\vert\dot\gamma\vert)\,\rho_0(\gamma_0)^{-1/N} + \sigma_{k_\gamma^+/N}^{(t)}(\vert\dot\gamma\vert)\,\rho_1(\gamma_1)^{-1/N}.
\end{align}
\end{enumerate}
\end{theorem}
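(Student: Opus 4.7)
I would establish the cyclic chain $(iii) \Rightarrow (i) \Rightarrow (ii) \Rightarrow (iii)$; the middle implication is immediate from \autoref{Def:TCDe}.

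\emph{For $(iii) \Rightarrow (i)$:} taking logarithms of \eqref{Eq:PW EST} and invoking the scaling \eqref{Eq:Scaling prop sigma} recasts it as
\begin{align*}
-N^{-1}\log\rho_t(\gamma_t) \geq \rmG_t\bigl(-N^{-1}\log\rho_0(\gamma_0),\,-N^{-1}\log\rho_1(\gamma_1),\,N^{-1}k_\gamma^\pm\vert\dot\gamma\vert^2\bigr),
\end{align*}
with $\rmG_t$ the jointly convex functional of \autoref{Le:Properties}(iv). Integrating against $\bdpi$ and applying Jensen transfers the three arguments to their $\bdpi$-averages, which by \eqref{Eq:Def kpi+-} equal $-\Ent_\meas(\mu_0)/N$, $-\Ent_\meas(\mu_1)/N$, and $N^{-1}k_\bdpi^\pm\cost_\bdpi^2$. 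Exponentiating and reversing \eqref{Eq:Scaling prop sigma} recovers \eqref{Eq:Entropic displacement conv inequ}, so the same plan $\bdpi$ certifies (i).

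\emph{For $(ii) \Rightarrow (iii)$:} under timelike $p$-essential nonbranching, \autoref{Cor:Lifting nonbr} and \autoref{Th:Uniqueness geos} yield, for every timelike $p$-dualizable $(\mu_0,\mu_1) \in \scrP^\ac(\mms,\meas)^2$, a unique representing plan of the form $\bdpi = \mathfrak{T}_\sharp\mu_0$ with $\mu_t = \rho_t\meas$ for every $0 \leq t \leq 1$. My strategy is to test $\wTCD_p^e(k,N)$ against renormalized sub-plans $\bdsigma_A := \mu_0[A]^{-1}\mathfrak{T}_\sharp(\mu_0\mres A)$ for Borel $A \subset \mms$ of positive $\mu_0$-measure, and to contract $A$ around $\mu_0$-generic points. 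The restriction property of $\ell_p$-optimal couplings together with \autoref{Le:Mutually singular} ensure $\bdsigma_A \in \OptTGeo_p(\nu_0,\nu_1)$ with each $\nu_s := (\eval_s)_\sharp\bdsigma_A$ being $\meas$-absolutely continuous. Further restricting $A$ so that $\supp\nu_0\times\supp\nu_1\Subset\{l>0\}$ activates strong timelike $p$-dualizability via \autoref{Ex:Str tl dual}, so $\wTCD_p^e(k,N)$ applies; uniqueness then forces its certifying plan to coincide with $\bdsigma_A$, producing the localized inequality
\begin{align*}
\scrU_N(\nu_t) \geq \sigma_{k_{\bdsigma_A}^-/N}^{(1-t)}(\cost_{\bdsigma_A})\,\scrU_N(\nu_0) + \sigma_{k_{\bdsigma_A}^+/N}^{(t)}(\cost_{\bdsigma_A})\,\scrU_N(\nu_1).
\end{align*}

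\emph{The main obstacle is extracting \eqref{Eq:PW EST} from the localized inequality.} Setting $T_s := \eval_s\circ\mathfrak{T}$ and using the push-forward identity $(T_s)_\sharp(\mu_0\mres A) = \mu_s\mres T_s(A)$, granted by essential injectivity of $T_s$, a change-of-variables computation yields
\begin{align*}
\scrU_N(\nu_s)\,\mu_0[A]^{-1/N} = \exp\Bigl(-N^{-1}\mu_0[A]^{-1}\textstyle\int_A \log\rho_s(T_s(x))\,d\mu_0(x)\Bigr),
\end{align*}
which Lebesgue differentiation for the Radon measure $\mu_0$ sends to $\rho_s(T_s(x_0))^{-1/N}$ at $\mu_0$-a.e.\ $x_0$. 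Concurrently, setting $\gamma := \mathfrak{T}(x_0)$, \autoref{Le:LSC sigma} together with the narrow convergence $\bdsigma_A \to \delta_\gamma$ supplies $\liminf_A \sigma_{k_{\bdsigma_A}^\pm/N}^{(\cdot)}(\cost_{\bdsigma_A}) \geq \sigma_{k_\gamma^\pm/N}^{(\cdot)}(\vert\dot\gamma\vert)$—precisely the direction needed to preserve the localized inequality upon taking the liminf in $A$. Dividing by $\mu_0[A]^{1/N}$ and passing to the limit recovers \eqref{Eq:PW EST} at $\mu_0$-a.e.\ $x_0$, hence at $\bdpi$-a.e.\ $\gamma$. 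The principal technical difficulty is ensuring the Lebesgue differentiation step for the (not necessarily doubling) Radon measure $\mu_0$ on the Polish $\mms$, which can be handled via martingale convergence along countably generated partitions adapted to the local compactness of $\mms$ from \autoref{Cor:Hausdorff}.
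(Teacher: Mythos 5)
Your argument for $(iii) \Rightarrow (i)$ — logarithmic rewriting via the jointly convex $\rmG_t$, then Jensen against $\bdpi$, then exponentiating back with \eqref{Eq:Scaling prop sigma} — is exactly how the paper proves it, and your observation that $(i) \Rightarrow (ii)$ is immediate is also what the paper uses. For the remaining implication $(ii) \Rightarrow (iii)$, your strategy of testing $\wTCD_p^e(k,N)$ against restricted sub-plans $\bdsigma_A$, invoking uniqueness from \autoref{Cor:Lifting nonbr} / \autoref{Th:Uniqueness geos} to identify the certifying plan with $\bdsigma_A$, and then differentiating as $A$ shrinks is the same localization scheme the paper defers to (it cites the analogous argument in an earlier companion work and only sketches the idea), so your reconstruction of that step is in the right spirit.

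There is, however, a genuine gap: you prove $(ii) \Rightarrow (iii)$ only under the hypothesis $\supp\mu_0\times\supp\mu_1\Subset\{l>0\}$ — this is what allows you to activate strong timelike $p$-dualizability of $(\nu_0,\nu_1)$ via \autoref{Ex:Str tl dual} and apply $\wTCD_p^e(k,N)$ at all. But item $(iii)$ asserts \eqref{Eq:PW EST} for \emph{every} timelike $p$-dualizable pair, and mere timelike $p$-dualizability does not give any inclusion of the support product inside $\{l>0\}$. The passage from the compactly chronological case to the general case requires an additional step: one fixes a timelike $p$-dualizing coupling $\pi$, decomposes $\supp\pi$ into countably many ``rectangles'' $A_i\times B_i$ compactly contained in $\{l>0\}$, applies your argument to the conditional pieces $\pi\mres(A_i\times B_i)$ after suitable disjointification, and then reassembles using \autoref{Le:Mutually singular} so the conditional $t$-slices stay mutually singular and the pathwise inequalities patch together. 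Without this, the implication is incomplete. Two smaller technical points you should also flag explicitly: the Lebesgue-differentiation step requires $\log\rho_t\circ T_t\in\Ell^1(\mms,\mu_0)$, which is most cleanly obtained by first reducing to $\rho_0,\rho_1\in\Ell^\infty(\mms,\meas)$ and removing the truncation at the end; and shrinking $A$ alone does not automatically make $\supp\nu_1$ small — one must localize the time-$1$ end as well, either by an explicit restriction to $\eval_1^{-1}(B)$ or by invoking Luzin continuity of $T_1$ around a $\mu_0$-density point of the Luzin set, and the latter needs to be built into the martingale/differentiation argument.
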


\begin{proof} It is evident that \ref{La:Statement 1} implies \ref{La:Statement 1.5}.

Next, we show \ref{La:Statement 2} implies \ref{La:Statement 1}. Let $\smash{(\mu_0,\mu_1)=(\rho_0\,\meas,\rho_1\,\meas)\in \Prob_\comp^\ac(\mms,\meas)^2}$, and assume $\mu_0 \in\Dom(\Ent_\meas)$. Furthermore, letting $\smash{\bdpi\in\OptTGeo_p(\mu_0,\mu_1)}$ be as given by \ref{La:Statement 2} as well as $0<t<1$, compactness of $J(\supp\mu_0,\supp\mu_1)$ implies the existence of $K\leq 0$ with $k\geq K$ on $J(\supp\mu_0,\supp\mu_1)$. By \eqref{Eq:PW EST}, $\bdpi$-a.e.~$\gamma\in\TGeo(\mms)$ thus satisfies the estimate
\begin{align*}
\rho_t(\gamma_t) \leq \sup \sigma_{K/N}^{(1-t)}(l_+(x,y))^{-N}\,\rho_0(\gamma_0).
\end{align*}
The supremum is taken over all  $x\in \supp\mu_0$ and $y\in\supp\mu_1$. In fact, it is finite by \autoref{Le:Properties} as well as compactness of $\supp\mu_0$ and $\supp\mu_1$. In particular, we obtain $\mu_t\in\Dom(\Ent_\meas)$. Letting $\rmG_t$ denote the convex function from \autoref{Le:Properties}, from \eqref{Eq:PW EST} again and Jensen's inequality we get
\begin{align*}
-\frac{1}{N}\Ent_\meas(\mu_t) &= -\frac{1}{N}\int \log\rho_t(\gamma_t)\d\bdpi(\gamma)\\
&\geq \int \rmG_t\Big[\!-\!\frac{1}{N}\log\rho_0(\gamma_0),-\frac{1}{N}\log\rho_1(\gamma_1), k_\gamma\,\vert\dot\gamma\vert^2\Big]\d\bdpi(\gamma)\\
&\geq \rmG_t\Big[-\frac{1}{N}\Ent_\meas(\mu_0),-\frac{1}{N}\Ent_\meas(\mu_1), \frac{1}{N}\,k_\bdpi\,\cost_\bdpi^2\Big].
\end{align*}
Exponentiating both sides yields the desired inequality.

The case $\mu_1\in\Dom(\Ent_\meas)$ is argued analogously. If neither $\mu_0$ nor $\mu_1$ belong to $\Dom(\Ent_\meas)$, necessarily $\scrU_N(\mu_0) = \scrU_N(\mu_1)=0$ by \eqref{Eq:UN Jensen}, in which case \eqref{Eq:Entropic displacement conv inequ} is clear.

The implication from \ref{La:Statement 1.5} to \ref{La:Statement 2} if the pair $(\mu_0,\mu_1)$ from  \ref{La:Statement 2} satisfies 
\begin{align*}
\supp\mu_0\times\supp\mu_1 \subset\{l>0\}
\end{align*} 
is shown in the same way as in the proof of \cite[Prop.~3.38]{braun2022} with the modifications described in the proof of \cite[Prop.~3.40]{braun2022}, employing  \autoref{Le:Mutually singular} and \autoref{Le:Properties}. 

The general case is argued as for e.g.~\cite[Prop.~3.38]{braun2022} or \cite[Thm. 7.1]{mccann2020}. The idea is to decompose the support of a timelike $p$-dualizing coupling into countably many rectangles compactly contained in $\{l>0\}$, and to restrict $\pi$ to these, modulo successive removal of overlapping supports. Then we apply \ref{La:Statement 2} to these restrictions. Taking an  appropriate convex combination of the plans constructed in that way yields the desired plan.  This works since  \autoref{Le:Mutually singular} ensures that the individual $t$-slices are mutually singular for every $0\leq t\leq 1$. Thus, the intersection of the supports of any two $\meas$-densities of these restrictions is $\meas$-negligible, and hence the local inequality \eqref{Eq:PW EST} globalizes straightforwardly.
\end{proof}

\begin{remark}[Parameter independence of negligible set]\label{Re:LIPSCHITZ} By imitating the proof of \cite[Cor.~9.5]{cavalletti2021}, we get the following  prima facie stronger, but --- under timelike $p$-essential nonbranching of $\scrM$ --- equivalent version of \ref{La:Statement 2} in \autoref{Th:Pathwise}. For every timelike $p$-dualizable pair $(\mu_0,\mu_1)$ as in \ref{La:Statement 2} there exists $\smash{\bdpi\in \OptTGeo_p(\mu_0,\mu_1)}$ with the following properties. For every $0\leq t\leq 1$, we have $(\eval_t)_\push\bdpi=\rho_t\,\meas\in \scrP^\ac(\mms,\meas)$, and there exist $\meas$-versions of the respective $\meas$-densities such that for $\bdpi$-a.e.~$\gamma\in \TGeo(\mms)$, $\rho_t(\gamma_t)$ depends locally Lipschitz continuously on $0<t<1$, and \eqref{Eq:PW EST} holds for every $0\leq t\leq 1$.

Thus, the exceptional set in \eqref{Eq:PW EST} can be chosen to be independent of $t$.
\end{remark}

\begin{remark}[Variants]\label{Re:Verify?} The proof of \autoref{Th:Pathwise} shows that to verify either of its inherent conditions, if $\scrM$ is timelike $p$-essentially nonbranching it suffices to check any of the following properties for every $\mu_0,\mu_1\in\scrP_\comp^\ac(\mms,\meas)$ with
\begin{align}\label{Eq:No pro}
\supp\mu_0\times\supp\mu_1 \subset\{l>0\}.
\end{align}
\begin{itemize}
\item The existence of $(\mu_t)_{t\in[0,1]}$ and $\bdpi$ as in \autoref{Def:TCDe} certifying \eqref{Eq:Entropic displacement conv inequ}.
\item The existence of $\bdpi$ as in \ref{La:Statement 2} in \autoref{Th:Pathwise}.
\end{itemize}

However, unlike (strong) timelike $p$-dualizability, cf.~\autoref{Le:Propagates}, the condition \eqref{Eq:No pro} does not propagate along $\ell_p$-geodesics.
\end{remark}

\subsection{Local-to-global property} Now we prove the local-to-global property of the entropic timelike curvature-dimension condition. Roughly speaking, it states that to verify synthetic timelike Ricci curvature lower bounds, it suffices to check this condition locally (which is clearly true in the smooth case if such bounds are interpreted in terms of the ordinary Bakry--\smash{Émery}--Ricci tensor).

Recall \autoref{Def:Local TCD} for the condition $\smash{\TCD_{p,\loc}^e(k,N)}$. 


\begin{theorem}[Globalization]\label{Th:Local to global} If $\scrM$ is timelike $p$-essentially nonbranching, the properties $\smash{\TCD_{p,\loc}^e(k,N)}$ and $\smash{\TCD_p^e(k,N)}$ are equivalent.
\end{theorem}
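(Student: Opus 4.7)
The easy direction $\TCD_p^e(k,N) \Rightarrow \TCD_{p,\loc}^e(k,N)$ is essentially immediate: whenever two compactly supported measures $\mu_0,\mu_1 \in \scrP_\comp^\ac(\mms,\meas)$ satisfy $\supp\mu_0\times\supp\mu_1 \subset \{l>0\}$, the pair is strongly timelike $p$-dualizable by \autoref{Ex:Str tl dual}, and the global condition supplies both the $\ell_p$-geodesic and the distortion inequality required by \autoref{Def:Local TCD}. For the substantive direction $\TCD_{p,\loc}^e(k,N) \Rightarrow \TCD_p^e(k,N)$, the plan is to argue pathwise via \autoref{Th:Pathwise}: under timelike $p$-essential nonbranching, it suffices to establish the pointwise inequality \eqref{Eq:PW EST} $\bdpi$-a.e.~along a dynamical plan $\bdpi \in \OptTGeo_p(\mu_0,\mu_1)$ representing the relevant displacement $\ell_p$-geodesic. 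Moreover, by \autoref{Re:Verify?}, it is enough to treat timelike $p$-dualizable pairs $(\mu_0,\mu_1)$ with $\supp\mu_0\times\supp\mu_1 \subset \{l>0\}$, for which \autoref{Th:Uniqueness geos} provides a unique such $\bdpi$.

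Next I would reformulate \eqref{Eq:PW EST} as a distributional differential inequality. Setting $v_\gamma(t) := \rho_t(\gamma_t)^{-1/N}$, Sturm comparison together with the ODE characterization recalled in \autoref{Re:Indep E} show that \eqref{Eq:PW EST} on $[0,1]$ is equivalent to
\begin{equation*}
v_\gamma''(t) + \frac{k(\gamma_t)\,\vert\dot\gamma\vert^2}{N}\,v_\gamma(t) \leq 0
\end{equation*}
in the distributional sense on $(0,1)$, together with the obvious boundary values. Since such an ODI is a local property of $v_\gamma$, it suffices to verify, for $\bdpi$-a.e.~$\gamma$, that for each $\tau \in (0,1)$ there exists $\delta_\tau > 0$ such that the pathwise distortion inequality for the restricted plan $(\Restr_s^t)_\push \bdpi$ holds whenever $\tau \in [s,t] \subset [\tau-\delta_\tau,\tau+\delta_\tau]$.

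The localization of the ODI to short subintervals is where the hypothesis $\TCD_{p,\loc}^e(k,N)$ enters. The neighborhoods $\{U_o\}_{o\in\supp\meas}$ witnessing \autoref{Def:Local TCD} cover $\supp\meas$, and by \autoref{Cor:Cpt TGeo} the support $\supp\bdpi \subset \TGeo(\mms)$ is compact. A Lebesgue number argument on the compact image of $\supp\bdpi \times [0,1]$ under evaluation produces a uniform $\delta_0 > 0$ such that every segment $\gamma_{[s,t]}$ with $t-s\leq \delta_0$ lies inside some $U_o$. Subdividing $[0,1]$ with mesh $\leq \delta_0$ and partitioning $\supp\bdpi$ into finitely many Borel pieces $G^1,\dots,G^n$ each mapped by $(\eval_s,\eval_t)$ into a rectangle $U_o \times U_o \cap \{l>0\}$, the restricted, renormalized plans $\bdpi^j := (\Restr_s^t)_\push(\bdpi\mres G^j)/\bdpi[G^j]$ have marginals whose supports fit into a common $U_o$ and satisfy \eqref{Eq:Subset condition}; timelike $p$-essential nonbranching together with \autoref{Le:Geodesics plan}\ref{La:03}--\ref{La:04} and \autoref{Th:Uniqueness geos} guarantee these restricted marginals are $\meas$-absolutely continuous (with densities supplied by \autoref{Th:Uniqueness geos}) and that the pieces remain $\ell_p$-optimal. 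The local hypothesis now supplies \eqref{Eq:Entropic displacement conv inequ} on $[s,t]$ for each piece, which the pathwise equivalence in \autoref{Th:Pathwise}, applied locally, converts into the pathwise inequality on $[s,t]$.

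The main obstacle is the bookkeeping required to reassemble these fragmentary pathwise statements into a single $\bdpi$-a.e.~pathwise inequality on $[0,1]$. Two ingredients handle this. First, \autoref{Le:Mutually singular} ensures that the $t$-slices of the pieces $\bdpi^j$ remain mutually singular for every interior $t$, so that densities combine additively and the local pathwise bounds extend from each $G^j$ to $\bdpi$-a.e.~$\gamma$ without overlap pathologies. Second, once the ODI $v_\gamma'' + N^{-1}k(\gamma_t)\vert\dot\gamma\vert^2 v_\gamma \leq 0$ has been verified in a neighborhood of every $\tau \in (0,1)$, the representation formula recalled after \eqref{Eq:1d Green's function} (cf.~\cite[Prop.~3.8]{ketterer2015}) together with Sturm comparison globalize it to \eqref{Eq:PW EST} on all of $[0,1]$. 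Invoking \autoref{Th:Pathwise} one final time yields $\TCD_p^e(k,N)$.
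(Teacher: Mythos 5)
Your proposal follows essentially the same route as the paper's proof: reduce to the pathwise characterization via \autoref{Th:Pathwise} and \autoref{Re:Verify?}, cover the (compact, since WLOG $\mms = J(\supp\mu_0,\supp\mu_1)$) region by local neighborhoods from \autoref{Def:Local TCD}, use equicontinuity of $l$-geodesics to find a uniform time-window $\delta$ for which restrictions $\gamma\big\vert_{[s,t]}$ stay inside a single $U_i$, partition the restricted plan into mutually singular pieces and apply the local hypothesis to each, then combine via \autoref{Le:Mutually singular} and globalize the resulting differential inequality on subintervals to all of $(0,1)$. This is exactly the structure of the paper's argument.

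There is one technical gap worth flagging. After you have, for each fixed short subinterval $[s,t]$, a pathwise distortion estimate holding for $\bdpi$-a.e.~$\gamma$, the exceptional set a priori depends on the pair $(s,t)$ (and on $r\in[0,1]$ within it). Reassembling these into a single $\bdpi$-a.e.~statement — which is what you need before you can read \eqref{Eq:PW EST} as a distributional ODI for the fixed function $v_\gamma(t)=\rho_t(\gamma_t)^{-1/N}$ — requires more than just dyadic bookkeeping: one needs to choose $\meas$-versions of the interpolating densities $\rho_t$ for which $t\mapsto\rho_t(\gamma_t)$ is locally Lipschitz on $(0,1)$ for $\bdpi$-a.e.~$\gamma$, so that the inequality verified at countably many parameters propagates to a parameter-uniform statement. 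The paper addresses this explicitly by invoking the regularity argument of \cite[Cor.~9.5]{cavalletti2021} (cf.\ \autoref{Re:LIPSCHITZ}) before passing to the ODI; your write-up jumps from the local windows directly to the ODI, implicitly assuming the exceptional set is already uniform in the time parameter, which is the nontrivial point. Everything else in your sketch (the easy direction, the Lebesgue-number/equicontinuity step, the restriction and mutual-singularity bookkeeping, the Sturm comparison and Green's function representation) matches the paper.
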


\begin{proof} We only need to prove the forward implication. Let $\mu_0,\mu_1 \in \scrP_\comp^\ac(\mms,\meas)$ satisfy \eqref{Eq:No pro}, i.e.~there exists $r>0$ such that 
\begin{align*}
\supp\mu_0 \times \supp\mu_1 \subset \{l\geq r\}.
\end{align*}
By \autoref{Re:Verify?}, it suffices to verify the existence of a plan $\smash{\bdpi\in\OptTGeo_p(\mu_0,\mu_1)}$ obeying  \ref{La:Statement 2} in \autoref{Th:Pathwise} to obtain $\smash{\TCD_p^e(k,N)}$. By restriction, truncation, and an eventual limit argument, we may and  will assume $\mu_0,\mu_1\in\Dom(\Ent_\meas)$. Recall from Theorems \ref{Th:Optimal maps} and \ref{Th:Uniqueness geos} and \autoref{Cor:Lifting nonbr} that under  $\smash{\TCD_{p,\loc}^e(k,N)}$,  chronological $\smash{\ell_p}$-optimal couplings and $\smash{\ell_p}$-geodesics [sic] between strongly timelike $p$-dualizable pairs with endpoints in $\Dom(\Ent_\meas)$ are indeed  unique, and consist of $\meas$-absolutely continuous measures. This is used several times in this proof without further note.

For notational convenience, up to replacing $\mms$ by $J(\supp\mu_0,\supp\mu_1)$ we may and will assume compactness of $\mms$ (which does not counteract \autoref{Re:Restriction to subsets}). Owing to \eqref{Eq:kappan}, \autoref{Le:Properties}, and Levi's monotone convergence theorem, we  may and will assume continuity of $k$ on $\mms$.  Then there exist $n\in\N$, $0 < \lambda <\diam \mms$, mutually disjoint Borel sets $L_1,\dots,L_n \subset\mms$ with nonzero $\meas$-measure, and relatively open sets $U_1,\dots, U_n\subset \mms$ certifying \autoref{Def:Local TCD} with $\smash{\sfB^\met(L_i,\lambda) \subset U_i}$ for every $i = 1,\dots,n$; cf.~\cite[Thm.~5.1]{bacher2010}.  Finally, let $G_r$ be the set of all $\gamma\in\TGeo(\mms)$ with $\gamma_0\in \supp\mu_0$ and $\gamma_1\in\supp\mu_1$ with $\smash{l(\gamma_0,\gamma_1)\geq r}$ as defined in \autoref{Cor:Cpt TGeo}. The latter states compactness of $G_r$ in $\Cont([0,1];\mms)$. In particular, Arzelà--Ascoli's theorem implies uniform equicontinuity of $G_r$, thus there exists $\delta > 0$ such that if $0 \leq s,t\leq 1$ obey $\vert s-t\vert < \delta$, every $\gamma\in G_r$ obeys $\met(\gamma_s,\gamma_t)< \lambda$.

Let $\smash{\bdpi\in \OptTGeo_p(\mu_0,\mu_1)}$ be the unique timelike $\smash{\ell_p}$-optimal dynamical plan from $\mu_0$ to $\mu_1$. Let $\rho_t$ denote the $\meas$-density of $(\eval_t)_\push\bdpi$ for every $0\leq t\leq 1$. Fix $0< s < t < \delta$. Given any $0\leq r\leq 1$, we define 
\begin{align*}
\bdsigma &:= (\Restr_s^t)_\push\bdpi,\\
\nu_r &:= (\eval_r)_\push\bdsigma = \varrho_r\,\meas,
\end{align*}
and recall that $\bdsigma$ is the only element of $\smash{\OptTGeo_p(\nu_0,\nu_1)}$. Furthermore, given any $i = 1,\dots,n$ and any $0\leq r\leq 1$ we define
\begin{align*}
\bdsigma^i &:= (\eval_0)_\push\bdsigma[L_i]^{-1}\,\bdsigma\mres (\eval_0)^{-1}(L_i)\\
\nu_r^i &:= (\eval_r)_\push\bdsigma^i = \varrho_r^i\,\meas,
\end{align*}
where here and henceforth, we ignore indices $i$ with $(\eval_0)_\push\bdsigma[L_i]=0$. By restriction, cf.~\autoref{Le:Geodesics plan}, $\smash{\bdsigma^i}$ is the only element of $\smash{\OptTGeo_p(\nu_0^i,\nu_1^i)}$. In particular, the pair $\smash{(\nu_0^i,\nu_1^i)}$ is timelike $p$-dualizable, and $\smash{\bdsigma^i}$ is concentrated on $l$-geodesics that do not leave $U_i$ by our choice of $s$ and $t$. Consequently, by $\smash{\TCD_{p,\loc}^e(k,N)}$, a localization argument and relaxation of the condition \eqref{Eq:Subset condition} to mere timelike $p$-dualizability as outlined in the proof of \autoref{Th:Pathwise} --- making  \autoref{Def:Local TCD} applicable to  $\smash{(\nu_0^i,\nu_1^i)}$ by construction ---, for every $i=1,\dots,n$ and every $0\leq r\leq 1$, $\smash{\bdsigma^i}$-a.e. $\alpha\in\TGeo(\mms)$ satisfies
\begin{align}\label{Eq:From}
\varrho_r^i(\alpha_r)^{-1/N} \geq  \sigma_{k_\alpha^-/N}^{(1-r)}(\vert\dot\alpha\vert)\,\varrho_0^i(\alpha_0)^{-1/N} + \sigma_{k_\alpha^+/N}^{(r)}(\vert\dot\alpha\vert)\,\varrho_1^i(\alpha_1)^{-1/N}.
\end{align}
We point out that the exceptional set depends on $i$ and $r$.

Now observe that $\smash{\bdsigma^1,\dots,\bdsigma^n}$ are mutually singular. For every $0 < r < 1$, since $\smash{\nu_t}$ is $\meas$-absolutely continuous,  $\smash{\nu_r^1,\dots,\nu_r^n}$ are thus mutually singular by \autoref{Le:Mutually singular}. Since $\smash{\nu_0^1,\dots,\nu_0^n}$ are also  mutually singular by construction, so are $\smash{\nu_1^1,\dots,\nu_1^n}$ since every chronological $\smash{\ell_p}$-optimal coupling is induced by a map by \autoref{Th:Optimal maps}. In particular, for every $0\leq r\leq 1$ and every $i,j=1,\dots,n$ with $i\neq j$ we have
\begin{align*}
\meas\big[\varrho_r^i > 0,\, \varrho_r^j > 0\big] = 0.
\end{align*}
Hence, by \eqref{Eq:From} for every $0\leq r\leq 1$, $\smash{\bdsigma}$-a.e.~$\alpha\in\TGeo(\mms)$ satisfies
\begin{align*}
\varrho_r(\alpha_r)^{-1/N} \geq \sigma_{k_\alpha^-/N}^{(1-r)}(\vert\dot\alpha\vert)\,\varrho_0(\alpha_0)^{-1/N} + \sigma_{k_\alpha^+/N}^{(r)}(\vert\dot\alpha\vert)\,\varrho_1(\alpha_1)^{-1/N}.
\end{align*}
In other words, for every $0\leq r\leq 1$, $\smash{\bdpi}$-a.e.~$\gamma\in \TGeo(\mms)$ obeys
\begin{align}\label{Eq:From II}
\begin{split}
\rho_{\tau(r)}(\gamma_{\tau(r)})^{1/N} &\geq \sigma_{k_{\gamma\circ\tau}^-/N}^{(1-\tau(r))}((t-s)\,\vert\dot\gamma\vert)\,\rho_s(\gamma_s)^{-1/N}\\
&\qquad\qquad + \sigma_{k_{\gamma\circ\tau}^+/N}^{(\tau(r))}((t-s)\,\vert\dot\gamma\vert)\,\rho_t(\gamma_t)^{-1/N},
\end{split}
\end{align}
where $\tau\colon [0,1]\to [0,\delta]$ is defined by $\tau(r) := (1-r)\,s + r\,t$. We observe that the exceptional set still depends on $r$.

From here, we only outline the argument of obtaining  \eqref{Eq:PW EST}, leaving out technicalities and referring to  \cite{braunc2,cavalletti2021, ketterer2017} for details. From \eqref{Eq:From II}, following the proof of \cite[Cor.~9.5]{cavalletti2021}, see also \autoref{Re:LIPSCHITZ}, we get the existence of nonrelabeled $\meas$-versions of $\rho_{\tau(r)}$, where $0\leq r\leq 1$, such that $\bdpi$-a.e.~$\gamma\in\TGeo(\mms)$ satisfies \eqref{Eq:From II} for every $0\leq r\leq 1$, hence the exceptional set in \eqref{Eq:From II} can be chosen to be independent of $r$. Consequently, for $\bdpi$-a.e.~$\gamma\in\TGeo(\mms)$ the function $h\colon[s,t]\to\R$ defined by $h(u) := \rho_u(\gamma_u)$ is a $\CD(\kappa,N+1)$ density in an evident variable adaptation of \cite[Def.~A.1]{cavalletti2021}, where $\kappa\colon [s,t]\to M$ is given by $\kappa(u) := k(\gamma_u)$. By \autoref{Le:Properties}, it is a $\CD(K,N+1)$ density for some $K\in\R$ independent of $\gamma$, thus twice differentiable at $\smash{\Leb^1}$-a.e.~$u\in[s,t]$. Up to removing another $\smash{\Leb^1}$-negligible set, by a variable version of \cite[Lem.~A.3]{cavalletti2021}, cf.~\cite[Cor.~3.13]{ketterer2017}, for $\smash{\Leb^1}$-a.e.~$u\in[s,t]$ we obtain
\begin{align*}
\frac{\rmd^2}{\rmd u^2} h(u)^{1/N}  \leq - \frac{\vert \dot\gamma\vert^2}{N}\,\kappa(u)\,h(u)^{1/N}.
\end{align*}

This differential inequality globalizes straightforwardly to $\smash{\Leb^1}$-a.e.~$u\in [0,1]$ by covering $[0,1]$ with intervals $[s,t]$, where $s$ and $t$ are as initially chosen above, leading to \eqref{Eq:PW EST} again  by  \cite[Lem.~A.3]{cavalletti2021} and  \cite[Cor.~3.13]{ketterer2017}.
\end{proof}

\begin{remark}[{About chronological $\smash{\ell_p}$-geodesy \cite[Def.~3.44]{braun2023}}] As a byproduct, \autoref{Th:Local to global} shows the condition of chronological $\smash{\ell_p}$-geodesy of $\smash{\scrP_\comp^\ac(\mms,\meas)}$ \cite[Def.~3.44]{braun2023} in the local-to-global property \cite[Thm.~3.45]{braun2023} is in fact redundant.
\end{remark}

\section{Examples}\label{Ch:Examples}

Now we discuss examples satisfying our variable timelike curvature-dimension condition, inspired by corresponding results for constant $k$ \cite{braun2023,braunc1, braunohta,mccann2020, mondinosuhr2022} and the recent observation \cite{mccann+}.

In all cases, it is clear that the induced structure constitutes a metric measure spacetime $\scrM$ satisfying \autoref{Ass:REG}, cf.~e.g.~\cite[Sec.~2.2]{braunc1} or \cite[Thm.~5.16]{kunzinger2018}.

\subsection{Timelike convergence condition and stability}\label{Sub:Smooth Lorentzian} 

\subsubsection{High regularity Lorentzian spacetimes}\label{Sub:High} As a warm-up, let us first discuss the most regular case which is well-established by now for constant $k$.  \autoref{Th:Strong energ} below includes the famous strong energy condition of Hawking and Penrose for solutions to the Einstein equations  --- it is also termed \emph{timelike convergence condition} in the physics literature --- synthetic versions of which have been initiated in   \cite{cavalletti2020} following   \cite{mccann2020,mondinosuhr2022} and studied further in \cite{braun2022, mccann+}. However, even if the function $k$ therein is uniformly bounded from below, \autoref{Th:Strong energ} generally entails a quantitatively better statement.

Let $g$ be a fixed smooth, globally hyperbolic Lorentzian metric with signature $+,-,\dots,-$ on a smooth,  connected,  noncompact topological manifold $\mms$ of dimension $\smash{n\in\N_{\geq 2}}$. We write $\vert v\vert^2 := g(v,v)$ for $v\in T\mms$, but stress the usual danger of confusion that the quantity $\vert v\vert^2$ may be negative despite the square.  Let $V\colon \mms \to \R$ be twice continuously differentiable. Then, given any $N> 0$ with $N\neq n$, consider the \emph{Bakry--Émery--Ricci tensor} 
\begin{align*}
\Ric^{N,V} := \Ric + \Hess V - \frac{1}{N-n}\,\rmd V\otimes\rmd V
\end{align*}
associated with the measure $\smash{\meas := \rme^{-V}\,\vol}$. Here $\vol$ denotes the Lorentzian volume measure associated with $g$. By convention, we set $\smash{\Ric^{n,V} := \Ric}$, i.e.~we  assume $V$ to be  constant in the case $N=n$. 

Given any function $k\colon \mms\to \R$, we will say 
\begin{align}\label{Eq:Inter TL Dir}
\Ric^{N,V}\geq k\quad\textnormal{in all timelike directions}
\end{align}
if $\smash{\Ric^{N,V}(v,v)\geq k(x)\,\vert v\vert^2}$ for every $x\in\mms$ and every $v\in T_x\mms$ with  $\vert v\vert^2>0$.

Although the argument for \autoref{Th:Strong energ} is well-established, we give some more details to familiarize the reader  with our variable setting, the way  that the distortion coefficients in \autoref{Def:TCDe} arise from basic ODE theory, etc.

\begin{theorem}[Equivalence with the timelike convergence condition]\label{Th:Strong energ} Let $k\colon \mms\to \R$ be a continuous function and $N>0$. Then
\begin{align}\label{Eq:TL Ric}
\Ric^{N,V} \geq k\quad\textnormal{\textit{in all timelike directions}}
\end{align}
if and only if $N\geq n$ and the $\smash{\wTCD_p^e(k,N)}$ condition holds for every $0<p<1$.

Furthermore, the preceding equivalence is still valid  if $\smash{\wTCD_p^e(k,N)}$ therein is replaced by $\smash{\TCD_p^e(k,N)}$.
\end{theorem}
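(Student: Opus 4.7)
The plan is to treat the two implications separately, starting with the forward direction. Assume $\Ric^{N,V} \geq k$ in all timelike directions and (automatically) $N \geq n$. Smooth globally hyperbolic spacetimes are timelike nonbranching --- geodesics are uniquely determined by their initial position and velocity --- so in particular timelike $p$-essentially nonbranching, and hence \autoref{Th:Pathwise} applies. I would aim to verify the pathwise inequality \eqref{Eq:PW EST} for every timelike $p$-dualizable pair $(\mu_0,\mu_1) = (\rho_0\meas,\rho_1\meas)$. By the Lorentzian optimal transport theory of McCann \cite{mccann2020}, the unique chronological $\ell_p$-optimal coupling is induced by a map, and the unique displacement $\ell_p$-geodesic takes the form $\mu_t = (T_t)_\push\mu_0$ with $t \mapsto T_t(x)$ an $l$-geodesic for $\mu_0$-a.e.~$x$. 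Smoothness of $g$ gives differentiability $\mu_0$-a.e.~of the transport, whence the Monge--Ampère-type identity $\rho_0(x) = \rho_t(T_t(x))\,\jmath_t(x)$ for the Jacobian $\jmath_t := \vert{\det}\,\mathrm{D}T_t\vert$ computed with respect to $\meas = \rme^{-V}\vol$.

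The key analytic step is to plug the Bakry--\smash{\'E}mery--Ricci bound into Raychaudhuri's equation for $\jmath_t$ along each transport geodesic, which yields
\begin{equation*}
(\jmath_t^{1/N})''(t) + \frac{(k\circ T_t)\,\vert T_t'\vert^2}{N}\,\jmath_t^{1/N}(t) \leq 0\quad\mu_0\text{-a.e.},
\end{equation*}
exactly the ODI outlined in the ``Motivation of \autoref{Def:TCDe}'' paragraph. Sturm's comparison theorem applied to the linear ODE \eqref{Eq:ODE u} with $\kappa := k_\gamma^\pm/N$ and boundary data $\smash{v(0)=\jmath_0^{1/N}}$, $\smash{v(1)=\jmath_1^{1/N}}$ on each fibre then gives
\begin{equation*}
\jmath_t^{1/N}(x) \geq \sigma_{k_\gamma^-/N}^{(1-t)}(\vert\dot\gamma\vert)\,\jmath_0^{1/N}(x) + \sigma_{k_\gamma^+/N}^{(t)}(\vert\dot\gamma\vert)\,\jmath_1^{1/N}(x)\quad \mu_0\text{-a.e.},
\end{equation*}
with $\gamma := T_\cdot(x)$. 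Translating this into densities via the Jacobian equation produces \eqref{Eq:PW EST}. By \autoref{Th:Pathwise} this verifies $\TCD_p^e(k,N)$, and a fortiori $\wTCD_p^e(k,N)$, for every $0<p<1$.

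For the converse, I would argue by a localization and second-order expansion around an arbitrary point $x_0\in\mms$ and an arbitrary future-directed timelike unit vector $v\in T_{x_0}\mms$. Construct a one-parameter family of smooth, compactly supported probability densities $\smash{\mu_0^\varepsilon,\mu_1^\varepsilon\in\scrP_\comp^\ac(\mms,\meas)}$ concentrated in $\meas$-small normal neighborhoods of $x_0$ and of $\exp_{x_0}(\varepsilon v)$ respectively, with $\smash{\mu_1^\varepsilon}$ obtained by pushing $\smash{\mu_0^\varepsilon}$ along the geodesic flow of the appropriate potential so that the pair is strongly timelike $p$-dualizable (\autoref{Ex:Str tl dual}) for $\varepsilon$ sufficiently small and their optimal transport along the $l$-geodesic emanating from $v$ is explicit. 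Applying the (weak) $\smash{\wTCD_p^e(k,N)}$ inequality \eqref{Eq:Entropic displacement conv inequ} at $t=1/2$ and Taylor-expanding both sides to second order in $\varepsilon$ --- the distortion coefficient side contributes $k(x_0)\,\varepsilon^2\vert v\vert^2/(8N)$ (via the small-argument expansion of $\sigma_{\kappa/N}^{(1/2)}$ visible in \autoref{Ex:Constant pot}), while the entropy side contributes $\Ric^{N,V}(v,v)\,\varepsilon^2/(8N)$ from the Jacobi-field expansion of the transport Jacobian --- yields $\Ric^{N,V}(v,v)\geq k(x_0)\,\vert v\vert^2$ at the limit. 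Since $x_0$ and $v$ are arbitrary and $\meas$ has full support, this gives \eqref{Eq:TL Ric}. The necessity of $N\geq n$ follows by running the same expansion along a direction in which the Jacobian behaves like $\varepsilon^n$: if $N<n$, then $\jmath_t^{1/N}$ fails to be concave on arbitrarily small scales, contradicting \eqref{Eq:Entropic displacement conv inequ} even when $k$ is taken very negative; alternatively, one may invoke the Bishop--Gromov inequality \autoref{Th:Bishop Gromov} and the known $n$-dimensional volume growth of infinitesimal causal balls.

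The main obstacle will be the converse direction's second-order matching: unlike the constant-$k$ case, one must track the pointwise value $k(x_0)$ emerging in the limit from the superposition averages $k_\bdpi^\pm$, which requires uniform control of the support of $\smash{\bdpi^\varepsilon\in\OptTGeo_p(\mu_0^\varepsilon,\mu_1^\varepsilon)}$ as $\varepsilon\downarrow 0$. Continuity of $k$ ensures that these averages converge to $k(x_0)$ at the expected rate. A secondary issue is extracting the strong form \eqref{Eq:TL Ric} from the \emph{weak} hypothesis: since our test measures are strongly timelike $p$-dualizable, $\smash{\wTCD_p^e(k,N)}$ is already sufficient for the expansion, and the final equivalence with $\smash{\TCD_p^e(k,N)}$ then follows tautologically from the forward direction, which produces the stronger property.
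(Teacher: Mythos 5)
Your forward direction is essentially the paper's proof: reduce via timelike nonbranching and the pathwise characterization (\autoref{Th:Pathwise}, \autoref{Re:Verify?}) to the pointwise Jacobian inequality, derive the Riccati-type ODI from the Bakry--\'Emery bound using the results of \cite{mccann2020}, and close with Sturm comparison. The only detail the paper adds that you omit is the Cauchy-inequality step with $\varepsilon=(N-n)/n$ needed to absorb the $\mathrm d V$ term into $\Ric^{N,V}$ when $N>n$, but that is standard.

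For the converse and for the necessity of $N\geq n$, your route differs from the paper's. You propose to redo the second-order Taylor expansion of both sides of \eqref{Eq:Entropic displacement conv inequ} around a point $x_0$ and direction $v$, tracking how the superposition averages $k_{\bdpi}^\pm$ concentrate at $k(x_0)$ as the test measures shrink. The paper avoids this entirely: by continuity of $k$ it chooses a small neighborhood $U\ni z$ with $K-\varepsilon\leq k\leq K+\varepsilon$ on $U$, uses the monotonicity in \autoref{Le:Properties} (equivalently \autoref{Pr:Consistency}) to downgrade $\wTCD_p^e(k,N)$ on $U$ to the \emph{constant-bound} condition $\wTCD_p^e(K-\varepsilon,N)$, and then invokes McCann's constant-$k$ converse \cite[Thm.~8.5]{mccann2020} locally, after which $\varepsilon\downarrow 0$ finishes. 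Likewise, for $N\geq n$ the paper simply cites \cite[Thm.~25]{mccann+} rather than re-running a blowup argument. Both routes reach the same conclusion, but the paper's is more modular: it outsources the hard second-order analysis to the existing constant-$k$ theorem and uses only the elementary monotonicity of the distortion coefficients, thereby entirely sidestepping the uniformity-of-expansion issue you correctly flag as the main obstacle to your version. If you pursue your route, the key point is that the extra error introduced by replacing $k_\bdpi^\pm$ with $k(x_0)$ is $o(\varepsilon^2)$ uniformly over the supports, which continuity of $k$ does indeed grant, so your argument is viable — it is simply more work than the reduction the paper chose.
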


\begin{proof} Since $k$ is real-valued, the last statement follows from \autoref{Th:Pathwise}.

The forward direction of the first claim follows from straightforward adaptations of the usual arguments, cf.~  \cite[Prop.~A.3]{braun2022}, \cite[Thm.~6.4, Cor.~6.6]{mccann2020}, \cite[Thm.~25]{mccann+}, and \cite[Thm.~4.3]{mondinosuhr2022}, together with a comparison theorem, cf.~e.g.~\cite[Cor.~3.13]{ketterer2017}. For the convenience of the reader, we include some details here.

Given any $0<p<1$, the Cauchy--Lipschitz theorem for the geodesic equation implies $\scrM$ to be timelike $p$-essentially nonbranching. Consequently, by \autoref{Re:Verify?}  it suffices to verify \eqref{Eq:PW EST} for every $\smash{\mu_0,\mu_1 \in \scrP_\comp^\ac(\mms,\meas)}$ with
\begin{align*}
\supp\mu_0\times\supp\mu_1 \subset \{l>0\}.
\end{align*}
Hence, the pair $(\mu_0,\mu_1)$ is $p$-separated according to \cite[Def.~4.1]{mccann2020} by \cite[Lem.~4.4]{mccann2020}. By \cite[Thm.~5.8, Cor.~5.9]{mccann2020} the measures $\mu_0$ and $\mu_1$ admit a unique chronological $\smash{\ell_p}$-optimal coupling $\smash{\pi\in\Pi_\ll(\mu_0,\mu_1)}$ and are joined  through a unique $\smash{\ell_p}$-geodesic $(\mu_t)_{t\in[0,1]}$ consisting of $\meas$-absolutely continuous measures. Furthermore, the same results from \cite{mccann2020} imply the existence of an $\meas$-a.e.~twice approximately differentiable \cite[Def.~3.8]{mccann2020} Borel function $u\colon \mms\to\mms$ with the subsequent properties. Setting $q := p/(p-1)<0$, the map $T\colon[0,1]\times\mms \to \mms$ given by
\begin{align*}
T_t(x) := \exp_x(t\,\vert\nabla u\vert^{q-2}\,\nabla u)
\end{align*}
satisfies
\begin{align}\label{Eq:mut pi}
\begin{split}
\mu_t &= (T_t)_\push\mu_0,\\
\pi &= (\Id, T_1)_\push\mu_0
\end{split}
\end{align}
for every $0\leq t \leq 1$. Furthermore, for $\meas$-a.e.~$x\in\mms$ the approximate derivative $\smash{\rmd T_t\colon T_x\mms\to T_{T_t(x)}\mms}$ depends smoothly on $0\leq t\leq 1$, and $(A_t)_{t\in[0,1]}$ is a matrix of Jacobi fields along the $l$-geodesic $(T_t(x))_{t\in[0,1]}$, where $A_t := \rmd T_t$. Thus, \cite[Prop.~6.1]{mccann2020} ensures that $\meas$-a.e., for every $0\leq t\leq 1$ the quantities
\begin{align*}
\phi_t &:= -\log\vert\!\det A_t\vert,\\
B_t &:= A_t'\,A_t^{-1}
\end{align*}
satisfy
\begin{align}\label{Eq:Riccati}
\begin{split}
\phi_t' &= -\tr B_t,\\
\phi_t'' &= \Ric(T_t',T_t')  + \tr B_t^2,\\
\vert\!\tr B_t\vert^2 &\leq n\,\tr B_t^2.
\end{split}
\end{align}

Now we come to the proof. First, \cite[Thm.~25]{mccann+} implies $N\geq n$. For simplicity, we then assume $N>n$, since  the case $N=n$ is deduced along the same lines by ignoring $V$. Given any $0\leq t\leq 1$, setting
\begin{align*}
\varphi_t := \phi_t + V\circ T_t,
\end{align*}
by Cauchy's inequality for $\varepsilon := (N-n)/n > 0$, from \eqref{Eq:Riccati} we deduce
\begin{align*}
\frac{1}{N}\,\vert\varphi_t'\vert^2 \leq \frac{1+\varepsilon}{N}\,\vert \!\tr B_t\vert^2 + \frac{1+\varepsilon^{-1}}{N}\,\rmd V(T_t')^2 \leq \tr B_t^2 + \frac{1}{N-n}\,\rmd V(T_t')^2\quad\meas\textnormal{-a.e.}
\end{align*}
Setting $\theta := \vert T_t'\vert = l(\cdot,T_1)$, again by \eqref{Eq:Riccati} this yields
\begin{align*}
\varphi_t'' + \frac{1}{N}\,\varphi_t' \leq -\Ric^{N,V}(T_t,T_t') \leq -(k\circ T_t)\,\theta^2\quad\meas\textnormal{-a.e.}
\end{align*}
A standard comparison theorem, cf.~e.g.~\cite[Prop.~3.8]{ketterer2017}, thus yields the transport Jacobian $\smash{\jmath_t := \rme^{\varphi_t}}$ \cite[Cor.~5.11]{mccann2020} to satisfy
\begin{align*}
\jmath_t^{1/N} \geq \sigma_{\kappa^-/N}^{(1-t)}(\theta)\,\jmath_0^{1/N} + \sigma_{\kappa^+/N}^{(t)}(\theta)\,\jmath_1^{1/N}\quad\meas\textnormal{-a.e.}
\end{align*}
Here, analogously to \eqref{Eq:kgamma defn} the functions $\smash{\kappa^\pm\colon[0,\theta] \to \R}$ are defined by
\begin{align*}
\kappa^-(t\,\theta) &:= k\circ T_{1-t},\\
\kappa^+(t\,\theta) &:= k\circ T_t.
\end{align*}
By \eqref{Eq:mut pi} and the change of variables formula, this yields \eqref{Eq:PW EST}, as desired.

The backward direction is shown along the lines of \cite[Thm.~8.5]{mccann2020}. We shortly outline the necessary modifications in the variable case. Given any $z\in \mms$ and any $\varepsilon > 0$, fix $K\in \R$ and a neighborhood $U\subset\mms$ of $z$ such that $K-\varepsilon \leq k\leq K+\varepsilon$ on $U$. By \autoref{Le:Properties} and the local argument for \cite[Thm.~8.5]{mccann2020}, we get 
\begin{align*}
\Ric^{N,V}(v,v) \geq (K-\varepsilon)\,\vert v\vert^2 \geq (k(x) - 2\,\varepsilon)\,\vert v\vert^2
\end{align*}
for every $x\in U$ and every $\smash{v\in T_x\mms}$ with $\vert v\vert^2>0$. Covering $\mms$ with such sets $U$ and sending $\varepsilon\to 0$, the claim follows.
\end{proof}

\subsubsection{Low regularity spacetimes} Following \cite{braunc1}, we can reduce the regularity of $\Rmet$ and $V$ from the previous subsection to local $\smash{\Cont^{1,1}}$-regularity, which we assume in this section --- otherwise, we retain the notation and the hypotheses therein. 

The \smash{Bakry--Émery--Ricci} tensor $\smash{\Ric^{N,V}}$ is still well-defined $\meas$-a.e., where again $\smash{\meas := \rme^{-V}\,\vol}$. Accordingly, given any $N>0$ we will say
\begin{align*}
\Ric^{N,V} \geq k\quad\textnormal{in all timelike directions}
\end{align*}
if for every smooth timelike vector field $X$ on $\mms$,  $\smash{\Ric^{N,V}(X,X) \geq k(x)\,\vert X\vert^2}$ holds for $\meas$-a.e.~$x\in\mms$. Following the sophisticated approximation argument from \cite[Lem.~2.8, Ch.~3]{braunc1} then leads to the following.

\begin{theorem}[Timelike convergence condition in low regularity]\label{Th:Low reg blub} Let $k\colon \mms\to \R$ be a continuous function, assume $g$ is of regularity $\smash{\Cont^{1,1}}$, and suppose $\smash{V \in \Cont^{1,1}(\mms)}$. Then for every $N \geq n$ the property
\begin{align*}
\Ric^{N,V} \geq k\quad\textnormal{\textit{in all timelike directions}} 
\end{align*}
implies the $\smash{\TCD_p^e(k,N)}$ condition for every $0<p<1$.
\end{theorem}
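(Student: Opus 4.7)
The plan is to follow the strategy indicated in the paper: reduce to the pathwise Jacobi-type inequality of \autoref{Th:Pathwise}, then prove it by smooth approximation, invoking the already-established \autoref{Th:Strong energ} for the smoothings and passing to the limit. Since $g \in \Cont^{1,1}_{\loc}$, the geodesic ODE still enjoys local Cauchy--Lipschitz theory, so geodesics through any point are unique and $\scrM$ is automatically timelike $p$-essentially nonbranching. By \autoref{Th:Pathwise} together with \autoref{Re:Verify?}, it therefore suffices to prove the density estimate \eqref{Eq:PW EST} along $\bdpi$-a.e.~$l$-geodesic for every $(\mu_0,\mu_1)\in \scrP_\comp^\ac(\mms,\meas)^2$ whose supports lie in $\{l>0\}$.

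The approximation, carried out in detail in \cite{braunc1}, proceeds in two coupled steps. First, $g$ is convolved with a compactly supported smoothing kernel to produce a sequence of smooth Lorentzian metrics $(g_\varepsilon)_{\varepsilon>0}$ whose light cones are strictly narrower than those of $g$ (while converging to them). This light-cone narrowing ensures $l_{g_\varepsilon}\leq l_g$ and $I^\pm_{g_\varepsilon}\subset I^\pm_g$, so that every pair $(\mu_0,\mu_1)$ with $\supp\mu_0 \times \supp\mu_1 \subset \{l_g>0\}$ is eventually timelike $p$-dualizable in $\scrM_\varepsilon := (\mms,l_{g_\varepsilon},\meas_\varepsilon)$ with $\meas_\varepsilon := \rme^{-V_\varepsilon}\,\vol_{g_\varepsilon}$. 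Simultaneously, $V$ is mollified to smooth $V_\varepsilon$, and a careful computation relates the Bakry--\'Emery--Ricci tensor of $(g_\varepsilon,V_\varepsilon)$ to that of $(g,V)$ in all $g_\varepsilon$-timelike directions. Because the $\Cont^{1,1}$-hypothesis keeps $\Ric^{N,V}$ uniformly locally bounded in $\Ell^\infty$, the curvature of the mollification is controlled in a pointwise sense modulo an error $\eta(\varepsilon)\to 0$ locally uniformly; combined with the continuity of $k$, one obtains a continuous $k_\varepsilon$ with $k_\varepsilon \to k$ locally uniformly and
\begin{align*}
\Ric^{N,V_\varepsilon}_{g_\varepsilon} \geq k_\varepsilon\quad\textnormal{in all }g_\varepsilon\textnormal{-timelike directions.}
\end{align*}

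With this in hand, \autoref{Th:Strong energ} delivers $\TCD_p^e(k_\varepsilon,N)$ for each $\scrM_\varepsilon$, and therefore the pathwise estimate \eqref{Eq:PW EST} along a plan $\bdpi^\varepsilon\in\OptTGeo_p^{g_\varepsilon}(\mu_0,\mu_1)$ obtained from \autoref{Th:Pathwise}\ref{La:Statement 2}. The limit argument now combines: (i) compactness of the family $(\bdpi^\varepsilon)_{\varepsilon>0}$ in the narrow topology, by the uniform compact support of the endpoints and the compactness statement of \autoref{Le:Geodesics plan}\ref{La:06} applied to $\scrM$ (which is legitimate once it is verified that $g_\varepsilon$-causal curves converging uniformly to a $g$-causal curve deliver a limit plan supported on $\TGeo_g(\mms)$); (ii) narrow convergence of $\meas_\varepsilon$ to $\meas$ and the corresponding convergence $l_{g_\varepsilon}\to l_g$ locally uniformly, yielding convergence of the $\Ell^2$-costs and lower semicontinuity of the Boltzmann entropy; and (iii) the lower semicontinuous dependence of the distortion coefficients on the plan provided by \autoref{Le:LSC sigma}. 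These ingredients allow one to pass to the limit in \eqref{Eq:PW EST} and obtain the inequality for a plan $\bdpi\in\OptTGeo_p(\mu_0,\mu_1)$ of the original space, completing the verification.

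The main obstacle is Step (iii) of the approximation, namely controlling the Bakry--\'Emery--Ricci tensor under the simultaneous smoothing of $g$ and $V$. The quantity $\Ric^{N,V}$ involves two derivatives of $g$ and of $V$, hence is only in $\Ell^\infty_{\loc}$, and convolution respects $\Ell^\infty$-bounds only in a weak-$*$ sense. Adapting the Riemannian argument of Kunzinger--Steinbauer--Stojkovi\'c to Lorentzian signature (as carried out in \cite{braunc1}) requires that the light-cone narrowing be chosen jointly with the mollification parameter so that the commutator between convolution and the curvature operator produces errors vanishing locally uniformly; this is precisely where $\Cont^{1,1}$ is essential, since lower regularity would make the Ricci tensor a mere distribution, for which no pointwise comparison of the form required by \autoref{Th:Strong energ} is available.
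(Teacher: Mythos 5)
Your reconstruction matches the paper's approach: the paper defers entirely to the approximation argument of \cite[Lem.~2.8, Ch.~3]{braunc1}, giving no further detail here, and your outline --- light-cone narrowing, simultaneous mollification of $g$ and $V$ with $\Cont^{1,1}$-controlled curvature comparison, application of \autoref{Th:Strong energ} to each smoothing, and a limit argument combining narrow compactness, \autoref{Le:LSC sigma}, and the pathwise reformulation via \autoref{Th:Pathwise} and \autoref{Re:Verify?} --- is a faithful summary of that argument.
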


\begin{remark}[Converse implication] Even for constant $k$, it is still open whether the converse statement from \autoref{Th:Low reg blub} holds. In the positive signature  case, in this precise regularity this is only known under  additional hypotheses about the stability of optimal maps thus far \cite[Thm.~6.3, Rem.~6.4]{kunzober}.
\end{remark}

\begin{remark}[Stability]\label{Re:Stability} As   \autoref{Th:Low reg blub} is shown through approximation of  $\Rmet$ by smooth metrics, it can be seen as an instance of stability of our variable timelike curvature-dimension condition. 

We do not address the general question of stability, as a clear candidate for measured Lorentz--Gromov--Hausdorff convergence (e.g.~allowing for a precompactness theorem) is missing as of today. One proposal comes from \cite[Thm.~3.15]{cavalletti2020}, and further possibilites will be studied in \cite{braun+}. By using \autoref{Le:LSC sigma}, it is not hard to adapt the proof of \cite[Thm.~3.15]{cavalletti2020}, see also \cite[Thm.~3.29]{braun2023}, to show \autoref{Th:Stab}. 
\end{remark}

\begin{theorem}[Weak stability à la Cavalletti--Mondino]\label{Th:Stab} Let a sequence $(\scrM_n)_{n\in\N}$ converge to $\scrM_\infty$ in the pointed sense of \cite[Thm.~3.15]{cavalletti2020}, all spaces  satisfying \autoref{Ass:REG}. Let $(k_n)_{n\in\N}$ and $(N_n)_{n\in\N}$ be sequences of lower semicontinuous functions $k_n \colon \mms_n \to \R$ and numbers $1\leq N_n <\infty$ such that $\scrM_n$ satisfies $\smash{\TCD_p^e(k_n,N_n)}$, respectively. Lastly, let $k_\infty\colon \mms_\infty \to \R$ be lower semicontinuous and $1\leq N_\infty <\infty$ be such that
\begin{align}\label{Eq:APPR}
\begin{split}
\liminf_{n\to\infty} k_n &\geq k_\infty,\\
\limsup_{n\to\infty} N_n &\leq N_\infty.
\end{split}
\end{align}
Then $\scrM_\infty$ satisfies $\smash{\wTCD_p^e(k_\infty,N_\infty)}$.
\end{theorem}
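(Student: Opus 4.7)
The plan is to adapt the strategy of \cite[Thm.~3.15]{cavalletti2020}, using the lower semicontinuity provided by \autoref{Le:LSC sigma} to handle the variable setting. Fix a strongly timelike $p$-dualizable pair $(\mu_0,\mu_1) \in \scrP_\comp^\ac(\mms_\infty,\meas_\infty)^2$. Using the pointed convergence hypothesis, I would first build approximating sequences $(\mu_0^n)_{n\in\N}$ and $(\mu_1^n)_{n\in\N}$ in $\scrP_\comp^\ac(\mms_n,\meas_n)$ with uniformly compact supports, converging narrowly to $\mu_0$ and $\mu_1$ respectively, with $\Ent_{\meas_n}(\mu_i^n) \to \Ent_{\meas_\infty}(\mu_i)$ for $i=0,1$, and such that $(\mu_0^n,\mu_1^n)$ is strongly timelike $p$-dualizable for all sufficiently large $n$ (this is guaranteed by the fact that strong timelike $p$-dualizability is an open condition under the assumed convergence, cf.~the $l^p$-cyclically monotone set witnessing it). Apply $\TCD_p^e(k_n,N_n)$ at the level of $\scrM_n$ (noting that it implies $\wTCD_p^e(k_n,N_n)$) to produce an $\ell_p$-geodesic $(\mu_t^n)_{t\in[0,1]}$ and a representing plan $\bdpi^n \in \OptTGeo_p(\mu_0^n,\mu_1^n)$ such that for every $0\leq t\leq 1$,
\begin{align}\label{Eq:Approx stab}
\scrU_{N_n}(\mu_t^n) \geq \sigma_{(k_n)_{\bdpi^n}^-/N_n}^{(1-t)}(\cost_{\bdpi^n})\,\scrU_{N_n}(\mu_0^n) + \sigma_{(k_n)_{\bdpi^n}^+/N_n}^{(t)}(\cost_{\bdpi^n})\,\scrU_{N_n}(\mu_1^n).
\end{align}
(Here I use the obvious fact that the $\ell_p$-geodesic exists by \autoref{Le:Geodesics plan}, at least if one reduces to the case where the product of supports lies in $\{l>0\}$ via \autoref{Re:Verify?}; this reduction also guarantees representation by a plan.)

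Next, I would extract subsequential limits. Using \autoref{Le:Geodesics plan}\,\ref{La:06} (or directly \autoref{Pr:Compactness lp geos}) together with uniform compactness of supports and the convergence of the ambient structures, a nonrelabeled subsequence of $(\bdpi^n)_{n\in\N}$ converges narrowly to a limit plan $\bdpi^\infty \in \OptTGeo_p(\mu_0,\mu_1)$, and the corresponding $\ell_p$-geodesics converge narrowly at every rational (and by narrow continuity every real) time $t\in[0,1]$ to the $\ell_p$-geodesic $(\mu_t^\infty)_{t\in[0,1]} := ((\eval_t)_\push\bdpi^\infty)_{t\in[0,1]}$ on $\scrM_\infty$. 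The narrow upper semicontinuity of $\scrU_{N_n}$ combined with $\limsup N_n\leq N_\infty$ yields
\begin{align*}
\scrU_{N_\infty}(\mu_t^\infty) \geq \limsup_{n\to\infty}\scrU_{N_n}(\mu_t^n),\qquad \scrU_{N_\infty}(\mu_i) = \lim_{n\to\infty}\scrU_{N_n}(\mu_i^n)
\end{align*}
for $i=0,1$, by construction of the approximating sequences.

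For the right-hand side of \eqref{Eq:Approx stab}, I would combine three ingredients. First, by \autoref{Le:Properties}, the distortion coefficients are monotone nonincreasing in $N$, so $\sigma_{(k_n)_{\bdpi^n}^\pm/N_n}^{(t)}(\cost_{\bdpi^n}) \geq \sigma_{(k_n)_{\bdpi^n}^\pm/N_\infty}^{(t)}(\cost_{\bdpi^n})$ eventually in $n$ (using $\limsup N_n \leq N_\infty$ and choosing $N'>N_\infty$ slightly larger as a buffer, then later sending $N'\downarrow N_\infty$). Second, for any continuous function $k'$ with $k' \leq k_\infty$ on a neighborhood of $J(\supp\mu_0,\supp\mu_1)$, the lim inf hypothesis $\liminf k_n\geq k_\infty$ and Fatou's lemma (or the construction \eqref{Eq:kappan} applied simultaneously to $k_n$ and $k_\infty$) yield $k' \leq \liminf k_n$ locally, hence by the monotonicity in the potential of \autoref{Le:Properties}, $\sigma_{(k_n)_{\bdpi^n}^\pm/N_\infty}^{(t)}(\cost_{\bdpi^n}) \geq \sigma_{(k'')_{\bdpi^n}^\pm/N_\infty}^{(t)}(\cost_{\bdpi^n})$ eventually, where $k''$ is any continuous function below $k'$. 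Applying \autoref{Le:LSC sigma} (which requires continuity, hence bounded from below, and is compatible with narrow convergence of plans with uniformly compactly supported marginals) to the continuous potential $k''$ gives
\begin{align*}
\liminf_{n\to\infty} \sigma_{(k'')_{\bdpi^n}^\pm/N_\infty}^{(t)}(\cost_{\bdpi^n}) \geq \sigma_{(k'')_{\bdpi^\infty}^\pm/N_\infty}^{(t)}(\cost_{\bdpi^\infty}).
\end{align*}
Finally, letting $k'' \uparrow k_\infty$ along an increasing sequence of continuous functions as in \eqref{Eq:kappan} and invoking the monotone convergence property of the distortion coefficients (Remark after \autoref{Def:dist coeff general k}) recovers the distortion coefficients associated with $k_\infty$ in the limit. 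Combining these pieces, I can pass to the limit inferior in \eqref{Eq:Approx stab} to obtain the desired $\smash{\wTCD_p^e(k_\infty,N_\infty)}$ inequality along $(\mu_t^\infty)_{t\in[0,1]}$ and $\bdpi^\infty$.

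The main obstacle is the simultaneous variation of $k_n$, $N_n$, and $\bdpi^n$ inside the distortion coefficients: \autoref{Le:LSC sigma} handles a fixed potential along varying plans, so the argument must be staged — first freeze the potential at a continuous minorant, then apply \autoref{Le:LSC sigma}, then relax to $k_\infty$ by monotone approximation via \eqref{Eq:kappan}. A secondary technical point is ensuring strong timelike $p$-dualizability of the approximating pairs (needed to obtain geodesics in $\scrM_n$ via $\wTCD_p^e(k_n,N_n)$); this can be bypassed by first reducing to the case $\supp\mu_0\times\supp\mu_1\subset\{l>0\}$ via \autoref{Re:Verify?} and \autoref{Ex:Str tl dual}, and only later extending to general strongly timelike $p$-dualizable pairs by a decomposition argument as in the proof of \autoref{Th:Pathwise}.
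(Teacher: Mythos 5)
The core of your strategy---approximate the marginals, apply the $\TCD$ inequality on each $\scrM_n$, extract a limit plan, and pass to the limit via \autoref{Le:LSC sigma} together with monotone approximation of the potential through \eqref{Eq:kappan}---is precisely what the paper's one-line sketch proposes (``By using \autoref{Le:LSC sigma}, it is not hard to adapt the proof of \cite[Thm.~3.15]{cavalletti2020}\ldots''). Your staged argument for the distortion coefficients (freeze a continuous minorant, invoke \autoref{Le:LSC sigma} for that fixed potential along varying plans, then relax to $k_\infty$ by monotone approximation) is exactly the right way to untangle the simultaneous variation of $k_n$, $N_n$, and $\bdpi^n$, which is the only genuinely new issue relative to the constant-$k$ stability result. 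On that main point, you have reconstructed the intended argument.

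There is, however, a gap in how you plan to handle general strongly timelike $p$-dualizable endpoint pairs. Your reduction to the case $\supp\mu_0\times\supp\mu_1\subset\{l>0\}$ appeals to \autoref{Re:Verify?}, and your subsequent extension to arbitrary strongly timelike $p$-dualizable pairs appeals to the decomposition argument in the proof of \autoref{Th:Pathwise}. Both of these depend on timelike $p$-essential nonbranching---\autoref{Re:Verify?} is stated under that hypothesis, and the decomposition in \autoref{Th:Pathwise} rests on \autoref{Le:Mutually singular}, which again needs nonbranching to glue the restricted plans while keeping their $t$-slices mutually $\meas$-singular. The theorem does not assume $\scrM_\infty$ is nonbranching, so neither tool is available for the limit space. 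The way around this, which is also the route taken in \cite[Thm.~3.15]{cavalletti2020}, is to dispense with the reduction and construct the approximating pair $(\mu_0^n,\mu_1^n)$ directly from an arbitrary strongly timelike $p$-dualizable pair $(\mu_0,\mu_1)$, pushing forward along the isometric embeddings furnished by the pointed convergence and regularizing if necessary. Note also that because the hypothesis on $\scrM_n$ is the strong condition $\TCD_p^e(k_n,N_n)$ (not the weak one), the approximating pair only needs to be timelike $p$-dualizable on $\scrM_n$---a weaker requirement than the strong dualizability your sketch attempts to propagate---which makes this direct construction considerably more tractable than you anticipate.
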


Here the first inequality in \eqref{Eq:APPR} means for every $\varepsilon > 0$ there exists $n'\in\N$ such that for every $n\in\N$ with $n\geq n'$ and every $x\in\mms_n$,
\begin{align*}
k_n(x) \geq k\circ\iota_n(x) - \varepsilon,
\end{align*}
where $\iota_n$ is the isometric embedding defined on $\mms_n$ from \cite[Thm.~3.15]{cavalletti2020}.

If $\scrM_n$ is merely a timelike $p$-essentially nonbranching $\smash{\wTCD_p^e(k_n,N_n)}$ space for every $n\in\N$, by the previous result and \autoref{Th:Pathwise} we still obtain $\smash{\scrM_\infty}$ obeys the  $\smash{\wTCD_p^e(k_\infty,N_\infty)}$ condition.

A similar statement entails weak stability of the dimensionless variable timelike curvature-dimension conditon from \autoref{Def:TCD infty}.

If the local uniform lower boundedness in \eqref{Eq:APPR} is dropped, stability cannot be expected in general \cite[Ex.~33]{mccann+}.

\subsubsection{Finsler spacetimes} Next, we adapt the results from \cite{braunohta} for constant $k$ on Finsler spacetimes to variable $k$. We refer to \cite{braunohta, minguzzi2019} for details and references.

Let $\mms$ be a smooth, connected, noncompact topological manifold of dimension $\smash{n\in\N_{\geq 2}}$. A \emph{Lorentz--Finsler structure} on $\mms$ \cite[Def.~2.1]{braunohta} is a positively $2$-homogeneous function  $L\colon T\mms \to \R$, which outside the zero section is smooth and has non-degenerate Hessian with  signature $+,-,\dots,-$. Analogously to the metric tensor $\Rmet$ in \autoref{Sub:High}, one can use $L$ to build all elements of causality  theory to define \emph{Finsler spacetimes}, i.e.~Lorentz--Finsler structures with a time orientation \cite[Def.~2.3]{braunohta}. Let $\meas$ be a \emph{smooth}  measure on $\mms$, i.e.~$\meas$ is mutually absolutely continuous to the Lebesgue measure $\smash{\Leb^n}$ on each chart with smooth density. 

Uniqueness results for chronological $\smash{\ell_p}$-optimal couplings and $\smash{\ell_p}$-geodesics still hold  on Finsler spacetimes without curvature assumptions \cite[Thm.~4.17, Cor. 4.18]{braunohta}, in analogy to the results from \cite{mccann2020} mentioned in \autoref{Sub:High}.

As Finsler spacetimes come with several candidates for a  ``volume measure'' in general \cite[Sec. 9.1]{ohtacomp}, there is also no canonical Ricci tensor to consider. Given $\meas$, the correct choice is the weighted Ricci tensor $\smash{\Ric^N}$ it induces, as explained in \cite[Sec.~5.1]{braunohta}. We then say
\begin{align*}
\Ric^N \geq k\quad\textnormal{in all timelike directions}
\end{align*}
if $\smash{\Ric^N(v,v) \geq 2\,k(x)\,L(v)}$ for every $x\in\mms$ and every $v\in T_x\mms$ with $L(v)>0$.

The following theorem  echoes the analogous characterization for constant $k$ \cite[Thm.~5.9, Thm.~6.1]{braunohta}. The proof is performed along the same lines.
 
\begin{theorem}[Finslerian timelike convergence condition] Let $k\colon \mms\to \R$ be a continuous function. Then
\begin{align*}
\Ric^N \geq k\quad\textnormal{\textit{in all timelike directions}}
\end{align*}
if and only if $N\geq n$ and the $\smash{\wTCD_p^e(k,N)}$ condition holds for every $0<p<1$.

This equivalence still holds when $\smash{\wTCD_p^e(k,N)}$  is replaced by $\smash{\TCD_p^e(k,N)}$.
\end{theorem}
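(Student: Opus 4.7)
The argument follows the template of \autoref{Th:Strong energ}, with \cite[Thm.~5.9, Thm.~6.1]{braunohta} playing the role of the input for constant $k$. As before, only the first equivalence requires work, since the second  follows from \autoref{Th:Pathwise} once timelike $p$-essential nonbranching is known; the latter holds on Finsler spacetimes thanks to the uniqueness of $l$-geodesics joining chronologically related points, cf.~the results around \cite[Thm.~4.17, Cor.~4.18]{braunohta}.

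For the forward implication, I would argue via \autoref{Re:Verify?} and show  \eqref{Eq:PW EST}  for  pairs $\smash{(\mu_0,\mu_1)\in\scrP^\ac_\comp(\mms,\meas)^2}$ satisfying $\supp\mu_0\times\supp\mu_1\subset \{l>0\}$. The Finslerian  structure theorems \cite[Thm.~4.17, Cor.~4.18]{braunohta} provide a unique chronological $\smash{\ell_p}$-optimal coupling $\pi = (\Id,T_1)_\push\mu_0$ and a unique $\smash{\ell_p}$-geodesic $(\mu_t)_{t\in[0,1]}$ with $\mu_t = (T_t)_\push\mu_0$, where $T_t(x)$ runs along an $l$-geodesic for $\mu_0$-a.e.~$x$. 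Letting $A_t := \rmd T_t$ be the matrix of Jacobi fields along this $l$-geodesic and $\jmath_t := \vert\!\det A_t\vert$, \cite[Sec.~5]{braunohta} yields the weighted Riccati-type inequality
\begin{align*}
(\log \jmath_t)'' + \frac{1}{N}\,[(\log\jmath_t)']^2 \leq -\Ric^N(T_t',T_t') \leq -2\,(k\circ T_t)\,L(T_t')\quad\meas\textnormal{-a.e.,}
\end{align*}
the last step using the assumption. Setting $\theta := (2\,L(T_t'))^{1/2} = l(\cdot,T_1)$, this translates into the ODI
\begin{align*}
(\jmath_t^{1/N})'' \leq -\frac{(k\circ T_t)\,\theta^2}{N}\,\jmath_t^{1/N}\quad\meas\textnormal{-a.e.}
\end{align*}
Applying Sturm's comparison theorem in the variable-$k$ form \cite[Prop.~3.8]{ketterer2017} to the potentials $\smash{\kappa^{\pm}}$ defined by $\smash{\kappa^-(t\,\theta) := k\circ T_{1-t}}$ and $\smash{\kappa^+(t\,\theta) := k\circ T_t}$ as in \autoref{Sub:Cts potentials}, exactly as in the proof of \autoref{Th:Strong energ}, yields
\begin{align*}
\jmath_t^{1/N} \geq \sigma^{(1-t)}_{\kappa^-/N}(\theta)\,\jmath_0^{1/N} + \sigma^{(t)}_{\kappa^+/N}(\theta)\,\jmath_1^{1/N}\quad\meas\textnormal{-a.e.}
\end{align*}
The Finslerian Monge--Ampère-type change of variables from \cite[Cor.~5.11]{braunohta} then converts this Jacobian bound into \eqref{Eq:PW EST}, as desired. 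The constraint $N \geq n$ follows from the Cauchy inequality used to absorb the weight, exactly as in \cite[Thm.~5.9]{braunohta}.

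For the converse,  localize. Fix any $z\in\mms$ and $\varepsilon > 0$. By continuity, pick a neighborhood $U\subset\mms$ of $z$ and $K\in\R$ with $K-\varepsilon\leq k\leq K+\varepsilon$ on $U$. By \autoref{Pr:Consistency} \ref{La:Zwei} (monotonicity of the distortion coefficients in the potential, cf.~\autoref{Le:Properties}), the hypothesis $\smash{\wTCD_p^e(k,N)}$ implies $\smash{\wTCD_p^e(K-\varepsilon,N)}$ for the induced metric measure spacetime on $U$. The constant-$k$ characterization  \cite[Thm.~6.1]{braunohta} then yields $\smash{\Ric^N \geq K-\varepsilon\geq k-2\varepsilon}$ in all timelike directions over $U$. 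Covering $\mms$ by such neighborhoods and  letting $\varepsilon\downarrow 0$ gives the claim.

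The principal obstacle is the Sturm-comparison step with genuinely variable $k$; the Riccati estimate needs to be combined with the correct  Sturm--Picone-type argument of \cite[Prop.~3.8]{ketterer2017} to produce precisely the two-term bound involving $\smash{\sigma^{(1-t)}_{\kappa^-/N}}$ and $\smash{\sigma^{(t)}_{\kappa^+/N}}$ rather than the cruder one-sided version; in the Finsler setting, this  requires verifying that the Jacobi field analysis of \cite{braunohta} transfers unchanged when  the right-hand side of the Riccati inequality is replaced by a continuous function of $t$ rather than a constant, which reduces to an inspection of the arguments at \cite[Thm.~5.9]{braunohta}.
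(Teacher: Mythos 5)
The paper gives no proof of its own here --- it simply remarks that ``the proof is performed along the same lines'' as the constant-$k$ Finsler characterization of \cite{braunohta} and the Lorentzian \autoref{Th:Strong energ} --- and your sketch supplies precisely the details the paper elides, following the same approach: timelike $p$-essential nonbranching from the Finslerian geodesic-uniqueness results, the pathwise characterization of \autoref{Th:Pathwise}, the Finslerian Riccati/Cauchy estimate to obtain the Jacobian ODI, Sturm comparison to convert it into a two-sided distortion-coefficient bound, and the localization argument via \autoref{Le:var to const} (which is the more apt reference than \autoref{Pr:Consistency}\ref{La:Zwei} for passing to a constant bound on a causal diamond) for the converse. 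The proposal is correct.
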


\subsection{Regular solutions to the Einstein equations} Akin  to the setting from \autoref{Sub:High}, let $\Rmet$ be a smooth, globally hyperbolic Lorentzian metric which solves the \emph{Einstein equations}
\begin{align}\label{Eq:Einstein equ}
\Ric - \frac{1}{2}\,\scal\,g + \Lambda\,g = 8\pi\,T.
\end{align}
Here $\Lambda\in \R$ is the cosmological constant, and $T$ is the energy-momentum tensor  from the least action principle for the Einstein--Hilbert action. In order to simplify the presentation, cf.~\eqref{Eq:RICN-2} below, we assume $\mms$ has dimension $\smash{n\in\N_{\geq 3}}$.

In the interpretation of \eqref{Eq:Inter TL Dir}, the  \emph{weak energy condition} asserts 
\begin{align*}
T\geq 0\quad\textnormal{in all timelike directions}.
\end{align*}
It clearly holds in vacuum, where $T=0$, in which case \eqref{Eq:Einstein equ} is equivalent to
\begin{align}\label{Eq:RICN-2}
\Ric = \frac{2\,\Lambda}{n-2}\,\Rmet
\end{align}
by \cite[Lem.~4.1]{mondinosuhr2022}. We thus land in the framework of timelike curvature-dimen\-sion conditions for constant $k$ again \cite{braun2022, cavalletti2020}. In the general non-vacuum case, assuming \eqref{Eq:Einstein equ} the  weak energy condition --- which is believed under all physically reasonable non-quantum circumstances \cite{wald1984} --- is equivalent to
\begin{align*}
\Ric \geq \frac{1}{2}\,\scal\,\Rmet - \Lambda\,\Rmet\quad\textnormal{in all timelike directions}.
\end{align*}

This motivates the consideration of the continuous function $\smash{k_\Rmet\colon\mms\to \R}$ with 
\begin{align*}
k_\Rmet(x) := \frac{1}{2}\,\scal\,\Rmet(x) - \Lambda.
\end{align*}
The proof of the following characterization is analogous to \autoref{Th:Strong energ}.

\begin{theorem}[Equivalence with the weak energy condition]  Assume $\scrM$ 
 smooth satisfies general relativity, i.e.~it comes from a solution to the Einstein equations \eqref{Eq:Einstein equ}. Then the weak energy condition is equivalent to $\smash{\TCD_p^e(k_\Rmet,n)}$ for every $0<p<1$.

This equivalence is still valid if $\smash{\TCD_p^e(k_\Rmet,n)}$ is replaced by $\smash{\wTCD_p^e(k_\Rmet,n)}$.
\end{theorem}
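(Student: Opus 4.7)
The plan is to recognize this theorem as essentially a direct corollary of \autoref{Th:Strong energ} applied with the parameters $N := n$ and $V \equiv 0$, so that $\meas$ coincides with the Lorentzian volume measure $\vol$ and, by the convention $\Ric^{n,V} := \Ric$ (which requires $V$ constant), the Bakry--\'{E}mery--Ricci tensor reduces to the ordinary Ricci tensor. To apply that theorem, the only thing to check is that, under the Einstein equations \eqref{Eq:Einstein equ}, the weak energy condition is equivalent to the pointwise inequality
\begin{align*}
\Ric(v,v) \geq k_\Rmet(x)\,\vert v\vert^2 \quad\text{for every } x\in\mms \text{ and every } v\in T_x\mms \text{ with } \vert v\vert^2 > 0.
\end{align*}

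First, I would carry out the elementary algebraic reduction establishing this equivalence. Rearranging \eqref{Eq:Einstein equ} yields $8\pi\,T = \Ric - \tfrac{1}{2}\,\scal\,\Rmet + \Lambda\,\Rmet$. Evaluating on a timelike $v\in T_x\mms$ with $\vert v\vert^2 > 0$ and dividing by $8\pi\,\vert v\vert^2$, the condition $T(v,v)\geq 0$ becomes exactly $\Ric(v,v)/\vert v\vert^2 \geq \tfrac{1}{2}\,\scal(x) - \Lambda = k_\Rmet(x)$, which is the desired tensor inequality. Moreover, $k_\Rmet$ is continuous since $\Rmet$ is smooth, hence so is $\scal$, which places us squarely within the hypotheses of \autoref{Th:Strong energ}.

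Second, I would invoke \autoref{Th:Strong energ} with $k := k_\Rmet$, $N := n$, and $V \equiv 0$. The inequality $N \geq n$ is satisfied as an equality, the constancy hypothesis on $V$ is met, and $n \geq 3 > 1$ matches the standing dimensional assumption of \autoref{Def:TCDe}. The theorem therefore yields, for every $0<p<1$, that $\Ric \geq k_\Rmet$ in all timelike directions is equivalent to $\wTCD_p^e(k_\Rmet,n)$ and also to $\TCD_p^e(k_\Rmet,n)$. Chaining the two equivalences delivers the claim.

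There is no serious obstacle here: all the substance --- the smooth comparison theorem for the transport Jacobian $\jmath_t$, the Raychaudhuri-type Riccati computation underlying \eqref{Eq:Riccati}, and the pathwise characterization via \autoref{Th:Pathwise} --- has already been performed in proving \autoref{Th:Strong energ}. Because $V$ is constant, the $(N-n)^{-1}\,\rmd V\otimes \rmd V$ correction term drops out and one effectively reduces to the vacuum case of the Bakry--\'{E}mery--Ricci framework; the theorem under consideration is merely the translation of that framework into the language of the energy-momentum tensor via Einstein's field equations.
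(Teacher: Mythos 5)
Your proposal is correct and follows essentially the same route as the paper, which simply marks the proof of this theorem as ``analogous to Theorem~\ref{Th:Strong energ}.'' Your observation that it follows as a direct corollary — specializing that theorem to $N=n$ and constant $V$, after the elementary rearrangement of Einstein's equations showing the weak energy condition equivalent to $\Ric\geq k_\Rmet$ in all timelike directions — is a tidy and valid packaging of the same substance.
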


\subsection{Null convergence  condition} Unlike the positive signature case --- by the noncompactness of $\{\vert\cdot\vert^2  =1\}$  in $T_x\mms$, where $x\in\mms$ --- existence of a \emph{real-valued} function $k$ obeying \eqref{Eq:TL Ric} is an extra condition on $\mms$ and not automatically satisfied by every smooth Lorentzian spacetime. In fact, the existence of such a function $k$ is characterized by the null energy condition (or \emph{null convergence condition} if general relativity is not assumed). This has been observed in \cite{mccann+}; we briefly recapitulate its findings and translate the main result from \cite{mccann+} to our variable language.

Define the function $c\colon \mms \to \R\cup \{-\infty\}$ by
\begin{align}\label{Eq:c Ric def}
c(x) := \inf \vert v\vert^{-2}\,\Ric^{N,V}(v,v),
\end{align}
where the infimum is taken over all $v\in T_x\mms$ such that $\vert v\vert^2 > 0$. This function is only upper semicontinuous in general. Therefore, we have to consider its (possibly degenerate) lower semicontinuous envelope $k\colon \mms \to \R\cup \{-\infty\}$ defined by
\begin{align}\label{Eq:k Ric def}
k(x) := \inf \liminf_{n\to\infty} c(x_n),
\end{align}
where the infimum is taken over all sequences $(x_n)_{n\in\N}$ in $\mms$ converging to $x$.

In the sequel, we will say
\begin{align*}
\Ric^{N,V}\geq 0\quad\textnormal{in all null directions}
\end{align*}
if $\smash{\Ric^{N,V}(v,v)\geq 0}$ for every $x\in\mms$ and every $v\in T_x\mms$ with  $\vert v\vert^2=0$.

\begin{theorem}[Equivalence with the null convergence  condition] The condition
\begin{align*}
\Ric^{N,V} \geq 0\quad\textnormal{\textit{in all null directions}}
\end{align*}
holds if and only if the following are satisfied.
\begin{enumerate}[label=\textnormal{\textcolor{black}{(}\roman*\textcolor{black}{)}}]
\item\label{La:Leelu} The function $k$ from \eqref{Eq:k Ric def} does not attain the value $-\infty$.
\item We have $N\geq n$.
\item\label{La:Lelu} For every $0<p<1$, the $\smash{\wTCD_p^e(k,N)}$ condition holds.
\end{enumerate}

Moreover, the preceding equivalence still holds if $\smash{\wTCD_p^e(k,N)}$ is replaced by $\smash{\TCD_p^e(k,N)}$ in item  \ref{La:Lelu}.
\end{theorem}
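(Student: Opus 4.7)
The plan is to lift \autoref{Th:Strong energ} — which characterises timelike Ricci lower bounds by continuous potentials $k$ — to the lower semicontinuous envelope $k$ of the pointwise quotient $c$ from \eqref{Eq:c Ric def}, with the null convergence condition in place of a direct timelike hypothesis. The pivotal new ingredient, supplied by \cite{mccann+}, is that $\Ric^{N,V}\geq 0$ in all null directions is equivalent to the existence of \emph{locally uniform} timelike Ricci lower bounds; this is precisely what prevents $c$ from diverging to $-\infty$ and thereby makes $k$ real-valued. Once \cite{mccann+} is available, the remainder is a monotone approximation of $k$ by continuous $k_m$ via \eqref{Eq:kappan}, steered by \autoref{Le:Properties}, together with a small-perturbation limit crossing from timelike to null vectors.

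For the forward direction, assume $\Ric^{N,V}\geq 0$ in all null directions. By \cite{mccann+}, every $x\in\mms$ admits a neighbourhood $U$ and a constant $K\in\R$ such that $\Ric^{N,V}(v,v)\geq K\,\vert v\vert^2$ for every $y\in U$ and every $v\in T_y\mms$ with $\vert v\vert^2>0$; this yields $c\geq K$ on $U$, hence \ref{La:Leelu}, and the same source forces $N\geq n$ by the route already invoked in \autoref{Th:Strong energ}. For \ref{La:Lelu}, fix a timelike $p$-dualisable pair $(\mu_0,\mu_1)\in\scrP_\comp^\ac(\mms,\meas)^2$, set $E:=J(\supp\mu_0,\supp\mu_1)$ (on which $k$ is bounded below by lower semicontinuity), and let $(k_m)_{m\in\N}$ be the nondecreasing Lipschitz approximation of $k$ on $E$ from \eqref{Eq:kappan}. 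Since $\Ric^{N,V}(v,v)\geq c(y)\,\vert v\vert^2\geq k(y)\,\vert v\vert^2\geq k_m(y)\,\vert v\vert^2$ for every $y\in E$ and every timelike $v\in T_y\mms$, \autoref{Th:Strong energ} yields $\wTCD_p^e(k_m,N)$ — and the certifying dynamical plan $\bdpi$ constructed there is the unique timelike $\smash{\ell_p}$-optimal dynamical plan between $\mu_0$ and $\mu_1$ of \cite[Cor.~5.9]{mccann2020}, independent of $m$. Sending $m\to\infty$ in \eqref{Eq:Entropic displacement conv inequ} with this common $\bdpi$, Levi's monotone convergence theorem combined with \autoref{Def:dist coeff bdpi} gives $\sigma_{k_{m,\bdpi}^\pm/N}^{(t)}(\cost_\bdpi)\to\sigma_{k_\bdpi^\pm/N}^{(t)}(\cost_\bdpi)$, transferring the inequality to $k$ and establishing \ref{La:Lelu}.

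For the backward direction, assume \ref{La:Leelu}--\ref{La:Lelu}. Fix $x\in\mms$ and $\varepsilon>0$; lower semicontinuity of $k$ produces a neighbourhood $U$ of $x$ with $k\geq k(x)-\varepsilon$ on $U$. The local argument of \cite[Thm.~8.5]{mccann2020} invoked in the backward direction of \autoref{Th:Strong energ} (which requires only a one-sided lower sandwich of $k$ by a constant, not a two-sided one) applied with the constant $K:=k(x)-\varepsilon$ yields $\Ric^{N,V}(v,v)\geq (k(x)-\varepsilon)\,\vert v\vert^2$ for every timelike $v\in T_x\mms$. Letting $\varepsilon\to 0$ gives $\Ric^{N,V}(v,v)\geq k(x)\,\vert v\vert^2$ in every timelike direction at $x$. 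To pass to null $v\in T_x\mms$, pick any timelike $w\in T_x\mms$ and note that $v_\varepsilon:=v+\varepsilon\,w$ is timelike for all sufficiently small $\varepsilon>0$; the inequality $\Ric^{N,V}(v_\varepsilon,v_\varepsilon)\geq k(x)\,\vert v_\varepsilon\vert^2$ then passes to the limit $\varepsilon\to 0^+$ because $\Ric^{N,V}$ is continuous on $T\mms$, $\vert v_\varepsilon\vert^2\to 0$, and $k(x)\in\R$ by \ref{La:Leelu}, delivering $\Ric^{N,V}(v,v)\geq 0$. The replacement of $\wTCD_p^e$ by $\TCD_p^e$ in the final assertion is automatic from \autoref{Th:Pathwise}, since Cauchy--Lipschitz for the geodesic equation renders $\scrM$ timelike $p$-essentially nonbranching.

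The main obstacle is the forward-direction appeal to \cite{mccann+}: there is no elementary synthetic route from the purely null information $\Ric^{N,V}\geq 0$ in null directions to the codimension-zero timelike data driving our distortion-coefficient machinery, because $c$ is only upper semicontinuous and could a priori plunge to $-\infty$ near null directions. Once this deep input is granted, the subsequent monotone limit from $k_m$ to $k$ in \eqref{Eq:Entropic displacement conv inequ} and the small-perturbation argument crossing from timelike to null vectors at a fixed point of $\mms$ are both essentially routine, powered respectively by \autoref{Le:Properties} and the potential-independence of the optimal dynamical plan.
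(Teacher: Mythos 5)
Your proof is correct in substance and both directions hinge on the same deep input from \cite{mccann+} as the paper's, but the route you take differs in two notable places. In the forward direction, the paper simply says ``the rest is argued as in the proof of \autoref{Th:Strong energ}'' (the Riccati/Sturm argument applies directly to the lower semicontinuous $k$), while you unpack this via the monotone approximation $k_m\uparrow k$ from \eqref{Eq:kappan}, applying \autoref{Th:Strong energ} to each $k_m$ and sending $m\to\infty$ against the common ($m$-independent) plan $\bdpi$. This is a valid alternative, but you should note that $k_m$ from \eqref{Eq:kappan} lives only on the compact emerald $E=J(\supp\mu_0,\supp\mu_1)$, so one cannot literally apply \autoref{Th:Strong energ} as a global black box for $k_m$; instead one invokes \autoref{Re:Verify?} to reduce to pairs with $\supp\mu_0\times\supp\mu_1\subset\{l>0\}$ and observes that the pathwise estimate \eqref{Eq:PW EST} only tests $k_m$ along geodesics in $E$, where $\Ric^{N,V}\geq k_m$ indeed holds (this reduction also furnishes the uniqueness of $\bdpi$ you invoke). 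In the backward direction the paper applies \cite[Cor.~27]{mccann+} outright, whereas you re-derive the implication by first extracting $\Ric^{N,V}\geq k(x)$ in timelike directions at each $x$ from the local constant bound and \cite[Thm.~8.5]{mccann2020}, then passing to null directions by the perturbation $v_\varepsilon=v+\varepsilon w$; this works but needs the sign $g(v,w)>0$ (choose $w$ timelike with the same time-orientation as $v$, or replace $w$ by $-w$), not ``any timelike $w$''. Your self-contained unpacking gains transparency about why finiteness of $k$ is essential for the null limit, at the cost of a longer argument; the paper's citation is terser. Both are sound.
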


\begin{proof} Again the last statement follows from  \autoref{Th:Pathwise} since in either case, \ref{La:Leelu} ensures $k$ is real-valued.

As concerns the rest, we first show the forward implication. The hypothesized weighted null energy condition implies the function $c$ from \eqref{Eq:c Ric def} is bounded from below on every compact subset of $\mms$ \cite[Thm.~26]{mccann+}. Thus, $k$ does not attain the value $-\infty$. The rest is then argued as in the proof of \autoref{Th:Strong energ}.

The key point of the converse implication is again \ref{La:Leelu}. Indeed, this hypothesis paired with lower semicontinuity of $k$ ensures $k$ is bounded from below on every compact subset of $\mms$. Thus for every compact subset $C\subset \mms$, there exists $K\in\R$ such that the metric measure spacetime structure on $J(C,C)$ induced by $\scrM$ obeys  $\smash{\wTCD_p^e(K,N)}$ for every $0<p<1$. According to \cite[Cor.~27]{mccann+} this yields the weighted null energy condition.
\end{proof}

\section{Geometric inequalities}\label{Sec:Geometric inequ}

We turn to standard geometric applications of timelike Ricci curvature bounds from below. In this chapter, from $\smash{\wTCD_p^e(k,N)}$ we deduce
\begin{itemize}
\item the timelike Brunn--Minkowski inequality, 
\item the timelike Bonnet--Myers diameter inequality, and
\item the timelike Bishop--Gromov volume growth inequality, along with several properties of $\meas$.
\end{itemize}
 The constants obtained in this chapter are not sharp (e.g.~\autoref{Cor:Hausdorffdim}), but this will
be rectified in \autoref{Ch:Applications} using the needle decompositions of \autoref{Ch:Localization}. 

In \autoref{Th:Schneider} we also state a related result known as \emph{Schneider's theorem} in Riemannian geometry \cite{schneider1972}, see also \cite{aubry2007,galloway1982}, considered in the setting of smooth spacetimes by Frankel--Galloway  \cite{fg}. It is transferred from \cite{ketterer2017} to our setting. In terms of general relativity, it predicts timelike geodesic incompleteness if,  in the far past (or future) of the point of reference,
the  Ricci curvature is positive and does not decay too quickly.

\subsection{Timelike Brunn--Minkowski inequality}\label{Sub:Brunn Minkowski} For any Borel sets $A_0,A_1\subset\mms$ and any $t\in (0,1)$, we define the set
\begin{align}\label{Eq:At}
A_t' := Z_t(A_0\times A_1),
\end{align}
where the latter is defined according to \eqref{Eq:Zt(C)}. 
Furthermore, let $G(A_0,A_1)$ denote the class of all $\gamma\in\TGeo(\mms)$ with $\gamma_0\in A_0$ and $\gamma_1 \in A_1$.

\begin{theorem}[Timelike Brunn--Minkowski inequality]\label{Th:Timelike BM} Assume $\smash{\TCD_p^e(k,N)}$. Let $A_0,A_1\subset\mms$ be Borel subsets of $\mms$  with positive and finite $\meas$-measure such that $\smash{(\mu_0,\mu_1)\in\scrP^\ac(\mms,\meas)^2}$ is  timelike $p$-dualizable, where
\begin{align}\label{Eq:Pair}
\begin{split}
\mu_0 &:= \meas[A_0]^{-1}\,\meas\mres A_0,\\
\mu_1 &:= \meas[A_1]^{-1}\,\meas\mres A_1.
\end{split}
\end{align}
Then for every $0\leq t\leq 1$,
\begin{align*}
\meas[A_t']^{1/N} &\geq \inf \sigma_{k_\gamma^-/N}^{(1-t)}(\vert\dot\gamma\vert)\,\meas[A_0]^{1/N} + \inf\sigma_{k_\gamma^+/N}^{(t)}(\vert\dot\gamma\vert)\,\meas[A_1]^{1/N},
\end{align*}
where both infima are taken over all $\smash{\gamma\in G(A_0,A_1)}$. Here, $\meas[A_t']$ is interpreted as the outer measure of $A_t'$ in case the latter is not $\meas$-measurable.
\end{theorem}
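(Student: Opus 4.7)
The plan is to follow the standard route from a synthetic timelike Ricci bound to a Brunn--Minkowski-type estimate, paying careful attention to the variable potential $k$. First I would apply $\smash{\TCD_p^e(k,N)}$ to the timelike $p$-dualizable pair $(\mu_0,\mu_1)$ defined in \eqref{Eq:Pair}, obtaining an $\ell_p$-geodesic $(\mu_t)_{t\in[0,1]}$ together with a plan $\smash{\bdpi\in\OptTGeo_p(\mu_0,\mu_1)}$ satisfying the entropic convexity inequality \eqref{Eq:Entropic displacement conv inequ}. Since $\mu_i$ is the normalized uniform distribution on $A_i$, a direct computation of its Boltzmann entropy yields $\scrU_N(\mu_i)=\meas[A_i]^{1/N}$ for $i\in\{0,1\}$.

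Next I would pin down the intermediate mass. As $(\eval_0)_\push\bdpi=\mu_0$ lives on $A_0$ and $(\eval_1)_\push\bdpi=\mu_1$ lives on $A_1$, for $\bdpi$-a.e.~$\gamma$ we have $\gamma\in G(A_0,A_1)$, whence $(\eval_t)_\push\bdpi$ is concentrated on $A_t'$. After replacing $(\mu_t)_{t\in[0,1]}$ by the displacement $\ell_p$-geodesic represented by $\bdpi$ --- still an $\ell_p$-geodesic by \autoref{Re:TL Geo to geo} and still witnessing \eqref{Eq:Entropic displacement conv inequ} by construction --- we obtain $\supp\mu_t\subset A_t'$, so \eqref{Eq:UN Jensen} yields $\scrU_N(\mu_t)\leq\meas[A_t']^{1/N}$, with the outer-measure interpretation invoked if $A_t'$ fails to be Borel. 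Chaining this with \eqref{Eq:Entropic displacement conv inequ} then produces the plan-dependent estimate
\begin{align*}
\meas[A_t']^{1/N} \geq \sigma_{k_\bdpi^-/N}^{(1-t)}(\cost_\bdpi)\,\meas[A_0]^{1/N} + \sigma_{k_\bdpi^+/N}^{(t)}(\cost_\bdpi)\,\meas[A_1]^{1/N}.
\end{align*}

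It then remains to replace the plan-dependent distortion coefficients by the pathwise infima appearing in the statement. By \eqref{Eq:Def kpi+-}, the potential $k_\bdpi^\pm\cost_\bdpi^2$ is precisely the $\bdpi$-average of the rescaled pathwise potentials $k_\gamma^\pm\vert\dot\gamma\vert^2$; combined with the scaling relation \eqref{Eq:Scaling prop sigma}, the monotonicity of $\sigma$ in its potential argument recorded in \autoref{Le:Properties}, and the fact that $\bdpi$ is supported on $G(A_0,A_1)$, one expects to deduce $\sigma_{k_\bdpi^\pm/N}^{(\cdot)}(\cost_\bdpi)\geq\inf_{\gamma\in G(A_0,A_1)}\sigma_{k_\gamma^\pm/N}^{(\cdot)}(\vert\dot\gamma\vert)$, at least after discarding the degenerate cases where either infimum vanishes (in which case the corresponding summand is trivially controlled).

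This last comparison is the main obstacle. In the constant-$k$ setting of \cite{cavalletti2020} it reduces to scalar monotonicity of $\sigma_{K/N}^{(\cdot)}(\theta)$ in $\theta$ together with the trivial bound relating $\cost_\bdpi$ and $\vert\dot\gamma\vert$; for variable $k$ one has to leverage the joint convexity of the function $\rmG_t$ from \autoref{Le:Properties} together with the averaged structure of $k_\bdpi^\pm$. Should a direct comparison prove too delicate, a fallback is to first reduce to a constant lower bound $K:=\inf k(J(A_0,A_1))$ via \autoref{Le:var to const}, obtaining a weaker constant-$K$ Brunn--Minkowski inequality; the genuinely sharp statement is anyway reserved for \autoref{Ch:Applications}, where the localization paradigm bypasses the issue entirely by a pathwise needle decomposition.
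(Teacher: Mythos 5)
Your high-level plan matches the paper's, but two of the steps you leave in a tentative state are actually the heart of the argument and need to be handled differently than you propose.

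First, the step where you replace $(\mu_t)_{t\in[0,1]}$ by the displacement $\ell_p$-geodesic represented by $\bdpi$, asserting it is ``still witnessing \eqref{Eq:Entropic displacement conv inequ} by construction,'' is unjustified. \autoref{Def:TCDe} produces an $\ell_p$-geodesic $(\mu_t)_{t\in[0,1]}$ and a plan $\bdpi$ that \emph{jointly} certify the inequality, but nothing in the definition ties $\mu_t$ to $(\eval_t)_\push\bdpi$ at interior times, so the inequality need not survive the replacement. The paper avoids the plan entirely here: after first reducing to compact $A_0,A_1$ by inner regularity (which you omit, but which is also what makes $A_t'$ Borel), it invokes \autoref{Le:Intermediate pts geodesics} applied to $(\mu_t)_{t\in[0,1]}$ as an $\ell_p$-geodesic to conclude $\supp\mu_t\subset Z_t(\supp\mu_0\times\supp\mu_1)=A_t'$, and then \eqref{Eq:UN Jensen} gives $\scrU_N(\mu_t)\leq\meas[A_t']^{1/N}$ directly.

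Second, you flag the passage from $\sigma_{k_\bdpi^\pm/N}^{(\cdot)}(\cost_\bdpi)$ to $\inf_{\gamma\in G(A_0,A_1)}\sigma_{k_\gamma^\pm/N}^{(\cdot)}(\vert\dot\gamma\vert)$ as ``the main obstacle'' and leave it unresolved, offering a fallback to the constant bound $K:=\inf k(J(A_0,A_1))$ via \autoref{Le:var to const}. That fallback is not sufficient: it produces only constant distortion coefficients $\sigma_{K/N}$, whereas the statement requires the pathwise variable coefficients $\sigma_{k_\gamma^\pm/N}^{(\cdot)}(\vert\dot\gamma\vert)$ (and you appear to conflate this distinction with the ``sharp'' improvement from $\sigma_{k/N}$ to $\tau_{k,N-1}$ in \autoref{Ch:Applications}, which is a different refinement). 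The paper closes the gap with a two-line argument: by \eqref{Eq:Scaling prop sigma} and the definition \eqref{Eq:Def kpi+-} of the averaged potential, one compares $\sigma_{k_\bdpi^\pm/N}^{(\cdot)}(\cost_\bdpi)$ with $\int\sigma_{k_\gamma^\pm/N}^{(\cdot)}(\vert\dot\gamma\vert)\,\rmd\bdpi(\gamma)$ via the convexity property in \autoref{Le:Properties} and Jensen's inequality, then bounds the integral from below by the infimum over $\supp\bdpi\subset G(A_0,A_1)$. Your proposal correctly names the ingredients (convexity of $\rmG_t$, averaged structure of $k_\bdpi^\pm$) but does not actually carry out the comparison, and your backup plan does not prove the stated result.
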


\begin{proof} By inner regularity of $\meas$, it suffices to prove the claim for compact $A_0$ and $A_1$. In this case, $A_t'$ is compact by \autoref{Le:Zt lemma}, and thus $\meas$-measurable.

By $\smash{\TCD_p^e(k,N)}$ the pair $(\mu_0,\mu_1)$ are connected by an $\smash{\ell_p}$-geodesic $(\mu_t)_{t\in [0,1]}$ from $\mu_0$ to $\mu_1$ such that for some $\bdpi\in \OptTGeo_p(\mu_0,\mu_1)$, we have
\begin{align*}
\scrU_N(\mu_t) \geq \sigma_{k_\bdpi^-/N}^{(1-t)}(\cost_\bdpi)\,\scrU_N(\mu_0) + \sigma_{k_\bdpi^+/N}^{(t)}(\cost_\bdpi)\,\scrU_N(\mu_1).
\end{align*}
To estimate the left-hand side from below, we use the  identities
\begin{align}\label{Eq:Identities}
\begin{split}
\scrU_N(\mu_0) &= \meas[A_0]^{1/N},\\
\scrU_N(\mu_1) &= \meas[A_1]^{1/N}
\end{split}
\end{align}
and that, by \autoref{Le:Properties} and Jensen's inequality, 
\begin{align*}
\sigma_{k_\bdpi^-/N}^{(1-t)}(\cost_\bdpi) \geq \int \sigma_{k_\gamma^-/N}^{(1-t)}(\vert\dot\gamma\vert)\d\bdpi(\gamma) \geq \inf \sigma_{k_\gamma^-/N}^{(1-t)}(\vert\dot\gamma\vert).
\end{align*}
where the infimum is taken over all $\gamma \in G(A_0,A_1)$; analogously, 
\begin{align*}
\sigma_{k_\bdpi^+/N}^{(t)}(\cost_\bdpi) \geq \inf \sigma_{k_\gamma^+/N}^{(t)}(\vert\dot\gamma\vert)
\end{align*}
with the infimum again running over all $\gamma\in \supp\bdpi$. 
Here we have implicitly used \eqref{Eq:Scaling prop sigma} twice. On the other hand, from \eqref{Eq:UN Jensen} and \autoref{Le:Intermediate pts geodesics} we obtain
\begin{align*}
\scrU_N(\mu_t) \leq \meas\big[\!\supp\mu_t\big]^{1/N} \leq \meas[A_t']^{1/N}.\tag*{\qedhere}
\end{align*}
\end{proof}

The same strategy, together with the lifting of chronological $\smash{\ell_p}$-optimal couplings from \autoref{Le:Geodesics plan}, lead to a Brunn--Minkowski-type inequality in the infinite-dimensional case.  Given any $0\leq t\leq 1$ we define $\smash{\ubar{k}_t \colon M^2\to \R}$ by
\begin{align*}
\ubar{k}_t(x,y) := \inf \int_0^1 \rmg(s,t)\,k(\gamma_s)\d s,
\end{align*}
where the infimum is taken over all $\gamma\in\TGeo(\mms)$ with $\gamma_0=x$ and $\gamma_1=y$, and $\rmg$ is from \eqref{Eq:1d Green's function}. This function should be compared to the function $\smash{\ubar{k}}$ from \cite{braun2021}, which has been extensively used to formulate and investigate  transport estimates for the heat flow on metric measure spaces with variable curvature bounds.

\begin{corollary}[Dimensionless timelike Brunn--Minkowski inequality]\label{Cor:BM inf dim} We assume $\smash{\TCD_p(k,\infty)}$ and retain the notation from  \autoref{Th:Timelike BM}. Then for every $0\leq t\leq 1$,
\begin{align*}
\log\meas[A_t'] &\geq (1-t)\log\meas[A_0] + t\log\meas[A_1]   - \inf \int_{\mms^2} \ubar{k}_t\,l^2\d\pi,
\end{align*}
where the infimum is taken over all $\smash{\ell_p}$-optimal couplings $\smash{\pi\in\Pi_\ll(\mu_0,\mu_1)}$. Again, $\meas[A_t']$ is interpreted as the outer measure of $A_t'$ if the latter is not $\meas$-measurable.
\end{corollary}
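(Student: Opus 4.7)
The plan is to transfer the strategy of \autoref{Th:Timelike BM} to the dimensionless setting, using the entropy convexity provided by $\smash{\TCD_p(k,\infty)}$ in place of the $\scrU_N$-convexity. By inner regularity of $\meas$ we may assume $A_0$ and $A_1$ are compact, whence $A_t'=Z_t(A_0\times A_1)$ is compact (hence $\meas$-measurable) by \autoref{Le:Zt lemma}. The uniform distributions $\mu_0,\mu_1$ from \eqref{Eq:Pair} belong to $\scrP_\comp^\ac(\mms,\meas)$, and by hypothesis the pair $(\mu_0,\mu_1)$ is timelike $p$-dualizable. Applying $\smash{\TCD_p(k,\infty)}$ yields an $\smash{\ell_p}$-geodesic $(\mu_t)_{t\in[0,1]}$ and $\smash{\bdpi\in\OptTGeo_p(\mu_0,\mu_1)}$ with
\begin{align*}
\Ent_\meas(\mu_t)\leq (1-t)\Ent_\meas(\mu_0) + t\Ent_\meas(\mu_1) - \int_0^1\!\!\int\rmg(s,t)\,k(\gamma_s)\,\vert\dot\gamma\vert^2\d\bdpi(\gamma)\d s.
\end{align*}
Set $\pi:=(\eval_0,\eval_1)_\push\bdpi$; since $\bdpi$ is concentrated on $\TGeo(\mms)\subset\{l(\gamma_0,\gamma_1)>0\}$, the coupling $\pi$ is chronological and $\smash{\ell_p}$-optimal, hence admissible in the infimum on the right-hand side of the statement.

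Next I would identify the three terms. The uniform distributions satisfy $\Ent_\meas(\mu_i) = -\log\meas[A_i]$ for $i=0,1$ by direct computation. By \autoref{Le:Intermediate pts geodesics} we have $\supp\mu_t\subset A_t'$; Jensen's inequality applied to the convex function $x\mapsto x\log x$ therefore yields $\Ent_\meas(\mu_t)\geq -\log\meas[A_t']$, the finiteness issue being moot since the TCD upper bound forces $\Ent_\meas(\mu_t)<\infty$ once $\Ent_\meas(\mu_0),\Ent_\meas(\mu_1)$ and the double integral are finite. For the curvature term, Fubini together with the definition of $\ubar{k}_t$ and the identity $\vert\dot\gamma\vert=l(\gamma_0,\gamma_1)$ for $\bdpi$-a.e.~$\gamma\in\TGeo(\mms)$ give
\begin{align*}
\int_0^1\!\!\int \rmg(s,t)\,k(\gamma_s)\,\vert\dot\gamma\vert^2\d\bdpi(\gamma)\d s
&= \int \vert\dot\gamma\vert^2\Big[\int_0^1 \rmg(s,t)\,k(\gamma_s)\d s\Big]\d\bdpi(\gamma)\\
&\geq \int \ubar{k}_t(\gamma_0,\gamma_1)\,l^2(\gamma_0,\gamma_1)\d\bdpi(\gamma) = \int_{\mms^2}\ubar{k}_t\,l^2\d\pi.
\end{align*}

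Combining the three estimates produces, for this particular $\pi$,
\begin{align*}
\log\meas[A_t']\geq (1-t)\log\meas[A_0]+t\log\meas[A_1]+\int_{\mms^2}\ubar{k}_t\,l^2\d\pi,
\end{align*}
and estimating the right-hand side in terms of the corresponding infimum over all chronological $\smash{\ell_p}$-optimal couplings delivers the claim. The main technical hurdle is justifying Fubini when $k$ is not globally bounded below: this is handled by \autoref{Le:var to const}, which provides a real constant lower bound for $k$ on the compact causal diamond $J(A_0,A_1)$ containing $\bdpi$-a.e.~$\gamma$, on which $\vert\dot\gamma\vert$ is uniformly bounded, so that all integrals appearing above are well-defined and finite.
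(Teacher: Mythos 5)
Your proof follows exactly the strategy the paper indicates (the paper's ``proof'' of \autoref{Cor:BM inf dim} is just the sentence: ``The same strategy \ldots\ leads to a Brunn--Minkowski-type inequality in the infinite-dimensional case''). All the ingredients you invoke are correct: reduction to compact $A_0,A_1$ by inner regularity; the identities $\Ent_\meas(\mu_i)=-\log\meas[A_i]$ for the uniform distributions; the Jensen bound $\Ent_\meas(\mu_t)\geq -\log\meas[A_t']$ via $\supp\mu_t\subset A_t'$ from \autoref{Le:Intermediate pts geodesics}; the observation that $\pi:=(\eval_0,\eval_1)_\push\bdpi$ is an admissible chronological $\smash{\ell_p}$-optimal coupling; the Fubini swap justified by local lower boundedness of $k$ on the compact emerald $J(A_0,A_1)$ via \autoref{Le:var to const}; and the pointwise bound $\int_0^1\rmg(s,t)\,k(\gamma_s)\d s\geq\ubar{k}_t(\gamma_0,\gamma_1)$.

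The one thing you should have flagged rather than glossed over is the sign. You correctly derive
\begin{align*}
\log\meas[A_t']\geq (1-t)\log\meas[A_0]+t\log\meas[A_1]+\int_{\mms^2}\ubar{k}_t\,l^2\d\pi,
\end{align*}
and since $\pi$ is one admissible $\smash{\ell_p}$-optimal coupling, this yields the same with $+\inf_\pi$ on the right. But the corollary as printed has $-\inf_\pi$. That sign is wrong: with $\rmg\geq 0$ from \eqref{Eq:1d Green's function}, the definition of $\ubar{k}_t$ makes it nonnegative whenever $k\geq 0$, so the printed version would give a bound \emph{worse} than the flat case $k\equiv 0$ for nonnegative curvature, which contradicts both the spirit of Brunn--Minkowski-type inequalities and your (correct) derivation. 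The statement in the paper appears to carry a sign typo; your ``$+$'' is the right one, but writing that your estimate ``delivers the claim'' without noting the discrepancy leaves a loose end --- starting from your inequality one can only pass to ``$+\inf$'' (not ``$-\inf$''), and in fact passing to ``$-\inf$'' from the single-$\pi$ bound would require a supremum, which is not available.
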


\begin{remark}[Variants]\label{Re:From geo to displ} Of course, \autoref{Th:Timelike BM} and \autoref{Cor:BM inf dim} hold analogously under $\smash{\wTCD_p^e(k,N)}$ and $\smash{\wTCD_p(k,\infty)}$. In both situations, however, one has to assume \emph{strong} timelike $p$-dualizability of the pair defined in \eqref{Eq:Pair}. 

In the case that $\scrM$ is a timelike $p$-essentially nonbranching $\smash{\wTCD_p^e(k,N)}$ space, better inequalities can be achieved if $(\mu_0,\mu_1)$ is strongly timelike $p$-dualizable. Indeed, by \autoref{Th:Uniqueness geos} and \autoref{Cor:Lifting nonbr} the unique $\smash{\ell_p}$-geodesic connecting $\mu_0$ to $\mu_1$ is in fact a displacement $\smash{\ell_p}$-geodesic, represented by $\smash{\bdpi\in\OptTGeo_p(\mu_0,\mu_1)}$. For every $0\leq t\leq 1$, arguing as for \autoref{Th:Timelike BM} then yields 
\begin{align*}
\meas\big[\!\supp(\eval_t)_\push\bdpi\big]^{1/N} \geq \inf \sigma_{k_\gamma^-/N}^{(1-t)}(\vert\dot\gamma\vert)\,\meas[A_0]^{1/N} + \inf\sigma_{k_\gamma^+/N}^{(t)}(\vert\dot\gamma\vert)\,\meas[A_1]^{1/N},
\end{align*}
where both infima are taken over all $\gamma\in \supp\bdpi$. Thus, considering the Suslin set
\begin{align}\label{Eq:At'}
A_t := \eval_t\big[G(A_0,A_1)\big],
\end{align}
which is no larger than $A_t'$ by \autoref{Le:Intermediate pts geodesics}, we have
\begin{align*}
\meas[A_t]^{1/N} \geq \inf \sigma_{k_\gamma^-/N}^{(1-t)}(\vert\dot\gamma\vert)\,\meas[A_0]^{1/N} + \inf\sigma_{k_\gamma^+/N}^{(t)}(\vert\dot\gamma\vert)\,\meas[A_1]^{1/N},
\end{align*}
where both infima are taken over all $\gamma\in G(A_0,A_1)$.

If $A_0\times A_1\Subset\{l>0\}$, the above assumption of timelike $p$-essential nonbranching can be dropped by \autoref{Le:Geodesics plan}.
\end{remark}

\subsection{Timelike Bonnet--Myers inequality} In this discussion, we will denote by $\smash{\OptTGeo_p^{\ac}(\scrP_\comp(\mms)^2,{ \meas})}$ the set of all $\smash{\bdpi\in\OptTGeo_p(\scrP_\comp(\mms)^2)}$ such that $(\eval_t)_\push\bdpi$ is $\meas$-absolutely continuous for every $0\leq t\leq 1$. We also recall the definition
\begin{align*}
\cost_\bdpi := \big\Vert l \circ (\eval_0,\eval_1)\big\Vert_{L^2(\TGeo(\mms),\bdpi))}
\end{align*}
of the $\Ell^2$-cost associated with $\bdpi$.

\begin{definition}[Effective diameter] The \emph{effective $l$-diameter} of $\mms$ with respect to $k$ and $N$ is defined by
\begin{align*}
\diam_{k/N}\mms &:= \sup  \cost_\bdpi,
\end{align*}
where the supremum is taken over all $\smash{\bdpi\in\OptTGeo_p^\ac(\scrP_\comp(\mms)^2,\meas)}$ such that for some, or equivalently by \autoref{Le:Properties}, \emph{every} $0\leq t\leq 1$, we have  $\smash{\sigma_{k_\bdpi^+/N}^{(t)}(\cost_\bdpi) < \infty}$.
\end{definition}

By definition of the effective $l$-diameter, we have
\begin{align}\label{Eq:Inequ diameters}
\diam_{k/N}\mms \leq \diam^l\supp\meas,
\end{align}
where the latter is the largest value $l$ can take on $\smash{(\supp\meas)^2}$.

\begin{theorem}[Timelike Bonnet--Myers inequality]\label{Th:BonnetMyers} If $\scrM$ satisfies $\smash{\wTCD_p^e(k,N)}$, then equality holds throughout \eqref{Eq:Inequ diameters}, i.e.
\begin{align*}
\diam_{k/N}\mms = \diam^l\supp\meas.
\end{align*}
\end{theorem}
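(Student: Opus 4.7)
The inequality $\diam_{k/N}\mms \leq \diam^l\supp\meas$ is \eqref{Eq:Inequ diameters}, so the task is to establish the reverse. The strategy is to produce, for any $0<D<\diam^l\supp\meas$, a plan $\bdpi\in\OptTGeo_p^\ac(\scrP_\comp(\mms)^2,\meas)$ with finite distortion coefficients and $\cost_\bdpi\geq D$; then $\diam_{k/N}\mms\geq D$, and letting $D\uparrow\diam^l\supp\meas$ yields the conclusion. Throughout, the only hypothesis in play is $\wTCD_p^e(k,N)$.

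Fix $0<D<\diam^l\supp\meas$ and pick $x,y\in\supp\meas$ with $l(x,y)>D$. Lower semicontinuity of $l_+$ gives that $\{l>D\}=\{l_+>D\}$ is open in $\mms^2$; hence there are open neighborhoods $U$ of $x$ and $V$ of $y$ with $U\times V\subset\{l>D\}$. Since $x,y\in\supp\meas$, and $\meas$ is Radon while $\Top$ is locally compact (\autoref{Cor:Hausdorff}), I can shrink $U,V$ to compact subsets $C_0,C_1\subset\supp\meas$ of positive $\meas$-measure contained in $U\cap\supp\meas$ and $V\cap\supp\meas$ respectively. Set
\begin{align*}
\mu_0:=\meas[C_0]^{-1}\,\meas\mres C_0, \qquad \mu_1:=\meas[C_1]^{-1}\,\meas\mres C_1.
\end{align*}
Both lie in $\scrP_\comp^\ac(\mms,\meas)$, and because $C_0\times C_1\subset\{l>0\}$, the pair $(\mu_0,\mu_1)$ is strongly timelike $p$-dualizable by \autoref{Ex:Str tl dual}.

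The hypothesis $\wTCD_p^e(k,N)$ then yields an $\ell_p$-geodesic $(\mu_t)_{t\in[0,1]}$ connecting $\mu_0$ to $\mu_1$ and a plan $\bdpi\in\OptTGeo_p(\mu_0,\mu_1)$ satisfying \eqref{Eq:Entropic displacement conv inequ}. The identities \eqref{Eq:Identities} yield $\scrU_N(\mu_0)=\meas[C_0]^{1/N}>0$ and $\scrU_N(\mu_1)=\meas[C_1]^{1/N}>0$. By \autoref{Le:Intermediate pts geodesics}, each $\mu_t$ is compactly supported; consequently $\Ent_\meas(\mu_t)>-\infty$ by Jensen's inequality, i.e.\ $\scrU_N(\mu_t)<\infty$ for every $0\leq t\leq 1$. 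The entropic convexity inequality therefore forces $\sigma_{k_\bdpi^-/N}^{(1-t)}(\cost_\bdpi)$ and $\sigma_{k_\bdpi^+/N}^{(t)}(\cost_\bdpi)$ to be finite, and also $\scrU_N(\mu_t)>0$, which in turn forces $\Ent_\meas(\mu_t)<\infty$ and hence $\mu_t\ll\meas$. Thus $\bdpi\in\OptTGeo_p^\ac(\scrP_\comp(\mms)^2,\meas)$, and $\bdpi$ is admissible for the supremum defining $\diam_{k/N}\mms$.

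Finally, since the endpoint coupling $(\eval_0,\eval_1)_\push\bdpi\in\Pi(\mu_0,\mu_1)$ is concentrated on $C_0\times C_1\subset\{l>D\}$,
\begin{align*}
\cost_\bdpi^2=\int_{\mms^2} l^2\,\rmd (\eval_0,\eval_1)_\push\bdpi \geq D^2,
\end{align*}
so $\cost_\bdpi\geq D$ and $\diam_{k/N}\mms\geq D$. Letting $D\uparrow\diam^l\supp\meas$ completes the proof. The only subtlety is the simultaneous extraction of absolute continuity of the interior measures and finiteness of the distortion coefficients from the single scalar inequality \eqref{Eq:Entropic displacement conv inequ}, both of which hinge on the positivity of $\scrU_N$ at the endpoints together with the automatic lower bound on entropy afforded by compact support.
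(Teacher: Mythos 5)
Your argument is the contrapositive of the paper's: the paper assumes $\diam_{k/N}\mms < \diam^l\supp\meas$, deduces that the distortion coefficients along any relevant $\bdpi\in\OptTGeo_p(\mu_0,\mu_1)$ must be infinite, and derives the contradiction $\scrU_N(\mu_t)=\infty$; you instead exhibit, for each $D<\diam^l\supp\meas$, an admissible competitor $\bdpi$ with $\cost_\bdpi\geq D$. The ingredients coincide exactly — uniform distributions supported near a chronologically separated pair in $\supp\meas$, strong timelike $p$-dualizability via \autoref{Ex:Str tl dual}, the scalar inequality \eqref{Eq:Entropic displacement conv inequ}, positivity of $\scrU_N$ at the endpoints, finiteness of $\scrU_N$ along the interpolation from compact supports — so this is essentially the same proof presented directly rather than by contradiction.

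One step merits a caveat, in your write-up and in the paper's alike. You pass from ``$\mu_t\ll\meas$ for all $t$'' (which the entropic inequality does force for $0<t<1$) to $\bdpi\in\OptTGeo_p^\ac(\scrP_\comp(\mms)^2,\meas)$, but the latter requires $(\eval_t)_\push\bdpi\ll\meas$ for every $t$, and \autoref{Def:TCDe} posits the $\ell_p$-geodesic $(\mu_t)$ and the plan $\bdpi$ as separate objects with matching endpoints — nothing in the definition says that $\bdpi$ represents $(\mu_t)$, so a priori $(\eval_t)_\push\bdpi\neq\mu_t$ for interior $t$. The paper's own proof asserts $\bdpi\in\OptTGeo_p^\ac(\mu_0,\mu_1,\meas)$ without further comment, so this is not a defect you introduced; still, it is worth flagging. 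Either the definition of $\diam_{k/N}\mms$ should be understood to range over all of $\OptTGeo_p(\scrP_\comp(\mms)^2)$ with finite distortion coefficients (in which case \eqref{Eq:Inequ diameters} still holds since the endpoint marginals of such a plan are compactly supported and $\meas$-absolutely continuous, hence supported in $\supp\meas$), or the plan in \autoref{Def:TCDe} should be taken to represent the accompanying $\ell_p$-geodesic.
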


\begin{proof} Assume the existence of $\varepsilon > 0$ and  $x,y\in\supp\meas$ such that
\begin{align*}
\diam_{k/N}\mms + 2\,\varepsilon\leq l(x,y), 
\end{align*}
whence $x\ll y$. By $l$-geodesy of $\mms$ and \autoref{Cor:Hausdorff}, there are precompact, open sets $A_0,A_1\subset\mms$ with
\begin{align}\label{Eq:diameps}
\diam_{k/N}\mms + \varepsilon \leq \inf\{l(x',y') : x'\in A_0,\,y'\in A_1\}.
\end{align}
Finally, in terms of $A_0$ and $A_1$ we define $\smash{\mu_0,\mu_1\in\scrP_\comp^\ac(\mms,\meas)}$ by \eqref{Eq:Pair}. Then the  pair $(\mu_0,\mu_1)$ is strongly timelike $p$-dualizable by \eqref{Eq:diameps} and \autoref{Ex:Str tl dual}. 

Every $\smash{\bdpi\in\OptTGeo_p(\mu_0,\mu_1)}$ and every $0< t< 1$ obeys
\begin{align}\label{Eq:sigma infty}
\sigma_{k_\bdpi^-/N}^{(1-t)}(\cost_\bdpi) =  \sigma_{k_\bdpi^+/N}^{(t)}(\cost_\bdpi) = \infty
\end{align}
by \autoref{Le:Properties} and our assumption. On the other hand, our curvature assumption provides us with an $\smash{\ell_p}$-geodesic $(\mu_t)_{t\in [0,1]}$ and $\smash{\bdpi\in \OptTGeo_p^\ac(\mu_0,\mu_1,\meas)}$ certifying the displacement semiconvexity inequality of the functional $\scrU_N$. Let $A_t'$ be the set from \eqref{Eq:At} in terms of $A_0$ and $A_1$. Combined with Jensen's inequality, \eqref{Eq:sigma infty}, and the identities \eqref{Eq:Identities} for $\scrU_N(\mu_0)$ and $\scrU_N(\mu_1)$, our discussion yields
\begin{align*}
\meas\big[J(\supp\mu_0,\supp\mu_1)\big]^{1/N}\geq \meas[A_t']^{1/N} \geq \scrU_N(\mu_t)\geq \infty.
\end{align*}
Since $\meas$ is finite on compact sets, this is however impossible.
\end{proof}

\subsection{Timelike Schneider inequality} 

\begin{theorem}[Past timelike Schneider inequality]\label{Th:Schneider} We assume $\smash{\wTCD_p^e(k,N)}$ and suppose that some $o\in\supp\meas$, $\beta> 0$, and $R>0$ satisfy
\begin{align*}
k\geq N\,\Big[\frac{1}{4}+\beta^2\Big]\,l(\cdot,o)^{-2}\quad\textnormal{\textit{on} }\{l(\cdot,o) > R\}.
\end{align*}
Then $\smash{\diam^l (I^-(o) \cap\supp\meas) \leq R\,\rme^{\pi/\beta}}$.
\end{theorem}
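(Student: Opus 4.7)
The plan is to argue by contradiction, following the Bonnet--Myers scheme of \autoref{Th:BonnetMyers} but with a Sturm comparison that extracts the $e^{\pi/\beta}$ factor from the Euler equation $v''+(1/4+\beta^2)t^{-2}v=0$. Suppose $\diam^l(I^-(o)\cap\supp\meas)>Re^{\pi/\beta}$. Then there exist $x,y\in I^-(o)\cap\supp\meas$ with $l(x,y)>Re^{\pi/\beta}$; the reverse triangle inequality applied to the chain $x\ll y\ll o$ gives $\rho(x):=l(x,o)>Re^{\pi/\beta}$. Since $\wTCD_p^e(k,N)$ makes $\supp\meas$ an $l$-geodesic space (\autoref{Re:Only the support matters}) and both $x$ and $o$ lie in $\supp\meas$, there is an arc-length $l$-geodesic $\bar\gamma:[0,\rho(x)]\to\supp\meas$ from $x$ to $o$. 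Applying \autoref{Le:ADD GEO} to $\bar\gamma|_{[s,\rho(x)]}$ shows that $\rho(\bar\gamma_s)=\rho(x)-s$ for every $s$, and in particular $\bar\gamma_s\in I^-(o)$ for $s<\rho(x)$.

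Next I run a Sturm comparison along $\bar\gamma$. Set $s_*:=\rho(x)(1-e^{-\pi/\beta})$. Because $\rho(x)>Re^{\pi/\beta}$, one checks $s_*<\rho(x)-R$, so on $[0,s_*]$ the radius $\rho(\bar\gamma_s)=\rho(x)-s$ exceeds $R$, and the curvature hypothesis yields
\begin{align*}
\frac{k(\bar\gamma_s)}{N}\geq\frac{1/4+\beta^2}{(\rho(x)-s)^2}.
\end{align*}
The comparison Euler equation $v''+(1/4+\beta^2)(\rho(x)-s)^{-2}v=0$ admits the explicit solution $v(s)=(\rho(x)-s)^{1/2}\sin\bigl(\beta\log[\rho(x)/(\rho(x)-s)]\bigr)$ with $v(0)=0$ and first positive zero precisely at $s_*$. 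Sturm's theorem therefore forces the generalized sine $\SIN_{k_{\bar\gamma}^+/N}$ to vanish somewhere in $(0,s_*]$, so $\sigma_{k_{\bar\gamma}^+/N}^{(t)}(\theta)=\infty$ for every $\theta\geq s_*$ and every $t\in(0,1)$ by \autoref{Le:Properties}.

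To turn this oscillation into a contradiction with $\wTCD_p^e(k,N)$, I fix $\theta\in(s_*,\rho(x))$, set $z:=\bar\gamma_\theta\in\supp\meas\cap I^-(o)$, and pick open neighborhoods $A_0\ni x$ and $A_1\ni z$ of finite positive $\meas$-measure inside $\supp\meas\cap I^-(o)$ small enough that $A_0\times A_1\Subset\{l>0\}$ (so $(\mu_0,\mu_1)$ is strongly timelike $p$-dualizable by \autoref{Ex:Str tl dual}) and, by continuity of $l_+$ and of $\rho|_{I^-(o)}$, that $\inf_{A_0}\rho>Re^{\pi/\beta}$ and $\inf_{A_0\times A_1}l>s_*$. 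Taking $\mu_i$ to be $\meas[A_i]^{-1}\meas\mres A_i$ and invoking $\wTCD_p^e(k,N)$ supplies $(\mu_t)_{t\in[0,1]}$ and $\bdpi\in\OptTGeo_p(\mu_0,\mu_1)$ obeying \eqref{Eq:Entropic displacement conv inequ}. The left-hand side is at most $\meas[\supp\mu_t]^{1/N}<\infty$ by \eqref{Eq:UN Jensen} and \autoref{Le:Intermediate pts geodesics}, so a contradiction follows once $\sigma_{k_\bdpi^+/N}^{(t)}(\cost_\bdpi)=\infty$ for some $t\in(0,1)$.

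The main obstacle is precisely this last passage: promoting the pointwise vanishing of $\SIN_{k_{\bar\gamma}^+/N}$ into vanishing of the $\bdpi$-averaged sine $\SIN_{k_\bdpi^+/N}$, because log-convexity of $\sigma$ in the potential (\autoref{Le:Properties}) pushes naive Jensen-style inequalities in the wrong direction. My plan is to bound $k$ pointwise from below by a lower semicontinuous potential $\tilde k$ agreeing with $N(1/4+\beta^2)\rho^{-2}$ on $\{\rho>R\}$ and with an absolute compact-set constant $C\leq\inf k(J(\supp\mu_0,\supp\mu_1))$ on $\{\rho\leq R\}$, use the \emph{monotonicity} of $\sigma$ in the potential (which does go the right way) to reduce to analyzing $\sigma_{\tilde k_\bdpi^+/N}^{(t)}(\cost_\bdpi)$, and then shrink $A_0, A_1$ so that $\bdpi$ concentrates near $\bar\gamma|_{[0,\theta]}$. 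Combining the narrow lower semicontinuity of the distortion coefficients (\autoref{Le:LSC sigma}) with the sharpness of the reverse triangle at the radial configuration $x\ll z\ll o$, which propagates to $\bdpi$-a.e.~$\gamma$ as $A_0,A_1$ shrink, the averaged potential inherits the ODE behavior of $k_{\bar\gamma}^+$ in the limit, forcing the desired infinite distortion and the contradiction.
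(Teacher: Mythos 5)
Your ODE comparison is set up correctly, and in fact more cleanly than the paper's: the solution $v(s)=(\rho(x)-s)^{1/2}\sin\bigl(\beta\log[\rho(x)/(\rho(x)-s)]\bigr)$ of the Euler equation, the first positive zero at $s_*=\rho(x)(1-\rme^{-\pi/\beta})$, and the reduction to $\rho(x)>R\,\rme^{\pi/\beta}\Rightarrow s_*<\rho(x)-R$ are all sound, and they correctly extract the $\rme^{\pi/\beta}$ factor. You have also correctly diagnosed the obstacle: \eqref{Eq:Entropic displacement conv inequ} involves the distortion coefficient of the $\bdpi$-averaged potential $k_\bdpi^\pm$, and log-convexity of $\sigma$ in the potential (\autoref{Le:Properties}) yields $\sigma_{k_\bdpi^\pm/N}\le \int \sigma_{k_\gamma^\pm/N}\,\rmd\bdpi$, which is the wrong direction for forcing blow-up from pointwise blow-up.

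The gap is that your proposed workaround cannot close. If you shrink $A_0\to\{x\}$ and $A_1\to\{z\}$, the normalized restrictions $\mu_0,\mu_1$ leave $\smash{\scrP_\comp^\ac(\mms,\meas)}$ (they converge to Dirac masses), so $\wTCD_p^e(k,N)$ no longer applies to the limiting data; and for each fixed pair $(A_0,A_1)$ the Jensen obstruction persists, so \autoref{Le:LSC sigma} gives no contradiction in the limit. What the paper does instead — and what you need — is to first pass from $\wTCD_p^e(k,N)$ to the timelike Brunn--Minkowski inequality, \autoref{Th:Timelike BM} in the form extended to $\wTCD$ and strong timelike $p$-dualizability by \autoref{Re:From geo to displ} plus \autoref{Ex:Str tl dual}. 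That inequality bounds $\meas[A_t']^{1/N}$ from below by an \emph{infimum over individual geodesics} $\gamma\in G(A_0,A_1)$, not by a $\bdpi$-averaged coefficient. With that reformulation your pointwise Sturm computation is exactly what is needed: every $\gamma\in G(A_0,A_1)$ stays in the region where the curvature hypothesis controls $k_\gamma^\pm$ past the first comparison zero, hence $\smash{\sigma_{k_\gamma^\pm/N}^{(\cdot)}(\vert\dot\gamma\vert)=\infty}$ for each such $\gamma$ by the nondegeneracy and monotonicity statements of \autoref{Le:Properties}, so the infimum is $\infty$ and $\meas[A_t']=\infty$ contradicts the Radon property of $\meas$.

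Two further points worth fixing when you implement this. First, take $A_1$ to be a neighborhood of $o$ itself (as the paper does with $x_1:=o$), not of an interior point $z=\bar\gamma_\theta$: with $\gamma_1$ near $o$ you get $\rho(\gamma_s)\geq l(\gamma_s,\gamma_1)=(1-s)\vert\dot\gamma\vert$ directly from the reverse triangle inequality, which controls where the curvature hypothesis applies \emph{uniformly over all} $\gamma\in G(A_0,A_1)$, not just the single radial one. Second, since the Brunn--Minkowski bound runs over a whole set of geodesics with lengths varying over $[\inf l(A_0\times A_1),\sup l(A_0\times A_1)]$, you cannot use the slack-free comparison potential built from $\rho(x)$ alone; you need to loosen the Euler potential by a factor such as $\eta=\varepsilon\,\rme^{\pi/\beta}$ accounting for the diameter of $A_0\times A_1$ — precisely the $\varepsilon$-slack the paper introduces, together with the slightly stronger contrapositive assumption $l(x_0,o)>(R+4\varepsilon\rme^{\pi/\beta})\rme^{\pi/\beta}$ that lets $\varepsilon\downarrow 0$ at the end.
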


\begin{proof} The proof is comparable to the argument for  \autoref{Th:BonnetMyers}. For notational convenience, we set $x_1 := o$. Assume to the contrapositive that some $x_0 \in \supp\meas$ and some $0<\varepsilon <R$ satisfy the inequality
\begin{align}\label{Eq:l01 inequ}
l(x_0,x_1)> (R+4\,\varepsilon\,\rme^{\pi/\beta})\,\rme^{\pi/\beta}.
\end{align}
As in the proof of \autoref{Th:BonnetMyers}, we choose  precompact, open neighborhoods $A_0\subset\mms$ of $x_0$ and $A_1\subset\mms$ of $x_1$ such that for every $\smash{x\in \bar{A}_0}$ and every $\smash{y\in \bar{A}_1}$,
\begin{align}\label{Eq:l inequ}
\big\vert l(x,y) - l(x_0,x_1) \big\vert \leq \varepsilon\,\rme^{\pi/\beta}.
\end{align}
In particular, for every timelike maximizer $\smash{\bar\gamma\colon[0, l(x,x_1)]\to \mms}$ with proper-time parametrization connecting $x$ to $x_1$ we have, whenever $\smash{0<r< \varepsilon}$,
\begin{align*}
l(\bar\gamma_r, x_1) &= l(x,x_1)-r\\
&\geq l(x_0,x_1) - \varepsilon\,\rme^{\pi/\beta} -r \\
&> (R+ 4\,\varepsilon\,\rme^{\pi/\beta})\,\rme^{\pi/\beta} - \varepsilon\,\rme^{\pi/\beta} -r\\
&> R+\varepsilon - r\\
&> R.
\end{align*}
For such $r$, our hypothesis on $k$ together with \eqref{Eq:l inequ} thus yields
\begin{align*}
k(\bar\gamma_r) &\geq N\,\Big[\frac{1}{4}+\beta^2\Big]\,l(\bar\gamma_r,x_1)^{-2}\\
&= N\,\Big[\frac{1}{4}+\beta^2\Big]\,\big[l(x,x_1) - r\big]^{-2}\\
&\geq N\,\Big[\frac{1}{4}+\beta^2\Big]\,\big[\varepsilon\,\rme^{\pi/\beta} + l(x_0,x_1) - r\big]^{-2}.
\end{align*}

Next, define $\smash{\kappa\colon [0, \varepsilon] \to \R}$ through
\begin{align*}
\kappa(r) := \Big[\frac{1}{4}+\beta^2\Big]\,\big[\eta + l(x_0,x_1) -r\big]^{-2},
\end{align*}
where $\smash{\eta := \varepsilon\,\rme^{\pi/\beta}}$. In the notation of \eqref{Eq:kgamma defn}, the preceding paragraph  provides the bound $\smash{k_\gamma^+/N} \geq \kappa$ on $\smash{[0,\varepsilon]}$, which directly transfers to $\smash{\SIN_{k_\gamma^+/N} \geq \SIN_\kappa}$ on the same interval by \autoref{Re:Indep E}. We can explicitly express $\smash{\SIN_\kappa}$ as
\begin{align*}
\SIN_\kappa(r) = c\,\sqrt{\eta + l(x_0,x_1) - r}\,\sin\!\Big[\beta\log\!\Big[\frac{\eta  + l(x_0,x_1) - r}{\eta + l(x_0,x_1)}\Big] + \pi\Big]
\end{align*}
for a normalization constant $c>0$. Of course, the first zero of $\SIN_\kappa$ is $r_1 = 0$, and its second zero is $r_2 = (\eta + l(x_0,x_1))\,(1-\rme^{-\pi/\beta})$. Consequently, the  inequalities $\varepsilon<R$, \eqref{Eq:l01 inequ}, and \eqref{Eq:l inequ} yield, for every $\smash{x\in \bar{A}_0}$,
\begin{align}\label{Eq:r2 est}
\begin{split}
r_2 &= \varepsilon\,\rme^{\pi/\beta}\,(1-\rme^{-\pi/\beta}) + l(x_0,x_1) - l(x_0,x_1)\,\rme^{-\pi/\beta}\\
&< \varepsilon\,\rme^{\pi/\beta}\,(1-\rme^{-\pi/\beta}) + l(x,y) + \varepsilon\,\rme^{\pi/\beta} -R - 4\,\varepsilon\,\rme^{\pi/\beta}\\
&= l(x,y) - R - \varepsilon - 2\,\varepsilon\,\rme^{\pi/\beta}\\
&< l(x,y) - 2\,\varepsilon.
\end{split}
\end{align}

To wrap up the argument and to arrive at the desired contradiction, we falsify a weaker version of the timelike Brunn--Minkowski inequality from \autoref{Th:Timelike BM}. Consider the set of $l$-geodesics $G(A_0,A_1)$ connecting $A_0$ and $A_1$ as well as the associated $t$-intermediate point set $A_t'$, where $0\leq t\leq 1$, from \eqref{Eq:At}. Moreover, in terms of $A_0$ and $A_1$ we define $\smash{\mu_0,\mu_1\in\scrP_\comp^\ac(\mms,\meas)}$ as in \eqref{Eq:Pair}, respectively. By construction and \autoref{Ex:Str tl dual}, the pair $(\mu_0,\mu_1)$ is strongly timelike $p$-dualizable. Furthermore, if $0 <r<\varepsilon$, by  \eqref{Eq:r2 est} every $\gamma\in G(A_0,A_1)$ saturates
\begin{align*}
\vert\dot\gamma\vert - r = l(\gamma_0,\gamma_1) - r > r_2 + \varepsilon.
\end{align*}
In other words, for sufficiently small $r>0$, the argument $\vert \dot\gamma\vert -r$ lies strictly past the first zero of $\kappa$ for every $\gamma\in G(A_0,A_1)$. The nondegeneracy and monotonicity properties from \autoref{Le:Properties} and the definition of the distortion coefficients under consideration  imply for sufficiently small $t>0$ that
\begin{align*}
\meas[A_t']^{1/N} \geq \inf \sigma_{k_\gamma^-/N}^{(1-t)}(\vert\dot\gamma\vert)\,\meas[A_0]^{1/N} = \infty,
\end{align*}
where the infimum is taken over all $\gamma\in G(A_0, A_1)$. As in the proof of \autoref{Th:BonnetMyers}, this contradicts the Radon property of $\meas$.
\end{proof}

By \autoref{Re:Causal reversal}, \autoref{Th:Schneider} readily yields the following.

\begin{corollary}[Future timelike Schneider inequality]\label{Cor:FUTURE} Assume $\smash{\wTCD_p^e(k,N)}$, and suppose that some $o\in\supp\meas$, $\beta > 0$, and $R>0$ satisfy
\begin{align*}
k \geq N\,\Big[\frac{1}{4}+\beta^2\Big]\,l(o,\cdot)^{-2}\quad\textnormal{\textit{on} }\{l(o,\cdot)>R\}. 
\end{align*}
Then $\smash{\diam^l (I^+(o) \cap\supp\meas) \leq R\,\rme^{\pi/\beta}}$.
\end{corollary}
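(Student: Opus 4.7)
The plan is to deduce this from the past Schneider inequality (\autoref{Th:Schneider}) by invoking the causal-reversal construction of \autoref{Ex:Causal reversal} and its interaction with the curvature-dimension condition recorded in \autoref{Re:Causal reversal}. Concretely, I would pass to the reversed metric measure spacetime $\scrM^\leftarrow = (\mms, l^\leftarrow, \meas)$, where $l^\leftarrow(x,y) := l(y,x)$, and apply \autoref{Th:Schneider} there with the \emph{same} choice of $o$, $\beta$, and $R$.

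First I would verify the hypotheses transfer. \autoref{Re:Causal reversal} guarantees that $\scrM^\leftarrow$ again satisfies $\wTCD_p^e(k,N)$, so the abstract curvature hypothesis is in place. Next, by the very definition of $l^\leftarrow$ we have $l^\leftarrow(\cdot,o) = l(o,\cdot)$ as functions on $\mms$; hence the sub-level set $\{l^\leftarrow(\cdot,o) > R\}$ coincides with $\{l(o,\cdot) > R\}$, and the pointwise bound
\[
k \geq N\,\Big[\tfrac14 + \beta^2\Big]\,l(o,\cdot)^{-2}\quad\text{on }\{l(o,\cdot)>R\}
\]
is exactly the hypothesis required to apply \autoref{Th:Schneider} to $\scrM^\leftarrow$ at the point $o\in\supp\meas$.

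Applying \autoref{Th:Schneider} to $\scrM^\leftarrow$ yields $\diam^{l^\leftarrow}\!\big(I^{-}_{l^\leftarrow}(o)\cap\supp\meas\big) \leq R\,\rme^{\pi/\beta}$, where $I^-_{l^\leftarrow}$ denotes the chronological past computed with $l^\leftarrow$. To conclude, I identify the quantities on the two sides of the reversal. Since $l^\leftarrow(x,o) > 0$ iff $l(o,x) > 0$, we have $I^{-}_{l^\leftarrow}(o) = I^+(o)$ as subsets of $\mms$. Moreover, for any subset $A\subset\mms$,
\[
\diam^{l^\leftarrow} A \;=\; \sup_{x,y\in A} l^\leftarrow(x,y) \;=\; \sup_{x,y\in A} l(y,x) \;=\; \diam^l A.
\]
Substituting these two identifications into the bound for $\scrM^\leftarrow$ produces exactly the desired estimate $\diam^l\!\big(I^+(o)\cap\supp\meas\big) \leq R\,\rme^{\pi/\beta}$.

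There is no substantive obstacle: all the real analytic and geometric work is already packaged into \autoref{Th:Schneider}, and the only thing to be careful about is checking that causal reversal preserves both the curvature-dimension hypothesis and the structural objects ($k$, $\supp\meas$, the sub-level set, the chronological future/past, and the $l$-diameter) in the manner claimed. Both points are transparent from \autoref{Ex:Causal reversal} and \autoref{Re:Causal reversal}.
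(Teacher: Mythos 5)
Your argument is correct and is precisely what the paper does: it states that \autoref{Cor:FUTURE} follows from \autoref{Th:Schneider} by causal reversal, citing \autoref{Re:Causal reversal} to transfer the $\wTCD_p^e(k,N)$ hypothesis and \autoref{Ex:Causal reversal} for the reversed structure. Your verification that $l^\leftarrow(\cdot,o)=l(o,\cdot)$, $I^-_{l^\leftarrow}(o)=I^+(o)$, and $\diam^{l^\leftarrow}=\diam^l$ is exactly the bookkeeping the paper leaves implicit.
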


\subsection{Timelike Bishop--Gromov inequality}\label{Sub:Bishop Gromov}

Following \cite[Sec.~3.1]{cavalletti2020}, let
\begin{align*}
\sfB^l(x,r) := \{y\in I^+(x) : l(x,y) < r\}\cup\{x\}
\end{align*}
be the future ``ball''  anchored at $x\in \mms$ with radius $r>0$. Such a set typically has infinite volume, e.g.~it is a hyperboloid in flat Minkowski space. To take the $\meas$-measure of parts of such balls, one localizes them by star shaped sets. Here a set $E\subset I^+(x) \cup\{x\}$ is called \emph{$l$-star shaped} with respect to $x$ if every $\gamma\in\TGeo(\mms)$ with $\gamma_0=x$ and $\gamma_1 \in E$ does not leave $E$. Let $E$ be such a set which we assume to be compact in addition.  Using the Radon property of $\meas$, we set
\begin{align}
\label{Eq:volume}
\begin{split}
\sfv(r) &:= \meas\big[\bar{\sfB}^l(x,r) \cap E\big],\\
\sfs(r) &:= \limsup_{\delta \to 0} \delta^{-1}\,\meas\big[(\bar{\sfB}^l(x,r+\delta)\setminus \sfB^l(x,r)) \cap E\big]
\end{split}
\end{align}
where $\bar \sfB^l(x,r)$ denotes the closure of $\sfB^l(x,r)$.

The following construction of compact and $l$-star shaped sets will be useful in the proof of \autoref{Th:Bishop Gromov}. Here, analogously to \eqref{Eq:At}, $G(x,A)$ denotes the set of all $\gamma\in\TGeo(\mms)$ with $\gamma_0 = x$ and $\gamma_1\in A$, where $x\in\mms$ and $A\subset \mms$.

\begin{lemma}[Filled future cone]\label{Le:Star-shaped} Given any $x\in \mms$ and any compact $C\subset\mms$ with $\{x\}\times C \subset\{l>0\}$,  the set
\begin{align*}
E := \bigcup_{t\in [0,1]} \eval_t\big[G(x, C)\big]
\end{align*} 
is contained in $I^+(x) \cup\{x\}$, compact, and $l$-star shaped.
\end{lemma}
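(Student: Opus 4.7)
I would verify the three claims in turn, handling compactness first since it is a direct application of \autoref{Cor:Cpt TGeo} and the other assertions are essentially bookkeeping.

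\textbf{Containment in} $I^+(x) \cup \{x\}$. Any $z\in E$ has the form $z = \gamma_t$ for some $\gamma\in G(x,C)$ and $0\leq t\leq 1$. If $t=0$ then $z=x$, while for $t>0$ the definition $\gamma\in\TGeo(\mms)$ directly gives $l(x,\gamma_t) = t\,l(\gamma_0,\gamma_1) > 0$, so $z\in I^+(x)$.

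\textbf{Compactness.} By lower semicontinuity of $l_+$ and compactness of $C$, the hypothesis $\{x\}\times C\subset\{l>0\}$ yields $r := \min\{l(x,y):y\in C\} > 0$. Then $G(x,C)$ is contained in the set $G_r$ of \autoref{Cor:Cpt TGeo} with $C_0=\{x\}$, $C_1=C$; moreover $G(x,C)$ is easily seen to be uniformly closed (using continuity of $l_+$), hence compact in $\Cont([0,1];\mms)$. The evaluation map $[0,1]\times G(x,C)\to \mms$, $(t,\gamma)\mapsto\gamma_t$, is continuous for the uniform topology, so $E$ is the image of a compact set and therefore compact.

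\textbf{$l$-star shapedness.} Here is the main content. Given $\gamma\in\TGeo(\mms)$ with $\gamma_0=x$ and $\gamma_1\in E$, write $\gamma_1 = \sigma_s$ for some $\sigma\in G(x,C)$ and $s\in[0,1]$. The case $s=0$ is vacuous ($\gamma_1 = x$ would contradict $\gamma_0\ll \gamma_1$), and if $s=1$ then $\gamma\in G(x,C)$ directly and there is nothing to prove. For $0<s<1$ the plan is to concatenate $\gamma$ with the tail $\sigma|_{[s,1]}$ and to reparametrize affinely to produce an element $\tilde\gamma\in G(x,C)$ whose initial segment (after rescaling) is $\gamma$. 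Specifically, since $l(x,\sigma_s) = s\,l(\sigma_0,\sigma_1) = l(x,\gamma_1)$ and $l(\sigma_s,\sigma_1)=(1-s)\,l(\sigma_0,\sigma_1)$, the reverse triangle inequality \eqref{Eq:Reverse tau} combined with the $l$-geodesic property of $\gamma$ and $\sigma$ will force the concatenation to be a maximizer from $x$ to $\sigma_1$ of $l$-length $l(\sigma_0,\sigma_1)$. Reparametrizing so that $\tilde\gamma_{st}=\gamma_t$ for $t\in[0,1]$ and $\tilde\gamma_r = \sigma_r$ for $r\in[s,1]$, a short case analysis with \eqref{Eq:Reverse tau} (split at $r=s$) yields $l(\tilde\gamma_{r_1},\tilde\gamma_{r_2}) = (r_2-r_1)\,l(\sigma_0,\sigma_1)$ for all $0\leq r_1<r_2\leq 1$, so $\tilde\gamma\in\TGeo(\mms)$ and in fact $\tilde\gamma\in G(x,C)$. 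Consequently $\gamma_t = \tilde\gamma_{st} = \eval_{st}(\tilde\gamma)\in E$ for every $0\leq t\leq 1$, proving that $E$ is $l$-star shaped with respect to $x$.

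The main (minor) obstacle is the concatenation/reparametrization step: one must verify affine parametrization across the junction point $r=s$, for which the reverse triangle inequality gives the upper bound on $l(\tilde\gamma_{r_1},\tilde\gamma_{r_2})$ while the lower bound follows from the $l$-geodesy of both pieces. Continuity of $\tilde\gamma$ is automatic from \autoref{Le:Continuity geos}.
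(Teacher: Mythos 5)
Your proof is correct and takes essentially the same route as the paper's: the containment in $I^+(x)\cup\{x\}$ is immediate from the definition of $\TGeo(\mms)$, compactness hinges on \autoref{Cor:Cpt TGeo}, and $l$-star shapedness is established by concatenating $\gamma$ with the tail of $\sigma$ and verifying the affine parametrization across the junction via the reverse triangle inequality \eqref{Eq:Reverse tau}. The only cosmetic difference is in the compactness step, which you phrase as the continuous image under the evaluation map of the compact set $[0,1]\times G(x,C)$, whereas the paper argues by sequential compactness (appealing to \autoref{Th:Compact Polish} to metrize $J(x,C)$); the two are interchangeable.
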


\begin{proof} The inclusion $E\subset I^+(x) \cup\{x\}$ is clear by \autoref{Cor:Causality}.

Now we show compactness of $E$. Since $E\subset J(x,C)$, by \autoref{Th:Compact Polish} it suffices to prove its sequential compactness. Therefore, let $(z_n)_{n\in\N}$ be a sequence in $E$ and write $\smash{z_n := \gamma^n_{t_n}}$ for some  sequences $\smash{(\gamma^n)_{n\in\N}}$ in $G(x,C)$ and $(t_n)_{n\in\N}$ in $[0,1]$, respectively. Since $\inf l(\{x\}\times C) > 0$,  \autoref{Cor:Cpt TGeo} entails the existence of a uniform limit $\gamma\in G(x,C)$ of a nonrelabeled subsequence of $(\gamma^n)_{n\in\N}$. By possibly passing to a further nonrelabeled subsequence, the sequence $(t_n)_{n\in\N}$ converges to some $0\leq t\leq 1$. In particular, $(z_n)_{n\in\N}$ converges to $\gamma_t \in E$.

We are left with $l$-star shapedness of $E$. Let $\gamma\in\TGeo(\mms)$ with $\gamma_0 = x$ and $\gamma_1\in E$. If $\gamma_1 \in C$ we directly get $\gamma_{[0,1]}\subset E$ by definition of $E$. Otherwise, there exist $\sigma\in G(x,C)$ and $0<t<1$   with $\gamma_1 = \sigma_t$. It follows that
\begin{align*}
l(\gamma_0,\gamma_1) = l(x,\sigma_t) = t\,l(x,\sigma_1).
\end{align*}
On the other hand, any $\alpha\in \TGeo(\mms)$ with $\alpha_0 = \sigma_t$ and $\alpha_1 = \sigma_1$ has length
\begin{align*}
l(\alpha_0,\alpha_1) = l(\sigma_t,\sigma_1) = (1-t)\,l(x,\sigma_1).
\end{align*}
As in the proof of \autoref{Le:Continuity geos}, an appropriate concatenation of $\gamma$ and $\alpha$  yields an element of $\TGeo(\mms)$ starting at $x$, terminating in $C$, and containing $\gamma_{[0,1]}$ as a sub\-seg\-ment. By the definition of $E$, we thus get $\smash{\gamma_{[0,1]}\subset E}$.
\end{proof}

For brevity, a \emph{future sphere} is any set of the form
\begin{align*}
\sfS^l(x,r) :=  \{l(x,\cdot)=r\}= (\bar{\sfB}^l(x,r) \setminus \sfB^l(x,r)) \setminus \{l(x,\cdot) = 0\},
\end{align*}
where $x\in \mms$ and $r>0$. \emph{Past spheres} are defined accordingly with respect to the causal reversal from \autoref{Ex:Causal reversal}. 
If the time orientation does not matter, we simply use the term \emph{sphere}.

Next, we foliate the function $k$ along spheres throughout $E$ as follows. Since the set $E$ is compact, so is $ E_r := \smash{\sfS^l(x,r)\cap E}$ for every $r> 0$ (and contained in $\{l(x,\cdot) = r\}$), and also the intersection of $\smash{\sfS^l(x,0) := \{l(x,\cdot)=0\}}$ with $E$. Hence $k$ assumes its minimum, say $k_x(r)$,  on $E_r$ if the latter is nonempty. Let 
\begin{align*}
R_x := \sup\{r \geq 0 : \sfS^l(x,r) \cap E \neq \emptyset\},
\end{align*}
and let $\smash{\ubar{k}_x\colon [0,R_x]\to \R}$ denote the lower semicontinuous envelope of $k_x$, i.e.
\begin{align}\label{Eq:k r function}
\ubar{k}_x(r) := \inf \liminf_{n\to\infty} k_x(r_n),
\end{align}
where the infimum is taken over all sequences $(r_n)_{n\in\N}$ in $[0,R_x]$ converging to $r$, where $[0,R_x] := [0,\infty)$ provided $R_x = \infty$. By a slight abuse of notation, we will also consider the function $\smash{\ubar{k}_x\colon J^+(x) \to \R}$ defined by
\begin{align}\label{Eq:BLU}
\ubar{k}_x(y) := \ubar{k}_x(l(x,y)).
\end{align}

\begin{theorem}[Timelike Bishop--Gromov inequality]\label{Th:Bishop Gromov} Assume $\smash{\wTCD_p^e(k,N)}$. Furthermore, let $E\subset \mms$ be compact and $l$-star shaped with respect to $x\in\supp\meas$. Then for every $\smash{0<r<R < R_x}$, the quantities from  \eqref{Eq:volume}  satisfy
\begin{align}\label{Eq:BG}
\begin{split}
\frac{\sfv(r)}{\sfv(R)} &\geq \frac{\displaystyle\int_0^r\SIN_{\ubar{k}_x/N}^N(t)\d t}{\displaystyle\int_0^R\SIN_{\ubar{k}_x/N}^N(t)\d t},\\
\frac{\sfs(r)}{\sfs(R)} &\geq \frac{\SIN_{\ubar{k}_x/N}^N(r)}{ \SIN_{\ubar{k}_x/N}^N(R)}.
\end{split}
\end{align}

In particular, $\meas$ neither gives mass to points in $\smash{I^\pm(x)}$, nor to spheres at $x$. 

Finally, if $\meas$ has an atom at $x$, in fact we have
\begin{align}\label{Eq:Future empty}
I^\pm(x) \cap \supp\meas = \emptyset.
\end{align}
\end{theorem}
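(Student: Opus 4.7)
The plan is to derive both inequalities from repeated application of the timelike Brunn--Minkowski inequality \autoref{Th:Timelike BM}, following the scheme of Sturm in the metric setting and its Lorentzian adaptation by Cavalletti--Mondino. The crucial geometric input is the $l$-star shapedness of $E$ established in \autoref{Le:Star-shaped}: every $l$-geodesic $\gamma$ with $\gamma_0=x$ and $\gamma_1\in E$ remains inside $E$, so each $k(\gamma_s)$ appearing in $\sigma^{(\cdot)}_{k_\gamma^\pm/N}(\vert\dot\gamma\vert)$ is bounded below by $\ubar{k}_x(l(x,\gamma_s))$. The monotonicity in \autoref{Le:Properties} then replaces the abstract distortion factors by their radial analogs built from $\SIN_{\ubar{k}_x/N}$.

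I would first establish the surface inequality. For $0<r<R<R_x$ and small $\eta,\delta>0$, apply \autoref{Th:Timelike BM} to thin annular shells $A_0\subset\{l(x,\cdot)\in[r-\eta,r]\}\cap E$ and $A_1\subset\{l(x,\cdot)\in[R,R+\delta]\}\cap E$, with interpolation parameter chosen so that the $t$-intermediate point set falls into another thin radial shell at the appropriate radius. Passing to the limits $\eta,\delta\downarrow 0$ converts the annular measures into the densities $\sfs(\cdot)$ of \eqref{Eq:volume}; the surviving coefficient is precisely $\SIN_{\ubar{k}_x/N}^N(r)/\SIN_{\ubar{k}_x/N}^N(R)$. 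The volume version then follows by integrating the surface inequality in $r\in(0,R)$, using Lebesgue differentiation of the monotone function $\sfv$ together with Radon regularity of $\meas$.

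For the non-atomicity statements, observe that the right-hand side of the surface Bishop--Gromov inequality depends continuously on $r$ while $\sfv$ is monotone, so any jump of $\sfv$ at some $r_0>0$ would violate the ratio bound as $r\uparrow r_0$; hence $\sfv$ is continuous, which forces $\meas[\sfS^l(x,r_0)\cap E]=0$ for every $r_0>0$. Every $y\in I^+(x)$ lies on the sphere $\sfS^l(x,l(x,y))$, and \autoref{Le:Star-shaped} produces an admissible $E\ni y$; thus $\meas[\{y\}]=0$, with the past case following from \autoref{Re:Causal reversal}.

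Finally, for \eqref{Eq:Future empty}, suppose $\meas[\{x\}]>0$ and some $y\in I^+(x)\cap\supp\meas$. For small $\varepsilon>0$, the set $A_1^\varepsilon:=\bar{\sfB}^\met(y,\varepsilon)\cap\supp\meas$ sits inside $I^+(x)$ by openness from \autoref{Le:Pushup openness} and has positive finite $\meas$-measure. The normalized pair $(\delta_x,\meas[A_1^\varepsilon]^{-1}\meas\mres A_1^\varepsilon)$ lies in $\scrP_\comp^\ac(\mms,\meas)^2$ and is strongly timelike $p$-dualizable by \autoref{Ex:Str tl dual}. Applying \autoref{Th:Timelike BM}, the source term alone delivers $\meas[Z_t(\{x\}\times A_1^\varepsilon)]^{1/N}\geq C(t)\,\meas[\{x\}]^{1/N}>0$ for fixed $t\in(0,1)$, with $C(t)>0$ a uniform lower bound on the relevant distortion coefficient over the compact family $G(\{x\},A_1^\varepsilon)$ from \autoref{Cor:Cpt TGeo}. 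Letting $\varepsilon\downarrow 0$, the monotone decreasing family of compact sets $Z_t(\{x\}\times A_1^\varepsilon)$ shrinks to $Z_t(x,y)\subset\sfS^l(x,t\,l(x,y))$, forcing $\meas[\sfS^l(x,t\,l(x,y))]>0$, contradicting the sphere non-atomicity just established; the past case again comes from \autoref{Re:Causal reversal}. The main technical hurdle will be the double limit $\eta,\delta\downarrow 0$ in the surface inequality, where annular thicknesses transform nontrivially under $l$-geodesic interpolation; this is managed via \autoref{Cor:Cpt TGeo}, uniform equicontinuity of $k\circ\gamma$ on the extremizing geodesic family, and the lower semicontinuity in \autoref{Le:Properties}.
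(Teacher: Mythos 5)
Your central step for the surface inequality does not work as stated. You propose to apply \autoref{Th:Timelike BM} to two thin \emph{annular} shells, $A_0\subset\{l(x,\cdot)\in[r-\eta,r]\}\cap E$ and $A_1\subset\{l(x,\cdot)\in[R,R+\delta]\}\cap E$. But for any $\gamma\in G(A_0,A_1)$ the reverse triangle inequality gives $\vert\dot\gamma\vert=l(\gamma_0,\gamma_1)\leq l(x,\gamma_1)-l(x,\gamma_0)\approx R-r$, so the distortion coefficients appearing in the Brunn--Minkowski inequality are $\smash{\sigma^{(t)}_{k_\gamma^\pm/N}(\vert\dot\gamma\vert)}$ with $\vert\dot\gamma\vert\lesssim R-r$, \emph{not} $\smash{\sigma^{(r/R)}_{\cdots}(R)}$. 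As $\eta,\delta\downarrow0$ you therefore obtain, at best, a concavity-type estimate of the form $\sfs((1-t)r+tR)^{1/N}\geq(1-t)\sfs(r)^{1/N}+t\sfs(R)^{1/N}$ (Euclidean case), which is strictly weaker than the Bishop--Gromov \emph{ratio} bound $\sfs(r)/\sfs(R)\geq\SIN^N(r)/\SIN^N(R)$. The coefficient $\SIN^N_{\ubar{k}_x/N}(r)/\SIN^N_{\ubar{k}_x/N}(R)$ simply does not appear from shell-to-shell transport. Moreover, the $t$-intermediate set of two thin shells is not contained in a thin radial shell: an oblique geodesic from a point at radius $r$ to a point at radius $R$ only satisfies $l(x,\gamma_t)\in[r+t\vert\dot\gamma\vert,\,R-(1-t)\vert\dot\gamma\vert]$, which is a nondegenerate interval when $\vert\dot\gamma\vert<R-r$. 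This is not a technicality to be handled by equicontinuity; it is the reason the shell-to-shell geometry misses the theorem.

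The fix, which is what the paper does, is to transport from the \emph{apex}: take $\mu_0=\delta_x$ (approximated by the uniform distribution on $\sfB^\met(x,\varepsilon)\cap E$ to make it $\meas$-absolutely continuous when $\meas$ has no atom at $x$), and $\mu_1$ the normalized restriction of $\meas$ to the outer shell $(\bar\sfB^l(x,R+\delta R)\setminus\sfB^l(x,R))\cap E$. Since every $l$-geodesic in $G(A_0^\varepsilon,A_1)$ converges, as $\varepsilon\to0$, to a \emph{radial} geodesic issuing from $x$, one has $\vert\dot\gamma\vert\to l(x,\gamma_1)\in[R,R+\delta R]$, and the $t=r/R$-intermediate points lie in a thin shell of radius $\approx r$. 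The distortion coefficient $\smash{\sigma^{(r/R)}_{\ubar{k}_x/N}(R)}$ is exactly $\SIN_{\ubar{k}_x/N}(r)/\SIN_{\ubar{k}_x/N}(R)$, and raising to the $N$-th power yields the surface estimate. The $\varepsilon\to0$ limit is carried out at the level of plans via \autoref{Le:Geodesics plan} and \autoref{Le:Properties}; the case $\meas[E\cap U]=0$ for some neighborhood $U$ of $x$ (so that the $\varepsilon$-approximation is unavailable) requires a separate appeal to the $\smash{\TMCP^e(k,N)}$ of the causal reversal via \autoref{Pr:TCD to TMCP}, which your proposal does not address.

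Your derivations of sphere and point non-atomicity (from continuity of $\sfv$ via the ratio bound) and of \eqref{Eq:Future empty} (via a Brunn--Minkowski estimate with $\mu_0=\delta_x\in\scrP_\comp^\ac(\mms,\meas)$ \emph{granted} the atom, shrinking $\varepsilon$ to exhibit a sphere of positive measure) are different in detail from the paper but plausible; the paper instead argues \eqref{Eq:Future empty} more economically by applying its sphere-negligibility conclusion at $y$ and observing that $\{x\}$ lies on a sphere around $y$. However, both of your alternative conclusions are downstream of the surface Bishop--Gromov inequality, so they inherit the gap above.
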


\begin{proof} For simplicity, assume $\smash{\meas[E\cap U] > 0}$ for every open neigh\-borhood $U\subset\mms$ of $x$; we outline the necessary modifications in the more general case later.

We first prove the second estimate from \eqref{Eq:BG}. Without restriction, we may and will assume $\sfs(R) > 0$. Lastly, by applying the regularization procedure \eqref{Eq:kappan} to the function $\smash{\ubar{k}_x}$ from \eqref{Eq:k r function} relative to the compact space $[0,R+\delta\,R]$ endowed with the  Euclidean metric, we will also first suppose the induced function $\smash{\ubar{k}_x}$ from \eqref{Eq:BLU} to be bounded and continuous on $\smash{\bar{\sfB}^l(x,R + \delta\,R)\cap E}$ for sufficiently small $\delta> 0$.

Let $\mu_0 := \delta_{A_0}$ and, given $\delta > 0$, let $\smash{\mu_1 \in\scrP_\comp^\ac(\mms,\meas)}$ be the uniform distribution of $\smash{A_1}$ with respect to $\meas$, where
\begin{align*}
A_0 &:= \{x\},\\
A_1 &:= (\bar{\sfB}^l(x,R+ \delta\,R) \setminus \sfB^l(x,R))\cap E.
\end{align*}
When $R+\delta\,R < R_x$, we first claim the existence of $\smash{\bdpi\in\OptTGeo_p(\mu_0,\mu_1)}$ with
\begin{align}\label{Eq:Key est}
\begin{split}
&\meas\big[(\bar{\sfB}^l(x,r+\delta\,r)\setminus \sfB^l(x,r))\cap E\big]^{1/N}\\
&\qquad\qquad \geq \int \sigma_{\ubar{k}_{x,\gamma}^+/N}^{(t)}(\vert\dot\gamma\vert)\d\bdpi(\gamma)\,\meas\big[(\bar{\sfB}^l(x,R+\delta\,R)\setminus \sfB^l(x,R))\cap E\big]^{1/N}.
\end{split}
\end{align}
Since $\sfs(R)>0$, the measure $\mu_1$ is well-defined, and \eqref{Eq:Key est} is not void.

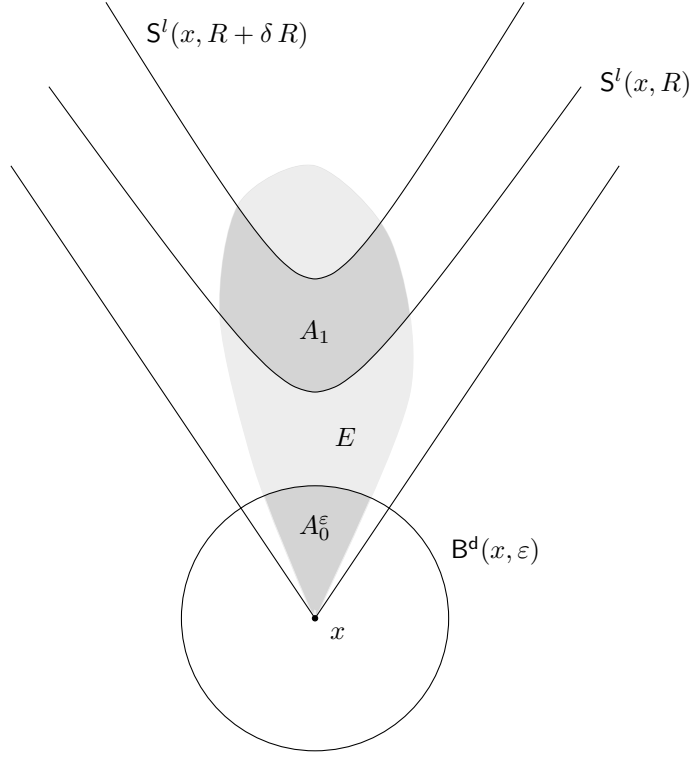
\begin{figure}
\centering
\begin{tikzpicture}
\def\sete{plot [smooth cycle] coordinates {(0,0)  (-0.005,0.1) (-0.1,0.2) (-0.8,2) (-1.25,4) (-1,5.5) (0,6) (1,5) (1.25,3) (0.1,0.2) (0.005,0.1) (0,0)}}
\def\circle{(0,0) circle(5em)}

    \fill[black!7] \sete; 
    \fill[white] \circle;

   \begin{scope}
        \clip \sete;
        \fill[black!17] \circle;
    \end{scope}
 
    \draw[black!7] \sete;
    \draw[black] \circle;

\begin{scope}
\clip[scale=1, domain=-3.5:3.5, smooth, variable=\x] plot ({\x}, {(2*\x*\x+1)^(1/2)+2});
\fill[black!17] plot [smooth cycle] coordinates {(0,0)  (-0.005,0.1) (-0.1,0.2) (-0.8,2) (-1.25,4) (-1,5.5) (0,6) (1,5) (1.25,3) (0.1,0.2) (0.005,0.1) (0,0)};
\end{scope}

\begin{scope}
  \clip[scale=1, domain=-2.5:2.5, smooth, variable=\x] plot ({\x}, {(2.75*\x*\x+1)^(1/2)+3.5});
  \fill[black!7] plot [smooth cycle] coordinates {(0,0)  (-0.005,0.1) (-0.1,0.2) (-0.8,2) (-1.25,4) (-1,5.5) (0,6) (1,5) (1.25,3) (0.1,0.2) (0.005,0.1) (0,0)};
\end{scope}
    
\draw[fill=black] (0,0) circle(.1em);
\node at (0.3,-0.2) {$x$};
\draw (0,0) -- (-4,6);
\draw (0,0) -- (4,6);
\draw[scale=1, domain=-3.5:3.5, smooth, variable=\x] plot ({\x}, {(2*\x*\x+1)^(1/2)+2});
\draw[scale=1, domain=-2.75:2.75, smooth, variable=\x] plot ({\x}, {(2.75*\x*\x+1)^(1/2)+3.5});
\node at (2.375,0.9) {$\sfB^\met(x,\varepsilon)$};
\node at (0.4,2.4) {$E$};
\node at (-1.15,7.75) {$\sfS^l(x, R+ \delta\,R)$};
\node at (4.35,7.1) {$\sfS^l(x, R)$};
\node at (0,1.2) {$A_0^\varepsilon$};
\node at (0,3.8) {$A_1$};
\end{tikzpicture}
\caption{An illustration of the proof of \autoref{Th:Bishop Gromov}.}\label{Fig:Bishop-Gromov}
\end{figure}

Let $t := r/R$, and let $\smash{\mu_0^\varepsilon\in\scrP_\comp^\ac(\mms,\meas)}$ be the uniform distribution of
\begin{align*}
A_0^\varepsilon := \sfB^\met(x,\varepsilon)  \cap E.
\end{align*}
Observe that unlike for $A_1$, the definition of $\smash{A_0^\varepsilon}$ contains a metric ball. Here we choose $\varepsilon > 0$ sufficiently small in such a way that $\smash{\bar{\sfB}^\met(x,\varepsilon)}$ is compact according to \autoref{Cor:Hausdorff} and \autoref{Cor:Strong causality},  and $\smash{A_0^\varepsilon \times A_1 \Subset\{l>0\}}$, as shown in \autoref{Fig:Bishop-Gromov}. Finally, define the set $\smash{A_t^\varepsilon\subset\mms}$ according to  \eqref{Eq:At'}, i.e.
\begin{align*}
A_t^\varepsilon := \eval_t\big[G(A_0^\varepsilon,A_1)\big].
\end{align*}
By \autoref{Ex:Str tl dual}, \autoref{Re:From geo to displ}, as well as  \eqref{Eq:BLU}, following the proof of \autoref{Th:Timelike BM} provides us with a plan $\smash{\bdpi^{\varepsilon,\delta}\in \OptTGeo_p(\mu_0^\varepsilon,\mu_1)}$ satisfying
\begin{align*}
\meas\big[A_t^\varepsilon\big]^{1/N} &\geq \sigma_{k_{\bdpi^\varepsilon}^+/N}^{(t)}(\cost_{\bdpi^\varepsilon})\,\scrU_N(\mu_1)\\ 
&\geq \int \sigma_{k_\gamma^+/N}^{(t)}(\vert\dot\gamma\vert)\d\bdpi^\varepsilon(\gamma)\,\meas[A_1]^{1/N}\\
&\geq \int \sigma_{\ubar{k}_{x,\gamma}^+/N}^{(t)}(\vert\dot\gamma\vert)\d\bdpi^\varepsilon(\gamma)\,\meas[A_1]^{1/N}.
\end{align*}
Note that $\smash{\mu_0 \otimes \mu_1}$ constitutes the only coupling of $\mu_0$ and $\smash{\mu_1}$; by construction, it is chronological. Tightness of timelike $\smash{\ell_p}$-optimal dynamical plans as asserted by  \autoref{Le:Geodesics plan}  gives the existence of a narrow limit $\smash{\bdpi\in\OptTGeo_p(\mu_0,\mu_1)}$ of a nonrelabeled subfamily of $(\bdpi^\varepsilon)_{\varepsilon > 0}$. By the lower semicontinuous dependence of the last integrand above on $\gamma\in\TGeo(\mms)$ provided by \autoref{Le:Properties},  \cite[Lem. 4.3]{villani2009}, and Levi's monotone convergence theorem we obtain
\begin{align*}
\meas[A_t]^{1/N} &= \liminf_{\varepsilon \to 0} \meas\big[A_t^\varepsilon\big]^{1/N} \geq \int \sigma_{\ubar{k}_{x,\gamma}^+/N}^{(t)}(\vert\dot\gamma\vert)\d \bdpi(\gamma)\,\meas[A_1]^{1/N}.
\end{align*}
The definition of $A_0$ and $\smash{A_1}$, the $l$-star shapedness of $E$, and the choice of $t$ easily imply $\smash{A_t \subset (\bar{\sfB}^l(x,r+\delta r)\setminus \sfB^l(x,r))\cap E}$. This terminates the proof of \eqref{Eq:BG}.

If $\meas[E\cap U] = 0$ for some open neighborhood $U\subset\mms$ of $x$, we still get \eqref{Eq:Key est} by the following modification of the previous argument. By using \autoref{Re:Causal reversal} and \autoref{Pr:TCD to TMCP} below, from $\smash{\wTCD_p^e(k,N)}$ we obtain the entropic timelike measure contraction property $\smash{\TMCP^e(k,N)}$ from \autoref{Def:TMCP} for the causal reversal $\smash{\scrM^\leftarrow}$ in \autoref{Ex:Causal reversal}. This suffices to run the above computation with $\smash{A_0}$ and $\smash{\mu_0}$ in place of $\smash{A_0^\varepsilon}$ and $\smash{\mu_0^\varepsilon}$, respectively.

From \eqref{Eq:Key est}, letting $\delta \to 0$, and following the proof of \cite[Thm.~5.9]{ketterer2017} we get
\begin{align*}
\frac{\sfs(r)}{\sfs(R)} \geq \sigma_{\ubar{k}_x/N}^{(t)}(R)^N = \frac{\SIN_{\ubar{k}_x/N}^N(r)}{\sin_{\ubar{k}_x/N}^N(R)},
\end{align*}
where the last equality is a consequence of  \autoref{Le:Properties}, \autoref{Th:BonnetMyers}, and our choice of $t$. In particular, as in the proof of \cite[Thm.~2.3]{sturm2006b} we obtain
\begin{align*}
\meas\big[(\bar{\sfB}^l(x,r)\setminus \sfB^l(x,r))\cap E\big] = 0.
\end{align*}
This implies the claim about $\meas$-negligibility of spheres in the future of $x$. Indeed, using Polishness of $\Top$ and \autoref{Cor:Strong causality} we construct countably many precompact, mutually disjoint sets $(A_i)_{i\in\N}$ covering $\smash{\{l(x,\cdot)=r\}}$ with $\smash{A_i\Subset I^+(x)}$ for every $i\in\N$. The claim easily follows by applying \autoref{Le:Star-shaped} and the previous discussion  to $\smash{C := \bar{A}_i}$, where $i\in\N$. The case of past spheres is dealt with analogously, recalling \autoref{Re:Causal reversal}.

To show $\smash{\meas[\{y\}] = 0}$ for every $y\in I^+(x)$, apply the previous argument to $r := l(x,y)$ and the compact, $l$-star shaped set $E$ from \autoref{Le:Star-shaped} defined for $C := \{y\}$; observe that $l(x,y) < R_x$ holds here by nontrivial chronology. By applying the preceding argument to the causal reversal from \autoref{Ex:Causal reversal} while taking \autoref{Re:Causal reversal} into account, we obtain the claim about the $\meas$-measure of truncated spheres in the past of $x$, points in $I^-(x)$, and truncated past light cones at $x$. 

Lastly, the same proof as for \cite[Thm.~5.9]{ketterer2017} allows us to remove the continuity hypothesis on $\smash{\ubar{k}_x}$ in order to finally obtain the second inequality in \eqref{Eq:BG}. In turn, this implies the first inequality from \eqref{Eq:BG} by \cite[Lem.~18.9]{villani2009}.

It remains to prove \eqref{Eq:Future empty} in the case $\meas$ gives mass to $x$. Suppose to the contrary that $y\in I^-(x) \cap \supp\meas$. Then applying the previous argumentation to $y$ replacing $x$ would entail the contradiction $\meas[\{x\}]=0$. In an analogous way, we falsify the assumption $y\in I^+(x) \cap \supp\meas$.
\end{proof}

By nontrivial chronology, the following is immediate from \autoref{Th:Bishop Gromov}.

\begin{corollary}[No atoms]\label{Cor:No atoms} The reference measure $\meas$ of a $\smash{\wTCD_p^e(k,N)}$ space has no atoms if it is fully supported.
\end{corollary}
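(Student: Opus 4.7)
The plan is to argue by contradiction, exploiting the already-proven last statement \eqref{Eq:Future empty} of \autoref{Th:Bishop Gromov}. Suppose $\meas$ is fully supported (i.e.~$\supp\meas = \mms$) and that $\meas$ has an atom at some point $x\in\mms$. Since $\supp\meas=\mms$, in particular $x\in\supp\meas$, so the hypothesis of the last clause of \autoref{Th:Bishop Gromov} is met.

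Applying \autoref{Th:Bishop Gromov} yields
\begin{align*}
I^\pm(x) \cap \supp\meas = \emptyset,
\end{align*}
and full support then forces $I^\pm(x) = \emptyset$. However, by nontrivial chronology (part of our standing \autoref{Ass:GHLLS}, expressed through \autoref{Def:LLS} and made concrete by \autoref{Le:Sequ lemma}), every point $x\in\mms$ is the midpoint of a timelike curve $\gamma\colon [0,1]\to\mms$, so in particular $\gamma_1\in I^+(x)$ and $\gamma_0\in I^-(x)$, both of which are therefore nonempty. This contradicts the preceding display, finishing the argument.

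I do not anticipate any real obstacle here: once the last assertion of \autoref{Th:Bishop Gromov} is in place, this is a direct two-line contradiction with \autoref{Le:Sequ lemma}. The only step worth flagging is the use of full support to upgrade the conclusion $I^\pm(x)\cap\supp\meas=\emptyset$ into $I^\pm(x)=\emptyset$; without the full-support hypothesis, atoms on the ``boundary'' of $\supp\meas$ are not excluded by this argument.
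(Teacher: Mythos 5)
Your proof is correct and is precisely the argument the paper has in mind — the paper compresses it to the single remark "By nontrivial chronology, the following is immediate from \autoref{Th:Bishop Gromov}," and you have spelled out exactly that one-line argument, including the correct identification of where full support is used (to upgrade $I^\pm(x)\cap\supp\meas=\emptyset$ to $I^\pm(x)=\emptyset$). One minor observation: the hypothesis $x\in\supp\meas$ needed to invoke \autoref{Th:Bishop Gromov} is automatic whenever $\meas$ has an atom at $x$ (any neighborhood of an atom has positive mass), so it does not actually require full support; full support enters only at the final step, as you correctly flag.
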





The following is a consequence of \autoref{Th:Bishop Gromov} and argued as for \cite[Thm.~5.2]{mccannsaemann}; alternatively, it can be shown by using \autoref{Le:var to const} together with \autoref{Ex:Lipschitz} and \cite[Rem.~3.19, Thm.~3.35]{braun2022}.

\begin{corollary}[Dimension bound]\label{Cor:Hausdorffdim} Assume $\scrM$ is the metric measure spacetime induced by a globally hyperbolic Lipschitz spacetime, where $\meas$ is the top-dimensional  Lorentzian Hausdorff measure defined in \cite[Def.~2.3]{mccannsaemann}. If $\scrM$ satisfies $\smash{\wTCD_p^e(k,N)}$, then its  geometric Hausdorff dimension in the sense of \cite[Def.~3.1]{mccannsaemann} obeys
\begin{align}\label{nonsharp dimension bound}
\dim^l\mms \leq N+1.
\end{align}
\end{corollary}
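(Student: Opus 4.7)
The strategy is to reduce the claim to the constant-curvature setting, where the inequality \eqref{nonsharp dimension bound} is known, and to exploit the fact that the geometric Hausdorff dimension $\dim^l$ is a local invariant. Fix any two points $x_0,x_1\in\mms$ with $x_0\ll x_1$ and set $D := J(x_0,x_1)$, which is compact by \autoref{Ass:GHLLS}. By \autoref{Le:var to const}, the constant $K := \inf k(D) \in \R$ is finite, and the metric measure spacetime structure on $D$ induced by $\scrM$ satisfies $\smash{\wTCD_p^e(K,N)}$. The underlying spacetime is still a globally hyperbolic Lipschitz one (in fact, an open precompact subset of it with its induced length metric spacetime structure), so \autoref{Ex:Lipschitz} applies as well.

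With constant lower bound $K$ in place, the desired dimension bound on $D$ is precisely \cite[Rem.~3.19, Thm.~3.35]{braun2022}: namely, under $\smash{\wTCD_p^e(K,N)}$ on a Lipschitz spacetime whose reference measure is the top-dimensional Lorentzian Hausdorff measure, the local geometric Hausdorff dimension cannot exceed $N+1$. Alternately, one can reproduce the proof strategy of \cite[Thm.~5.2]{mccannsaemann}, which deduces the same conclusion directly from the Lorentzian Bishop--Gromov inequality as formulated in \autoref{Th:Bishop Gromov}: the volume of a future ball $\sfB^l(x,r)$ intersected with a suitably chosen $l$-star shaped compact set $E\subset I^+(x)\cup\{x\}$ (as furnished by \autoref{Le:Star-shaped}) grows at most like $r^{N+1}$ as $r\to 0$, since $\SIN_{\ubar{k}_x/N}(t)\sim t$ near $t=0$ irrespective of the local value of $\ubar k_x$. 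This rate of volume growth, combined with the definition of $\dim^l$ via coverings of causal diamonds in \cite[Def.~3.1]{mccannsaemann}, yields $\dim^l D\le N+1$.

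Since the geometric Hausdorff dimension of $\mms$ is the supremum of those of its causal diamonds (these form an exhausting collection thanks to nontrivial chronology and $\sigma$-compactness of $\mms$, cf.~\autoref{Re:Compact Polish}), the local bound globalizes to $\dim^l\mms \le N+1$.

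The genuinely nontrivial step is the bridge from the Bishop--Gromov volume comparison to a Hausdorff-dimension statement in the Lorentzian setting; this was carried out in \cite{mccannsaemann} for constant curvature bounds, and its input $\smash{\wTCD_p^e(K,N)}$ is exactly what \autoref{Le:var to const} delivers in our variable framework. Everything else is bookkeeping: verifying that \cite[Thm.~5.2]{mccannsaemann} or \cite[Thm.~3.35]{braun2022} applies verbatim to the restricted structure on $D$ (which is inherited from a Lipschitz spacetime and carries the appropriate Lorentzian Hausdorff measure) and that the suprema over compact diamonds compute $\dim^l\mms$.
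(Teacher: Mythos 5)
Your proof is correct and takes essentially the same approach as the paper, which gives a one-line proof indicating the result either follows from \autoref{Th:Bishop Gromov} argued as in \cite[Thm.~5.2]{mccannsaemann}, or via the reduction \autoref{Le:var to const} combined with \autoref{Ex:Lipschitz} and \cite[Rem.~3.19, Thm.~3.35]{braun2022}; you have expanded both of these routes. (One cosmetic inaccuracy: $J(x_0,x_1)$ is compact, not open, and by \autoref{Re:Restriction to subsets} the restricted structure on a causal diamond does not literally satisfy \autoref{Ass:GHLLS}; the paper sidesteps this via the remark following \autoref{Le:var to const}, but your gloss that $D$ ``is still a globally hyperbolic Lipschitz'' structure should be phrased more carefully.)
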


\section{One-dimensional localization/needle decomposition}\label{Ch:Localization}

In this chapter, we extend the Lorentzian localization paradigm initiated in \cite{cavalletti2020}. We treat a more general class of potentials and cover spaces that might fail to be timelike nonbranching.    A related extension is  pursued for different purposes  in \cite{Akdemir24+,cm++}. Our plan is as follows.
\begin{itemize}
\item We introduce the transport sets with and without bad points, and describe  transport relations $\smash{\preceq_u}$, $\smash{\succeq_u}$, and  $\smash{\sim_u}$ thereon.
\item The relation $\smash{\sim_u}$ is an equivalence relation on the transport set without bad points, \autoref{Th:Equiv relation}.
\item In \autoref{Cor:Ray map}, we show every chain with respect to $\smash{\preceq_u}$ is parametrized by exactly one $l$-geodesic.
\item Under natural conditions --- which 
 can be inferred from curvature and time\-like essential nonbranching assumptions using the existence of optimal maps (\autoref{Th:Optimal maps}) and local-to-global property (\autoref{Th:Local to global}) --- we show the set of bad points thrown away to get \autoref{Th:Equiv relation} is $\meas$-negligible, \autoref{Th:All negligible}.
\item Assuming $\smash{\wTCD_p^e(k,N)}$ and timelike $p$-essential nonbranching, we show that on the transport set without bad points, $\meas$ admits a disintegration into ``absolutely continuous'' needles, \autoref{Th:Needle} and \autoref{Pr:Abs cont needles}.
\item Finally, \autoref{Th:Localization TCD} ensures   a.e.~needle constitutes a $\CD(k,N)$ metric measure space in the sense of Ketterer \cite{ketterer2017}.
\end{itemize}

We recall our assumption of $\Top$ being Polish. This  is implicitly used various times when we apply results about Suslin measurability \cite{srivastava1998}; recall \autoref{Sub:Suslin} for relevant basics of Suslin sets.

\subsection{Framework}\label{Sub:Framework} Throughout this chapter, we fix 
\begin{itemize}
\item a Borel subset $E\subset\mms$ which is \emph{$l$-geodesically convex}, i.e.~for every $\gamma\in \TGeo(\mms)$ such that $\gamma_0,\gamma_1\in E$, we have $\smash{\gamma_{[0,1]}\subset E}$, and 
\item a Borel function $u\colon E\to\R$  obeying the \emph{reverse $l$-Lipschitz condition}\footnote{In \cite{beran2023}, we refer to this property as \emph{$1$-steepness} of $u$ on $E$.}, i.e. every $x,y\in E$ satisfies
\begin{align}\label{Eq:Gleichung}
u(y) - u(x) \geq l(x,y).
\end{align} 
\end{itemize}
A concrete example from \cite{cavalletti2020} fitting in this setting is given in \autoref{Ex:CM} below. Observe that a subset $E\subset\mms$ is $l$-geodesically convex with respect to a metric spacetime if and only if it is  $\smash{l^\leftarrow}$-geodesically convex relative to its causal reversal from \autoref{Ex:Causal reversal}. In addition, a function $u\colon E\to \R$ obeys \eqref{Eq:Gleichung} if and only if $-u$ satisfies \eqref{Eq:Gleichung} with $l$ replaced by $\smash{l^\leftarrow}$.

In addition, we impose mild assumptions on  $\meas$, namely
\begin{itemize}
\item $\meas$ has full topological support, and
\item there exists a Borel weight function $\smash{w\colon \Tr_u^\End \to (0,\infty)}$ with $\inf w(C) > 0$ for every compact set $C\subset E$ and
\begin{align}\label{Eq:Weight}
\int_{\Tr_u^\End} w\d\meas = 1;
\end{align}
in particular $\smash{\meas[\Tr_u^\End]>0}$. Here $\smash{\Tr_u^\End\subset E}$ is the transport set with bad points from \autoref{Def:Tbr} below.
\end{itemize}
The hypothesis $\supp\meas=\mms$ could be dropped by appropriately restricting all definitions below to $\supp\meas$, but we refrain from that in favor of a simpler presentation. On the other hand, the existence of  a weight function $w$  holds in high generality, e.g.~on every proper Polish space \cite[Lem.~3.3]{cavallettinew} or simply if $\meas[E]$ is finite (provided $\smash{\meas[T^\End]>0}$ in both cases). It is used to normalize $\meas$ in order to apply the disintegration theorem in \autoref{Sub:Disintegration}.

\begin{example}[Compatibility with \cite{cavalletti2020}]\label{Ex:CM} To give a concrete example originating in \cite{cavalletti2020} from which e.g.~the synthetic Hawking singularity theorem was obtained, we first recall basic definitions \cite{cavalletti2020, galloway1986} for a set $V\subset\mms$ (assuming $\mms$ is proper). 

We call $V$ \emph{achronal} provided $x\not\ll y$ for every $x,y\in V$; in other words, $I^+(V)$ and $I^-(V)$ have empty intersection. This immediately implies well-definedness of the  \emph{signed time}  $\tsep_V \colon \mms \to [-\infty,\infty]$ 
 to $V$ given by
\begin{align*}
\tsep_V(x) := \sup_{y \in V} l_+(y,x) - \sup_{y\in V} l_+(x,y).
\end{align*}
Since $\smash{l_+}$ is continuous, $\tsep_V$ is lower semicontinuous  on $\smash{I^+(V) \cup V}$. 

Let $V$ be such an achronal Borel set which is in addition \emph{future timelike complete} \cite[p.~367]{galloway1986}, briefly FTC, i.e.~for every $y\in I^+(V)$ the intersection $J^-(y)\cap V$ has compact closure relative to $V$; observe that this covers especially the case when $V$ is a singleton by \autoref{Cor:Causality}. Then define 
\begin{align*}
E &:= I^+(V) \cup V,\\
u &:= \tsep_V\big\vert_{E^2}.
\end{align*}
By achronality of $V$ and \autoref{Le:Pushup openness}, for every $\gamma\in \smash{\TGeo(\mms)}$ with $\gamma_0,\gamma_1\in E$ we have $\gamma_t\in I^+(V)$ for every $t\in (0,1]$, which shows $l$-geodesic convexity of $E$. The inequality \eqref{Eq:Gleichung} on $\smash{E^2}$ is a consequence of the hypothesized FTC property and is shown with the same proof as for \cite[Lem.~4.1]{cavalletti2020}.

The existence of a function $w$ as above follows from \cite[Lems.~4.4, 4.10]{cavalletti2020}. In fact, under the natural condition $\meas[V]=0$ we have
\begin{align}\label{Eq:Nat cond}
\meas\mres I^+(V) = \meas\mres \Tr_u^\End.
\end{align}
\end{example}

\subsection{Transport relations and transport sets}\label{Sub:Transport relation} Now we construct 
\begin{itemize}
\item the transport relation $\smash{E_{\sim_u}^{2,\End}}$ and the transport set $\smash{\Tr_u^\End}$ relative to $E$ and $u$   \emph{with} bad points, and
\item their counterparts $\smash{E_{\sim_u}^2}$ and $\smash{\Tr_u}$ \emph{without} bad  points.
\end{itemize}
As shown in \autoref{Th:Equiv relation}, 
$\smash{E_{\sim_u}^2}$ will be an equivalence relation on $\Tr_u$. (In fact, $\Tr_u$ can be chosen slightly larger.) Compared to \cite[Prop. 4.5]{cavalletti2020}, timelike nonbranching is not available to us; our approach is instead inspired by the work \cite{cavalletti2014} for metric measure spaces, see also \cite{bianchini2013, caffarelli2001, cavalletti2017, klartag2017}. 

\subsubsection{Transport with bad points} The construction of the indicated sets $\smash{E_{\sim_u}^{2,\End}}$ and $\smash{\Tr_u^\End}$ follows \cite[Sec.~4.1]{cavalletti2020} and is briefly recapitulated here.


We define relations $\smash{\preceq_u}$ and $\smash{\succeq_u}$ on $E$ by declaring
\begin{align*}
x\preceq_u y \quad &:\Longleftrightarrow\quad x=y\quad\textnormal{or}\quad u(y) -u(x) = l(x,y) > 0,\\
x\succeq_u y \quad &:\Longleftrightarrow\quad y\preceq_ux.
\end{align*}
We consider the associated sets
\begin{align}\label{Eq:GAMMA}
\begin{split}
E_{\preceq_u}^2 &:= \{(x,y) \in E^2  : x \preceq_u y\},\\
E_{\succeq_u}^2 &:= \{(x,y) \in E^2 : x\succeq_u y\}.\\
\end{split}
\end{align}
Our assumptions on $E$ and $u$ guarantee $\smash{E_{\preceq_u}^2}$ and $\smash{E_{\succeq_u}^2}$ are Borel.

The following straightforward properties should give the reader a better feeling for the definitions of $\smash{\preceq_u}$ and $\smash{\succeq_u}$, respectively.

\begin{lemma}[Partial ordering]\label{Le:Partial ordering u} The relations $\smash{\preceq_u}$ and $\smash{\succeq_u}$ are partial orders on $E$.
\end{lemma}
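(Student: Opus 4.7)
The plan is to verify the three defining properties of a partial order (reflexivity, antisymmetry, and transitivity) for $\preceq_u$; the corresponding properties for $\succeq_u$ then follow immediately from its definition $x \succeq_u y :\Longleftrightarrow y \preceq_u x$.

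Reflexivity is built into the definition: the first clause ``$x = y$'' in the definition of $\preceq_u$ directly gives $x \preceq_u x$ for every $x \in E$. For antisymmetry, suppose $x \preceq_u y$ and $y \preceq_u x$ with $x \neq y$. Then both ``$x = y$'' clauses fail, so we must have $u(y) - u(x) = l(x,y) > 0$ and $u(x) - u(y) = l(y,x) > 0$. Adding these two equations yields $0 = l(x,y) + l(y,x)$, but the right-hand side is strictly positive, a contradiction. Hence $x = y$.

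The main (but still routine) point is transitivity. Suppose $x \preceq_u y$ and $y \preceq_u z$. If $x = y$ or $y = z$ the conclusion $x \preceq_u z$ is immediate, so assume both are strict, i.e.\ $u(y) - u(x) = l(x,y) > 0$ and $u(z) - u(y) = l(y,z) > 0$. Adding gives
\begin{equation*}
u(z) - u(x) = l(x,y) + l(y,z).
\end{equation*}
By the reverse triangle inequality \eqref{Eq:Reverse tau}, $l(x,z) \geq l(x,y) + l(y,z) > 0$, while the reverse $l$-Lipschitz condition \eqref{Eq:Gleichung} for $u$ (applied to the pair $(x,z) \in E^2$, using $l$-geodesic convexity to ensure we stay in $E$) gives $u(z) - u(x) \geq l(x,z)$. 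Combining, we obtain
\begin{equation*}
l(x,y) + l(y,z) = u(z) - u(x) \geq l(x,z) \geq l(x,y) + l(y,z),
\end{equation*}
so equality holds throughout; in particular $u(z) - u(x) = l(x,z) > 0$, which is precisely $x \preceq_u z$.

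There is no real obstacle here beyond being careful about the two clauses in the definition of $\preceq_u$; the only nontrivial step is transitivity, where one must simultaneously invoke the reverse triangle inequality for $l$ and the reverse $l$-Lipschitz property \eqref{Eq:Gleichung} of $u$ to sandwich $u(z) - u(x)$ between $l(x,y) + l(y,z)$ and $l(x,z)$.
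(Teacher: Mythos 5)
Your proof is correct and follows essentially the same route as the paper: reflexivity from the definition, transitivity by sandwiching $u(z)-u(x)$ between the reverse triangle inequality for $l$ and the reverse $l$-Lipschitz bound on $u$, and $\succeq_u$ by symmetry. The only small differences are cosmetic: for antisymmetry you reach a contradiction numerically by adding the two equations $u(y)-u(x)=l(x,y)>0$ and $u(x)-u(y)=l(y,x)>0$, whereas the paper invokes the causality (antisymmetry) axiom on $l$ directly -- both are fine; and your parenthetical appeal to $l$-geodesic convexity of $E$ when applying \eqref{Eq:Gleichung} to the pair $(x,z)$ is unnecessary, since $x,z\in E$ by hypothesis and \eqref{Eq:Gleichung} holds for all pairs in $E$ without any convexity argument.
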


\begin{proof} We only show the claim for $\smash{\preceq_u}$, the proof for $\smash{\succeq_u}$ is completely analogous. Let $x,y,z\in E$. By definition, we have $\smash{x\preceq_u x}$. If $\smash{x\preceq_u y}$ and $\smash{x\succeq_u y}$, then $x=y$ since otherwise $x\ll y$ or $y\ll x$, a contradiction to  the respective other half of the assumption. Lastly, assume $\smash{x\preceq_u y}$ and $\smash{y\preceq_u z}$. Without restriction, we assume $x$, $y$, and $z$ are distinct. Then $x\ll y\ll z$, and in particular $x\ll z$ by the push-up property from \autoref{Le:Pushup openness}. By the reverse $l$-Lipschitz property \eqref{Eq:Gleichung} of $u$ and the reverse triangle inequality for $l$, we easily obtain
\begin{align*}
u(z) - u(x) &\geq l(x,z)\\ 
&\geq l(x,y) + l(y,z)\\
&= \big[u(y) - u(x)\big] + \big[u(z) - u(y)\big]\\
&= u(z) - u(x),
\end{align*}
which forces equality to hold throughout.
\end{proof}

In a similar way, the following is verified \cite[Lem.~4.2]{cavalletti2020}.

\begin{lemma}[Cyclical monotonicity] $\smash{E_{\preceq_u}^2}$ is $l$-cyclically monotone, while $\smash{E_{\succeq_u}^2}$ is $l^\leftarrow$-cyclically monotone.
\end{lemma}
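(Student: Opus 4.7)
The plan is to reduce both statements to the defining identity hidden inside the relation $\preceq_u$ and then exploit the reverse $l$-Lipschitz bound \eqref{Eq:Gleichung} exactly once, on each ``shifted'' pair of a cycle.

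First I would record the key observation that on $E_{\preceq_u}^2$ the inequality in \eqref{Eq:Gleichung} is saturated: every $(x,y)\in E_{\preceq_u}^2$ satisfies
\begin{align*}
u(y) - u(x) = l(x,y),
\end{align*}
the case $x=y$ following from $l(x,x)=0$ by \autoref{Cor:Causality} and the case $x\neq y$ being the very definition of $\preceq_u$. Nothing beyond this identity is really at stake.

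Next, given any $n\in\N$ and $(x_1,y_1),\dots,(x_n,y_n)\in E_{\preceq_u}^2$, with the cyclic convention $x_{n+1}:=x_1$, I would sum the identity above and reindex the $u(x_i)$-sum as a $u(x_{i+1})$-sum. This gives the telescopic rewrite
\begin{align*}
\sum_{i=1}^n l(x_i,y_i) = \sum_{i=1}^n \big[u(y_i)-u(x_i)\big] = \sum_{i=1}^n\big[u(y_i)-u(x_{i+1})\big].
\end{align*}
Then I would apply \eqref{Eq:Gleichung} to each pair $(x_{i+1},y_i)\in E^2$ --- which is admissible since $E$ is the domain of $u$ --- to estimate $u(y_i)-u(x_{i+1})\geq l(x_{i+1},y_i)$ termwise and conclude
\begin{align*}
\sum_{i=1}^n l(x_i,y_i)\geq \sum_{i=1}^n l(x_{i+1},y_i),
\end{align*}
which is precisely $l$-cyclical monotonicity of $E_{\preceq_u}^2$ (the $p=1$ case of the notion recalled after \autoref{Def:Str tl dual}).

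For the second half, I would invoke symmetry via the causal reversal of \autoref{Ex:Causal reversal}: $(x,y)\in E_{\succeq_u}^2$ iff $(y,x)\in E_{\preceq_u}^2$, and $l^\leftarrow(x,y)=l(y,x)$. Setting $\tilde u := -u$ on $E$ one checks that $\tilde u$ is reverse $l^\leftarrow$-Lipschitz on $E$ (which is $l^\leftarrow$-geodesically convex), so the first argument applied to $(\mms,l^\leftarrow)$ and $\tilde u$ yields $l^\leftarrow$-cyclical monotonicity of $E_{\succeq_u}^2$ at once. I do not anticipate any obstacle: the entire lemma is a direct book-keeping  consequence of \eqref{Eq:Gleichung} together with the equality case built into the definition of $\preceq_u$.
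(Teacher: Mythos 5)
Your proof is correct and takes essentially the same approach as the one the paper points to (\cite[Lem.~4.2]{cavalletti2020}): saturate \eqref{Eq:Gleichung} on each pair of $E_{\preceq_u}^2$, telescope the $u$-sums around the cycle, and bound each shifted term by the reverse $l$-Lipschitz inequality, then obtain the $E_{\succeq_u}^2$ statement by passing to the causal reversal with $-u$ exactly as the paper itself remarks at the start of \autoref{Sub:Framework}.
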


Next we define
\begin{align}
\label{G Ray}
G &:= \{\gamma\in\TGeo(\mms) : \gamma_s \preceq_u\gamma_t  \text{ for every }0 \leq s<t\leq 1\},
\\ \nonumber
G^\leftarrow &:= \{\gamma\in \TGeo^\leftarrow(\mms) : \gamma_s\succeq_u\gamma_t \textnormal{ for every }0\leq s<t\leq 1\}.
\end{align}
As every curve in $G$ and $\smash{G^\leftarrow}$ is timelike in the respective causal structure,  $G$ and $\smash{G^\leftarrow}$ do not contain constant curves by \autoref{Cor:Causality}. Moreover, note that if we would a priori replace $l$ by $\smash{l^\leftarrow}$, then $\smash{G^\leftarrow}$ precisely corresponds to the set $G$, and vice versa --- heuristically, $\smash{(G^\leftarrow)^\leftarrow = G}$.

The key property of $\smash{G}$ relative to $\smash{\preceq_u}$ is contained in the following lemma. An evident modification relative to $\smash{\succeq_u}$ holds for $\smash{G^\leftarrow}$; similarly for \autoref{Cor:Cty}.

\begin{lemma}[Propagation of the transport relation]\label{Le:G} Let $\gamma\in\TGeo(\mms)$ such that $\smash{\gamma_0\preceq_u\gamma_1}$. Then $\smash{\gamma_s\preceq_u\gamma_t}$ for every $0\leq s<t\leq 1$; in other words, $\smash{\gamma\in G}$.
\end{lemma}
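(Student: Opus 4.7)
The proposal is to prove \autoref{Le:G} by a direct telescoping argument that exploits the sharp affine parametrization of elements of $\TGeo(\mms)$. The key observation is that the reverse $l$-Lipschitz condition \eqref{Eq:Gleichung} of $u$ provides three individual inequalities which, when summed, saturate to an equality by hypothesis; hence each intermediate inequality must itself be an equality.

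More precisely, fix $0 \leq s < t \leq 1$; I may assume $0 < s < t < 1$, since the other cases are either trivial or follow by the same scheme with one of the three inequalities below being vacuous. By definition of $\TGeo(\mms)$ one has $L := l(\gamma_0,\gamma_1) > 0$ and
\begin{align*}
l(\gamma_0,\gamma_s) = s\,L,\qquad l(\gamma_s,\gamma_t) = (t-s)\,L,\qquad l(\gamma_t,\gamma_1) = (1-t)\,L.
\end{align*}
Applying \eqref{Eq:Gleichung} to the three pairs $(\gamma_0,\gamma_s)$, $(\gamma_s,\gamma_t)$, and $(\gamma_t,\gamma_1)$---each of which lies in $E^2$ by the $l$-geodesic convexity of $E$---gives
\begin{align*}
u(\gamma_s) - u(\gamma_0) &\geq s\,L,\\
u(\gamma_t) - u(\gamma_s) &\geq (t-s)\,L,\\
u(\gamma_1) - u(\gamma_t) &\geq (1-t)\,L.
\end{align*}
Summing these and using the assumption $\gamma_0 \preceq_u \gamma_1$, which by definition means $u(\gamma_1) - u(\gamma_0) = L$, yields
\begin{align*}
L = u(\gamma_1) - u(\gamma_0) \geq s\,L + (t-s)\,L + (1-t)\,L = L,
\end{align*}
so equality holds throughout. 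In particular $u(\gamma_t) - u(\gamma_s) = (t-s)\,L = l(\gamma_s,\gamma_t) > 0$, and hence $\gamma_s \preceq_u \gamma_t$ as desired.

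I anticipate no serious obstacle: the proof is essentially a one-line telescoping. The only mild point of care is that the definition of $\preceq_u$ demands strict positivity of $l(\gamma_s,\gamma_t)$ (or coincidence of the two points), which is automatic since $\gamma\in\TGeo(\mms)$ forces $L > 0$ through the definition \eqref{Eq:TGeo def}. One should also briefly note that all pairs considered lie in $E^2$, which follows because $E$ is $l$-geodesically convex and $\gamma_0,\gamma_1 \in E$ by hypothesis, so that $\gamma_{[0,1]}\subset E$ and \eqref{Eq:Gleichung} is applicable.
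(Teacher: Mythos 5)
Your proposal is correct, and it is essentially the paper's argument: the paper proves the same equality by writing $u(\gamma_t) - u(\gamma_s)$ as a telescoping difference, bounding it above by $l(\gamma_s,\gamma_t)$ via \eqref{Eq:Gleichung} applied to the two outer pairs together with the hypothesized equality on $(\gamma_0,\gamma_1)$, and then invoking \eqref{Eq:Gleichung} once more to conclude equality. Your symmetric summation of three inequalities is a cosmetic rearrangement of the same telescoping idea.
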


\begin{proof} By \eqref{Eq:Gleichung}, the assumption $\smash{\gamma_0\preceq_u\gamma_1}$, and affine parametrization of $\gamma$,
\begin{align*}
u(\gamma_t) - u(\gamma_s) &= u(\gamma_1) - u(\gamma_0) - \big[u(\gamma_s)- u(\gamma_0)\big] - \big[u(\gamma_1) - u(\gamma_t)\big]\\
&\leq l(\gamma_0,\gamma_1) - l(\gamma_0,\gamma_s) - l(\gamma_t,\gamma_1)\\
&= l(\gamma_s,\gamma_t).
\end{align*}
Using \eqref{Eq:Gleichung} again, the previous inequality is in fact an identity.
\end{proof}

\begin{corollary}[Continuity of $u$ along rays]\label{Cor:Cty} For every $\smash{\gamma\in G}$, the function $u\circ \gamma$ is continuous on $[0,1]$.
\end{corollary}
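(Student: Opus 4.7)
The plan is to observe that the reverse $l$-Lipschitz condition on $u$ together with membership in $G$ forces $u\circ\gamma$ to be \emph{affine} in the parameter, and not merely continuous.

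First I would fix $\gamma\in G$ and unpack the definitions. Since $\gamma\in\TGeo(\mms)$, \eqref{Eq:Aff} yields $l(\gamma_s,\gamma_t) = (t-s)\,l(\gamma_0,\gamma_1) > 0$ for every $0 \leq s < t \leq 1$; in particular $\gamma_s\ll\gamma_t$, so $\gamma_s \neq \gamma_t$ by \autoref{Cor:Causality}. The defining property of $G$ gives $\gamma_s\preceq_u\gamma_t$, which by definition of $\preceq_u$ (combined with $\gamma_s\neq\gamma_t$) forces
\begin{align*}
u(\gamma_t) - u(\gamma_s) = l(\gamma_s,\gamma_t) = (t-s)\,l(\gamma_0,\gamma_1).
\end{align*}

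Specializing to $s=0$ yields
\begin{align*}
u(\gamma_t) = u(\gamma_0) + t\,l(\gamma_0,\gamma_1)
\end{align*}
for every $0\leq t\leq 1$, which is manifestly continuous (in fact affine) in $t$.

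There is no real obstacle here; the result is essentially a direct consequence of  \autoref{Le:G} and the affine parametrization of $l$-geodesics. The only mild subtlety is checking $\gamma_s\neq\gamma_t$ to rule out the trivial clause of the definition of $\preceq_u$, which is immediate from  \autoref{Cor:Causality} once one notes $l(\gamma_s,\gamma_t)>0$.
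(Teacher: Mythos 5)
Your proof is correct and matches the paper's intended argument. The paper states this corollary immediately after Lemma~\ref{Le:G} without explicit proof, relying implicitly on exactly the observation you make: for $\gamma\in G$, the equality $u(\gamma_t)-u(\gamma_s)=l(\gamma_s,\gamma_t)=(t-s)\,l(\gamma_0,\gamma_1)$ (forced by $\gamma_s\preceq_u\gamma_t$, $\gamma_s\neq\gamma_t$, and affine parametrization) makes $u\circ\gamma$ affine, hence continuous.
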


These considerations motivate the definition of a further relation $\smash{\sim_u}$ on $E$ via
\begin{align*}
x \sim_u y\quad :\Longleftrightarrow\quad x\preceq_u y\quad\textnormal{or}\quad x\succeq_u y.
\end{align*}

\begin{definition}[Transport with bad points]\label{Def:Tbr} We define
\begin{enumerate}[label=\textnormal{\alph*.}]
\item the \emph{transport relation with bad points} by 
\begin{align*}
E_{\sim_u}^{2,\End} := E_{\preceq_u}^2 \cup E_{\succeq_u}^2,
\end{align*}
\item the \emph{transport set with bad points} by
\begin{align*}
\Tr_u^\End &:= \pr_1\big[E_{\sim_u}^{2,\End} \setminus \diag(\mms^2)\big].
\end{align*}
\end{enumerate}
\end{definition}

Clearly, $\smash{E_{\sim_u}^{2,\End}}$ is Borel; since the background topology $\Top$ is Polish, this implies $\smash{\Tr_u^\End}$ is Suslin. 

By definition of $\smash{\Tr_u^\End}$, for every $\smash{x\in \Tr_u^\End}$ there exists $y\in E \setminus \{x\}$ with $\smash{x\preceq_u y}$ or $\smash{x\succeq_u y}$, and in the respective case $x\ll y$ or $y\ll x$; in particular, $\smash{y\in \Tr_u^\End}$. Yet, this does \emph{not} mean $\smash{\Tr_u^\End}$ has no extremal points according to \autoref{Def:Extremal points}. Moreover, 
 $\smash{\sim_u}$ is  not generally transitive on the whole of  $\smash{\Tr_u^\End}$.

\subsubsection{Bad points}\label{Sub:Bad pointss} Let us now make precise what we mean by bad points through the following two definitions.

\begin{definition}[Timelike branching points]\label{Def:Branching points} We define 
\begin{enumerate}[label=\textnormal{\alph*.}]
\item the \emph{timelike forward branching set} by
\begin{align*}
B &:= \{x \in \Tr_u^\End : \textnormal{\textit{there exist} }y_1,y_2\in \Tr_u^\End \\
&\qquad\qquad \textnormal{\textit{such that} } x \preceq_u y_1 \textnormal{ \textit{and} } x\preceq_u y_2 \textnormal{ \textit{yet} } y_1\not\sim_u y_2\},
\end{align*}
\item the \emph{timelike backward branching set} by
\begin{align*}
B^\leftarrow &:= \{y \in \Tr_u^\End : \textnormal{\textit{there exist} }x_1,x_2\in \Tr_u^\End \\
&\qquad\qquad \textnormal{\textit{such that} } x_1 \preceq_u y \textnormal{ \textit{and} } x_2\preceq_u y  \textnormal{ \textit{yet} } x_1\not\sim_u x_2\}.
\end{align*}
\end{enumerate}

Any point in $B$ or $\smash{B^\leftarrow}$ is called \emph{timelike forward branching point} or \emph{timelike backward branching point}, respectively.
\end{definition}

\begin{remark}[About \autoref{Def:Branching points}]\label{Re:Triples} Let $\smash{x\in B}$ be fixed, and let   $y_1,y_2\in \smash{\Tr_u^\End}$ be as in the previous  definition of $B$. Then $x$, $y_1$, and $y_2$ are necessarily distinct; in particular, we have  $x\ll y_1$ and $x\ll y_2$. Moreover, if a point $\smash{x'\in T^\End}$  satisfies $\smash{x'\preceq_u x}$ in the previous situation, then $x'\in B$.

Analogous statements hold for $\smash{B^\leftarrow}$. 
\end{remark}

\begin{definition}[Extremal points]\label{Def:Extremal points} We define
\begin{enumerate}[label=\textnormal{\alph*.}]
\item the \emph{initial point set} by
\begin{align*}
a := \{x\in \Tr_u^\End : \textnormal{\textit{no} }x^-\in \Tr_u^\End\setminus \{x\} \textnormal{ \textit{satisfies} } x^-\preceq_u x\},
\end{align*}
\item the \emph{final point set} by
\begin{align*}
b := \{x\in \Tr_u^\End : \textnormal{\textit{no} }y^+\in  \Tr_u^\End\setminus \{y\} \textnormal{ \textit{satisfies} }y^+ \succeq_u y\}.
\end{align*}
\end{enumerate}

Any point in $a$ or $b$ is called \emph{initial point} or \emph{final point}, respectively.
\end{definition}

In general, there is no relation between the sets $B$, $\smash{B^\leftarrow}$, $a$, and $b$.

Any point in the union $\smash{B\cup B^{-1}\cup a \cup b}$ is called \emph{bad point}. This terminology is motivated by the following difficulties encountered in the sequel.
\begin{itemize}
\item Timelike branching points prevent $\smash{\sim_u}$ to be an equivalence relation on $\smash{\Tr_u^\End}$, cf.~\autoref{Th:Equiv relation}. 
\item On the other hand, the presence of extremal points  will have to be waived in order to prove absolute continuity of the conditional densities along a.e. needle in our disintegration \autoref{Th:Needle} later, cf.~\autoref{Pr:Abs cont needles}.
\end{itemize}
For these reasons, these sets will be excluded from $\smash{\Tr_u^\End}$ in the next subsection. It is highly nontrivial --- and in general false --- that $\smash{\Tr_u^\End\setminus (B\cup B^\leftarrow\cup a\cup b)}$ remains sufficiently large, yet its difference from  $\smash{\Tr_u^\End}$ is $\meas$-negligible under natural conditions, as demonstrated in \autoref{Sub:Negligibility}.

The sets $B$ and $\smash{B^\leftarrow}$ are Suslin; the tedious proof is similar to the argument for \autoref{Le:Measurable section} below (which does not create a circular reasoning). On the other hand, Suslin measurability of $a$ and $b$ is easy to infer from the identities
\begin{align*}
a &= \Tr_u^\End \setminus \pr_2\big[E_{\preceq_u}^2 \setminus \diag(\mms^2)\big],\\
b &= \Tr_u^\End \setminus \pr_1\big[E_{\preceq_u}^2\setminus \diag(\mms^2)\big].
\end{align*}

To get a grip on \autoref{Def:Branching points}, we prove the following lemma. Its statement is only needed again in \autoref{Le:Measurable section} below, but we will revisit its proof technique already in the proof of \autoref{Le:Single geodesic}. Furthermore, it highlights the connection of timelike branching points and branching phenomena for $l$-geodesics. A further, perhaps more illuminating evidence is deferred to \autoref{Ex:Cutlocus} below, where we relate $B$ and $\smash{B^\leftarrow}$ to certain cut loci for specific choices of $E$ and $u$.

We only concentrate on timelike forward branching. With evident changes, all statements hold in the timelike backward branching case as well.

\begin{lemma}[Branching points vs.~branching geodesics I]\label{Le:FBranching} Let $\smash{x\in \Tr_u^\End}$. Then $\smash{x\in B}$ if and only if there exist $\smash{\gamma^1,\gamma^2\in G}$ from \eqref{G Ray} with
\begin{align*}
\min_{t\in [0,1]} \met(\gamma_t^1,\gamma_t^2) >0
\end{align*}
and such that for every $0\leq t\leq 1$, we have $\smash{x\preceq_u \gamma_t^1}$ and $\smash{x\preceq_u\gamma_t^2}$ yet $\smash{\gamma_t^1\not\sim_u\gamma_t^2}$, and
\begin{align*}
u(\gamma_t^1) = u(\gamma_t^2).
\end{align*}
\end{lemma}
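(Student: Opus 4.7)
The backward implication is immediate from the definitions: taking $y_i := \gamma_1^i$, the curves being in $G$ are nonconstant (recalling that $G$ contains no constants, by \autoref{Cor:Causality}), so $\gamma_0^i \preceq_u \gamma_1^i$ with $\gamma_0^i \neq \gamma_1^i$ places $y_i$ in $\Tr_u^\End$; the hypotheses at $t=1$ then yield $x \preceq_u y_1$, $x \preceq_u y_2$, and $y_1 \not\sim_u y_2$, whence $x \in B$.

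For the forward direction, the plan is as follows. From $x \in B$ and \autoref{Re:Triples}, I will extract $y_1, y_2 \in \Tr_u^\End$ with $x \ll y_1$, $x \ll y_2$ and $y_1 \not\sim_u y_2$, then use \autoref{Th:Ex geo} to produce $l$-geodesics $\sigma^i\colon [0,1]\to\mms$ from $x$ to $y_i$; by \autoref{Le:G} these lie in $G$. Passing to proper-time parametrizations $\tau^i\colon [0,r_i]\to\mms$ with $r_i := l(x,y_i)$ via \autoref{Pr:Reparam}, \autoref{Cor:Cty} identifies $u\circ\tau^i$ with the linear map $s\mapsto u(x)+s$.

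The decisive step is to find a compact subinterval $[s_1',s_2']\subset(0,\min\{r_1,r_2\}]$ on which $\tau^1$ and $\tau^2$ are pointwise distinct. Were $\tau^1\equiv\tau^2$ throughout $[0,r]$ with $r:=\min\{r_1,r_2\}$, WLOG $r=r_1$, then $y_1=\tau^1_{r_1}=\tau^2_{r_1}$ would lie on the ray $\tau^2$, yielding $y_1\preceq_u y_2$ via \autoref{Le:G} and contradicting $y_1\not\sim_u y_2$. Hence the open set $A:=\{s\in[0,r]:\tau^1_s\neq\tau^2_s\}$ is a nonempty subset of $(0,r]$, and I may pick any nondegenerate compact $[s_1',s_2']\subset A$. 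The affine reparametrizations $\gamma_t^i:=\tau_{s_1'+(s_2'-s_1')t}^i$ lie in $G$ and satisfy all required conclusions: $\met$-separation is positive by continuity of $\met(\tau^1_\cdot,\tau^2_\cdot)$ on the compact set $[s_1',s_2']$; $x\preceq_u\gamma_t^i$ follows from $\tau^i_0=x$, $s_1'+(s_2'-s_1')t>0$, and the restriction of $\sigma^i\in G$; proper-time parametrization gives $u(\gamma_t^1)=u(\gamma_t^2)=u(x)+s_1'+(s_2'-s_1')t$; finally, $\met$-separation combined with equal $u$-values forces $\gamma_t^1\not\sim_u\gamma_t^2$, because the definition of $\preceq_u$ allows equal $u$-values only for equal points.

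The main subtlety is the potential recurrence of crossings: the rays $\tau^1,\tau^2$ might separate and subsequently re-meet, so $A$ need not be a single right neighborhood of the initial coincidence time. This causes no difficulty, since openness and nonemptiness of $A$ in $[0,r]$ already produce an open subinterval, from which one picks the compact $[s_1',s_2']$; only one such subinterval is needed for the construction.
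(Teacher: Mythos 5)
Your proof is correct and takes essentially the same approach as the paper: both construct $l$-geodesics from $x$ to $y_1,y_2$, rescale so that $u$-values coincide along the pair (you via proper-time parametrization, the paper via an $\alpha$-rescaling of $\eta^2$), observe the rescaled curves cannot agree everywhere without forcing $y_1\sim_u y_2$, and extract a compact subinterval of distinct times by openness/continuity. Your nonemptiness-by-contradiction argument for the open set $A$ and the paper's explicit check that $\eta_\alpha^2\neq\eta_1^1=y_1$ are the same observation packaged slightly differently, and working in proper time simply dispenses with the paper's WLOG $u(y_1)\leq u(y_2)$.
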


\begin{figure}
\centering
\begin{tikzpicture}
\draw[dashed] plot [smooth] coordinates {(0,0) (-0.1,0.5) (-1,2) (-0.5,3) (-0.4,4) (-1,5.5)};
\begin{scope}
    \clip (-1,2) rectangle (-0.3,5.5);
    \draw[thick] plot [smooth] coordinates {(0,0) (-0.1,0.5) (-1,2) (-0.5,3) (-0.4,4) (-1,5.5)};
  \end{scope}
\draw[dashed] plot [smooth] coordinates {(0,0) (0.2,1) (0.5,1.5) (0.1,2.5) (1,4) (0.5,5) (1.5, 6.5)};
\begin{scope}
    \clip (0,1.5) rectangle (1.5,5);
    \draw[thick] plot [smooth] coordinates {(0,0) (0.2,1) (0.5,1.5) (0.1,2.5) (1,4) (0.5,5) (1.5, 6.5)};
  \end{scope}
\draw[fill=black] (0,0) circle(.1em);
\draw[fill=black] (-1,5.5) circle(.1em);
\draw[fill=black] (1.5,6.5) circle(.1em);
\draw[fill=black] (0.5,5) circle(.1em);
\node at (0.32,0) {$x$};
\node at (-1.35,5.5) {$y_1$};
\node at (1.875,6.5) {$y_2$};
\node at (0.925,4.9) {$\smash{\eta_\alpha^2}$};
\node at (0.85,3) {$\gamma^2$};
\node at (-0.8,3.5) {$\gamma^1$};
\end{tikzpicture}
\caption{An illustration of the proof of \autoref{Le:FBranching}.}\label{Fig:Branching}
\end{figure}
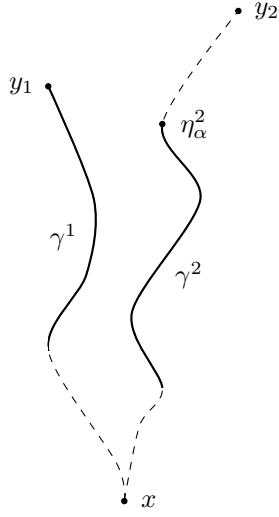

\begin{proof} The backward implication is clear by definition of $B$.

To prove the forward implication --- which is depicted in \autoref{Fig:Branching} --- assume $x\in B$, and let $\smash{y_1,y_2\in \Tr_u^\End}$ be as in  \autoref{Def:Branching points}. By symmetry of $\smash{\sim_u}$ we may and will assume $u(y_2) \geq u(y_1)$. By \autoref{Re:Triples} and \autoref{Le:G} there are $\smash{\eta^1, \eta^2 \in G}$ with $\smash{\eta_0^1=\eta_0^2=x}$, $\smash{\eta_1^1 = y_1}$, and $\smash{\eta_1^2 = y_2}$; in particular, $\smash{\eta_t^1,\eta_t^2\in \Tr_u^\End}$ for every $0\leq t\leq 1$. Since $\smash{x\preceq_u y_1}$, we obtain 
\begin{align*}
u(y_2) \geq u(y_1) = u(x) + l(x,y_1)> u(x).
\end{align*}
\autoref{Cor:Cty} and the intermediate value theorem imply
\begin{align}\label{Eq:Yet}
u(\eta_\alpha^2) = u(y_1)
\end{align}
for some $0<\alpha\leq 1$ which will be henceforth fixed. Then $\smash{\eta_\alpha^2 \neq y_1}$, since otherwise $\smash{y_1\preceq_u y_2}$ by \autoref{Le:G}, in conflict with our assumption $\smash{y_1\not\sim_u y_2}$. We even have $\smash{\eta_\alpha^2\not\sim_u y_1}$, since otherwise $\smash{\eta_\alpha^2 \ll y_1}$ or $\smash{y_1\ll \eta_\alpha^2}$ by the previous observation. In the first case, by \eqref{Eq:Yet} the observation $\smash{0 =  u(y_1) - u(\eta_\alpha^2) = l(\eta_\alpha^2,y_1) >0}$ leads to a contradiction; the case $\smash{y_1 \ll \eta_\alpha^2}$ is falsified analogously.

For every $0 \leq r\leq 1$, by employing the property $\smash{\eta^1,\eta^2\in G}$ and \eqref{Eq:Yet}, a direct computation shows
\begin{align*}
    -u(\eta_{\alpha r}^2) = -u(\eta_r^1).
\end{align*}
As $\smash{\eta_\alpha^2 \neq y_1 = \eta_1^1}$, continuity implies the existence of $\delta > 0$ such that $\smash{\eta_{1-s}^1\neq \eta_{\alpha (1-s)}^2}$ and, following the above argument, that $\smash{\eta_{1-s}^1\not\sim_u \eta_{\alpha (1-s)}^2}$ for every $0 \leq s\leq \delta$. A re\-sca\-ling argument then gives $\smash{\gamma^1,\gamma^2\in G}$ with the desired properties.
\end{proof}

\begin{remark}[Branching points vs.~branching geodesics II] By a slight abuse of terminology, let us call a point $\smash{x\in \mms}$ \emph{timelike forward branching} if there exist $\smash{\gamma^1,\gamma^2\in \TGeo(\mms)}$ and $0<\delta<1$ with $\smash{x\in \gamma^1\big\vert_{(0,\delta)} = \gamma^2\big\vert_{(0,\delta)}}$ yet $\smash{\gamma^1\neq\gamma^2}$. Then the proof of \autoref{Le:FBranching} shows that if $x\in B\setminus a$, then $x$ is such a timelike forward branching point. Conversely, a timelike forward branching point $\smash{x\in \Tr_u^\End}$ belongs to $B$ provided $\smash{x\preceq_u\gamma_t^1}$ and $\smash{x\preceq_u\gamma_t^2}$ yet $\smash{\gamma_t^1\not\sim_u\gamma_t^2}$ for some $\delta < t \leq 1$.
\end{remark}

With similar arguments, the reader may verify the following.

\begin{example}[Timelike cut loci]\label{Ex:Cutlocus} Let $o \in \mms$ be fixed.  Let $E$ and $u$ be defined as in \autoref{Ex:CM} in terms of $V := \{o\}$, i.e.
\begin{align*}
E &:= I^+(o) \cup\{o\},\\
u &:= l(o,\cdot).
\end{align*}
Note that in this case, we have $\smash{\Tr_u^\End = I^+(o)}$. Then 
the induced set $\smash{B^\leftarrow}$ coincides with the \emph{future timelike cut locus} $\smash{\TCut^+(o)}$ of $o$. The latter is the set of $y\in I^+(o)$ such that two distinct $\smash{\gamma^1,\gamma^2\in\TGeo(\mms)}$ connect $o$ and $y$.

Mutatis mutandis, considering instead
\begin{align*}
E &:= I^-(o) \cup\{o\},\\
u &:= -l(\cdot,o)
\end{align*}
we observe 
the  set $B$ coincides with the \emph{past timelike cut locus} $\smash{\TCut^-(o)}$, which is the set of $x\in I^-(o)$ such that two distinct $\smash{\gamma^1,\gamma^2\in\TGeo(\mms)}$ connect $x$ to $o$.
\end{example}


\subsubsection{Transport without bad points}

\begin{definition}[Transport without bad points]\label{Def:TV} We define
\begin{enumerate}[label=\textnormal{\alph*.}]
\item the \emph{transport set} by
\begin{align*}
\Tr_u := \Tr_u^\End\setminus (B\cup B^\leftarrow\cup a\cup b),
\end{align*}
\item the \emph{transport relation} by
\begin{align*}
E_{\sim_u}^2 := E_{\sim_U}^{2,\End} \cap \Tr_u^2.
\end{align*}
\end{enumerate}
\end{definition}


In other words, $\smash{E_{\sim_u}^2}$ corresponds to the restriction of the relation $\smash{\sim_u}$ to $\Tr_u$.

By the discussions from \autoref{Sub:Bad pointss}, the sets $\Tr_u$ and $\smash{E_{\sim_u}^2}$ are both Suslin.

\begin{theorem}[Equivalence relation]\label{Th:Equiv relation} The relation $\smash{\sim_u}$ constitutes an equivalence relation on $\smash{\Tr_u^\End \setminus (B \cup B^\leftarrow)}$, a fortiori on its subset $\Tr_u$.
\end{theorem}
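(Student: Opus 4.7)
My plan is to verify reflexivity, symmetry, and transitivity of $\sim_u$ on $\Tr_u^\End \setminus (B\cup B^\leftarrow)$, with transitivity being the only nontrivial step; the avoidance of $B$ and $B^\leftarrow$ is precisely what makes this step go through.

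Reflexivity is immediate because $\preceq_u$ is reflexive by \autoref{Le:Partial ordering u}, so $x\preceq_u x$ and hence $x\sim_u x$ for every $x\in \Tr_u^\End$. Symmetry is built into the definition of $\sim_u$ as the union of $\preceq_u$ and $\succeq_u$.

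For transitivity, suppose $x,y,z\in \Tr_u^\End \setminus (B\cup B^\leftarrow)$ satisfy $x\sim_u y$ and $y\sim_u z$. If $x=y$ or $y=z$ the conclusion $x\sim_u z$ is trivial, so I may assume all three points are distinct. Then I split into four cases according to the orientations of $\preceq_u$ realizing $x\sim_u y$ and $y\sim_u z$:
\begin{itemize}
\item If $x\preceq_u y$ and $y\preceq_u z$, then $x\preceq_u z$ by transitivity of the partial order $\preceq_u$ (\autoref{Le:Partial ordering u}), hence $x\sim_u z$. The case $x\succeq_u y$ and $y\succeq_u z$ is analogous using $\succeq_u$.
\item If $x\preceq_u y$ and $y\succeq_u z$, then both $x\preceq_u y$ and $z\preceq_u y$ hold with $x,z\in \Tr_u^\End$. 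Since $y\notin B^\leftarrow$, the defining property of the timelike backward branching set directly forces $x\sim_u z$.
\item If $x\succeq_u y$ and $y\preceq_u z$, then both $y\preceq_u x$ and $y\preceq_u z$ hold with $x,z\in \Tr_u^\End$. Since $y\notin B$, the definition of the timelike forward branching set analogously gives $x\sim_u z$.
\end{itemize}
This exhausts all cases and establishes transitivity. The conclusion for the subset $\Tr_u$ follows because $\Tr_u\subset \Tr_u^\End\setminus(B\cup B^\leftarrow)$, so $\sim_u$ restricts to an equivalence relation on $\Tr_u$ as well.

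The only subtlety worth flagging is that the definitions of $B$ and $B^\leftarrow$ (\autoref{Def:Branching points}) are tailor-made so that avoiding them is exactly what rules out the two mixed cases; without excising these two sets, the $\sim_u$-relation is generally not transitive (as illustrated by the timelike cut loci in \autoref{Ex:Cutlocus}). Hence no additional geometric input, such as timelike nonbranching, enters the proof at this level — that input will instead be needed later to show $B\cup B^\leftarrow\cup a\cup b$ is $\meas$-negligible (\autoref{Th:All negligible}).
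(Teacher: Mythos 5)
Your proof is correct and follows essentially the same route as the paper: reflexivity and symmetry are immediate, and for transitivity you reduce to the two mixed-orientation cases, which are precisely ruled out by excluding $B^\leftarrow$ and $B$ respectively. The paper merely compresses the two same-orientation cases into a single reference to \autoref{Le:Partial ordering u}, whereas you spell them out explicitly; the substance is identical.
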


\begin{proof} We abbreviate $\smash{\Tr_u' := \Tr_u^\End \setminus (B \cup B^\leftarrow)}$. By definition of $\smash{\sim_u}$, we have $\smash{x\sim_u x}$ for every $\smash{x\in \Tr_u'}$. Moreover, $\smash{\sim_u}$ is symmetric by construction. To show transitivity, let $\smash{x,y,z\in \Tr_u'}$ be mutually distinct points  with $\smash{x\sim_u y}$ and $\smash{y\sim_u z}$. By \autoref{Le:Partial ordering u}, we only have to consider the following two cases.

\textbf{Case 1.} Assume $\smash{x\preceq_u y}$ and $\smash{y\succeq_u z}$, hence $\smash{z\preceq_u y}$. This immediately implies $\smash{x\sim_u z}$; indeed, otherwise $\smash{x\not\sim_u z}$ contradicts the assumption $\smash{y\notin B^\leftarrow}$.

\textbf{Case 2.} The case $\smash{x\succeq_u y}$ and $\smash{y\preceq_u z}$ is treated similarly to Case 3 by contradicting the assumption $\smash{y\notin B}$.
\end{proof}

The equivalence class of $x\in \Tr_u$ with respect to $\sim_u$ is denoted 
 $\tilde x^u$.


\subsection{Ray map and quotient map} In order to disintegrate $\meas$ with respect to $u$ later, we now have to build an $\meas$-measurable quotient map of $\smash{E_{\sim_u}^2}$. 

First, in \autoref{Le:Single geodesic} we prove that every equivalence class of $\smash{\sim_u}$ is crossed by exactly one $l$-geodesic. This will make the so-called ray map $\ray$ from  \autoref{Cor:Ray map} well-defined. In turn, $\ray$ will be used to construct the above mentioned quotient map in \autoref{Pr:Quotient map}.

\begin{lemma}[Nonbranching rays]\label{Le:Single geodesic} Let $x,y_1,y_2\in \Tr_u$ satisfy $\smash{x\sim_u y_1}$ and $\smash{x\sim_u y_2}$. Then there exists $\smash{\gamma\in G}$ with $\{x,y_1,y_2\} \subset \gamma_{[0,1]}$. 

Moreover, if any $\smash{\gamma'\in G}$ satisfies the previous conclusion too, there exists $\smash{\gamma''\in G}$ such that $\smash{\gamma_{[0,1]}\cup \gamma_{[0,1]}' \subset \gamma_{[0,1]}''}$.
\end{lemma}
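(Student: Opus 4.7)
For the first assertion, I would start by relabeling $\{x,y_1,y_2\}$ as $\{p_0,p_1,p_2\}$ so that $u(p_0)\leq u(p_1)\leq u(p_2)$. Since $\sim_u$ is an equivalence relation on $\Tr_u$ by \autoref{Th:Equiv relation}, all three points are mutually $\sim_u$-related; combined with the $u$-ordering and the observation that any two distinct $\preceq_u$-related points have strictly increasing $u$-values (from the very definition of $\preceq_u$), this forces $p_0\preceq_u p_2$ and in particular $p_0\ll p_2$. By \autoref{Th:Ex geo} and \autoref{Pr:Reparam} there is $\gamma\in\TGeo(\mms)$ from $p_0$ to $p_2$, and \autoref{Le:G} places it in $G$. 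Using $u(\gamma_t)-u(\gamma_0)=l(\gamma_0,\gamma_t)=t\,l(p_0,p_2)$, the function $u\circ\gamma$ is affine and strictly increasing, so \autoref{Cor:Cty} together with the intermediate value theorem yields $t^\ast\in[0,1]$ with $u(\gamma_{t^\ast})=u(p_1)$.

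The core of the argument is to show $\gamma_{t^\ast}=p_1$. Suppose not; then $\gamma_{t^\ast}\not\sim_u p_1$, because two $\sim_u$-related points sharing a $u$-value must coincide. Moreover, $p_0\preceq_u\gamma_{t^\ast}$ by \autoref{Le:G} and $p_0\preceq_u p_1$ by construction, while both $\gamma_{t^\ast}$ and $p_1$ lie in $\Tr_u^\End$ (handling $t^\ast\in\{0,1\}$ separately collapses to $p_0=p_1$ or $p_1=p_2$, both trivial). Thus $p_0\in B$ by \autoref{Def:Branching points}, contradicting $p_0\in\Tr_u$ via \autoref{Def:TV}. Hence $\gamma_{t^\ast}=p_1$ and $\{x,y_1,y_2\}\subset\gamma_{[0,1]}$.

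For the second assertion, let $\gamma,\gamma'\in G$ each contain $\{x,y_1,y_2\}$. The same branching argument applied to the two $G$-subsegments from $p_0$ to $p_2$ extracted from $\gamma$ and $\gamma'$ shows these subsegments have identical image. Next, the four endpoints $\gamma_0,\gamma_0',\gamma_1,\gamma_1'$ are pairwise $\sim_u$-related: for instance $\gamma_0\preceq_u p_0$ and $\gamma_0'\preceq_u p_0$ combined with $p_0\notin B^{\leftarrow}$ (as $p_0\in\Tr_u$) force $\gamma_0\sim_u\gamma_0'$; analogously $\gamma_1\sim_u\gamma_1'$ from $p_2\notin B$. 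Let $q_-$ be the endpoint of smaller $u$-value in $\{\gamma_0,\gamma_0'\}$ and $q_+$ the one of larger $u$-value in $\{\gamma_1,\gamma_1'\}$. I would then construct $\gamma''$ as the concatenation of the ``longer'' backward segment from $q_-$ up to $p_0$, the common middle from $p_0$ to $p_2$, and the ``longer'' forward segment from $p_2$ to $q_+$, each drawn from whichever of $\gamma,\gamma'$ provides the more extended piece. Since any $G$-subsegment $\eta$ satisfies $\Len_l(\eta)=u(\eta_1)-u(\eta_0)$, the reverse triangle inequality \eqref{Eq:Reverse tau} and the reverse $l$-Lipschitz bound \eqref{Eq:Gleichung} give $l(q_-,q_+)=u(q_+)-u(q_-)$ with equality throughout, so the concatenation is a causal maximizer. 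Reparametrizing via \autoref{Pr:Reparam} produces $\gamma''\in\TGeo(\mms)$; $q_-\preceq_u q_+$ and \autoref{Le:G} place it in $G$. The containment $\gamma_{[0,1]}\cup\gamma'_{[0,1]}\subset\gamma''_{[0,1]}$ on the backward and forward flanks is obtained by one last instance of the IVT+branching argument, this time probing with $p_0\notin B^{\leftarrow}$ on the past side and $p_2\notin B$ on the future side.

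The main obstacle is the second assertion. Every time we wish to rule out branching, we must anchor the argument at a point known to lie in $\Tr_u$ and not merely $\Tr_u^\End$, and the natural endpoint candidates $q_-,q_+$ need not be in $\Tr_u$. The trick is to route all exclusions through the extremal probes $p_0,p_2\in\{x,y_1,y_2\}\subset\Tr_u$: this is precisely why the middle-segment agreement and the pairwise relatedness of the four endpoints must be established before assembling $\gamma''$, and why the final containment uses $p_0$ and $p_2$ (not $q_-,q_+$) as the branching pivots.
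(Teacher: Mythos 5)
Your proposal is correct and rests on the same mechanism as the paper's proof — nonbranching of $\Tr_u$ forces intermediate points of $G$-elements to be unique — but streamlines the decomposition. The paper distinguishes four cases according to the $\preceq_u$-relations of $x$ to $y_1$ and $y_2$, handling the two cases where $x$ is the $\preceq_u$-middle by direct geodesic concatenation and the remaining two by contradicting $x\notin B$ or $x\notin B^\leftarrow$; your relabeling by $u$-value folds all four into a single branching argument anchored at the $u$-minimal point $p_0$: take a $G$-element from $p_0$ to the $u$-maximal point $p_2$, locate via the intermediate value theorem the parameter where $u$ equals $u(p_1)$, and invoke $p_0\notin B$ to conclude coincidence. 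For the second assertion, which the paper disposes of with \emph{argued similarly}, your expanded argument — middle-segment agreement via $p_0\notin B$, pairing the outer endpoints via $p_0\notin B^\leftarrow$ and $p_2\notin B$, concatenation plus reparametrization to build $\gamma''\in G$, and the final pointwise containment by one more branching argument — is essentially the intended one; you also correctly identify the pivotal subtlety that all branching exclusions must be anchored at $p_0,p_2\in\Tr_u$ rather than at $q_-,q_+$, which a priori lie only in $\Tr_u^\End$ and thus cannot serve as the contradiction pivots.
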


\begin{proof} To show the first statement, without restriction we assume $x$, $y_1$, and $y_2$ are distinct,  otherwise we just apply \autoref{Le:G}. We  have to distinguish four cases.

\textbf{Case 1.} Assume $\smash{y_1\preceq_u x}$ and $\smash{x\preceq_u y_2}$. By \autoref{Re:Triples}, we have $y_1\ll x \ll y_2$, and by transitivity of $\smash{\preceq_u}$ we have $\smash{y_1\preceq_u y_2}$. Therefore,
\begin{align*}
    l(y_1,y_2) &= u(y_2)- u(y_1)\\ 
    &= \big[u(y_2) - u(x)\big] + \big[u(x) - u(y_1)\big]\\
    &= l(y_1,x) + l(x,y_2).
\end{align*}
As in the proof of \autoref{Le:Continuity geos}, the concatenation of two appropriate elements in $\smash{\TGeo(\mms)}$ --- one going from $y_1$ to $x$,  one going from $x$ to $y_1$ --- yields an element $\gamma\in \smash{\TGeo(\mms)}$ from $y_1$ to $y_2$. By \autoref{Le:G}, we get $\gamma\in G$ with $\smash{\{y_1,x,y_2\}\subset\gamma_{[0,1]}}$ and hence the desired curve.

\textbf{Case 2.} Analogously to Case 1, the assumption $\smash{y_2\preceq_u x}$ and $\smash{x\preceq_u y_1}$ yields an element $\smash{\gamma\in G}$ with $\smash{\{y_2,x,y_1\}\subset \gamma_{[0,1]}}$.

\textbf{Case 3.} Assume $\smash{x\preceq_u y_1}$ and $\smash{x\preceq_u y_1}$.  Since $\smash{\sim_u}$ is  an equivalence relation on $\Tr_u$ by \autoref{Th:Equiv relation}, and since $y_1\neq y_2$, either  $y_1\ll y_2$ or $y_2\ll y_1$ by \autoref{Re:Triples}. Without restriction, we assume 
\begin{align*}
u(y_2) > u(y_1) > u(x).
\end{align*}
Indeed, equality in the second place never occurs under our standing hypotheses, and a possible equality in the first place  can be easily contradicted as in the proof of \autoref{Le:FBranching}. Now, assume to the contrapositive that no $\smash{\gamma\in G}$ from $x$ to $y_2$ passes through $y_1$. Then by geodesy, \autoref{Le:G}, and a continuity argument there exist $\smash{\gamma\in G}$ and $0<t<1$ with $\gamma_0=x$, $\gamma_1=y_2$, $u(\gamma_t) = u(y_1)$, $\smash{x\preceq_u\gamma_t}$, and $\gamma_t\neq y_1$. Arguing again as in the proof of \autoref{Le:FBranching}, we get $\smash{\gamma_t\not\sim_u y_1}$, but this contradicts the assumption $\smash{x\notin B}$.

\textbf{Case 4.} The case $\smash{y_1\preceq_u x}$ and $\smash{y_2\preceq_u x}$ is treated as in Case 3 by falsifying the assumption $\smash{x\in B^\leftarrow}$.

The last statement is argued similarly.
\end{proof}

Now we define the set 
\begin{align*}
    D_{\ray} &= \{(\alpha,t,y) \in \Tr_u\times \R \times \Tr_u : \alpha \sim_u y,\, l_+(\alpha,y)-l_+(y,\alpha)=t\}    
\end{align*}

and the multi-valued map $\smash{\ray \colon \Dom(\ray) \to 2^{\Tr_u}}$ in terms of its graph
\begin{align*}
    \graph\ray := (\Dom(\ray) \times \Tr_u) \cap D_\ray,
\end{align*}
where $\Dom(\ray) \subset \Tr_u\times \R$ is given by
\begin{align*}
    \Dom(\ray) := \{(\alpha,t) \in \Tr_u \times \R : D_\ray \cap (\{(\alpha,t)\} \times \Tr_u) \neq \emptyset\}.
\end{align*}
In other words, to a given pair $(\alpha,t)\in\Dom(\ray)$ the multi-valued map $\ray$ assigns the nonempty set of points $y\in \Tr_u$ with ``distance'' to $\alpha$ being equal to $t$. It is clear that $(\alpha,0)\in \Dom(\ray)$ for every $\alpha\in \Tr_u$.

\begin{corollary}[Ray map]\label{Cor:Ray map} The map $\ray$ defined above is single-valued on $\Dom(\ray)$, hence may and will be identified with a nonrelabeled Borel map $\ray\colon \Dom(\ray) \to \Tr_u$.  

This map obeys the following properties for every $\alpha\in \pr_1(\Dom(\ray))$. 
\begin{enumerate}[label=\textnormal{\textcolor{black}{(}\roman*\textcolor{black}{)}}]
    \item\label{La:I} \textnormal{\textbf{Intervals.}} The set 
    \begin{align*}
        \Dom(\ray)(\alpha) := \pr_2\big[\Dom(\ray) \cap (\{\alpha\}\times \R)\big]
    \end{align*}
    is a nonempty convex subset of $\R$, i.e.~an interval.
    \item\label{La:II} \textnormal{\textbf{Totality.}} The map $F\colon \Dom(\ray)(\alpha) \to \Tr_u$ defined by $F(t) := \ray(\alpha,t)$ is injective. Moreover, it maps surjectively onto the ray $\smash{\tilde{\alpha}^u}$.
    \item\label{La:III} \textnormal{\textbf{Order isometry.}}  For every $s,t\in \Dom(\ray)(\alpha)$ with the property $s <t$, we have $\smash{F(s)\preceq_u F(t)}$ and
    \begin{align*}
        l(F(s), F(t)) = t-s.
    \end{align*}
\end{enumerate}
\end{corollary}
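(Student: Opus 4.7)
My plan is to establish single-valuedness of $\ray$ first, then derive the three listed properties by transporting everything onto the unique $l$-geodesic through each equivalence class provided by \autoref{Le:Single geodesic}, and finally verify measurability via descriptive-set-theoretic selection.

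For single-valuedness, suppose $(\alpha, t)\in \Dom(\ray)$ admits two points $y_1, y_2\in \Tr_u$ with $\alpha \sim_u y_i$ and $l_+(\alpha, y_i)-l_+(y_i, \alpha) = t$. Since $\alpha \sim_u y_i$ forces $\alpha\preceq_u y_i$ or $y_i\preceq_u\alpha$, with the two cases exclusive unless $y_i = \alpha$ by \autoref{Cor:Causality}, a case analysis using the defining identities on $E_{\preceq_u}^2$ and $E_{\succeq_u}^2$ yields $t = u(y_i)-u(\alpha)$ in every case. Hence $u(y_1) = u(y_2)$. \autoref{Le:Single geodesic} then produces a single $\gamma\in G$ whose image contains $\{\alpha, y_1, y_2\}$, along which $u$ is strictly monotone by affine parametrization and the defining property of $G$, forcing $y_1 = y_2$.

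For the properties, fix $\alpha\in \pr_1(\Dom(\ray))$. Nonemptiness in \ref{La:I} comes from $(\alpha, 0)\in \Dom(\ray)$. For convexity, given $s < r < t$ with $s, t\in \Dom(\ray)(\alpha)$, let $y_s := \ray(\alpha, s)$ and $y_t := \ray(\alpha, t)$; by \autoref{Le:Single geodesic}, both lie on a common $\gamma\in G$ through $\alpha$, and continuity of $u\circ \gamma$ (\autoref{Cor:Cty}) with the intermediate value theorem give a unique $y$ on the sub-arc with $u(y) = u(\alpha)+r$. To verify $y\in \Tr_u$, strict monotonicity of $u$ along $\gamma$ and $y_s\preceq_u y\preceq_u y_t$ exclude $y\in a\cup b$; if $y\preceq_u z_1$ and $y\preceq_u z_2$ for some $z_i\in \Tr_u^\End$, transitivity (\autoref{Le:Partial ordering u}) yields $y_s\preceq_u z_i$, and the hypothesis $y_s\in \Tr_u\setminus B$ forces $z_1\sim_u z_2$, so $y\notin B$; the case $y\notin B^\leftarrow$ is symmetric via $y_t$. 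For \ref{La:II}, injectivity of $F$ follows from the identity $t=u(F(t))-u(\alpha)$ derived above, and surjectivity onto $\tilde\alpha^u$ from $F(u(y)-u(\alpha)) = y$ for any $y\in\tilde\alpha^u$, again by the single-valuedness computation. For \ref{La:III}, $F(s)$ and $F(t)$ lie on a common $\gamma\in G$ through $\alpha$ with $u$-values increasing in the parameter; \autoref{Le:G} then gives $F(s)\preceq_u F(t)$, while affinity of $\gamma$ implies $l(F(s), F(t)) = u(F(t))-u(F(s)) = t-s$.

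The main obstacle I anticipate is the measurability claim. The graph of $\ray$ is cut out of the Polish space $\Tr_u\times\R\times \Tr_u$ by the Suslin condition $\alpha\sim_u y$ and the Borel condition $l_+(\alpha,y)-l_+(y,\alpha) = t$, so $\graph\ray$ is Suslin. Combined with the single-valuedness established above, the Jankov--von Neumann uniformization theorem then yields that $\ray$ is universally measurable in the sense of \autoref{Sub:Suslin}, which suffices for the disintegration in \autoref{Th:Needle}; arranging a truly Borel representative on $\Dom(\ray)$, as stated, requires a further refinement via a Borel-measurable selection, e.g.~by restricting to the Borel subset of $\Tr_u^\End$ carrying full $\meas$-mass furnished by the upcoming \autoref{Th:All negligible}.
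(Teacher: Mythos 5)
Your argument is correct and follows essentially the same route as the paper: single-valuedness via \autoref{Le:Single geodesic}, convexity of $\Dom(\ray)(\alpha)$ by intermediate-value interpolation along a transport ray through $\alpha$, and items \ref{La:II}--\ref{La:III} from the identity $t = u(\ray(\alpha,t)) - u(\alpha)$ that drops out of the defining relation $l_+(\alpha,y)-l_+(y,\alpha)=t$ in both branches of $\sim_u$. You handle two points more carefully than the paper's sketch: you verify explicitly that the interpolated point avoids $B\cup B^\leftarrow\cup a\cup b$ (the paper compresses this to ``whence $F(s)=\gamma_{s/t}$ as above''), and your measurability caveat is well taken --- since $\Dom(\ray)$ is Suslin rather than Borel, the Jankov--von Neumann selection yields measurability for the $\sigma$-algebra generated by Suslin sets (hence universal measurability), which is exactly what the cited proof of \autoref{Le:Measurable section} actually delivers and what the disintegration in \autoref{Th:Needle} requires, so the paper's ``Borel'' should be read as the customary abuse of terminology.
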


\begin{proof} We first show $\ray$ is single-valued on  $\Dom(\ray)$. By definition of the latter, it suffices to prove $\#\ray(\alpha,t) \leq 1$ for every $(\alpha,t) \in \Dom(\ray)$. We only treat the case $t>0$; the case $t<0$ is addressed similarly, while the case $t=0$ is clear from \autoref{Cor:Causality}. Thus, let $y_1,y_2\in \ray(\alpha,t)$, and note that $y_1$ and $y_2$ are distinct from $\alpha$. The point $\alpha$ is no timelike forward branching point by definition of $\Tr_u$, thus $\smash{y_1\sim_u y_2}$. Then \autoref{Le:Single geodesic} implies that $x$, $y_1$, and $y_2$ lie on a single  $l$-geodesic segment. Since  $l(\alpha,y_1)=l(\alpha,y_2)=t>0$, this forces $y_1=y_2$.

Borel measurability of $\ray$ follows from analyticity of $\graph \ray$, compare with the proof of \autoref{Le:Measurable section} below.

For item \ref{La:I}, observe that $\ray(\alpha,0)=\alpha$ implies $0\in \Dom(\ray)(\alpha)$. Next, we claim $[0,t]\subset \Dom(\ray)(\alpha)$ for every $t \in \Dom(\ray)(\alpha) \cap (0,\infty)$; the case $t\in \Dom(\ray)(\alpha)\cap (-\infty,0)$ follows by an analogous argument. By definition of $\ray$, we have $\smash{\alpha\preceq_u  F(t)}$ with $l(\alpha,F(t))=t$. Using \autoref{Le:G}, let $\smash{\gamma\in G}$ connect $\alpha$ to $F(t)$. In particular, for $0 < s<t$ we have $\smash{\alpha\preceq_u\gamma_{s/t}}$ and $l(\alpha,\gamma_{s/t})=s$, whence $F(s) = \gamma_{s/t}$ as above.

Item \ref{La:II} follows from the definition of $\ray$.

Item \ref{La:III} in the cases $0 \leq s < t$ or $s< t\leq 0$ follows  as in \ref{La:I} by identifying intermediate points of $F$ and of $l$-geodesics. If $s < 0 < t$, by definition of $\ray$ and transitivity of $\sim_u$ we obtain
\begin{align*}
    l(F(s),F(t)) &= u(F(t)) - u(F(s))\\
    &= u(F(t)) - u(\alpha) - \big[u(F(s)) - u(\alpha)\big]\\
    &= l(\alpha,F(t)) - l(\alpha,F(s))\\ 
    &= t-s.\qedhere
\end{align*}
\end{proof}

\begin{remark}[Openness] In fact, since $\Tr_u$ has empty  intersection with the set $a\cup b$ of extremal points, $\Dom(\ray)(\alpha)$ is an open  subset of $\R$ for every $\alpha\in\pr_1(\Dom(\ray))$. 
\end{remark}

For every $\alpha\in \pr_1(\Dom(\ray))$, the map $\ray(\alpha,\cdot)$ follows the ``$l$-geodesic ray'' $t\mapsto F(t)$ on $\Dom(\ray)(\alpha)$. We thus call $\ray$  the \emph{ray map}. 

Given this map $\ray$, the proof of the following result is now analogous to  \cite[Lem. 3.9]{cavalletti2017} --- with $\met$ replaced by $l$ --- hence omitted. Observe that the similar argument from \cite[Prop.~4.9]{cavalletti2020} does not apply directly  since $u$ may attain negative values. Moreover, the $\sigma$-compactness in the proof of \cite[Lem.~3.9]{cavalletti2017} is unclear in our case and has to be replaced by mere Suslin measurability; this is why the map $\Quot$ below is stated as $\meas$-measurable, and not Borel measurable as in \cite[Lem.~3.9]{cavalletti2017}.

Recall that a map $\Quot\colon \Tr_u\to \Tr_u$ is called \emph{quotient map} for the equivalence relation $\smash{\sim_u}$ on $\Tr_u$ if $\graph\Quot \subset  \smash{E_{\sim_u}^2}$, and $\Quot(x)=\Quot(y)$ whenever $x,y\in\Tr_u$ obey $\smash{x\sim_u y}$.

\begin{proposition}[Quotient map]\label{Pr:Quotient map} There exists an $\meas$-measurable quotient map $\Quot\colon \Tr_u\to \Tr_u$ for the equivalence relation $\smash{\sim_u}$ on $\Tr_u$.

 Moreover, the image of $\Quot$ can be written as a countable union of level sets of  $u$. More precisely, there exist rational numbers $(a_i)_{i\in\N}$ and a sequence of mutually disjoint Suslin subsets $(Q_i)_{i\in\N}$ of $\Tr_u$ such that $Q_i \subset u^{-1}(a_i)$ for every $i\in \N$ and
\begin{align*}
    \sfQ(\Tr_u) = \bigcup_{i\in\N} Q_i.
\end{align*}
\end{proposition}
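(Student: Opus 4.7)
The plan is to partition $\Tr_u$ into countably many $\sim_u$-saturated pieces indexed by rationals, then define $\Quot$ canonically on each piece using the ray map from \autoref{Cor:Ray map}. First I would enumerate $\Q = \{a_i\}_{i\in\N}$ and, for each $i$, set $L_i := u^{-1}(a_i) \cap \Tr_u$ together with its $\sim_u$-saturation
\begin{align*}
A_i := \pr_1\!\big(E_{\sim_u}^2 \cap (\Tr_u \times L_i)\big) = \{x\in\Tr_u : a_i \in u(\tilde{x}^u)\}.
\end{align*}
Both $L_i$ and $A_i$ are Suslin, the latter as a projection of a Suslin set in the Polish product space. The crucial observation making the $A_i$ exhaust $\Tr_u$ is that, by the order isometry in \autoref{Cor:Ray map}, the restriction of $u$ to any equivalence class $\tilde{\alpha}^u$ corresponds, via $\ray(\alpha,\cdot)$, to the affine translate $t\mapsto u(\alpha)+t$ of the open interval $\Dom(\ray)(\alpha)$; hence $u(\tilde{\alpha}^u)$ is a nondegenerate open interval of $\R$ and therefore contains a rational.

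Next I would disjointify by setting $B_1 := A_1$ and $B_n := A_n \setminus \bigcup_{j<n} A_j$ for $n\geq 2$. This produces a universally measurable partition of $\Tr_u$ into $\sim_u$-saturated pieces, and motivates the definition
\begin{align*}
\Quot(x) := \ray\big(x,\, a_n - u(x)\big) \qquad \text{if } x\in B_n.
\end{align*}
The map is Borel on each $B_n$ as a composition of the Borel map $\ray$ with the Borel function $u$, hence $\meas$-measurable on $\Tr_u$. The order isometry guarantees $\Quot(x)$ is precisely the unique intersection of $\tilde{x}^u$ with $L_n$; combined with the saturatedness of $B_n$, this makes $\Quot$ a quotient map with $\graph \Quot \subset E_{\sim_u}^2$. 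Setting $Q_i := \Quot(B_i) = L_i \setminus \bigcup_{j<i} A_j$ gives $Q_i \subset u^{-1}(a_i)$, mutual disjointness automatic from the distinctness of the level sets, and $\Quot(\Tr_u) = \bigcup_{i\in\N} Q_i$.

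The main obstacle is measurability bookkeeping. Because the saturations $A_i$ arise via projection they are in general only Suslin rather than Borel, so the disjointification inherently pulls in co-Suslin complements, and a priori one obtains only universal measurability of the $B_n$ and hence of the $Q_i$. To match the sharper Suslin conclusion asserted in the statement, I would adapt the inductive construction from \cite[Lem.~3.9]{cavalletti2017}: at each stage one extracts a Suslin subset of $B_n$ that differs from $B_n$ only on a part already absorbed by earlier saturations, so that the images $\Quot(B_n)$ remain Suslin while still covering $\Quot(\Tr_u)$. Throughout, the ray map from \autoref{Cor:Ray map} is the essential bridge: it converts the order-theoretic structure on each equivalence class into a concrete Borel affine parametrization, reducing otherwise delicate selection and measurability questions to routine Borel arithmetic.
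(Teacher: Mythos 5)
Your construction follows the same route as the paper (which itself defers to \cite[Lem.~3.9]{cavalletti2017}, replacing the metric by $l$), and the core is right: the affine identity $u(\ray(\alpha,t))=u(\alpha)+t$ coming from \autoref{Cor:Ray map} shows $u(\tilde\alpha^u)$ is a nondegenerate open interval (here you implicitly use the Openness remark, which relies on $\Tr_u$ avoiding the extremal sets $a,b$), hence each class meets a rational level set, the saturations $A_i$ cover $\Tr_u$, they are $\sim_u$-saturated, and the assignment $\Quot(x)=\ray(x,a_n-u(x))$ on $B_n$ is well defined with $\graph\Quot\subset E_{\sim_u}^2$.

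The weak point is the last paragraph. As you note, $A_i$ is only Suslin (a projection), so $B_n=A_n\setminus\bigcup_{j<n}A_j$ and $Q_n=L_n\setminus\bigcup_{j<n}A_j$ lie in the $\sigma$-algebra generated by Suslin sets — i.e.\ they are universally measurable — but Suslin sets are not closed under complementation, so Suslinness of the $Q_n$ does not follow from this bookkeeping alone. Your proposed repair — "extract a Suslin subset of $B_n$ that differs from $B_n$ only on a part already absorbed by earlier saturations" — is circular as stated: by construction of the disjointification nothing in $B_n$ is absorbed by earlier pieces, so shrinking $B_n$ simply leaves some equivalence classes with no assigned representative. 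What your argument actually delivers, without further work, is that $\Quot$ is $\meas$-measurable and that the $Q_n$ are universally measurable; this already matches the paper's own caveat that the $\sigma$-compactness of Cavalletti--Mondino's setting must be weakened to "mere Suslin measurability," which is precisely why $\Quot$ is asserted only $\meas$-measurable and not Borel as in \cite{cavalletti2017}. To recover the stronger Suslin assertion for the $Q_i$, one needs a genuine additional idea (e.g.\ exploiting the interval structure of $\Dom(\ray)(\cdot)$ to express the complements in a more tractable form), not merely an inductive re-extraction. This is a real technical hurdle and worth flagging explicitly rather than deferring.
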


\begin{definition}[Quotient set and transport rays]\label{Def:Quot set}  The set $Q := \Quot(\Tr_u)$ will be termed  the \emph{quotient set}. 

Any set of the form $\mms_\alpha := \sfQ^{-1}(\alpha)$, $\alpha\in Q$, will be called  a \emph{transport ray}.
\end{definition}

The following readily follows from basic properties of $\ray$ and $\Quot$.

\begin{corollary}[Right-inverse of $\ray$]
    The map $\sfh\colon \Tr_u \to \Tr_u \times \R$ given by
    \begin{align*}
        \sfh(x) := \begin{cases}
            (x,0) & \textnormal{if } \sfQ(x) = x,\\
            (\sfQ(x), l(x,\sfQ(x))) & \textnormal{if }\sfQ(x) \neq x \textnormal{ yet } x\preceq_u \sfQ(x),\\
            (\sfQ(x), -l(\sfQ(x),x)) & \textnormal{if }\sfQ(x) \neq x \textnormal{ yet } \sfQ(x)\preceq_u x
        \end{cases}
    \end{align*}
    is right-inverse to $\ray$ on $\Tr_u$.
\end{corollary}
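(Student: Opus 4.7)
My plan is to verify the equation $\ray\circ\sfh=\Id_{\Tr_u}$ by a case analysis matching the three branches in the definition of $\sfh$. First I would establish that $\sfh$ is well-defined: since $\sfQ$ is a quotient map for $\sim_u$ on $\Tr_u$ by \autoref{Pr:Quotient map}, every $x\in \Tr_u$ satisfies $\sfQ(x)\sim_u x$, and therefore exactly one of the three alternatives $\sfQ(x)=x$, $x\preceq_u \sfQ(x)$ with $x\neq\sfQ(x)$, or $\sfQ(x)\preceq_u x$ with $x\neq\sfQ(x)$ holds. Mutual exclusivity comes from antisymmetry of the order $\leq$ (\autoref{Def:signed time sep}) combined with chronology $l(z,z)=0$ from \autoref{Cor:Causality}, which together prevent simultaneous validity of $x\preceq_u\sfQ(x)$ and $\sfQ(x)\preceq_u x$ when $x\neq\sfQ(x)$.

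The core of the argument is to exhibit, in each case, the point $x$ itself as the unique witness in $D_\ray$ certifying both $\sfh(x)\in \Dom(\ray)$ and $\ray(\sfh(x))=x$. In the degenerate first case $\sfQ(x)=x$, the triple $(x,0,x)\in D_\ray$ follows from reflexivity of $\sim_u$ together with $l_+(x,x)=0$ from \autoref{Cor:Causality}. In the two nontrivial cases, $\sfQ(x)$ and $x$ are strictly chronologically related, so antisymmetry of $\leq$ (\autoref{Def:signed time sep}) forces exactly one of $l_+(x,\sfQ(x))$ and $l_+(\sfQ(x),x)$ to vanish while the other equals the positive time separation $l(x,\sfQ(x))$ or $l(\sfQ(x),x)$. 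A short sign computation then shows that the quantity $l_+(\sfQ(x),x)-l_+(x,\sfQ(x))$ appearing in the definition of $D_\ray$ matches the second coordinate of $\sfh(x)$, whence $(\sfh(x),x)\in\graph\ray$. Single-valuedness of $\ray$ on $\Dom(\ray)$, established in \autoref{Cor:Ray map}, then yields $\ray(\sfh(x))=x$.

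I do not anticipate a serious obstacle; the proof amounts to routine bookkeeping of signs and a direct appeal to single-valuedness of $\ray$. The only mild subtlety is verifying, within each case, that $x$ lies in the equivalence class $\widetilde{\sfQ(x)}^u$ used to parametrize $\ray(\sfQ(x),\cdot)$, but this is automatic from the quotient map property together with \autoref{Th:Equiv relation}, which guarantees that $\sim_u$ is a genuine equivalence relation on $\Tr_u$ (so that $\sfQ(x)\sim_u x$ holds as required).
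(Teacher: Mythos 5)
Your overall strategy is the natural one and the one the paper leaves implicit: check in each of the three branches that $(\sfh(x),x)\in D_\ray$, then invoke single-valuedness of $\ray$ from \autoref{Cor:Ray map}. The well-definedness and trichotomy arguments, and the first (degenerate) case, are all fine.

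However, the step you call a ``short sign computation'' that ``matches the second coordinate of $\sfh(x)$'' is precisely where the argument fails, and you never actually carry it out. Take the branch $\sfQ(x)\neq x$ with $x\preceq_u\sfQ(x)$, so $x\ll\sfQ(x)$. By antisymmetry and chronology, $\sfQ(x)\not\leq x$, hence $l_+(\sfQ(x),x)=0$ while $l_+(x,\sfQ(x))=l(x,\sfQ(x))>0$. Substituting $\alpha=\sfQ(x)$, $y=x$ into the definition of $D_\ray$ gives
\begin{align*}
l_+(\sfQ(x),x)-l_+(x,\sfQ(x)) = -l(x,\sfQ(x)) < 0,
\end{align*}
whereas the statement declares the second coordinate of $\sfh(x)$ to be $+l(x,\sfQ(x))>0$. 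Symmetrically, in the branch $\sfQ(x)\preceq_u x$ one finds $l_+(\sfQ(x),x)-l_+(x,\sfQ(x))=+l(\sfQ(x),x)>0$, whereas the statement gives $-l(\sfQ(x),x)<0$. These signs are also forced by \autoref{Cor:Ray map}: as shown in the proof of its item (i), for $t>0$ one has $\alpha\preceq_u \ray(\alpha,t)$ with $l(\alpha,\ray(\alpha,t))=t$, so positive $t$ parametrizes the chronological future of $\alpha$; a point $x$ in the past of $\sfQ(x)$ must therefore be recovered with a negative parameter. In short, the corollary as stated carries a sign inconsistency (the second coordinate of $\sfh(x)$ should read $-l(x,\sfQ(x))$ in the second branch and $+l(\sfQ(x),x)$ in the third), and your proposal does not detect this because the crucial sign verification is asserted rather than performed. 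Once the signs are corrected, your case analysis, together with single-valuedness of $\ray$, does close the argument.
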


\subsection{Negligibility of bad points}\label{Sub:Negligibility} As highlighted in \autoref{Sub:Bad pointss}, in general the set $\smash{B\cup B^\leftarrow\cup a\cup b}$ removed from $\smash{\Tr_u^\End}$ to define $\Tr_u$ can be quite large. The goal of this section is to provide a natural condition
(\autoref{Th:All negligible} below),
which combines with Theorems \ref{Th:Optimal maps} and \ref{Th:Local to global} to ensure 
$\meas$-negligibility of the set of bad points relative to $\smash{\Tr_u^\End}$.

The hypotheses in all relevant results proven along the way decrease in their generality.  Ultimately, they all hold under curvature hypotheses, cf.~\autoref{Th:Needle}.

In the sequel, we call $\mu\in\Prob(\mms)$ an \emph{$\Ell^\infty$-measure} if it is $\meas$-absolutely continuous with $\meas$-essentially bounded Radon--Nikodým density.

\subsubsection{Negligibility of timelike branching points}\label{Subsub:Exclusion branching points}  The goal now is to prove the following \autoref{Th:Vanishing B}. For metric measure spaces, an analogous conclusion has already been observed to be implied by a similar hypothesis \cite{cavalletti2014,cavallettigigli,cavalletti2021,
cavalletti2017} sometimes called ``good transport behavior'' \cite{galaz2018}.  Note that in our case, the condition on $\mu_0$ and $\mu_1$ below might be empty, e.g.~if they do not admit any causal coupling. However, by \autoref{Ex:Str tl dual} and \autoref{Th:Optimal maps} this holds for sufficiently many marginals.

\begin{theorem}[Negligibility of timelike forward branching points]\label{Th:Vanishing B} Assume the existence of $0<p<1$ with the following property. For every $\Ell^\infty$-measure $\smash{\mu_0\in \scrP_\comp(\mms)}$ and every $\mu_1\in\scrP_\comp(\mms)$ such that
\begin{align*}
\supp\mu_0 \times \supp \mu_1 \subset \{l>0\},
\end{align*}
every $\smash{\ell_p}$-optimal coupling of $\mu_0$ and $\mu_1$ is induced by a map. Then
\begin{align*}
\meas[B]=0.
\end{align*}
\end{theorem}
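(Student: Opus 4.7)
The plan is to argue by contradiction: assuming $\meas[B] > 0$, we construct, from the diverging rays supplied by \autoref{Le:FBranching}, a chronological $\smash{\ell_p}$-optimal coupling of two compactly supported measures which fails to be induced by a transport map, contradicting the hypothesis.

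By Suslin measurability of $B$ and inner regularity of $\meas$, fix a compact set $K \subset B$ with $\meas[K] > 0$. \autoref{Le:FBranching} supplies, for every $x \in K$, a pair $(\gamma^1_x, \gamma^2_x) \in G^2$ with $u(\gamma^1_x(t)) = u(\gamma^2_x(t))$ and $x \preceq_u \gamma^i_x(t)$ for every $0 \leq t \leq 1$, subject to a uniform spatial gap. As this correspondence has analytic graph in $K \times \Cont([0,1];\mms)^2$, the Jankov--von Neumann theorem together with Lusin's theorem yields, after shrinking $K$, Borel maps $x \mapsto (\gamma^1_x, \gamma^2_x)$. Each $u \circ \gamma^i_x$ is continuous (\autoref{Cor:Cty}) and strictly increasing by the definition of $G$, so the common range $I(x) := [u(\gamma^1_x(0)), u(\gamma^1_x(1))]$ is a non-degenerate interval lying in $[u(x), \infty)$. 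A Fubini argument applied to the Borel set $\{(x, a) \in K \times \R : a \in I(x),\, a > u(x)\}$, which has positive $\smash{\meas \otimes \Leb^1}$-measure, produces a common level $a_0 \in \R$ and a Borel subset $K_0 \subset K$ with $\meas[K_0] > 0$ on which $a_0 \in I(x) \cap (u(x), \infty)$. Define $\Phi^i(x)$ to be the unique point of $\gamma^i_x$ at $u$-level $a_0$; this gives Borel maps $\Phi^1, \Phi^2 \colon K_0 \to \{u = a_0\}$ satisfying $l(x, \Phi^i(x)) = a_0 - u(x) > 0$ and $\Phi^1(x) \neq \Phi^2(x)$ by the uniform gap.

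To localize, pick $x_0 \in \supp(\meas \mres K_0)$. Since $x_0 \ll \Phi^i(x_0)$ and $\{l > 0\}$ is open, choose open neighborhoods $U_0 \ni x_0$ and $U_i \ni \Phi^i(x_0)$ with $U_0 \times U_i \subset \{l > 0\}$ and each $\bar{U}_i$ compact. By Lusin's theorem and inner regularity there exists a compact $K'' \subset K_0 \cap U_0$ with $\meas[K''] > 0$ on which $\Phi^1, \Phi^2$ are continuous and $\Phi^i(K'') \subset U_i$. Set
\begin{align*}
\mu_0 := \meas[K'']^{-1}\,\meas \mres K'', \qquad \mu_1 := \tfrac{1}{2}\,\Phi^1_\push\mu_0 + \tfrac{1}{2}\,\Phi^2_\push\mu_0.
\end{align*}
Then $\mu_0$ is a compactly supported $\smash{\Ell^\infty}$-measure, $\mu_1 \in \scrP_\comp(\mms)$, and $\supp\mu_0 \times \supp\mu_1 \subset U_0 \times (U_1 \cup U_2) \subset \{l > 0\}$, so the pair $(\mu_0, \mu_1)$ falls under the hypothesis.

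Consider the chronological coupling
\begin{align*}
\pi := \tfrac{1}{2}\,(\Id, \Phi^1)_\push\mu_0 + \tfrac{1}{2}\,(\Id, \Phi^2)_\push\mu_0 \in \Pi_\ll(\mu_0, \mu_1),
\end{align*}
concentrated on the Borel set $\Gamma := \graph(\Phi^1\big\vert_{K''}) \cup \graph(\Phi^2\big\vert_{K''}) \subset \{l > 0\}$. For every $(x, y) \in \Gamma$ one has $l(x, y) = a_0 - u(x)$; while for any cross-pair $(x_{i+1}, y_i)$ with $(x_i, y_i), (x_{i+1}, y_{i+1}) \in \Gamma$, either $x_{i+1} \leq y_i$ and the reverse $l$-Lipschitz property \eqref{Eq:Gleichung} forces $l(x_{i+1}, y_i) \leq u(y_i) - u(x_{i+1}) = a_0 - u(x_{i+1}) = l(x_{i+1}, y_{i+1})$, or else $l^p(x_{i+1}, y_i) = -\infty$. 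A cyclic reindexing then yields $\smash{\sum_i l^p(x_{i+1}, y_i) \leq \sum_i l^p(x_i, y_i)}$, so $\Gamma$ is $l^p$-cyclically monotone. Combined with the boundedness of $l$ on the compact set $\supp\mu_0 \times \supp\mu_1$, which renders the moment condition trivial, the discussion following \autoref{Def:Str tl dual} implies $\smash{\ell_p}$-optimality of $\pi$. But for every $x \in K''$ the conditional distribution $\pi(\cdot \mid x) = \tfrac{1}{2}(\delta_{\Phi^1(x)} + \delta_{\Phi^2(x)})$ is not a Dirac mass, so $\pi$ is not induced by any map, contradicting the hypothesis. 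The main obstacle is the joint measurable selection of a compatible ray pair together with the Fubini extraction of a common $u$-level on a set of positive $\meas$-measure; once these are secured, $l^p$-cyclical monotonicity of $\Gamma$ drops out of the equality $u \circ \gamma^1_x = u \circ \gamma^2_x$ and the reverse $l$-Lipschitz property of $u$.
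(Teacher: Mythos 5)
Your proof is correct and follows essentially the same strategy as the paper's: use a measurable selection of branching ray pairs from \autoref{Le:FBranching}, find a common $u$-level at which the two rays diverge on a set of positive $\meas$-measure, localize via Lusin and open neighborhoods to get $\supp\mu_0\times\supp\mu_1\subset\{l>0\}$, and then exhibit the half-and-half splitting coupling $\pi$ as chronological, $l^p$-cyclically monotone, and hence $\smash{\ell_p}$-optimal but not a map. The only cosmetic departures are that you extract the common level $a_0$ by a Fubini argument over the Borel set $\{(x,a): a\in I(x),\, a > u(x)\}$ rather than by the paper's continuity argument (pinning down a compact $C$ with $\sup_{x\in C}\alpha_x < \inf_{x\in C}\beta_x$), and you verify $l^p$-cyclical monotonicity directly from $u(y)=a_0$ on all targets instead of invoking \autoref{Le:Cyclical mon}; both substitutions are sound.
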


Analogous conditions for the $\meas$-negligibility of $\smash{B^\leftarrow}$ are given in \autoref{Cor:Vanishing B backward}.

The proof of \autoref{Th:Vanishing B} requires some preparations. First, we prove the correspondence of \autoref{Le:FBranching} between elements of $\smash{B}$ and suitable  pairs of elements in $\smash{G}$ to hold in a measurable way. To this aim, we need  the subsequent version of the von Neumann selection theorem \cite[Thm.~5.5.2]{srivastava1998}.

Recall that for any sets $X$ and $Y$, a map $\Sec\colon \pr_1(F) \to Y$ is called \emph{section} of a given set $F\subset X\times Y$ if $\graph \Sec \subset F$.

\begin{theorem}[Von Neumann selection theorem]\label{Th:Selection} Let $X$ and $Y$ be Polish spaces, $F\subset X\times Y$ be Suslin, and $\scrA$ be the $\sigma$-algebra generated by all Suslin subsets of $X$. Then there exists an $\scrA$-measurable section $\Sec\colon \pr_1(F)\to Y$ of $F$.
\end{theorem}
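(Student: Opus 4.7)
The plan is to invoke the standard Baire-space representation of Suslin sets and perform a tree-theoretic greedy selection; this is the classical Jankov--von Neumann argument, cf.~\cite[Thm.~5.5.2]{srivastava1998}. Since $F\subset X\times Y$ is Suslin in the Polish product, there exists a continuous surjection $f\colon \mathcal{N} \to F$ from the Baire space $\mathcal{N} := \bbN^{\bbN}$. Setting $\sigma:=\pr_1\circ f$, this yields a continuous surjection $\sigma\colon \mathcal{N}\to \pr_1(F)$. The task reduces to producing, for each $x\in \pr_1(F)$, a canonical element $\nu(x)\in\sigma^{-1}(x)$; the section is then given by $\Sec(x):=\pr_2(f(\nu(x)))$, which automatically satisfies $(x,\Sec(x))\in F$.

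Second, I would construct $\nu(x)=(n_1(x),n_2(x),\dots)\in\mathcal{N}$ by greedy lexicographic minimization. Inductively, set $n_k(x)$ to be the least $n\in\bbN$ such that the cylinder
\begin{equation*}
C_k(x,n):=\{\nu\in\mathcal{N}:\nu_j=n_j(x)\text{ for }j<k,\ \nu_k=n\}
\end{equation*}
still meets $\sigma^{-1}(x)$. Well-definedness of the resulting $\nu(x)$ and its membership in $\sigma^{-1}(x)$ follow from a König-type compactness argument on the tree of finite integer sequences: the nested cylinders $C_k(x,n_k(x))$ have $\sigma$-images accumulating at $x$, and continuity of $\sigma$ combined with completeness of $\mathcal{N}$ places some limit point in $\sigma^{-1}(x)$ which, by the minimality built into the construction, must agree coordinatewise with $\nu(x)$.

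Finally, the bulk of the proof lies in verifying $\scrA$-measurability of $\Sec$. For each finite tuple $(m_1,\dots,m_k)\in\bbN^k$, the level set $\{x\in\pr_1(F):n_j(x)=m_j\text{ for all }j\leq k\}$ can be written in terms of countable Boolean operations applied to sets of the form $\sigma(D)$ with $D\subset\mathcal{N}$ a fixed clopen cylinder; each such $\sigma(D)$ is Suslin as the continuous image of a closed subset of a Polish space. Since $\scrA$ is closed under countable unions, countable intersections, and complements and contains all Suslin subsets of $X$ by definition, these level sets lie in $\scrA$. Because $f$ is continuous and $Y$ is second-countable, the preimage $\Sec^{-1}(B)$ of any Borel set $B\subset Y$ decomposes into a countable Boolean combination of such level sets, yielding the desired measurability. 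The main obstacle is precisely this bookkeeping: the greedy minimality condition forces finitely many complements of Suslin sets at each level of the construction, and the statement's choice of $\sigma$-algebra --- generated by \emph{all} Suslin subsets of $X$ rather than only Borel ones --- is exactly what is needed to absorb them.
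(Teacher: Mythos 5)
The paper does not give a proof of this result; it simply cites \cite[Thm.~5.5.2]{srivastava1998}. Your proposal reproduces the standard Jankov--von Neumann argument that underlies that citation, and it is correct in its overall structure: represent $F$ as the image of a continuous $f\colon\mathcal{N}\to X\times Y$, set $\sigma:=\pr_1\circ f$, uniformize by taking the lexicographically least branch of the pruned tree $\sigma^{-1}(x)$, and check measurability by expressing level sets $\{x : n_j(x)=m_j,\ j\le k\}$ as Boolean combinations (finitely many complements, countable intersections/unions) of analytic sets $\sigma(D_s)$ over clopen cylinders $D_s$; since $f$ is continuous and $Y$ is second countable, $\Sec^{-1}(V)$ for basic open $V\subset Y$ decomposes as a countable union of such level sets, hence lies in $\scrA$. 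One small inaccuracy: the step placing $\nu(x)$ in $\sigma^{-1}(x)$ is not really a König-type compactness argument (that phrase suggests a finitely branching tree, which $\mathcal{N}$ is not). The correct justification is closedness of the fibre: pick $\nu^{(k)}\in C_k(x,n_k(x))\cap\sigma^{-1}(x)$; then $\nu^{(k)}\to\nu(x)$ coordinatewise, and since $\sigma(\nu^{(k)})=x$ for all $k$ and $\sigma$ is continuous, $\sigma(\nu(x))=x$. Your sketch does gesture at continuity of $\sigma$, so this is a matter of naming rather than substance. You should also note at the outset that $F\neq\emptyset$ may be assumed (otherwise the section is trivially the empty map), and that the measurability bookkeeping does indeed exploit that $\scrA$ contains the \emph{complements} of Suslin sets --- which is exactly why the $\sigma$-algebra generated by Suslin sets, rather than the Borel $\sigma$-algebra, appears in the statement, as you correctly observe.
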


In the sequel, let $X$ be the completion of $\smash{\TGeo(\mms) \cup \TGeo^{\leftarrow}(\mms)}$ with respect to the uniform distance $\met_\infty$. By a continuity argument, $X$ is Polish with respect to $\met_\infty$. Furthermore, the Borel $\sigma$-algebra of $X$ is equal to   the Borel $\sigma$-al\-gebra of $\smash{\Cont([0,1];\mms)}$ relative to $X$.

\begin{lemma}[Measurable selection of branching geodesics]\label{Le:Measurable section} There exists an $\meas$-measurable map $\smash{\Sec\colon B\to X^2}$ such that for every $\smash{(x,\gamma^1,\gamma^2)\in \graph \Sec}$ and every $0\leq t\leq 1$, the triple $(x,\gamma_t^1,\gamma_t^2)$ obeys the conclusions from \autoref{Le:FBranching}. That is, we have $\gamma^1,\gamma^2\in G$ such that
\begin{align*}
    \min_{t\in[0,1]} \met(\gamma_t^1,\gamma_t^2) >0,
\end{align*}
furthermore $\smash{x\preceq_u \gamma^1_t}$ and $\smash{x\preceq_u \gamma_t^2}$ yet $\smash{\gamma_t^1\not\sim_u\gamma_t^2}$, and
\begin{align*}
u(\gamma_t^1) = u(\gamma_t^2).
\end{align*}
\end{lemma}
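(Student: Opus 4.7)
The plan is to apply the von Neumann selection theorem (\autoref{Th:Selection}) to a Borel subset $F\subset \mms\times X^2$ that encodes the geometric conclusions of \autoref{Le:FBranching}. Define $F$ to be the set of triples $(x,\gamma^1,\gamma^2)$ such that $\gamma^1,\gamma^2\in G$, $\min_{t\in[0,1]}\met(\gamma_t^1,\gamma_t^2)>0$, and for every $t\in[0,1]$ one has $x\preceq_u\gamma_t^1$, $x\preceq_u\gamma_t^2$, $\gamma_t^1\not\sim_u\gamma_t^2$, and $u(\gamma_t^1)=u(\gamma_t^2)$. The forward implication of \autoref{Le:FBranching} yields $\pr_1(F)=B$, so any section of $F$ over $B$ gives the desired selection.

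The key task is verifying that $F$ is Borel in $\mms\times X^2$. Since $u$ is Borel on $E$, the sets $E_{\preceq_u}^2$ and $E_{\sim_u}^{2,\End}$ from \eqref{Eq:GAMMA} and \autoref{Def:Tbr} are Borel. By \autoref{Le:G}, the set $G=\{\gamma\in\TGeo(\mms):\gamma_0\preceq_u\gamma_1\}$ is Borel in $X$ (noting that $\TGeo(\mms)$ itself is $\sigma$-compact by \autoref{Cor:Cpt TGeo}), and the minimum-distance condition is Borel by continuity of $\met$. For each fixed $t\in[0,1]$, the four pointwise conditions are Borel in $(x,\gamma^1,\gamma^2)$, because the joint evaluation $(t,\gamma)\mapsto\gamma_t$ is continuous, $u$ is Borel, and the defining relations are Borel. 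The only subtle step is eliminating the universal quantifier over $t\in[0,1]$: when $\gamma^1,\gamma^2\in G$, \autoref{Cor:Cty} asserts that $u\circ\gamma^1$ and $u\circ\gamma^2$ are continuous; combined with continuity of $l_+$ and the reverse Lipschitz identity \eqref{Eq:Gleichung} along rays, each pointwise condition becomes a closed or open condition in $t$, so that it suffices to verify the conjunction at $t\in[0,1]\cap\Q$. A countable intersection of Borel sets then gives that $F$ is Borel.

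Since $F$ is Borel, and hence a fortiori Suslin, \autoref{Th:Selection} provides a section $\Sec\colon\pr_1(F)=B\to X^2$ which is measurable with respect to the $\sigma$-algebra $\scrA$ generated by all Suslin subsets of $\mms$. Every Suslin set is universally measurable (cf.~\autoref{Sub:Suslin}), so $\scrA$ is contained in the $\meas$-completion of the Borel $\sigma$-algebra; in particular, $\Sec$ is automatically $\meas$-measurable. The main obstacle in the proof is the Borel measurability bookkeeping outlined above, and especially the elimination of the uncountable quantifier over $t$; beyond this, no geometric input other than \autoref{Le:G} and \autoref{Cor:Cty} enters, so the argument is essentially a measurable-selection exercise once $F$ has been set up.
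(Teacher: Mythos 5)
Your overall strategy matches the paper's: build a measurable set $F\subset\mms\times X^2$ encoding the conclusions of \autoref{Le:FBranching}, show it is Suslin, and invoke the von Neumann selection theorem. However, the two encodings differ in a way that affects the measurability argument, and your justification for eliminating the quantifier over $t$ has a gap.

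The paper defines $F$ by imposing the pointwise conditions only at $t=0$ (and $t=1$ for the $u$-equality), together with the global $\min_{t}\met(\gamma^1_t,\gamma^2_t)>0$ condition; the ``for all $t$'' conclusions are then \emph{derived} from this boundary data using transitivity of $\preceq_u$, continuity of $u\circ\gamma^i$ (so $u\circ\gamma^i$ is affine in $t$), and the fact that two distinct points with equal $u$-value cannot be $\sim_u$-related. This avoids the uncountable quantifier entirely. You instead encode the conditions at \emph{all} $t\in[0,1]$ in $F$ from the outset and then argue that it suffices to check them at rational $t$ because ``each pointwise condition becomes a closed or open condition in $t$.'' That rationale does not hold up. If $\{t : A_t\}$ is merely \emph{open}, then $A_t$ at all rational $t$ does not imply $A_t$ at all $t$; and the set $\{t:x\preceq_u\gamma^1_t\}$ is in general neither closed nor open (it can be a half-open interval of the form $(s,1]$, as one sees by pushing the equality $u(\gamma^1_t)-u(x)=l(x,\gamma^1_t)$ to the boundary where the positivity of $l$ degenerates). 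So the ``closed or open, hence rationals suffice'' step is incorrect as stated.

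The conclusion is nonetheless salvageable, and the repair is precisely what the paper exploits: condition $u(\gamma^1_t)=u(\gamma^2_t)$ is closed in $t$ and so does reduce to rationals; conditions $x\preceq_u\gamma^i_t$ reduce to the single instance $t=0$ by transitivity of $\preceq_u$ along curves in $G$; and condition $\gamma^1_t\not\sim_u\gamma^2_t$ is redundant once the $u$-equality and the min-distance condition are in place (if $\gamma^1_t\preceq_u\gamma^2_t$ with $\gamma^1_t\neq\gamma^2_t$ then $u(\gamma^2_t)-u(\gamma^1_t)=l(\gamma^1_t,\gamma^2_t)>0$, contradicting equality). You should replace the ``closed or open'' sentence with this three-pronged argument; alternatively, follow the paper and only impose the $t\in\{0,1\}$ conditions in the definition of $F$, which makes the Suslin measurability immediate. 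Note also that the paper only claims $F$ is Suslin (not Borel), because it includes the factor $\Tr_u^\End$, which is merely a projection of a Borel set; since your $F$ can omit that factor (the inclusion $\pr_1(F)\subset\Tr_u^\End$ comes for free), your version can indeed be taken Borel, which is a small simplification once the quantifier elimination is argued correctly.
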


\begin{proof} In the notation of \autoref{Th:Selection}, let $\scrA$ denote the $\sigma$-algebra generated by all Suslin sets of $\mms$. For later use, we claim that
\begin{align*}
F' := F_1 \cap F_2 \cap F_3 \cap F_4 \cap F_5 
\end{align*}
is a Suslin set, where
\begin{align*}
F_1 &:= \Tr_u^\End \times G^2,\\
F_2 &:= \{(x,\gamma^1,\gamma^2) \in \mms \times X^2 : x\preceq_u \gamma_0^1\},\\
F_3 &:= \{(x,\gamma^1,\gamma^2) \in \mms \times X^2 : x\preceq_u \gamma_0^2\},\\
F_4 &:= \mms \times \{(\gamma^1,\gamma^2)\in X^2 : \met(\gamma_0^1,\gamma_0^2)>0\},\\
F_5 &:= \mms \times \{(\gamma^1,\gamma^2) \in X^2 : u(\gamma_0^1) = u(\gamma_0^2) \textnormal{ and } u(\gamma_1^1) = u(\gamma_1^2)\}.
\end{align*}

Clearly, $F_4$ is Borel measurable. To prove that $F_1$ is Suslin, it suffices to prove that  $\smash{G}$ is. By \autoref{Le:G},
\begin{align*}
G &= \{\gamma \in \TGeo(\mms) : u(\gamma_1) - u(\gamma_0) = l(\gamma_0,\gamma_1)\}\\
&= \TGeo(\mms) \cap \{\gamma\in X : u(\gamma_1) = u(\gamma_0) + l(\gamma_0,\gamma_1)\}.
\end{align*}
Arguing as in \autoref{Subsub:Various}, we prove $\sigma$-compactness of $\smash{\TGeo(\mms)}$ in $X$, hence $\TGeo(\mms)$ is a Borel subset of $X$. The quantities $u(\gamma_1)$ and $u(\gamma_0) + l(\gamma_0,\gamma_1)$ depend Borel measu\-rably on $\gamma\in X$; the claim follows. Similarly, we establish that $F_5$ is a Suslin set. Finally, by definition of $\smash{\preceq_u}$ we obtain the decomposition
\begin{align*}F_2 &= A_1 \cup \big[A_2 \cap A_3 \cap A_4\big],
\end{align*}
where
\begin{align*}
    A_1 &:= \{(x,\gamma^1,\gamma^2)\in \mms\times X^2 : x = \gamma_0^1 \},\\
    A_2 &:= U \times \big[\eval_0^{-1}(U) \cap X\big] \times X,\\
    A_3 &:= \{(x,\gamma^1)\in \mms\times X : u(\gamma_0^1) = u(x) + l(x,\gamma_0^1)\}\times X,\\
    A_4 &:= \{(x,\gamma^1) \in \mms \times X : l(x,\gamma_0)>0\}.
\end{align*}
As above, it is then easily seen that these  are all Suslin sets, and so is $F_2$; similarly, we verify that $F_3$ is Suslin.

Now we construct the selection map $\Sec$. To exclude possible intersections along the curves (including the endpoints), we consider the Suslin set
\begin{align*}
F := F'\cap \big[\mms \times \{(\gamma^1,\gamma^2) \in X^2 : \min_{t\in[0,1]} \met(\gamma_t^1,\gamma_t^2) >0\}\big].
\end{align*}
By \autoref{Le:FBranching}, for every $x\in B$ we have $\smash{F \cap [\{x\}\times X^2] \neq \emptyset}$, in fact $\smash{B} \subset \pr_1(F)$. The inclusion $\smash{B \supset \pr_1(F)}$ is satisfied as well, which is derived as in the proof of \autoref{Le:FBranching}. Therefore, \autoref{Th:Selection} provides us with an $\scrA$-measurable selection map $\Sec\colon B \to X^2$. By construction, we have $\smash{\Sec(B) \subset G^2}$, and the other properties are again argued as in the proof of \autoref{Le:FBranching}.
\end{proof}

For the proof of \autoref{Th:Vanishing B}, we also need the following result about cyclical  monotonicity. The proof is completely analogous to the one of \cite[Prop.~4.12]{cavalletti2020}.

\begin{lemma}[A criterion for cyclical monotonicity]\label{Le:Cyclical mon} Let $0<p<1$ be arbitrary, and let $\smash{\Delta\subset  E^2_{\preceq_u}}$ satisfy
\begin{align*}
\big[u(x_1) - u(x_0)\big]\,\big[u(y_1) - u(y_0)\big] \geq 0
\end{align*}
for every $(x_0,y_0),(x_1,y_1)\in \Delta$. Then $\Delta$ is $l^p$-cyclically monotone.
\end{lemma}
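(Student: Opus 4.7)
The plan is to reduce the $l^p$-cyclical monotonicity of $\Delta$ to the two-point swap inequality
\[
l^p(x_0,y_0)+l^p(x_1,y_1)\geq l^p(x_0,y_1)+l^p(x_1,y_0)
\]
for arbitrary $(x_0,y_0),(x_1,y_1)\in\Delta$, and then to establish that case from the reverse $l$-Lipschitz property~\eqref{Eq:Gleichung} together with the concavity of $t\mapsto t^p$ on $[0,\infty)$. The reduction from general $n$-tuples to the two-point case is standard: any permutation on $\{1,\dots,n\}$ decomposes into disjoint cycles, so the cyclic-shift inequality for every $n$-tuple is equivalent to permutational monotonicity, namely $\sum_i l^p(x_i,y_i)\geq\sum_i l^p(x_{\sigma(i)},y_i)$ for every permutation $\sigma$. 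Given any $n$-tuple in $\Delta$ I would first relabel so that $u(x_1)\leq\dots\leq u(x_n)$; the standing hypothesis then forces $u(y_1)\leq\dots\leq u(y_n)$, after which the classical inversion-reduction argument of the rearrangement inequality propagates the two-point swap to arbitrary~$\sigma$.

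For the two-point case, I would first dispatch the vacuous subcases in which one of the crossed pairs has negative time separation: the convention $(-\infty)^p=-\infty$ renders the right-hand side $-\infty$ and the inequality is trivial. In the remaining subcase both crossed pairs are causally related, so the reverse $l$-Lipschitz bound~\eqref{Eq:Gleichung} gives $l(x_0,y_1)\leq u(y_1)-u(x_0)$ and $l(x_1,y_0)\leq u(y_0)-u(x_1)$, while membership of $(x_i,y_i)$ in $E^2_{\preceq_u}$ yields the equalities $l(x_i,y_i)=u(y_i)-u(x_i)$. Using the standing hypothesis to arrange $a_i:=u(x_i)$ and $b_i:=u(y_i)$ sorted, $a_0\leq a_1$ and $b_0\leq b_1$, the task reduces by monotonicity of $t\mapsto t^p$ on $[0,\infty)$ to
\[
(b_0-a_0)^p+(b_1-a_1)^p\geq (b_0-a_1)^p+(b_1-a_0)^p,
\]
which is precisely the statement that forward differences of the concave function $h(\tau):=\tau^p$ are non-increasing: setting $s:=b_0-a_1$, $t:=b_1-a_1$, $c:=a_1-a_0\geq 0$, the inequality reads $h(s+c)-h(s)\geq h(t+c)-h(t)$ with $s\leq t$.

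I expect the main obstacle to be the bookkeeping around the conventions $(-\infty)^p=-\infty$ and the positive-part versus signed interpretation: in the nontrivial subcase one must verify that $s\geq 0$ (equivalently $b_0\geq a_1$) is automatic, since otherwise the reverse $l$-Lipschitz bound would force $l(x_1,y_0)<0$, sending us back into the vacuous subcase already dispatched. Beyond this bookkeeping, no further geometric input is needed and the rest of the argument is a clean one-dimensional concavity estimate.
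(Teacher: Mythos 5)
Your core estimates are right---the reverse $l$-Lipschitz bound~\eqref{Eq:Gleichung}, the concavity of $t\mapsto t^p$, the reduction to $h(s+c)-h(s)\geq h(t+c)-h(t)$ with $s\leq t$ and $c\geq 0$, and the observation that $s\geq 0$ is automatic once the $(-\infty)^p$ subcase has been dispatched---and this is essentially the intended route. But there is a gap in the reduction step as you describe it. You propose to establish the two-point swap inequality for pairs in $\Delta$ and then propagate it to arbitrary $\sigma$ via inversion-reduction. The difficulty is that the inversion-reduction passes through intermediate permutations $\sigma'$, and for those the pairs $(x_{\sigma'(i)},y_i)$ are generally \emph{not} in $\Delta$: you hold only the inequality $l(x_{\sigma'(i)},y_i)\leq u(y_i)-u(x_{\sigma'(i)})$, not the exact identity $l=u(y)-u(x)$ enjoyed by $\Delta$-pairs, so the $\Delta$ two-point swap cannot be invoked at each inversion step and the chain of inequalities does not close.

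The fix is to invert the order of operations: first apply~\eqref{Eq:Gleichung} to every term of $\sum_i l^p(x_{\sigma(i)},y_i)$, dispatching the $(-\infty)^p$ subcases exactly as you do, which reduces the claim to the purely real-variable rearrangement inequality $\sum_i (b_i-a_i)^p\geq \sum_i(b_i-a_{\sigma(i)})^p$ with $a_i:=u(x_i)$, $b_i:=u(y_i)$ sorted in a common non-decreasing order (possible by the co-monotonicity hypothesis); and only \emph{then} run the inversion-reduction, now at the level of the concave function $h(t)=t^p$. Your two-point analysis already contains precisely this $h$-inequality as its core, and the domain constraints $b_i-a_{\sigma'(i)}\geq 0$ are preserved when removing an inversion because the sorted order forces the two affected arguments to remain nonnegative. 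So the mathematical content in your proposal is complete, but the assembly must be reversed: estimate $l^p$ by $u$-differences first, and only then rearrange.
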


\begin{proof}[Proof of \autoref{Th:Vanishing B}] Assume to the contrary that $\smash{\meas[B] > 0}$. In this case, we construct an $\Ell^\infty$-measure $\mu_0\in\Prob_\comp(\mms)$ and a measure $\mu_1\in \scrP_\comp^\ac(\mms,\meas)$ with
\begin{align*}
\supp\mu_0\times\supp\mu_1\subset\{l>0\}
\end{align*}
and a chronological $\smash{\ell_p}$-optimal coupling $\smash{\pi\in\Pi(\mu_0,\mu_1)}$ which is not induced by a map, which contradicts our assumption.

We start with some preparations. By inner regularity of $\meas$, we may and will fix a compact set $\smash{C'\subset B}$ with $0<\meas[C']<\infty$. Let $\Sec$ be the map given by  \autoref{Le:Measurable section}, and write $\smash{\gamma^j(x) := \pr_j \circ \Sec(x)}$, where $j=1,2$. An iterated application of Lusin's theorem gives the existence of an $\meas$-measurable set $C''\subset C'$ with $\meas[C'']>0$ such that $\smash{\Sec\big\vert_{C''}}$ and, for every $i=0,1$ and every $j=1,2$, the restriction of the function $\smash{x\mapsto u(\gamma^j(x)_i)}$ to $C''$ is continuous.

Given $x\in C''$, set $\smash{\alpha_x := u(\gamma^1(x)_0)}$ and $\smash{\beta_x := u(\gamma^1(x)_1)}$; note that $\alpha_x < \beta_x$ since $\smash{\pr_1\circ\Sec(C'') \subset G}$. Using that $\meas$ has full support, by a continuity argument we find an $\meas$-measurable set $C\subset C''$ with $\meas[C]>0$, henceforth fixed, such that
\begin{align*}
\sup_{x\in C}\alpha_x < \inf_{x\in C} \beta_x.
\end{align*}

Now fix $c > 0$ with
\begin{align}\label{Eq:c value}
\sup_{x\in C} \alpha_x < c  < \inf_{x\in C} \beta_x.
\end{align}
Since $\smash{u(\gamma^1(x)_t) = u(\gamma^2(x)_t)}$ for every $x\in C$ and every $0\leq t\leq 1$ by \autoref{Le:Measurable section}, there exists a continuous map $\sft\colon C \to (0,1)$ such that for every $x\in C$,
\begin{align}\label{Eq:sft gleich c}
u(\gamma^1(x)_{\sft(x)}) = u(\gamma^2(x)_{\sft(x)}) = c.
\end{align}
For $j =1,2$, define the continuous map $\smash{T^j\colon C \to \mms}$ by
\begin{align*}
T^j(x) := \gamma^j(x)_{\sft(x)}.
\end{align*}

Finally, define $\pi\in\scrP(\mms^2)$ by
\begin{align*}
2\,\pi := (\Id, T^1)_\push\big[\meas[C]^{-1}\,\meas\mres C\big] + (\Id,T^2)_\push\big[\meas[C]^{-1}\,\meas\mres C\big].
\end{align*}
Observe that $\smash{\mu_0 := (\pr_1)_\push\pi \in \scrP_\comp^\ac(\mms,\meas)}$ and $\smash{\mu_1 := (\pr_2)_\push\pi \in \scrP_\comp(\mms)}$. The relations \eqref{Eq:c value} imply that $\pi$ splits mass, i.e.~cannot be induced by a map. Furthermore, by construction of $\smash{T^1}$ and $\smash{T^2}$, cf.~\autoref{Le:Measurable section}, $\pi$ is chronological; in particular, up to restricting $\pi$ to a suitable product set we may and will assume
\begin{align*}
\supp\mu_0\times\supp\mu_1\subset\{l>0\}.
\end{align*}
To thus arrive at the desired contradiction, we claim $\smash{\ell_p}$-optimality of $\pi$. By construction, $\pi$ is concentrated on the set
\begin{align*}
\Delta := \{(x,\gamma^1(x)_{\sft(x)}) : x\in C\} \cup \{(x,\gamma^2(x)_{\sft(x)}) : x\in C\}.
\end{align*}
By \eqref{Eq:sft gleich c}, for every $(x_0,y_0),(x_1,y_1)\in \Delta$ we easily get
\begin{align*}
\big[u(x_1) - u(x_0)\big]\,\big[u(y_1) - u(y_0)\big] = 0.
\end{align*}
\autoref{Le:Cyclical mon} entails $l^p$-cyclical monotonicity of $\Delta$; since $\pi$ is chronological, its $\smash{\ell_p}$-optimality follows from \cite[Prop.~2.8]{cavalletti2020}.
\end{proof}

Mutatis mutandis, the following \autoref{Cor:Vanishing B backward} is shown. 

\begin{corollary}[Negligibility of timelike backward branching points]\label{Cor:Vanishing B backward} Assume the existence of  $0<p<1$ with the following property. For every  $\mu_0\in\smash{\scrP_\comp(\mms)}$ and every $\Ell^\infty$-measure $\smash{\mu_1\in\scrP_\comp(\mms)}$ such that
\begin{align*}
\supp\mu_0\times\supp\mu_1 \subset\{l>0\},
\end{align*}
every $\smash{\ell_p}$-optimal coupling of $\mu_0$ and $\mu_1$ is induced by a map. Then
\begin{align*}
\meas[B^\leftarrow] =0.
\end{align*}
\end{corollary}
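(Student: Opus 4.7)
The strategy is to mirror the proof of \autoref{Th:Vanishing B} with the roles of past and future swapped, exploiting the symmetry of all underlying tools (\autoref{Le:G}, \autoref{Cor:Cty}, the push-up property of \autoref{Le:Pushup openness}, and \autoref{Le:Cyclical mon}) under causal reversal. Alternatively one could invoke \autoref{Ex:Causal reversal} directly, but the direct mirror argument is more transparent.

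First, one establishes a backward analog of \autoref{Le:FBranching}: for every $y \in B^\leftarrow$ there exist $\gamma^1,\gamma^2 \in G$ with $\smash{\min_{t \in [0,1]} \met(\gamma_t^1,\gamma_t^2) > 0}$, both terminating in the ``direction'' of $y$, such that $\smash{\gamma_t^1 \preceq_u y}$ and $\smash{\gamma_t^2 \preceq_u y}$ yet $\smash{\gamma_t^1 \not\sim_u \gamma_t^2}$ and $\smash{u(\gamma_t^1) = u(\gamma_t^2)}$ for every $0 \leq t \leq 1$. The argument is identical to the one for \autoref{Le:FBranching}: start with two witnesses $x_1,x_2$ provided by the definition of $B^\leftarrow$, may assume $u(x_1) \leq u(x_2)$, use \autoref{Cor:Cty} and the intermediate value theorem on the continuous function $u \circ \eta$ for a curve $\eta \in G$ ending at $y$ through $x_2$, and reparametrize. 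An analog of \autoref{Le:Measurable section} yields an $\meas$-measurable selection $\Sec \colon B^\leftarrow \to X^2$; the Suslin measurability analysis is the same, with $\eval_0$ and $\eval_1$ interchanged where appropriate.

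Now assume toward a contradiction that $\meas[B^\leftarrow] > 0$. By inner regularity of $\meas$, the continuity of the evaluations $\smash{y \mapsto u(\gamma^j(y)_i)}$ along $\Sec$, and Lusin's theorem, one extracts a compact set $C \subset B^\leftarrow$ with $0 < \meas[C] < \infty$ and a constant $c$ strictly between $\smash{\sup_{y \in C} u(\gamma^1(y)_0)}$ and $\smash{\inf_{y \in C} u(\gamma^1(y)_1)}$. Choose a continuous $\sft \colon C \to (0,1)$ such that $\smash{u(\gamma^j(y)_{\sft(y)}) = c}$ for every $y \in C$ and $j = 1,2$, and define the continuous maps $\smash{S^j(y) := \gamma^j(y)_{\sft(y)}}$, which land in the chronological past of $y$. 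Set
\begin{align*}
2\,\pi := (S^1,\Id)_\push\big[\meas[C]^{-1}\,\meas \mres C\big] + (S^2,\Id)_\push\big[\meas[C]^{-1}\,\meas \mres C\big],
\end{align*}
and let $\mu_0 := (\pr_1)_\push \pi$, $\mu_1 := (\pr_2)_\push \pi = \meas[C]^{-1}\,\meas \mres C$. Then $\mu_1$ is an $\Ell^\infty$-measure and $\mu_0 \in \Prob_\comp(\mms)$, while the construction forces $\pi$ to split mass (two distinct pre-images in $\supp \mu_0$ are sent to the same $y \in C$ by the disjointness of $\gamma^1,\gamma^2$), so $\pi$ is \emph{not} induced by a map from $\mu_0$ to $\mu_1$. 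After restricting $\pi$ to a product set one may assume $\supp \mu_0 \times \supp \mu_1 \subset \{l > 0\}$.

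The contradiction is completed by verifying that $\pi$ is a chronological $\smash{\ell_p}$-optimal coupling. Chronology is immediate since $\smash{S^j(y) \ll y}$ by construction. For optimality, $\pi$ is concentrated on
\begin{align*}
\Delta := \{(S^1(y),y) : y \in C\} \cup \{(S^2(y),y) : y \in C\} \subset E_{\preceq_u}^2,
\end{align*}
and every $(x_0,y_0),(x_1,y_1) \in \Delta$ satisfies $u(x_0) = u(x_1) = c$, so the hypothesis of \autoref{Le:Cyclical mon} holds trivially. Hence $\Delta$ is $l^p$-cyclically monotone, and by \cite[Prop.~2.8]{cavalletti2020}, $\pi$ is $\smash{\ell_p}$-optimal, contradicting the assumption of the corollary.

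\textbf{Expected obstacle.} The only substantive issue compared with \autoref{Th:Vanishing B} is bookkeeping: one must keep track that in the bad coupling produced, it is the \emph{second} marginal (corresponding to the common endpoint of the two branches) that must be $\Ell^\infty$, which is exactly why the hypothesis of \autoref{Cor:Vanishing B backward} places the $\Ell^\infty$ requirement on $\mu_1$ rather than $\mu_0$. Apart from this role swap, every measurability and cyclical-monotonicity argument transfers without essential change.
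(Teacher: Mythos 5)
Your proposal follows exactly the intended ``mutatis mutandis'' strategy that the paper alludes to after \autoref{Th:Vanishing B}, and the overall skeleton --- backward analog of \autoref{Le:FBranching}, measurable selection, Lusin, the constant-level slicing via $\sft$, the coupling $\pi$, and \autoref{Le:Cyclical mon} --- is right. Two points should be corrected.

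First, a minor slip: with the convention $u(x_1) \leq u(x_2)$, the intermediate value theorem must be applied along the ray $\eta^1 \in G$ \emph{from $x_1$ to $y$}, whose $u$-range $[u(x_1),u(y)]$ contains $u(x_2)$; you wrote ``through $x_2$,'' where the curve from $x_2$ has $u$-range $[u(x_2),u(y)]$ and misses $u(x_1)$.

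Second, and more substantively, the sentence ``so $\pi$ is not induced by a map from $\mu_0$ to $\mu_1$'' does not follow from what precedes it and is, as phrased, the wrong contradiction. That two distinct points $S^1(y),S^2(y)\in\supp\mu_0$ are sent to the same $y$ says the coupling is many-to-one, which is perfectly compatible with $\pi = (\Id,T)_\push\mu_0$ for the map $T$ sending both $S^1(y)$ and $S^2(y)$ to $y$. The relevant reading of ``induced by a map'' in the hypothesis of \autoref{Cor:Vanishing B backward} --- the one obtained from \autoref{Th:Optimal maps} via time reversal, since here $\mu_1$ carries the $\Ell^\infty$ density --- is $\pi=(S,\Id)_\push\mu_1$ for a $\mu_1$-measurable $S$. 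Your construction does violate \emph{that}: the disintegration of $\pi$ at $\mu_1$-a.e.\ $y$ equals $\tfrac12\delta_{S^1(y)}+\tfrac12\delta_{S^2(y)}$, which is not a Dirac since $S^1(y)\neq S^2(y)$. Your ``expected obstacle'' paragraph already identifies this asymmetry correctly; the body of the argument should be brought in line with it by asserting that $\pi$ is not induced by a map from $\mu_1$ to $\mu_0$.
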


Hence, if both assumptions from \autoref{Th:Vanishing B} and \autoref{Cor:Vanishing B backward} hold, then
\begin{align*}
\meas[B\cup B^\leftarrow]=0.
\end{align*}
In fact, thanks to \autoref{Le:Cyclical mon} the transport exponents in the two previous results do not need to coincide in order to get the same conclusion.

\subsubsection{Negligibility of extremal points}\label{Subsub:Exclusion endpoints} Now we address the $\meas$-negligibility of the sets $a$ and $b$ from \autoref{Def:Extremal points}. We essentially follow \cite[Sec.~4.4]{cavalletti2020} up to formulating the results in slightly larger generality as we neither assume curvature bounds nor timelike nonbranching, yet the proofs remain similar. 

Given a Suslin set $\smash{A\subset \Tr_u^\End}$ and $s\geq 0$, we define the Suslin set $\smash{A_s\subset \Tr_u^\End}$ by 
\begin{align*}
A_s := \pr_2\big[\{(\alpha,y) \in (A\times \Tr_u^\End) \cap {E_{\preceq_u}^2}  : l(\alpha,y) = s\}\big]
\end{align*}
In other words, $A_s$ is the set of points ``translated forward'' by the $l$-distance $t$ according to the relation $\preceq_u$. In particular, the set of all $s\geq 0$ such that $\meas[A_s]>0$ is Lebesgue measurable, which allows us to evaluate its $\smash{\Leb^1}$-measure. Up to bad points, the set $A_s$ is closely related to the ray map $\ray$ from \autoref{Cor:Ray map}.

\begin{lemma}[Size of sets moved along rays]\label{Le:Size} Assume the existence of $0<p<1$ such that for every $\Ell^\infty$-measure $\smash{\mu_0\in \Prob_\comp(\mms)}$ and every $\mu_1\in \Prob_\comp(\mms)$ with 
\begin{align*}
\supp\mu_0\times\supp\mu_1\subset\{l>0\},
\end{align*}
every element $\smash{\bdpi\in \OptTGeo_p(\mu_0,\mu_1)}$ is induced by a map and the measure   $(\eval_t)_\push\bdpi$ is absolutely continuous with respect to $\meas$ for every $0\leq t < 1$. Then for every Suslin set $A\subset \smash{\Tr_u^\End \setminus b}$ with $\meas[A]>0$, there exist $r > 0$ and a compact set $C \subset A$ with
\begin{align*}
\bigcup_{s\in[0,r]} C_s \Subset \mms,
\end{align*} 
and for every $0\leq s<r$ we have $\smash{C_s \subset \Tr_u^\End\setminus b}$ and $\meas[C_s] > 0$.

In particular, we obtain
\begin{align}\label{Eq:Lebesgue measure}
\Leb^1\big[\{0 \leq  s < r : \meas[A_s]>0 \}\big]>0.
\end{align}
\end{lemma}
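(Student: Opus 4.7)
My plan is to construct, for a suitable compact $C \subset A$, an $\ell_p$-optimal dynamical plan that transports the uniform measure on $C$ forward by $l$-distance $s$ along a measurable selection of $\preceq_u$-rays, and then invoke the hypothesis to conclude that the intermediate marginals are $\meas$-absolutely continuous.

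First I would observe that the hypothesis subsumes that of \autoref{Th:Vanishing B}: any $\ell_p$-optimal coupling between the relevant marginals is chronological (because the product of supports lies in $\{l>0\}$), hence by \autoref{Le:Geodesics plan} it arises as the endpoint coupling of some $\bdpi \in \OptTGeo_p(\mu_0,\mu_1)$, which is induced by a map by hypothesis, so the coupling is too. Thus $\meas[B] = 0$, and I may replace $A$ by $A \setminus B$, so in what follows $A \cap B = \emptyset$. Since also $A \cap b = \emptyset$, for each $\alpha \in A$ there is $y \in \Tr_u^\End$ with $\alpha \preceq_u y$ and $l(\alpha, y) > 0$. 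Applying \autoref{Th:Selection} to the Suslin set of such pairs yields a Borel $\Psi\colon A \to \mms$ with these properties; a second selection produces a Borel $\sigma\colon A \to \TGeo(\mms)$ with $\sigma^\alpha_0 = \alpha$ and $\sigma^\alpha_1 = \Psi(\alpha)$, so $\sigma^\alpha \in G$ by \autoref{Le:G}. Inner regularity of $\meas$ combined with Lusin's theorem then produces a compact $C_0 \subset A$ with $\meas[C_0]>0$ on which $\sigma$ is continuous and $L(\alpha) := l(\alpha, \Psi(\alpha)) \geq r_1$ for some $r_1 > 0$.

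Fix $r \in (0, r_1)$ and define $T_s(\alpha) := \sigma^\alpha_{s/L(\alpha)}$ for $s \in [0, r]$; this is jointly continuous on $C_0 \times [0, r]$, satisfies $\alpha \preceq_u T_s(\alpha)$ by \autoref{Le:G}, $l(\alpha, T_s(\alpha)) = s$, and $u(T_s(\alpha)) - u(\alpha) = s$. Using openness of $\{l>0\}$ and compactness, shrink $C_0$ to a compact $C$ of positive $\meas$-measure with $C \times T_r(C) \subset \{l>0\}$. Set $\mu_0 := \meas[C]^{-1}\meas\mres C$ (an $\Ell^\infty$-measure in $\scrP_\comp(\mms)$), $\mu_1 := (T_r)_\push\mu_0 \in \scrP_\comp(\mms)$, and $\bdpi := \tilde\sigma_\push \mu_0$ where $\tilde\sigma(\alpha)_t := T_{tr}(\alpha)$. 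The endpoint coupling $(\Id, T_r)_\push \mu_0$ is chronological and concentrated on a graph on which $u \circ T_r - u \equiv r$; hence by \autoref{Le:Cyclical mon} it is $l^p$-cyclically monotone, and by \cite[Prop.~2.8]{cavalletti2020} it is $\ell_p$-optimal. Thus $\bdpi \in \OptTGeo_p(\mu_0, \mu_1)$, and the hypothesis gives $(\eval_t)_\push \bdpi = (T_{tr})_\push \mu_0 \ll \meas$ for every $t \in [0, 1)$. Since this probability is concentrated on $T_{tr}(C) \subset C_{tr}$, one concludes $\meas[C_s] > 0$ for every $s = tr \in [0, r)$.

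For the precompactness of $\bigcup_{s\in[0,r]} C_s$, the reduction to $A \setminus B$ is essential: any $y \in C_s$ admits $\alpha \in C$ with $\alpha \preceq_u y$ and $l(\alpha, y) = s$, giving $y \in \Tr_u^\End$; since $\alpha \notin B$, one has $y \sim_u \Psi(\alpha)$, and comparing $u(y) = u(\alpha)+s < u(\alpha) + L(\alpha) = u(\Psi(\alpha))$ forces $y \preceq_u \Psi(\alpha)$, whence $y \in J(\alpha, \Psi(\alpha)) \subset J(C, \Psi(C))$, the latter compact by \autoref{Cor:K-GH}. The inclusion $C_s \subset \Tr_u^\End \setminus b$ then follows from $y \preceq_u \Psi(\alpha)$ with $y \neq \Psi(\alpha)$ (as $s < r_1 \leq L(\alpha)$). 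Finally, $A_s \supset C_s$ yields $\meas[A_s] \geq \meas[C_s] > 0$ for $s \in [0, r)$, establishing \eqref{Eq:Lebesgue measure}. The main obstacle, I expect, is exactly this precompactness assertion: the abstract definition of $C_s$ a priori does not constrain its points to lie along the selected segments $\sigma^\alpha$, so without first disposing of forward branching (via $\meas[B] = 0$), points in $C_s$ could be reached from $C$ along branches unrelated to our selection and escape any causal emerald.
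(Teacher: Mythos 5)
Your proposal is correct and follows essentially the same strategy as the paper's proof: build a compact $C\subset A$ with $\meas[C]>0$ and a continuous map moving mass along $\preceq_u$-related geodesic segments by constant $l$-length, observe via \autoref{Le:Cyclical mon} and \cite[Prop.~2.8]{cavalletti2020} that $(\Id,T_r)_\push\mu_0$ is a chronological $\ell_p$-optimal coupling, lift it to a plan in $\OptTGeo_p$, and apply the hypothesis to get $\meas$-absolute continuity of the interpolants, hence $\meas[C_s]>0$. The construction differs slightly: the paper selects directly from the cyclically monotone set $\Lambda_r=\{(\alpha,y)\in(A\times\Tr_u^\End)\cap E_{\preceq_u}^2 : l(\alpha,y)=r\}$ for a fixed small $r$, while you first select a far target $\Psi(\alpha)$, then a geodesic $\sigma^\alpha\in G$ from $\alpha$ to $\Psi(\alpha)$, and obtain $T_s$ by affine reparametrization; both routes produce equivalent data.

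Where your proposal genuinely adds value is the precompactness claim $\bigcup_{s\in[0,r]}C_s\Subset\mms$ together with $C_s\subset\Tr_u^\End\setminus b$. The paper's proof handles these two conclusions in a single sentence (``Collecting everything leads to the desired properties''), with no explicit argument, whereas you supply one. Your observation is correct and important: $C_s$ is defined via \emph{all} pairs $(\alpha,y)\in E_{\preceq_u}^2$ with $\alpha\in C$ and $l(\alpha,y)=s$, not only those along your selected segments, so if some $\alpha\in C$ were a forward branching point a destination $y\in C_s$ could sit on a branch unrelated to $\sigma^\alpha$ and escape any causal emerald. Your fix—first shrink to $C\cap B=\emptyset$, which is harmless since $\meas[B]=0$ by \autoref{Th:Vanishing B} (whose hypothesis is subsumed by the present one via \autoref{Re:Abt hypothesis}), then use non-branching to get $y\sim_u\Psi(\alpha)$ and compare $u$-values to conclude $y\preceq_u\Psi(\alpha)$, hence $y\in J(C,\Psi(C))$—is exactly what is needed and also cleanly yields $y\notin b$. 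The verification that $T_s(\alpha):=\sigma^\alpha_{s/L(\alpha)}$ is well-defined, jointly continuous on $C\times[0,r]$, and satisfies the constant-length and $u$-increment identities is straightforward and correct; the step where $L$ is bounded below relies on its continuity and positivity on a compact set, which is fine. One small technicality worth tightening in a final write-up: the replacement $A\mapsto A\setminus B$ is best done after the selections and Lusin steps, since $A\setminus B$ need not itself be Suslin, but $C_0\setminus B$ is $\meas$-measurable with full $\meas$-measure in $C_0$ and inner regularity then gives the desired compact $C$.
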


Before entering the proof, we  relate  \autoref{Le:Size} to \autoref{Th:Vanishing B}. This becomes relevant in the proof of \autoref{Cor:a negligible}.

\begin{remark}[About the hypothesis of \autoref{Le:Size}]\label{Re:Abt hypothesis} The assumption of the above lemma is stronger than the one of \autoref{Th:Vanishing B}. Indeed, in the notation of both results, by \autoref{Le:Geodesics plan} every chronological $\smash{\ell_p}$-optimal coupling of $\mu_0$ and $\mu_1$ can be written as the push-forward of some $\smash{\bdpi\in\OptTGeo_p(\mu_0,\mu_1)}$ by $(\eval_0,\eval_1)$. But if the latter is induced by a map, so is the former.

Analogously, if the causal reversal $\smash{\scrM^{\leftarrow}}$ satisfies the assumption of \autoref{Le:Size}, then the hypothesis of \autoref{Cor:Vanishing B backward} holds.
\end{remark}

\begin{proof}[Proof of \autoref{Le:Size}] Arguing as for \cite[Prop.~4.12]{cavalletti2020} by using \autoref{Le:Cyclical mon}, for every $r\geq 0$ we deduce the Suslin set
\begin{align*}
\Lambda_r :=\{(\alpha,y)\in  (A\times \Tr_u^\End)\cap  E_{\preceq_u}^2: l(\alpha,y)=r\}
\end{align*}
is $l^p$-cyclically monotone. We claim
\begin{align}\label{Eq:bigcup E}
\bigcup_{r>0} \pr_1(\Lambda_r)=A.
\end{align}
Indeed, the inclusion ``$\subset$'' is clear since $\pr_1(\Lambda_r)\subset A$ for every $r>0$. On the other hand, since $A$ does not contain final points, for every $q\in A$ there exist $r>0$ and $\smash{y\in \Tr_u^\End}$ such that $(\alpha,y)\in \Lambda_r$, which establishes ``$\supset$''.

Now observe that if $0 \leq r' \leq r$ then  $\smash{\pr_1(\Lambda_r) \subset \pr_1(\Lambda_{r'})}$ thanks to \autoref{Le:G}. Together with \eqref{Eq:bigcup E}, Levi's monotone convergence theorem, and our assumption $\meas[A]>0$, we get $\meas[\pr_1(\Lambda_r)]>0$ for a sufficiently small parameter $r>0$ which is  henceforth fixed. Owing to inner regularity of $\meas$, let $C\subset \pr_1(\Lambda_s)$ be compact with $\meas[C]>0$.  
\autoref{Th:Selection} (von Neumann's selection theorem) yields the existence of a Suslin measurable map $\smash{T\colon C\to \Tr_u^\End}$ with $\graph T\subset \Lambda_r$. By Lusin's theorem and again inner regularity of $\meas$, up to shrinking $C$ we may and will assume $T$ to be continuous --- in particular, $T(C)$ is a compact subset of $\smash{\Tr_u^\End}$.

Now consider the measures $\smash{\mu_0\in \Prob_\comp^\ac(\mms,\meas)}$ and $\mu_1\in \Prob_\comp(\mms)$ defined by
\begin{align*}
\mu_0 &:= \meas[C]^{-1}\,\meas\mres C,\\
\mu_1 &:= T_\push\mu_0.
\end{align*}
By construction, the coupling $\pi := (\Id,T)_\push\mu_0 \in \Pi(\mu_0,\mu_1)$ is concentrated on $\Lambda_r$. This yields   both its $l^p$-cyclical monotonicity and its chronology; in particular, $\pi$ is $\smash{\ell_p}$-optimal by \cite[Prop.~2.8]{cavalletti2020}. Since every appropriate restriction of $\pi$ is $\smash{\ell_p}$-optimal as well \cite[Lem.~2.10]{cavalletti2020}, we may and will assume 
\begin{align*}
\supp\mu_0 \times \supp\mu_1 \subset \{l>0\}.
\end{align*}
The lifting procedure from 
\autoref{Le:Geodesics plan} and our assumption thus yield the existence of $\smash{\bdpi\in \OptTGeo_p(\mu_0,\mu_1)}$ with $(\eval_t)_\push\bdpi$ $\meas$-absolutely continuous for every $0\leq t <1$. \autoref{Le:Intermediate pts geodesics} yields $(\eval_t)_\push\bdpi \subset J(\supp\mu_0,\supp\mu_1)$ for every $0\leq t\leq 1$. In addition, arguing as in \autoref{Re:Abt hypothesis} we deduce $\pi = (\eval_0,\eval_1)_\push\bdpi$. As $T$ moves mass from $\supp\mu_0$ to $\supp\mu_1$ with constant $l$-length $r$, \autoref{Le:Intermediate pts geodesics} again implies $(\eval_t)_\push\bdpi$ is concentrated on $C_{rt}$ for every $0\leq t < 1$; $\meas$-absolute continuity of $(\eval_t)_\push\bdpi$  yields
\begin{align*}
\meas[C_{rt}] > 0.
\end{align*}
Collecting everything leads to the desired properties of $C_s$ for every $0\leq s<r$.

The inequality \eqref{Eq:Lebesgue measure} follows straightforwardly from the previous inequality and the inclusion $C_s \subset A_s$ for every $0\leq s < r$. This terminates the proof.
\end{proof}

\begin{theorem}[Negligibility of extremal points]\label{Cor:a negligible} Let  $\scrM$ and its causal reversal $\smash{\scrM^{\leftarrow}}$ satisfy the assumption of  \autoref{Le:Size}. Then
\begin{align*}
\meas[a]= \meas[b]=0.
\end{align*}
\end{theorem}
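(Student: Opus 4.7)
The plan is to argue by contradiction, reducing to $\meas[b]=0$ by symmetry: applying the same argument to the causal reversal $\scrM^\leftarrow$ (which interchanges the roles of $a$ and $b$) will yield $\meas[a]=0$. Suppose $\meas[b]>0$. First I would observe that $a\cap b=\emptyset$, since any $x\in\Tr_u^\End$ admits $y\in\Tr_u^\End\setminus\{x\}$ with $x\sim_u y$, and the two cases $y\preceq_u x$ or $x\preceq_u y$ rule out $x\in a$ or $x\in b$ respectively. Hence $b\subset\Tr_u^\End\setminus a$, which says exactly that $b$, viewed in $\scrM^\leftarrow$ as $a^\leftarrow$, lies in $\Tr_u^\End\setminus b^\leftarrow$. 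By inner regularity I would pick a compact $A\subset b$ with $\meas[A]>0$ and apply \autoref{Le:Size} to $\scrM^\leftarrow$ and this $A$; this yields $r>0$, a compact $C\subset A$ with $\meas[C]>0$, and a compact $K\subset\mms$ containing every \emph{backward} $s$-translate
\begin{equation*}
C_s^\leftarrow:=\big\{y\in\Tr_u^\End:\exists\,x\in C,\ y\preceq_u x,\ l(y,x)=s\big\},\qquad s\in[0,r],
\end{equation*}
with $\meas[C_s^\leftarrow]>0$ for every $s\in[0,r)$.

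The geometric heart of the argument is to show that $\{C_s^\leftarrow\}_{s\in(0,r)}$ is pairwise disjoint modulo $\meas$. Indeed, if $y\in C_s^\leftarrow\cap C_{s'}^\leftarrow$ with $s\neq s'$, one can find distinct $x,x'\in C\subset b$ satisfying $y\preceq_u x$, $y\preceq_u x'$, $l(y,x)=s$, and $l(y,x')=s'$. Since $x,x'\in b$ have no successors, neither $x\preceq_u x'$ nor $x'\preceq_u x$ can occur, so $x\not\sim_u x'$, forcing $y\in B$. By \autoref{Re:Abt hypothesis} the standing assumption ensures that $\scrM$ satisfies the premise of \autoref{Th:Vanishing B}, yielding $\meas[B]=0$ and hence $\meas[C_s^\leftarrow\cap C_{s'}^\leftarrow]=0$ for distinct $s,s'$.

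The argument concludes with a cardinality collapse inside the finite-measure space $(K,\meas\mres K)$: for each $n\in\N$, any finite $F\subset\{s\in(0,r):\meas[C_s^\leftarrow]\geq 1/n\}$ satisfies
\begin{equation*}
\frac{|F|}{n}\leq\sum_{s\in F}\meas[C_s^\leftarrow]=\meas\Big[\bigcup_{s\in F}C_s^\leftarrow\Big]\leq\meas[K]<\infty
\end{equation*}
by the disjointness established above, so this set is finite. Consequently $\{s\in(0,r):\meas[C_s^\leftarrow]>0\}$ is a countable union of finite sets, contradicting \autoref{Le:Size}'s assertion that $\meas[C_s^\leftarrow]>0$ for every $s$ in the uncountable interval $[0,r)$. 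The main obstacle I anticipate is correctly applying \autoref{Le:Size} in the causal reversal and identifying the right auxiliary negligibility — here the fact that forward-branching in $\scrM$ is $\meas$-negligible via \autoref{Re:Abt hypothesis} and \autoref{Th:Vanishing B} — to secure the disjointness of the translates; once that is in hand, the pigeonhole on the finite-measure hull gives the contradiction cleanly.
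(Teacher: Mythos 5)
Your proof is correct and takes essentially the same approach as the paper, modulo time reversal: the paper proves $\meas[a]=0$ directly in $\scrM$ by applying \autoref{Le:Size} to $A:=a\setminus B^\leftarrow$ and observing the translates $C_s\setminus B^\leftarrow$ are mutually disjoint (collisions would force an initial point in $C$ to have two distinct predecessors or a translated point to lie in $B^\leftarrow$), whereas you prove $\meas[b]=0$ first by applying \autoref{Le:Size} in the causal reversal and showing $C_s^\leftarrow\cap C_{s'}^\leftarrow\subset B$. These are the same argument under $\scrM\leftrightarrow\scrM^\leftarrow$. The minor variations — not excising the branching set from $A$ before applying \autoref{Le:Size} and instead using disjointness only modulo $\meas$, together with the explicit pigeonhole count rather than the paper's uncountably-many-disjoint-positive-measure-subsets phrasing — are cosmetic and both sound. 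Your observation that $a\cap b=\emptyset$ (so that $b\subset\Tr_u^\End\setminus b^\leftarrow$ in $\scrM^\leftarrow$) is a hypothesis the paper also uses, though only implicitly when it takes $A=a\setminus B^\leftarrow\subset\Tr_u^\End\setminus b$; you make it explicit.
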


\begin{proof}  We only show $\meas[a]=0$, the proof of the identity $\meas[b]=0$ is analogous.

By \autoref{Re:Abt hypothesis} and \autoref{Cor:Vanishing B backward}, it suffices to prove
\begin{align*}
\meas[a\setminus B^\leftarrow] = 0.
\end{align*}
Assume to the contrary  $\meas[a\setminus B^\leftarrow] > 0$. Let $C\subset a\setminus B^\leftarrow$ be a compact subset as provided by applying \autoref{Le:Size} to $A:= a\setminus B^\leftarrow$. Retaining the notation of this lemma, we first claim $C_s \cap C_{s'}=\emptyset$ for every $0<s<s'<r$. Indeed, otherwise let $y\in C_s \cap C_{s'}$. By definition, there exist $\alpha,\alpha'\in C$ such that $l(\alpha,y) = s$ and $\alpha \preceq_u y$ as well as $l(\alpha',y)=s'$ and $\alpha' \preceq_u y$. As in Case 1 in the proof of \autoref{Th:Equiv relation} --- here we use that $a\setminus B^\leftarrow$ does not contain timelike backward branching points --- this leads to a contradiction.

Again by \autoref{Le:Size}, we get $\meas[C_s \setminus B^\leftarrow] = \meas[C_s] > 0$ for uncountably many $0\leq s<r$. On the other hand, the set $\smash{\bigcup_{s\in[0,r)} (C_s \setminus B^\leftarrow)}$ has compact closure, yet is intersected by uncountably many mutually disjoint Suslin sets of positive $\meas$-measure. This contradicts the finiteness of $\meas$ on compact subsets of $\mms$.
\end{proof}

Combining \autoref{Cor:a negligible} with \autoref{Re:Abt hypothesis}, \autoref{Th:Vanishing B}, and \autoref{Cor:Vanishing B backward} leads to the following.

\begin{corollary}[Negligibility of bad points]\label{Th:All negligible} Under the same hypotheses as stated in \autoref{Cor:a negligible}, the Suslin set of bad points is $\meas$-negligible.
\end{corollary}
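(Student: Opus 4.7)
The plan is to assemble the corollary from the four previously established negligibility statements, one for each type of bad point. The set of bad points decomposes as $B\cup B^\leftarrow\cup a\cup b$, and I would treat the four pieces in turn, exploiting the relationships between the hypotheses already spelled out in \autoref{Re:Abt hypothesis}.

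First, I would observe that the assumption imposed on $\scrM$ in \autoref{Cor:a negligible} (namely, the hypothesis of \autoref{Le:Size} for $\scrM$) is, by \autoref{Re:Abt hypothesis}, strictly stronger than the hypothesis of \autoref{Th:Vanishing B}. Indeed, as recorded there, if $\bdpi\in\OptTGeo_p(\mu_0,\mu_1)$ is induced by a map, then so is the chronological $\smash{\ell_p}$-optimal coupling $(\eval_0,\eval_1)_\push\bdpi$; and by \autoref{Le:Geodesics plan}, every chronological $\smash{\ell_p}$-optimal coupling between such marginals arises in this way. Applying \autoref{Th:Vanishing B} thus yields $\meas[B]=0$. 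Symmetrically, the same reasoning applied to the causal reversal $\smash{\scrM^\leftarrow}$ (whose standing hypothesis is the other half of what is assumed in \autoref{Cor:a negligible}) and \autoref{Cor:Vanishing B backward} gives $\meas[B^\leftarrow]=0$.

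Next, the full strength of the hypotheses of \autoref{Cor:a negligible} is exactly what is assumed, so direct invocation gives $\meas[a]=\meas[b]=0$. Combining the four vanishing statements by subadditivity of $\meas$ on the Suslin sets $B$, $B^\leftarrow$, $a$, $b$ (each of which is universally measurable, as noted after \autoref{Def:Branching points} and \autoref{Def:Extremal points}) yields
\begin{equation*}
\meas\big[B\cup B^\leftarrow\cup a\cup b\big]=0,
\end{equation*}
which is the desired conclusion.

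There is essentially no obstacle here; the work was already done in establishing \autoref{Th:Vanishing B}, \autoref{Cor:Vanishing B backward}, and \autoref{Cor:a negligible}. The only point requiring any care is verifying that one indeed has the implications between hypotheses, but \autoref{Re:Abt hypothesis} records exactly this, so the corollary is a bookkeeping consequence rather than a new argument.
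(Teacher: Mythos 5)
Your proof is correct and follows exactly the route the paper indicates: the text preceding the corollary says it follows by ``[c]ombining \autoref{Cor:a negligible} with \autoref{Re:Abt hypothesis}, \autoref{Th:Vanishing B}, and \autoref{Cor:Vanishing B backward},'' and your argument unpacks this in precisely that order (the hypothesis of \autoref{Le:Size} implies that of \autoref{Th:Vanishing B} via \autoref{Re:Abt hypothesis}, the analogous statement for the causal reversal gives \autoref{Cor:Vanishing B backward}, and \autoref{Cor:a negligible} handles $a$ and $b$ directly). The bookkeeping with subadditivity and Suslin measurability is the right closing step.
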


\begin{remark}[Comparison with {\cite{cavalletti2020}}] Retain the notation of \autoref{Ex:CM}. In our case, the transport set $\smash{\mathcal{T}_V}$ constructed in \cite[Sec.~4.1]{cavalletti2020} coincides with $T \cup ((B \cup B^\leftarrow)\setminus (a\cup b))$. In particular, our transport set $\Tr_u$ differs from the previous set $\smash{\mathcal{T}_V}$ only  by an $\meas$-negligible set by \autoref{Cor:a negligible} (or \cite[Cor.~4.15, Thm.~4.17]{cavalletti2020}) in the framework considered in \cite{cavalletti2020}.
\end{remark}

\subsection{Disintegration of $\meas$ on the transport set}\label{Sub:Disintegration} After these preparations, we are in a position to state and prove our disintegration theorem.  This is the place where the weight function $w$ from \autoref{Sub:Framework} enters through the proof.

We recall the quotient map $\Quot\colon \Tr_u\to\Tr_u$ from \autoref{Pr:Quotient map}, the quotient set $Q := \Quot(\Tr_u)$,  and the transport rays $\mms_\alpha := \Quot^{-1}(\alpha)$, $\alpha\in Q$, from \autoref{Def:Quot set}. 

\begin{theorem}[Needle decomposition]\label{Th:Needle} Suppose $\scrM$ is a timelike $p$-essentially non\-branching $\smash{\TCD_p^e(k,N)}$ space. Then there exist
\begin{itemize}
\item a probability measure $\q \in \Prob(Q)$ which is mutually absolutely continuous to $\smash{\Quot_\push\big[\meas\mres\Tr_u\big]}$, and
\item a map $\meas_\cdot\colon Q \to \Meas(\mms)$
\end{itemize}
satisfying the disintegration formula
\begin{align*}
\meas\mres \Tr_u^\End = \meas\mres \Tr_u = \int_Q \meas_\alpha\d\q(\alpha)
\end{align*}
as well as the subsequent properties.
\begin{enumerate}[label=\textnormal{(\roman*)}]
\item \textnormal{\textbf{Measurability.}} For every $\meas$-measurable set $B\subset E$, the map $\alpha\mapsto\meas_\alpha[B]$ is $\q$-measurable.
\item \textnormal{\textbf{Strong consistency.}} For $\q$-a.e.~$\alpha\in Q$, the measure $\meas_\alpha$ is concentrated on the transport ray $\mms_\alpha$.
\item \textnormal{\textbf{Disintegration.}} For every $\meas$-measurable set $B\subset E$ and every $\q$-measur\-able set $A\subset Q$, 
\begin{align*}
\meas\big[B\cap \Quot^{-1}(A)\big] = \int_A \meas_\alpha[B] \d\q(\alpha).
\end{align*}
\item \textnormal{\textbf{Local finiteness.}} For every compact set $C\subset E$, 
\begin{align*}
\q\textnormal{-}\!\esssup_{\alpha\in Q} \meas_\alpha[C] < \infty.
\end{align*}
\end{enumerate}

Furthermore, given any $\q\in \Prob(Q)$ as above the disintegration $\meas_\cdot$ is $\q$-essentially unique, in the sense that for every map $\smash{\meas_\cdot'\colon Q\to \Meas(\mms)}$ with the same properties, 
\begin{align*}
\meas_\cdot = \meas_\cdot'\quad\q\textnormal{-a.e.}
\end{align*}
\end{theorem}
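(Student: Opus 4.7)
My plan is to reduce the statement to a standard Rokhlin-type disintegration after first throwing away an $\meas$-negligible set.

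The first, and main, hurdle is to verify that $\meas[\Tr_u^\End \setminus \Tr_u] = 0$, i.e.\ to apply \autoref{Th:All negligible}. That corollary requires the hypothesis of \autoref{Le:Size} to hold for both $\scrM$ and the causal reversal $\smash{\scrM^\leftarrow}$. Under the standing assumption that $\scrM$ is a timelike $p$-essentially nonbranching $\smash{\TCD_p^e(k,N)}$ space, \autoref{Th:Optimal maps} provides induced-by-a-map representations for chronological $\smash{\ell_p}$-optimal couplings between $\Ell^\infty$-marginals and compactly supported $\meas$-absolutely continuous marginals separated from the lightcone, while \autoref{Th:Uniqueness geos} ensures the intermediate measures along the corresponding displacement $\smash{\ell_p}$-geodesic are $\meas$-absolutely continuous on $[0,1)$. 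By \autoref{Re:Causal reversal} and the fact that timelike $p$-essential nonbranching is evidently preserved under causal reversal, the same holds for $\smash{\scrM^\leftarrow}$. \autoref{Th:All negligible} then gives $\meas[B \cup B^\leftarrow \cup a \cup b] = 0$.

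Now the disintegration is produced by standard measure theory. I would define the probability measure $\nu := w \cdot (\meas \mres \Tr_u) \in \scrP(\mms)$ using the normalization \eqref{Eq:Weight}, and set $\q := \Quot_\push \nu \in \scrP(Q)$. Because $\Quot$ is $\meas$-measurable (\autoref{Pr:Quotient map}) with range $\sigma(Q) = \bigcup_{i\in\N} Q_i$ that is a countable union of Borel slices of the level sets $u^{-1}(a_i)$, the push-forward structure fits inside a Polish/analytic measurable framework in which the classical Rokhlin--von Neumann disintegration theorem applies (since fibers of $\Quot$ are precisely the transport rays $\mms_\alpha$). This yields an essentially unique strongly consistent disintegration $\nu = \int_Q \nu_\alpha \d\q(\alpha)$ with $\nu_\alpha$ a probability measure concentrated on $\mms_\alpha$ and $\alpha \mapsto \nu_\alpha[B]$ $\q$-measurable for every Borel $B \subset \mms$. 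Define $\meas_\alpha := w^{-1}\,\nu_\alpha$, which is a positive Radon measure on $\mms$ still concentrated on $\mms_\alpha$ since $w > 0$ on $\Tr_u^\End$.

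The remaining verifications are routine. The disintegration formula and its measurability pass from $\nu$ to $\meas \mres \Tr_u$ through multiplication by $w^{-1}$, and $\meas\mres\Tr_u^\End = \meas \mres \Tr_u$ by the first paragraph. Mutual absolute continuity of $\q$ and $\Quot_\push[\meas\mres\Tr_u]$ follows from $0 < w < \infty$ on $\Tr_u^\End$ together with $\q = \Quot_\push \nu$. Local finiteness is quantitative: for any compact $C \subset E$, the hypothesis $w_0 := \inf w(C) > 0$ yields
\begin{align*}
\meas_\alpha[C] = \int_C w^{-1}\d\nu_\alpha \leq w_0^{-1}\,\nu_\alpha[\mms] = w_0^{-1}
\end{align*}
for $\q$-a.e.\ $\alpha \in Q$, so the $\q$-essential supremum is finite. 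Finally, $\q$-essential uniqueness of $\meas_\cdot$ is inherited from the essential uniqueness in Rokhlin's disintegration theorem applied to $\nu$, undone by multiplication by the positive factor $w^{-1}$.
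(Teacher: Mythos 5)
Your proof is correct and takes essentially the same route as the paper: you reduce to $\meas\mres\Tr_u$ via negligibility of bad points, push the $w$-normalized measure forward through $\Quot$, apply the classical disintegration theorem to obtain a strongly consistent family of probability measures, then rescale by $w^{-1}$; the paper does exactly this, citing Fremlin [Thm.~452I] rather than Rokhlin by name. Your added detail about why the hypotheses of \autoref{Th:All negligible} hold (via \autoref{Th:Optimal maps}, \autoref{Th:Uniqueness geos}, and causal-reversal invariance of timelike $p$-essential nonbranching) makes explicit what the paper's one-line appeal leaves to the reader.
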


\begin{proof} \autoref{Th:Uniqueness geos} combined with \autoref{Th:All negligible} yields 
\begin{align*}
\meas\mres\Tr_u^\End = \meas\mres \Tr_u.
\end{align*}
It thus suffices to find the desired disintegration of $\meas\mres \Tr_u$.

Define $\mmeas,\q\in \Prob(\mms)$ by $\mmeas := w\,\meas\mres \Tr_u$, where $w$ is the weight function normalizing $\meas\mres E$  from \autoref{Sub:Framework},  and $\q := \Quot_\push\mmeas$. By construction, $\q$ is concentrated on $Q$. We claim $\q$ is mutually absolutely continuous to $\smash{\Quot_\push\big[\meas\mres\Tr_u\big]}$. Indeed, let $A\subset Q$ be a Suslin set with $\q[A]=0$. This means
\begin{align}\label{Eq:IntNull}
\int_{\Quot^{-1}(A)} w\d\meas\mres \Tr_u = 0,
\end{align}
and therefore $\smash{\meas\mres\Tr_u\big[\Quot^{-1}(A)\big]=0}$ since $w$ is positive on $E$. On the other hand, the identity  $\smash{\meas\mres\Tr_u\big[\Quot^{-1}(A)\big]=0}$ clearly implies \eqref{Eq:IntNull}, hence $\q[A]=0$.

The disintegration theorem for probability measures \cite[Thm.~452I]{fremlin} applied to $\mmeas$ and $\q$ yields the existence of a map $\mmeas_\cdot\colon Q\to \Prob(\mms)$ with the following properties.
\begin{itemize}
\item For every $\mmeas$-measurable set $B\subset E$, the map $\alpha\mapsto\mmeas_\alpha[B]$ is $\q$-measurable.
\item For $\q$-a.e.~$\alpha\in Q$, the measure $\mmeas_\alpha$ is concentrated on the transport ray $\mms_\alpha$.
\item For every $\mmeas$-measurable set $B\subset E$ and every $\q$-measurable set $A\subset Q$,
\begin{align*}
\mmeas\big[B\cap \Quot^{-1}(A)\big] = \int_A\mmeas_\alpha[B]\d\q(\alpha).
\end{align*}
\end{itemize}
Moreover, every map $\smash{\mmeas_\cdot'\colon Q\to \Prob(\mms)}$ with these properties satisfies
\begin{align*}
\mmeas_\cdot = \mmeas_\cdot'\quad\q\textnormal{-a.e.}
\end{align*}

It is now straightforward to verify that setting 
\begin{align*}
\meas_\cdot := w^{-1}\,\mmeas_\cdot,
\end{align*}
leads to all desired properties; see  \cite[Sec.~4.2]{cavalletti2020} for details. We only comment on local finiteness, which does not  follow from the above disintegration theorem. For a compact subset $C\subset\mms$, since $c := \inf w(C) > 0$ we obtain
\begin{align*}
\q\textnormal{-}\!\esssup_{\alpha\in Q} \meas_\alpha[C]  \leq c^{-1}\,\q\textnormal{-}\!\esssup_{\alpha\in Q}\mmeas_\alpha[C] \leq c^{-1}.\tag*{\qedhere}
\end{align*}
\end{proof}



\begin{remark}[Relation of $\meas\mres E$ and $\meas\mres \Tr_u$]  If needed for applications, the identity $\meas\mres E = \meas \mres \Tr_u$ has to be checked separately. What we only know a priori thanks to the inclusion $\Tr_u\subset E$ is the inequality ``$\geq$''.
\end{remark}

Lastly, we get to the additional important regularity property of absolute continuity of the conditional measures from \autoref{Th:Needle}. To this aim, recall the ray map $\ray\colon \Dom(\ray) \to \Tr_u$ from \autoref{Cor:Ray map}, where $\Dom(\ray)\subset \Tr_u\times \R$. For every $\alpha\in Q$, the map $\ray(\alpha,\cdot)\colon \Dom(\ray)(\alpha)\to \Tr_u$, where $\Dom(\ray)(\alpha)$ is an open interval in $\R$ containing~$0$, parametrizes the transport ray $\smash{\tilde{\alpha}^u}$ passing through $\alpha = \ray(\alpha,0)$. Since $\ray(\alpha,\cdot)$ is an order isometry, we obtain
\begin{align*}
\ray(\alpha,\cdot)_\push\big[\Leb^1\mres \Dom(\ray)(\alpha)\big] = \scrH^1\mres \ray(\alpha,\Dom(\ray)(\alpha))
\end{align*}
as an equality between measures on $\mms_\alpha$. To ease the notation, we write
\begin{align*}
\Leb^1_\alpha := \ray(\alpha,\cdot)_\push\big[\Leb^1\mres \Dom(\ray)(\alpha)\big].
\end{align*}

The proof of the following result is completely analogous to  \cite[Prop.~4.16]{cavalletti2020} in the special case of \autoref{Ex:CM} by using \autoref{Le:Size} and thus omitted.

\begin{proposition}[Absolute continuity of conditional measures]\label{Pr:Abs cont needles} Assume $\scrM$ is a timelike $p$-essentially nonbranching  $\smash{\TCD_p^e(k,N)}$ space, and let $\q$ be a disintegration of $\meas\mres \Tr_u$ as provided by \autoref{Th:Needle}. Then for $\q$-a.e.~$\alpha\in Q$, the conditional measure $\meas_\alpha$ is absolutely continuous with respect to $\smash{\Leb^1_\alpha}$. 
\end{proposition}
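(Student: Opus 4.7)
I would follow the strategy of Cavalletti--Mondino \cite[Prop.~4.16]{cavalletti2020}, adapted to the more general potential $u$ via the Size Lemma \autoref{Le:Size}. First observe that the hypothesis of \autoref{Le:Size} is guaranteed for both $\scrM$ and its causal reversal under our standing assumption of timelike $p$-essential nonbranching plus $\smash{\TCD_p^e(k,N)}$: by \autoref{Th:Optimal maps} and \autoref{Th:Uniqueness geos}, every chronological $\smash{\ell_p}$-optimal coupling between an $\Ell^\infty$-marginal and a compactly supported marginal with chronologically separated supports is induced by a unique plan whose intermediate distributions are $\meas$-absolutely continuous (cf.\ \autoref{Re:Abt hypothesis}). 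Consequently, \autoref{Th:All negligible} applies, and the set of bad points $B\cup B^\leftarrow\cup a\cup b$ is $\meas$-negligible.

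Argue by contradiction: suppose there is a Borel set $A_0\subset Q$ with $\q[A_0]>0$ such that for every $\alpha\in A_0$ the Lebesgue decomposition of $\meas_\alpha$ with respect to $\Leb^1_\alpha$ has a nontrivial singular part. Using a measurable version of the Lebesgue decomposition (available since both $\meas_\cdot$ and $\Leb^1_\cdot$ depend measurably on $\alpha\in Q$ through the ray map $\ray$ of \autoref{Cor:Ray map}), build a Borel set $N\subset\Tr_u$ with $\meas[N]>0$ such that $N\cap\mms_\alpha$ is $\Leb^1_\alpha$-null yet $\meas_\alpha[N]>0$ for every $\alpha\in A_0$. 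Replace $N$ by a compact subset $C\subset N\setminus(a\cup b)$ with $\meas[C]>0$ (possible by inner regularity of $\meas$ and the above negligibility of $a\cup b$), and apply \autoref{Le:Size} to $C$. This produces $r>0$ and a compact $C'\subset C$ whose ``forward cone'' $\bigcup_{s\in[0,r]}C_s'$ is precompact in $\mms$, and such that $\meas[C_s']>0$ for every $s\in[0,r)$. Under the ray-map parametrization, each forward translate $C_s'\cap\mms_\alpha$ is a shift by $s$ of $C'\cap\mms_\alpha$ along the line $\Dom(\ray)(\alpha)\subset\R$, so it also carries zero $\Leb^1_\alpha$-measure while concentrating singular mass of $\meas_\alpha$.

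The remainder of the argument extracts from this data an uncountable family of pairwise essentially $\meas$-disjoint subsets of the fixed precompact region, contradicting local finiteness of $\meas$. The combinatorial input is the Steinhaus-type fact that translates of a fixed $\Leb^1$-null Borel subset of $\R$ at distinct shifts have $\Leb^1$-null pairwise intersection for $\Leb^2$-a.e.\ pair $(s,s')$; applied ray-wise via the disintegration of \autoref{Th:Needle} and Fubini, this shows $\Leb^1_\alpha\bigl[(C'\cap\mms_\alpha)_s\cap(C'\cap\mms_\alpha)_{s'}\bigr]=0$ for $\q$-a.e.~$\alpha$ and $\Leb^2$-a.e.~$(s,s')\in[0,r)^2$. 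To upgrade this ray-wise disjointness to genuine $\meas$-disjointness---which is where the curvature assumption enters substantively---one applies the $\smash{\TCD_p^e(k,N)}$ condition to well-chosen marginals obtained by normalizing $\meas\mres C_s'$ and $\meas\mres C_{s'}'$, using \autoref{Th:Uniqueness geos} to ensure the unique interpolating plan concentrates on the rays parametrizing $C'$ and has $\meas$-absolutely continuous intermediate marginals whose densities are bounded via the distortion coefficients of \autoref{Def:dist coeff general k} (uniformly on the precompact cone by \autoref{Le:Properties}).

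The main obstacle is precisely this last step: the very statement we wish to prove is ``$\Leb^1_\alpha$-small implies $\meas$-small'', so the passage from ray-wise null intersections to $\meas$-essential disjointness cannot be made directly and must be routed through the curvature-driven bootstrap just sketched. Once that step is in place, selecting a Borel-measurable uncountable subset $S\subset(0,r)$ along which the $C_s'$ are pairwise $\meas$-essentially disjoint completes the contradiction, since each $C_s'$ has positive $\meas$-measure and all lie in a fixed compact set.
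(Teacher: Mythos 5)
Your reduction to the Size Lemma is correct, and so is the contradiction set-up: construct a compact $C'$ with $\meas[C']>0$ whose trace $C'\cap\mms_\alpha$ is $\Leb^1_\alpha$-null for $\q$-a.e.\ $\alpha$, and apply \autoref{Le:Size} to obtain $r>0$ with $\meas[C'_s]>0$ for all $s\in[0,r)$. The first paragraph, verifying the hypotheses of \autoref{Le:Size} from timelike $p$-essential nonbranching and $\smash{\TCD_p^e(k,N)}$ via Theorems \ref{Th:Optimal maps} and \ref{Th:Uniqueness geos}, is also right. The gap lies in what you do with that data.

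You aim at pairwise $\meas$-essential disjointness of the translates $C'_s$, note (correctly) that the Steinhaus-type fact you quote only gives $\Leb^1_\alpha$-null intersections — which is vacuous, since both traces are $\Leb^1_\alpha$-null to begin with — and then try to ``upgrade'' to $\meas$-disjointness through a curvature bootstrap. That bootstrap does not exist in the form sketched: knowing that $\meas\mres C'_s$ and $\meas\mres C'_{s'}$ are connected by an $\ell_p$-geodesic with density bounds does not control the overlap $\meas[C'_s\cap C'_{s'}]$, and the cited ingredients (\autoref{Th:Uniqueness geos}, the distortion coefficients of \autoref{Def:dist coeff general k}) give no handle on it. You flag this as the main obstacle, and it is a real one — but it is an obstacle only because you have set the wrong intermediate goal.

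The right application of Fubini--Tonelli bypasses pairwise disjointness entirely. Fix $\alpha$ such that $\bar{C}'_\alpha:=\ray(\alpha,\cdot)^{-1}(C'\cap\mms_\alpha)$ is $\Leb^1$-null. Under the ray parametrization, $C'_s\cap\mms_\alpha$ corresponds to $\bar C'_\alpha+s$, so
\begin{align*}
\int_0^r\meas_\alpha[C'_s\cap\mms_\alpha]\d s=\int_\R\int_0^r\One_{\bar C'_\alpha}(t-s)\d s\d\bar\meas_\alpha(t)\le\int_\R\Leb^1[\bar C'_\alpha]\d\bar\meas_\alpha(t)=0,
\end{align*}
using Tonelli and $\sigma$-finiteness of $\bar\meas_\alpha$. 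Integrating over $Q$ against $\q$ and applying the disintegration \autoref{Th:Needle} together with $\meas$-negligibility of bad points from \autoref{Th:All negligible}, one obtains $\int_0^r\meas[C'_s]\d s=0$, hence $\meas[C'_s]=0$ for $\Leb^1$-a.e.\ $s\in[0,r)$. This contradicts the conclusion of \autoref{Le:Size} that $\meas[C'_s]>0$ for \emph{every} such $s$. In short: the Steinhaus/Fubini input should be that for fixed locally finite $\nu$ on $\R$ and fixed $\Leb^1$-null Borel $F$, one has $\nu[F+s]=0$ for $\Leb^1$-a.e.\ $s$; applied ray-wise this closes the argument with no further use of curvature beyond the Size Lemma itself.
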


In the following, for $\alpha\in Q$ we write $h_\alpha$ for the ($\smash{\Leb^1_\alpha}$-uniquely determined) Radon--Nikodým derivative of $\meas_\alpha$ with respect to $\smash{\Leb^1_\alpha}$.

\begin{remark}[Identification]\label{Re:Ident halpha} It will be useful to identify the transport ray $\mms_\alpha\subset\mms$, where $\alpha\in Q$, with its parametrization $\Dom(\ray)(\alpha) \subset\R$ by the homeomorphism $\ray(\alpha,\cdot)$; recall \autoref{Cor:Ray map}. We employ barred notations for the  real counterparts  of all objects associated with $\mms_\alpha$ as follows:
\begin{align}\label{Real bar}
\begin{split}
\bar{\mms}_\alpha &:= \Dom(\ray)(\alpha),
\\\bar{\meas}_\alpha &:= \ray(\alpha,\cdot)^{-1}_\push\meas_\alpha,
\\\bar{\Leb}_\alpha^1 &:= \Leb^1\mres \bar{\mms}_\alpha,
\\\bar{h}_\alpha &:= h_\alpha \circ \ray(\alpha,\cdot).
\end{split}
\end{align}
From \autoref{Pr:Abs cont needles}, we deduce $\smash{\bar{\meas}_\alpha}$ is $\smash{\Leb^1}$-absolutely continuous, and
\begin{align*}
\bar{\meas}_\alpha = \bar{h}_\alpha\,\bar{\Leb}^1_\alpha.
\end{align*}
\end{remark}

Arguing as for \autoref{Th:Needle} and \autoref{Pr:Abs cont needles}, we deduce the following.

\begin{theorem}[Smooth needle decomposition]\label{Th:Smooth needle} Let $\scrM$ form the  metric measure spacetime induced by a smooth, globally hyperbolic Finsler space\-time. Moreover, let $E$ and $u$ be  given as in \autoref{Sub:Framework}. Then the conclusions of \autoref{Th:Needle} and \autoref{Pr:Abs cont needles} hold.
\end{theorem}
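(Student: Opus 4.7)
The plan is to verify that the hypotheses which drove the proofs of \autoref{Th:Needle} and \autoref{Pr:Abs cont needles}—namely timelike $p$-essential nonbranching together with the uniqueness, single-valuedness, and absolute continuity of interpolants for $\ell_p$-optimal transports between suitable marginals—hold on every smooth globally hyperbolic Finsler spacetime \emph{without} any curvature assumption, and then to rerun the arguments of \autoref{Th:Needle} and \autoref{Pr:Abs cont needles} verbatim. First, standard results (cited in \autoref{Ch:Examples}, e.g. \cite[Sec.~2.2]{braunc1} and \cite[Thm.~5.16]{kunzinger2018}) ensure that a smooth globally hyperbolic Finsler spacetime with its smooth reference measure yields a metric measure spacetime obeying \autoref{Ass:REG} and \autoref{Def:Metric measure spacetime}, so that the setup of \autoref{Sub:Framework}—the $l$-geodesically convex Borel set $E$, the $1$-steep potential $u\colon E\to \R$, and the weight function $w$—applies unchanged. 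In particular, the transport set $\Tr_u$, the quotient map $\sfQ$, and the ray map $\ray$ of Subsections \ref{Sub:Transport relation}--\ref{Sub:Negligibility} are available as in the curved case.

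Second, smooth Finsler spacetimes are timelike $p$-essentially nonbranching for every $0<p<1$: smoothness of the geodesic spray and the Cauchy--Lipschitz theorem for the geodesic ODE imply that two distinct $l$-geodesics cannot share any nontrivial subsegment, so every $\bdpi\in \OptTGeo_p(\scrP_\comp^\ac(\mms,\meas)^2)$ is supported on a timelike nonbranching Borel subset of $\TGeo(\mms)$ in the sense of \autoref{Def:TL nonbranching}.

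Third, and this is the crux, we invoke the Finslerian analogs of \cite[Thm.~4.17, Cor.~4.18]{braunohta} which were already cited in \autoref{Ch:Examples}: for every $\Ell^\infty$-measure $\mu_0\in\scrP_\comp(\mms)$ and every $\mu_1\in\scrP_\comp(\mms)$ with $\supp\mu_0\times\supp\mu_1\subset\{l>0\}$, there is exactly one chronological $\ell_p$-optimal coupling, it is induced by a map, and the unique lift $\bdpi\in\OptTGeo_p(\mu_0,\mu_1)$ provided by \autoref{Le:Geodesics plan} satisfies $(\eval_t)_\push\bdpi\ll \meas$ for every $0\leq t<1$. The same statements transfer to the causal reversal $\scrM^{\leftarrow}$ from \autoref{Ex:Causal reversal}, which is itself a smooth globally hyperbolic Finsler spacetime. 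Hence the hypotheses of \autoref{Th:Vanishing B}, \autoref{Cor:Vanishing B backward}, and \autoref{Le:Size} hold both for $\scrM$ and for $\scrM^\leftarrow$, and so \autoref{Th:All negligible} yields $\meas\mres\Tr_u^\End = \meas\mres\Tr_u$ with the set of bad points being $\meas$-negligible.

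With these inputs the proof of \autoref{Th:Needle} goes through without change: applying the disintegration theorem \cite[Thm.~452I]{fremlin} to $\mmeas:=w\,\meas\mres\Tr_u$ and its push-forward $\q:=\sfQ_\push\mmeas$ on the quotient set $Q$, and then rescaling $\meas_\cdot := w^{-1}\,\mmeas_\cdot$, provides the measurability, strong consistency, disintegration formula, and local finiteness conclusions exactly as in the curved case. Similarly, \autoref{Pr:Abs cont needles} carries over verbatim, since its sole additional ingredient is \autoref{Le:Size}, which we have just verified. The main obstacle in this proof is purely bookkeeping: one must check that in the Finslerian smooth setting the results of \cite{braunohta} supply the precise hypotheses required by \autoref{Le:Size} and by \autoref{Cor:Vanishing B backward} (i.e., maps for $\ell_p$-optimal couplings \emph{and} absolute continuity of interpolants at every $0\leq t<1$), which is indeed the content of \cite[Thm.~4.17, Cor.~4.18]{braunohta}. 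No curvature bound is invoked anywhere in the argument; curvature was needed in \autoref{Th:Needle} only to derive those uniqueness/regularity properties through \autoref{Th:Optimal maps} and \autoref{Th:Uniqueness geos}, and in the smooth Finslerian framework they come for free.
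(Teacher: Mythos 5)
Your proposal is correct and follows essentially the same route as the paper, which disposes of this result with the one-line argument "Arguing as for \autoref{Th:Needle} and \autoref{Pr:Abs cont needles}" together with a remark noting that the Finslerian uniqueness-of-couplings results \cite[Thm.~4.17, Cor.~4.18]{braunohta} (and timelike $p$-essential nonbranching from the Cauchy--Lipschitz theorem for the geodesic spray) replace the curvature hypotheses in the verification of \autoref{Le:Size}, \autoref{Th:All negligible}, and hence \autoref{Th:Needle} and \autoref{Pr:Abs cont needles}. You have simply filled in the intended chain of dependencies explicitly.
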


This echoes Klartag's Riemannian result \cite[Thm.~3.23]{klartag2017} in the absence of curvature bounds; general uniqueness results of chronological $\smash{\ell_p}$-optimal couplings and $\smash{\ell_p}$-geodesics (see \cite[Thm.~5.8, Cor.~5.9]{mccann2020} for the Lorentzian case and \cite[Thm.~4.17, Cor.~4.18]{braunohta} for general Finsler spacetimes) explain why \autoref{Th:Smooth needle} does not  require curvature hypotheses.

\subsection{Localization of $\smash{\TCD_p^e(k,N)}$}\label{Sub:Local} The crucial property of our needle decomposition is that the $\smash{\TCD_p^e(k,N)}$ property passes over to $\q$-a.e.~needle and makes them one-dimensional  variable $\CD$ metric measure spaces, \autoref{Re:Connection CD}. Its proof follows Cavalletti--Mondino's counterpart for metric measure spaces \cite[Thm.~4.2]{cavalletti2017},  which for constant $k$ they extended to Lorentzian length spaces in \cite[Thm.~3.2]{cm++}.

For the following discussion, let us recall the identifications \eqref{Real bar} from \autoref{Re:Ident halpha}. 
Given any $\alpha\in Q$ we define $\smash{k_\alpha\colon \bar{\mms}_\alpha \to \R}$ by
\begin{align*}
k_\alpha  := k \circ \ray(\alpha,\cdot).
\end{align*}
Thanks to \autoref{Cor:Ray map}, $\smash{k_{\alpha}}$ is  lower semicontinuous. In particular, the induced distortion coefficients $\smash{\sigma_{k_\alpha^\pm/N}^{(t)}(\theta)}$  are defined for every $0\leq \theta<\diam^l\mms_\alpha$ and every $0\leq t\leq 1$  analogously to \autoref{Def:dist coeff general k}.

Given any $K\in \R$, $\theta > 0$, and $0\leq t\leq 1$, recalling our simplifying assumption $N>1$ we need to consider the geometrically averaged distortion coefficients
\begin{align}\label{Eq:Tau inter}
\tau_{K,N}^{(t)}(\theta) := t^{1/N}\,\sigma_{K/(N-1)}^{(t)}(\theta)^{1-1/N}
\end{align}
further detailed in \autoref{Def:tau dist coeff}. Although  \eqref{Eq:tau sigma inequ} below implies
\begin{align*}
\sigma_{K/N}^{(t)}(\theta)\leq \tau_{K,N}^{(t)}(\theta)
\end{align*}
the following lemma \cite[Lem.~2.1]{dengsturm} ``reverses'' this inequality locally at the cost of a slightly worse real  constant $K' < K$.

\begin{lemma}[Local comparison  of distortion coefficients]\label{Le:Local comp}  For every $K,K'\in\R$ with $K' < K$, there exists $\theta > 0$ such that for every $0 < \theta' < \theta$ and every $0\leq t\leq 1$,
\begin{align*}
\sigma_{K/N}^{(t)}(\theta')\geq \tau_{K',N}^{(t)}(\theta').
\end{align*}
\end{lemma}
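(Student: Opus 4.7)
\textbf{Proof plan for \autoref{Le:Local comp}.} The boundary cases $t=0$ and $t=1$ are trivial, since at those values both $\sigma_{K/N}^{(t)}(\theta')$ and $\tau_{K',N}^{(t)}(\theta')$ equal $0$ and $1$ respectively, regardless of the curvature parameter. Hence my plan is to fix $t\in(0,1)$ and compare the two coefficients via their explicit trigonometric/hyperbolic formulas and Taylor expansions in $\theta'$ near $0$.

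First I would recall that $\sigma_{K/N}^{(t)}(\theta)$ solves the ODE $u''(r) + (K/N)\theta^2\,u(r)=0$ on $(0,1)$ with Dirichlet data $u(0)=0$, $u(1)=1$, so that for small $\theta$ one has the explicit expansion
\begin{align*}
\sigma_{K/N}^{(t)}(\theta) = t + \frac{K}{6N}\,t(1-t)(1+t)\,\theta^{2} + t(1-t)(1+t)\,R_{K,N}(t,\theta)\,\theta^{4},
\end{align*}
with a remainder $R_{K,N}(t,\theta)$ that is uniformly bounded on $[0,1]\times[0,\theta_0]$ for any fixed $\theta_0>0$; this follows by expanding $\sin(\sqrt{K/N}\,t\,\theta)/\sin(\sqrt{K/N}\,\theta)$ (or its hyperbolic analog when $K<0$, or directly from the ODE when $K=0$) and noting that the common factor $t(1-t)(1+t)$ factors out of every higher-order correction. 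The second step is to obtain the analogous expansion for $\tau_{K',N}^{(t)}(\theta) = t^{1/N}\,\sigma_{K'/(N-1)}^{(t)}(\theta)^{1-1/N}$: applying the preceding expansion with $K'$ in place of $K$ and $N-1$ in place of $N$, followed by $(1+x)^{1-1/N}=1+(1-1/N)x+O(x^{2})$, yields
\begin{align*}
\tau_{K',N}^{(t)}(\theta) = t + \frac{K'}{6N}\,t(1-t)(1+t)\,\theta^{2} + t(1-t)(1+t)\,\widetilde{R}_{K',N}(t,\theta)\,\theta^{4},
\end{align*}
where crucially the linear $\theta^{2}$-coefficient reads $K'/(6N)$, \emph{not} $K'/(6(N-1))$, because the exponent $1-1/N=(N-1)/N$ cancels the factor $1/(N-1)$, and the higher-order remainder $\widetilde{R}_{K',N}$ is again uniformly bounded on $[0,1]\times[0,\theta_0]$.

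Subtracting the two expansions gives
\begin{align*}
\sigma_{K/N}^{(t)}(\theta) - \tau_{K',N}^{(t)}(\theta)
= t(1-t)(1+t)\left[\frac{K-K'}{6N}\,\theta^{2} + \big(R_{K,N}-\widetilde{R}_{K',N}\big)(t,\theta)\,\theta^{4}\right].
\end{align*}
Since $K-K' > 0$ and the bracketed remainder is uniformly bounded on $[0,1]\times[0,\theta_0]$, there exists $\theta>0$ (depending only on $K-K'$, $N$, and the uniform remainder bound) such that for every $0<\theta'<\theta$ the bracket is nonnegative \emph{uniformly} in $t\in[0,1]$; as $t(1-t)(1+t)\geq 0$, this yields the inequality in the interior. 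Together with the boundary cases, this gives the claim.

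The mild technical point to watch is uniformity of the remainder terms in $t$ near the endpoints: one must verify that the $\theta^{4}$ contributions really do factor through $t(1-t)(1+t)$ (and are not merely $O(\theta^{4})$ with a constant depending on $t$), so that the entire inequality reduces to one positive-scalar condition $(K-K')/(6N)\geq C\theta^{2}$ after factoring. This is built into the structure of the expansion of $\sin(tx)/\sin(x)$ around $x=0$, whose series coefficients are polynomials in $t$ vanishing at $t=0$ and $t=1$, and analogously for the hyperbolic case; verifying this factorization cleanly is the main bookkeeping task, but it is purely algebraic.
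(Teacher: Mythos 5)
Your argument is correct, and it is essentially the proof that the paper outsources to the cited reference \cite[Lem.~2.1]{dengsturm}: the only substance is the Taylor expansion of $\sigma_\kappa^{(t)}(\theta)$ at $\theta=0$, and the key cancellation you spotted — that the exponent $(N-1)/N$ in $\tau_{K',N}^{(t)}$ converts the prefactor $K'/(N-1)$ into $K'/N$, so that $K'<K$ controls the $\theta^2$ coefficient — is exactly the point. The one soft spot you flagged (that the $\theta^4$ remainder really factors through $t(1-t)(1+t)$ with a uniform bound) closes cleanly by observing that $g(t,x):=\sin(tx)/\sin(x)-t$, and its hyperbolic analog, is analytic, odd in $t$, vanishes at $t=\pm 1$, and satisfies $g(t,0)=0$, which forces $g(t,x)=t(1-t^2)\,x^2\,\hat H(t^2,x)$ with $\hat H$ analytic and hence bounded on $[0,1]\times[0,\theta_0]$ for any $\theta_0$ below the first zero of the relevant sine.
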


\begin{theorem}[Disintegration into $\smash{\CD(k,N)}$ needles]\label{Th:Localization TCD} Assume $\scrM$ is a timelike $p$-essentially nonbranching $\smash{\TCD_p^e(k,N)}$ space, and let $\q$ be a disintegration according to \autoref{Th:Needle}. Then  $\smash{\bar{\meas}_\alpha}$ is absolutely continuous with respect to the one-dimensional Lebesgue measure 
for $\q$-a.e.~$\alpha\in Q$, and its $\smash{\bar{\Leb}^1_\alpha}$-density $\smash{\bar{h}_\alpha}$ 
has a nonrelabeled $\smash{\bar{\Leb}^1_\alpha}$-version which is locally Lipschitz continuous and strictly positive on the open real interval $\smash{\bar{\mms}_\alpha}$, continuous on its closure, and satisfies the following inequality for every $t_0,t_1\in \bar{\mms}_\alpha$ with $t_0<t_1$ and every $0\leq s\leq 1$:
\begin{align*}
\bar{h}_\alpha((1-s)\,t_0 + s\,t_1)^{1/(N-1)} &\geq \sigma_{k_\alpha^-/(N-1)}^{(1-s)}(t_1-t_0)\,\bar{h}_\alpha(t_0)^{1/(N-1)}\\
&\qquad\qquad + \sigma_{k_\alpha^+/(N-1)}^{(s)}(t_1-t_0)\,\bar{h}_\alpha(t_1)^{1/(N-1)}.
\end{align*}
\end{theorem}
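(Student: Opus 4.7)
The proof plan follows the localization paradigm of Cavalletti--Mondino \cite{cavalletti2017}, adapted to our variable $k$ setting via the pathwise reformulation of $\TCD_p^e(k,N)$. The first step is to fix two Borel maps $a_0, a_1 \colon Q \to \R$ with $a_0(\alpha) < a_1(\alpha)$ and both points lying in $\bar{\mms}_\alpha$, and for small $\delta > 0$ and a Suslin set $A \subset Q$ of positive $\q$-measure (contained in a precompact part of $\Tr_u$), build the tube measures
\begin{align*}
\mu_i^\delta := \q[A]^{-1}\int_A \ray(\alpha,\cdot)_\push\!\Big[\delta^{-1}\,\bar\Leb^1\mres [a_i(\alpha), a_i(\alpha)+\delta]\Big]\,\rmd\q(\alpha), \quad i=0,1.
\end{align*}
By \autoref{Th:Needle} and \autoref{Pr:Abs cont needles}, both lie in $\scrP_\comp^\ac(\mms,\meas)$; if $a_0(\alpha)+\delta < a_1(\alpha)$ and $A$ is chosen so that $\ray(A \times [a_0(\alpha), a_0(\alpha)+\delta]) \times \ray(A \times [a_1(\alpha), a_1(\alpha)+\delta]) \subset \{l>0\}$, the pair is strongly timelike $p$-dualizable by \autoref{Ex:Str tl dual}.

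The key (and most delicate) step is to show that the unique plan $\bdpi \in \OptTGeo_p(\mu_0^\delta, \mu_1^\delta)$ granted by Theorems \ref{Th:Optimal maps}, \ref{Th:Uniqueness geos} and \autoref{Cor:Lifting nonbr} is concentrated on $l$-geodesics that remain in a single needle $\mms_\alpha$. Here the combination of $l^p$-cyclical monotonicity of $\supp\bdpi$, the reverse $l$-Lipschitz condition \eqref{Eq:Gleichung} on $u$, and the order-isometry property \autoref{Cor:Ray map}\ref{La:III} is decisive: any transport crossing between distinct needles $\mms_\alpha$ and $\mms_{\alpha'}$ would, via \autoref{Le:Cyclical mon} applied to the $\preceq_u$-structure, violate either $l$-cyclical monotonicity or the equality $u(y)-u(x)=l(x,y)$ characterizing the rays. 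Timelike $p$-essential nonbranching, together with \autoref{Th:All negligible} and the disintegration of $\bdpi$ via the quotient map $\Quot$, then yields a disintegration $\bdpi = \int_Q \bdpi_\alpha\,\rmd\tilde\q(\alpha)$ where each $\bdpi_\alpha$ is supported on curves parametrized by $\ray(\alpha,\cdot)$ and transports $\delta^{-1}\bar\Leb^1\mres[a_0(\alpha), a_0(\alpha)+\delta]$ to $\delta^{-1}\bar\Leb^1\mres[a_1(\alpha), a_1(\alpha)+\delta]$ (after  normalization).

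Applying the pathwise inequality \eqref{Eq:PW EST} from \autoref{Th:Pathwise} (in the $t$-uniform form from \autoref{Re:LIPSCHITZ}) to $\bdpi$ then gives, for $\q$-a.e. $\alpha$, an inequality of the form
\begin{align*}
\bar{h}_\alpha(\tau_\delta(s))^{-1/N} \geq \sigma_{k_\alpha^-/N}^{(1-s)}(\theta_\delta(\alpha))\,\bar h_\alpha(t_0^\delta)^{-1/N} + \sigma_{k_\alpha^+/N}^{(s)}(\theta_\delta(\alpha))\,\bar h_\alpha(t_1^\delta)^{-1/N}
\end{align*}
for $\bar\Leb^1$-a.e.\ choice of points $t_0^\delta \in [a_0(\alpha), a_0(\alpha)+\delta]$, $t_1^\delta \in [a_1(\alpha), a_1(\alpha)+\delta]$, with $\theta_\delta = t_1^\delta - t_0^\delta$ and $\tau_\delta(s) = (1-s)t_0^\delta + st_1^\delta$. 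Here the factor $1/N$ rather than $1/(N-1)$ at this stage is because of the way the tube measures are built; applying the standard reduction of an $N$-dimensional $\CD$-density inequality along a one-dimensional family to an $(N-1)$-dimensional one (as in \cite[Lem.~A.3]{cavalletti2021} and its variable-$k$ adaptation \cite[Cor.~3.13]{ketterer2017}) upgrades the exponent and yields the stated inequality with $1/(N-1)$ in place of $1/N$ at $\bar\Leb^1$-a.e.\ pair $t_0<t_1$ in $\bar\mms_\alpha$. Letting $\delta\to 0$ via Lebesgue differentiation and varying $A$ through a countable algebra generating the Borel $\sigma$-field of $Q$ removes the $\delta$-dependence and produces a common $\q$-null exceptional set, independent of $t_0, t_1$.

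Finally, the regularity assertions (local Lipschitz continuity, strict positivity on the open interval $\bar\mms_\alpha$, continuous extension to its closure) are consequences of the obtained inequality itself: any function satisfying the displayed distorted-concavity condition on a real interval admits such a representative by the variable-curvature analog of \cite[Lem.~A.3]{cavalletti2021} (cf. \cite[Cor.~3.13]{ketterer2017}); strict positivity on the interior then follows because a zero of $\bar h_\alpha$ would propagate via the inequality and force $\bar h_\alpha \equiv 0$ on a half-interval, contradicting the fact that $\meas_\alpha$ charges $\mms_\alpha$ nontrivially by \autoref{Th:Needle}. The main obstacle is the confinement step: showing that the optimal transport between tube measures factors through the quotient $Q$; this is where the $1$-steepness of $u$ combined with timelike $p$-essential nonbranching and  \autoref{Cor:Ray map} must be combined carefully to exclude cross-needle mass motion, generalizing the simpler signed-distance argument of \cite[Sec.~4]{cavalletti2020}.
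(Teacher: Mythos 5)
Your outline reproduces the broad arc of the argument (tube measures, pathwise inequality from \autoref{Th:Pathwise}, one-dimensional density estimate, globalization), but it contains a genuine gap in what is in fact the crux of the theorem: how the exponent $1/N$ and the potential $k/N$ get upgraded to $1/(N-1)$ and $k/(N-1)$.

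With equal-width $\delta$-tubes as you propose, the pathwise inequality \eqref{Eq:PW EST} gives, after substituting the needle densities $\rho_i^{-1}\circ\ray(\alpha,\cdot) = \delta\,\bar{h}_\alpha$,
\begin{align*}
\bar{h}_\alpha((1-s)t_0 + st_1)^{1/N} \geq \sigma_{k_\alpha^-/N}^{(1-s)}(t_1-t_0)\,\bar{h}_\alpha(t_0)^{1/N} + \sigma_{k_\alpha^+/N}^{(s)}(t_1-t_0)\,\bar{h}_\alpha(t_1)^{1/N},
\end{align*}
which is strictly weaker than the claim. Your proposal then says this upgrades to $1/(N-1)$ by ``the standard reduction'' via \cite[Lem.~A.3]{cavalletti2021} and \cite[Cor.~3.13]{ketterer2017}. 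That is not what those results do: they relate $\CD$-density inequalities to second-order differential inequalities with the same parameters, and by themselves furnish no improvement from $N$ to $N-1$. The actual mechanism in the paper is two-fold and hinges on the flexibility you discard: (i) the tube widths $L_0 \neq L_1$ are left as free parameters (not both set to $\delta$), so that the pathwise inequality reads $L_s^{1/N}\bar{h}_\alpha(\cdot)^{1/N} \geq \sigma^{(1-s)}L_0^{1/N}\bar{h}_\alpha(t_0)^{1/N} + \sigma^{(s)}L_1^{1/N}\bar{h}_\alpha(t_1)^{1/N}$; (ii) \autoref{Le:Local comp} is used to replace $\sigma_{K/N}$ by $\tau_{K',N}=t^{1/N}\sigma_{K'/(N-1)}^{1-1/N}$ locally at the cost of a small drop $K\to K'$; the factor $t^{1/N}$ is then absorbed by setting $s=1/2$, and the choices $L_i \propto \bar{h}_\alpha(t_i)^{1/(N-1)}$ cancel the $L_i^{1/N}$ factors to yield the desired $\sigma_{K'/(N-1)}$-inequality for $\bar{h}_\alpha^{1/(N-1)}$; after that the midpoint inequality globalizes and $\varepsilon\to 0$. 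Without this optimization over $L_0,L_1$, your proof establishes only the $N$ version.

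A secondary point: you treat the confinement of the optimal plan to single needles as ``the key (and most delicate) step'' and argue it from cyclical monotonicity plus $1$-steepness; this is more work than needed, and your sketch of it is incomplete (you do not actually exhibit a contradiction for a cross-needle transport). The paper sidesteps the issue entirely by writing down the candidate interpolation $\mu_s$ supported on the needles, verifying it is an $\smash{\ell_p}$-geodesic via \autoref{Le:Cyclical mon}, and invoking the uniqueness of displacement $\smash{\ell_p}$-geodesics (\autoref{Cor:Lifting nonbr} under \autoref{Th:Uniqueness geos}); the unique plan is then automatically the one along needles, so no direct confinement lemma is required. Your route could be made to work but would demand a separate argument that is not supplied here, whereas the paper's is both shorter and self-contained.
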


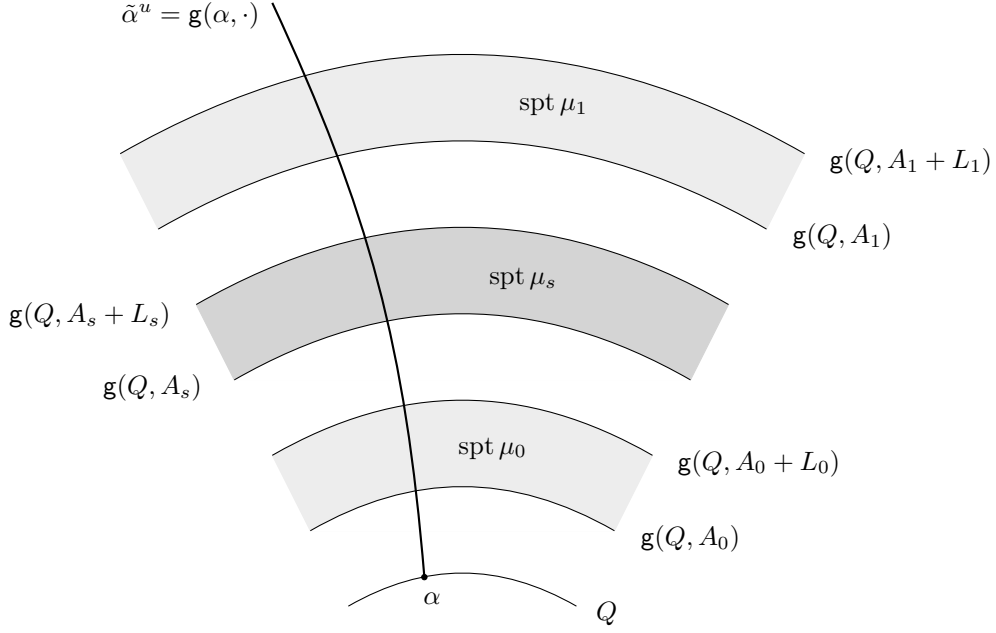
\begin{figure}
\centering
\begin{tikzpicture}
\draw (0,0) to[bend left] (3,0);
\draw[black!7, fill=black!7] plot coordinates {(-1,2) (4,2)  (3.5,1) (-0.5,1) (-1,2)};
\draw[fill=black!7] (-1,2) to[bend left] (4,2);
\draw[white, thick] (-0.5,1) -- (3.5,1);
\draw[fill=white] (-0.5,1) to[bend left] (3.5,1);
\draw[black!7, fill=black!7] plot coordinates {(-3,6) (6,6) (5.5,5) (-2.5,5) (-3,6)};
\draw[white, thick] (5.5,5) -- (-2.5,5);
\draw[fill=black!7] (-3,6) to[bend left] (6,6);
\draw[fill=white] (-2.5,5) to[bend left] (5.5,5);
\draw[black!17, fill=black!17] plot coordinates {(-2,4) (5,4)  (4.5,3) (-1.5,3) (-2,4)};
\draw[white, thick] (-1.5,3) -- (4.5,3);
\draw[fill=black!17] (-2,4) to[bend left] (5,4);
\draw[fill=white] (-1.5,3) to[bend left] (4.5,3);
\node at (3.4,-0.1) {$Q$};
\node at (4.5,0.9) {$\ray(Q,A_0)$};
\node at (5.4,1.9) {$\ray(Q,A_0 + L_0)$};
\node at (-2.575,2.9) {$\ray(Q,A_s)$};
\node at (-3.4,3.85) {$\ray(Q,A_s+ L_s)$};
\node at (6.5,4.9) {$\ray(Q,A_1)$};
\node at (7.4,5.9) {$\ray(Q,A_1+L_1)$};
\node at (1.9,2.1) {$\supp\mu_0$};
\node at (2.3,4.35) {$\supp\mu_s$};
\node at (2.7, 6.65) {$\supp\mu_1$};
\draw[thick] (1,0.385) to[bend right=10] (-1,8);
\draw[fill=black] (1,0.385) circle(.1em);
\node at (1.1,0.1) {$\alpha$};
\node at (-2.075,7.75) {$\smash{\tilde{\alpha}^u = \ray(\alpha,\cdot)}$};
\end{tikzpicture}
\caption{An illustration of the proof of \autoref{Th:Localization TCD}.}\label{Fig:Localization}
\end{figure}

\begin{proof} Roughly speaking, we proceed as follows. We cover $E$ with precompact, $l$-geodesically convex sets. On each of these sets, we will displacement interpolate mass distributions from $u^{-1}([A_0,A_0+L_0])$ to $u^{-1}([A_1,A_1+L_1])$ for suitable real numbers $A_0$, $A_1$, $L_0$, and $L_1$ along the $l$-geodesic needles. The $\smash{\TCD_p^e(k,N)}$ property along this transport reduces to a pathwise inequality for the respective densities by \autoref{Th:Pathwise}, and some clever choices of the involved constants leads to the desired estimate. The latter both globalizes along the entire transport along the needles and throughout the cover of $E$; see also \autoref{Re:Connection CD}. 

We first suppose $E$ is precompact (and understand all objects relative to $E$). By \eqref{Eq:kappan}, we may and will then first assume $k$ is continuous. In turn, up to covering the set $E$ with small  chronological diamonds, given any $\varepsilon > 0$ we may and will thus assume the existence of $K\in \R$ with $K-\varepsilon\leq k\leq K+\varepsilon$ on $E$. 

By \autoref{Pr:Quotient map}, $Q$ is the disjoint union of countably many Suslin subsets $(Q_i)_{i\in\N}$ of $\Tr_u$ such that  $\smash{Q_i\subset u^{-1}(a_i)}$ for every $i\in\N$, where $(a_i)_{i\in\N}$ are rational numbers. Evidently, it suffices to show the claim for $\q$-a.e.~$\alpha\in Q_i$, where $i\in\N$ is arbitrary. In the sequel, we identify $Q_i$ with $\ray(Q_i,0)$. Since $\Dom(\ray)(\alpha)$ is an open convex interval in $\R$, up to partitioning $Q_i$ into countably many pieces and arguing separately we  assume the  existence of $b_0,b_1\in\R$ with $\smash{(b_0,b_1)\subset \bar{\mms}_\alpha}$ for every $\alpha\in Q_i$.  To ease notation,  let $a_i = 0$, $b_0 < 0 < b_1$, $\q[Q_i] > 0$, and $Q_i = Q$.

Let $a_0 < A_0 < A_1 < a_1$ and $L_0,L_1 > 0$ satisfy $A_0 + L_0 < A_1$ and $A_1 + L_1 < a_1$. Define $\smash{\mu_0,\mu_1\in\Prob(\mms)}$ by
\begin{align*}
\mu_0 &= \int_Q \ray(\alpha,\cdot)_\push\big[L_0^{-1}\,\Leb^1\mres [A_0, A_0 + L_0]\big]\d\q(\alpha),\\
\mu_1 &= \int_Q \ray(\alpha,\cdot)_\push\big[L_1^{-1}\,\Leb^1\mres [A_1, A_1 + L_1]\big]\d\q(\alpha).
\end{align*}
By \autoref{Pr:Abs cont needles}, $\mu_0$ and $\mu_1$ are $\meas$-absolutely continuous, and for $\q$-a.e.~$\alpha\in Q$  their respective $\meas$-densities satisfy the formulas
\begin{align}\label{Eq:Formula Density 1}
\begin{split}
\rho_0^{-1}\circ\ray(\alpha,\cdot) &= L_0\,\bar{h}_\alpha \quad  \textnormal{on } [A_0,A_0+L_0],\\
\rho_1^{-1}\circ\ray(\alpha,\cdot) &= L_1\,\bar{h}_\alpha \quad  \textnormal{on }[A_1,A_1+L_1].
\end{split}
\end{align}
Our hypothesis on $E$ makes $\mu_0$ and $\mu_1$ compactly supported. By construction and since  $\ray(\alpha,\cdot)$ parametrizes $l$-geodesics passing through $\alpha = \ray(\alpha,\cdot)$,
\begin{align*}
\supp\mu_0\times\supp\mu_1\subset\{l\geq A_1 - A_0 - L_0\} \subset \{l>0\}.
\end{align*}

Given any $0\leq s\leq 1$, we define $\mu_s\in \Prob(\mms)$ by
\begin{align*}
\mu_s &:= \int_Q\ray(\alpha,\cdot)_\push\big[L_s^{-1}\,\Leb^1\mres[A_s, A_s+L_s]\big]\d\q(\alpha),
\end{align*}
where
\begin{align*}
A_s &:= (1-s)\,A_0 + s\,A_1,\\
L_s &:= (1-s)\,L_0 + s\,L_1.
\end{align*}
As for $\mu_0$ and $\mu_1$, $\mu_s$ has compact support and is $\meas$-absolutely continuous with $\meas$-density given for $\q$-a.e.~$\alpha\in Q$ by the formula
\begin{align}\label{Eq:Formula Density 2}
\rho_s^{-1}\circ \ray(\alpha,\cdot) = L_s\,\bar{h}_\alpha\quad\textnormal{on }[A_s,A_s+L_s].
\end{align}
Using \autoref{Le:Cyclical mon}, it is not hard to verify $(\mu_s)_{s\in[0,1]}$ is an $\smash{\ell_p}$-geodesic from $\mu_0$ to $\mu_1$. By \autoref{Cor:Lifting nonbr}, it necessarily coincides with the unique displacement $\smash{\ell_p}$-geodesic from $\mu_0$ to $\mu_1$ --- which is also the one provided by \autoref{Th:Pathwise}. In particular, by \autoref{Re:LIPSCHITZ} we have local Lipschitz continuity of  $\smash{\bar{h}_\alpha}$  for $\q$-a.e.~$\alpha\in Q$.

From the latter pathwise version of $\smash{\TCD_p^e(k,N)}$ and our assumption on $k$ on $E$, \eqref{Eq:Formula Density 1}, and \eqref{Eq:Formula Density 2},  we thus get the following estimate for $\q$-a.e.~$\alpha\in Q$. For every $A_0 \leq t_0 \leq A_0 + L_0$ and every $A_1 \leq t_1\leq A_1+L_1$ represented by $t_0 = A_0 + \lambda\,L_0$ and $t_1 = A_1 + \lambda\,L_1$ for some $0\leq \lambda\leq 1$, and for every $0\leq s\leq 1$,
\begin{align*}
L_s^{1/N}\,\bar{h}_\alpha((1-s)\,t_0 + s\,t_1)^{1/N} &\geq \sigma_{k_\alpha^-/N}^{(1-s)}(t_1-t_0)\, L_0^{1/N}\,\bar{h}_\alpha(t_0)^{1/N}\\
&\qquad\qquad + \sigma_{k_\alpha^+/N}^{(s)}(t_1-t_0)\,L_1^{1/N}\,\bar{h}_\alpha(t_1)^{1/N}\\
&\geq \sigma_{(K-\varepsilon)/N}^{(1-s)}(t_1-t_0)\,L_0^{1/N}\,\bar{h}_\alpha(t_0)^{1/N}\\
&\qquad\qquad + \sigma_{(K-\varepsilon)/N}^{(s)}(t_1-t_0)\,L_1^{1/N}\,\bar{h}_\alpha(t_1)^{1/N}.
\end{align*}
Setting $s=1/2$ in the previous inequality yields
\begin{align*}
&L_{1/2}^{1/N}\,\bar{h}_\alpha(A_{1/2} + \lambda\,L_{1/2})^{1/N}\\
&\qquad\qquad \geq \sigma_{(K-\varepsilon)/N}^{(1/2)}(A_1-A_0 + \lambda\,\vert L_1 - L_0\vert)\,\big[L_0^{1/N}\,\bar{h}_\alpha(A_0 + \lambda\,L_0)^{1/N}\\
&\qquad\qquad\qquad\qquad  + L_1^{1/N}\,\bar{h}_\alpha(A_1+\lambda\,L_1)^{1/N}\big].
\end{align*}
By the continuity of $\smash{\bar{h}_\alpha}$, we can set $\lambda = 0$ and obtain
\begin{align}\label{Eq:INEQU!}
\begin{split}
L_{1/2}^{1/N}\,\bar{h}_{\alpha}(t_{1/2})^{1/N} &\geq \sigma_{(K-\varepsilon)/N}^{(1/2)}(t_1-t_0)\,\big[L_0^{1/N}\,\bar{h}_\alpha(t_0)^{1/N}  + L_1^{1/N}\,\bar{h}_\alpha(t_1)^{1/N}\big],
\end{split}
\end{align}
where $2\,t_{1/2} := t_0 + t_1$. The $\q$-conegligible set satisfying the previous inequality depends on $t_0$, $t_1$, $L_0$, and $L_1$. However, again by continuity of $\smash{\bar{h}_\alpha}$ for $\q$-a.e.~$\alpha\in Q$, we deduce the following. There exists a $\q$-conegligible subset of $Q$ on which \eqref{Eq:INEQU!} holds for every $a_0 < t_0 < t_1 < a_1$ and every $L_0,L_1 > 0$ with $t_0 + L_0 < t_1$ and $t_1+L_1 < t_1$. From now on, we fix $\alpha \in Q$ which belongs to this set. Without loss of generality, we suppose $\smash{\bar{h}_\alpha(t_0)> 0}$ or $\smash{\bar{h}_\alpha(t_1)>0}$.

By \autoref{Le:Local comp}, there exists $\theta > 0$ such that  if $t_1 - t_0 < \theta$, then
\begin{align*}
L_{1/2}^{1/N}\,\bar{h}_\alpha(t_{1/2})^{1/N} &\geq \tau_{K-2\varepsilon,N}^{(1/2)}(t_1-t_0)\,\big[L_0^{1/N}\,\bar{h}_\alpha(t_0)^{1/N}  + L_1^{1/N}\,\bar{h}_\alpha(t_1)^{1/N}\big].
\end{align*}
By the definition \eqref{Eq:Tau inter}, this is equivalent to
\begin{align*}
(L_0 + L_1)^{1/N}\,\bar{h}_\alpha(t_{1/2})^{1/N} &\geq \sigma_{(K-2\varepsilon)/(N-1)}^{(1/2)}(t_1-t_0)^{1-1/N}\,\big[L_0^{1/N}\,\bar{h}_\alpha(t_0)^{1/N} \\
&\qquad\qquad + L_1^{1/N}\,\bar{h}_\alpha(t_1)^{1/N}\big].
\end{align*}
For sufficiently small $L>0$, the optimal choices
\begin{align*}
L_0 &= L\,\bar{h}_\alpha(t_0)^{1/(N-1)}\,\big[\bar{h}_\alpha(t_0)^{1/(N-1)} + \bar{h}_\alpha(t_1)^{1/(N-1)}\big]^{-1},\\
L_1 &= L\,\bar{h}_\alpha(t_1)^{1/(N-1)}\,\big[\bar{h}_\alpha(t_0)^{1/(N-1)} + \bar{h}_\alpha(t_1)^{1/(N-1)}\big]^{-1}
\end{align*}
easily lead to
\begin{align*}
\bar{h}_\alpha(t_{1/2})^{1/(N-1)}\geq \sigma_{(K-2\varepsilon)/(N-1)}^{(1/2)}(t_1-t_0)\,\big[\bar{h}_\alpha(t_0)^{1/(N-1)} + \bar{h}_\alpha(t_1)^{1/(N-1)}\big].
\end{align*}
Recall this inequality holds if $t_1 - t_0 < \theta$, where $\theta > 0$ depends on $K$ and $\varepsilon$ through \autoref{Le:Local comp}. Yet, this midpoint inequality globalizes and holds along all interior points of the geodesic which connects fixed yet arbitrary endpoints \cite[Lem.~5.1, Thm.~5.2]{cavstu}; consequently, for every $a_0 < t_0 < t_1 < a_1$ and every $0\leq s\leq 1$,
\begin{align*}
\bar{h}_\alpha((1-s)\,t_0 + s\,t_1)^{1/(N-1)} &\geq \sigma_{(K-2\varepsilon)/(N-1)}^{(1-s)}(t_1-t_0)\,\bar{h}_\alpha(t_0)^{1/(N-1)}\\
&\qquad\qquad + \sigma_{(K-2\varepsilon)/(N-1)}^{(s)}(t_1-t_0)\,\bar{h}_\alpha(t_1)^{1/(N-1)}\\
&\geq \sigma_{(k_\alpha -3\varepsilon)^-/(N-1)}^{(1-s)}(t_1-t_0)\,\bar{h}_\alpha(t_0)^{1/(N-1)}\\
&\qquad\qquad \sigma_{(k_\alpha - 3\varepsilon)^+/(N-1)}^{(s)}(t_1-t_0)\,\bar{h}_\alpha(t_1)^{1/(N-1)}.
\end{align*}
By the arbitrariness of $\varepsilon$ and \autoref{Le:Properties}, we conclude
\begin{align*}
\bar{h}_\alpha((1-s)\,t_0 + s\,t_1)^{1/(N-1)} &\geq \sigma_{k_\alpha^-/(N-1)}^{(1-s)}(t_1-t_0)\,\bar{h}_\alpha(t_0)^{1/(N-1)}\\
&\qquad\qquad + \sigma_{k_\alpha^+/(N-1)}^{(s)}(t_1-t_0)\,\bar{h}_\alpha(t_1)^{1/(N-1)}.
\end{align*}
By \autoref{Le:Properties} again, we remove the continuity hypothesis on $k$ on $E$.

Finally, we outline how to drop the initial assumption of precompactness of $E$. Owing to Polishness of $\Top$, we cover $E$ with countably many small chronological diamonds. The intersection of each of these sets with $E$ is precompact and $l$-geodesically convex. By the above argumentation, the claimed estimate holds on each of these small sets. We  then apply the local-to-global property of $\smash{\CD(k_\alpha,N)}$ densities --- after an evident variable adaptation of \cite[Lem.~A.3]{cavalletti2021}, cf.~\cite{braunc2} plus \autoref{Re:Connection CD} below --- to extend  the statement to all of $E$.
\end{proof}

\begin{remark}[The case $N=1$]\label{Re:Limit constant} In the limit case $N=1$, one can follow a similar argument as above and show that for $\q$-a.e.~$\alpha\in Q$, the $\smash{\bar{\Leb}_1^\alpha}$-density $\smash{\bar{h}_\alpha}$ is constant. Compare with the proof of \cite[Thm.~4.2]{cavalletti2017}.
\end{remark}

We  summarize \autoref{Th:Localization TCD} by saying $\q$ is a  \textit{$\CD(k,N)$ disintegration}. This is justified by the following remark.

\begin{remark}[Connection to $\smash{\CD(k_\alpha,N)}$ metric measure spaces]\label{Re:Connection CD} The conclusion of the previous theorem is equivalent to say $\smash{\bar{h}_\alpha}$ is a \emph{$\smash{\CD(k_\alpha,N)}$ density} on $\smash{\bar{\mms}_\alpha}$ for $\q$-a.e.~$\alpha\in Q$. In particular, the triple $\smash{(\bar{\mms}_\alpha, \vert\cdot-\cdot\vert, \bar{\meas}_\alpha)}$ forms a $\smash{\CD(k_\alpha,N)}$ metric measure space after Ketterer \cite[Def.~4.4]{ketterer2017}. Both statements follow from evident variable adaptations of \cite[Def.~A.1, Thm.~A.2]{cavalletti2021} further discussed in \cite{braunc2}.
\end{remark}

The argument for the following consequence of \autoref{Th:Localization TCD} is standard. 
Recalling \eqref{Real bar}, 
given any $\alpha\in Q$ and any $\smash{x_0,x_1\in\bar{\mms}_\alpha}$ with $x_0< x_1$, we consider the ($\alpha$-dependent) functions $\smash{\kappa_{x_0,x_1}^\pm\colon [0, \vert x_1-x_0\vert]\to \R}$ defined by
\begin{align}\label{Eq:BRRRP}
\begin{split}
\kappa_{x_0,x_1}^+(s\,\vert x_1-x_0\vert) &:= \bar{k}_\alpha((1-s)\,x_0 + s\,x_1),\\
\kappa_{x_0,x_1}^-(s\,\vert x_1-x_0\vert) &:= \bar{k}_\alpha((1-s)\,x_1 + s\,x_0).
\end{split}
\end{align}

\begin{corollary}[Comparison inequality]\label{Cor:Comparison} Let $\q$ be a $\CD(k,N)$ disintegration. Let $\alpha\in Q$ such that $\smash{\bar{h}_\alpha}$ satisfies the conclusion of \autoref{Th:Localization TCD}. If $a < t_0 < t_1 < b$ all belong to $\smash{\bar{\mms}_\alpha}$, we have
\begin{align*}
\Big[\frac{\sin_{\kappa_{t_0,b}^-}(b-t_1)}{\sin_{\kappa_{t_0,b}^-}(b-t_0)}\Big]^{N-1} \leq \frac{\bar{h}_\alpha(t_1)}{\bar{h}_\alpha(t_0)} \leq \Big[\frac{\sin_{\kappa_{a,t_1}^+}(t_1-a)}{\sin_{\kappa_{a,t_1}^+}(t_0-a)}\Big]^{N-1}.
\end{align*}
\end{corollary}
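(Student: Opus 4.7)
The plan is to derive both inequalities as direct consequences of the distorted concavity estimate from \autoref{Th:Localization TCD}, applied on two different endpoint intervals and with the parameter $s\in[0,1]$ chosen so that one of the two interior evaluation points coincides with $t_0$ or $t_1$, respectively. Since $\bar{h}_\alpha$ is strictly positive on $\bar{\mms}_\alpha$ by \autoref{Th:Localization TCD}, all rearrangements below are legitimate, and the nonnegativity of the generalized sine functions on the interval of definition will allow us to drop one of the two terms on the right-hand side of the concavity estimate in each case.

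For the upper bound, apply \autoref{Th:Localization TCD} to the endpoints $x_0 := a$ and $x_1 := t_1$, both in $\bar{\mms}_\alpha$, and choose $s := (t_0-a)/(t_1-a)\in(0,1)$ so that $(1-s)\,a + s\,t_1 = t_0$. This yields
\begin{align*}
\bar{h}_\alpha(t_0)^{1/(N-1)} &\geq \sigma_{k_\alpha^-/(N-1)}^{(1-s)}(t_1-a)\,\bar{h}_\alpha(a)^{1/(N-1)} \\
&\qquad + \sigma_{k_\alpha^+/(N-1)}^{(s)}(t_1-a)\,\bar{h}_\alpha(t_1)^{1/(N-1)}.
\end{align*}
The potential $k_\alpha^+$ associated with the segment from $a$ to $t_1$ (cf.~\eqref{Eq:kgamma defn} and the identification in \autoref{Re:Ident halpha}) coincides with the function $\kappa_{a,t_1}^+$ from \eqref{Eq:BRRRP}. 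By \autoref{Le:Properties} and the nondegeneracy of the distortion coefficient guaranteed by the finiteness of $\bar{h}_\alpha(t_0)$, we may express
\begin{align*}
\sigma_{k_\alpha^+/(N-1)}^{(s)}(t_1-a) = \frac{\sin_{\kappa_{a,t_1}^+/(N-1)}(t_0-a)}{\sin_{\kappa_{a,t_1}^+/(N-1)}(t_1-a)}.
\end{align*}
Dropping the nonnegative first term and rearranging leads to
\begin{align*}
\frac{\bar{h}_\alpha(t_1)}{\bar{h}_\alpha(t_0)} \leq \Big[\frac{\sin_{\kappa_{a,t_1}^+}(t_1-a)}{\sin_{\kappa_{a,t_1}^+}(t_0-a)}\Big]^{N-1},
\end{align*}
with the convention on $\sin_{\kappa_{a,t_1}^+}$ matching the one adopted in the corollary's statement.

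For the lower bound, we argue symmetrically with endpoints $x_0 := t_0$ and $x_1 := b$, choosing $s := (t_1-t_0)/(b-t_0)\in(0,1)$ so that $(1-s)\,t_0 + s\,b = t_1$. \autoref{Th:Localization TCD} then gives
\begin{align*}
\bar{h}_\alpha(t_1)^{1/(N-1)} &\geq \sigma_{k_\alpha^-/(N-1)}^{(1-s)}(b-t_0)\,\bar{h}_\alpha(t_0)^{1/(N-1)} \\
&\qquad + \sigma_{k_\alpha^+/(N-1)}^{(s)}(b-t_0)\,\bar{h}_\alpha(b)^{1/(N-1)},
\end{align*}
where now the backward potential along the segment from $t_0$ to $b$ coincides with $\kappa_{t_0,b}^-$ from \eqref{Eq:BRRRP}. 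Dropping the nonnegative second term and identifying
\begin{align*}
\sigma_{k_\alpha^-/(N-1)}^{(1-s)}(b-t_0) = \frac{\sin_{\kappa_{t_0,b}^-/(N-1)}(b-t_1)}{\sin_{\kappa_{t_0,b}^-/(N-1)}(b-t_0)}
\end{align*}
yields the lower bound upon taking $(N-1)$-th powers.

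The only conceptual subtlety—and it is minor—is the verification that the orientation conventions for $k_\alpha^\pm$ along an $l$-geodesic segment specialize correctly to the parametrized real intervals $[0,t_1-a]$ and $[0,b-t_0]$ in the way prescribed by \eqref{Eq:BRRRP}; this is an immediate unwinding of \eqref{Eq:kgamma defn} together with \autoref{Re:Ident halpha}, since $\ray(\alpha,\cdot)$ is an order isometry. No limit argument or globalization is needed, as the estimate from \autoref{Th:Localization TCD} already holds for all $t_0<t_1$ in $\bar{\mms}_\alpha$ and arbitrary $s\in[0,1]$.
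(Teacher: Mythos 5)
Your proof is correct and is exactly the standard argument the paper alludes to when it states ``the argument \dots\ is standard'': apply the distorted concavity estimate from \autoref{Th:Localization TCD} on the two nested intervals $[a,t_1]$ and $[t_0,b]$, position the interpolation parameter $s$ so that the intermediate evaluation point is $t_0$ (respectively $t_1$), discard the nonnegative term involving $\bar{h}_\alpha(a)$ (respectively $\bar{h}_\alpha(b)$), and invoke the nondegeneracy/explicit-quotient form of the distortion coefficients from \autoref{Le:Properties}. The identification of the theorem's $k_\alpha^\pm$ along a subsegment with the explicit potentials $\kappa_{x_0,x_1}^\pm$ of \eqref{Eq:BRRRP} is indeed immediate from \eqref{Eq:kgamma defn} and the order-isometry property of the ray map. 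One small point worth making explicit rather than hiding behind ``convention'': the $\SIN$ subscripts in the corollary's display omit the factor $1/(N-1)$, while your derivation (correctly) produces $\SIN_{\kappa_{a,t_1}^+/(N-1)}$ and $\SIN_{\kappa_{t_0,b}^-/(N-1)}$; the later use in the proof of \autoref{Pr:Segment} confirms the $/(N-1)$ is intended, so this is a typographical omission in the statement, not a discrepancy in your proof. Also, a tiny imprecision: nondegeneracy of $\sigma_{\kappa_{a,t_1}^+/(N-1)}^{(s)}(t_1-a)$ follows from the conjunction of finiteness of $\bar{h}_\alpha(t_0)$ \emph{and} strict positivity of $\bar{h}_\alpha(t_1)$, not from the former alone.
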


\section{Applications}\label{Ch:Applications}

Now we apply the localization machinery from the previous chapter to sharpen the geometric inequalities obtained in \autoref{Sec:Geometric inequ} in the following setting. 

\begin{assumption}[Third standing assumption]\label{Ass:TCD} $\scrM$ is a timelike $p$-essentially nonbranching $\smash{\TCD_p^e(k,N)}$ metric measure spacetime   according to \autoref{Def:Metric measure spacetime} with $\supp\meas=\mms$.
\end{assumption}

We  need \emph{not} assume the existence of a weight function $w$ according to \autoref{Sub:Framework} a priori. For an appropriate choice of $E$ and $u$, such a weight is always provided by default thanks to (the proof of) \autoref{Th:Mean zero}. In particular, the Polish background topology $\Top$ is not assumed to be proper.

\subsection{Localization by optimal transport} We apply the disintegration technique developed in \autoref{Ch:Localization} to the localization with respect to  a mean zero function $f$ (roughly speaking, by localizing with respect to a Kantorovich potential $u$ for the $\ell_1$-optimal transport problem from $f_+$ to $f_-$). The additional benefit is that the mean zero property of $f$ passes over to a.e.~transport ray; see also \cite{Akdemir24+}.
Inter alia, this will be used to deduce the sharp Brunn--Min\-kowski inequality in \autoref{Th:Sharp BMink} from its one-dimensional counterpart. 
 A related result for constant $k$ was recently obtained in \cite[Thm.~5.2]{cm++}.

\begin{theorem}[Localization by mean zero functions]\label{Th:Mean zero} Assume \autoref{Ass:TCD}. Let a nontrivial function $f\in\Ell^1(\mms,\meas)$ vanish outside a compact subset of $\mms$, suppose
\begin{align*}
\int_\mms f\d\meas =0,
\end{align*}
and assume the probability measures $\smash{\mu_0,\mu_1\in \Prob_\comp^\ac(\mms,\meas)}$ defined by
\begin{align*}
\mu_0 &:= \big\Vert f_+\big\Vert^{-1}_{\Ell^1(\mms,\meas)}\,f_+\,\meas,\\
\mu_1 &:= \big\Vert f_-\big\Vert^{-1}_{\Ell^1(\mms,\meas)}\,f_-\,\meas
\end{align*}
satisfy the chronology condition
\begin{align}\label{Eq:Chron cond}
\supp\mu_0\times\supp\mu_1 \subset\{l>0\}.
\end{align}
Then there exists a Suslin set $\Tr_u\subset\mms$ such that
\begin{enumerate}[label=\textnormal{\textcolor{black}{(}\roman*\textcolor{black}{)}}]
\item\label{Label:0} $\meas[\Tr_u]>0$,
\item\label{Label:1} $\meas\mres \Tr_u$ admits a $\CD(k,N)$ disintegration $\q$ according to \autoref{Sub:Local},
\item\label{Label:2}  $f$ vanishes identically $\meas$-a.e.~on $\smash{\Tr_u^\sfc :=\mms\setminus \Tr_u}$, and
\item\label{Label:3} $f$ has mean zero with respect to $\meas_\alpha$ for  $\q$-a.e.~$\alpha\in Q$, i.e.
\begin{align*}
\int_{\mms_\alpha} f\d\meas_\alpha =0.
\end{align*}
\end{enumerate}
\end{theorem}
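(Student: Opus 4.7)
My plan is to build $\Tr_u$ by localizing $\scrM$ via the Kantorovich potential of the $\ell_1$-optimal transport problem from $\mu_0$ to $\mu_1$, then plug it into the needle machinery. First, I would take the compact, $l$-geodesically convex Borel set $E := J(\supp\mu_0,\supp\mu_1)$ (compact by \autoref{Cor:K-GH}, $l$-geodesically convex by transitivity of $\leq$), which contains $\supp\mu_0\cup\supp\mu_1$ thanks to \eqref{Eq:Chron cond}. The Kantorovich--Rubinstein-type duality for $\ell_1$ (to be proved as \autoref{Th:Rubinstein} in \autoref{Ch:Rubinstein}) then supplies a Borel potential $u\colon E \to \R$ satisfying the reverse $l$-Lipschitz condition \eqref{Eq:Gleichung}, together with the first-order optimality condition that every $\ell_1$-optimal coupling $\pi\in\Pi_\leq(\mu_0,\mu_1)$ is concentrated on $\{(x,y) \in E^2 : u(y) - u(x) = l_+(x,y)\}$. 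Compactness of $E$ and the Radon property of $\meas$ let me choose the weight as a normalization of $\mathbbm{1}_E$, meeting the requirements of \autoref{Sub:Framework}.

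Second, I would invoke the toolkit of \autoref{Ch:Localization}. Under \autoref{Ass:TCD}, \autoref{Th:Optimal maps} and \autoref{Th:Local to global} verify the hypotheses of \autoref{Th:All negligible} for both $\scrM$ and its causal reversal, so the bad-point set $B\cup B^\leftarrow\cup a\cup b$ is $\meas$-negligible and $\meas\mres \Tr_u = \meas\mres \Tr_u^\End$. Then \autoref{Th:Needle}, \autoref{Pr:Abs cont needles}, and \autoref{Th:Localization TCD} yield the $\CD(k,N)$ disintegration $\meas\mres\Tr_u = \int_Q \meas_\alpha\d\q(\alpha)$ asserted in \ref{Label:1}. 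For \ref{Label:2}, existence of an $\ell_1$-optimal $\pi\in \Pi_\leq(\mu_0,\mu_1)$ follows from compactness of supports and \eqref{Eq:Chron cond}; for $\mu_0$-a.e.\ $x$ there is $y\in\supp\mu_1$ with $u(y)-u(x) = l_+(x,y) > 0$, forcing $x\preceq_u y$ and hence $x\in \Tr_u^\End$. The $\meas$-absolute continuity of $\mu_0$ combined with $\meas$-negligibility of bad points yields $\{f_+>0\}\subset\Tr_u$ up to $\meas$-null sets; a symmetric argument handles $\{f_->0\}$, giving \ref{Label:2}. Item \ref{Label:0} is then immediate since $f$ is nontrivial.

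The decisive point for \ref{Label:3} is that $\pi$ charges only ray-preserving pairs: $\pi$-a.e.\ $(x,y)$ lies in $E^2_{\preceq_u}\cap \Tr_u^2$, so $\Quot(x) = \Quot(y)$. Writing $g_\pm(\alpha) := \int_{\mms_\alpha}f_\pm\d\meas_\alpha$, disintegrating $\mu_0$ and $\mu_1$ along the partition into needles yields
\begin{align*}
\Quot_\push\mu_0 &= \big\|f_+\big\|_{\Ell^1(\mms,\meas)}^{-1}\,g_+\,\q,\\
\Quot_\push\mu_1 &= \big\|f_-\big\|_{\Ell^1(\mms,\meas)}^{-1}\,g_-\,\q,
\end{align*}
as probability measures on $Q$ with the indicated densities against $\q$. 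Because $\pi$ is supported on same-ray pairs, these two quotient measures must coincide, hence $g_+(\alpha)/\|f_+\|_{\Ell^1(\mms,\meas)} = g_-(\alpha)/\|f_-\|_{\Ell^1(\mms,\meas)}$ for $\q$-a.e.\ $\alpha\in Q$. The assumption $\int f\d\meas = 0$ gives $\|f_+\|_{\Ell^1(\mms,\meas)} = \|f_-\|_{\Ell^1(\mms,\meas)}$, which collapses the identity to $g_+(\alpha) = g_-(\alpha)$, i.e., $\int_{\mms_\alpha}f\d\meas_\alpha = 0$.

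The main obstacle is making the support condition on $\pi$ and the quotient identity $\Quot_\push\mu_0 = \Quot_\push\mu_1$ precise: the former rests on the $\ell_1$-Kantorovich duality of \autoref{Ch:Rubinstein}, while the latter requires a measurable selection argument disintegrating $\pi$ along the fibres of $\Quot$ and combining it with the essential uniqueness of disintegrations provided by \autoref{Th:Needle}.
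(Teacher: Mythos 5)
Your proof follows essentially the same route as the paper: obtain the $\ell_1$-Kantorovich potential $u$ on a compact causally convex $E$ via \autoref{Th:Rubinstein} and \autoref{Ex:Strong Kantorovich}, feed this into the needle machinery from \autoref{Ch:Localization} (using Theorems \ref{Th:Optimal maps}, \ref{Th:Local to global}, \ref{Th:All negligible} to verify its hypotheses), and exploit the fact that the $\ell_1$-optimal coupling charges only $\preceq_u$-related, hence same-ray, pairs in $\Tr_u$ to conclude $\Quot_\push\mu_0 = \Quot_\push\mu_1$, from which \ref{Label:3} follows since $\|f_+\|_{\Ell^1} = \|f_-\|_{\Ell^1}$. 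One detail to tighten: the framework of \autoref{Sub:Framework} requires a weight $w\colon\Tr_u^\End\to(0,\infty)$ with $\int_{\Tr_u^\End} w\d\meas = 1$, so the correct choice is $w := \meas[\Tr_u^\End]^{-1}\One_{\Tr_u^\End}$ rather than a normalization of $\One_E$ (which would integrate to $\meas[\Tr_u^\End]/\meas[E]\le 1$ over $\Tr_u^\End$); and for this to be well posed you must first establish $\meas[\Tr_u^\End]>0$ --- which you do obtain from $\mu_0[\Tr_u^\End]=1$ (via concentration of the optimal $\pi$ on $E_{\preceq_u}^2\setminus\diag(\mms^2)$) together with $\meas$-absolute continuity of $\mu_0$ --- \emph{before} invoking the needle decomposition, so the order of your argument should be rearranged accordingly.
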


\begin{proof} In order to localize we construct  $E\subset\mms$, a function $u\colon E \to \R$, and a weight function $\smash{w\colon \Tr_u^\End\to (0,\infty)}$ which match the framework of \autoref{Sub:Framework}. The hypothesized chronology condition implies the pair $(\mu_0,\mu_1)$ obeys  strong $l_+$-Kantorovich duality according to \autoref{Def:Strong Kantorovich} by \autoref{Ex:Strong Kantorovich}.  The  Kantorovich duality formula for $\smash{\ell_1}$ from \autoref{Th:Rubinstein}  implies the existence of 
\begin{itemize}
\item a compact, causally convex set $E\subset \mms$ containing $\supp\mu_0\cup\supp\mu_1$, and 
\item a Borel Kantorovich potential $u\colon E \to \R$ for the $\ell_1$-optimal transport from $\mu_0$ to $\mu_1$ satisfying the reverse $l$-Lipschitz condition \eqref{Eq:Gleichung}. 
\end{itemize}
Clearly, $E$ is  $l$-geodesically convex. It remains to prove the existence of a weight function according to \eqref{Eq:Weight}. We claim $\smash{\meas[\Tr_u^\End]>0}$ ---  since $\smash{\Tr_u^\End\subset E}$ is precompact,  by the Radon property of $\meas$ one can then simply take 
\begin{align*}
w := \meas[\Tr_u^\End]^{-1}\,\One_{\Tr_u^\End}.
\end{align*}
To this aim, fix  an $\ell_1$-optimal coupling $\pi\in\Pi(\mu_0,\mu_1)$  of $\mu_0$ and $\mu_1$, cf.~\autoref{Le:Existence}. By the chronology condition \eqref{Eq:Chron cond} and \autoref{Th:Rubinstein}, $\pi$ is concentrated on 
\begin{align*}
 E_{\prec_u}^2:= E_{\preceq_u}^2 \setminus \diag(\mms^2),
\end{align*}
where $\smash{E_{\preceq_u}^2}$ is from \eqref{Eq:GAMMA}. In particular $\smash{\pi[(\Tr_u^\End)^2]=1}$, and therefore 
\begin{align}\label{Eq:mu0mu1End}
\mu_0[\Tr_u^\End] = \mu_1[\Tr_u^\End]=1.
\end{align}
Since $\mu_0$ is $\meas$-absolutely continuous, this forces $\smash{\meas[\Tr_u^\End]>0}$. 

By \autoref{Th:All negligible}, we thus obtain $\meas[\Tr_u]>0$.  This proves \ref{Label:0}.

Thus, the localization procedure from \autoref{Ch:Localization} relative to $E$ and $u$ applies and provides us with an associated $\CD(k,N)$ disintegration $\q$ of $\meas\mres \Tr_u$, yielding  \ref{Label:1}.

Next, we show \ref{Label:2}. Assume to the contrary the existence of an $\meas$-measurable set $\smash{B \subset \Tr_u^\sfc}$ with $\meas[B]>0$ and $f\neq 0$ on $B$. We may and will assume $\mu_0[B]>0$; the proof in the case $\mu_1[B]>0$ is analogous. Moreover, by \autoref{Th:All negligible}, up to removing an $\meas$-negligible set we may and will assume $\smash{B\subset (\Tr_u^\End)^\sfc}$. Yet, by using \eqref{Eq:mu0mu1End} we arrive at the contradiction 
\begin{align*}
1=\mu_0[\mms] \geq \mu_0[\Tr_u^\End] + \mu_0[B]>1.
\end{align*}

Finally, we turn to \ref{Label:3}. We first claim
\begin{align*}
\Quot_\push\mu_0[A] = \Quot_\push\mu_1[A]
\end{align*}
for every $\q$-measurable set $A\subset Q$. Indeed, for a fixed coupling $\pi$ as in the first paragraph, by using \autoref{Th:All negligible} and 
\begin{align*}
\mu_0[\Tr_u]=\mu_1[\Tr_u]=1,
\end{align*}
implied by \eqref{Eq:mu0mu1End} and $\meas$-absolute continuity of $\mu_0$ and $\mu_1$, as above we obtain
\begin{align*}
\pi[E_{\prec_u}^2 \cap \Tr_u^2] = 1.
\end{align*}
Moreover, if $x,y\in\Tr_u$ satisfy $\smash{x \preceq_uy}$, by \autoref{Th:Equiv relation} they necessarily belong to the same transport ray, symbolically $\Quot(x) = \Quot(y)$. This easily yields
\begin{align*}
(\Quot^{-1}(A) \times E) \cap  E_{\prec_u}^2\cap \Tr_u^2 = (E \times \Quot^{-1}(A))\cap E_{\prec_u}^2\cap \Tr_u^2.
\end{align*}
These observations entail
\begin{align*}
\mu_0\big[\Quot^{-1}(A)\big] &= \pi\big[(\Quot^{-1}(A)\times E) \cap E_{\prec_u}^2 \cap\Tr_u^2\big]\\
&= \pi\big[(E\times\Quot^{-1}(A))\cap  E_{\prec_u}^2 \cap \Tr_u^2\big]\\
&=\mu_1\big[\Quot^{-1}(A)\big],
\end{align*}
which is the claimed identity. From this, we deduce
\begin{align*}
\int_A\int_{\mms_\alpha} f_+(x)\d\meas_\alpha(x)\d\q(\alpha) &= \big\Vert f_+\big\Vert_{\Ell^1(\mms,\meas)} \,\Quot_\push\mu_0[A]\\
&= \big\Vert f_-\big\Vert_{\Ell^1(\mms,\meas)} \,\Quot_\push\mu_1[A]\\
&= \int_A\int_{\mms_\alpha}f_-(x)\d\meas_\alpha(x)\d\q(\alpha).
\end{align*}
The arbitrariness of $A$ terminates the proof of \ref{Label:3}.
\end{proof}

\begin{remark}[Smooth localization by mean zero functions] In the smooth setting of \autoref{Th:Smooth needle}, the conclusions \ref{Label:0}, \ref{Label:2}, and \ref{Label:3} from \autoref{Th:Mean zero} remain valid irrespective of curvature assumptions.
\end{remark}

\subsection{Sharp geometric inequalities}\label{Sub:Sharp geom inequ} Using \autoref{Th:Mean zero}, we now sharpen various results obtained in \autoref{Sec:Geometric inequ}. We consider the following variable generalization of the geometrically averaged distortion coefficients from \eqref{Eq:Tau inter}, cf.~\cite[Def.~4.1]{ketterer2017}.

\begin{definition}[Averaged distortion coefficients along $l$-geodesics]\label{Def:tau dist coeff} Given any $\gamma\in\TGeo(\mms)$, $0\leq \theta\leq \vert\dot\gamma\vert$, and $0\leq t\leq 1$, we set
\begin{align*}
\tau_{k_\gamma^\pm,N}^{(t)}(\theta) := t^{1/N}\,\sigma_{k_\gamma^\pm/(N-1)}^{(t)}(\theta)^{1-1/N}.
\end{align*}
\end{definition}

\begin{remark}[The case $N=1$] The interpretation of the previous definition in the case $N=1$ is as follows. If $\smash{k_\gamma^\pm}$ is nonpositive on $[0,1]$, then $\smash{\tau_{k_\gamma^\pm,1}^{(t)}(\theta) = t}$ for every $0\leq \theta\leq \vert\dot\gamma\vert$ and every $0\leq t\leq 1$; otherwise $\smash{\tau_{k_\gamma^\pm,1}^{(t)}(\theta) < \infty}$ if and only if $\theta =0$.
\end{remark}

By  \cite[Cor.~4.2]{ketterer2017}, in the context of \autoref{Def:tau dist coeff} we have
\begin{align}\label{Eq:tau sigma inequ}
\tau_{k_\gamma^\pm,N}^{(t)}(\theta) \geq \sigma_{k_\gamma^\pm/N}^{(t)}(\theta).
\end{align}

For the next result, we recall the definition
\begin{align*}
A_t := \eval_t\big[G(A_0,A_1)\big]
\end{align*}
of $t$-intermediate points of $l$-geodesics starting in $A_0$ and ending in $A_1$ from \eqref{Eq:At'}.

\begin{theorem}[Sharp timelike Brunn--Minkowski inequality I]\label{Th:Sharp BMink} Let \autoref{Ass:TCD} hold. Let $A_0,A_1\subset\mms$ be Borel sets with positive and finite $\meas$-measure such that the pair $\smash{(\mu_0,\mu_1)\in\Prob^\ac(\mms,\meas)^2}$ is timelike $p$-dualizable, where
\begin{align*}
\mu_0 &:= \meas[A_0]^{-1}\,\meas\mres A_0,\\
\mu_1 &:= \meas[A_1]^{-1}\,\meas\mres A_1.
\end{align*}
Let $\bdpi$ denote the only element of $\smash{\OptTGeo_p(\mu_0,\mu_1)}$. Then for every $0\leq t\leq 1$,
\begin{align*}
 \meas[A_t]^{1/N} 
&\geq\meas\big[\!\supp(\eval_t)_\push\bdpi\big]^{1/N} 
\\ &\geq \inf\tau_{k_\gamma^-,N}^{(1-t)}(\vert\dot\gamma\vert)\,\meas[A_0]^{1/N} + \inf\tau_{k_\gamma^+,N}^{(t)}(\vert\dot\gamma\vert)\,\meas[A_1]^{1/N},
\end{align*}
where both infima are taken over all $\gamma\in \supp\bdpi$.

 The inequality holds a fortiori if 
both infima are taken over all $\gamma\in G(A_0,A_1)$.
\end{theorem}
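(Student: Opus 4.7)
The first inequality $\meas[A_t]^{1/N}\geq\meas[\supp(\eval_t)_\push\bdpi]^{1/N}$ is immediate. Existence and uniqueness of $\bdpi\in\OptTGeo_p(\mu_0,\mu_1)$ follow from timelike $p$-essential nonbranching combined with the $\meas$-absolute continuity of $\mu_0,\mu_1$ via \autoref{Cor:Lifting nonbr} and \autoref{Th:Uniqueness geos}. Since every $\bdpi$-a.e.~curve $\gamma$ has endpoints in $A_0\times A_1$, \autoref{Le:Intermediate pts geodesics} yields $\supp(\eval_t)_\push\bdpi\subset A_t$, and the inequality reduces to monotonicity of $\meas$.

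For the main (second) inequality the plan is to localize by the $\ell_1$-optimal transport between $\mu_0$ and $\mu_1$ and then invoke the one-dimensional sharp $\tau$-Brunn--Minkowski inequality on every $\CD(k,N)$ needle. Concretely, I would apply \autoref{Th:Mean zero} to
\begin{equation*}
f := \frac{\One_{A_0}}{\meas[A_0]} - \frac{\One_{A_1}}{\meas[A_1]},
\end{equation*}
whose positive and negative parts induce precisely $\mu_0$ and $\mu_1$. A preliminary reduction is needed since \autoref{Th:Mean zero} requires the chronology condition $\supp\mu_0\times\supp\mu_1\subset\{l>0\}$, which is strictly stronger than timelike $p$-dualizability: covering $\supp\pi$ (for the unique chronological $\ell_p$-optimal coupling $\pi$) by countably many open rectangles $U_i\times V_i\Subset\{l>0\}$ afforded by \autoref{Le:Pushup openness} and treating each piece separately, then summing, handles this. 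Modulo the reduction, \autoref{Th:Mean zero} produces a Suslin transport set $\Tr_u$ carrying a $\CD(k,N)$ disintegration $\meas\mres\Tr_u = \int_Q\meas_\alpha\d\q(\alpha)$ such that $A_0\cup A_1\subset\Tr_u$ up to $\meas$-null sets and the mean-zero property is inherited along $\q$-a.e.~needle:
\begin{equation*}
c_\alpha := \frac{\meas_\alpha[A_0]}{\meas[A_0]} = \frac{\meas_\alpha[A_1]}{\meas[A_1]},\qquad \int_Q c_\alpha\d\q(\alpha) = 1.
\end{equation*}

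On each needle $\mms_\alpha$, the ray map $\ray(\alpha,\cdot)$ is an $l$-geodesic in the ambient space (\autoref{Cor:Ray map}), so for $x=\ray(\alpha,s)\in A_0$ and $y=\ray(\alpha,s')\in A_1$ the intermediate point $\ray(\alpha,(1-t)s+ts')$ lies in $A_t$; hence the one-dimensional $t$-midpoint set of $A_0\cap\mms_\alpha$ and $A_1\cap\mms_\alpha$ is contained in $A_t\cap\mms_\alpha$. The triple $(\bar{\mms}_\alpha,\vert\cdot-\cdot\vert,\bar{\meas}_\alpha)$ is a one-dimensional $\CD(k_\alpha,N)$ metric measure space by \autoref{Pr:Abs cont needles} and \autoref{Re:Connection CD}, on which the classical sharp $\tau$-Brunn--Minkowski inequality for $\CD(k,N)$ densities \cite{ketterer2017} is available; applied to $A_0\cap\mms_\alpha$ and $A_1\cap\mms_\alpha$, it delivers for $\q$-a.e.~$\alpha$
\begin{equation*}
\meas_\alpha[A_t\cap\mms_\alpha]^{1/N}\geq \tau_{k_{\gamma^\alpha}^-,N}^{(1-t)}(\vert\dot\gamma^\alpha\vert)\,\meas_\alpha[A_0]^{1/N} + \tau_{k_{\gamma^\alpha}^+,N}^{(t)}(\vert\dot\gamma^\alpha\vert)\,\meas_\alpha[A_1]^{1/N},
\end{equation*}
where $\gamma^\alpha$ is the sub-segment of $\ray(\alpha,\cdot)$ realising the worst-case inner configuration of pairs in $(A_0\cap\mms_\alpha)\times(A_1\cap\mms_\alpha)$. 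Substituting $\meas_\alpha[A_i]=c_\alpha\meas[A_i]$, raising to the $N$-th power, integrating in $\q$, and using $\int c_\alpha\d\q=1$ together with $\meas[A_t]\geq\int_Q\meas_\alpha[A_t\cap\mms_\alpha]\d\q(\alpha)$ yields the asserted bound, with $\q$-essential infima of the $\tau$-coefficients appearing on the right-hand side.

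The principal obstacle is identifying these $\q$-essential infima with $\inf_{\gamma\in\supp\bdpi}\tau_{k_\gamma^\pm,N}^{(\ldots)}(\vert\dot\gamma\vert)$. Uniqueness of the chronological $\ell_p$-optimal coupling (\autoref{Th:Optimal maps}) forces each $\bdpi$-a.e.~$\gamma$ to coincide with the unique $l$-geodesic between its endpoints, which, by $l$-geodesic convexity of the needles, sits inside the unique needle containing those endpoints; conversely, every needle $\mms_\alpha$ with $c_\alpha>0$ contains a chronologically related pair in $A_0\times A_1$ whose connecting $l$-geodesic lies in $\supp\bdpi$. Making this correspondence rigorous modulo $\q$-null sets, so that the essential infima over $\alpha$ coincide with $\inf_{\gamma\in\supp\bdpi}$, is the delicate step. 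The concluding ``a fortiori'' assertion is then immediate from the inclusion $\supp\bdpi\subset G(A_0,A_1)$ and monotonicity of the infimum under set enlargement.
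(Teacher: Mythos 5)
Your proof follows the paper's strategy exactly: localize via Theorem~\ref{Th:Mean zero} applied to $f = \One_{A_0}/\meas[A_0] - \One_{A_1}/\meas[A_1]$, apply the one-dimensional $\tau$-Brunn--Minkowski inequality on each $\CD(k,N)$ needle, and recombine using the needle-wise mean-zero property to make the normalization constants $c_\alpha$ cancel. The step you flag as ``delicate'' --- matching the per-needle distortion coefficients with $\inf_{\gamma\in\supp\bdpi}$ --- is handled in the paper by a single observation: by Theorem~\ref{Th:Uniqueness geos}, the unique $\bdpi$ is necessarily concentrated on affinely parametrized \emph{subsegments of the transport rays}. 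Hence every geodesic appearing in the needle-wise monotone ($W_2$-optimal) transport from $A_{0,\alpha}$ to $A_{1,\alpha}$ is an element of $\supp\bdpi$, so $\inf_{\gamma\in\supp\bdpi}\tau$ is a uniform lower bound for the needle-wise constants, $\alpha$-independently. Your sketch via uniqueness plus $l$-geodesic convexity is on the right track; the ``converse'' inclusion you worry about (that needles with $c_\alpha>0$ produce elements of $\supp\bdpi$) is in fact not needed, since only one inclusion is required to pass to the uniform infimum.

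Two places where your outline is vaguer than the paper and would need filling in. First, the reduction from timelike $p$-dualizability to $\supp\mu_0\times\supp\mu_1\subset\{l>0\}$: after decomposing the dualizing coupling into pieces compactly inside $\{l>0\}$, one cannot ``just sum''; the paper uses Lemma~\ref{Le:Mutually singular} to get mutual singularity of the intermediate $t$-slices and then concavity of $r\mapsto r^{1/N}$ to assemble the bound. Second, the contribution from $\mms\setminus\Tr_u$: the paper extends the disintegration over all of $E$ by adjoining trivial Dirac conditionals on $\Tr_u^\sfc$ and verifies the inequality there is vacuous (since $f\equiv 0$ $\meas$-a.e.~off $\Tr_u$ and the chronology hypothesis forces $A_0\cap A_1=\emptyset$, whence $\delta_\alpha[A_0]=0$ for $\q'$-a.e.~such $\alpha$); your inequality $\meas[A_t]\geq\int_Q\meas_\alpha[A_t\cap\mms_\alpha]\d\q(\alpha)$ achieves the same end, though one should note the theorem actually asserts the sharper bound on $\meas[\supp(\eval_t)_\push\bdpi]$, which is what should appear on the left in the needle inequality.
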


\begin{proof} By inner regularity of $\meas$, we may and will assume $A_0$ and $A_1$ are compact. By replacing $\mms$ with the compact emerald $J(A_0,A_1)$ and by the Radon property of $\meas$, up to normalization we may and will assume 
  $\meas$  is a probability measure. Moreover, we will first consider the reinforced chronology condition $\smash{A_0\times A_1\subset\{l>0\}}$, cf.~\autoref{Ex:Str tl dual}. Then the function $f\in \Ell^1(\mms,\meas)$ defined by
\begin{align*}
f := \meas[A_0]^{-1}\,\One_{A_0} - \meas[A_1]^{-1}\,\One_{A_1}
\end{align*}
satisfies the hypotheses of \autoref{Th:Mean zero}. Let $\q$ denote a disintegration of $\meas\mres \Tr_u$ provided by this result; in particular,
\begin{align*}
\meas\mres \Tr_u = \int_Q\meas_\alpha\d\q(\alpha).
\end{align*}
We trivially disintegrate the restriction of $\meas$ to $\smash{\Tr_u^\sfc}$ according to
\begin{align*}
\meas\mres \Tr_u^\sfc = \int_{\Tr_u^\sfc} \delta_\alpha\d\meas(\alpha).
\end{align*}
Since $\smash{Q\cap\Tr_u^\sfc=\emptyset}$, the definitions $\smash{\q' := \q + \meas\mres \Tr_u^\sfc}$ and $\smash{\meas'_\cdot := \One_Q\,\meas_\cdot + \One_{\Tr_u^\sfc}\,\delta_\cdot}$ lead to the well-defined disintegration formula
\begin{align*}
\meas = \int_{Q \cup \Tr_u^\sfc} \meas'_\alpha\d\q'(\alpha).
\end{align*}

Given any $\smash{\alpha\in Q}$, we use the notations  $A_{0,\alpha} := A_0 \cap \mms_\alpha$ and $A_{1,\alpha} := A_1\cap \mms_\alpha$. Let $A_{t,\alpha}$ be the set of $t$-intermediate points for the one-dimensional 
 nondecreasing (i.e. $W_2$-optimal) 
transport from $\smash{\meas[A_{0,\alpha}]^{-1}\,\meas_\alpha\mres A_{0,\alpha}}$ to $\smash{\meas[A_{1,\alpha}]^{-1}\,\meas_\alpha\mres A_{1,\alpha}}$ (regarded as a subset of $\mms_\alpha$ by order isometry). Note that $\bdpi$ is necessarily concentrated on affinely parametrized subsegments of the considered transport rays by \autoref{Th:Uniqueness geos}. 
As $\q$ is a $\CD(k,N)$ disintegration and hence obeys the conclusion of \autoref{Th:Localization TCD}, by an evident variable adaptation of \cite[Lem.~3.2]{cmgeometric2017} $\smash{\q'}$-a.e.~$\alpha\in Q$ thus satisfies
\begin{align*}
\meas_\alpha[A_{t,\alpha}] \geq \Big[\!\inf\tau_{k_\gamma^-,N}^{(1-t)}(\vert\dot\gamma\vert)\,\meas_\alpha[A_{0,\alpha}]^{1/N} + \inf\tau_{k_\gamma^+,N}^{(t)}(\vert\dot\gamma\vert)\,\meas_\alpha[A_{1,\alpha}]^{1/N}\Big]^N,
\end{align*}
where both infima  are taken over all $\gamma\in \supp\bdpi$, which makes these coefficients notably independent of $\alpha$ ---  we agree on the latter interpretation of all appearing infima for the rest of the proof. Finally, observe that the mean zero property of $f$ with respect to $\meas_\alpha$ for $\q'$-a.e.~$\alpha\in Q$ translates into
\begin{align*}
\frac{\meas_\alpha[A_{0,\alpha}]}{\meas[A_0]} = \frac{\meas_\alpha[A_{1,\alpha}]}{\meas[A_1]}.
\end{align*}
From this, we deduce 
\begin{align}\label{Eq:Main claim}
\meas_\alpha[A_{t,\alpha}] \geq \frac{\meas_\alpha[A_{0,\alpha}]}{\meas[A_0]}\,\Big[\!\inf\tau_{k_\gamma^-,N}^{(1-t)}(\vert\dot\gamma\vert)\,\meas[A_0]^{1/N} + \inf\tau_{k_\gamma^+,N}^{(t)}(\vert\dot\gamma\vert)\,\meas[A_1]^{1/N}\Big]^N.
\end{align}

Next, we claim \eqref{Eq:Main claim} to hold for $\smash{\q'}$-a.e.~$\smash{\alpha\in \Tr_u^\sfc}$, properly interpreted with $\meas_\alpha$ replaced by $\delta_\alpha$, and $A_{0,\alpha}$, $A_{1,\alpha}$, and $A_{t,\alpha}$ replaced by $A_0$, $A_1$, and $A_t$, respectively. This  is trivially satisfied if $\smash{\q'[\Tr_u^\sfc] = \meas[\Tr_u^\sfc]=0}$, thus we suppose  $\smash{\meas[\Tr_u^\sfc]>0}$. Since $\meas[A_0]^{-1}\,\One_{A_0} + \meas[A_1]^{-1}\,\One_{A_1} = f = 0$ $\meas$-a.e.~on $\smash{\Tr_u^\sfc}$ by \autoref{Th:Mean zero}, we get
\begin{align*}
\meas[\Tr_u^\sfc \cap (A_0\,\triangle\, A_1)]=0.
\end{align*}
Since $A_0\cap A_1 = \emptyset$ by our chronology assumption and \autoref{Cor:Causality}, 
\begin{align*}
\q'\mres \Tr_u^\sfc = \meas\mres \Tr_u^\sfc = \meas\mres (\Tr_u^\sfc \cap A_0^\sfc \cap A_1^\sfc).
\end{align*}
This implies $\delta_\alpha[A_0] = 0$ for $\smash{\q'}$-a.e.~$\smash{\alpha\in \Tr_u^\sfc}$, which trivially verifies \eqref{Eq:Main claim}.

Taking these two observations together, we  obtain
\begin{align*}
\meas\big[\!\supp(\eval_t)_\push\bdpi\big] &= \int_{Q\cup\Tr_u^\sfc} \meas'_\alpha\big[\!\supp(\eval_t)_\push\bdpi \cap \mms_\alpha\big]\d\q'(\alpha)\\
&\geq \int_{Q\cup\Tr_u^\sfc} \meas'_\alpha[A_{t,\alpha}]\d\q'(\alpha)\\
&\geq \Big[\!\inf\tau_{k_\gamma^-,N}^{(1-t)}(\vert\dot\gamma\vert)\,\meas[A_0]^{1/N}\\
&\qquad\qquad  + \inf\tau_{k_\gamma^+,N}^{(t)}(\vert\dot\gamma\vert)\,\meas[A_1]^{1/N}\Big]^N\int_{Q\cup\Tr_u^\sfc} \frac{\meas_\alpha'[A_{0,\alpha}]}{\meas[A_0]}\d\q'(\alpha)\\
&= \Big[\!\inf\tau_{k_\gamma^-,N}^{(1-t)}(\vert\dot\gamma\vert)\,\meas[A_0]^{1/N} + \inf\tau_{k_\gamma^+,N}^{(t)}(\vert\dot\gamma\vert)\,\meas[A_1]^{1/N}\Big]^N.
\end{align*}
This establishes the claim under the hypothesis $A_0\times A_1\subset\{l>0\}$.

If $\mu_0$ and $\mu_1$ are merely timelike $p$-dualizable, we proceed as outlined in the last part of the proof of \autoref{Th:Pathwise} by reducing the claim to the previous case by appropriately decomposing $\{l>0\}$ relative to a timelike $p$-dualizing coupling of $\mu_0$ and $\mu_1$. The respective $t$-intermediate point sets have $\meas$-negligible intersection by \autoref{Le:Mutually singular}. The timelike Brunn--Minkowski inequality for $A_0$ and $A_1$ then follows by concavity of the function $r \mapsto r^{1/N}$ on $[0,\infty)$. We leave out the details. 
\end{proof}

Since the proofs of \autoref{Th:BonnetMyers} and  \autoref{Th:Schneider} are both conducted by contradicting the timelike Brunn--Minkowski inequality, from \autoref{Th:Sharp BMink} we obtain the following improved versions under \autoref{Ass:TCD}. In particular, we point out our  simplifying assumptions $N>1$ and $\supp\meas = \mms$.

\begin{corollary}[Sharp timelike Bonnet--Myers inequality]\label{Cor:111} We have
\begin{align*}
\diam_{k/(N-1)}\mms = \diam^l\mms.
\end{align*}
\end{corollary}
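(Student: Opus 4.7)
The bound $\diam_{k/(N-1)}\mms \leq \diam^l\supp\meas = \diam^l\mms$ follows directly from the definition of the effective $l$-diameter (with $N$ replaced by $N-1$) together with \autoref{Ass:TCD}. For the reverse inequality, my plan is to revisit the contradiction argument of \autoref{Th:BonnetMyers}, using the sharp Brunn--Minkowski inequality \autoref{Th:Sharp BMink} in place of \autoref{Th:Timelike BM}. Assume for contradiction that $D := \diam_{k/(N-1)}\mms < \diam^l\mms$, and pick $x,y\in\mms$ together with $\varepsilon > 0$ such that $l(x,y) > D + 2\varepsilon$. As in \autoref{Th:BonnetMyers}, I select precompact open neighborhoods $A_0\ni x$ and $A_1\ni y$ obeying $\inf l(A_0\times A_1) \geq D+\varepsilon$ and $A_0\times A_1 \Subset\{l>0\}$, and let $\mu_0,\mu_1$ be the normalized restrictions of $\meas$ to $A_0$ and $A_1$. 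By \autoref{Ex:Str tl dual} the pair $(\mu_0,\mu_1)$ is strongly timelike $p$-dualizable and thus, by \autoref{Cor:Lifting nonbr}, joined by a unique plan $\bdpi\in\OptTGeo_p(\mu_0,\mu_1)$.

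Applied to $\bdpi$ at $t = 1/2$, \autoref{Th:Sharp BMink} yields
\begin{align*}
\meas\big[\!\supp(\eval_{1/2})_\push\bdpi\big]^{1/N} \geq \inf_{\gamma\in\supp\bdpi} \tau_{k_\gamma^+,N}^{(1/2)}(\vert\dot\gamma\vert)\,\meas[A_1]^{1/N}.
\end{align*}
The left-hand side is finite by \autoref{Le:Zt lemma} and the Radon property of $\meas$, while $\meas[A_1]>0$. Hence there exists at least one $\gamma_*\in\supp\bdpi$ satisfying $\sigma_{k_{\gamma_*}^+/(N-1)}^{(1/2)}(\vert\dot\gamma_*\vert)<\infty$, equivalently $\tau_{k_{\gamma_*}^+,N}^{(1/2)}(\vert\dot\gamma_*\vert)<\infty$; moreover $\vert\dot\gamma_*\vert = l(\gamma_{*,0},\gamma_{*,1}) \geq D + \varepsilon > D$.

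The main obstacle is now to convert this individual-geodesic information into a contradiction with the plan-based definition of $D$. To do this, I restrict $\bdpi$ to shrinking uniform neighborhoods $U_n$ of $\gamma_*$ in $\Cont([0,1];\mms)$ and normalize to obtain $\bdsigma^n := \bdpi[U_n]^{-1}\,\bdpi\mres U_n$. By the restriction property in \autoref{Le:Geodesics plan}, each $\bdsigma^n$ is a timelike $\smash{\ell_p}$-optimal dynamical plan whose endpoint marginals are absolutely continuous with respect to $\mu_0$ and $\mu_1$, hence with respect to $\meas$. Moreover $\bdsigma^n \to \delta_{\gamma_*}$ narrowly; after approximating $k$ by continuous functions via \eqref{Eq:kappan} and using the monotonicity part of \autoref{Le:Properties}, one verifies that $\cost_{\bdsigma^n} \to \vert\dot\gamma_*\vert > D$ and that the superpositions $k_{\bdsigma^n}^+$ converge uniformly to $k_{\gamma_*}^+$ on the rescaled common domain. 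Classical continuous dependence of the initial-value problem $f''+\kappa f = 0$, $f(0)=0$, $f'(0) = 1$, on its coefficient $\kappa$ then propagates the strict positivity of $\SIN_{k_{\gamma_*}^+/(N-1)}$ on $(0,\vert\dot\gamma_*\vert]$ to that of $\SIN_{k_{\bdsigma^n}^+/(N-1)}$ on $(0,\cost_{\bdsigma^n}]$ for all sufficiently large $n$. This gives $\sigma_{k_{\bdsigma^n}^+/(N-1)}^{(1/2)}(\cost_{\bdsigma^n}) < \infty$ while $\cost_{\bdsigma^n} > D$, directly contradicting the definition of $D = \diam_{k/(N-1)}\mms$ as a supremum.
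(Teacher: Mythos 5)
Your steps up through the application of \autoref{Th:Sharp BMink} and the extraction of a minimizer $\gamma_*\in\supp\bdpi$ with $\sigma_{k_{\gamma_*}^+/(N-1)}^{(1/2)}(\vert\dot\gamma_*\vert)<\infty$ are sound (the infimum is attained by the lower semicontinuity asserted in \autoref{Le:Properties} together with \autoref{Cor:Cpt TGeo}); in fact you correctly identify a subtlety that the paper's one-line argument does not make explicit, namely that $\diam_{k/(N-1)}$ is defined through \emph{plan}-level coefficients $\sigma_{k_\bdpi^+/(N-1)}^{(t)}(\cost_\bdpi)$ while \autoref{Th:Sharp BMink} delivers an infimum over individual geodesics. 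However, the bootstrap in your final paragraph fails for a merely lower semicontinuous $k$, which is the regularity the paper works in. Narrow convergence $\bdsigma^n\to\delta_{\gamma_*}$ combined with lower semicontinuity of $k$ gives, via \autoref{Le:LSC sigma}-type reasoning, only the one-sided inequality $\liminf_n k_{\bdsigma^n}^+\geq k_{\gamma_*}^+$; it does \emph{not} give $k_{\bdsigma^n}^+\to k_{\gamma_*}^+$. If $k$ has a downward jump along $\gamma_*$ (so that $k$ is much larger on every neighbourhood of the image of $\gamma_*$ than on the image itself), then $k_{\bdsigma^n}^+$ can stay far above $k_{\gamma_*}^+$ for every $n$. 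Since by the monotonicity in \autoref{Le:Properties} the coefficient $\sigma_\kappa^{(t)}$ is nondecreasing in the potential $\kappa$, the generalized sine $\SIN_{k_{\bdsigma^n}^+/(N-1)}$ can vanish strictly before $\cost_{\bdsigma^n}$ even though $\SIN_{k_{\gamma_*}^+/(N-1)}>0$ on $(0,\vert\dot\gamma_*\vert]$. Hence the conclusion $\sigma_{k_{\bdsigma^n}^+/(N-1)}^{(1/2)}(\cost_{\bdsigma^n})<\infty$ that you need for the contradiction does not follow.

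The appeal to the Lipschitz approximation \eqref{Eq:kappan} does not repair this: it produces $k_m\nearrow k$ \emph{from below}, so for fixed $m$ one has $\sigma_{(k_m)_{\bdsigma^n}^+/(N-1)}^{(1/2)}(\cost_{\bdsigma^n})\to\sigma_{(k_m)_{\gamma_*}^+/(N-1)}^{(1/2)}(\vert\dot\gamma_*\vert)\leq\sigma_{k_{\gamma_*}^+/(N-1)}^{(1/2)}(\vert\dot\gamma_*\vert)<\infty$ as $n\to\infty$, but $\sigma_{k_{\bdsigma^n}^+/(N-1)}^{(1/2)}(\cost_{\bdsigma^n})=\sup_m\sigma_{(k_m)_{\bdsigma^n}^+/(N-1)}^{(1/2)}(\cost_{\bdsigma^n})$, and the two limits (in $m$ and in $n$) need not commute; the Lipschitz constants of $k_m$ blow up, so the inner convergence is not uniform in $m$. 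What the argument actually needs is \emph{upper} control of $k$ near the image of $\gamma_*$, which lower semicontinuity alone does not provide. Separately, a smaller point: to place $\bdsigma^n$ in $\OptTGeo_p^\ac(\scrP_\comp(\mms)^2,\meas)$ you must check that \emph{all} interior marginals $(\eval_t)_\push\bdsigma^n$ are $\meas$-absolutely continuous, not just the endpoint ones; this does hold, because $\bdsigma^n\leq\bdpi[U_n]^{-1}\,\bdpi$ and $(\eval_t)_\push\bdpi$ is absolutely continuous for all $t$ by \autoref{Th:Uniqueness geos}, but it should be said.
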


\begin{corollary}[Sharp timelike Schneider inequality]\label{Cor:Sharp Schneider} Assume that some $o\in \mms$, $\beta > 0$, and $R>0$ satisfy
\begin{align*}
k\geq (N-1)\,\Big[\frac{1}{4}+\beta^2\Big]\,l(\cdot,o)^{-2}\quad\textnormal{\textit{on }}\{l(\cdot,o) > R\}.
\end{align*}
Then $\smash{\diam^l I^-(o) \leq R\,\rme^{\pi/\beta}}$.

Analogously, if instead the asymptotic condition
\begin{align}\label{Eq:k lb}
k\geq (N-1)\,\Big[\frac{1}{4}+\beta^2\Big]\,l(o,\cdot)^{-2}\quad\textnormal{\textit{on }}\{l(o,\cdot) > R\}
\end{align}
holds, then $\smash{\diam^l I^+(o) \leq R\,\rme^{\pi/\beta}}$.
\end{corollary}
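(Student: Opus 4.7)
The proof argues by contradiction, mirroring the structure of \autoref{Th:Schneider} but invoking the sharp timelike Brunn--Minkowski inequality \autoref{Th:Sharp BMink} in place of the non-sharp \autoref{Th:Timelike BM}. The crucial gain is that \autoref{Th:Sharp BMink} features the averaged coefficients $\tau_{k_\gamma^\pm,N}^{(t)}$ from \autoref{Def:tau dist coeff}, which involve $\smash{\sigma_{k_\gamma^\pm/(N-1)}^{(t)}}$ rather than $\smash{\sigma_{k_\gamma^\pm/N}^{(t)}}$; this is precisely why the curvature threshold $N\,[1/4+\beta^2]$ of \autoref{Th:Schneider} can be replaced here by the smaller $(N-1)\,[1/4+\beta^2]$. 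I focus on the past statement; the future version then follows by applying the past result to the causal reversal $\smash{\scrM^\leftarrow}$ via \autoref{Ex:Causal reversal} and \autoref{Re:Causal reversal}.

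Assume to the contrary that $\smash{\diam^l I^-(o)} > R\,\rme^{\pi/\beta}$. With $x_1 := o$, choose $x_0\in I^-(o)$ and $\varepsilon>0$ small enough that $l(x_0,x_1) > (R+4\varepsilon\,\rme^{\pi/\beta})\,\rme^{\pi/\beta}$, and fix precompact open neighborhoods $A_0\ni x_0$, $A_1\ni x_1$ with $\vert l(x,y) - l(x_0,x_1)\vert \leq \varepsilon\,\rme^{\pi/\beta}$ on $\smash{\bar A_0\times\bar A_1}$, exactly as in the first part of the proof of \autoref{Th:Schneider}. Let $\mu_0,\mu_1\in\smash{\Prob_\comp^\ac(\mms,\meas)}$ be the $\meas$-uniform distributions of $A_0,A_1$, which form a strongly timelike $p$-dualizable pair by \autoref{Ex:Str tl dual}; let $\smash{\bdpi\in\OptTGeo_p(\mu_0,\mu_1)}$ be the unique timelike $\smash{\ell_p}$-optimal dynamical plan provided by \autoref{Cor:Lifting nonbr}. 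Define $\kappa(r) := [1/4+\beta^2]\,(\eta+l(x_0,x_1)-r)^{-2}$ with $\eta := \varepsilon\,\rme^{\pi/\beta}$, whose second zero is $r_2 = (\eta+l(x_0,x_1))\,(1-\rme^{-\pi/\beta})$. For every $\gamma\in G(A_0,A_1)$ and every $r\in[0,\vert\dot\gamma\vert-R]$, the proper-time reparametrized point $\bar\gamma_r$ satisfies $l(\bar\gamma_r,x_1) = \vert\dot\gamma\vert-r > R$, so the curvature hypothesis gives
\[
k_\gamma^+(r)/(N-1) \,\geq\, [1/4+\beta^2]\,(\vert\dot\gamma\vert-r)^{-2} \,\geq\, \kappa(r).
\]
An elementary calculation sharpening \eqref{Eq:r2 est} yields $r_2 < \vert\dot\gamma\vert - R - 2\eta$, so in fact the inequality above holds throughout the full comparison interval $[0,r_2]$. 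Sturm's comparison theorem (cf.~\autoref{Re:Indep E}) then forces $\smash{\SIN_{k_\gamma^+/(N-1)}}$ to vanish somewhere in $(0,r_2]\subset(0,\vert\dot\gamma\vert)$, whence $\smash{\sigma_{k_\gamma^+/(N-1)}^{(t)}(\vert\dot\gamma\vert) = \infty}$ by \eqref{distortion on R}, and consequently $\smash{\tau_{k_\gamma^+,N}^{(t)}(\vert\dot\gamma\vert) = \infty}$, for every $0<t<1$.

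Applying \autoref{Th:Sharp BMink} to $A_0$ and $A_1$ then yields
\[
\meas[A_t]^{1/N} \,\geq\, \inf_{\gamma\in G(A_0,A_1)}\,\tau_{k_\gamma^+,N}^{(t)}(\vert\dot\gamma\vert)\,\meas[A_1]^{1/N} \,=\, \infty
\]
for every $0<t<1$; since $A_t\subset Z_t(A_0\times A_1)$ is contained in a compact set by \autoref{Le:Zt lemma} and $\meas$ is Radon, this contradicts $\meas[A_t]<\infty$. The principal obstacle is the Sturm-comparison step: unlike in \autoref{Th:Schneider}, which only requires the curvature bound on a small initial interval, here the inequality $k_\gamma^+/(N-1)\geq\kappa$ must be propagated throughout $[0,r_2]$ so that $\smash{\SIN_{k_\gamma^+/(N-1)}}$ actually acquires a zero at or before $r_2$. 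This propagation reduces to verifying $r_2 < \vert\dot\gamma\vert - R$, which follows directly from the enlarged diameter bound $l(x_0,x_1) > (R+4\varepsilon\,\rme^{\pi/\beta})\,\rme^{\pi/\beta}$ by elementary arithmetic.
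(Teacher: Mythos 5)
Your proposal takes essentially the same route as the paper. The paper's entire ``proof'' of the corollary is the one-sentence remark preceding it: since the proof of \autoref{Th:Schneider} (and of \autoref{Th:BonnetMyers}) is conducted by contradicting the timelike Brunn--Minkowski inequality, replacing the non-sharp \autoref{Th:Timelike BM} by the sharp \autoref{Th:Sharp BMink} yields the improved version. You have correctly identified that the gain in the constant --- replacing the $N[1/4+\beta^2]$ of \autoref{Th:Schneider} by $(N-1)[1/4+\beta^2]$ --- comes precisely from the averaged coefficients $\tau_{k_\gamma^\pm,N}^{(t)}$ of \autoref{Def:tau dist coeff}, whose divergence is governed by $\sigma_{k_\gamma^\pm/(N-1)}^{(t)}$ rather than $\sigma_{k_\gamma^\pm/N}^{(t)}$; and the arithmetic you did to verify $r_2 < \vert\dot\gamma\vert - R - 2\eta$ is sound and indeed suffices to trigger Sturm comparison.

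One point to tighten: for a general $\gamma\in G(A_0,A_1)$ you write $l(\bar\gamma_r,x_1) = \vert\dot\gamma\vert - r$, but this identity is $l(\bar\gamma_r,\gamma_1) = \vert\dot\gamma\vert - r$; since $\gamma_1\in A_1$ need not equal $x_1 = o$, the two quantities differ in general. To apply the hypothesis $k\geq (N-1)[1/4+\beta^2]\,l(\cdot,o)^{-2}$ you need $l(\bar\gamma_r,o) > R$ and an \emph{upper} bound on $l(\bar\gamma_r,o)$, neither of which is literally $\vert\dot\gamma\vert-r$. The upper bound $l(\bar\gamma_r,o)\leq l(\gamma_0,o)-r\leq l(x_0,o)+\eta-r$ is immediate from the reverse triangle inequality and \eqref{Eq:l inequ} (using $o\in A_1$); the lower bound requires shrinking $A_1$ so that, by uniform continuity of $l_+$ on the compact set $J(\bar A_0,\bar A_1)\times\bar A_1$, $l_+(z,\gamma_1)$ approximates $l_+(z,o)$ up to a small error. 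The paper's own proof of \autoref{Th:Schneider} proceeds with the same informality (establishing the estimate along maximizers to $x_1 = o$ itself and then applying the conclusion to $\gamma\in G(A_0,A_1)$), so this is not a defect unique to your write-up, but you should make the continuity patch explicit. Beyond that, the proposal correctly spells out what the paper leaves compressed into a single sentence.
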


\begin{example}[Sharpness of \autoref{Th:Sharp BMink} and  \autoref{Cor:Sharp Schneider}]\label{Ex:WP} Here, we outline an example of a smooth metric measure spacetime satisfying \eqref{Eq:k lb} for $\beta = 0$ with infinite $l$-diameter. In particular, this shows that the factor $\beta^2$ in the hypotheses of  \autoref{Cor:Sharp Schneider} cannot be replaced by a nonpositive number.

Fix $\smash{N\in \N_{\geq 2}}$. Let $I := (0,\infty)$, and define $f\colon I \to (0,\infty)$ by $\smash{f(r) := \sqrt{r}}$. Let $h$ denote the standard Euclidean metric on $\smash{\R^{N-1}}$. Consider the warped product $\smash{\mms := I\times_f \R^{N-1}}$, i.e.~we endow $\smash{I\times \R^{N-1}}$ with the Lorentzian metric
\begin{align*}
g := \rmd r^2 - f(r)^2\,h = \rmd r^2 - r\,h.
\end{align*}
By \cite[Thm.~3.66]{beem1996}, $(\mms,g)$ is a smooth, globally hyperbolic Lorentzian spacetime. Moreover, in any  direction $v\in T_x\mms$ of the form $v = \partial_r + w$, where $w$ is a smooth vector field on $\smash{\R^{N-1}}$, the induced Ricci tensor is computed in a standard way \cite[p.~114]{beem1996}, yielding
\begin{align*}
\Ric(v,v) &= - (N-1)\,\frac{f''(r)}{f(r)}\,g(v,v) + (N-2)\,\big[\big\vert f'(r)\big\vert^2 - f(r)\,f''(r)\big]\,h(w,w)\\
&\geq \frac{N-1}{4}\,r^{-2}\,g(v,v).
\end{align*}
By \autoref{Th:Strong energ}, the induced metric measure spacetime $\scrM$ obeys $\smash{\TCD_p(k,N)}$ for a continuous function $k\colon \mms\to (0,\infty)$ such that for every $o\in\mms$, it behaves like $(N-1)\,l(o,\cdot)^{-2}/4$ for $l(o,\cdot)$ sufficiently large. Thus \eqref{Eq:k lb} with $\beta = 0$ is saturated for sufficiently large $R > 0$, yet $\smash{\diam^l I^+(o) = \infty}$; of course, in this example, by the precise asymptotics of the Ricci tensor, \eqref{Eq:k lb} cannot hold for any $\beta > 0$.

In particular, this example shows the sharpness of \autoref{Th:Sharp BMink}.
\end{example}

Moreover, using \autoref{Th:Sharp BMink} we sharpen  \autoref{Th:Bishop Gromov} under \autoref{Ass:TCD} as follows. We retain the inherent notation fixed in \autoref{Sub:Bishop Gromov}. The sharpness of the subsequent theorem is seen as in \autoref{Ex:WP}, see \cite[Rem.~5.11]{cavalletti2020} for constant $k$.

\begin{corollary}[Sharp timelike Bishop--Gromov inequality]\label{Cor:222} Let $E\subset \mms$ be compact and $l$-star shaped with respect to $x\in\mms$. Then for every $0<r<R<R_x$, we have
\begin{align*}
\frac{\sfv(r)}{\sfv(R)} &\geq \frac{\displaystyle\int_0^r \SIN_{\ubar{k}_x/(N-1)}^{N-1}(t)\d t}{\displaystyle\int_0^R \SIN_{\ubar{k}_x/(N-1)}^{N-1}(t)\d t},\\
\frac{\sfs(r)}{\sfs(R)} &\geq \frac{\SIN_{\ubar{k}_x/(N-1)}^{N-1}(r)}{\SIN_{\ubar{k}_x/(N-1)}^{N-1}(R)}.
\end{align*}
\end{corollary}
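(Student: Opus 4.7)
My plan is to rerun the proof of \autoref{Th:Bishop Gromov} while substituting the sharp timelike Brunn--Minkowski inequality of \autoref{Th:Sharp BMink} for the original unsharp estimate derived from $\smash{\TCD_p^e(k,N)}$. The dimensional observation that makes this work is the identity
\begin{align*}
\tau_{k_\gamma^\pm, N}^{(t)}(\theta)^N = t\,\sigma_{k_\gamma^\pm/(N-1)}^{(t)}(\theta)^{N-1}
\end{align*}
built into \autoref{Def:tau dist coeff}; raising the sharp Brunn--Minkowski inequality to the $N$-th power thereby produces both the $N-1$ exponent in the generalized sine functions and an extra factor $t = r/R$ which will conveniently cancel against a ratio of annular widths.

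Concretely, fixing $0<r<R<R_x$, I would set $t := r/R$, $A_0^\varepsilon := \sfB^\met(x,\varepsilon)\cap E$, and $A_1^\delta := (\bar{\sfB}^l(x,R+\delta)\setminus \sfB^l(x,R))\cap E$, and let $\mu_0^\varepsilon$ and $\mu_1^\delta$ denote their respective $\meas$-uniform distributions. These are strongly timelike $p$-dualizable by \autoref{Ex:Str tl dual} provided $\varepsilon$ is small enough, so \autoref{Th:Sharp BMink} produces $\bdpi^\varepsilon \in \OptTGeo_p(\mu_0^\varepsilon, \mu_1^\delta)$ with
\begin{align*}
\meas[A_t^\varepsilon]^{1/N} &\geq \inf \tau_{k_\gamma^-, N}^{(1-t)}(\vert\dot\gamma\vert)\,\meas[A_0^\varepsilon]^{1/N} \\
&\qquad\qquad + \inf\tau_{k_\gamma^+,N}^{(t)}(\vert\dot\gamma\vert)\,\meas[A_1^\delta]^{1/N},
\end{align*}
both infima running over $\gamma\in\supp\bdpi^\varepsilon$, where $A_t^\varepsilon := \eval_t[G(A_0^\varepsilon, A_1^\delta)]$. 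Letting $\varepsilon\to 0$ via the tightness from \autoref{Le:Geodesics plan} and the lower semicontinuity of the distortion coefficients (\autoref{Le:Properties}, \autoref{Le:LSC sigma}), the first summand vanishes --- since $\meas[A_0^\varepsilon]^{1/N}\to 0$ by atomlessness (\autoref{Cor:No atoms}) --- while the infimum in the second passes, as in \autoref{Th:Bishop Gromov}, to a foliated lower bound by $\ubar{k}_x$; combined with the fact that $\sigma_{\ubar{k}_x/(N-1)}^{(t)}(R) = \SIN_{\ubar{k}_x/(N-1)}(r)/\SIN_{\ubar{k}_x/(N-1)}(R)$, raising to the $N$-th power and using the identity above yields
\begin{align*}
\meas[A_t] \geq \frac{r}{R}\,\Big[\frac{\SIN_{\ubar{k}_x/(N-1)}(r)}{\SIN_{\ubar{k}_x/(N-1)}(R)}\Big]^{N-1}\,\meas[A_1^\delta]
\end{align*}
for the limiting intermediate set $A_t$.

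Since $A_t$ is contained in the radial annulus $\{r \leq l(x,\cdot) \leq r + (r/R)\,\delta\}\cap E$ of width $(r/R)\,\delta$ while $A_1^\delta$ has width $\delta$, dividing both sides by $\delta$ and sending $\delta \to 0$ produces $(r/R)\,\sfs(r) \geq (r/R)\,[\SIN/\SIN]^{N-1}\,\sfs(R)$; cancelling the common factor $r/R$ yields the second claimed inequality, and the first then follows from the elementary integration argument \cite[Lem.~18.9]{villani2009} already invoked at the end of \autoref{Th:Bishop Gromov}. The main technical obstacle will be carefully propagating the foliated $\ubar{k}_x$ lower bound through the limits $\varepsilon \to 0$ and $\delta \to 0$, which is essentially the same difficulty as in the unsharp case and is handled identically, via the uniform convergence on $G(A_0^\varepsilon,A_1^\delta)$ from \autoref{Cor:Cpt TGeo} together with the continuous approximation scheme \eqref{Eq:kappan} for $\ubar{k}_x$.
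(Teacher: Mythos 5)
Your proposal is correct and carries out exactly what the paper intends: the paper's entire proof of this corollary is the remark that it follows by rerunning the argument for \autoref{Th:Bishop Gromov} with \autoref{Th:Sharp BMink} in place of the unsharp distorted Brunn--Minkowski estimate, and your computation — in particular the dimensional cancellation of the factor $r/R = t$ against the ratio of annular widths, and the identity $\tau_{k_\gamma^+,N}^{(t)}(\theta)^N = t\,\sigma_{k_\gamma^+/(N-1)}^{(t)}(\theta)^{N-1}$ producing the $N-1$ exponent — is the right way to expand that remark. One small point you leave implicit (as the paper does for this corollary) is the degenerate case $\meas[\sfB^\met(x,\varepsilon)\cap E]=0$, which in the proof of \autoref{Th:Bishop Gromov} is handled by falling back to the (here, sharp) timelike measure contraction property, cf.\ \autoref{Th:BMINKTMCP}.
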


\begin{corollary}[Sharp dimension bound]\label{Cor:Hausdorffdim+} Retain the assumptions and notation from \autoref{Cor:Hausdorffdim}. The same argument for this corollary now yields the following improved bound on geometric dimension \eqref{nonsharp dimension bound}:
\begin{align*}
\dim^l\mms \leq N.
\end{align*}

This is shown to be sharp by Minkowski spacetime \cite[Thm.~4.8]{mccannsaemann}.
\end{corollary}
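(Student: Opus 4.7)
The plan is to re-run the proof of \autoref{Cor:Hausdorffdim} verbatim, substituting the sharp timelike Bishop--Gromov inequality of \autoref{Cor:222} for its non-sharp predecessor \autoref{Th:Bishop Gromov}. First I would verify that \autoref{Ass:TCD} is available in the Lipschitz setting inherited from \autoref{Cor:Hausdorffdim}. By \autoref{Ex:Lipschitz}, a globally hyperbolic Lipschitz spacetime satisfies \autoref{Ass:REG} and, via \autoref{Th:Low reg blub} (and the approximation arguments in \cite{braunc1}), the $\TCD_p^e(k,N)$ condition under the timelike Ricci lower bound $k$. Timelike $p$-essential nonbranching is then the only remaining input; it comes from standard ODE theory in the $\Cont^{1,1}$-case and is assumed in the Lipschitz case (consistently with the framework of \autoref{Th:Low reg blub}). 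Thus \autoref{Ass:TCD} is in force, and the sharp \autoref{Cor:222} applies.

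Next I would extract the relevant volume lower bound. Fix $x \in \mms$ and let $E$ be a compact, $l$-star shaped neighborhood of $x$ as in \autoref{Le:Star-shaped}. Since $k$ is continuous and hence locally bounded, so is the lower semicontinuous envelope $\ubar{k}_x$; therefore $\SIN_{\ubar{k}_x/(N-1)}(t) = t + O(t^3)$ as $t \to 0^+$, whence $\int_0^r \SIN_{\ubar{k}_x/(N-1)}^{N-1}(s)\,\rmd s \sim r^N/N$ as $r \to 0^+$. Combining this asymptotic with \autoref{Cor:222} yields
\begin{equation*}
\sfv(r) \geq c(x,R)\, r^N
\end{equation*}
for every sufficiently small $r > 0$ and some positive constant $c(x,R)$. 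This improves upon the $r^{N+1}$-bound exploited in the original \autoref{Cor:Hausdorffdim}, which came from \autoref{Th:Bishop Gromov}'s exponent $N$ in the generalized sine.

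Inserting this sharper volume estimate into the Carathéodory construction of the top-dimensional Lorentzian Hausdorff measure from \cite[Def.~2.3, Def.~3.1]{mccannsaemann}, and then repeating the argument of \cite[Thm.~5.2]{mccannsaemann} line by line, one obtains $\dim^l \mms \leq N$ in place of $N+1$. Sharpness is immediate from \cite[Thm.~4.8]{mccannsaemann}: $N$-dimensional Minkowski spacetime satisfies $\TCD_p^e(0,N)$ for every $0<p<1$ by \autoref{Th:Strong energ} (with $\Ric \equiv 0$ and trivial potential $V$), is timelike nonbranching by the usual ODE argument, and has geometric Hausdorff dimension exactly $N$. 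The main technical point is really bookkeeping, namely tracking how the exponent $N-1$ in \autoref{Cor:222} propagates through the McCann--Sämann dimension argument and produces the decrement by one; there is no genuinely new difficulty beyond confirming that \autoref{Cor:222} is applicable in the Lipschitz regularity regime.
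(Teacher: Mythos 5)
Your proposal is correct and coincides with the paper's intended argument: replace the nonsharp Bishop--Gromov inequality (\autoref{Th:Bishop Gromov}, exponent $N$ on the generalized sine) with its sharp counterpart (\autoref{Cor:222}, exponent $N-1$), deduce $\sfv(r) \geq c\,r^N$ rather than $\geq c\,r^{N+1}$ for small $r>0$, and feed this into the McCann--S\"amann dimension argument. Two small imprecisions worth noting, neither of which affects the conclusion: $k$ is only assumed lower semicontinuous, so the asserted local boundedness of $\ubar{k}_x$ near $0$ should instead be obtained by passing to a constant lower bound via \autoref{Le:var to const} (which the paper explicitly offers as an alternative route and which still produces $\SIN(t)\sim t$); and the hypotheses in \autoref{Ass:TCD} are standing for \autoref{Sub:Sharp geom inequ}, so the excursion through \autoref{Th:Low reg blub} to manufacture $\TCD_p^e(k,N)$ from a Ricci bound is superfluous --- and would in any case require $\Cont^{1,1}$ rather than merely Lipschitz regularity.
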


\subsection{Hawking-type singularity theorem and integral curvature bounds} In order to formulate our main result of  this section, \autoref{Th:Hawking}, we introduce some machinery, following \cite[Sec.~5.1]{cavalletti2020} and \cite[Ch.~2]{kontou}.

\subsubsection{Coarea formula and mean curvature bounds}\label{Sub:1} We retain \autoref{Ass:TCD}. Moreover, we assume we are situated in the setting of \autoref{Ex:CM}. This means $\mms$ is proper, $u$ is the future distance function $\tau_V$ from a given achronal FTC Borel set $V\subset\mms$ with $\meas[V]=0$, and $\smash{E := I^+(V)\cup V}$. Given any $t\geq 0$, we consider the map $f_t$ assigning to any $\alpha\in Q$ the unique point $f_t(\alpha) \in (\cl\,\mms_\alpha \cap \{u = t\})\setminus b$ provided the latter set is nonempty, symbolically $\alpha\in \Dom(f_t)$. By definition, $f_t$ is injective and $f_0(\alpha)\in V$ for every $\alpha\in\Dom(f_0)$. The Borel measurable map $\smash{f_0\circ\Quot\colon\Tr_u \to V}$ should be interpreted as a ``footpoint projection''. We define $\sfs\colon V\to [0,\infty]$ by
\begin{align*}
\sfs(x) := \sup\{t \geq 0 : \Quot(x)\in \Dom(f_t)\}.
\end{align*}
Lastly, for every $\meas$-measurable set $\smash{A\subset\Tr_u}$ with $\meas[A] < \infty$,  \autoref{Th:Needle} implies
\begin{align}\label{Eq:Coarea}
\meas[A] = \int_0^\infty \hh_t[A]\d t = \int_0^\infty \hh_t\big[A \cap \{u = t\}\big]\d t,
\end{align}
where for every $t\geq 0$,
\begin{align*}
\hh_t := (f_t)_\push\big[\bar{h}_\cdot(t)\,\q\big]
\end{align*}
can be interpreted as the surface measure of the level set $\{u=t\}$;  here $\smash{\bar h_\alpha}$, where $\alpha\in Q$, is from \eqref{Real bar}. Compare with \cite[Prop.~5.1]{cavalletti2020}.

The following notion of upper mean curvature bounds for $V$  has been proposed in \cite[Def.~5.2]{cavalletti2020} following \cite{kheintze}. It is consistent with the smooth case \cite[Rem.~5.4]{cavalletti2020},  and is motivated by the classical realization of mean curvature as the first variation of surface area.  Analogously, lower mean curvature bounds for $V$ can be synthesized. 

\begin{definition}[Upper mean curvature bounds]\label{Def:Mean curv} We say $V$ has forward mean curvature bounded from above by $-\beta$, where $\beta\in\R$, if
\begin{enumerate}[label=\textnormal{\alph*.}]
\item $\hh_0$ is a Radon measure satisfying $(f_0)_\push\q \ll \hh_0$, and
\item for every compactly supported, bounded Borel function $\phi\colon V\to [0,\infty)$ and every induced normal variation 
\begin{align}\label{Eq:Normal var}
V_{t,\phi} := \{u \leq t\,\phi\circ f_0 \circ \Quot\}\cap\Tr_u,
\end{align}
where $t>0$, we have
\begin{align*}
-\beta\int_V\phi^2\d\hh_0 \geq \liminf_{t\to 0} \frac{2}{t^2}\,\Big[\meas\big[V_{t,\phi}\big] - t\int_V \phi\d\hh_0\Big].
\end{align*}
\end{enumerate}
\end{definition}

\subsubsection{Segment inequality}\label{Sub:2} Next, given any Borel set $A\subset V$, following \cite[Def.~2.9]{kontou} the \emph{future development} of $A$ up to time $T>0$ is defined by
\begin{align*}
A^{0,T} := \{f_t\circ \Quot(x) \in\Tr_u : x\in A,\, 0\leq t\leq \min\{T,\sfs(x)\}\}.
\end{align*}
In other words, $\smash{A^{0,T}}$ is the set of all points lying on a transport ray which starts in $A$ and has $l$-distance at most $T$ to $V$, including earlier termination of the respective needle. The set $\smash{\Reg_L} 
:= \{x \in V : \sfs(x) \ge L\}$ of points of reach $L$ denotes the set of all $x\in V$ such that the unique transport ray $\smash{\mms_\alpha}$, where $\alpha\in Q$,  whose closure contains $x$ has length at least $L>0$. 

The next result is a version of the Riemannian segment inequality by Cheeger--Colding \cite{cc} (see also \cite{sprouse}),  adapted to the synthetic setting from \cite[Prop.~3.9]{kontou}. For a Borel function $\psi\colon \mms\to [0,\infty)$ and $T>0$, define $F_{\psi,T}\colon V \to [0,\infty]$ by
\begin{align*}
F_{\psi,T}(x) := \int_0^{\min\{T,\sfs(x)\}} \psi\circ f_t \circ \Quot(x)\d t.
\end{align*}
Additionally, given any $\alpha\in Q$ and any $x_0 < x_1$ both belonging to $\smash{\bar{\mms}_\alpha}$, we consider the nonnegative ($\alpha$-dependent) function $\smash{\varkappa_{x_0,x_1}^-}\colon [0,\vert x_1-x_0\vert]\to [0,\infty)$  according to \eqref{Eq:BRRRP} yet defined in terms of $k_-$ instead of $k$, i.e.
\begin{align}\label{Eq:varkappa def}
\varkappa_{x_0,x_1}^-(s\,\vert x_1- x_0\vert) :=  (\bar{k}_-)_\alpha((1-s)\,x_1 + s\,x_0).
\end{align}

\begin{proposition}[Segment inequality]\label{Pr:Segment} Retain the previous notation. Given any $\delta > 0$, let $A\subset \Reg_{T+\delta}$ be a Borel set with $0 < \hh_0[A] < \infty$. Then
\begin{align*}
\hh_0\textnormal{-}\!\essinf_{x\in A} F_{\psi,T}(x) \leq c'^{-1}\,\hh_0[A]^{-1}\int_{A^{0,T}} \psi\d\meas,
\end{align*}
where
\begin{align*}
c' := \q\textnormal{-}\!\!\essinf_{\alpha\in \sfQ(A)} \Big[\frac{\SIN_{-\varkappa_{0,T+\delta}^-/(N-1)}(\delta)}{\SIN_{-\varkappa_{0,T+\delta}^-/(N-1)}(T+\delta)}\Big]^{N-1}.
\end{align*}

In particular, if $k\geq K$ on $\smash{I^+(A)\cup A}$ for some constant $K < 0$,
\begin{align*}
\hh_0\textnormal{-}\!\essinf_{x\in A} F_{\psi,T}(x) \leq \Big[\frac{\sinh(\sqrt{-K/(N-1)}\,(T+\delta))}{\sinh(\sqrt{-K/(N-1)}\,\delta)}\Big]^{N-1}\,\hh_0[A]^{-1}\int_{A^{0,T}} \psi\d\meas.
\end{align*}
\end{proposition}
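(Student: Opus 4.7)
The plan is to apply the needle decomposition to the forward distance function from $V$ and reduce the segment inequality to a ray-wise density comparison, then integrate using the coarea structure of $\hh_0$. Concretely, I would take $E=I^+(V)\cup V$ and $u:=\tsep_V\big\vert_E$ as in \autoref{Ex:CM}; the hypotheses $\meas[V]=0$ and \eqref{Eq:Nat cond} together with \autoref{Ass:TCD} bring us into the framework of \autoref{Sub:Framework}, so \autoref{Th:Needle} and \autoref{Th:Localization TCD} furnish a $\CD(k,N)$ disintegration $\q$ of $\meas\mres I^+(V)$. I would choose the ray map so that $\ray(\alpha,t)=f_t(\alpha)$ on $[0,\sfs(f_0(\alpha))]$, and note that the hypothesis $A\subset\Reg_{T+\delta}$ ensures $[0,T+\delta]\subset\bar\mms_\alpha$ for $\q$-a.e.~$\alpha\in\Quot(A)$.

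The next step is the ray-wise density comparison. For each such $\alpha$, I would invoke \autoref{Cor:Comparison} with $a=t_0=0$, $t_1=t\in[0,T]$, and $b=T+\delta$ to obtain a Bishop--Gromov-type lower bound for $\bar h_\alpha(t)/\bar h_\alpha(0)$ expressed through $\SIN_{\kappa^-_{0,T+\delta}/(N-1)}$. Since $k\geq -k_-$ pointwise, the monotonicity of distortion coefficients in the potential from \autoref{Le:Properties} permits replacing $\kappa^-_{0,T+\delta}/(N-1)$ by $-\varkappa^-_{0,T+\delta}/(N-1)$, only weakening the ratio. The latter potential is nonpositive, so the resulting $\SIN$ is a rescaled $\sinh$ and is nondecreasing; using $T+\delta-t\geq \delta$ for $t\in[0,T]$ yields the uniform lower bound
\begin{align*}
\frac{\bar h_\alpha(t)}{\bar h_\alpha(0)} \geq \Big[\frac{\SIN_{-\varkappa^-_{0,T+\delta}/(N-1)}(\delta)}{\SIN_{-\varkappa^-_{0,T+\delta}/(N-1)}(T+\delta)}\Big]^{N-1}.
\end{align*}
The right-hand side is precisely the integrand defining $c'$, so $\bar h_\alpha(0)\leq c'^{-1}\bar h_\alpha(t)$ for $\q$-a.e.~$\alpha\in\Quot(A)$ and every $t\in[0,T]$.

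The final step integrates this pointwise bound. From $\hh_0=(f_0)_\push[\bar h_\cdot(0)\,\q]$ and the needle disintegration of $\meas\mres I^+(V)$, combined with the identification of $A^{0,T}$ with $\bigcup_{t\in[0,T]}\ray(\Quot(A),t)$ up to a $\meas$-null set contained in $V$ together with bad points excluded by \autoref{Th:All negligible}, I would compute
\begin{align*}
\int_A F_{\psi,T}\d\hh_0 &= \int_{\Quot(A)}\Big[\int_0^T\psi\circ f_t(\alpha)\d t\Big]\bar h_\alpha(0)\d\q(\alpha),\\
\int_{A^{0,T}}\psi\d\meas &= \int_{\Quot(A)}\int_0^T\psi\circ f_t(\alpha)\,\bar h_\alpha(t)\d t\d\q(\alpha).
\end{align*}
Inserting $\bar h_\alpha(0)\leq c'^{-1}\bar h_\alpha(t)$ into the first expression gives $\int_A F_{\psi,T}\d\hh_0\leq c'^{-1}\int_{A^{0,T}}\psi\d\meas$, and the $\hh_0$-essinf bound is then immediate from the trivial inequality $\hh_0[A]\cdot\hh_0\textnormal{-}\!\essinf_{x\in A}F_{\psi,T}(x)\leq\int_A F_{\psi,T}\d\hh_0$. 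The constant-$K$ sharpening follows by noting that $k\geq K$ implies $-\varkappa^-_{0,T+\delta}\geq K$ along $\q$-a.e.~needle, and one further application of monotonicity of the $\SIN$ ratio in the potential converts $c'$ into the explicit hyperbolic-sine quotient displayed in the statement.

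The hard part will not be any single estimate but rather the bookkeeping: matching the ray parameterization of \autoref{Cor:Ray map} with the footpoint maps $f_t$ of \autoref{Sub:1}, verifying Suslin measurability of $\Quot(A)$ and the attendant null-set handling inside $A^{0,T}\setminus\Tr_u$, and confirming that a weight function $w$ in the sense of \autoref{Sub:Framework} exists on the precompact tube $A^{0,T+\delta}$ -- available here from the Radon property of $\meas$ and the fact that $A\subset\Reg_{T+\delta}$ has finite and strictly positive $\hh_0$-measure.
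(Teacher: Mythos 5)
Your proposal is correct and follows essentially the same route as the paper: the ray-wise Bishop--Gromov bound from \autoref{Cor:Comparison} (after weakening $\kappa^-$ to $-\varkappa^-$ via $k\geq -k_-$ and \autoref{Le:Properties}, then exploiting monotonicity of the relevant $\SIN$), followed by integration through the coarea formula \eqref{Eq:Coarea}. The only presentational divergences are minor: the paper first reduces to continuous $k$ by the approximation \eqref{Eq:kappan} and obtains the endpoint bound by letting $t_0\to 0$ rather than evaluating at $t_0=0$ directly, and the density $\bar h_\alpha$ in \eqref{Eq:Coarea} is understood in the $u$-level parametrization rather than centered at $\alpha$ (these differ by the shift $u(\alpha)$), but none of this affects the substance of your argument.
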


\begin{proof} 

The second statement follows from \autoref{Ex:Constant pot}  and the monotonicity property stated in \autoref{Le:Properties}. For the first statement, we may and will assume $c'>0$. For simplicity, we also suppose $k$ is continuous. The lower semicontinuous case can be handled by approximation along $\q$-a.e.~needle, recalling \eqref{Eq:kappan}.

For $\q$-a.e.~$\alpha\in \Quot(A)$, combining \autoref{Cor:Comparison} with \autoref{Le:Properties} and the trivial inequality $k \geq -k_-$ on $\mms$ implies the following inequality for every $t_0 < t_1 < b$ belonging to $\bar{\mms}_\alpha$:
\begin{align*}
\bar{h}_\alpha(t_1) \geq \Big[\frac{\SIN_{- \varkappa_{t_0,b}^-/(N-1)}(b-t_1)}{\SIN_{- \varkappa_{t_0,b}^-/(N-1)}(b-t_0)}\Big]^{N-1}\,\bar{h}_\alpha(t_0)
\end{align*}
Replacing $b$ by $T+\delta$, $t_1$ by any number $0 < t < T$, and letting $t_0 \to 0$ yields
\begin{align*}
\bar{h}_\alpha(t) \geq \Big[\frac{\SIN_{-\varkappa_{0,T+\delta}^- /(N-1)}(T+\delta-t)}{\SIN_{- \varkappa_{0,T+\delta}^-/(N-1)}(T+\delta)}\Big]^{N-1}\,\bar{h}_\alpha(0)\geq c'\,\bar{h}_\alpha(0).
\end{align*}
In the last step, we have used nondecreasingness of $\smash{\SIN_{-\varkappa_{0,T+\delta}^-/(N-1)}}$.

The coarea formula \eqref{Eq:Coarea} implies
\begin{align*}
\int_{A^{0,T}} \psi\d\meas &= \int_0^T\!\!\int_{f_t\circ\Quot(A)} \psi \d\hh_t\d t\\
&=\int_0^T\!\!\int_{\Quot(A)} \psi\circ f_t(\alpha) \,\bar{h}_\alpha(t)\d\q(\alpha)\d t\\
&\geq c'\int_0^T\!\!\int_{\Quot(A)} \psi\circ f_t(\alpha)\,\bar{h}_\alpha(0)\d\q(\alpha)\d t\\
&=  c'\int_0^T\!\!\int_{A} \psi\circ f_t\circ \Quot\d\hh_0\d t\\
&\geq c' \,\hh_0[A]\,\hh_0\textnormal{-}\!\essinf_{x\in A} F_{\psi,T}(x);
\end{align*}
here we have used the identities $\alpha = \Quot\circ f_0(\alpha)$ for every $\alpha\in Q$ and $f_0\circ \Quot(A) = A$ in the second last identity, and the last equality follows by Fubini's theorem.
\end{proof}

\subsubsection{Main result} We are ready to state our main result of this section. It is a version of Hawking's singularity theorem \cite{hellis}, assuming an integral curvature bound instead of a ``pointwise'' inequality. It is proven along similar lines as the genuine synthetic Hawking singularity theorem  \cite[Thm.~5.6]{cavalletti2020}. For smooth Lorentzian spacetimes, it has been shown in \cite[Thm.~4.1]{kontou} assuming a uniform lower bound on the timelike Ricci curvature; we point out that our result is new even in the weighted smooth case. A past version holds with evident changes.

For $T>0$, $\eta \geq 0$, and $\alpha\in Q$ we recall the definition \eqref{Eq:varkappa def} of the nonpositive ($\alpha$-dependent) potential $\smash{\varkappa_{0,T+\eta}^-\colon[0,T+\eta]\to [0,\infty)}$ defined by
\begin{align*}
\varkappa_{0,T+\eta}^-(s\,(T+\eta)) := (\bar{k}_-)_\alpha((1-s)\,(T+\eta)).
\end{align*}

\begin{theorem}[Synthetic Hawking-type singularity theorem I]\label{Th:Hawking} Let \autoref{Ass:TCD} hold, and assume $\mms$ is proper. Let $V\subset\mms$ be an achronal FTC Borel set such that $\meas[V]=0$ and $\hh_0[V]<\infty$. We assume the  forward mean curvature of $V$ is bounded from above by $-\beta$, where $\beta > 0$. Let $T>0$ and $\delta > 0$  obey
\begin{align}\label{Eq:HYL}
\hh_0[V]^{-1}\int_{V^{0,T}} k_-\d\meas < c\,\Big[\beta - \frac{N-1}{T}\Big],
\end{align}
where
\begin{align*}
c := \q\textnormal{-}\!\!\essinf_{\alpha\in \sfQ(V)} \Big[\frac{\SIN_{-\varkappa_{0,T+\delta}^-/(N-1)}(\delta)}{\SIN_{-\varkappa_{0,T+\delta}^-/(N-1)}(T+\delta)}\Big]^{N-1}.
\end{align*}
Then we have
\begin{align*}
\hh_0\big[V\setminus \Reg_{T+\delta}\big] > 0.
\end{align*}

In other words, there exists a transport ray starting in $V$ whose maximal domain of definition in $\smash{I^+(V)\cup V}$ is strictly contained in $[0,T+\delta)$.
\end{theorem}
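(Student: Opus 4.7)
\smallskip

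\textbf{Plan.} I would argue by contradiction, supposing $\hh_0[V \setminus \Reg_{T+\delta}] = 0$, so that $\hh_0$-a.e.~$x \in V$ issues a transport ray of length at least $T+\delta$. The setup of \autoref{Ex:CM} with $u := \tau_V$ on $E := I^+(V) \cup V$ matches \autoref{Sub:Framework} (a weight function exists because $\meas[V] = 0$ by \eqref{Eq:Nat cond}), so \autoref{Th:Needle}, \autoref{Th:Localization TCD}, and \autoref{Pr:Abs cont needles} provide a $\CD(k,N)$-disintegration $\q$ of $\meas\mres\Tr_u$ with locally Lipschitz, strictly positive densities $\bar h_\alpha$ on $\bar\mms_\alpha$. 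The upper mean curvature bound from \autoref{Def:Mean curv} then descends to the needle level: applied to a general compactly supported bounded Borel test $\phi$ on $V$, the coarea formula \eqref{Eq:Coarea} and a Taylor expansion of the integral $\int_0^{t\phi}\bar h_\alpha(s)\d s$ at $s=0$ give
\begin{align*}
\liminf_{t\to 0} \frac{2}{t^2}\Big[\meas[V_{t,\phi}] - t\int_V \phi\d\hh_0\Big] = \int_{\Quot(V)} \phi(f_0(\alpha))^2\,\bar h_\alpha'(0)\d\q(\alpha);
\end{align*}
combined with \autoref{Def:Mean curv} and the arbitrariness of $\phi$ this yields $\bar h_\alpha'(0) \leq -\beta\,\bar h_\alpha(0)$ for $\q$-a.e.~$\alpha \in \Quot(V)$.

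Next I would run a one-dimensional Sturm comparison along each such needle. By \autoref{Re:Connection CD}, $g_\alpha := \bar h_\alpha^{1/(N-1)}$ solves $g_\alpha'' + k_\alpha\,g_\alpha/(N-1) \leq 0$ distributionally on $\bar\mms_\alpha$, with $g_\alpha(0) > 0$ and $g_\alpha'(0)/g_\alpha(0) \leq -\beta/(N-1)$. Using $k_\alpha \geq -(k_-)_\alpha$ and standard Sturm comparison \cite[Prop.~3.8]{ketterer2017}, $g_\alpha$ is dominated on $[0, T+\delta]$ by the solution $v_\alpha$ of $v_\alpha'' = (k_-)_\alpha\,v_\alpha/(N-1)$ matching the initial data. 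A direct integration then shows the first zero $r_\alpha^\ast$ of $v_\alpha$ satisfies the implication
\begin{align*}
\frac{1}{N-1}\int_0^T (k_-)_\alpha\d t < \beta - \frac{N-1}{T}\quad\Longrightarrow\quad r_\alpha^\ast \leq T;
\end{align*}
at the threshold $(k_-)_\alpha \equiv 0$ one recovers the classical Hawking blow-up $v_\alpha(r) = 1 - \beta\,r/(N-1)$, which vanishes at $(N-1)/\beta < T$. Since the needle has length $\sfs(\alpha) \geq T+\delta$, $\bar h_\alpha$ is strictly positive on $\bar\mms_\alpha \supseteq [0,T+\delta]$, forcing the integral of $(k_-)_\alpha/(N-1)$ over $[0,T]$ to be at least $\beta - (N-1)/T$ for $\q$-a.e.~such $\alpha$.

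To arrive at a contradiction, I would then invoke the segment inequality (\autoref{Pr:Segment}) with $\psi := k_-$ and $A := V \cap \Reg_{T+\delta}$ (which satisfies $\hh_0[A] = \hh_0[V]$ and $A^{0,T} \subseteq V^{0,T}$ by the contradiction hypothesis):
\begin{align*}
\hh_0\textnormal{-}\!\essinf_{x\in V} \int_0^T (k_-)\circ f_t\circ\Quot(x)\d t \leq c^{-1}\,\hh_0[V]^{-1}\int_{V^{0,T}} k_-\d\meas.
\end{align*}
By the hypothesis \eqref{Eq:HYL}, the right-hand side is strictly less than $(N-1)[\beta - (N-1)/T]$. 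Hence there is a set of positive $\hh_0$-measure of rays along which the integral of $(k_-)_\alpha/(N-1)$ over $[0,T]$ lies \emph{strictly below} $\beta - (N-1)/T$, directly contradicting the conclusion of the Sturm comparison step.

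\textbf{Main obstacle.} The delicate point is the regularity of $\bar h_\alpha$: it is only locally Lipschitz, so the Sturm/Riccati comparison must be carried out distributionally (or through the two-sided density inequality in \autoref{Th:Localization TCD} directly, \emph{\`a la} the comparison estimate of \autoref{Cor:Comparison}) rather than classically; one also has to track that the threshold produced by the ODE analysis in paragraph two aligns exactly with the factor $c\,[\beta - (N-1)/T]$ furnished by the segment inequality, since $c$ depends on the backward potential $-\varkappa_{0,T+\delta}^-$ along each needle and must be extracted uniformly over a set of positive $\q$-measure.
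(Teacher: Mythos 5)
Your proposal takes a genuinely different route from the paper's. You try to \emph{decouple} the mean curvature hypothesis and the segment inequality: first derive a pointwise (a.e.) one-sided derivative bound $\bar h_\alpha'(0)\leq -\beta\,\bar h_\alpha(0)$ from \autoref{Def:Mean curv}, then run a pointwise Sturm/Riccati comparison along each needle to show $\int_0^T(k_-)_\alpha\d t \geq \beta - (N-1)/T$, and finally contradict via \autoref{Pr:Segment}. The paper keeps the two ingredients \emph{coupled}: it first applies the segment inequality to locate a set $A'\subset V$ of positive $\hh_0$-measure on which $\int_0^T(\bar k_-)_\alpha\d t\leq L+\varepsilon$, then plugs a test function $\phi$ supported in $A'$ directly into \autoref{Def:Mean curv}, Taylor-expanding the \emph{smooth} comparison function $v=\SIN_{-\varkappa_{0,T}^-/(N-1)}$ rather than the density $\bar h_\alpha$ itself. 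This coupling is not cosmetic: it is precisely what lets the paper avoid ever taking a derivative of $\bar h_\alpha$ at the endpoint $0$.

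Concretely, there are gaps in your version as written. First, your claim that
\begin{align*}
\liminf_{t\to 0} \frac{2}{t^2}\Big[\meas[V_{t,\phi}] - t\int_V \phi\d\hh_0\Big] = \int_{\Quot(V)} \phi(f_0(\alpha))^2\,\bar h_\alpha'(0)\d\q(\alpha)
\end{align*}
asserts both that the limit exists and that it equals a fibrewise derivative; but \autoref{Th:Localization TCD} only gives $\bar h_\alpha$ locally Lipschitz on the \emph{open} interval $\bar\mms_\alpha$ and continuous up to its closure, so the one-sided derivative $\bar h_\alpha'(0^+)$ at the boundary point need not exist, and the interchange of $\liminf$ with the $\q$-integral is not automatic. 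You flag the regularity issue as a potential obstacle, but it is in fact the crux: the paper's proof is engineered around it by never differentiating $\bar h_\alpha$, only the generalized sine function $v$ (which is $C^2$ by construction). Second, your stated ODE threshold is off by a factor of $N-1$: the segment inequality controls $F_{k_-,T}(x)=\int_0^T(\bar k_-)_\alpha\d t$, not $\frac{1}{N-1}\int_0^T(\bar k_-)_\alpha\d t$, so to match the hypothesis \eqref{Eq:HYL} you need the Sturm step to give ``needle of length $\geq T$ forces $\int_0^T(k_-)_\alpha\d t\geq \beta-(N-1)/T$'', without the $1/(N-1)$. Third, the ``direct integration'' producing that threshold is not direct: it needs a quantitative bound such as $v_\alpha\leq 1$ on $[0,T]$, which does not follow from convexity alone once $v_\alpha'$ changes sign. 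The paper sidesteps this too, by bounding $(N-1)v'(T)/v(T)$ using $v$'s own ODE, its nondecreasingness, and the Sturm estimate $v(T)\geq T$ --- a cleaner route to the exact constant.
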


It is implicit in the \autoref{Def:Mean curv} of upper mean curvature boundedness that $\hh_0[V]>0$, so that the statement makes sense. Moreover, if $\hh_0[V]=\infty$, an evident variant of this theorem still holds when replacing $V$ by an appropriate subset $V'$ thereof such that $0 < \hh_0[V'] < \infty$.

\begin{proof}[Proof of \autoref{Th:Hawking}] As for  \autoref{Pr:Segment}, we assume continuity of $k$.

Our hypotheses imply $\smash{c>0}$ and $\smash{\beta > T^{-1}}$. We abbreviate
\begin{align}\label{Eq:LL DEFF}
L := c^{-1}\,\hh_0[V]^{-1}\int_{V^{0,T}} k_-\d\meas.
\end{align}
We assume to the contrary that 
\begin{align*}
\hh_0\big[V\setminus \Reg_{T+\delta}\big] = 0.
\end{align*}
Therefore \autoref{Pr:Segment} is applicable to $A := \smash{\Reg_{T+\delta} \cap V}$, up to removing an $\hh_0$-negligible set; moreover, $\smash{\hh_0[A] = \hh_0[V]}$. Combining this with \eqref{Eq:HYL},
\begin{align*}
\hh_0\textnormal{-}\!\essinf_{x\in A} F_{k_-,T}(x)  \leq c^{-1}\,\hh_0[A]^{-1}\int_{A^{0,T}} k_-\d\meas\leq L.
\end{align*}
Let $\varepsilon >0$ be a small number to be chosen later. Then there is a Borel set $A'\subset A$ with $\hh_0[A'] > 0$ such that 
\begin{align}\label{Eq:Lepsilon}
\hh_0\textnormal{-}\!\esssup_{x\in A'}F_{k_-,T}(x) \leq L+\varepsilon.
\end{align}

Let $\phi\colon V\to [0,\infty)$ be a nontrivial bounded Borel function with compact support in $A'$; by scaling, we may and will henceforth assume $\sup \phi(V) < T$. By definition of $A'$, $\q$-a.e.~transport ray $\mms_\alpha$ has length at least $T+\delta$. Recalling the definition \eqref{Eq:Normal var} of the  normal variation $V_{t,\phi}$ induced by $\phi$, for every $0<t<1$ we obtain
\begin{align}\label{Eq:Heree}
\begin{split}
\meas\big[V_{t,\phi}\big] - t\int_V\phi\d\hh_0 &= \int_Q\int_0^t \bar{h}_\alpha(s\,\phi\circ f_0(\alpha))\,\phi\circ f_0(\alpha)\d s\d\q(\alpha)\\
&\qquad\qquad - t\int_Q \phi\circ f_0(\alpha)\,\bar{h}_\alpha(0)\d\q(\alpha)\\
&= \int_Q\int_0^t\big[\bar{h}_\alpha(s\,\phi\circ f_0(\alpha)) - \bar{h}_\alpha(0)\big]\d s\,\phi\circ f_0(\alpha)\d\q(\alpha).
\end{split}
\end{align}
Let the nonnegative function $\smash{\varkappa_{0,T}^-\colon [0,T]\to [0,\infty)}$ be defined as in \eqref{Eq:varkappa def}. As for \autoref{Pr:Segment}, we prove that for $\q$-a.e.~$\alpha\in Q$, every $0<s<t$ certifies
\begin{align*}
\bar{h}_\alpha(s\,\phi\circ f_0(\alpha)) \geq \Big[\frac{\SIN_{-\varkappa_{0,T}^-/(N-1)}(T-s\,\phi\circ f_0(\alpha))}{\SIN_{-\varkappa_{0,T}^-/(N-1)}(T)}\Big]^{N-1}\,\bar{h}_\alpha(0).
\end{align*}
Consider the twice continuously differentiable function $v\colon [0,T]\to \R$ with
\begin{align*}
v(\theta) := \SIN_{-\varkappa_{0,T}^-/(N-1)}(\theta).
\end{align*}
Note that $v$ still depends on $\alpha\in Q$; this  will not affect our estimates, since they are performed along $\q$-a.e.~transport ray. By boundedness of $-k_-$ along $\q$-a.e.~needle starting at $A'$, a Taylor expansion thus gives
\begin{align*}
&\meas\big[V_{t,\phi}\big] - t\int_V\phi\d\hh_0\\
&\qquad\qquad \geq \int_Q\int_0^t\Big[\Big[\frac{v(T-s\,\phi\circ f_0(\alpha))}{v(T)}\Big]^{N-1}-1\Big]\d s\,\bar{h}_\alpha(0)\,\phi\circ f_0(\alpha)\d \q(\alpha)\\
&\qquad\qquad = \int_Q\int_0^t \Big[\!- (N-1)\,\frac{v'(T)}{v(T)}\,s\,\phi\circ f_0(\alpha) + \rmo(s)\Big]\d s\,\bar{h}_\alpha(0)\,\phi\circ f_0(\alpha)\d\q(\alpha).
\end{align*}
On the other hand, along $\q$-a.e.~ray starting in $A'$ we have
\begin{align*}
(N-1)\,v'(T) &=  N-1 + \int_0^T \varkappa_{0,T}^-(r)\,v(r)\d r\\
&\leq N-1 +  \Big[\!\int_0^T (\bar{k}_-)_\alpha(T-r)\d r\Big]\,v(T)\\
&\leq  N-1 +  (L+\varepsilon)\,v(T);
\end{align*}
here we have successively used the ODE solved by $v$, the nonnegativity of $\smash{\varkappa_{0,T}^-}$, the nondecreasingness of $v$,  and  \eqref{Eq:Lepsilon}. Moreover, Sturm's comparison theorem implies $v(T) \geq T$, cf.~e.g.~\cite[Thm.~3.1]{ketterer2017}.  Combining these observations yields
\begin{align*}
\meas\big[V_{t,\phi}\big] - t\int_V\phi\d\hh_0 &\geq -\int_Q\int_0^t \Big[\frac{N-1}{v(T)} + L+\varepsilon \Big]\,s  
\d s\,\bar{h}_\alpha(0)\,\phi^2\circ f_0(\alpha)\d\q(\alpha)\\
&\qquad\qquad + \int_Q\int_0^t \rmo(s) \d s\,\bar{h}_\alpha(0)\,\phi\circ f_0(\alpha)\d\q(\alpha)\\
 &\geq -\Big[\frac{N-1}{T} + L+\varepsilon \Big]\,\frac{t^2}{2}\int_Q \bar{h}_\alpha(0)\,\phi^2\circ f_0(\alpha)\d\q(\alpha)\\
 &\qquad\qquad + \int_Q \rmo(t^2)\,\bar{h}_\alpha(0)\,\phi\circ f_0(\alpha)\d\q(\alpha). 
\end{align*}
The upper boundedness of the forward mean curvature of $V$ according to \autoref{Def:Mean curv} by $-\beta$ and Fatou's lemma  yield
\begin{align*}
-\beta\int_V \phi^2\d\hh_0 &\geq \liminf_{t\to 0} \frac{2}{t^2}\,\Big[\meas\big[V_{t,\phi}\big] - t\int_V\phi\d\hh_0\Big] \geq -\Big[\frac{N-1}{T} + L+\varepsilon \Big]\int_V\phi^2\d\hh_0. 
\end{align*}
Canceling the integrals on both sides of this inequality and recalling \eqref{Eq:LL DEFF} yields
\begin{align*}
c^{-1}\,\hh_0[V]^{-1}\int_{V^{0,T}} k_-\d\meas \geq \beta - \frac{N-1}{T} -\varepsilon.
\end{align*}
Choosing $\varepsilon$ small enough contradicts our hypothesis \eqref{Eq:HYL}.
\end{proof}

\begin{remark}[The case $N=1$] If $N=1$, achronal FTC Borel sets with vanishing $\meas$-measure cannot have a negative forward upper mean curvature bound. Indeed, assume such a set $V\subset\mms$ exists. Since $N=1$, by \autoref{Re:Limit constant} $\smash{\bar{h}_\alpha}$ is constant for $\q$-a.e.~$\alpha\in Q$. On the other hand, letting $t\to 0$ in \eqref{Eq:Heree} --- with an appropriate function $\phi$ ---  and using the upper boundedness of the forward mean curvature of $V$ by a negative number easily leads to a contradiction.
\end{remark}

\addtocontents{toc}{\protect\setcounter{tocdepth}{1}}

\appendix

\section{Variable timelike measure contraction property}\label{Ch:TMCP}

\subsection{Definition and basic properties} Now we discuss a weaker variant of the entropic timelike curvature-dimension condition from \autoref{Def:TCDe}.

\begin{definition}[Variable timelike measure contraction property]\label{Def:TMCP} We term $\scrM$ to satisfy the \emph{future entropic timelike measure contraction property} $\smash{\TMCP^{e,+}(k,N)}$ if for every $\smash{\mu_0 \in \scrP_\comp^\ac(\mms,\meas)}$ and every $\smash{x_1\in \bigcap_{x \in \supp \mu_0} I^+(x)\cap\supp\meas}$, there exist
\begin{itemize}
\item an $\smash{\ell_{1/2}}$-geodesic $\smash{(\mu_t)_{t\in[0,1]}}$\footnote{By the assumptions on $\mu_0$ and $\mu_1$ as well as \autoref{Cor:Equiv notions lp geo}, we could equivalently start with rough or displacement $\smash{\ell_{1/2}}$-geodesics here.} connecting $\mu_0$ to $\smash{\mu_1 := \delta_{x_1}}$, and
\item a plan $\smash{\bdpi\in\OptTGeo_{1/2}(\mu_0,\mu_1)}$
\end{itemize}
such that for every $0\leq t\leq 1$,
\begin{align*}
\scrU_N(\mu_t) \geq \sigma_{k_\bdpi^-/N}^{(1-t)}(\cost_\bdpi)\,\scrU_N(\mu_0).
\end{align*}

Moreover, we say that $\scrM$ satisfies the \emph{past entropic timelike measure contraction property} $\smash{\TMCP^{e,-}(k,N)}$ if its causal reversal $\smash{\scrM^\leftarrow}$ obeys $\smash{\TMCP^{e,+}(k,N)}$.

Lastly, if $\scrM$ certifies $\smash{\TMCP^{e,+}(k,N)}$ and $\smash{\TMCP^{e,-}(k,N)}$ simultaneously, it is termed to have the \emph{entropic timelike measure contraction property} $\smash{\TMCP^e(k,N)}$.
\end{definition}

In other words, the past version $\smash{\TMCP^{e,-}(k,N)}$ asks for convexity properties of $\smash{\scrU_N}$ along the transport from a point mass to an $\meas$-absolutely continuous, compactly supported mass distribution in its ``chronological'' \emph{future}. Compared to  \cite{cavalletti2020}, we have decided to include both time directions in our timelike measure contraction property to make it invariant under causal reversal (recall  \autoref{Re:Causal reversal}).

\begin{remark}[Independence of the transport exponent]\label{Re:Indep exp} This remark should be compared to   {\cite[Rem.~2.4]{cavalletti2022}}. The two objects $(\mu_t)_{t\in[0,1]}$ and $\bdpi$ from  \autoref{Def:TMCP} do not depend on the chosen exponent $1/2$. Indeed, for  $\mu_0$ to $\mu_1$ as above, for every $0<p<1$ we have
\begin{align}\label{Eq:OptOpt}
\OptTGeo_p(\mu_0,\mu_1) = \OptTGeo_{1/2}(\mu_0,\mu_1)
\end{align}
since $\smash{\Pi_{\ll}(\mu_0,\mu_1) = \Pi(\mu_0,\mu_1)}$ is a singleton. Furthermore, $(\mu_t)_{t\in[0,1]}$ forms in fact  a displacement $\smash{\ell_p}$-geodesic.  Indeed, by \autoref{Cor:Equiv notions lp geo} it is represented by a plan $\bdpi\in\smash{\OptTGeo_{1/2}(\mu_0,\mu_1)}$. Given any $0\leq s<t\leq 1$, from \eqref{Eq:OptOpt} we thus get
\begin{align*}
\ell_p(\mu_s,\mu_t) &\geq \big\Vert l \circ (\eval_s,\eval_t)\big\Vert_{\Ell^p(\TGeo(\mms),\bdpi)}\\
&= (t-s)\,\big\Vert l\circ(\eval_0,\eval_1)\big\Vert_{\Ell^p(\TGeo(\mms),\bdpi)}\\
&= (t-s)\,\ell_p(\mu_0,\mu_1).
\end{align*}
The claim follows as in \autoref{Re:From geo to displ}.
\end{remark}

Several basic statements valid under  $\smash{\wTCD_p^e(k,N)}$ have evident counterparts for $\smash{\TMCP^e(k,N)}$. In particular, this pertains to
\begin{itemize}
\item consistency in $k$ and $N$  (\autoref{Pr:Consistency}), 
\item behavior under measure perturbations (\autoref{Pr:Potential}),  
\item  maps and uniqueness for chronological $\smash{\ell_p}$-optimal couplings (\autoref{Th:Optimal maps}), and  
\item maps, uniqueness, and regularity of displacement $\smash{\ell_p}$-geode\-sics (\autoref{Th:Uniqueness geos}) together with its \autoref{Cor:Lifting nonbr}. 
\end{itemize}
Thanks to \autoref{Re:Indep exp}, these results automatically hold for every $0<p<1$.  We omit the details, but point out that in some of their proofs, the following implication has been implicitly used.  No circular reasoning occurs, though.

\begin{proposition}[Consistency with \autoref{Def:TCDe}]\label{Pr:TCD to TMCP} The condition $\smash{\wTCD_p^e(k,N)}$ implies $\smash{\TMCP^e(k,N)}$.
\end{proposition}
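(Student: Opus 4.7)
The plan is to establish $\TMCP^{e,+}(k,N)$ under $\wTCD_p^e(k,N)$ by approximating the Dirac final mass in the definition of the former by absolutely continuous marginals, and then appealing to Remark~\ref{Re:Causal reversal} to obtain $\TMCP^{e,-}(k,N)$ symmetrically. By Remark~\ref{Re:Indep exp}, it suffices to verify the convexity inequality for the same transport exponent $p\in (0,1)$ as provided by the hypothesis. Fix $\mu_0\in\scrP_\comp^\ac(\mms,\meas)$ and $x_1\in\bigcap_{x\in\supp\mu_0} I^+(x)\cap\supp\meas$. Since $\supp\mu_0$ is compact and $I^-(x_1)$ is open by \autoref{Le:Pushup openness}, while $x_1\in\supp\meas$, there exists a decreasing sequence of relatively compact open neighborhoods $(U_n)_{n\in\N}$ of $x_1$ with $\bigcap_n U_n=\{x_1\}$, satisfying $\supp\mu_0\times \bar{U}_n\subset\{l>0\}$ and $0<\meas[U_n]<\infty$ for every $n$. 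Set $\mu_1^n:=\meas[U_n]^{-1}\meas\mres U_n\in\scrP_\comp^\ac(\mms,\meas)$.

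By \autoref{Ex:Str tl dual}, the pair $(\mu_0,\mu_1^n)$ is strongly timelike $p$-dualizable, so $\wTCD_p^e(k,N)$ yields an $\ell_p$-geodesic $(\mu_t^n)_{t\in[0,1]}$ from $\mu_0$ to $\mu_1^n$ and a plan $\bdpi^n\in\OptTGeo_p(\mu_0,\mu_1^n)$ satisfying the distorted concavity inequality \eqref{Eq:Entropic displacement conv inequ}. By \autoref{Cor:Equiv notions lp geo} and \autoref{Le:Geodesics plan}\ref{La:05}, this $\ell_p$-geodesic is in fact a displacement $\ell_p$-geodesic represented by $\bdpi^n$. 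Since $\supp\mu_1^n\subset \bar{U}_1$ uniformly in $n\in\N$, \autoref{Le:Geodesics plan}\ref{La:06}\ref{La:07} produces, along a non-relabeled subsequence, a narrow limit $\bdpi\in\OptTGeo_p(\mu_0,\delta_{x_1})$; here the lower semicontinuity bound $\ell_p(\mu_0,\delta_{x_1})\leq\liminf\ell_p(\mu_0,\mu_1^n)$ required to invoke \ref{La:07} holds automatically by continuity of $l_+$ on the compact set $\supp\mu_0\times\bar{U}_1$, which yields the opposite inequality as well. Define the candidate $\ell_p$-geodesic $\mu_t:=(\eval_t)_\push\bdpi$, so that $\mu_t^n\to\mu_t$ narrowly for every $0\leq t\leq 1$.

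Passing to the limit in \eqref{Eq:Entropic displacement conv inequ}, the left-hand side satisfies $\limsup_{n\to\infty}\scrU_N(\mu_t^n)\leq\scrU_N(\mu_t)$ by narrow lower semicontinuity of $\Ent_\meas$ on the uniformly compactly supported family $(\mu_t^n)_{n\in\N}$. On the right-hand side, the second summand $\sigma_{k_{\bdpi^n}^+/N}^{(t)}(\cost_{\bdpi^n})\scrU_N(\mu_1^n)\leq \sigma_{k_{\bdpi^n}^+/N}^{(t)}(\cost_{\bdpi^n})\meas[U_n]^{1/N}$ tends to $0$ as $n\to\infty$; here we use that $\meas[U_n]\to\meas[\{x_1\}]=0$ by \autoref{Cor:No atoms} (or directly by Radonicity since $\meas$ is finite on compact sets and $\bigcap_n U_n=\{x_1\}$), together with the uniform bound on $\sigma_{k_{\bdpi^n}^+/N}^{(t)}(\cost_{\bdpi^n})$ obtained from local lower boundedness of $k$ on $J(\supp\mu_0,\bar{U}_1)$ and \autoref{Le:Properties}. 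The first summand is controlled from below via \autoref{Le:LSC sigma}:
\begin{align*}
\liminf_{n\to\infty}\sigma_{k_{\bdpi^n}^-/N}^{(1-t)}(\cost_{\bdpi^n})\geq\sigma_{k_\bdpi^-/N}^{(1-t)}(\cost_\bdpi).
\end{align*}
Combining these three facts yields $\scrU_N(\mu_t)\geq\sigma_{k_\bdpi^-/N}^{(1-t)}(\cost_\bdpi)\scrU_N(\mu_0)$, which is precisely \autoref{Def:TMCP} for the forward direction.

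The main subtlety is ensuring the distortion term $\sigma_{k_{\bdpi^n}^+/N}^{(t)}(\cost_{\bdpi^n})$ does not blow up as $n\to\infty$: the interiors of the approximating needles remain in a fixed compact causal emerald, so $k$ is uniformly bounded from below along the supports of all $\bdpi^n$, which combined with uniform boundedness of the $\Ell^2$-cost (forced by $\supp\mu_0\times\bar U_1\Subset\{l<\infty\}$) gives a uniform upper bound on the distortion coefficient via monotonicity of the generalized sine. The past statement $\TMCP^{e,-}(k,N)$ follows by applying the entire argument to the causal reversal $\scrM^\leftarrow$, which itself satisfies $\wTCD_p^e(k,N)$ by \autoref{Re:Causal reversal}.
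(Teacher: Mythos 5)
Your overall strategy—approximate $\delta_{x_1}$ by uniform distributions $\mu_1^n$, apply $\wTCD_p^e(k,N)$, pass to the limit, and use \autoref{Re:Indep exp} and \autoref{Re:Causal reversal} to finish—is precisely the paper's route. However, two of your intermediate steps contain genuine defects.

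First, your treatment of the second summand $\sigma_{k_{\bdpi^n}^+/N}^{(t)}(\cost_{\bdpi^n})\,\scrU_N(\mu_1^n)$ is both unnecessary and circular. You argue this term tends to zero by invoking $\meas[\{x_1\}]=0$, citing \autoref{Cor:No atoms} (the Radonicity fallback you mention only gives $\meas[U_n]\to\meas[\{x_1\}]$, not the vanishing of the atom). But \autoref{Cor:No atoms} is a consequence of \autoref{Th:Bishop Gromov}, whose proof itself relies on \autoref{Pr:TCD to TMCP}, so this reasoning is circular. The correct observation is that the second summand is nonnegative and can simply be \emph{dropped} from the distorted concavity inequality before taking any limit, which is exactly what the paper does in \eqref{Eq:INEQU UN}. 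This eliminates the need for any control on $\meas[U_n]$ or on the forward distortion coefficient, and removes the circularity entirely.

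Second, your claim that the $\ell_p$-geodesic $(\mu_t^n)$ supplied by \autoref{Def:TCDe} is ``represented by $\bdpi^n$'' is unjustified. In \autoref{Def:TCDe} the geodesic and the plan are \emph{independent} objects; \autoref{Le:Geodesics plan}\ref{La:05} shows $(\mu_t^n)$ is represented by \emph{some} plan, but there is no guarantee that plan is $\bdpi^n$. Consequently the pointwise narrow convergence $\mu_t^n\to(\eval_t)_\push\bdpi$ for \emph{all} $t$, which you assert, does not follow from the narrow convergence $\bdpi^n\to\bdpi$. The paper instead extracts a limit rough $\ell_p$-geodesic $(\mu_t)$ from $(\mu_t^n)$ directly via \autoref{Pr:Compactness lp geos} (yielding narrow convergence only at rational $t$), upgrades it to a narrowly continuous displacement $\ell_p$-geodesic by \autoref{Cor:Equiv notions lp geo}, passes to the limit in the inequality at rational $t$ using upper semicontinuity of $\scrU_N$ and \autoref{Le:LSC sigma}, and finally extends to all $t$ by continuity of the distortion coefficients and of $t\mapsto\mu_t$ (a step where the paper's preliminary reduction to continuous $k$ becomes relevant). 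You should replace the assertion about $\bdpi^n$ representing $(\mu_t^n)$ with this two-track limit argument.
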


\begin{proof} Let $\smash{\mu_0}$ and $\smash{\mu_1}$ be as in \autoref{Def:TMCP}. Since
\begin{align*}
\supp\mu_0\times\{x_1\}\subset\{l>0\},
\end{align*}
for sufficiently large $n\in\N$ \autoref{Cor:Strong causality} implies
\begin{align}\label{Eq:Suppvarepsilon}
\supp\mu_0 \times \bar{\sfB}^\met(x_1,2^{-n})\subset\{l>0\}.
\end{align}
As $x_1\in\supp\meas$, the uniform distribution $\smash{\mu_1^n\in\scrP^\ac(\mms,\meas)}$ of $\smash{\sfB^\met(x_1,2^{-n})}$ with respect to $\meas$ is well-defined. Moreover, by  \autoref{Cor:Hausdorff} and shifting $n$ if necessary we may and will assume it to have compact support. In particular, as everything will take place only on the  compact causal diamond between $\supp\mu_0$ and  $\smash{\bar{\sfB}^\met(x_1,1/2)}$, by \eqref{Eq:kappan} we may and will assume $k$ to be continuous on this set. The general case will easily follow by Levi's monotone convergence theorem and \autoref{Le:Properties}.

For every $n\in\N$, by \eqref{Eq:Suppvarepsilon} and \autoref{Ex:Str tl dual}, $\smash{\wTCD_p^e(k,N)}$ endows us with
\begin{itemize}
\item an $\smash{\ell_p}$-geodesic $\smash{(\mu_t^n)_{t\in[0,1]}}$ from $\smash{\mu_0^n := \mu_0}$ to $\smash{\mu_t^n}$, and
\item a plan $\smash{\bdpi^n\in\OptTGeo_p(\mu_0,\mu_1^n)}$
\end{itemize}
such that for every $0\leq t\leq 1$,
\begin{align}\label{Eq:INEQU UN}
\begin{split}
\scrU_N(\mu_t^n) &\geq \sigma_{k^-_{\bdpi^n}/N}^{(1-t)}(\cost_{\bdpi^n})\,\scrU_N(\mu_0) + \sigma_{k^+_{\bdpi^n}/N}^{(t)}(\cost_{\bdpi^n})\,\scrU_N(\mu_1^n)\\
&\geq \sigma_{k^-_{\bdpi^n}/N}^{(1-t)}(\cost_{\bdpi^n})\,\scrU_N(\mu_0).
\end{split}
\end{align}
Now we pass to the limit as $n\to\infty$. Clearly, $\smash{(\mu_1^n)_{n\in\N}}$ converges narrowly to $\mu_1$ and is uniformly compactly supported according to \autoref{Def:Unif cpt supp}. Combining \eqref{Eq:Suppvarepsilon}, \autoref{Cor:Cpt TGeo}, and Prokhorov's theorem as in the proof of \autoref{Le:Geodesics plan}, the sequence $\smash{(\bdpi^n)_{n\in\N}}$ has a narrow  limit $\smash{\bdpi\in\scrP(\Cont([0,1];\mms))}$ concentrated on  the set $G_r\subset\TGeo(\mms)$ from \autoref{Cor:Cpt TGeo} along a nonrelabeled subsequence,  where $r := \inf l(\supp\mu_0\times \{x_1\}) > 0$. Since $\smash{(\eval_0,\eval_1)_\push\bdpi}$ constitutes  the unique (chronological) coupling of $\mu_0$ and $\mu_1$, we obtain $\smash{\bdpi\in\OptTGeo_p(\mu_0,\mu_1)}$. This yields
\begin{align}\label{Eq:LSC dist coeff in proof}
\liminf_{n\to \infty} \sigma_{k_{\bdpi^n}^-/N}^{(1-t)}(\cost_{\bdpi^n}) \geq \sigma_{k_{\bdpi}^-/N}^{(1-t)}(\cost_\bdpi) 
\end{align}
for every $0\leq t\leq 1$ by \autoref{Le:LSC sigma}, as well as
\begin{align*}
\ell_p(\mu_0,\mu_1) &= \big\Vert l\circ(\eval_0,\eval_1)\big\Vert_{\Ell^p(\TGeo(\mms),\bdpi)}\\
&\leq \liminf_{n\to\infty} \big\Vert l\circ (\eval_0,\eval_1)\big\Vert_{\Ell^p(\TGeo(\mms),\bdpi^n)}\\
&= \liminf_{n\to\infty} \ell_p(\mu_0,\mu_1^n).
\end{align*}
The inequality follows from continuity of $\smash{l_+}$ and \cite[Lem.~4.3]{villani2009}. In turn, the latter inequality combined with \autoref{Pr:Compactness lp geos} entails the existence of a rough $\smash{\ell_p}$-geodesic $(\mu_t)_{t\in[0,1]}$ from $\mu_0$ to $\mu_1$ such that, up to a further nonrelabeled subsequence, $\smash{(\mu_t^n)_{n\in\N}}$ converges narrowly to $\mu_t$ every rational $0\leq t\leq 1$. For such $t$, invoking \autoref{Le:Zt lemma}, upper semicontinuity of $\scrU_N$, \eqref{Eq:INEQU UN}, and \eqref{Eq:LSC dist coeff in proof} yields
\begin{align*}
\scrU_N(\mu_t) \geq \limsup_{n\to\infty} \scrU_N(\mu_t^n) \geq \sigma_{k_\bdpi^-/N}^{(1-t)}(\cost_\bdpi)\,\scrU_N(\mu_0).
\end{align*}
But by \autoref{Cor:Equiv notions lp geo}, $\smash{(\mu_t)_{t\in [0,1]}}$ constitutes in fact a displacement $\smash{\ell_p}$-geodesic and in particular narrowly continuous. Employing the upper semicontinuity properties of $\scrU_N$ and time continuity of the involved distortion coefficients, ensured by continuity of $k$, the previous estimate extends to \emph{every} $0\leq t\leq 1$ too. Recalling \autoref{Re:Indep exp}, the proof is terminated.
\end{proof}

\subsection{Geometric consequences} With appropriate modifications, all geometric inequalities from \autoref{Sec:Geometric inequ} hold under our variable timelike measure contraction property. Instead of stating them in their nonsharp form, in the sequel we use localization to directly show their sharp versions. This is analogous to Chapters \ref{Ch:Localization} and \ref{Ch:Applications}, whence we only sketch the arguments and leave their details to the reader. Similarly to \autoref{Ass:TCD}, in this section we work in the following setting.

\begin{assumption}[Modification of \autoref{Ass:TCD}]\label{Ass:Mod} $\scrM$ is  a timelike $p$-essentially nonbranching $\TMCP^e(k,N)$ metric measure spacetime according to \autoref{Def:Metric measure spacetime} with $\supp\meas = \mms$.
\end{assumption}

The key is the following analog of \autoref{Th:Localization TCD}. Recall the definitions \eqref{Real bar} of the barred quantities appearing therein.

\begin{theorem}[Disintegration into $\MCP(k,N)$ needles] 
Assume \autoref{Ass:Mod}. Let $\q$ be a disintegration according to \autoref{Th:Needle}. Then the conditional measure  $\smash{\bar{\meas}_\alpha}$ is absolutely continuous with respect to the one-dimensional Lebesgue measure 
for $\q$-a.e.~$\alpha\in Q$, and its $\smash{\bar{\Leb}^1_\alpha}$-density $\smash{\bar{h}_\alpha}$ 
has a nonrelabeled $\smash{\bar{\Leb}^1_\alpha}$-version which is locally Lipschitz continuous and strictly positive on the open real interval $\smash{\bar{\mms}_\alpha}$, continuous on its closure, and satisfies the following inequality for every $t_0,t_1\in \bar{\mms}_\alpha$ with $t_0<t_1$ and every $0\leq s\leq 1$:
\begin{align}\label{EStated}
\bar{h}_\alpha((1-s)\,t_0 + s\,t_1)^{1/(N-1)} &\geq \sigma_{k_\alpha^-/(N-1)}^{(1-s)}(t_1-t_0)\,\bar{h}_\alpha(t_0)^{1/(N-1)}.
\end{align}
\end{theorem}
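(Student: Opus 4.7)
The argument parallels that of \autoref{Th:Localization TCD}, with the weaker $\TMCP^e(k,N)$ hypothesis yielding the one-dimensional $\MCP$-density inequality \eqref{EStated} — featuring only the backward distortion coefficient — in place of the two-sided $\CD$-density inequality.

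First, I will establish a pathwise characterization of $\TMCP^e(k,N)$ in the style of \autoref{Th:Pathwise}: for every $\mu_0 = \rho_0\,\meas \in \scrP_\comp^\ac(\mms,\meas)$ and every $x_1 \in \bigcap_{x \in \supp\mu_0} I^+(x) \cap \supp\meas$, the unique $\bdpi \in \OptTGeo_p(\mu_0, \delta_{x_1})$ — whose existence and uniqueness follow from the $\TMCP$-analogs of \autoref{Th:Optimal maps} and \autoref{Th:Uniqueness geos} — satisfies $(\eval_t)_\push\bdpi = \rho_t\,\meas$ for every $0 \leq t < 1$ with
\begin{equation*}
\rho_t(\gamma_t)^{-1/N} \geq \sigma_{k_\gamma^-/N}^{(1-t)}(\vert\dot\gamma\vert)\,\rho_0(\gamma_0)^{-1/N}
\end{equation*}
holding $\bdpi$-a.e. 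The proof is the same Jensen-type argument as in \autoref{Th:Pathwise}; since $\scrU_N(\delta_{x_1}) = 0$, only the backward term of the $\TMCP^e(k,N)$ defining inequality survives, and the forward distortion coefficient drops out of the convexity estimate.

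Next, I will run the localization proof of \autoref{Th:Localization TCD} almost verbatim in this setting. After reducing via \eqref{Eq:kappan} to continuous $k$ on a precompact, $l$-geodesically convex $E$, one partitions $Q$ into Suslin subsets sharing a common parametrization interval $(a_0, a_1)$ and fixes $a_0 < A_0 < A_0 + L_0 < A_1 < a_1$. Letting $\mu_0$ be the $\q$-averaged uniform distribution on $[A_0, A_0+L_0]$ along the rays (with $\meas$-density given by the analog of \eqref{Eq:Formula Density 1}), I apply the pathwise $\TMCP$ bound above to the transport of $\mu_0$ toward the Dirac at $\ray(\alpha, A_1)$. Timelike $p$-essential nonbranching (via \autoref{Le:Mutually singular} and uniqueness of displacement $\ell_p$-geodesics) forces mass to flow along individual rays, producing a ray-wise midpoint $\MCP$ inequality for $\bar h_\alpha$ analogous to \eqref{Eq:INEQU!} but with only the single backward coefficient. \autoref{Le:Local comp} and the scaling trick with optimized $L_0$ then upgrade the exponent from $1/N$ to $1/(N-1)$. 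Globalization of the resulting local midpoint inequality proceeds by a variable adaptation of the one-dimensional $\MCP$-density machinery in \cite[Lem.~A.3]{cavalletti2021} and \cite[Lem.~5.1, Thm.~5.2]{cavstu}, yielding \eqref{EStated}. The reduction from non-precompact $E$ and lower semicontinuous $k$ to the previous case is identical to the end of the proof of \autoref{Th:Localization TCD}.

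The principal obstacle is the second step: extracting a ray-wise inequality from a test transport whose target is a single Dirac $\delta_{x_1}$ that may not lie on any of the rays supporting $\mu_0$. The cleanest resolution is an $\varepsilon$-approximation in which $\delta_{x_1}$ is replaced by the uniform $\meas$-absolutely continuous distribution $\mu_1^\varepsilon$ on a small $\met$-ball around $x_1$; strong timelike $p$-dualizability (\autoref{Ex:Str tl dual}) together with uniqueness of displacement $\ell_p$-geodesics (\autoref{Th:Uniqueness geos}) and the narrow lower semicontinuity of distortion coefficients (\autoref{Le:LSC sigma}) then permits passage to the limit $\varepsilon \to 0$. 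Once this point is handled, the midpoint-to-global and reduction steps are routine under \autoref{Ass:Mod}.
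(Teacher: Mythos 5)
Your high-level outline (a pathwise version of $\TMCP^e(k,N)$ à la \autoref{Th:Pathwise}, followed by the localization machinery of \autoref{Th:Localization TCD}) is the right shape and consistent with the paper's brief sketch. The gap lies in the localization step.

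You propose to transport the $\q$-averaged slab measure $\mu_0$ ``toward the Dirac at $\ray(\alpha,A_1)$'' and invoke nonbranching to conclude that mass ``flows along individual rays.'' This conflates two different families of geodesics. First, $\ray(\alpha,A_1)$ depends on $\alpha$, so the target is not a single Dirac mass — and $\TMCP^e(k,N)$ only applies with a Dirac as one endpoint. Second, if instead one fixes a single $x_1$ and transports $\mu_0$ to $\delta_{x_1}$, the $\ell_p$-optimal geodesics all converge to $x_1$ and are \emph{not} the transport rays $\mms_\alpha$ of the needle decomposition (except degenerately, for the $\mu_0$-negligible mass already on the one ray through $x_1$). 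So the pathwise $\TMCP$ bound controls the full-dimensional interpolating densities $\rho_t$ along those convergent geodesics, not along $\mms_\alpha$; the identities \eqref{Eq:Formula Density 1}--\eqref{Eq:Formula Density 2} linking $\rho_t$ to $\bar h_\alpha$ are specific to the \emph{needle} transport $\mu_0 \to \mu_1$ (both slab measures) and simply do not hold for the Dirac transport you are invoking.

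Your proposed repair --- replacing $\delta_{x_1}$ by the uniform distribution on a small metric ball and passing to the limit using \autoref{Le:LSC sigma} --- does not touch this difficulty, because the geodesics to any target near a single point still fail to be the needle rays. What the argument actually needs is a one-sided pathwise estimate for the needle transport itself (i.e.\ for the slab-to-slab displacement $\ell_p$-geodesic from $\mu_0$ to $\mu_1$, whose interpolants satisfy \eqref{Eq:Formula Density 2}); extracting that from the Dirac-target $\TMCP^e(k,N)$ condition is the nontrivial point that ``along the same lines as \autoref{Th:Localization TCD}'' compresses, and it is missing from your proposal.
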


\begin{proof} As \autoref{Th:Needle} assumes the stronger  $\smash{\TCD_p^e(k,N)}$ condition, we first have to prove the existence of  a disintegration $\q$ satisfying its conclusions. However, the curvature assumption only enters its proof only through \autoref{Th:All negligible} (negligibility of bad points). In turn, the latter result only assumes the setting of \autoref{Le:Size}, which is in place under \autoref{Ass:Mod} as noted in the previous section.

The proof of the claimed absolute continuity of the conditional measures is then analogous to \cite[Prop.~4.16]{cavalletti2020}. The inequality \eqref{EStated} is shown along the same lines as for \autoref{Th:Localization TCD}; see  \cite[Thm.~4.18]{cavalletti2020} for the argument for  constant $k$.
\end{proof}

Following the lines of \autoref{Th:Sharp BMink}, the subsequent result is then shown. Recall \eqref{Eq:At'} for the definition of the set 
\begin{align*}
    A_t := \eval_t\big[G(A_0,A_1)\big]
\end{align*}
of $t$-intermediate points of $l$-geodesics from  $A_0$ to $A_1$, and \autoref{Def:tau dist coeff} for the definition of the averaged distortion coefficients appearing below.

\begin{theorem}[Sharp timelike Brunn--Minkowski inequality II]\label{Th:BMINKTMCP} Let \autoref{Ass:Mod} hold. Let $A_0\subset\mms$ be a Borel set with positive and finite $\meas$-measure. Assume $x \ll x_1$ for every $x$ in the closure of $A_0$. Define $\mu_0,\mu_1\in\Prob(\mms)$ by
\begin{align*}
\mu_0 &:= \meas[A_0]^{-1}\,\meas\mres A_0,\\
\mu_1 &:= \delta_{x_1}.
\end{align*}
Let $\bdpi$ denote the only element of $\smash{\OptTGeo_p(\mu_0,\mu_1)}$. Then for every $0\leq t\leq 1$,
\begin{align*}
\meas[A_t]^{1/N}\geq \meas\big[\!\supp(\eval_t)_\push\bdpi\big]^{1/N} \geq \inf\tau_{k_\gamma^-/N}^{(1-t)}(\vert\dot\gamma\vert)\,\meas[A_0]^{1/N},
\end{align*}
where the infimum is taken over all $\gamma\in\supp\bdpi$.

The inequality holds a fortiori if the infimum is taken over all $\gamma\in G(A_0,A_1)$.
\end{theorem}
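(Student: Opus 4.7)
The proof will mirror that of \autoref{Th:Sharp BMink}, but two adaptations are forced by the target measure $\mu_1=\delta_{x_1}$ being Dirac rather than absolutely continuous: a by-hand setup of the localization (we cannot invoke the mean-zero tool \autoref{Th:Mean zero}), and the use of one-sided $\MCP$-type needles instead of two-sided $\CD$-type needles. Concretely, I would take $E := I^-(x_1)\cup\{x_1\}$ and $u\colon E\to\R$ defined by $u(x):=-l(x,x_1)$. The set $E$ is $l$-geodesically convex by push-up (\autoref{Le:Pushup openness}), and the reverse triangle inequality \eqref{Eq:Reverse tau} shows that $u$ is Borel and satisfies the 1-steep condition \eqref{Eq:Gleichung}. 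Since the closure of $A_0$ lies in $I^-(x_1)$ by hypothesis and every $x\in I^-(x_1)$ obeys $x\preceq_u x_1$, we have $\supp\mu_0\subset\Tr_u^\End$, so $A_0$ meets $\Tr_u$ up to $\meas$-nullity by \autoref{Th:All negligible}.

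Next I would run the disintegration machinery of \autoref{Ch:Localization} in our TMCP context. The hypotheses of \autoref{Th:Needle} and \autoref{Pr:Abs cont needles} only require the negligibility of bad points (\autoref{Th:All negligible}), which holds under \autoref{Ass:Mod} via \autoref{Re:Abt hypothesis} and the TMCP analog of \autoref{Th:Optimal maps} (which, as noted after \autoref{Pr:TCD to TMCP}, holds automatically for every $0<p<1$ thanks to \autoref{Re:Indep exp}). This yields a disintegration $\meas\mres\Tr_u=\int_Q\meas_\alpha\d\q(\alpha)$ with $\q$-a.e.~$\meas_\alpha$ absolutely continuous with density $\bar h_\alpha$. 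Applying the $\MCP$-localization theorem stated immediately above \autoref{Th:BMINKTMCP}, each $\bar h_\alpha$ satisfies the one-sided inequality \eqref{EStated}, i.e.~is an $\MCP(k_\alpha,N)$ density on $\bar\mms_\alpha$.

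By construction, every $x\in\Tr_u^\End\setminus\{x_1\}$ is $\preceq_u$-related to $x_1$; hence for $\q$-a.e.~$\alpha$ the needle $\mms_\alpha$ is a sub-segment of an $l$-geodesic heading to $x_1$, and $x_1$ corresponds to the right endpoint of the closure of $\bar\mms_\alpha$. The unique plan $\bdpi\in\OptTGeo_p(\mu_0,\mu_1)$, whose existence and uniqueness follow from the TMCP analog of \autoref{Cor:Lifting nonbr}, transports each $x\in\supp\mu_0$ along the unique $l$-geodesic ending at $x_1$; this transport decomposes cleanly over needles, with $A_{0,\alpha}:=A_0\cap\mms_\alpha$ being contracted one-dimensionally to the endpoint of $\bar\mms_\alpha$. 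Writing $A_{t,\alpha}$ for the $t$-intermediate set of this one-dimensional contraction, we have $A_{t,\alpha}\subset\supp(\eval_t)_\push\bdpi\cap\mms_\alpha\subset A_t$.

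The crux is then the one-dimensional Brunn--Minkowski inequality for an $\MCP(k_\alpha,N)$ density contracting to an endpoint, which is a variable-coefficient adaptation of \cite[Lem.~3.2]{cmgeometric2017}: integrating \eqref{EStated} in the displacement parameter and changing variables to the intermediate points yields
\[
\meas_\alpha[A_{t,\alpha}]^{1/N} \geq \inf_{\gamma\in\supp\bdpi}\tau_{k_\gamma^-,N}^{(1-t)}(\vert\dot\gamma\vert)\,\meas_\alpha[A_{0,\alpha}]^{1/N},
\]
where the infimum may be taken uniformly over $\gamma$ as in \autoref{Th:Sharp BMink}. Integrating this against $\q$, using disintegration, and invoking $\meas[A_0\setminus\Tr_u]=0$ produces the claimed bound on $\meas[\supp(\eval_t)_\push\bdpi]$, hence on $\meas[A_t]$. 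The main obstacle I anticipate is the careful handling of the ``right endpoint'' $x_1$ in the one-dimensional step: $x_1$ itself is an extremal point of $\Tr_u^\End$ and lies outside $\bar\mms_\alpha$, so the contraction to $x_1$ must be carried out as a limit, relying on the continuity of $\bar h_\alpha$ on the closure of $\bar\mms_\alpha$ asserted in the localization theorem, and on the identification of the endpoint parameter with $l(\cdot,x_1)$ via the ray map of \autoref{Cor:Ray map}.
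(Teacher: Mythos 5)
Your proposal is correct and matches the approach the paper has in mind; the paper itself only states that \autoref{Th:BMINKTMCP} ``is then shown'' by ``following the lines of \autoref{Th:Sharp BMink}'' without spelling out the localization set-up, and you correctly supply the missing piece. You are right that the mean-zero engine of \autoref{Th:Mean zero} cannot be invoked, because $\mu_1=\delta_{x_1}$ fails to be $\meas$-absolutely continuous; the substitute $E$, $u$ you build is exactly the causal reversal of \autoref{Ex:CM} with $V=\{x_1\}$ (equivalently, the past cut-locus data of \autoref{Ex:Cutlocus}), and here $u=-l(\cdot,x_1)$ is an explicit Kantorovich potential for the unique coupling $\mu_0\otimes\delta_{x_1}$, so the detour through Kantorovich duality is unnecessary. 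The identification of the needle as an $\MCP(k_\alpha,N)$ density via the unnumbered disintegration theorem, the one-sided 1D Brunn--Minkowski estimate obtained by integrating \eqref{EStated}, and the final integration against $\q$ are all sound, and your flag of the ``right endpoint'' limit is the genuine technical point: it is exactly where the asserted continuity of $\bar h_\alpha$ on $\cl\,\bar\mms_\alpha$ is used.

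One small gap to close: as written, $E=I^-(x_1)\cup\{x_1\}$ is in general noncompact, so the existence of a weight function $w$ as required by \autoref{Sub:Framework} is not automatic (\autoref{Ass:Mod} does not assume properness of $\mms$, and $\meas[E]$ may be infinite). This is repaired precisely as at the start of the proof of \autoref{Th:Sharp BMink}: first assume $A_0$ compact by inner regularity, then replace $\mms$ by the compact emerald $J(\bar A_0,\{x_1\})$, on which $\meas$ is finite; one may then take $E:=J(\bar A_0,\{x_1\})$ itself (which is causally convex and contains $\supp\mu_0\cup\{x_1\}$) and $w:=\meas[\Tr_u^{\End}]^{-1}\,\One_{\Tr_u^{\End}}$. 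After this reduction, the rest of your argument goes through unchanged.
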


In turn, the conclusions of the
\begin{itemize}
\item sharp timelike Bonnet--Myers inequality (\autoref{Cor:111}),
\item sharp timelike Schneider inequality (\autoref{Cor:Sharp Schneider}),
\item sharp timelike Bishop--Gromov inequality (\autoref{Cor:222}), and
\item sharp dimension bound (\autoref{Cor:Hausdorffdim+}) 
\end{itemize}
remain unchanged under the weaker \autoref{Ass:Mod}: their proofs all employ the timelike Brunn--Minkowski inequality in its version from \autoref{Th:BMINKTMCP}.

Lastly, \autoref{Th:Hawking} translates into the following analog.  Indeed, the estimate from \autoref{Cor:Comparison} used in its proof remains valid under \eqref{EStated}.

\begin{theorem}[Synthetic Hawking-type singularity theorem II]  Let \autoref{Ass:Mod} hold, and assume $\mms$ is proper. Let $V\subset\mms$ be an achronal FTC Borel set such that $\meas[V]=0$ and $\hh_0[V]<\infty$. We assume the  forward mean curvature of $V$ is bounded from above by $-\beta$, where $\beta > 0$. Let $T>0$ and $\delta > 0$  obey
\begin{align*}
\hh_0[V]^{-1}\int_{V^{0,T}} k_-\d\meas < c\,\Big[\beta - \frac{N-1}{T}\Big],
\end{align*}
where
\begin{align*}
c := \q\textnormal{-}\!\!\essinf_{\alpha\in \sfQ(V)} \Big[\frac{\SIN_{-\varkappa_{0,T+\delta}^-/(N-1)}(\delta)}{\SIN_{-\varkappa_{0,T+\delta}^-/(N-1)}(T+\delta)}\Big]^{N-1}.
\end{align*}
Then we have
\begin{align*}
\hh_0\big[V\setminus\Reg_{T+\delta}\big] > 0.
\end{align*}

In other words, there exists a transport ray starting in $V$ whose maximal domain of definition in $\smash{I^+(V)\cup V}$ is strictly contained in $[0,T+\delta)$.
\end{theorem}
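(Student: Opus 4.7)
My plan is to reproduce the argument for \autoref{Th:Hawking} essentially verbatim, observing that everything there only uses the \emph{one-sided} density comparison \eqref{EStated} rather than the full $\CD(k,N)$ inequality, and that the disintegration machinery of \autoref{Ch:Localization} has just been shown to be available under \autoref{Ass:Mod}.

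First I would set up the needle decomposition. Applying the disintegration theorem for $\TMCP^e(k,N)$ stated just above, with $E := I^+(V)\cup V$ and $u := \tau_V$ as in \autoref{Ex:CM}, I obtain a $\q \in \Prob(Q)$ and conditional measures $\meas_\alpha = h_\alpha\,\scrH^1\mres\mms_\alpha$ such that $\bar h_\alpha$ is locally Lipschitz, strictly positive on $\bar\mms_\alpha$, and satisfies \eqref{EStated}. As in the proof of \autoref{Pr:Segment}, this one-sided inequality alone suffices to derive the lower bound
\begin{align*}
\bar h_\alpha(t) \geq \Big[\frac{\SIN_{-\varkappa_{0,T+\delta}^-/(N-1)}(T+\delta-t)}{\SIN_{-\varkappa_{0,T+\delta}^-/(N-1)}(T+\delta)}\Big]^{N-1} \bar h_\alpha(0)
\end{align*}
for $\q$-a.e.~$\alpha \in \Quot(A)$ with $A \subset \Reg_{T+\delta}$, so the segment inequality of \autoref{Pr:Segment} holds unaltered under \autoref{Ass:Mod}.

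Next, arguing by contradiction, I would assume $\hh_0[V\setminus\Reg_{T+\delta}] = 0$, so that $A := \Reg_{T+\delta}\cap V$ has full $\hh_0$-measure in $V$. Applying the segment inequality to $\psi := k_-$ together with the hypothesis \eqref{Eq:HYL} produces, for every small $\varepsilon > 0$, a Borel subset $A' \subset A$ of positive $\hh_0$-measure on which $F_{k_-,T}(x) \leq L + \varepsilon$, where $L := c^{-1}\hh_0[V]^{-1}\int_{V^{0,T}} k_-\d\meas$. For a bounded nonnegative test function $\phi$ with compact support in $A'$ and $\sup\phi < T$, I would expand the volume of the normal variation $V_{t,\phi}$ exactly as in \eqref{Eq:Heree}, using \eqref{EStated} to dominate $\bar h_\alpha(s\phi\circ f_0(\alpha))/\bar h_\alpha(0)$ from below by the ratio of generalized sines associated with $-\varkappa_{0,T}^-$, and then performing the same Taylor expansion at $t = 0$ together with the ODE for $v := \SIN_{-\varkappa_{0,T}^-/(N-1)}$. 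The inequality $v(T) \geq T$ from Sturm comparison and the bound on the worldvolume integral along needles yield
\begin{align*}
(N-1)\,v'(T)/v(T) \leq (N-1)/T + L + \varepsilon.
\end{align*}

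Finally, invoking the \autoref{Def:Mean curv} upper bound on the forward mean curvature of $V$, Fatou's lemma, and canceling $\int_V\phi^2\,\rmd\hh_0$, I would arrive at $L + \varepsilon \geq \beta - (N-1)/T$, contradicting \eqref{Eq:HYL} for $\varepsilon$ small enough. The main obstacle is really only a bookkeeping one: verifying that the absolute continuity of conditional measures (\autoref{Pr:Abs cont needles}) and the needle decomposition (\autoref{Th:Needle}) genuinely extend to the $\TMCP$ setting. This has been recorded in the appendix preceding the statement, since the proofs of those results only invoke curvature through \autoref{Th:All negligible}, whose hypothesis (good transport behavior from \autoref{Le:Size}) is available under \autoref{Ass:Mod} thanks to the $\TMCP$-versions of \autoref{Th:Optimal maps} and \autoref{Cor:Lifting nonbr} noted just above.
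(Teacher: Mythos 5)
Your proposal is correct and matches the paper's intended argument exactly: the paper's own proof is the one-sentence observation that the comparison estimate from \autoref{Cor:Comparison} needed in the proof of \autoref{Th:Hawking} requires only the one-sided lower bound, which follows from \eqref{EStated}, together with the availability of the needle decomposition under \autoref{Ass:Mod} established in the preceding $\MCP$-disintegration theorem. You have merely fleshed out the bookkeeping the paper leaves implicit.
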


\section{Limit curves; lifting causality from events to measures}
\label{Ch:Loose}

In this appendix, we prove \autoref{Th:Compact Polish} (enhanced topological properties of $\mms$) and the limit curve \autoref{Th:Limit curve theorem} for general strictly causal curves. Both proofs rely on linking our approach from \autoref{Ch:Metric} to the one of \emph{bounded Lorentzian metric spaces} from Minguzzi--Suhr \cite{minguzzi2022}, to which we refer for precise definitions as we will only borrow several results from \cite{minguzzi2022}.  Moreover, inspired by \cite{eckstein2017} we study causality --- most notably, global hyperbolicity --- for the space $\scrP(\mms)$.

In the following, only our first standing \autoref{Ass:GHLLS} is needed. The second-countability of $\Top$ will not be required.

\subsection{From metric spacetimes to bounded Lorentzian metric spaces} The bounded Lorentzian metric spaces we construct are 
quotients \cite[Sec.~1.3]{minguzzi2023} of causal diamonds in $\mms$. To do so, fix $a,b\in \mms$ with $a\ll b$. Then
\begin{itemize}
\item $X := J(a,b)$ is a space with relative topology inherited from $\Top$, and 
\item the restriction $\smash{\tsep := l_+\big\vert_{X^2}}$ to $X^2$ is continuous, finite by \autoref{Cor:Finiteness l}, has compact superlevel sets, and satisfies the reverse triangle inequality \eqref{Eq:Reverse tau}.
\end{itemize}
Hence, we are in the setting of \cite[Sec.~1.3]{minguzzi2022}. 
Following their procedure, we define an equivalence relation $\sim$ on $X$ by $x\sim y$ if $\tsep(x,z) = \tsep(y,z)$ and $\tsep(z,x) = \tsep(z,y)$ for every $z\in X$. Let $\smash{\tilde{X}}$ denote the quotient space with quotient topology $\smash{\tilde\Top}$, and let $\smash{\pi\colon X\to \tilde{X}}$ denote the canonical projection given by $\pi(x) := \tilde{x}$. The function $\smash{\tilde{\tsep}\colon \tilde{X}^2\to [0,\infty)}$ given by $\smash{\tilde{\tsep}(\tilde{x}, \tilde{y}) := \tsep(x,y)}$ is well-defined and continuous.

The first part of the following lemma is then easy to deduce. The crucial point for us lies in the second statement.

\begin{lemma}[Quotienting out undistinguished points]
\label{Le:BLMS} The tuple $\smash{(\tilde{X}, \tilde{\tsep})}$ constitutes a bounded Lorentzian metric space in the sense of \cite[Def.~1.1]{minguzzi2022}. 
Moreover, the restriction of $\pi$ to $I(a,b)$ is a homeomorphism onto $\pi(I(a,b))$.
\end{lemma}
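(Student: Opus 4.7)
For the first assertion, I would verify directly that $(\tilde{X}, \tilde{\tsep})$ meets each clause of \cite[Def.~1.1]{minguzzi2022}: boundedness and continuity of $\tilde{\tsep}$ descend from the corresponding properties of $\tsep$ on the compact set $X^2 = J(a,b)^2$ (using \autoref{Cor:Finiteness l} for real-valuedness); the reverse triangle inequality is inherited from the one on $\mms$ via \autoref{Def:signed time sep}; distinction of points in $\tilde{X}$ under $\tilde{\tsep}$ is built into the quotient by construction; and the compact-diamond axiom follows from compactness of $X$ within $\mms$ via \autoref{Ass:GHLLS}. This part is essentially automatic from the setup.

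The interesting claim is that $\pi\big|_{I(a,b)}$ is a homeomorphism onto its image. My plan is to prove the stronger statement that for every $x \in I(a,b)$, the $\sim$-equivalence class $[x]_{\sim}$ reduces to $\{x\}$. Granting this, injectivity of $\pi\big|_{I(a,b)}$ is immediate, and openness follows at once from the quotient construction: for every relatively open $U \subset I(a,b)$ (equivalently, open in $X$, since $I(a,b)$ is open in $X$ by \autoref{Le:Pushup openness}), one has $\pi^{-1}(\pi(U)) = \bigcup_{x \in U} [x]_{\sim} = U$, so $\pi(U)$ is open in $\tilde{X}$ by definition of the quotient topology, and a fortiori in the subspace topology on $\pi(I(a,b))$. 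Continuity of $\pi\big|_{I(a,b)}$ being automatic, a continuous injective open map onto its image is a homeomorphism.

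To establish $[x]_{\sim} = \{x\}$ for $x \in I(a,b)$, I fix such an $x$ and take any $y \in X$ with $y \sim x$. Nontrivial chronology (\autoref{Le:Sequ lemma}) together with openness of $I(a,b)$ yields sequences $(a_i)_{i \in \N}$ and $(b_i)_{i \in \N}$ with $a_i \uparrow x$ and $b_i \downarrow x$, and lying in $I(a,b)$ for all sufficiently large $i$. The defining property of $\sim$ then gives $\tsep(a_i, y) = \tsep(a_i, x) > 0$ and $\tsep(y, b_i) = \tsep(x, b_i) > 0$, hence $a_i \ll y \ll b_i$, so $y \in I(a_i, b_i)$ for every large enough $i$. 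Strong causality (\autoref{Cor:Strong causality}), in the equivalent form supplied by \autoref{Le:Str caus equiv}, then forces the chronological diamonds $I(a_i, b_i)$ to be eventually contained in every prescribed neighborhood of $x$; combined with the Hausdorff property of $\Top$ (\autoref{Cor:Hausdorff}), this squeezes $y$ down to $x$.

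The principal technical obstacle is the mismatch between $\sim$, which only compares $\tsep$-values on pairs lying \emph{inside} $X = J(a,b)$, and the full causal-topological structure of $\mms$ needed to invoke global distinction such as \autoref{Cor:Distinction}. The saving observation is that for $x$ lying in the chronological \emph{interior} $I(a,b)$, the strongly causal basis of chronological diamonds at $x$ can be arranged to sit entirely inside $X$, so the partial information carried by $\sim$ is strong enough to locate $y$ in arbitrarily small neighborhoods of $x$ and hence to conclude $y = x$ by Hausdorffness. Everything else in the proof is bookkeeping with the quotient topology.
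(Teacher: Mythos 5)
Your proof is correct and takes a genuinely different route from the paper. For the injectivity/saturation step the paper shows $I^+(x)=I^+(y)$ globally (comparing $y$ with chronological-future points of $x$ via an intermediate point $\gamma_t$ lying inside $X$) and then invokes \autoref{Cor:Distinction}; for openness it uses that $\pi$ is a closed map because $X$ is compact and $\tilde X$ is Hausdorff. You instead prove the single-point-class statement $[x]_\sim=\{x\}$ directly by trapping $y$ in a shrinking local base $\{I(a_i,b_i)\}_i$ around $x$ and using Hausdorffness, and you then derive openness purely from the quotient topology via the saturation identity $\pi^{-1}(\pi(U))=U$. Your route has two small advantages. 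First, your argument handles arbitrary $y\in X$ from the outset (the point $y$ automatically lands in $I(a,b)$ because $a\le a_i\ll y\ll b_i\le b$), whereas the paper's injectivity argument is stated only for $x,y$ both in $I(a,b)$, leaving the identity $\pi^{-1}(\pi(I(a,b)))=I(a,b)$ needed for the closed-map step implicit. Second, proving openness from the quotient topology directly is arguably more transparent than the compactness/closed-map route. On the other hand, the paper's use of \cite[Prop.~1.18]{minguzzi2022} for the first assertion is less work than a direct verification of each clause of \cite[Def.~1.1]{minguzzi2022} (in particular the distinction and sublevel-set compactness axioms require a little care); your sketch of this part is essentially right but would need expanding. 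One small precision point: the fact you want is that under strong causality the diamonds $I(a_i,b_i)$ form a local base at $x$ for $\Top$; this comes from the unnamed remark following \autoref{Le:Str caus equiv} (equality of $\Alex$, $\Inter$, and $\Top$) together with the local base described in \autoref{Def:Inter}, rather than from \autoref{Le:Str caus equiv} itself, which is the "no almost-closed causal curves" characterization. The conclusion is unchanged.
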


\begin{proof} The first statement follows from \cite[Prop.~1.18]{minguzzi2022}.

We now show the second claim. Clearly, $\pi$ is surjective and continuous on $X$, hence its restriction to $I(a,b)$ is. 

We now claim injectivity of $\pi$ on $I(a,b)$. Let $x,y\in I(a,b)$ with $\tilde{x} = \tilde{y}$. Owing to the nontrivial chronology in \autoref{Def:LLS}, let $\smash{\gamma\colon [0,1]\to \mms}$ be a timelike curve with midpoint $x$. Let $z\in \mms$ [sic] obey $x\ll z$. Since $I(a,b)$ is open and the curve $\smash{\gamma}$ is continuous, we have $\smash{\gamma_t\in X}$ for $0\leq t\leq 1$ sufficiently close to $1/2$. Up to moving $t$ closer to $1/2$, by continuity of $\smash{l_+}$ we have $l(\gamma_t, z)>0$. Then the reverse triangle inequality \eqref{Eq:Reverse tau} and our assumption $\tilde{x}=\tilde{y}$ imply
\begin{align*}
l(y,z) \geq l(y,\gamma_t) + l(\gamma_t, z) = l(x,\gamma_t) + l(\gamma_t, z) >0,
\end{align*}
and thus $y\ll z$. Reversing the roles of $x$ and $y$ and working with a timelike curve passing through $y$ thus leads to $\smash{I^+(x) = I^+(y)}$, where both chronological futures are understood globally with respect to $l$.  \autoref{Cor:Distinction} then implies $x=y$, hence the claimed injectivity of the restriction of $\pi$ to $I(a,b)$.

Finally, as $X$ is compact, $\smash{\tilde{X}}$ is Hausdorff as a Polish space, and $\pi$ is continuous, the latter is a closed map. Thus the restriction of $\pi$ to $\smash{\pi^{-1}(\pi(I(a,b))) = I(a,b)}$ is relatively closed, which implies homeomorphy.
\end{proof}

\subsection{Proof of \autoref{Th:Compact Polish}} Before entering the proof of the mentioned result, we restate it for convenience.

\begin{theorem}[Polish compact sets] \autoref{Ass:GHLLS} implies the relative topology of $\Top$ on every compact subset $C\subset\mms$ is Polish.
\end{theorem}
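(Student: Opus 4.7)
The plan is to reduce to Urysohn's metrization theorem. Since $\Top$ is Hausdorff by \autoref{Cor:Hausdorff}, the compact set $C$ is compact Hausdorff in the relative topology. A compact Hausdorff space is metrizable if and only if it is second-countable, and any compact metric space is automatically Polish. Hence it suffices to show $C$ is second-countable.

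First, I would cover $C$ by finitely many chronological diamonds. Given any $x\in C$, nontrivial chronology (\autoref{Def:LLS}) yields a timelike curve $\gamma\colon[0,1]\to\mms$ with $\gamma_{1/2}=x$, so for small $\delta>0$ one has $x\in I(\gamma_{1/2-\delta},\gamma_{1/2+\delta})$, which is open by \autoref{Le:Pushup openness}. Compactness of $C$ produces points $a_1,b_1,\dots,a_n,b_n\in\mms$ with $a_i\ll b_i$ for every $i=1,\dots,n$ and
\begin{align*}
C\subset \bigcup_{i=1}^n I(a_i,b_i).
\end{align*}
It is therefore enough to show that each $I(a_i,b_i)$ is second-countable in the relative topology of $\Top$, since a finite union of second-countable subsets of a topological space is second-countable.

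Next, for each fixed $i$ I would apply \autoref{Le:BLMS} to $(a,b):=(a_i,b_i)$. This produces a bounded Lorentzian metric space $(\tilde X_i,\tilde\tsep_i)$ in the sense of Minguzzi--Sämann \cite{minguzzi2022} together with a quotient map $\pi_i\colon J(a_i,b_i)\to\tilde X_i$ whose restriction to $I(a_i,b_i)$ is a homeomorphism onto $\pi_i(I(a_i,b_i))$. Bounded Lorentzian metric spaces are Polish in a natural metric topology inherited from $\tilde\tsep_i$ (see \cite{minguzzi2022}, where the associated metric is shown to induce this Polish topology), and in particular second-countable. Consequently $\pi_i(I(a_i,b_i))$, being a subspace of a second-countable space, is second-countable, and by the homeomorphism so is $I(a_i,b_i)$.

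Combining the last two paragraphs, $C$ is a second-countable compact Hausdorff space, hence metrizable by Urysohn's theorem and therefore Polish. The main technical point is the comparison of topologies in the previous step: one must verify that the quotient topology on $\pi_i(I(a_i,b_i))$ really coincides with the subspace topology inherited from the Polish metric structure on $\tilde X_i$, which is precisely the content of \autoref{Le:BLMS} and the relevant results from \cite{minguzzi2022}. All other ingredients (finite covering, Urysohn metrization, second-countability under finite unions) are classical.
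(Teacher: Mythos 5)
Your proof is correct and uses the same key ingredient as the paper — namely \autoref{Le:BLMS} together with \cite[Thm.~1.9]{minguzzi2022} to obtain metrizability (hence second-countability) of each chronological diamond $I(a,b)$ — but it takes a genuinely different route from there to the conclusion. The paper first upgrades local metrizability of $\mms$ to metrizability of $C$ via Smirnov's metrization theorem (paracompact $+$ Hausdorff $+$ locally metrizable $\Rightarrow$ metrizable), then uses compactness $\Rightarrow$ Lindelöf $\Rightarrow$ second-countable $\Rightarrow$ Polish. You instead cover $C$ by finitely many chronological diamonds (exactly the covering used in \autoref{Cor:K-GH}), observe that a finite union of \emph{open} second-countable subspaces is second-countable, and then close the loop with Urysohn's metrization theorem plus the fact that compact metric spaces are Polish. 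Your route is a bit more elementary in that it avoids Smirnov entirely; the paper's route, by contrast, goes through the stronger intermediate conclusion that $\mms$ itself is locally metrizable.

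One small imprecision worth fixing: the statement ``a finite union of second-countable subsets of a topological space is second-countable'' is false in general; it does hold for a finite union of \emph{open} second-countable subspaces, which is what you actually use since the $I(a_i,b_i)$ are open. Adding the word ``open'' there makes the step airtight. Everything else checks out: the covering argument, the invocation of \autoref{Le:BLMS} and \cite[Thm.~1.9]{minguzzi2022}, and the Urysohn/compact-metric finish are all sound.
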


\begin{proof} We first claim that the relative topology on $I(a,b)$ is metrizable for every $a,b\in\mms$. Let $\smash{(\tilde{X},\tilde{\tsep})}$ as in \autoref{Le:BLMS}. By \cite[Thm.~1.9]{minguzzi2022}, the latter is a Polish space, which immediately implies metrizability of $I$ by \autoref{Le:BLMS} (e.g.~by taking the pullback metric of $\pi$ under a complete metric $\smash{\tilde{\met}}$ inducing the target topology).

Nontrivial chronology shows $\mms$ is locally metrizable --- every point in $\mms$ has a neighborhood which is metrizable with respect to the subspace topology. Since the latter is Hausdorff by \autoref{Cor:Hausdorff}, and $C$ is clearly paracompact, Smirnov's metrization theorem \cite[Thm.~4.4.19]{engelking} implies metrizability of the relative topology of $\Top$ on the set $C$.

Now the argument is as for \cite[Thm.~1.9]{minguzzi2022}. Since $C$ is compact, it is Lindelöf. Together with metrizability, this implies second-countability of $C$. As every second countable space which is also Hausdorff and locally compact is Polish, the claim follows from \autoref{Cor:Hausdorff} again.
\end{proof}

As for \autoref{Cor:Finiteness l}, we stress that the previous proof only requires $\smash{l_+}$ to be continuous near the diagonal of $\smash{\mms^2}$.

\subsection{Proof of \autoref{Th:Limit curve theorem}} Now we turn to our general limit curve theorem. For convenience, we restate it here as well.

\begin{theorem}[Limit curve theorem]  Assume \autoref{Ass:GHLLS}. Let $C_0,C_1\subset\mms$ be compact subsets, and let $\smash{(\gamma^n)_{n\in\N}}$ be a sequence of strictly causal curves on $[0,1]$ such that $\smash{\gamma_0^n \in C_0}$ and $\smash{\gamma_1^n\in C_1}$ for every $n\in\N$. Then $\smash{(\gamma^n)_{n\in\N}}$ has a subsequence which --- after reparametrization --- converges uniformly, in the sense of \autoref{Def:Unif cvg}, 
\begin{enumerate}[label=\textnormal{(\roman*)}]
\item to a strictly causal curve defined on a compact interval $[\alpha,\beta] \subset \R$ of positive length unless the endpoints of the extracted subsequence converge to the same limit, and 
\item to a constant causal curve defined on $[0,1]$ otherwise.
\end{enumerate}
\end{theorem}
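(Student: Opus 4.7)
The plan is to reduce this statement to the limit curve theorem for bounded Lorentzian metric spaces of Minguzzi--Sühr, via the quotient construction of \autoref{Le:BLMS}. First, I would trap the curves inside a suitable chronological diamond. By nontrivial chronology, for each $x\in C_0\cup C_1$ pick $a_x\ll x\ll b_x$; compactness together with \autoref{Le:Pushup openness} yields finitely many such diamonds covering $C_0\cup C_1$, and the push-up property then produces $a,b\in\mms$ with $C_0\cup C_1\subset I(a,b)$. Crucially, a second application of push-up gives $J(C_0,C_1)\subset I(a,b)$: if $x\in C_0$, $y\in C_1$ and $x\leq z\leq y$, then $a\ll x\leq z$ and $z\leq y\ll b$ yield $z\in I(a,b)$. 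Hence every $\gamma^n_{[0,1]}\subset J(C_0,C_1)\Subset I(a,b)$.

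Next, form the bounded Lorentzian metric space $(\tilde X,\tilde\tsep)$ with $X:=J(a,b)$ as in \autoref{Le:BLMS}, with canonical projection $\pi\colon X\to\tilde X$. Each $\gamma^n$ descends to a strictly causal curve $\tilde\gamma^n:=\pi\circ\gamma^n$ in $\tilde X$ whose endpoints lie in the compact sets $\pi(C_0)$ and $\pi(C_1)$, and whose image is contained in the compact set $\pi(J(C_0,C_1))$. The limit curve theorem in \cite[Sec.~5]{minguzzi2022} then produces, along a nonrelabeled subsequence, reparametrizations $\tilde\sigma^n$ of $\tilde\gamma^n$ (defined on real intervals $[\alpha_n,\beta_n]$ converging to some $[\alpha,\beta]$) that converge uniformly, in the sense of the metric on $\tilde X$, to a continuous causal curve $\tilde\sigma$ on $[\alpha,\beta]$. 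Moreover, unless the endpoints $\tilde\sigma^n_{\alpha_n}$ and $\tilde\sigma^n_{\beta_n}$ collapse to the same limit, $\tilde\sigma$ is strictly causal and $\alpha<\beta$; in the degenerate case, $\tilde\sigma$ is constant and can be taken defined on $[0,1]$.

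Now lift $\tilde\sigma$ back to $\mms$. By \autoref{Le:BLMS} the restriction $\pi|_{I(a,b)}$ is a homeomorphism onto its image, and $\tilde\sigma_{[\alpha,\beta]}\subset\pi(J(C_0,C_1))\subset\pi(I(a,b))$; so $\sigma:=(\pi|_{I(a,b)})^{-1}\circ\tilde\sigma$ is a continuous curve in $J(C_0,C_1)$. Since $\pi$ is an order-isometry for $\tilde\tsep$ (i.e.~$\tilde\tsep(\tilde x,\tilde y)=l_+(x,y)$ on $X^2$) and $\gamma^n$ is strictly causal in $\mms$, strict causality of $\tilde\sigma$ in $\tilde X$ translates directly into strict causality of $\sigma$ in $\mms$; in the collapsed case $\sigma$ is constant. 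The corresponding reparametrizations $\sigma^n:=(\pi|_{I(a,b)})^{-1}\circ\tilde\sigma^n$ of the original $\gamma^n$ converge uniformly to $\sigma$ with respect to any metric inducing $\Top$ on $J(C_0,C_1)$: this compact set is metrizable by \autoref{Th:Compact Polish}, and as a compact Hausdorff space it carries a unique uniform structure (see \autoref{Cor:Polish}), which the homeomorphism $\pi|_{I(a,b)}$ transports from the already-metric space $(\tilde X,\tilde\tsep)$.

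The main obstacle lies in Step~3, namely invoking the right version of the Minguzzi--Sühr limit curve theorem and ensuring that their notion of reparametrization (typically via an auxiliary time function, for which \autoref{Th:Ex time function} furnishes a candidate on $J(C_0,C_1)$) is compatible with ours: one must produce continuous strictly increasing reparametrizations of $\gamma^n$ so that the family $\hat\sigma^n$ extended as in \eqref{Eq:Extension} converges uniformly on all of $\R$ to the extension of $\sigma$, and not merely on compact subintervals. This is handled by first pushing the time function forward under $\pi$ and using it to fix a common parametrization before invoking the compactness of $\pi(J(C_0,C_1))$ in $\tilde X$; the behaviour near the endpoints is controlled by \autoref{Pr:Closed} (causal closedness) and the convergence of $\alpha_n\to\alpha$ and $\beta_n\to\beta$.
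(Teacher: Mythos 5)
Your overall strategy — pass to the quotient bounded Lorentzian metric space of \autoref{Le:BLMS}, invoke the Minguzzi--Suhr limit curve theorem there, and pull back through $\pi|_{I(a,b)}$ — is exactly the paper's, including the reparametrization by a time function (\cite[Thm.~5.7, Prop.~5.8, Thm.~5.12]{minguzzi2022}). However, there is a genuine gap in your first step.

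You claim that finitely many chronological diamonds $I(a_x,b_x)$ covering the compact set $C_0\cup C_1$ can, via the push-up property, be combined into a \emph{single} diamond $I(a,b)\supset C_0\cup C_1$. Push-up cannot do this: it concatenates relations along a chain of comparable points, but an open cover by diamonds gives no reason for the various $a_x$ (or $b_x$) to be pairwise causally related, so there is nothing to push up \emph{between} distinct diamonds. In general the claim is simply false: in a de Sitter--type or expanding FLRW--type globally hyperbolic spacetime (all of which satisfy \autoref{Ass:GHLLS}), a compact set spanning an entire Cauchy slice is not contained in any single chronological diamond, because the diamonds $I(a,b)$ have bounded spatial extent. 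Note that even \autoref{Cor:K-GH}, which you may have had in mind, only keeps the finite union $\bigcup_i J(a_i,b_i)$ and never collapses it to one diamond.

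The paper avoids this by first passing to a subsequence along which $\gamma_0^n\to a''\in C_0$ and $\gamma_1^n\to b''\in C_1$, then choosing $a\ll a'\ll a''$ and $b''\ll b'\ll b$, and enclosing only the (eventually all of the) endpoint pairs in the single diamond $I(a,b)$ — which is always possible since $a''\leq b''$ by causal closedness, so $a\ll a''\leq b''\ll b$. Your argument goes through once you make this change: extract a convergent subsequence of endpoints \emph{before} building $X=J(a,b)$, rather than attempting to trap all of $C_0\cup C_1$ at once. A secondary remark: the paper treats the degenerate case $a''=b''$ by a separate, direct argument via strong causality (\autoref{Cor:Strong causality}), not by feeding it through the Minguzzi--Suhr theorem and lifting a constant curve; your sketch glosses over this, though it could presumably be made to work either way.
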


\begin{proof} By compactness of $C_0$ and $C_1$, the sequences $\smash{(\gamma_0^n)_{n\in\N}}$ and $\smash{(\gamma_1^n)_{n\in\N}}$ have limits 
$a'' \in C_0$ and $b'' \in C_1$
along subsequences that we do not relabel.  By nontrivial chronology in \autoref{Def:LLS}, let $a,a',b,b'\in \mms$ with $a\ll a' \ll a''$ and $b'' \ll b' \ll b$. Recall that $a$, $a'$, and $a''$ are mutually distinct by \autoref{Cor:Causality}, and the same holds for $b$, $b'$, and $b''$. Furthermore, by \autoref{Le:Closure},  global hyperbolicity, and continuity of $\smash{l_+}$, $J(a',b')$ is compactly contained in $I(a,b)$. Lastly, up to removing finitely many members of either sequence, we assume $\smash{\gamma_0^n \in I^+(a')}$ and $\smash{\gamma_1^n \in I^-(b')}$ for every $n\in\N$; in other words, by causality and \autoref{Le:Pushup openness} the image of $\smash{\gamma^n}$ is entirely contained in $I(a',b')$ for every $a,b\in\mms$.  See \autoref{Fig:Enlargement}.

\begin{figure}
\centering
\begin{tikzpicture}
\draw[fill=black!7] (0.4,-3) -- (4.25,2) -- (0.4, 7) -- (-3.5, 2) -- (0.4,-3); 
\draw[fill=black!17] (0.5,-2) -- (3.55,2) -- (0.5,6) -- (-2.55,2) -- (0.5,-2);
\draw[fill=black!27] (0,0) -- (1.5,2) -- (0,4) -- (-1.5,2) -- (0,0);
\draw[fill=black] (0.4,-3) circle(.1em);
\draw[fill=black] (0.4,7) circle(.1em);
\node at (0.65,-3.2) {$a$};
\node at (0.675, 7.2) {$b$};
\draw[dashed] (0.4,-3) -- (0.5,-2);
\draw[dashed] (0.5,6) -- (0.4,7);
\draw [thick] plot [smooth] coordinates {(0,0) (-0.2,1) (0.3,2) (-0.1,2.8) (0,4)};
\draw[fill=black] (0,0) circle(.1em);
\node at (-0.3,-0.2) {$a''$};
\draw[fill=black] (0,4) circle(.1em);
\node at (-0.25,4.2) {$b''$};
\node at (-0.3,2.45) {$\sigma$};
\draw[fill=black] (1,4.5) circle(.1em);
\draw[fill=black] (0.8,-0.5) circle(.1em);
\draw [thick] plot [smooth] coordinates {(0.8,-0.5) (1.5,0.5) (0.5,1.5) (1.8,2.7) (0.8, 3.75) (1,4.5)};
\node at (0.75, -0.95) {$\smash{\gamma^n_0}$};
\node at (0.8,4.75) {$\smash{\gamma^n_1}$};
\node at (2.1,2.4) {$\sigma^n$};
\draw[dashed] plot [smooth] (0.5,-2) -- (0,0);
\draw[dashed] plot [smooth] (0,4) -- (0.5,6);
\draw[fill=black] (0.5,-2) circle(.1em);
\node at (0.25,-2.2) {$a'$};
\draw[fill=black] (0.5,6) circle(.1em);
\node at (0.2,6.1) {$b'$};
\draw[black, fill=white] (4.25,2) circle(.25em);
\draw[black, fill=white] (-3.5,2) circle(.25em);
\end{tikzpicture}
\caption{An illustration of the proof of \autoref{Th:Limit curve theorem}.}\label{Fig:Enlargement}
\end{figure}
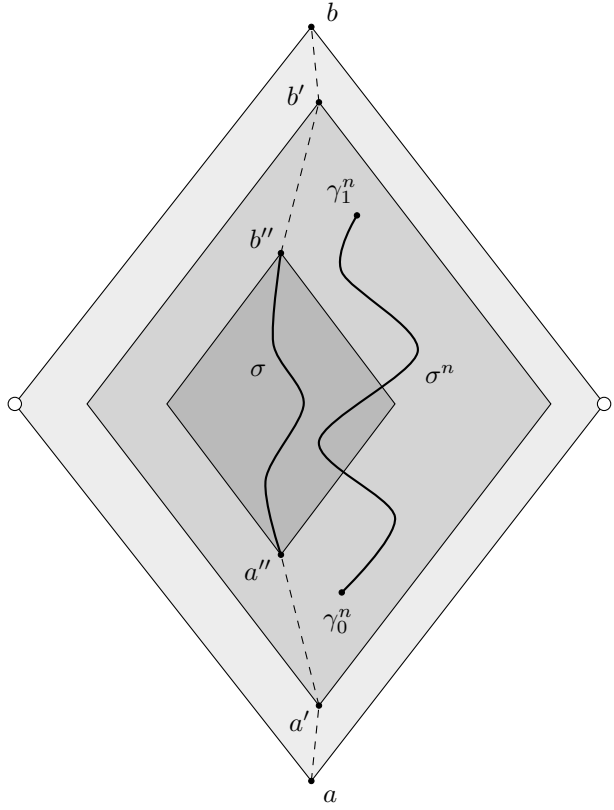

We now consider the bounded Lorentzian metric space $\smash{(\tilde{X}, \tilde{\tsep})}$ from \autoref{Le:BLMS} constructed from $X := J(a,b)$. We retain all inherent  notation from above, and we will show uniform convergence with respect to a metric $\met$ as in \autoref{Cor:Polish}. Without restriction, to make $\smash{\tilde{X}}$ compact we add its spacelike boundary to it \cite[Rem.~1.2, Cor.~1.6, Prop.~1.7]{minguzzi2022}. Lastly, by \cite[Thm.~5.7]{minguzzi2022}, $\smash{\tilde{X}}$ admits a bounded and uniformly continuous time function $\smash{\sft\colon \tilde{X}\to \R}$ 
in the terminology of \cite[Def.~5.6]{minguzzi2022}. 

By using the reverse triangle inequality \eqref{Eq:Reverse tau} and injectivity of $\pi$ implied by \autoref{Le:BLMS}, it is easy to see that $\smash{\tilde{\gamma}^n\colon [0,1] \to \tilde{X}}$ is isocausal in the sense of \cite[Def.~5.1, Sec.~5.2]{minguzzi2022}\footnote{This means strictly causal in our terminology.} for every $n\in\N$, where 
\begin{align*}
\tilde{\gamma}^n := \pi\circ\gamma^n;
\end{align*}
since $\smash{\tilde{\gamma}^n_{[0,1]}}\subset I(a,b)$, none of these  curves passes through the spacelike boundary. 

First assume the endpoint limits $a''$ and $b''$ extracted above to be distinct. As $\smash{\tilde{\tsep}([a''], [b'']) = \tsep(a'',b'') = l(a'',b'') \geq 0}$ by definition of the quotient time separation $\smash{\tilde{\tsep}}$, we obtain $\sft([b'']) > \sft([a''])$, and by continuity of $\sft$ we may and will assume the difference $\smash{\sft(\tilde\gamma_1^n) - \sft(\tilde\gamma_0^n)}$ to be bounded away from zero uniformly in $n\in\N$. Thanks to \cite[Prop.~5.8]{minguzzi2022}, the curve $\smash{\tilde{\gamma}^n}$ can be reparametrized with respect to $\sft$ for every $n\in\N$. More precisely, an explicit possibility to do so is to consider the re\-para\-metrization $\smash{\tilde{\sigma}^n\colon[\alpha_n,\beta_n] \to \tilde{X}}$ defined by
\begin{align*}
\tilde{\sigma}^n &:= \tilde{\gamma}^n \circ (\sft\circ \tilde{\gamma}^n)^{-1},\\
\alpha_n &:= \sft(\tilde\gamma_0^n),\\
\beta_n &:= \sft(\tilde\gamma_1^n).
\end{align*}
Then $(\alpha_n)_{n\in\N}$ and $(\beta_n)_{n\in\N}$ have no common accumulation point. Therefore, the limit curve theorem \cite[Thm.~5.12]{minguzzi2022}  entails uniform convergence of  $\smash{(\smash{\tilde{\sigma}^n)_{n\in\N}}}$ to a strictly causal curve $\smash{\tilde\sigma\colon [\alpha,\beta]\to\tilde{X}}$, where $\alpha,\beta\in \R$ with $\alpha < \beta$; more precisely, 
\begin{align*}
\alpha &:= \sft(a''),\\
\beta &:= \sft(b'').
\end{align*}
Since $J(a',b')$ is compactly contained in $I(a,b)$, by \autoref{Le:BLMS} the map $\smash{\pi}$ is a homeomorphism of $J(a',b')$ onto its compact image $\smash{\pi(J(a',b'))}$. In particular, the inverse of this map is uniformly continuous. As  such maps preserve uniform convergence, we get uniform convergence of $\smash{(\pi^{-1}\circ \tilde{\sigma}^n)_{n\in\N}}$ to $\smash{\pi^{-1}\circ \tilde\sigma}$. The former are reparametrizations of $\smash{(\gamma^n)_{n\in\N}}$, and we obtain our claim. Note again that the limit curve (or any reparametrization thereof) thus obtained is strictly causal.

Lastly, we investigate the case $a'' = b''$.  By \autoref{Cor:Strong causality}, there exists a neighborhood $\smash{V\subset\sfB^\met(a'',\varepsilon)}$ such that every strictly causal curve whose endpoints lie in $V$ does not leave $\smash{\sfB^\met(a'',\varepsilon)}$. For every sufficiently large $n\in\N$, our hypothesis yields $\smash{\gamma_0^n,\gamma_1^n\in V}$. Consequently,
\begin{align*}
\limsup_{n\to\infty}\sup_{t\in[0,1]} \met(a'',\gamma_t^n)  \leq \varepsilon.
\end{align*}
Since $\varepsilon$ was arbitrary, the claim is established.
\end{proof}

\subsection{Causality for probability measures}\label{Sub:Causality pm} In \cite{eckstein2017} a natural causal structure of the space $\scrP(\mms)$ was introduced and studied for smooth spacetimes $\mms$. As outlined here, the definition easily extends to our nonsmooth setting. In fact, we succeed in showing global hyperbolicity of $\mms$ (according to \autoref{Th:GHy}) ``lifts'' to $\scrP(\mms)$.

Define a relation $\preceq$ on $\scrP(\mms)$ by $\mu\preceq\nu$ provided $\smash{\Pi_\leq(\mu,\nu) \neq \emptyset}$. This can be restated by saying $\smash{\ell_p(\mu,\nu)\geq 0}$ for some --- equivalently, every --- $0<p\leq 1$. As for the causal relation $\leq$ on $\mms$, given $\scrX,\scrY\subset\scrP(\mms)$ we then define
\begin{itemize}
\item the \emph{causal future} of $\scrX$ by $\scrJ^+(\scrX) := \{\ell_p(\mu,\cdot) \geq 0\}$,
\item the \emph{causal past} of $\scrY$ by $\scrJ^-(\scrY) := \{\ell_p(\cdot,\nu)\geq 0\}$, and 
\item the \emph{causal emerald} between $\scrX$ and $\scrY$ by $\scrJ(\scrX,\scrY) := \scrJ^+(\scrX) \cap \scrJ^-(\scrY)$.
\end{itemize}

\begin{proposition}[Order property of $\preceq$]\label{Pr:Order} The relation $\preceq$ is an  order, i.e.~it is reflexive, anti\-symmetric, and transitive.
\end{proposition}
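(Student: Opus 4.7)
My approach will handle the three properties separately: reflexivity and transitivity are short and reduce to explicit coupling constructions, whereas antisymmetry is the delicate step and will rely on a bounded global generalized time function.

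For reflexivity, the diagonal coupling $D_\push\mu$ with $D(x):=(x,x)$ is concentrated on $\{l\geq 0\}$, since $l(x,x)=0$ by \autoref{Cor:Causality}; hence $\mu\preceq\mu$ for every $\mu\in\Prob(\mms)$. For transitivity, assume $\mu\preceq\nu$ and $\nu\preceq\sigma$ with witnesses $\pi_1\in\Pi_\leq(\mu,\nu)$ and $\pi_2\in\Pi_\leq(\nu,\sigma)$. I would invoke the gluing \autoref{Le:Gluing lemma} to produce $\omega\in\Prob(\mms^3)$ with $(\pr_1,\pr_2)_\push\omega=\pi_1$ and $(\pr_2,\pr_3)_\push\omega=\pi_2$. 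The reverse triangle inequality \eqref{Eq:Reverse tau} gives transitivity of $\leq$ on $\mms$, so $x_1\leq x_3$ for $\omega$-a.e.\ $(x_1,x_2,x_3)\in\mms^3$; equivalently $(\pr_1,\pr_3)_\push\omega\in\Pi_\leq(\mu,\sigma)$, and $\mu\preceq\sigma$.

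For antisymmetry, suppose $\pi_1\in\Pi_\leq(\mu,\nu)$ and $\pi_2\in\Pi_\leq(\nu,\mu)$ both exist. Gluing as above supplies $\omega\in\Prob(\mms^3)$ whose marginals are $\mu,\nu,\mu$ and whose coordinates satisfy $x_1\leq x_2\leq x_3$ $\omega$-a.s. Since $\Top$ is Polish, the remark following \autoref{Th:Ex time function} provides a globally defined, bounded generalized time function $\sft\colon\mms\to[0,1]$. Causal monotonicity then gives $\sft(x_1)\leq\sft(x_2)\leq\sft(x_3)$ $\omega$-a.s., and boundedness of $\sft$ justifies integrating against the marginals:
\begin{equation*}
\int_\mms\sft\d\mu=\bbE_\omega[\sft(x_1)]\leq\bbE_\omega[\sft(x_2)]=\int_\mms\sft\d\nu\leq\bbE_\omega[\sft(x_3)]=\int_\mms\sft\d\mu.
\end{equation*}
Equality must therefore hold throughout, which combined with the pointwise inequality $\sft(x_1)\leq\sft(x_2)$ forces $\sft(x_1)=\sft(x_2)$ $\omega$-a.s. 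Since $\sft$ strictly increases across distinct causally related points, the chain $x_1\leq x_2$ with $\sft(x_1)=\sft(x_2)$ collapses to $x_1=x_2$ $\omega$-a.s. Thus $\pi_1$ is concentrated on $\diag(\mms^2)$, and its two marginals agree: $\mu=\nu$.

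The main obstacle is the antisymmetry step. Nothing forces $\mu$ and $\nu$ to be compactly supported, so a time function confined to a single compact diamond, as directly provided by \autoref{Th:Ex time function}, would not suffice; the whole reason for insisting on Polishness of $\Top$ here is to access a \emph{global} bounded monotone separator, which converts the pointwise causal constraint $x_1\leq x_2$ into an $L^1$-identity strong enough to pin down $\pi_1$.
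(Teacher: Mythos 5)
Your proof is correct and uses the same two ingredients as the paper --- gluing, and a monotone generalized time function --- but you inline the time-function argument where the paper routes it through an intermediate lemma. Concretely, the paper first establishes (by appealing to a theorem of Eckstein--Miller, extended via the generalized time function from \autoref{Th:Ex time function}) that the diagonal coupling is the \emph{unique} element of $\Pi_\leq(\mu,\mu)$; it then glues to get $\omega$ with marginals $\mu,\nu,\mu$, uses the uniqueness claim to force $(\pr_1,\pr_3)_\push\omega$ to be the diagonal coupling, and concludes $x_1\leq x_2\leq x_3=x_1$, hence $x_1=x_2$. You skip the intermediate uniqueness claim and run the integration argument directly on the glued measure, extracting $\sft(x_1)=\sft(x_2)$ from the collapse of the chain of integrals and then invoking strict monotonicity of $\sft$ across non-equal causally related points. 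Your version is marginally more self-contained (it does not delegate to the cited Eckstein--Miller theorem), and it also makes explicit a point the paper glosses over: for unboundedly supported $\mu$ and $\nu$, one genuinely needs the \emph{global} bounded time function from the remark after \autoref{Th:Ex time function}, not merely the compact-set version stated in the theorem itself. One small technical caveat you handle correctly but is worth stating: the nonstrict monotonicity $x\leq y\Rightarrow\sft(x)\leq\sft(y)$ used to justify $\bbE_\omega[\sft(x_1)]\leq\bbE_\omega[\sft(x_2)]$ is not in the \emph{statement} of \autoref{Th:Ex time function} (which only asserts strict monotonicity for $x<y$), but it is established in its proof via \autoref{Le:Pushup openness} and carries over to the global construction.
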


\begin{proof} Transitivity follows from the gluing lemma. Nonemptiness of $\smash{\Pi_\leq(\mu,\mu)}$ for every $\mu\in\scrP(\mms)$ follows from causality of the diagonal coupling of $\mu$. In order to prove antisymmetry, we first claim this coupling is the \emph{only} element in $\Pi_\leq(\mu,\mu)$. This, however, follows from \cite[Thm.~12]{eckstein2017} whose proof extends with no changes to our setting by using the generalized time function from \autoref{Th:Ex time function}.

Now let $\mu,\nu\in\scrP(\mms)$ obey $\mu\preceq\nu\preceq\mu$. By the previous part, gluing of elements of $\Pi_\leq(\mu,\nu)$ and $\Pi_\leq(\nu,\mu)$ induces a measure $\omega\in\scrP(\mms^3)$ with marginals $\mu$, $\nu$, and $\mu$. By the previous paragraph, $(\pr_1,\pr_3)_\push\omega$ is the diagonal coupling of $\mu$, and hence $\omega$-a.e.~$(x,y,z)\in\mms^3$ satisfies $x\leq y \leq z = x$, thus $x=y$ by causality. Consequently, $(\pr_1,\pr_2)_\push\omega$ is concentrated on the diagonal of $\smash{\mms^2}$, and hence $\mu=\nu$.
\end{proof}

A nice consequence of independent interest is the following. 

\begin{theorem}[Global hyperbolicity]\label{Th:GHy prob meas} \autoref{Ass:GHLLS} implies the space $\scrP(\mms)$ is globally hyperbolic in the following sense.
\begin{enumerate}[label=\textnormal{\textcolor{black}{(}\roman*\textcolor{black}{)}}]
\item\label{La:n1} The relation $\preceq$ is a narrowly closed order.
\item\label{La:n2} If $\scrC\subset\scrP(\mms)$ is narrowly compact, so is $\scrJ(\scrC,\scrC)$.
\end{enumerate}
\end{theorem}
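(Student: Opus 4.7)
The order property of $\preceq$ in \ref{La:n1} is already contained in \autoref{Pr:Order}, so only narrow closedness of $\preceq$ remains. To this end, let $(\mu_\alpha,\nu_\alpha)_{\alpha\in A}$ be a net in $\scrP(\mms)^2$ converging narrowly to some $(\mu,\nu)$ with $\mu_\alpha\preceq\nu_\alpha$ for every $\alpha\in A$, and select $\pi_\alpha\in\Pi_\leq(\mu_\alpha,\nu_\alpha)$. Narrow convergence of $(\mu_\alpha)$ and $(\nu_\alpha)$ implies tightness of $\{\mu_\alpha\}_{\alpha\in A}\cup\{\mu\}$ and $\{\nu_\alpha\}_{\alpha\in A}\cup\{\nu\}$ by \autoref{Th:Prokhorov}, and this passes to $\{\pi_\alpha\}_{\alpha\in A}$ via the standard product-of-tight-sets argument recalled in \autoref{Sub:Notation prob}. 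Reapplying \autoref{Th:Prokhorov} yields a subnet of $(\pi_\alpha)$ converging narrowly to some $\pi\in\scrP(\mms^2)$; by continuity of the projections, $\pi\in\Pi(\mu,\nu)$. The set $\{l\geq 0\}\subset\mms^2$ is closed by \autoref{Pr:Closed} (or by upper semicontinuity of $l$), so the Portmanteau property in \autoref{Th:Alexandrovs theorem} gives $\pi[\{l\geq 0\}]\geq \limsup_\alpha \pi_\alpha[\{l\geq 0\}]=1$, whence $\pi\in\Pi_\leq(\mu,\nu)$ and $\mu\preceq\nu$.

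For \ref{La:n2} the main step is tightness of $\scrJ(\scrC,\scrC)$. Fix $\varepsilon>0$; by narrow compactness of $\scrC$ and \autoref{Th:Prokhorov} there is a compact $K\subset\mms$ with $\sup_{\sigma\in\scrC}\sigma[K^\sfc]\leq\varepsilon$, and by \autoref{Cor:K-GH} the causal emerald $J(K,K)$ is compact. I claim $\sup_{\mu\in\scrJ(\scrC,\scrC)}\mu[J(K,K)^\sfc]\leq 2\varepsilon$. Indeed, given any $\mu\in\scrJ(\scrC,\scrC)$, choose $\sigma^-,\sigma^+\in\scrC$ and couplings $\pi^-\in\Pi_\leq(\sigma^-,\mu)$ and $\pi^+\in\Pi_\leq(\mu,\sigma^+)$. \autoref{Le:Gluing lemma} applied to the chain $\sigma^-,\mu,\sigma^+$ and the couplings $\pi^\pm$ yields $\omega\in\scrP(\mms^3)$ whose projections onto the first and last pair of coordinates return $\pi^-$ and $\pi^+$ respectively. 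Then $\omega$-a.e.\ triple $(x^-,x,x^+)$ satisfies $x^-\leq x\leq x^+$, hence $x\in J(x^-,x^+)\subset J(K,K)$ whenever $x^-,x^+\in K$, and so
\begin{align*}
\mu[J(K,K)^\sfc]\leq\omega\big[\{x^-\notin K\}\cup\{x^+\notin K\}\big]\leq\sigma^-[K^\sfc]+\sigma^+[K^\sfc]\leq 2\varepsilon.
\end{align*}

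To wrap up \ref{La:n2}, combine this tightness with \ref{La:n1}. Any net $(\mu_\alpha)_{\alpha\in A}$ in $\scrJ(\scrC,\scrC)$ has a narrowly convergent subnet by \autoref{Th:Prokhorov}, converging to some $\mu\in\scrP(\mms)$; it remains to show $\mu\in\scrJ(\scrC,\scrC)$. For each $\alpha$ pick witnesses $\sigma^\pm_\alpha\in\scrC$ with $\sigma^-_\alpha\preceq\mu_\alpha\preceq\sigma^+_\alpha$; by narrow compactness of $\scrC$, extract a further subnet along which $\sigma^\pm_\alpha$ converges narrowly to some $\sigma^\pm\in\scrC$. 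The narrow closedness of $\preceq$ from \ref{La:n1} then delivers $\sigma^-\preceq\mu\preceq\sigma^+$, so $\mu\in\scrJ(\scrC,\scrC)$. The main technical obstacle is the tightness argument in \ref{La:n2}: without the gluing construction one has no direct handle on the support of a ``middle'' measure $\mu$ sandwiched between two measures in $\scrC$, and it is precisely the interplay of \autoref{Le:Gluing lemma} with the compactness of causal emeralds (\autoref{Cor:K-GH}) that bridges this gap.
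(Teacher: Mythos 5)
Your proof is correct; part (ii) takes a genuinely different route from the paper's argument, while part (i) mirrors it. In (i) you phrase the argument with nets rather than sequences: be aware that, unlike for a convergent sequence, a convergent net together with its limit point need not form a compact set, so the step asserting tightness of $\{\mu_\alpha\}_{\alpha}\cup\{\mu\}$ from narrow convergence alone via \autoref{Th:Prokhorov} is not automatic. In the Polish regime the narrow topology on $\scrP(\mms)$ is metrizable, so sequential closedness suffices and a sequence together with its limit is compact (hence tight); in the more general first-countable regime of \autoref{Ch:Loose} one should either argue with sequences as the paper does, or supply a separate justification for tightness along the net.

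For (ii), the paper first establishes the causal-monotonicity inequalities $\mu[J^+(C)]\leq\nu[J^+(C)]$ and $\nu[J^-(C)]\leq\mu[J^-(C)]$ whenever $\mu\preceq\nu$ and $C$ is closed, by noting that $\One_{J^+(C)}$ is nondecreasing with respect to $\leq$ thanks to push-up, and then bounds $\sigma[J(C,C)^\sfc]$ for $\mu\preceq\sigma\preceq\nu$ with a telescoping estimate. You instead glue the two causal couplings witnessing $\sigma^-\preceq\mu\preceq\sigma^+$ into a single triple law $\omega\in\scrP(\mms^3)$ via \autoref{Le:Gluing lemma} and read off the bound directly from the $\omega$-a.e.\ chain $x^-\leq x\leq x^+$ and the compactness of emeralds (\autoref{Cor:K-GH}): if $x^\pm\in K$, the middle point sits in $J(K,K)$. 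Both routes deliver the same $2\varepsilon$ bound; yours is more direct and self-contained, while the paper's has the side benefit of isolating a separately quotable monotonicity inequality for set functions (cf.\ Eckstein--Miller). Your concluding passage to compactness, extracting convergent subnets of the witnesses $\sigma^\pm_\alpha$ and applying closedness of $\preceq$, fills in precisely the detail the paper leaves implicit.
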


\begin{proof} To show \ref{La:n1}, by \autoref{Pr:Order} it suffices to prove narrow closedness. For sequences $(\mu_n)_{n\in\N}$ and $(\nu_n)_{n\in\N}$ in $\scrP(\mms)$ satisfying $\mu_n\preceq \nu_n$ for every $n\in\N$ and converging to $\mu,\nu\in\scrP(\mms)$, respectively, let $(\pi_n)_{n\in\N}$ a sequence of elements $\pi_n\in\Pi_\leq(\mu_n,\nu_n)$. Since the marginal sequences are tight, so is $(\pi_n)_{n\in\N}$, and a nonrelabeled subsequence thus converges narrowly to some $\pi\in\Pi(\mu,\nu)$ by Prokhorov  (\autoref{Th:Prokhorov}). Alexandrov (\autoref{Th:Alexandrovs theorem}) shows $\pi$ is causal.

We turn to the proof of \ref{La:n2}. We first argue that if $\mu,\nu\in\scrP(\mms)$ satisfy $\mu\preceq\nu$, then every closed set $C\subset\mms$ satisfies
\begin{align}\label{Eq:Mon mu nu}
\begin{split}
\mu\big[J^+(C)\big] &\leq \nu\big[J^+(C)\big],\\
\nu\big[J^-(C)\big] &\leq  \mu\big[J^-(C)\big];
\end{split}
\end{align}
compare with \cite[Thm.~8]{eckstein2017}. This statement makes sense as $J^\pm(C)$ is closed, hence Borel, by closedness of $C$. Now, by the push-up property the indicator function $\One_{J^+(C)}$ is a causal function on $\mms$, in the sense that if $x,y\in\mms$ satisfy $x\leq y$, then $\One_{J^+(C)}(x) \leq \One_{J^+(C)}(y)$. Given $\pi\in\Pi_\leq(\mu,\nu)$, this easily yields
\begin{align*}
\mu\big[J^+(C)\big] = \int_{\mms^2} \One_{J^+(C)} \circ\pr_1\d\pi \leq \int_{\mms^2} \One_{J^+(C)} \circ\pr_2\d\pi = \nu\big[J^+(C)\big].
\end{align*}
The second inequality in \eqref{Eq:Mon mu nu} follows analogously. Finally, let $\scrC$ as hypothesized. Given  $\varepsilon > 0$, Prokhorov's theorem implies the existence of a compact set $C\subset\mms$ such that every $\mu\in\scrC$  satisfies
\begin{align*}
\mu\big[C^\sfc\big] &\leq \varepsilon.
\end{align*}
Given any $\sigma \in\scrJ(\scrC,\scrC)$ and $\mu,\nu\in\scrC$ with $\mu\preceq\sigma\preceq\nu$, \eqref{Eq:Mon mu nu} then implies
\begin{align*}
\sigma\big[J(C,C)^\sfc\big] &\leq \sigma\big[J^+(C)^\sfc\big] + \sigma\big[J^-(C)^\sfc\big]\\
&= 2 - \sigma\big[J^+(C)\big] - \sigma\big[J^-(C)\big]\\
&\leq 2-\mu\big[J^+(C)\big] - \nu\big[J^-(C)\big]\\
&\leq 2-\mu[C] - \nu[C]\\
&\leq 2\,\varepsilon.
\end{align*}
Together with closedness of $\preceq$, this shows the desired compactness by Prokhorov's theorem. The proof is terminated.
\end{proof}

Note that in general, the conclusion of \autoref{Th:GHy prob meas} ``projects down'' to $\mms$, in the sense that it implies $\leq$ is a closed order, and the induced causal emeralds are compact. This can be seen by considering Dirac masses, respectively.

Similarly to \autoref{Le:Zt lemma}, we can now shortly discuss (pre)compactness of intermediate point sets of  measures. The corresponding \autoref{Le:Zt lemma II} is only used in the proof of \autoref{Th:Good}, but might be of independent interest.

For $\mu,\nu\in\scrP(\mms)$ and $0\leq t\leq 1$, we define the (possibly empty, $p$-dependent) $t$-intermediate point set
\begin{align*}
\scrZ_t^p(\mu,\nu) := \{\sigma\in \scrJ(\mu,\nu) : \ell_p(\mu,\sigma) = t\,\ell_p(\mu,\nu), \,\ell_p(\sigma,\nu) = (1-t)\,\ell_p(\mu,\nu)\}.
\end{align*}
Moreover, for $\scrC\subset\Prob(\mms)^2$ we define
\begin{align*}
\scrZ_t^p(\scrC) &:= \bigcup_{(\mu,\nu)\in\scrC} \scrZ_t^p(\mu,\nu),\\
\scrZ^p(\scrC) &:= \bigcup_{t\in[0,1]} \scrZ_t^p(\scrC).
\end{align*}

The following result is not straightforward  due to the anticipated lack of semicontinuity of $\smash{\ell_p}$ \cite[Rem.~2.17]{cavalletti2020}. Still, the identities defining intermediate point sets and the reverse triangle inequality \eqref{Eq:Reverse lp} for $\ell_p$ guarantee enough rigidity.

\begin{lemma}[Intermediate point sets inherit compactness II]\label{Le:Zt lemma II} Let $0\leq t\leq 1$. If $\scrC\subset\scrP(\mms)^2$ is precompact, so are $\smash{\scrZ_t^p(\scrC)}$ and $\smash{\scrZ^p(\scrC)}$. Moreover, if $\scrC = \{(\mu,\nu)\}$ for $\smash{\mu,\nu\in\Prob_\comp(\mms)}$, then $\smash{\scrZ_t^p(\scrC)}$ and $\smash{\scrZ^p(\scrC)}$ are compact.
\end{lemma}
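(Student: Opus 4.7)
My plan is to reduce the statement to two ingredients: precompactness via the global hyperbolicity already proven for $\scrP(\mms)$, and a closedness argument based on the reverse triangle inequality for $\ell_p$ together with bounded continuous convergence on compact supports.

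For precompactness, I would let $\scrC_1 := \pr_1(\scrC)$ and $\scrC_2 := \pr_2(\scrC)$, both of which are narrowly precompact in $\scrP(\mms)$, and set $\scrC' := \bar{\scrC}_1 \cup \bar{\scrC}_2$. Every $\sigma \in \scrZ^p(\scrC) \supset \scrZ_t^p(\scrC)$ satisfies $\mu \preceq \sigma \preceq \nu$ for some $(\mu,\nu) \in \scrC$, hence $\scrZ^p(\scrC) \subset \scrJ(\scrC',\scrC')$. By \autoref{Th:GHy prob meas}\ref{La:n2}, $\scrJ(\scrC',\scrC')$ is narrowly compact, yielding the claimed precompactness in both cases at once.

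For the compact singleton case $\scrC = \{(\mu,\nu)\}$, it remains to show closedness. First, I would verify that every $\sigma \in \scrZ^p(\mu,\nu)$ has its support in the compact set $K := J(\supp\mu, \supp\nu)$: given $\pi \in \Pi_\leq(\mu,\sigma)$ and using that Polishness forces $\supp\pi \subset \supp\mu \times \supp\sigma$, one extracts from $\supp\sigma \subset J^+(\supp\mu)$ and the closedness of $J^+$ of a compact set (a direct consequence of \autoref{Pr:Closed}) that $\supp\sigma \subset J^+(\supp\mu)$; symmetrically $\supp\sigma \subset J^-(\supp\nu)$, and $K$ is compact by global hyperbolicity of $\mms$. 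Now let $\sigma_n \in \scrZ_{t_n}^p(\mu,\nu)$ with $t_n \to t$ and $\sigma_n \to \sigma$ narrowly (for the $\scrZ_t^p$-case take $t_n \equiv t$). By narrow closedness of $\preceq$ (\autoref{Th:GHy prob meas}\ref{La:n1}), $\sigma \in \scrJ(\mu,\nu)$, so $\ell_p$-optimal causal couplings $\pi_n^0 \in \Pi_\leq(\mu,\sigma_n)$ and $\pi_n^1 \in \Pi_\leq(\sigma_n,\nu)$ exist by \autoref{Le:Existence} and satisfy
\begin{align*}
\|l\|_{L^p(\pi_n^0)} = t_n\,\ell_p(\mu,\nu), \qquad \|l\|_{L^p(\pi_n^1)} = (1-t_n)\,\ell_p(\mu,\nu).
\end{align*}
Both sequences are concentrated on the fixed compact set $K \times K$, so Prokhorov's theorem yields subsequential narrow limits $\pi^0 \in \Pi(\mu,\sigma)$ and $\pi^1 \in \Pi(\sigma,\nu)$; Alexandrov's theorem together with closedness of $\{l \geq 0\}$ makes them causal.

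The key point is that on the compact set $K\times K$ the function $l_+$ is bounded and continuous (\autoref{Cor:Finiteness l}), so $l_+^p$ is bounded continuous, and narrow convergence gives $\|l\|_{L^p(\pi^0)} = t\,\ell_p(\mu,\nu)$ and $\|l\|_{L^p(\pi^1)} = (1-t)\,\ell_p(\mu,\nu)$ (using $l = l_+$ on the causal supports). Hence $\ell_p(\mu,\sigma) \geq t\,\ell_p(\mu,\nu)$ and $\ell_p(\sigma,\nu) \geq (1-t)\,\ell_p(\mu,\nu)$, and the reverse triangle inequality \eqref{Eq:Reverse lp} forces
\begin{align*}
\ell_p(\mu,\nu) \geq \ell_p(\mu,\sigma) + \ell_p(\sigma,\nu) \geq \ell_p(\mu,\nu),
\end{align*}
so all inequalities saturate and $\sigma \in \scrZ_t^p(\mu,\nu)$, establishing closedness of both $\scrZ_t^p(\{(\mu,\nu)\})$ and $\scrZ^p(\{(\mu,\nu)\})$.

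The main obstacle is precisely the flagged lack of (upper or lower) semicontinuity of $\ell_p$ on $\scrP(\mms)$: one cannot expect $\ell_p(\mu,\sigma_n) \to \ell_p(\mu,\sigma)$ in general. My approach bypasses this by exploiting that compactness of $\supp\mu$ and $\supp\nu$ confines all relevant couplings to a fixed compact subset of $\mms^2$ on which $l_+^p$ is a bounded continuous integrand, so that narrow convergence of couplings actually does transport the \emph{achieved} values of the cost. The inequalities this produces are only one-sided, but the reverse triangle inequality supplies the opposite bound for free, giving the rigidity required to land back in $\scrZ_t^p(\mu,\nu)$.
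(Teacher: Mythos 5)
Your proof is correct and follows essentially the same route as the paper: precompactness via the global hyperbolicity of $\scrP(\mms)$ (\autoref{Th:GHy prob meas}), then closedness in the singleton case by extracting narrow limits of $\ell_p$-optimal causal couplings confined to the fixed compact set $J(\supp\mu,\supp\nu)^2$, followed by saturating the reverse triangle inequality \eqref{Eq:Reverse lp}. The only minor departure is that you exploit that $l_+$ is genuinely continuous (being both upper and lower semicontinuous by \autoref{Def:signed time sep}) to get exact convergence of the costs, whereas the paper contents itself with a $\limsup$ inequality via \cite[Lem.~4.3]{villani2009}; both suffice because the reverse triangle inequality supplies the opposing bound in either case.
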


\begin{proof} The first claim follows from \autoref{Th:GHy prob meas} as for \autoref{Le:Zt lemma}.

The last statement is trivial if $\mu\not\preceq\nu$, hence we assume $\smash{\ell_p(\mu,\nu)\geq 0}$. It suffices to show $\smash{\scrZ^p(\mu,\nu)}$ is narrowly closed under the given assumption $\scrC = \{(\mu,\nu)\}$. Then $C=J(\supp \mu,\supp \nu)$ is compact.
Let $(\sigma_n)_{n\in\N}$ be a sequence in $\smash{\scrZ^p(\scrC)}$ narrowly converging to $\sigma\in\Prob(\mms)$. We claim $\smash{\sigma\in\scrZ^p(\scrC)}$. Indeed, let $(\mu_n,\nu_n)_{n\in\N}$ be a  sequence in $\scrZ(\scrC)$ and let $(t_n)_{n\in\N}$ be a sequence in $[0,1]$ --- without loss of generality converging to $0\leq t\leq 1$ --- such that, for every $n\in\N$,
\begin{align*}
\ell_p(\mu,\sigma_n) &= t_n\,\ell_p(\mu,\nu),\\
\ell_p(\sigma_n,\nu) &= (1-t_n)\,\ell_p(\mu,\nu).
\end{align*}
By \cite[Prop.~2.9]{mccann2020}, we have $\supp\sigma_n\subset J(\supp\mu,\supp\nu)$ for every $n\in\N$. \autoref{Le:Existence} implies the existence of $\smash{\ell_p}$-optimal couplings $\smash{\pi_1^n\in\Pi_\leq(\mu,\sigma_n)}$ and $\smash{\pi_2^n\in\Pi_\leq(\sigma_n,\nu)}$. By Prokhorov's theorem, Alexandrov's theorem, and \cite[Lem.~4.3]{villani2009} there exist couplings $\smash{\pi_1\in\Pi_\leq(\mu,\sigma)}$ and $\smash{\pi_2\in\Pi_\leq(\sigma,\nu)}$ such that
\begin{align*}
\ell_p(\mu,\nu) &\geq \ell_p(\mu,\sigma) + \ell_p(\sigma,\nu)\\
&\geq \Vert l\Vert_{\Ell^p(\mms^2,\pi_1)} + \Vert l\Vert_{\Ell^p(\mms^2,\pi_2)}\\
&\geq \limsup_{n\to\infty} \Vert l \Vert_{\Ell^p(\mms^2,\pi_1^n)} + \limsup_{n\to\infty} \Vert l \Vert_{\Ell^p(\mms^2,\pi_2^n)}\\
&= \limsup_{n\to\infty} t_n\,\ell_p(\mu,\nu) + \limsup_{n\to\infty}\, (1-t_n)\,\ell_p(\mu,\nu)\\
&=\ell_p(\mu,\nu).
\end{align*}
This forces the implicit inequalities
\begin{align*}
\ell_p(\mu,\sigma) &\geq t\,\ell_p(\mu,\nu),\\
\ell_p(\sigma,\nu) &\geq (1-t)\,\ell_p(\mu,\nu)
\end{align*}
to be equalities. The desired inclusion $\smash{\sigma\in\scrZ_t^p(\scrC)}$ is shown.
\end{proof}



\section{Kantorovich duality formula for $\smash{\ell_1}$}\label{Ch:Rubinstein}

The disintegration result from \autoref{Th:Mean zero} requires the existence of appropriate data $E$ and $u$ according to \autoref{Sub:Framework} in order to perform the disintegration. In this chapter, we describe this procedure by using basic Kantorovich duality theory for $\smash{\ell_1}$.  As  in \cite{Akdemir24+,cm++}, we obtain an analog of the Kantorovich duality formula \cite{kant42} which mirrors the famous duality formula for the $1$-Wasserstein distance on a metric space in terms of $1$-Lipschitz functions \cite[p.~72]{villani2009}; cf.~\autoref{Th:Rubinstein}.

Following \cite[Sec.~2.4]{cavalletti2020}, we start with basic definitions from Kantorovich duality theory for $\ell_1$ (see also \cite[Ch.~4]{mccann2020}). Given two sets $A,B\subset\mms$, a function $\phi\colon A\to \R$ is \emph{$l$-concave} relative to $(A,B)$ if there exists a  function $\psi\colon B\to \R$ with
\begin{align*}
\phi(x) = \inf_{y\in B} \big[\psi(y) -l(x,y)\big].
\end{align*}
The \emph{$l$-transform} of $\phi$ is the function $\smash{\phi^{(l)}\colon B\to \R\cup\{-\infty,\infty\}}$ given by
\begin{align}\label{Eq:Formula}
\phi^{(l)}(y) := \sup_{x\in A} \big[\phi(x) + l(x,y)\big].
\end{align}

Analogous notions are set up by replacing $l$ with $\smash{l_+}$.

\begin{definition}[Strong Kantorovich duality {\cite[Def.~2.23]{cavalletti2020}}]\label{Def:Strong Kantorovich} A pair $\smash{(\mu,\nu)\in\Prob(\mms)^2}$ is said to satisfy \emph{strong $l$-Kantorovich duality} if
\begin{enumerate}[label=\textnormal{\alph*.}]
\item $0<\ell_1(\mu,\nu) < \infty$, and
\item there exist closed subsets $A,B\subset\mms$ with $\mu[A]=\nu[B]=1$, a $\mu$-integrable upper semicontinuous  function $\phi\colon A\to \R$ which is $l$-concave relative to $(A,B)$ such that its $\smash{l}$-transform $\smash{\phi^{(l)}\colon B\to \R}$ is $\nu$-integrable and obeys
\begin{align}\label{Eq:Original Kantorovich}
\ell_1(\mu,\nu) = \int_\mms\phi^{(l)}\d\nu - \int_\mms\phi\d\mu.
\end{align}
\end{enumerate}

The notion of \emph{strong $\smash{l_+}$-Kantorovich duality} of $(\mu,\nu)$ is defined analogously by replacing $l$ with $\smash{l_+}$.
\end{definition}

\begin{example}[Criterion for strong Kantorovich duality]\label{Ex:Strong Kantorovich}  By the proofs of \cite[Thm. 2.26, Cor.~2.29]{cavalletti2020}, every pair $\smash{(\mu,\nu)\in\Prob_\comp(\mms)^2}$ satisfying
\begin{align*}
\supp\mu\times\supp\nu \subset \{l>0\}
\end{align*}
satisfies strong $\smash{l_+}$-Kantorovich duality (hence strong $l$-Kantorovich duality).
\end{example}

\begin{theorem}[Kantorovich duality for $\ell_1$]\label{Th:Rubinstein} Assume that  the pair $(\mu,\nu)\in \smash{\Prob_\comp(\mms)^2}$ satisfies strong $l$-Kantorovich duality. Then the following  hold.
\begin{enumerate}[label=\textnormal{(\roman*)}]
\item \textnormal{\textbf{Kantorovich duality formula.}} We have
\begin{align*}
\ell_1(\mu,\nu) = \inf \Big[\!\int_\mms u\d\nu - \int_\mms u\d\mu\Big].
\end{align*}
The infimum is taken over all Borel functions $u\colon E\to \R$ defined on some Borel set $E\subset\mms$ containing $\supp\mu\cup\supp\nu$ with $u\in \Ell^1(\mms,\mu)\cap\Ell^1(\mms,\nu)$ and for every $x,y\in E$, we have
\begin{align}\label{Eq:Reverse LIP}
u(y) - u(x) \geq l(x,y).
\end{align} 
\item \textnormal{\textbf{Existence of minimizers.}} The infimum is attained at a function $u\colon E\to \R$ with the additional property that $E$ constitutes a compact, causally convex set containing $\supp\mu\cup\supp\nu$.
\item \textnormal{\textbf{Optimal couplings.}} For every minimizer $u$ in the above Kantorovich--Ru\-bin\-stein formula and every $\ell_1$-optimal coupling $\pi\in\Pi(\mu,\nu)$,
\begin{align*}
u\circ\pr_2 - u\circ\pr_1 = l\quad\pi\textnormal{-a.e.}
\end{align*}
\end{enumerate}
\end{theorem}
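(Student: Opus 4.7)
The plan is to prove all three claims in a single sweep by constructing an explicit minimizer and reading off (iii) from the equality cases in the easy ``$\geq$'' direction of the duality formula. For that easy direction, fix any admissible Borel $u\colon E\to\R$ and any causal coupling $\pi\in\Pi_\leq(\mu,\nu)$; the hypothesis $\supp\mu\cup\supp\nu\subset E$ gives $\pi[E^2]=1$, and integrating \eqref{Eq:Reverse LIP} against $\pi$ using the identity $l_+=l$ holding $\pi$-a.e.\ yields
\begin{align*}
\int_{\mms^2} l_+\d\pi \leq \int_{\mms^2} (u\circ\pr_2 - u\circ\pr_1)\d\pi = \int_\mms u\d\nu - \int_\mms u\d\mu.
\end{align*}
Taking the supremum over $\pi$ bounds the infimum in (i) from below by $\ell_1(\mu,\nu)$.

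For the matching upper bound and for the existence claim (ii), I will build a minimizer out of the strong Kantorovich datum $(\phi,\phi^{(l)})$. Set $E:=J(\supp\mu,\supp\nu)$, which is compact by \autoref{Cor:K-GH} and causally convex as a causal emerald. Since $\ell_1(\mu,\nu)>0$ ensures $\Pi_\leq(\mu,\nu)\neq\emptyset$, projecting the compact support of any causal coupling of $\mu$ and $\nu$ gives $\supp\mu\subset J^-(\supp\nu)$ and $\supp\nu\subset J^+(\supp\mu)$, so $\supp\mu\cup\supp\nu\subset E$. With $\supp\mu\subset A$ and $\supp\nu\subset B$, define
\begin{align*}
u(x) := \sup_{z\in\supp\mu}\big[\phi(z) + l(z,x)\big], \quad x\in E.
\end{align*}
This $u$ is real-valued: each $x\in E\subset J^+(\supp\mu)$ admits a past point $z_0\in\supp\mu$, giving $u(x)\geq\phi(z_0)+l(z_0,x)>-\infty$, while an upper bound follows from the upper semicontinuity of $\phi$ on compact $\supp\mu$ together with the continuity and boundedness of $l_+$ on the compact $\supp\mu\times E$. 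A standard compactness-plus-upper-semicontinuity argument then shows $u$ itself is upper semicontinuous on $E$, hence Borel and bounded. Finally, the reverse triangle inequality \eqref{Eq:Reverse tau} propagates through the supremum to yield $u(y)\geq u(x)+l(x,y)$ for all $x,y\in E$, so that \eqref{Eq:Reverse LIP} holds.

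To close the loop, compare $u$ to the original potentials on the supports: on $\supp\nu\subset B$, enlarging the domain of the supremum from $\supp\mu$ to $A$ gives $u\leq\phi^{(l)}$, while on $\supp\mu\subset A$ the test point $z:=x$ together with $l(x,x)\geq 0$ yields $u\geq\phi$. Integrating and invoking \eqref{Eq:Original Kantorovich} gives $\int u\d\nu-\int u\d\mu\leq\ell_1(\mu,\nu)$. Combined with the easy direction, this proves (i) and realizes $u$ as the minimizer required in (ii). Claim (iii) is then immediate: for any minimizer $u'$ and any $\ell_1$-optimal $\pi\in\Pi(\mu,\nu)$, the chain of inequalities from the easy direction with $u'$ in place of $u$ must saturate, forcing $l=u'\circ\pr_2-u'\circ\pr_1$ $\pi$-a.e. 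The main obstacle in turning this sketch into a full proof is ensuring real-valuedness and Borel measurability of $u$ on a domain large enough to be causally convex and to contain both supports; the choice $E=J(\supp\mu,\supp\nu)$ is designed exactly so that every $x\in E$ has some past point in $\supp\mu$, preventing $u(x)$ from collapsing to $-\infty$, while causal convexity comes for free from the causal emerald structure.
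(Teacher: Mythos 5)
Your strategy coincides with the paper's in all essentials: you construct the candidate minimizer $u$ by the formula $u(x) = \sup_z\big[\phi(z)+l(z,x)\big]$ on a compact, causally convex emerald $E$ containing both supports, compare $u$ with the strong Kantorovich datum to obtain the matching bound, and read off (iii) from the equality cases; your treatment of the easy direction, integrating the reverse $l$-Lipschitz bound against causal couplings and then taking the supremum, is in fact marginally cleaner than the paper's passage through $\varepsilon$-almost maximizers. The one step that does not hold as written is the claim that $u$ is ``Borel and bounded.'' Upper semicontinuity of $u$ on the compact set $E$ gives Borel measurability and a uniform \emph{upper} bound, but not a lower bound; your pointwise lower bound $u(x)\geq\phi(z_0)+l(z_0,x)>-\infty$ depends on $x$ through the choice of $z_0\in\supp\mu$ and through the value $\phi(z_0)$, and an upper semicontinuous, $\mu$-integrable $\phi$ need not be bounded below on $\supp\mu$. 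The issue is not cosmetic: you need $u\in\Ell^1(\mms,\nu)$ for $u$ to be admissible and for the inequality $\int u\d\nu - \int u\d\mu \leq \ell_1(\mu,\nu)$ to be meaningful, and the one-sided comparison $u\leq\phi^{(l)}$ on $\supp\nu$ controls $\int u_+\d\nu$ only.

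Two routes close the gap. The paper takes the supremum over the set $A$ from \autoref{Def:Strong Kantorovich} (shrunk to be compact without loss of generality) rather than over $\supp\mu$; then $u$ agrees with $\phi^{(l)}$ identically on $B\cap E\supset\supp\nu$, so $\nu$-integrability is handed to you by the hypothesis $\phi^{(l)}\in\Ell^1(\mms,\nu)$. Alternatively, keep your $u$ and fix any causal coupling $\pi\in\Pi_\leq(\mu,\nu)$: for $\pi$-a.e.\ $(z,y)$ one has $z\in\supp\mu$ and $l(z,y)\geq 0$, so $u(y)\geq\phi(z)+l(z,y)\geq\phi(z)$, hence $u_-\circ\pr_2 \leq \phi_-\circ\pr_1$ $\pi$-a.e.\ and $\int u_-\d\nu\leq\int\phi_-\d\mu<\infty$. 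With either patch in place, the rest of your argument goes through and matches the paper's.
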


\begin{proof} To some extent, the argument is standard, cf.~e.g.~\cite[p.~68]{villani2009}; let us only outline the key steps. Retain the notation from \autoref{Def:Strong Kantorovich}.  Without loss of generality, we assume $A$ and $B$ are compact. 
On $E := J(A,B)$, define $u\colon E\to \R$ by  the formula \eqref{Eq:Formula}, that is,
\begin{align*}
u(y) := \sup_{x\in A} \big[\phi(x) + l(x,y)\big].
\end{align*}
$E$ is causally convex and, by global hyperbolicity, compact.  Our specific choice of $E$ ensures $u$ does not attain the value $-\infty$. It is straightforward to verify  \eqref{Eq:Reverse LIP} for every $x,y\in E$. Furthermore, we have $\smash{u = \phi^{(l)}}$ on $B$, in particular $u\in \Ell^1(\mms,\nu)$. On the other hand, we have $u \geq \phi$ on $A$, while $u$ is bounded from above on the compact set $A$ by upper semicontinuity of $\phi$ and $l$, whence $u\in\Ell^1(\mms,\mu)$.

Since $0<\ell_1(\mu,\nu)< \infty$ by the hypothesized strong $l$-Kantorovich duality, given any $\varepsilon > 0$ there is  an $\varepsilon$-almost maximizer  $\smash{\pi\in\Pi_\leq(\mu,\nu)}$ in the definition of $\ell_1(\mu,\nu)$. By the previous part, this entails
\begin{align*}
\ell_1(\mu,\nu) &\leq \int_{\mms^2} l\d\pi +\varepsilon\\
&\leq \int_\mms u\d\nu - \int_\mms u\d\mu +\varepsilon\\
&\leq \int_\mms \phi^{(l)}\d\nu - \int_\mms \phi\d\mu + \varepsilon. 
\end{align*}
Combining the arbitrariness of $\varepsilon$ and \eqref{Eq:Original Kantorovich} proves $u$ attains the infimum in the  Kantorovich duality formula for $\smash{\ell_1}$. This  yields the first two claims. 

The last follows from the simple observation that the Kantorovich duality formula phrased in terms of an $\smash{\ell_1}$-optimal coupling $\pi\in\Pi(\mu,\nu)$ forces equality in \eqref{Eq:Reverse LIP} for $\pi$-a.e.~$\smash{(x,y)\in \mms^2}$.
\end{proof}


\begin{thebibliography}{000}

  \bibitem{Akdemir24+}
{\textsc{A. Akdemir}. In preparation.}
		\bibitem{akdemir}{\textsc{A. Akdemir, A. Colinet, R. J. McCann, F. Cavalletti, F. Santarcangelo.} Independence of synthetic curvature dimension conditions on transport distance exponent. \textit{Trans. Amer. Math. Soc.} \textbf{374} (2021), no. 8, 5877--5923.}
		\bibitem{ake2020}{\textsc{L. Aké Hau, A. J. Cabrera Pacheco, D. A. Solis.} On the causal hierarchy of Lorentzian length spaces. \textit{Classical Quantum Gravity} \textbf{37} (2020), no. 21, 215013, 22 pp.}
		\bibitem{abis}{\textsc{S. B. Alexander, R. L. Bishop.}  Lorentz and semi-Riemannian spaces with Alexandrov curvature bounds.  \textit{Comm. Anal. Geom.} \textbf{16} (2008), no. 2, 251--282.}
		\bibitem{alexandrov1943}{\textsc{A. D. Alexandrov.} 
 Additive set-functions in abstract spaces.  \textit{Rec. Math. \textnormal{[}Mat. Sbornik\textnormal{]} N.S.} \textbf{13}(\textbf{55}) (1943), 169--238.}
\bibitem{Ambrosio03}{\textsc{L. Ambrosio.} \textit{Lecture notes on optimal transport problems.} Mathematical aspects of evolving interfaces \textnormal{(}Funchal, 2000\textnormal{)}, 1--52, Lecture Notes in Math., 1812. Springer-Verlag, Berlin, 2003}
		\bibitem{ambrosio2011}{\textsc{L. Ambrosio, N. Gigli.} 
 \textit{A user's guide to optimal transport.}   Modelling and optimisation of flows on networks, 1--155, Lecture Notes in Math., 2062, Fond. CIME/CIME Found. Subser., Springer, Heidelberg, 2013.}
		\bibitem{ambrosio2008}{\textsc{L. Ambrosio, N. Gigli, G. Savaré.} \textit{Gradient flows in metric spaces and in the space of probability measures.} Second edition. Lectures in Mathematics ETH Zürich. Birkhäuser Verlag, Basel, 2008. x+334 pp.}
\bibitem{AnderssonHoward98}
\textsc{L. Andersson, R. Howard.}
\newblock { Comparison and rigidity theorems in semi-{R}iemannian geometry.}
\newblock {\em Comm. Anal. Geom.} {\bf 6} (1998), no. 4, 819--877.
		\bibitem{aubry2007}{\textsc{E. Aubry.} Finiteness of $\pi_1$ and geometric inequalities in almost positive Ricci curvature. \textit{Ann. Sci. \smash{École} Norm. Sup.} (4) \textbf{40} (2007), no. 4, 675–695.}
		\bibitem{bacher2010}{\textsc{K. Bacher, K.-T. Sturm.} 
 Localization and tensorization properties of the curvature-dimension condition for metric measure spaces.  
\textit{J. Funct. Anal.} \textbf{259} (2010), no. 1, 28--56.}
		\bibitem{beem1979}{\textsc{J. K. Beem, P. E. Ehrlich.}  Cut points, conjugate points and Lorentzian comparison theorems.  \textit{Math. Proc. Cambridge Philos. Soc.} \textbf{86} (1979), no. 2, 365--384.}
		\bibitem{beem1996}{\textsc{J. K. Beem, P. E. Ehrlich,  K. L. Easley.} \textit{Global Lorentzian geometry.}  Second edition.  Monographs and Textbooks in Pure and Applied Mathematics, 202. Marcel Dekker, Inc., New York, 1996. xiv+635 pp.}
\bibitem{beran2023}{\textsc{T. Beran, M. Braun, M. Calisti, N. Gigli, R. J. McCann, A. Ohanyan, F. Rott, C. Sämann.} In preparation.}
\bibitem{bernal}{\textsc{A. N. Bernal, M. Sánchez.}  Globally hyperbolic spacetimes can be defined as `causal' instead of `strongly causal'.  \textit{Classical Quantum Gravity} \textbf{24} (2007), no. 3, 745--749.}
\bibitem{bianchini2013}{\textsc{S. Bianchini, F. Cavalletti.}  The Monge problem for distance cost in geodesic spaces.  \textit{Comm. Math. Phys.} \textbf{318} (2013), no. 3, 615--673.}
\bibitem{billingsley}{\textsc{P. Billingsley.}  \textit{Convergence of probability measures.}  Second edition.  Wiley Series in Probability and Statistics: Probability and Statistics.  A Wiley-Interscience Publication. John Wiley \& Sons, Inc., New York, 1999. x+277 pp.}
\bibitem{bogachevII}{\textsc{V. I. Bogachev.} \textit{Measure theory. II.} Springer-Verlag, Berlin, 2007. xiv+575 pp.}
\bibitem{BombelliNoldus04}
\textsc{L. Bombelli, J. Noldus.}
\newblock {The moduli space of isometry classes of globally hyperbolic  spacetimes.}
\newblock {\em Classical Quantum Gravity} {\bf 21} (2004), no. 18, 4429--4453.
\bibitem{borde}{\textsc{A. Borde.} Geodesic focusing, energy conditions and singularities.  \textit{Classical Quantum Gravity} \textbf{4} (1987), no. 2, 343--356.}
\bibitem{braun2022}{\textsc{M. Braun.} Rényi's entropy on Lorentzian spaces. Timelike curvature-dimension conditions.  \textit{J. Math. Pures Appl.} (9) \textbf{177} (2023), 46--128.}
\bibitem{braun2023}{\bysame. Good geodesics satisfying the timelike curvature-dimension condition. \textit{Nonlinear Anal.} \textbf{229} (2023), Paper No. 113205, 30 pp.}
\bibitem{braunc1}{\textsc{M. Braun, M. Calisti.}  Timelike Ricci bounds for low regularity spacetimes by optimal transport. To appear in \textit{Commun. Contemp. Math.}}
\bibitem{braunc2}{\bysame. In preparation.}
\bibitem{braun2021}{\textsc{M. Braun, K. Habermann, K.-T. Sturm.} Optimal transport, gradient estimates, and pathwise Brownian coupling on spaces with variable Ricci bounds.  \textit{J. Math. Pures Appl.} (9) \textbf{147} (2021), 60--97.}
\bibitem{braunohta}{\textsc{M. Braun, S. Ohta.}  Optimal transport and timelike lower Ricci curvature bounds on Finsler spacetimes.  Preprint, \texttt{arXiv:2305.04389}, 2023.}
\bibitem{braun+}{\textsc{M. Braun, C. Sämann.} In preparation.}
\bibitem{fewster}{\textsc{C. J. Fewster, E.-A. Kontou.}  A semiclassical singularity theorem.  \textit{Classical Quantum Gravity} \textbf{39} (2022), no. 7, Paper No. 075028, 33 pp.}
\bibitem{brown}{\textsc{P. J. Brown, C. J. Fewster,  E.-A.  Kontou.} A singularity theorem for Einstein-Klein-Gordon theory. \textit{Gen. Relativity Gravitation} \textbf{50}  (2018), no. 10, Paper No. 121, 24 pp.}
\bibitem{bgp}{\textsc{Yu. Burago, M. Gromov, G. Perel'man.} 
 A. D. Aleksandrov spaces with curvatures bounded below. \textit{Uspekhi Mat. Nauk} \textbf{47} (1992), no. 2, 3--51, 222; translation in \textit{Russian Math. Surveys} \textbf{47} (1992), no. 2, 1--58.}
\bibitem{burago2003}{\textsc{D. Burago, Yu. Burago, S. Ivanov.} \textit{A course in metric geometry.} Graduate Studies in Mathematics, 33. American Mathematical Society, Providence, RI, 2001. xiv+415 pp.}
\bibitem{burtscher}{\textsc{A. Burtscher, L. García-Heveling.}  Time functions on Lorentzian length spaces.  Preprint, \texttt{arXiv:2108.02693}, 2021.}
\bibitem{burt}{\textsc{A. Burtscher, C. Ketterer, R. J.  McCann, E. Woolgar.} Inscribed radius bounds for lower Ricci bounded metric measure spaces with mean convex boundary.    \textit{SIGMA Symmetry Integrability Geom. Methods Appl.} \textbf{16} (2020), Paper No. 131, 29 pp.}
\bibitem{busemann}{\textsc{H. Busemann.} Timelike spaces.  \textit{Dissertationes Math. \textnormal{(}Rozprawy Mat.\textnormal{)}} \textbf{53} (1967), 52 pp.}
\bibitem{caffarelli2001}{\textsc{L. A. Caffarelli, M. Feldman, R. J. McCann.} Constructing optimal maps for Monge's transport problem as a limit of strictly convex costs.  \textit{J. Amer. Math. Soc.} \textbf{15} (2002), no. 1, 1--26.}
\bibitem{carroll}{\textsc{S. Carroll.} \textit{Spacetime and geometry.} An introduction to general relativity. Addison Wesley, San Francisco, CA, 2004. xiv+513 pp.}
\bibitem{cavalletti2014}{\textsc{F. Cavalletti.} Monge problem in metric measure spaces with Riemannian curvature-dimension condition. \textit{Nonlinear Anal.} \textbf{99} (2014), 136--151.}
\bibitem{cavallettigigli}{\textsc{F. Cavalletti, N. Gigli,  F. Santarcangelo.} Displacement convexity of entropy and the distance cost optimal transportation. \textit{Ann. Fac. Sci. Toulouse Math.} (6) \textbf{30} (2021), no. 2, 411--427.}
\bibitem{cavalletti2021}{\textsc{F. Cavalletti, E. Milman.}  The globalization theorem for the curvature-dimension condition.  \textit{Invent. Math.} \textbf{226} (2021), no. 1, 1--137.}
\bibitem{cavalletti2017}{\textsc{F. Cavalletti, A. Mondino.} 
 Sharp and rigid isoperimetric inequalities in metric-measure spaces with lower Ricci curvature bounds. \textit{Invent. Math.} \textbf{208} (2017), no. 3, 803--849.}
\bibitem{cmgeometric2017}{\bysame. Sharp geometric and functional inequalities in metric measure spaces with lower Ricci curvature bounds. \textit{Geom. Topol.} \textbf{21} (2017), no. 1, 603--645.}
\bibitem{cavmon}{\bysame. Optimal maps in essentially non-branching spaces. \textit{Commun. Contemp. Math.} \textbf{19} (2017), no. 6, 1750007, 27 pp.}
\bibitem{cavallettinew}{\bysame. New formulas for the Laplacian of distance functions and applications.  \textit{Anal. PDE} \textbf{13} (2020), no. 7, 2091--2147.}
\bibitem{cavalletti2020}{\bysame. Optimal transport in Lorentzian synthetic spaces, synthetic timelike Ricci curvature lower bounds and applications. To appear in \textit{Camb. J. Math.}}
\bibitem{cavalletti2022}{\bysame.  
 A review of Lorentzian synthetic theory of timelike Ricci curvature bounds.  \textit{Gen. Relativity Gravitation}  \textbf{54} (2022), no. 11, Paper No. 137, 39 pp.}
\bibitem{cm++}{\bysame. A sharp isoperimetric-type inequality for Lorentzian spaces satisfying timelike Ricci lower bounds.  Preprint, \texttt{arXiv:2401.03949}, 2024.}
\bibitem{cavstu}{\textsc{F. Cavalletti, K.-T. Sturm.}  Local curvature-dimension condition implies measure-contraction property.  \textit{J. Funct. Anal.} \textbf{262} (2012), no. 12, 5110--5127.}
\bibitem{cc}{\textsc{J. Cheeger, T. H. Colding.}  Lower bounds on Ricci curvature and the almost rigidity of warped products.  \textit{Ann. of Math.} (2) \textbf{144} (1996), no. 1, 189--237.}
\bibitem{chrusciel2012}{\textsc{P. T. Chruściel, J. D. E. Grant.}  On Lorentzian causality with continuous metrics.   \textit{Classical Quantum Gravity} \textbf{29} (2012), no. 14, 145001, 32 pp.}
\bibitem{dengsturm}{\textsc{Q. Deng, K.-T. Sturm.}  Localization and tensorization properties of the curvature-dimension condition for metric measure spaces. II.  \textit{J. Funct. Anal.} \textbf{260} (2011), no. 12, 3718--3725.}
\bibitem{eckstein2017}{\textsc{M. Eckstein, T. Miller.}  Causality for nonlocal phenomena.  \textit{Ann. Henri Poincaré} \textbf{18} (2017), no. 9, 3049--3096.}
\bibitem{eich}{\textsc{M. Eichmair, G. J. Galloway, D. Pollack.} Topological censorship from the initial data point of view. \textit{J. Differential Geom.} \textbf{95} (2013), no.  3, 389--405.}
\bibitem{engelking}{\textsc{R. Engelking.} \textit{General topology.} Translated from the Polish by the author. Second edition. Sigma Series in Pure Mathematics, 6. Heldermann Verlag, Berlin, 1989. viii+529 pp.}
\bibitem{erbar2015}{\textsc{M. Erbar, K. Kuwada, K.-T. Sturm.}  On the equivalence of the entropic curvature-dimension condition and Bochner's inequality on metric measure spaces.  
\textit{Invent. Math.} \textbf{201} (2015), no. 3, 993--1071.}
\bibitem{feldman}{\textsc{M. Feldman, R. J. McCann.}  Monge's transport problem on a Riemannian manifold.  \textit{Trans. Amer. Math. Soc.} \textbf{354} (2002), no. 4, 1667--1697.}
\bibitem{fliss}{\textsc{J. R. Fliss, B. Freivogel, E.-A. Kontou.} The double smeared null energy condition.  \textit{SciPost Phys.} \textbf{14} (2023), no. 2, Paper No. 024, 33 pp.}
\bibitem{ford}{\textsc{L. H. Ford.} Quantum coherence effects and the second law of thermodynamics. \textit{Proc. R. Soc. Lond. A} \textbf{364} (1978), 227--236.}
\bibitem{fg}{\textsc{T. Frankel, G. Galloway.} Energy density and spatial curvature in general relativity. \textit{J. Math. Phys.} \textbf{22} (1981), no. 4, 813--817.}
\bibitem{freivogel}{\textsc{B. Freivogel, E.-A. Kontou, D.  Krommydas.} The return of the singularities: applications of the smeared null energy condition. \textit{SciPost Phys.} \textbf{13} (2022), no. 1, Paper No. 001, 27 pp.}
\bibitem{fremlin}{\textsc{D. H. Fremlin.} \textit{Measure theory. Vol. 4.} Topological measure spaces. Part I, II. Corrected second printing of the 2003 original
Torres Fremlin, Colchester, 2006. Part I: 528 pp.; Part II: 439+19 pp. (errata).}
\bibitem{galaz2018}{\textsc{F. Galaz-García, M. Kell, A. Mondino, G. Sosa.} On quotients of spaces with Ricci curvature bounded below. \textit{J. Funct. Anal.} \textbf{275} (2018), no. 6, 1368--1446.}
\bibitem{galloway1982}{\textsc{G. J. Galloway.} Compactness criteria for Riemannian manifolds. \textit{Proc. Amer. Math. Soc.} \textbf{84} (1982), no. 1, 106--110.}
\bibitem{galloway1986}{\bysame. Curvature, causality and completeness in space-times with causally complete spacelike slices. \textit{Math. Proc. Cambridge Philos. Soc.}  \textbf{99} (1986), no. 2, 367--375.}
\bibitem{gallo}{\bysame. Maximum principles for null hypersurfaces and null splitting theorems. \textit{Ann. Henri Poincaré} \textbf{1} (2000), no. 3, 543--567.}
\bibitem{ghe}{\textsc{L. García-Heveling.} Causality theory of spacetimes with continuous Lorentzian metrics revisited.  \textit{Classical Quantum Gravity} \textbf{38} (2021), no.14, Paper No. 145028, 13 pp.}
\bibitem{giglimondino2015}{\textsc{N. Gigli, A. Mondino, G. Savaré.} Convergence of pointed non-compact metric measure spaces and stability of Ricci curvature bounds and heat flows. \textit{Proc. Lond. Math. Soc.} (3) \textbf{111} (2015), no. 5, 1071--1129.}
\bibitem{Graf20}{ \textsc{M. Graf.} Singularity theorems for {$C^1$}-{L}orentzian metrics.  {\textit{Comm. Math. Phys.}}    \textbf{378} (2020),     no. {2}, {1417--1450}. }
\bibitem{kontou}{\textsc{M. Graf, E.-A. Kontou, A. Ohanyan, B.  Schinnerl.} Hawking-type singularity theorems for worldvolume energy inequalities. Preprint, \texttt{arXiv:2209.04347}, 2022.}
\bibitem{graf2018}{\textsc{M. Graf, E. Ling.}  Maximizers in Lipschitz spacetimes are either timelike or null.  \textit{Classical Quantum Gravity} \textbf{35} (2018), no. 8, 087001, 6 pp.}
\bibitem{greven2009}{\textsc{A. Greven, P. Pfaffelhuber, A. Winter.} Convergence in distribution of random metric measure spaces \textnormal{(}$\Lambda$-coalescent measure trees\textnormal{)}. \textit{Probab. Theory Related Fields} \textbf{145} (2009), no. 1--2, 285--322.}
\bibitem{gromov1999}{\textsc{M. Gromov.} \textit{Metric structures for Riemannian and non-Riemannian spaces.} Based on the 1981 French original. With appendices by M. Katz, P. Pansu and S. Semmes. Translated from the French by Sean Michael Bates. Progress in Mathematics, 152. Birkhäuser Boston, Inc., Boston, MA, 1999. xx+585 pp.}
\bibitem{gmilman}{\textsc{M. Gromov, V. D. Milman.} 
 Generalization of the spherical isoperimetric inequality to uniformly convex Banach spaces.  \textit{Compositio Math.}  \textbf{62} (1987), no. 3, 263--282.}
\bibitem{haw72}{\textsc{S. W. Hawking.}  Black holes in general relativity.  \textit{Comm. Math. Phys.} \textbf{25} (1972),  152--166.}
\bibitem{hellis}{\textsc{S. W. Hawking, G. F. R. Ellis.} \textit{The large scale structure of space-time.} Cambridge Monogr. Math. Phys., No. 1. Cambridge University Press, London-New York, 1973. xi+391 pp.}
\bibitem{hounnonkpe2019}{\textsc{R. A. Hounnonkpe, E. Minguzzi.} Globally hyperbolic spacetimes can be defined without the `causal' condition. \textit{Classical Quantum Gravity} \textbf{36} (2019), no. 19, 197001, 10 pp.}
\bibitem{kannan}{\textsc{R. Kannan, L. Lovász, M. Simonovits.} Isoperimetric problems for convex bodies and a localization lemma. \textit{Discrete Comput. Geom.} \textbf{13} (1995), no. 3--4, 541--559.}
\bibitem{kant42}{\textsc{L. Kantorovitch.} On the translocation of masses. \textit{C. R.} (\textit{Doklady}) \textit{Acad. Sci. URSS} (\textit{N.S.}) \textbf{37} (1942), 199--201.}
\bibitem{kell2018}{\textsc{M. Kell, S. Suhr.}  On the existence of dual solutions for Lorentzian cost functions.  
\textit{Ann. Inst. H. Poincaré C Anal. Non Linéaire}  \textbf{37} (2020), no. 2, 343--372.}
\bibitem{ketterer2015}{\textsc{C. Ketterer.} Evolution variational inequality and Wasserstein control in variable curvature context. Preprint, \texttt{arXiv:1509.02178}, 2015.}
\bibitem{ketterer2017}{\bysame. On the geometry of metric measure spaces with variable curvature bounds. \textit{J. Geom. Anal.} \textbf{27} (2017), no. 3, 1951--1994.}
\bibitem{kheintze}{\bysame. The Heintze-Karcher inequality for metric measure spaces. \textit{Proc. Amer. Math. Soc.}  \textbf{148} (2020), no. 9, 4041--4056.}
\bibitem{kstab}{\bysame. Stability of metric measure spaces with integral Ricci curvature bounds. \textit{J. Funct. Anal.}  \textbf{281} (2021), no. 8, Paper No. 109142, 48 pp.}
\bibitem{ketterer2023}{\bysame. Characterization of the null energy condition via displacement convexity of entropy.  To appear in \textit{J. London Math. Soc.}}
\bibitem{klartag2017}{\textsc{B. Klartag.} Needle decompositions in Riemannian geometry. \textit{Mem. Amer. Math. Soc.} \textbf{249} (2017), no. 1180, v+77 pp.}
\bibitem{kronheimer}{\textsc{E. H. Kronheimer, R. Penrose.} On the structure of causal spaces. \textit{Proc. Cambridge Philos. Soc.} \textbf{63} (1967), 481--501.}
\bibitem{kunzober}{\textsc{M. Kunzinger, M. Oberguggenberger,  J. A. Vickers.} Synthetic versus distributional lower Ricci curvature bounds. To appear in \textit{Proc. R. Soc. Edinb. A: Math.}}
\bibitem{kunzinger2018}{\textsc{M. Kunzinger, C. Sämann.} 
 Lorentzian length spaces. \textit{Ann. Global Anal. Geom.} \textbf{54} (2018), no. 3, 399--447.}
\bibitem{kste}{\textsc{M. Kunzinger, R. Steinbauer.} 
 Null distance and convergence of Lorentzian length spaces.  \textit{Ann. Henri Poincaré} \textbf{23} (2022), no. 12, 4319--4342.}
\bibitem{lott2009}{\textsc{J. Lott, C. Villani.} Ricci curvature for metric-measure spaces via optimal transport.   \textit{Ann. of Math.} (2) \textbf{169} (2009), no. 3, 903--991.}
\bibitem{losim}{\textsc{L. Lovász, M. Simonovits.}  Random walks in a convex body and an improved volume algorithm.  \textit{Random Structures Algorithms} \textbf{4} (1993), no. 4, 359--412.}
\bibitem{mccann2020}{\textsc{R. J. McCann.}   Displacement convexity of Boltzmann's entropy characterizes the strong energy condition from general relativity.  \textit{Camb. J. Math.} \textbf{8} (2020), no. 3, 609--681.}
\bibitem{mccann+}{\bysame. A synthetic null energy condition.   To appear in \textit{Comm. Math. Phys.}}
\bibitem{mccannsaemann}{\textsc{R. J. McCann, C. Sämann.} A Lorentzian analog for Hausdorff dimension and measure. \textit{Pure Appl. Anal.} \textbf{4} (2022), no. 2, 367--400.}
\bibitem{minwarped}{\textsc{E. Minguzzi.} On the causal properties of warped product spacetimes. \textit{Classical Quantum Gravity} \textbf{24} (2007), no. 17, 4457--4474.}
\bibitem{mintop}{\bysame. Convexity and quasi-uniformizability of closed preordered spaces. \textit{Topology Appl.}  \textbf{160} (2013), no. 8, 965--978.}
\bibitem{minguzzi2019}{\bysame. Lorentzian causality theory.  \textit{Living Reviews in Relativity} \textbf{3} (2019), no. 3, 202 pp.}
\bibitem{minguzzi2023}{\bysame. Further observations on the definition of global hyperbolicity under low regularity.  \textit{Classical Quantum Gravity} \textbf{40} (2023), no. 18, Paper No. 185001, 9 pp.}
\bibitem{minguzzi2022}{\textsc{E. Minguzzi, S. Suhr.}  Lorentzian metric spaces and their Gromov-Hausdorff convergence. Preprint, \texttt{arXiv:2209.14384}, 2022.}
\bibitem{mondinosuhr2022}{\textsc{A. Mondino, S. Suhr.} An optimal transport formulation of the Einstein equations of general relativity.  \textit{J. Eur. Math. Soc.} \textbf{25} (2023), no. 3, pp. 933--994.}
\bibitem{Mueller22+}
\textsc{O. Müller.}
\newblock {Gromov-Hausdorff metrics and dimensions of Lorentzian length  spaces}.
\newblock Preprint, \texttt{arXiv:2209.12736}, 2022.
\bibitem{nachbin}{\textsc{L. Nachbin.} \textit{Topology and order.} Translated from the Portuguese by Lulu Bechtolsheim Van Nostrand Mathematical Studies, No. 4 D. Van Nostrand Co., Inc., Princeton, N.J.-Toronto, Ont.-London 1965 vi+122 pp.}



\bibitem{Noldus04a}
\textsc{J. Noldus.}
\newblock {A {L}orentzian {G}romov-{H}ausdorff notion of distance.}
\newblock {\em Classical Quantum Gravity} {\bf 21} (2004), no. 4, 839--850.
\bibitem{ohta2016}{\textsc{S. Ohta.} $(K,N)$-convexity and the curvature-dimension condition for negative $N$. \textit{J. Geom. Anal.} \textbf{26} (2016), no. 3, 2067--2096.}
\bibitem{ohtacomp}{\bysame. \textit{Comparison Finsler geometry.} Springer Monogr. Math. Springer, Cham, [2021], ©2021. xxii+316 pp.}
\bibitem{oneill1983}{\textsc{B. O'Neill.} \textit{Semi-Riemannian geometry. With applications to relativity.} Pure and Applied Mathematics, 103. Academic Press, Inc. [Harcourt Brace Jovanovich, Publishers], New York, 1983. xiii+468 pp.}
\bibitem{payne}{\textsc{L. E. Payne, H. F. Weinberger.} An optimal Poincaré inequality for convex domains. \textit{Arch. Rational Mech. Anal.} \textbf{5} (1960), 286--292.}
\bibitem{penrose1965}{\textsc{R. Penrose.} Gravitational collapse and space-time singularities. \textit{Phys. Rev. Lett.} \textbf{14} (1965), 57--59.}
\bibitem{pw1}{\textsc{P. Petersen, G. Wei.} Relative volume comparison with integral curvature bounds. \textit{Geom. Funct. Anal.} \textbf{7} (1997), no. 6, 1031--1045.}
\bibitem{pw2}{\bysame. Analysis and geometry on manifolds with integral Ricci curvature bounds. II. \textit{Trans. Amer. Math. Soc.} \textbf{353} (2001), no. 2, 457--478.}
\bibitem{rajala}{\textsc{T. Rajala.} Interpolated measures with bounded density in metric spaces satisfying the curvature-dimension conditions of Sturm. \textit{J. Funct. Anal.} \textbf{263} (2012), no. 4, 896--924.}
\bibitem{rajala2014}{\textsc{T. Rajala, K.-T. Sturm.} Non-branching geodesics and optimal maps in strong $\mathrm{CD}(K,\infty)$-spaces. \textit{Calc. Var. Partial Differential Equations} \textbf{50} (2014), no. 3--4, 831--846.}
\bibitem{roman}{\textsc{T. A. Roman.} On the ``averaged weak energy condition'' and Penrose's singularity theorem. \textit{Phys. Rev. D} (3) \textbf{37} (1988), no. 2, 546--548.}
\bibitem{samann2016}{\textsc{C. Sämann.} Global hyperbolicity for spacetimes with continuous metrics. \textit{Ann. Henri Poincaré} \textbf{17} (2016), no. 6, 1429--1455.}
\bibitem{schneider1972}{\textsc{R. Schneider.} Konvexe Flächen mit langsam abnehmender Krümmung. \textit{Arch. Math.} (\textit{Basel}) \textbf{23} (1972), 650--654.}
\bibitem{sorkin}{\textsc{R. D. Sorkin, E. Woolgar.} 
 A causal order for spacetimes with $C^0$ Lorentzian metrics: proof of compactness of the space of causal curves.  \textit{Classical Quantum Gravity} \textbf{13} (1996), no. 7, 1971--1993.}
\bibitem{sormanivega}{\textsc{C. Sormani, C. Vega.} 
 Null distance on a spacetime.   \textit{Classical Quantum Gravity} \textbf{33} (2016), no. 8, 085001, 29 pp.}
\bibitem{sprouse}{\textsc{C. Sprouse.} Integral curvature bounds and bounded diameter. \textit{Comm. Anal. Geom.} \textbf{8} (2000), no. 3, 531--543.}
\bibitem{srivastava1998}{\textsc{S. M. Srivastava.} \textit{A course on Borel sets.} Graduate Texts in Mathematics, 180. Springer-Verlag, New York, 1998. xvi+261 pp.}
\bibitem{steinbauer}{\textsc{R. Steinbauer.} The singularity theorems of general relativity and their low regularity extensions. \textit{Jahresber. Dtsch. Math.-Ver.} \textbf{125} (2023), no. 2, 73--119.}
\bibitem{sturm2006a}{\textsc{K.-T. Sturm.} On the geometry of metric measure spaces. I. \textit{Acta Math.} \textbf{196} (2006), no. 1, 65--131.}
\bibitem{sturm2006b}{\bysame. On the geometry of metric measure spaces. II.  \textit{Acta Math.} \textbf{196} (2006), no. 1, 133--177.}
\bibitem{sturm2015}{\bysame. \textit{Metric measure spaces with variable Ricci bounds and couplings of Brownian motions.} Festschrift Masatoshi Fukushima, 553--575,
Interdiscip. Math. Sci., 17, World Sci. Publ., Hackensack, NJ, 2015.}
\bibitem{sturm2020}{\bysame. Distribution-valued Ricci bounds for metric measure spaces, singular time changes, and gradient estimates for Neumann heat flows. \textit{Geom. Funct. Anal.}  \textbf{30} (2020), no. 6, 1648--1711.}
\bibitem{Sudakov76}
\textsc{V.N. Sudakov.}
\newblock {Geometric problems in the theory of infinite-dimensional probability  distributions.}
\newblock {\em Proc. Steklov Inst. Math.} {\bf 141} (1979), 1--178.
\bibitem{suhr2018}{\textsc{S. Suhr.} Theory of optimal transport for Lorentzian cost functions. \textit{Münster J. Math.} \textbf{11} (2018), no. 1, 13--47.}
\bibitem{tipler}{\textsc{F. J. Tipler.} Energy conditions and space-time singularities. \textit{Phys. Rev. D}  (3) \textbf{17} (1978), no. 10, 2521--2528.}
\bibitem{treude}{\textsc{J.-H. Treude, J. D. E. Grant.} Volume comparison for hypersurfaces in Lorentzian manifolds and singularity theorems. \textit{Ann. Global Anal. Geom.}  \textbf{43} (2013), no. 3, 233--251.}
\bibitem{TrudingerWang01}{\textsc{N. S. Trudinger, X.-J.~Wang.} \newblock {On the Monge mass transfer problem.} \newblock {\em Calc. Var. Partial Differential Equations} {\bf 13} (2001), no. 1, 19--31.}
\bibitem{villani2009}{\textsc{C. Villani.} \textit{Optimal transport. Old and new.} Grundlehren der mathematischen Wissenschaften, 338. Springer-Verlag, Berlin, 2009. xxii+973 pp.}
\bibitem{wald1984}{\textsc{R. F. Wald.} \textit{General relativity.} University of Chicago Press, Chicago, IL, 1984. xiii+491 pp.}
\bibitem{willard}{\textsc{S. Willard.} \textit{General topology.} Addison-Wesley Publishing Co., Reading, Mass.-London-Don Mills, Ont. 1970 xii+369 pp.}
\end{thebibliography}
\end{document}